\numberwithin{figure}{chapter} 
\numberwithin{table}{chapter}
\numberwithin{equation}{chapter}
\numberwithin{section}{chapter}
\newcommand*{\toccontents}{\@starttoc{toc}}
\newtheorem{theorem}{Theorem}[section]
\newtheorem{proposition}{Proposition}[section]
\newtheorem{corollary}{Corollary}[section]
\newtheorem{lemma}{Lemma}[section]
\newtheorem{remark}{Remark}[section]
\newtheorem{example}[theorem]{Example}
\newtheorem{definition}[theorem]{Definition}
\newtheorem{conjecture}[theorem]{Conjecture}
\algnewcommand\algorithmicinput{\textbf{Input:}}
\algnewcommand\algorithmicoutput{\textbf{Output:}}
\algnewcommand\Input{\item[\algorithmicinput]}%
\algnewcommand\Output{\item[\algorithmicoutput]}
\tikzstyle{tensor}=[rectangle,draw=blue!70,fill=blue!20,thick]
\tikzstyle{phys_sym}=[circle,draw=green!30,fill=green!20,thick]
\tikzstyle{bond_sym}=[circle,draw=purple!50,fill=purple!20,thick]
\DeclareMathOperator*{\slim}{s-lim}
\def\Ad{\mathrm{Ad}}
\def\ad{\mathrm{ad}}
\def\C{\mathbb{C}}
\def\g{\mathfrak{g}}
\def\h{\mathfrak{h}}
\def\u{\mathfrak{u}}
\def\N{\mathbb{N}}
\def\R{\mathbb{R}}
\def\sl{\mathfrak{sl}}
\def\gl{\mathfrak{gl}}
\def\so{\mathfrak{so}}
\def\su{\mathfrak{su}}
\def\Z{\mathbb{Z}}
\def\st{\, \vert \,}
\def\Aut{\mathrm{Aut}}
\def\ide{\texttt{id}}
\def\idty{\mathds{1}}
\def\Sym{\mathrm{Sym}}
\def\GL{\text{GL}}
\def\Hom{\mathrm{Hom}}
\def\im{\mathrm{im}\,}
\def\spec{\mathrm{spec} }
\def\eps{\varepsilon}
\def\SWAP{\texttt{SWAP}}
\def\calA{\mathcal{A}}
\def\calB{\mathcal{B}}
\def\calC{\mathcal{C}}
\def\calG{\mathcal{G}}
\def\calH{\mathcal{H}}
\def\calN{\mathcal{N}}
\def\calP{\mathcal{P}}
\def\calS{\mathcal{S}}
\def\calU{\mathcal{U}}
\def\bi{\textbf{i}}
\def\bj{\textbf{j}}
\def\bk{\textbf{k}}
\def\bbE{\mathbb{E}}
\def\bbF{\mathbb{F}}
\def\up{\uparrow}
\def\down{\downarrow}
\newcommand{\Exterior}{\mathchoice{{\textstyle\bigwedge}}%
	{{\bigwedge}}%
	{{\textstyle\wedge}}%
	{{\scriptstyle\wedge}}}
\newcommand{\wt}[1]{\widetilde{#1}}
\newcommand{\inprod}[1]{\left\langle #1 \right\rangle}
\newcommand{\bra}[1]{\left\langle #1 \right\vert}
\newcommand{\ket}[1]{\left\vert #1 \right\rangle}
\newcommand{\paran}[1]{\left( #1 \right)}
\newcommand{\brac}[1]{\left[ #1 \right]}
\newcommand{\Tr}{\mathrm{Tr}}
\newcommand{\sgn}{\mathrm{sgn}}
\let\emph\relax 
\DeclareTextFontCommand{\emph}{\bfseries} 
\definecolor{gray}{rgb}{0.93,0.93,0.93}
\definecolor{light-gold}{rgb}{0.96,0.8,0}
\definecolor{light-red}{rgb}{1,0.4,0.4}
\definecolor{light-green}{rgb}{0.5,1,0.5}
\definecolor{light-blue}{rgb}{0.4,0.4,1}
\begin{document}
   \frontmatter
    \pagestyle{headings} 

   \title{SO(n) AKLT Chains as Symmetry Protected Topological Quantum Ground States}
    \author{Michael Ragone}
    \degreesemester{Winter}
    \degreeyear{2024}
    \degree{Doctor of Philosophy}
    \chair{Bruno Nachtergaele}
    \othermembers{Greg Kuperberg \\ Martin Fraas }
    \numberofmembers{3}
    \field{Mathematics}
    \campus{Davis}

\begin{frontmatter}
\maketitle

\section*{Abstract}
This thesis studies a pair of symmetry protected topological (SPT) phases which arise when considering one-dimensional quantum spin systems possessing a natural orthogonal group symmetry. Particular attention is given to a family of exactly solvable models whose ground states admit a matrix product state description and generalize the AKLT chain. We call these models ``$SO(n)$ AKLT chains'' and the phase they occupy the ``$SO(n)$ Haldane phase''. We present new results describing their ground state structure and, when $n$ is even, their peculiar $O(n)$-to-$SO(n)$ symmetry breaking. We also prove that these states have arbitrarily large correlation and injectivity length by increasing $n$, but all have a 2-local parent Hamiltonian, in contrast to the natural expectation that the interaction range of a parent Hamiltonian should diverge as these quantities diverge.
We extend a definition of Ogata's~\cite{ogata2020classification} of an SPT index for a split state for a finite symmetry group $G$ to an SPT index for a compact Lie group $G$. We then compute this index, which takes values in the second Borel group cohomology $H^2(SO(n),U(1))$, at a single point in each of the SPT phases. The two points have different indices, confirming the two SPT phases are indeed distinct. Chapter~\ref{ch:Introduction} contains an introduction with a detailed overview of the contents of this thesis, which includes several chapters of background information before presenting new results in Chapter~\ref{ch:gapped_ground_states_phases} and Chapter~\ref{ch:SO(n)_Haldane_chains}.

\begin{acknowledgements}
I would first like to thank my wonderful wife Joyce and my family. Their love and support inspires me daily, and I could not ask for a better team. 

There are far too many people I would like to thank individually for their kindness. At risk of overfilling the section, I will stick to those who have served as long-term academic mentors.
My advisor Bruno Nachtergaele is a rare combination of wise, brilliant, kind, and generous. His support has been essential to my journey as a mathematician, and I am grateful for his insight and his time. My ``summer advisors'' Marco Cerezo, Carlos Ortiz Marrero, and Bojko Bakalov are all wonderful scientists and people alike, and I look forward to many years of collaboration and chats over excellent coffee with each of them. My thesis committee members Martin Fraas and Greg Kuperberg made this entire thesis possible, both thanks to their unique mathematical insights and their friendly openness to my many questions. I have been lucky enough to learn from many excellent mathematicians here at Davis: Roger Casals, John Hunter, Eugene Gorsky, Adam Jacob, Monica Vazirani, Michael Kapovich, Jerry Kaminker, and Andrew Waldron. Finally, I would like to thank a small handful of undergraduate professors from the University of Arizona who would lead me into a mathematical career: Klaus Lux, Jan Wehr, Jean-Marc Fellous, and Amanda Young, whose wonderful linear algebra course inevitably pointed me to the quantum world.  

Research supported in part by the National Science Foundation through DMS-1813149 and DMS-2108390. All new results are joint work with Bruno Nachtergaele.

\end{acknowledgements}

\tableofcontents
\end{frontmatter}

\pagestyle{headings} 
\pagenumbering{arabic}

\chapter{Introduction} \label{ch:Introduction}
In elementary school, we learn about the phases of matter: gas, liquid, solid. We then learn that the true story is vastly more complicated, but the elementary school story comes with some useful ideas which will serve as caricatures for themes in this thesis. Imagine a glass of water in a chamber where we may change the temperature and pressure. We start by gently changing these parameters, observing that most of the time, this only gently changes the properties of water. One sensible way to define phases in our experiment is by a sort of equivalence relation. We say that two states are in the same phase of matter if we may smoothly change the temperature and pressure to change one state into another. This ``smooth'' condition ensures that water, water vapor, and ice all fall into different phases--if we attempt to heat water into water vapor, we will encounter some form of derivative discontinuity in the energy of the system at the boiling point.

Now, we often think of phases of water not as equivalence classes of states, but in terms of qualitative and quantitative functions of these equivalence classes. For example, we might ask whether our water sample is rigid: if it is, we call it a solid, and if it isn't, we call it either a liquid or a gas. If we were nerdy mathematician types, we might say that rigidity is an invariant of the phase equivalence relation on states, meaning it does not change as we smoothly change the temperature and pressure. After all, ice is rigid both at $-10^\circ$C and at $-5^\circ$C, and liquid water is not rigid at $5^\circ$C and $10^\circ$C alike. By itself, this invariant is not quite rich enough to uniquely identify all three phases: it can distinguish solids from liquids and gases, but not liquids from gases. But by considering it alongside other invariants, like the ``does our sample have a well defined volume?'' invariant, we may obtain a complete classification of the three phases of water. This saves us from the arduous task of considering the enormous space of parameter paths which could feasibly connect two states of water, and instead converts the problem into checking these invariants at both points to determine whether they may occupy the same phase of matter.

This thesis is dedicated to the study of a particular pair of quantum phases of matter and their associated invariants. We work in the setting of quantum spin chains, where one has an infinite one-dimensional lattice $\Z$ with each site $x\in \Z$ supporting a quantum particle with finite degrees of freedom. These particles are frozen in place, and so the energy of the system is dictated by an interaction $\Phi$ which gives rise to a Hamiltonian $H$. When the interaction is sufficiently spatially localized, we may mathematically describe such a system as a $C^*$ dynamical system, which comes with an algebra of observables $\calA$ and a strongly continuous cocycle of automorphisms $\tau_{s,t}$ implementing dynamics. States are then positive linear functionals on this algebra, and we may study ground states $\omega$ which minimize the energy. We will restrict our attention to interactions $\Phi$ which possess a spectral gap above their ground state energy, and so do not have low lying energy states in the thermodynamic limit. 

In analogy with our water experiment, we may define a gapped ground state phases as an equivalence class of gapped interactions, where we think of these paths of interactions $\Phi(s)$ as depending on some ``experimental parameter'' $s$.
We have in mind one very particular phase diagram, which arises when we consider interactions which act only on nearest-neighbor particles and enjoy a certain symmetry under the orthogonal group $O(n)$.
\begin{figure}
    \centering
    \includegraphics{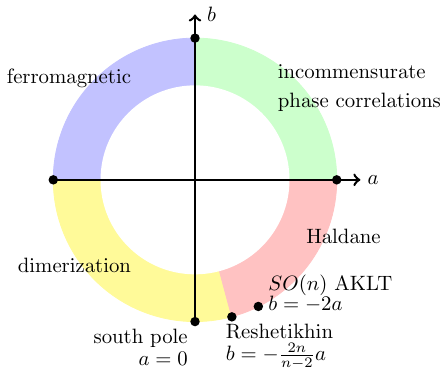}
    \captionof{figure}{The phase diagram for $O(n)$-invariant nearest-neighbor interactions (adapted from~\cite{bjornberg2021dimerization}). }
\end{figure} This phase diagram has been heavily studied over the past 30 years. In the $n=3$ case, this corresponds exactly to the bilinear-biquadratic chain phase diagram, where the south pole point is the $P^{(0)}$ chain~\cite{aizenman1994geometric}, the matrix product state point is the AKLT chain~\cite{affleck1988valence}, and the Reshetikhin point~\cite{reshetikin1983method} is the Takhtajan-Babujian model~\cite{babujian1982exact,takhtajan1982picture}. Our attention is particularly on the gapped ``dimerized'' and ``Haldane'' phases~\cite{tasaki2020physics}, separated by the gapless exactly solvable Reshetikhin point--as the colors suggest, it is physically expected that each of these constitutes a distinct gapped phase of quantum matter~\cite{tu2008class}. This thesis provides a piece of mathematical confirmation of this picture, proving that the south pole and the matrix product state points occupy distinct gapped ground state phases of matter (with an appropriate definition, that is). Much like the water experiment, the approach to demonstrating this is by finding invariants of gapped ground state phases and computing them at these points. In the odd $n$ case, a rather coarse invariant will suffice. In the even $n$ case, we will require a finer notion of equivalence which incorporates symmetry of our interaction--this gives rise to a definition of \textit{symmetry protected topological (SPT) phases} of quantum matter. It will be in this sense in which we can show that the dimerized and Haldane phases are distinct. In showing this, we provide a new generalization of Ogata's definition of an SPT index~\cite{ogata2020classification} which allows us to upgrade finite group symmetries to compact Lie group symmetries, fulfilling the predictions of physicists Chen, Gu, Liu, and Wen~\cite{chen2013symmetry}. 

We will spend most of our time trying to better understand the matrix product state point in the Haldane phase. Indeed, the lion's share of novel contributions of this thesis are results about the ground states at this point, which we have decided to call $SO(n)$ AKLT chains. These models, first studied by physicists Tu, Zhang, and Xiang in~\cite{tu2008class}, serve as a higher $n$ generalization of the AKLT chain, which is an exactly solvable model which historically allowed mathematicians and physicists to probe the behavior of the Haldane phase. Many of the results for these chains fundamentally rests upon their structure as representations of the Lie algebra $\so(n)$: indeed, the collection of matrices which define these matrix product states sit inside the Clifford algebra of rank-$n$, $\calC_n$, which is intimately related to the representation theory of the orthogonal groups.
We will give special attention to the even $n$ $SO(n)$ AKLT chains, which were only briefly discussed in the aforementioned work but possess a variety of unexpected structure.

Perhaps most notably, it was believed that the even $n$ models were dimerized models~\cite{tu2008class}. Dimerization refers to a type of symmetry breaking process where translation-invariance of an Hamiltonian is broken by a pair of 2-periodic ground states which are translates of one another. But it is more than this: the term ``dimerization'' suggests that this symmetry-breaking is essentially catalyzed by an interaction term which rewards spins for entangling exclusively with one neighbor (i.e. forming ``dimers''), whence the pair of ground states arises by preferentially choosing a neighbor with which to entangle. Indeed, this is the heuristic mechanism undergirding dimerization in the prototypical Majumdar-Ghosh model~\cite{majumdar1969next}, as well as the south pole model in the aptly named dimerized phase. One may characterize this behavior in two main ways: by observing an alternating strong-weak entanglement structure in the ground states which reflects the formation of ``dimers'', or by computing the expectation of a suitably chosen 2-local observable which may detect the presence or absence of a dimer bond. For the Majumdar-Ghosh chain, the latter is accomplished by choosing the projection onto the singlet and comparing the expectation in the ground states across e.g. sites $x=0$ and $x=1$. For the dimerized south pole point, there is an analogous natural choice of operator which distinguishes the two ground states~\cite{bjornberg2021dimerization}.

\begin{figure}
    \centering
    \includegraphics[scale=0.5]{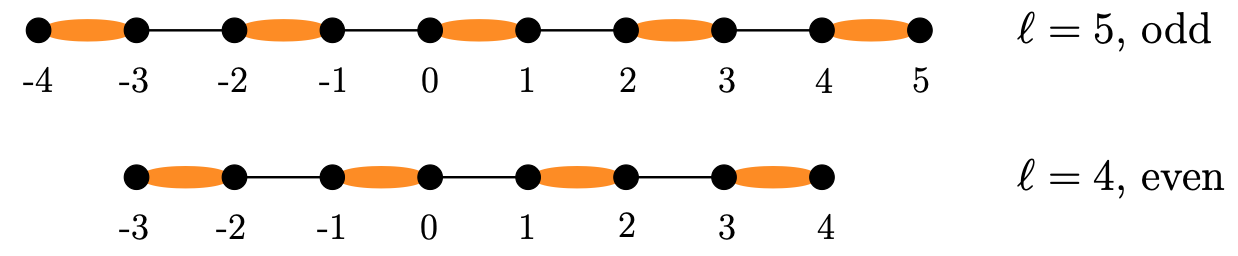}
    \captionof{figure}{Illustration of dimerization at the south pole point. Depending on whether $\ell$ is even or odd, the site $x=0$ is more entangled with either its left or its right neighbor. This gives rise to a pair of 2-periodic ground states in the thermodynamic limit (Figure from~\cite{bjornberg2021dimerization}). }
\end{figure}

The even $n$ $SO(n)$ AKLT chains are a different story, however: while they exhibit the same breaking of translation-invariance into period 2, the mechanism and result is fundamentally different from dimerization. The two states provably have identical entanglement structure, and when $n>4$, they are totally indistinguishably by any 2-local observable. To our knowledge, this is the first noted instance of this type of symmetry-breaking.

We mention a few results of independent interest. These even $n$ $SO(n)$ AKLT chains exhibit a different peculiar type of symmetry-breaking, which to our knowledge has not been observed in the literature: the $O(n)$ invariance of the interaction is broken into $SO(n)$ invariance of its ground states, which are related to one another by an on-site transformation of determinant $-1$. We prove that for all $n$, the $SO(n)$ AKLT chain matrix product states have a parent Hamiltonian given as a sum of 2-local interactions but an injectivity length and correlation length which may be made arbitrarily large by increasing $n$. This disproves an outstanding conjecture suggesting that the parent Hamiltonian of a family of matrix product states necessarily has interaction terms whose support scales with these quantities. Finally, we discover an alternative charge-parity-time reversal (CPT) symmetry breaking story for the even $n$ ground states.

\begin{section}{Outline and Novel Contributions}
    In Chapter~\ref{ch:quantum_spin_chains}, we review the mathematical framework which allows us to rigorously describe quantum spin systems. We begin by describing the formalism of finite quantum spin chains and quickly move to a collection of examples. The first few are quite trivial and serve mostly to clarify the notation and common computational tasks within the quantum spin system approach--these can be safely skipped by a more experienced practitioner. The last two however, the AKLT chain and Majumdar-Ghosh model, serve as important prototypes: the AKLT chain for $SO(n)$ chains, and the Majumdar-Ghosh model for the dimerization phenomenon. We then proceed to the full machinery of infinite quantum spin systems.

    In Chapter~\ref{ch:app-rep_theory}, we review some results from the representation theory of finite and compact Lie groups. Representation theory is a language of symmetry and will grant us a wide variety of powerful tools to better understand quantum systems and phases of quantum matter. Indeed, the formal definition of a quantum ``spin'' necessarily invokes the representation theory of $SU(2)$, the special unitary group of $2\times 2$ matrices. We have opted to spend a great deal of time in Section~\ref{sec:Spin and SU(2), by example} and Section~\ref{sec:Examples Revisited: Putting the ``Spin'' in Quantum Spin Systems} which together develop a representation-centric approach to quantum spin systems. While common language among the mathematical physics community, careful pedagogical development of these ideas can be difficult to find, so we thought it worthwhile to provide these sections and hope that it serves as a useful resource for new researchers. We then pivot to describing a bit of the representation theory of $SO(n)$--here, we take a more pragmatic approach, building only the theory we need to probe the behavior of the $SO(n)$ AKLT chains we will study in the final Chapter. Clifford algebras take a central role, both because they are used to later define $SO(n)$ AKLT matrix product ground states and because they are intimately tied to the representation theory of $SO(n)$. 

    In Chapter~\ref{ch:proj-reps}, we discuss the closely related notion of a projective representation of a finite or compact Lie group. We will later find that a certain cohomology theory associated to projective representations will provide a family of key invariants of symmetric gapped ground state phases.

    In Chapter~\ref{ch:AKLT_chain}, we discuss the Bilinear-Biquadratic phase diagram, which is a historical predecessor to and special $n=3$ case of the $O(n)$ phase diagram we showed earlier. Of essential importance is the AKLT chain, which is the $SO(3)$ AKLT chain. We probe three different equivalent perspectives to better understand this state: the valence bond state picture, the matrix product state picture, and the finitely correlated state picture. We also spend a little time with the much simpler Majumdar-Ghosh model, both to gain comfort with the machinery and to demonstrate what we mean by dimerization.

    In Chapter~\ref{ch:MPSs}, we briefly describe some tools from the theory of matrix product states. In particular, the $SO(n)$ AKLT chain ground states are matrix product states.

    In Chapter~\ref{ch:gapped_ground_states_phases}, we finally get to the main theme of the thesis. We provide a precise description of the phase diagram of $O(n)$-invariant nearest neighbor spin chains as well as formal definitions of gapped ground state phases and SPT phases given an on-site symmetry group $G$. This theory is well understood mathematically for the case of interactions with unique gapped ground states and a finite symmetry group $G$, thanks to the work of Ogata~\cite{ogata2020classification}. But in our case, we require a broader definition of SPT phase as an equivalence class of symmetric gapped interactions, originally appearing in~\cite{bachmann2014gapped}, and we allow our symmetries to be from compact Lie groups $G$. In Section~\ref{sec:index for split states}, we provide the new Theorem~\ref{thm:split state index} which proves that any 1D gapped ground state may be assigned a well-defined SPT invariant in $H^2(G,U(1))$ where $G$ is a compact Lie group. We also provide a computation of this index for MPSs in Section~\ref{sec:the H2 index for MPS}. Then, using this, Theorem~\ref{thm:SPT invariants of ground state space} guarantees that any gapped, symmetric interaction $\Phi$ may be assigned a collection of SPT invariants. We then compute for the first time this pair of invariants explicitly for the south pole interaction, finding that they are trivial SPT invariants as a Corollary of Theorem~\ref{thm:south pole ground state is trivial SPT}.

    In Chapter~\ref{ch:SO(n)_Haldane_chains}, we restrict our attention to the $SO(n)$ AKLT chains. Nearly all of the results proven in this chapter are new contributions.
    
    We prove in Section~\ref{sec:the ground states of the MPS point} that this interaction has a unique pure ground state for odd $n$ and two pure 2-periodic ground states for even $n$. The question of distinctness of these ground states in the even $n$ case is surprisingly subtle: Theorem~\ref{thm:omega_pm have identical entanglement} shows that these states have identical entanglement structure and Theorem~\ref{thm:w+ and w- are distinct dimerized states for l large enough} shows that these states have identical reduced density matrices for all chain lengths $\ell<n/2$, which in turn implies that no $\ell<n/2$-local observable may distinguish the two ground states. We further prove in Section~\ref{sec:symmetry breaking from O(n) to SO(n)} that these states exhibit a peculiar type of symmetry breaking, which to our knowledge has not been observed in the literature: the $O(n)$ invariance of the interaction is broken into $SO(n)$ invariance of its ground states, which are related to one another by an on-site transformation of determinant $-1$. This distinguishes them from the south pole ground states as $O(n)$ SPT phases. In the process of demonstrating $SO(n)$ invariance, we characterize the nature of the finite chain ground state spaces as representations of $SO(n)$, which is summarized in Corollary~\ref{cor:reps of odd ground state spaces} for odd $n$ and Corollary~\ref{cor:reps of even ground state spaces} for even $n$. This investigation ultimately allows us to explicitly calculate the SPT invariants associated to the model in Theorem~\ref{thm:SOn haldane chains have nontrivial index}, culminating in the central result of this thesis: Corollary~\ref{cor:south pole and MPS point are different SPT phases} guarantees that the south pole point and the $SO(n)$ AKLT point have for all $n$ different associated SPT invariants and so cannot occupy the same $SO(n)$ SPT phase, mathematically confirming the physical prediction that the Haldane and dimerized phases are distinct SPT phases. 
    
    In Section~\ref{sec:CPT symmetry} of Chapter~\ref{ch:SO(n)_Haldane_chains}, we further explore properties of these ground states. In Section~\ref{sec:CPT symmetry}, we observe an alternative symmetry breaking story for even $n$. For all even $n$, it is not hard to see that $\omega_\pm$ are both invariant under time-reversal symmetry. When $n\bmod 4 = 0$, the ground states $\omega_+$ and $\omega_-$ are invariant under charge conjugation and reflection parity symmetry, while when $n\bmod 4 = 2$, both symmetries are broken and these symmetries swap $\omega_+$ and $\omega_-$. This can be calculated directly, but it also admits a nice representation-theoretic description which reflects the behavior of $\so(n)$ irreps under the map which sends a representation to its dual. Thanks to the Lie algebra isomorphism $\su(4)\cong \so(6)$, we also obtain a connection in the $n=6$ case to a family of $SU(4)$-invariant quantum antiferromagnets with charge-conjugation symmetry breaking ground states studied by Affleck et. al.~\cite{affleck1991SU2n}.

    The final Section~\ref{sec:parent property (SO(n) chains)} generalizes the well-known result that for $n=3$, the Hamiltonian $H$ at the $SO(n)$ AKLT point is the parent Hamiltonian of the AKLT chain, meaning the finite chain AKLT MPSs are exactly the ground states of $H$ and there are no others. We upgrade this statement and prove in Theorem~\ref{thm:parent property (SO(n) chains)} that the Hamiltonian $H$ at the $SO(n)$ AKLT point, which is a sum of 2-local interactions, is the parent Hamiltonian of the family of $SO(n)$ AKLT MPSs. The proof fundamentally rests on the combination of the intersection property of frustration-free interactions combined with the structure of the MPSs as representations of $\so(n)$. This provides a counterexample to an outstanding conjecture via Corollary~\ref{cor:parent prop counterexample cor length}: in particular, $SO(n)$ AKLT MPSs are a family of MPSs whose correlation length and injectivity length may be made arbitrarily large by increasing $n$ but all have a parent Hamiltonian with nearest-neighbor interactions. 
\end{section}
\chapter{Mathematical Framework for Quantum Spin Systems}\label{ch:quantum_spin_chains}

In this chapter we briefly recall the mathematical formalism required to describe a quantum spin system as a $C^*$-dynamical system, a Heisenberg picture whose data consists of a $C^*$ algebra of observables $\calA$ and a strongly continuous one parameter group of automorphisms $\tau_t$ implementing time evolution. Quantum states are then describable as positive, normalized, linear functionals $\omega:\calA\to \C$.
This description should serve only as a refresher and is hardly comprehensive: we recommend the lecture notes~\cite{nachtergaele2016quantum} and the books~\cite{naaijkens2017quantum,tasaki2020physics}. For finite systems, one may find the closely related standard reference in quantum information~\cite{nielsen2010quantum} useful. 

\section{Finite Quantum Spin Chains}
Let $(\Lambda,d)$ denote a finite set equipped with a distance function. Our motivating example is the interval of integers of length $\ell$, $\Lambda = [1,\ell]\subseteq  \Z$ with the usual distance. To each site $x\in \Lambda$, we assign a copy of the finite dimensional Hilbert space $\calH_x := \C^d$ with a Hermitian inner product $\inprod{\cdot,\cdot}$, following the physicist's convention of skew-linearity in the first slot and linearity in the second slot.\footnote{It is common to allow for different on-site Hilbert spaces, but we will not need this.} Each such on-site Hilbert space $\calH_x$ corresponds to a \textit{qudit}, with the special case of $d=2$ earning the name \textit{qubit}. We may equivalently call a qudit a \textit{spin}-$s$ particle, where $d=2s+1$, relegating the discussion of spin and the representation theory of $SU(2)$ to Chapter \ref{ch:app-rep_theory}. The Hilbert space for the chain $\calH_\Lambda$ is then given by tensoring each qudit together:
\begin{equation}
    \calH_\Lambda := \bigotimes_{x\in \Lambda} \calH_x .
\end{equation}
Now, we can define an algebra of observables on each site $x$ by $\calA_x := M_d(\C)$, the $d\times d$ complex matrices. This forms a $C^*$ algebra when equipped with the adjoint operation $A\mapsto A^*\in \calA_x$ with respect to $\inprod{\cdot,\cdot}$ and the usual operator norm $\norm{A} = \sup_{v\in \C^d} \frac{\norm{Av}}{\norm{v}}$. The full algebra of observables is then given by
\begin{equation}
    \calA_\Lambda := \bigotimes_{x\in \Lambda} \calA_x . 
\end{equation} Given a subset $\Lambda_1\subseteq \Lambda$, we may identify the algebra $\calA_{\Lambda_1}$ of $\Lambda_1$ with the subalgebra $\calA_{\Lambda_1}\otimes \idty_{\Lambda / \Lambda_1}$.

\subsection{Dynamics}
The Hamiltonian of a quantum system is of great physical interest, encoding the energy of the system and serving as the generator of dynamics. We will define Hamiltonians via an \textbf{interaction}, a map $\Phi:\mathcal{P}(\Lambda)\to \calA_{\Lambda}$ which assigns subsets of $\Lambda$ to observables, with the restriction that $\Phi(X) = \Phi(X)^*\in \calA_X$ for every subset $X\in \mathcal{P}(\Lambda)$. For any volume $Z\subseteq \Lambda$, the Hamiltonian $H_Z$ described by the interaction $\Phi$ is
\begin{equation}
    H_Z = \sum_{X\in \mathcal{P}(Z)} \Phi(X). 
\end{equation} For finite volumes $\Lambda$, this gives a distinguished self-adjoint observable $H_\Lambda^* = H_\Lambda$, but the sum is generally not well-defined in the infinite case. Of special importance for this thesis are \textbf{nearest-neighbor} (and next-nearest neighbor) interactions, i.e. interactions for which $\Phi(X)=0$ whenever $X$ consists of more sites than immediate neighbors. In the examples at the end of this section, the Heisenberg and AKLT chains are given by nearest-neighbor interactions, and the Majumdar-Ghosh chain is given by a next-nearest-neighbor interaction. 

We will study the time evolution of a quantum system via Heisenberg dynamics, wherein the initial state $\omega$ is fixed while observables $A(t)\in \calA_{\Lambda}$ evolve over time $t$.\footnote{The alternative to Heisenberg picture is Schr\"odinger picture, where states evolve with fixed observables. The two are entirely equivalent.} Let $t\in \R$. The Heisenberg dynamics $\tau_t^\Lambda$ generated by a Hamiltonian $H_\Lambda$ (which is determined by the interaction $\Phi$) is an automorphism of the algebra $\calA_\Lambda$ given by
\begin{equation}
    \tau_t^{\Lambda}(A) = U^*_{\Lambda}(t) A U_\Lambda (t), \qquad \text{for all } A\in \calA_{\Lambda}, 
\end{equation} where $U_{\Lambda}(t)$ is the unitary operator 
\begin{equation}
    U_{\Lambda}(t) = e^{-itH_{\Lambda}}\in \calA_\Lambda . 
\end{equation} One can check by hand that the automorphisms $\tau_t^\Lambda$ solve the Heisenberg equation for observable time evolution:
\begin{equation}
    \frac{d}{dt} \tau_t^\Lambda(A) = i[H_{\Lambda}, \tau_t^{\Lambda}(A)]. 
\end{equation} The usual theorems from ODE suffice to show existence and uniqueness of the flow $\tau_t$. In the finite case, the family $U(t)$ forms a 1-parameter Lie group and $\tau_t^\Lambda$ forms a 1-parameter Lie group of $*$-automorphisms (compare to the Adjoint representation Example~\ref{ex:adjoint rep of su2}). When the Hamiltonian is a norm continuous function of time, $t\mapsto H_\Lambda(t)$, existence and uniqueness still grant dynamics described given a family of unitary operators $U(t,0)$ which satisfy the initial value problem
\begin{equation}
    \frac{d}{dt} U(t,0) = -iH(t) U(t,0), \qquad U(0,0) = \idty.
\end{equation} In this case, we may lose the group structure. However, $U(t,s)$ still satisfy the cocycle property $U(t,s)U(s,r) = U(t,r)$ for all $r,s,t\in \R$, so the Heisenberg dynamics $\tau_{t,s}$ form a cocycle of automorphisms defined by
\begin{equation}
    \tau_{t,s}(A) = U(t,s)^* A U(t,s),\qquad A\in \calA_\Lambda,
\end{equation} which satisfy the differential equation
\begin{equation}
    \frac{d}{dt} \tau_{t,0}(A) = i\tau_{t,0}([H(t), A]) = i[\tau_{t,0}(H(t)), \tau_{t,0}(A)].
\end{equation}

\subsection{States}
A \textbf{state} of a quantum spin system is a normalized, positive linear functional $\omega:\calA_{\Lambda}\to \C$, i.e. $\omega$ satisfies
\begin{equation}
    \omega(\idty) = 1, \text{ and } \omega(A^*A)\geq 0, \quad \text{ for all } A\in \calA_{\Lambda}.
\end{equation} When $\calH_{\Lambda}$ is finite dimensional, we may equip $\calA_{\Lambda}$ with the Hilbert-Schmidt inner product $\inprod{A,B} = \Tr A^*B$ and then the Riesz representation theorem guarantees that each state corresponds to a unique matrix $\rho$ such that for all $A\in \calA_\Lambda$
\begin{equation}
    \omega(A) = \Tr \rho A, \quad \text{ with } \Tr \rho = 1, \qquad \rho \geq 0 . 
\end{equation} We call $\rho$ the \textbf{density matrix} of the state $\omega$. 

One can easily see that the set of states forms a convex set in the space of linear functionals $C\subseteq \calA_\Lambda^*$. A \textbf{pure} state $\omega$ is an extremal point of $C$, i.e. if $\omega_1,\omega_2\in C$ and there is a $t\in (0,1)$ such that $\omega = t\omega_1 + (1-t)\omega_2$, then $\omega = \omega_1=\omega_2$. A \textbf{mixed} state is any state which is not pure. 

For finite dimensional $\calH_\Lambda$, pure states are exactly rank-1 orthogonal projections, i.e. there exists a vector $\ket{\psi}\in \calH_\Lambda$ such that $\omega = \ket{\psi}\bra{\psi}$. When this vector $\ket{\psi}$ is an eigenvector of a Hamiltonian $H_{\Lambda}$, i.e. $H_\Lambda \ket{\psi} = \lambda \ket{\psi}$, we call $\omega$ an eigenstate of $H_{\Lambda}$. Of particular importance are \textbf{ground states}, which correspond to eigenvectors of smallest eigenvalue of $H_{\Lambda}$:
\begin{equation}
    H_{\Lambda} \ket{\psi} = \min \text{spec}(H_{\Lambda}) \ket{\psi}. 
\end{equation} Of course, one may similarly allow ground states to be mixed, whence the condition becomes $\omega$ a ground state of $H_{\Lambda}$ if 
\begin{equation} \label{def:finite chain ground state}
    \omega(H_{\Lambda}) = \min \{ \omega(H_\Lambda) : \omega \text{ a state on } \calA_{\Lambda} \} .  
\end{equation}
We will write an appropriate generalization of this definition for infinite quantum spin systems later.

Finally, suppose our finite lattice is bipartite, splitting as $\Lambda = \Lambda_1 \cup \Lambda_2$. We call a state $\omega$ a \textbf{product state} if there exist states $\omega_1$ on $\calA_1$ and $\omega_2$ on $\calA_2$ such that $\omega(A_1\otimes A_2) = \omega_1(A_1)\omega_2(A_2)$ for all $A_1\in \calA_1$ and $A_2\in \calA_2$. We further call a state $\omega$ on $\calA_\Lambda$ separable if it is a convex combination of product states, and \textbf{entangled} if it is not separable. Quantifying entanglement is a business by itself, but let us give one such measure. Let $\omega$ be a pure state corresponding to $\psi\in \calH_{\Lambda_1}\otimes \calH_{\Lambda_2}$. Then the entanglement entropy $S_E(\psi)$ of $\omega$ is
\begin{equation}
    S_E(\omega) = -\Tr \rho_1 \log \rho_1, \qquad \text{where } \rho_1 = \Tr_{\calH_{\Lambda_2}} \ket{\psi}\bra{\psi} . 
\end{equation} One can show that $-\Tr \rho_1 \log \rho_1 = -\Tr \rho_2 \log \rho_2$, so this quantity is the same for both subsystems. One can also show that $\omega$ is a separable state iff $S_E(\psi) = 0$. Notice that given any pair of unitaries $U_1\in \calA_{1}$, $U_2\in \calA_2$, the action of $U_1\otimes U_2$ on $\calH_{\Lambda_1}\otimes \calH_{\Lambda_2}$ preserves the entanglement entropy, i.e.
\begin{equation} \label{eq:entanglement entropy invariant under local unitaries}
    S_E(U_1\otimes U_2 \, \psi) = S_E(\psi),
\end{equation} but it certainly is not invariant under a general $U\in \calA_1\otimes \calA_2$. Indeed, this feature of invariance under unitaries acting only on subsystems is a hallmark of any good measure of entanglement.

\subsection{Symmetry}\label{sec:symmetry}
Take a finite volume $\Lambda$ with dynamics $\tau_t^{\Lambda} = e^{itH_{\Lambda}}$. A \textbf{symmetry} $\alpha$ is an algebra automorphism $\alpha$ on $\calA_\Lambda$ which commutes with the dynamics:
\begin{equation} \label{def:sym commutes dynamics}
    \tau_t^{\Lambda} \circ \alpha = \alpha \circ \tau_t^{\Lambda} , \qquad \text{for all } t\in \R.
\end{equation} Differentiating with respect to $t$ and letting $\delta(\cdot) = [H_\Lambda, (\cdot)]$ denote a linear map on $\calA_\Lambda$, we see that this is equivalent to 
\begin{equation}
    \delta\circ \alpha = \alpha \circ \delta .
\end{equation} There are two important classes of symmetries. A \textbf{lattice symmetry} is a symmetry given by a representation of a symmetry of the underlying lattice $\Gamma$. Translation invariance is such an example, as is the ``lattice reflection symmetry''. For instance, let $\Lambda = [-\ell, \ell]$. Then the reflection group symmetry is implemented by the group $\Z_2$, where the nontrivial element acts as reflection about $0\in [-\ell,\ell]$. This then induces a map $\alpha:\Z_2\to \text{Aut}(\calA_{\Lambda})$ which sends $\alpha_1 = \ide$ and $\alpha_g(A_{-\ell}\otimes \dots \otimes A_{\ell}) = A_{\ell}\otimes \dots\otimes A_{-\ell}$.

Given a unitary representation of a Lie group $U:G\to \calU(\calH_x)$ on an on-site Hilbert space,\footnote{For the unfamiliar, we will discuss representations in Chapter~\ref{ch:app-rep_theory}.} an \textbf{on-site} or \textbf{local symmetry} is the representation $\alpha_{(\cdot)}:G\to \text{Aut}(\calA_{\Lambda})$ given by
\begin{equation} \label{def:unitary conjugation symmetry}
    \alpha_g(A) = \paran{\bigotimes_{x\in X} U_g} A \paran{\bigotimes_{x\in X} U_g^*} , \qquad A\in \calA_{X} .
\end{equation} Note that when we work on pure states of finite chains $\calH_{[1,\ell]}$, this corresponds to studying tensor representations $U^{\otimes \ell}$ of $G$. In practice, such symmetries are typically identified at the level of the interaction $\Phi:P(\Lambda)\to \calA_\Lambda$ defining a Hamiltonian by verifying the invariance condition 
\begin{equation}
    \alpha(\Phi(X)) = \Phi(X) , \qquad \text{ for all } X\in \calP(\Lambda),
\end{equation} which then implies (\ref{def:sym commutes dynamics}). We will see a concrete example demonstrating that the AKLT interaction is $SU(2)$-invariant at the end of the chapter in Example \ref{ex:checking for symmetry}.

Let us describe a special case of broken and unbroken symmetry before the later full definition in Section~\ref{sec:infinite quantum spin chains setup}.
Let $\Lambda = [1,\ell]$. Let $\alpha$ be the symmetry defined by (\ref{def:unitary conjugation symmetry}), and suppose we have a pure eigenstate $\omega$ corresponding to $\psi\in \calH_{[1,\ell]}$ which is an eigenvector of eigenvalue $\lambda$ of an $\alpha$-symmetric Hamiltonian $H_\Lambda$, i.e. $\alpha(H_\Lambda) = H_\Lambda$. Then
\begin{equation}\label{eq:symmetry eigenspaces}
    H_{\Lambda} U_g^{\otimes \ell} \ket{\psi} = U_g^{\otimes \ell} H_{\Lambda} \ket{\psi} = \lambda U_g^{\otimes \ell}\ket{\psi} , 
\end{equation} i.e. $U_g^{\otimes \ell}\ket{\psi}$ is also a $\lambda$ eigenstate of $H_\Lambda$, although not necessarily equal to $\ket{\psi}$ itself. In other words, the $\lambda$-eigenspaces $V_\lambda$ of $H_{\Lambda}$ are invariant subspaces of $U_g^{\otimes \ell}$. In particular, this is true for the ground state space. If $U_g^{\otimes \ell}\ket{\psi} \in \C \ket{\psi}$ for every $\ket{\psi}$ corresponding to a pure ground state and all $g\in G$, we say the symmetry $\alpha$ is unbroken. If instead there exists a vector $\ket{\psi}$ corresponding to a pure ground state and a group element $g\in G$ such that $U_g^{\otimes \ell}\ket{\psi} \not\in \C \ket{\psi}$, we say the symmetry $\alpha$ is (spontaneously) broken. 

\section{Examples}
We now describe some simple examples of quantum spin chains. The first example is a single particle problem, and the second consists of multiple independent single-particle problems. 

Before we begin, now is a good time to define the Pauli spin matrices, which are the following $2\times 2$ complex matrices:\footnote{In the quantum computing literature, most authors prefer to call the Pauli matrix $X$ instead of $\sigma^X$, and likewise for $Y,Z$.}
\begin{equation}
    \sigma^X = \begin{pmatrix} 0 & 1 \\ 1 & 0 \end{pmatrix}, \quad \sigma^Y = \begin{pmatrix} 0 & -i \\ i & 0 \end{pmatrix}, \quad \sigma^Z = \begin{pmatrix} 1 & 0 \\ 0 & -1 \end{pmatrix} . 
\end{equation} These self-adjoint operators enjoy the following commutation relations:
\begin{equation} \label{eq:pauli commutation(1st chap)}
    [\sigma^X, \sigma^Y] = 2i\sigma^Z , \quad [\sigma^Y, \sigma^Z] = 2i\sigma^X, \quad [\sigma^Z, \sigma^X] = 2i\sigma^Y . 
\end{equation} The Lie algebra $\su(2)$ may be defined as the (real) span of the skew-adjoint matrices $\{i\sigma^X,i\sigma^Y,i\sigma^Z\}$. In Chapter~\ref{ch:app-rep_theory}, we will mathematically define ``spin'' in terms of $\su(2)$.

\begin{example}{A single spin-1/2 particle}

Let $\calH = \C^2$, a single spin-1/2 particle (or qubit), and let $\ket{0},\ket{1}$ be an orthonormal basis of $\calH$. We will take as an algebra of observables $\calA = M_2(\C)$, the $2\times 2$ complex matrices. Consider the state $\omega:\calA\to \C$ corresponding to the vector $\ket{0}$, i.e. we have 
\[
    \omega(A) = \bra{0} A \ket{0} = \Tr \ket{0}\bra{0} A ,\qquad \text{for all } A\in \calA.
\]  The Pauli matrices $\sigma^X,\sigma^Y,\sigma^Z$ are self-adjoint, so we may take any of them to be our Hamiltonian $H$. Let us pick $H = \sigma^X$. Here, when we say that $H$ is the infinitesimal generator of dynamics, we mean that exponentiation yields a continuous 1-parameter group of unitaries $\{U(t): t\in \R\}$ given by
\begin{equation}
    U(t) = e^{-it H} = \cos(t)\idty - i\sin (t) H . 
\end{equation} The 1-parameter family of automorphisms $\tau_t$ which describe dynamics is then given by $\tau_t(A) := U^*(t) A U(t)$ where $A\in \calA$. We may then describe the expectation of an observable $A$ in the state $\omega$ at time $t$ by writing $\omega(\tau_t(A))$. Let us pick two interesting observables to measure: $A_0 := \ket{0}\bra{0}$ and $A_1:= \ket{1}\bra{1}$. Computing, we see that
\begin{align*}
\omega(\tau_t(A_0)) &= \bra{0} e^{itH} \ket{0}\bra{0} e^{-itH} \ket{0} = (\cos t)(\cos t) = \cos^2 t . \\
\omega(\tau_t(A_1)) &= \bra{1} e^{itH} \ket{0}\bra{0} e^{-itH} \ket{1} = (i\sin t)(-i\sin t) = \sin^2 t.
\end{align*}

Before leaving this example, we consider a different state $\omega_{-}$ corresponding to the vector $\ket{-}:= \frac{1}{2}\paran{\ket{0}-\ket{1}}$, i.e.
\[
    \omega_{-}(A) = \Tr \ket{-}\bra{-} A, \qquad \text{for all } A\in\calA.
\] The Hamiltonian $H= \sigma^X$ has two eigenvalues, $\pm 1$. This state is in fact an eigenstate whose corresponding energy is $-1$, since $H\ket{-} = -\ket{-}$. So $\omega_-$ is a ground state for this Hamiltonian. Eigenstates are stationary states since e.g. $U(t)\ket{-} = e^{-iHt}\ket{-} = e^{it}\ket{-}$ and so
\[
    \omega(\tau_t(A)) = \omega(A), \qquad \text{for all } A\in\calA .    
\] In other words, the state $\omega_-$ is invariant under the action of the 1-parameter symmetry group of time translation automorphisms $\tau_t$.
\end{example}

We can now construct a quantum system of $n$ spin 1/2 particles rotating independently. This is in essence exactly as exciting as the previous example, but serves to introduce common notation.

\begin{example}{Just transverse fields}\label{ex:just transverse fields}

    Let $\Gamma = \Z$ and let $\Lambda = [a,b]\subseteq \Gamma$ be any finite interval. Let the on-site Hilbert space for every $x\in \Gamma$ be a spin-1/2 particle $\calH_x = \C^2$, so the chain $\Lambda$ has
    \begin{equation}
        \calH_{\Lambda} = \bigotimes_{x=a}^b \C^2, \qquad \calA_{\Lambda} = \bigotimes_{x=a}^b M_2(\C) . 
    \end{equation}
    Let us write a Hamiltonian (and implicitly define our interaction $\Phi$). Letting  $i\in\{X,Y,Z\}$ and site $x\in \Lambda$ we identify as before with the algebras:
    \[
        \sigma_x^i = \idty\otimes\dots\otimes \idty \otimes \underbrace{\sigma^i}_{site \; x} \otimes \idty \otimes \dots \otimes \idty , 
    \]i.e. the only nontrivial operator acts on the $x$th factor of $\calA_{\Lambda}$. Then we can write a ``transverse field'' Hamiltonian
    \begin{align}
        H_{tf} = \sigma_a^Z + \sum_{x=a+1}^{b} \sigma_{x}^X ,
    \end{align} where we have made the boundary term $x=a$ a different Pauli for color. The terms of this sum of self-adjoint operators commute because they act on different tensor factors, and so we may simultaneously diagonalize them to diagonalize $H_{tf}$. The ground state of this model corresponds to the lowest energy eigenvector of $H_{tf}$, which can be checked to be
    \[
    \ket{1}\otimes \ket{-}^{\otimes b-a-1} \in \calH . 
    \]
\end{example}

So, since the interaction terms all commute with one another, diagonalizing the Hamiltonian (and so computing the ground state) was just as easy as diagonalizing each term (and computing $n$ separate ground states). 

The following examples are significantly richer. We will encounter them repeatedly through this thesis, and we will need tools to glean insight into their behavior and ground state structure. The representation theory of $SU(2)$ will play a key role, allowing us to rigorously work with the notion of spin to understand these interactions. But to provide some preliminary motivation, briefly recall the classical Heisenberg model from statistical mechanics. In the isotropic 1D case with no transverse field, one considers the finite lattice $\Lambda = [a,b]\subseteq \Z$ with each site $x$ storing a unit vector $\vec{s}_x \in \R^3$, a ``spin''. Given a configuration $\vec{s}$ which assigns a spin $\vec{s}_x$ to each site $x$, the Hamiltonian $H_{Is}$ in this context is the real-valued function of the configuration $\vec{s}$ given by
\begin{equation} \label{def:classical heisenberg}
    H_{Is}(\vec{s}) = - J \sum_{x = a}^b \vec{s}_x \cdot \vec{s}_{x+1}.
\end{equation} where $J\in \R$ is a parameter. When $J>0$ the interaction is called \textbf{ferromagnetic}, and in this case configurations $\vec{s}$ with many neighboring aligned spins are rewarded with lower energy $H(\vec{s})$. When $J<0$ the interaction is called \textbf{anti-ferromagnetic}, wherein neighboring anti-aligned spins are rewarded with lower energy. One can think of this model as a version of the Ising model with continuous local spins instead of discrete local spins. With this in mind we return to quantum spin chains to introduce our first nontrivial example.

\begin{example}{The Heisenberg Chain} \label{ex:heisenberg chain}

As in Example \ref{ex:just transverse fields}, let $\Gamma = \Z$ and let $\Lambda = [a,b]\subseteq \Gamma$ be any finite interval. Let the on-site Hilbert space for every $x\in \Gamma$ be a spin-1/2 particle $\calH_x = \C^2$, so the chain $\Lambda$ has
\begin{equation}
    \calH_{\Lambda} = \bigotimes_{x=a}^b \C^2, \qquad \calA_{\Lambda} = \bigotimes_{x=a}^b M_2(\C) . 
\end{equation}

Now, letting $J\in \R$ a real parameter, the quantum Heisenberg Hamiltonian on volume $\Lambda$ is given by 

\begin{equation}\begin{split}
    H_{\Lambda} &= -J \sum_{x=a}^{b-1} \vec{\sigma}_x \cdot \vec{\sigma}_{x+1} \\
    &= -J \sum_{x=a}^{b-1} \sigma_x^X \sigma_{x+1}^X + \sigma_x^Y \sigma_{x+1}^Y + \sigma_x^Z \sigma_{x+1}^Z .
\end{split} \end{equation} When $J>0$, this is called the \textit{ferromagnetic} Heisenberg chain, and when $J<0$, this is called the \textit{anti-ferromagnetic} Heisenberg chain. 
\end{example}

This nearest-neighbor interaction is already much more interesting than Example \ref{ex:just transverse fields}, both since the terms in the Hamiltonian do not commute with one another and since distinct qubits now interact with their neighbors.

The next example, first studied by physicists Majumdar and Ghosh in the seminal work~\cite{majumdar1969next}, will eventually serve as a prototype for ground state dimerization: we will later see that this translation-invariant Hamiltonian will have two 2-periodic ground states which look like translates of one another, and this is arguably the simplest example of this behavior.

\begin{example}{The Majumdar-Ghosh Chain}\label{ex:Majumdar-Ghosh chain}

    Let $\Lambda = [a,b]$ denote a finite chain of spin-1/2 particles, i.e. the identical $\calH_\Lambda$ and $\calA_{\Lambda}$ to the previous Examples \ref{ex:just transverse fields} and \ref{ex:heisenberg chain}. Now, define the Hamiltonian $H_{\Lambda}$ via a next-nearest-neighbor interaction $h^{(MG)}_{x,x+1,x+2}$ and positive real parameter $J>0$:
    \begin{equation} \begin{split}
        H_{\Lambda} &= \sum_{x=a}^{b-2} h^{(MG)}_{x,x+1,x+2} \\
        &= J \sum_{x=a}^{b-2} 2\vec{\sigma}_x\cdot \vec{\sigma}_{x+1} + \vec{\sigma}_x\cdot \vec{\sigma}_{x+2} . 
    \end{split} \end{equation} One may interpret this as a sort of sum of two quantum anti-ferromagnetic Heisenberg interactions: the stronger $2\vec{\sigma}_x \cdot \vec{\sigma}_{x+1}$ rewards neighboring anti-alignment, and the weaker $\vec{\sigma}_x \cdot \vec{\sigma}_{x+2}$ rewards anti-alignment at a small distance. 
\end{example}

Our final example is perhaps the most important for this thesis. The AKLT chain was studied by Affleck, Kennedy, Lieb, and Tasaki in~\cite{affleck1988valence}. It is a key exactly solvable model which both revolutionized the study of gapped quantum systems by establishing the conjectured existence of the Haldane phase and led to the invention of matrix product states, one of the most important classes of states in quantum science. It is also the $n=3$ case of a larger family of $O(n)$-invariant spin chains, the main phase diagram we intend to study throughout this thesis.

Before proceeding, we define the ``spin-1 Pauli matrices'' acting on $\C^3$:
\begin{equation}
    S^X = \frac{1}{\sqrt{2}}\begin{pmatrix}
    0 & 1 & 0 \\
    1 & 0 & 1 \\
    0 & 1 & 0 
    \end{pmatrix}, \quad S^Y = \frac{1}{\sqrt{2}}\begin{pmatrix}
    0 & -i & 0 \\
    i& 0 & -i \\
    0 & i & 0 
    \end{pmatrix}, \quad S^Z = \begin{pmatrix}
    1 & 0 & 0 \\
    0 & 0 & 0 \\
    0 & 0 & -1 
    \end{pmatrix} . 
\end{equation} It is worth mentioning that the explicit matrices really are not so important. What \textit{is} important is that they enjoy the same commutation relations (\ref{eq:pauli commutation(1st chap)}) as the (rescaled) Paulis $\frac{1}{2}\sigma^X, \frac{1}{2}\sigma^Y,\frac{1}{2}\sigma^Z$:
\begin{equation}
    [S^X, S^Y] = iS^Z , \quad [S^Y, S^Z] = iS^X, \quad [S^Z, S^X] = iS^Y . 
\end{equation} Mathematically, we are saying that there is a Lie algebra homomorphism $\frac{1}{2}\sigma^i \mapsto S^i$, $i=X,Y,Z$, i.e. a spin-1 representation of $\su(2)$. Let us define the AKLT chain.

\begin{example}{The AKLT Chain}\label{ex:AKLT chain}

Let $\Gamma = \Z$ and let $\Lambda = [a,b]\subseteq \Gamma$ be any finite interval. Let each on-site Hilbert space be a spin-1 particle $\calH_x = \C^3$, and so
\begin{equation}
    \calH_{\Lambda} = \bigotimes_{x=a}^b \C^3, \qquad \calA_{\Lambda} = \bigotimes_{x=a}^b M_3(\C) . 
\end{equation} Define the AKLT Hamiltonian by the interaction $h^{(AKLT)}_{x,x+1}\in M_3(\C)\otimes M_3(\C)$:
\begin{equation} \begin{split}
    H_{\Lambda} &= \sum_{x=a}^{b-1} h^{(AKLT)}_{x,x+1} \\
    &= \sum_{x=a}^{b-1} \frac{1}{3}\idty + \frac{1}{2} \vec{S}_x\cdot \vec{S}_{x+1} + \frac{1}{6} (\vec{S}_x\cdot \vec{S}_{x+1})^2 . 
\end{split} \end{equation} The initial appearance of this Hamiltonian is rather opaque, but these terms are carefully tuned so that $h_{x,x+1}^{(AKLT)}$ is exactly an orthogonal projector onto a special 5-dimensional subspace corresponding to a spin-2 particle (we will discuss spin in Chapter \ref{ch:app-rep_theory}). In this way, one obtains an exactly solvable model. The $\frac{1}{3}\idty$ term serves only to shift the spectrum of $H_{\Lambda}$ by a constant and so does nothing to affect ground state structure. Notice that after omitting the ``biquadratic'' term $(\vec{S}_x\cdot \vec{S}_{x+1})^2$, the remaining ``bilinear'' interaction $\vec{S}_x\cdot \vec{S}_{x+1}$ is a spin-1 version of the Heisenberg chain in Example \ref{ex:heisenberg chain}. 

\end{example}
Indeed, the primary motivation for constructing the AKLT Hamiltonian was to study the conjectured Haldane phase of matter containing the spin-$1$ antiferromagnetic Heisenberg chain ground state. By adding a small biquadratic term, one obtains a ``close'' exactly solvable model whose properties can be rigorously determined. Interestingly, while it has long been known that the AKLT chain confirms the existence of the Haldane phase--it has a unique gapped ground state with exponential decay of correlations, words which will have meaning in the next chapter--the case of the spin-1 Heisenberg antiferromagnet remains unproven, despite hefty numerical evidence and a variety of advancements in the mathematical toolbox for proving spectral gaps. 

\section{Infinite Quantum Spin Chains} \label{sec:infinite quantum spin chains setup}             
Let $(\Gamma,d)$ be a countable lattice with a metric $d$. Of particular importance for this thesis is the 1D spin chain $\Gamma = \Z$ (with the usual $d(x,y) = \abs{x-y}, x,y\in \Z$). 
Let each on-site Hilbert space $\calH_x$ be given by a spin-$s$ particle, $\calH_x = \C^{2s+1}$ and each on-site $C^*$ algebra given by matrices $\calA_x = M_{2s+1}(\C)$, equipped with the usual adjoint and operator norm. As before, for each finite volume $\Lambda\subseteq \Gamma$, the Hilbert space and algebra of observables are given by
\begin{equation}
    \calH_{\Lambda} = \bigotimes_{x\in \Lambda} \calH_x, \qquad \calA_{\Lambda} = \bigotimes_{x\in \Lambda} \calA_x . 
\end{equation} Again, given a subset $\Lambda_1\subseteq \Lambda$, we may identify the algebra $\calA_{\Lambda_1}$ as a subset of $\calA_{\Lambda}$. With this identification in mind, the \textbf{algebra of local observables} $\calA_{loc}$ on $\Gamma$ is given by
\begin{equation}
    \calA_{loc} := \bigcup_{\substack{ \Lambda \subseteq \Z, \\ \Lambda \text{ finite}}} \calA_{\Lambda} . 
\end{equation} Taking a norm closure of $\calA_{loc}$, we obtain the $C^*$ \textbf{algebra of quasi-local observables}
\begin{equation}
    \calA := \overline{\calA_{loc}}.
\end{equation} Occasionally, when it is important to explicitly write $\Gamma$, we may write $\calA = \calA_\Gamma$. This algebra is a $C^*$ algebra with the algebra of local observables as a dense subset, and we will be formulating our theory in terms of this algebra. Note that in the finite volume $\Lambda$ case, the closure is redundant and $\calA_{\Lambda} = \overline{\calA_{\Lambda}}$.

Now, given an interaction $\Phi: \calP_0(\Gamma)\to \calA$, we would like to define dynamics on an infinite system. To do so we require some locality assumptions on the interaction $\Phi$. So long as $\Phi$ is sufficiently local, given an exhaustive sequence of finite volumes $\Lambda \nearrow \Gamma$, the finite volume dynamics $\tau_t^\Lambda$ will converge to a strongly continuous one-parameter group of automorphisms $\tau_t: \calA\to \calA$ (or if $\Phi(t), t\in [0,1]$ a sufficiently regular path of interactions, a strongly continuous cocycle of automorphisms $\tau_{t,s}$). We may quantify this locality by way of an $F$-norm. Let $F:[0,\infty)\to (0,\infty)$ be called an \textbf{F-function} if it satisfies the following properties:
\begin{enumerate}
    \item (Non-increasing): for $0\leq r\leq s$, we have $F(r)\geq F(s)$.
    \item (Uniform integrability): 
    \begin{equation}
        \norm{F}:= \sup_{x\in \Gamma} \sum_{y\in \Gamma} F(d(x,y)) < \infty .
    \end{equation}
    \item (Convolution condition): There exists $C_F<\infty$ such that for any $x,y\in \Gamma$,
    \begin{equation}
        \sum_{z\in \Gamma} F(d(x,y))F(d(z,y)) \leq C_F F(d(x,y)) . 
    \end{equation}
\end{enumerate} Given such a function we may construct a Banach space of interactions $\Phi$ on $(\Gamma,d)$ with the $F$-norm
\begin{equation}
    \norm{\Phi}_F := \sup_{x,y\in \Gamma}\frac{1}{F(d(x,y))} \sum_{\substack{X\subseteq \Gamma: \\ x,y\in X}} \norm{\Phi(X)} . 
\end{equation}

Certainly when $\Phi$ is a finite-range interaction, i.e. there exists an $R>0$ such that if $\text{diam}(X)>R$ then $\Phi(X) = 0$, we have $\norm{\Phi}_F<\infty$. We also have that a variety of sufficiently rapidly decaying interactions are in this Banach space, since one such $F$-function for $\Gamma = \Z^\nu$ is given by
\begin{equation}
    F(r) = (1+r)^{-(\nu + \epsilon)} . 
\end{equation}

When one has an interaction $\Phi$ with bounded $F$-norm, we call $\Phi$ a \textbf{local interaction}, or colloquially we call its Hamiltonian $H$ a \textbf{local Hamiltonian}. When we have a local interaction $\Phi$, one may prove the existence of limiting dynamics. 

\begin{theorem}~\cite{nachtergaele2016quantum} 
Let $\Phi$ be an interaction on $(\Gamma,d)$ with $\norm{\Phi}_F<\infty$. Along any increasing, exhaustive sequence $\{\Lambda_n\}$ of finite subsets of $\Gamma$, the norm limit
\begin{equation}
    \tau_t(A) = \lim_{n\to \infty} \tau_t^{\Lambda_n}(A) 
\end{equation} exists for all $t\in \R$ and $A\in \calA_{loc}$. This convergence is uniform for $t$ in compact sets and independent of the choice of exhaustive sequence $\{\Lambda_n\}$. The family $\{\tau_t\}_{t\in \R}$, which we denote by the infinite volume dynamics corresponding to $\Phi$, defines a strongly continuous one-parameter group of $*$-automorphisms on $\calA$.
    
\end{theorem}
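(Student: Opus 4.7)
The plan is to prove this via a Lieb--Robinson-type bound, which upgrades the $F$-norm hypothesis on $\Phi$ into a quantitative quasi-locality estimate on the finite-volume dynamics strong enough to force a Cauchy sequence. First I would establish the Lieb--Robinson bound: for finite $X,Y\subseteq \Lambda$ and $A\in \calA_X$, $B\in \calA_Y$,
\[
    \norm{[\tau_t^\Lambda(A), B]} \leq \frac{2\norm{A}\norm{B}}{C_F}\paran{e^{2C_F\norm{\Phi}_F\abs{t}} - 1}\sum_{x\in X}\sum_{y\in Y} F(d(x,y)).
\]
The standard argument differentiates $\norm{[\tau_t^\Lambda(A), B]}$ in $t$, uses $\frac{d}{dt}\tau_t^\Lambda(A) = i\tau_t^\Lambda([H_\Lambda, A])$, expands $[H_\Lambda, B] = \sum_{Z:\, Z\cap Y\neq \emptyset}[\Phi(Z), B]$, and iterates the resulting Duhamel identity. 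The convolution condition on $F$ is precisely what is needed to close the iteration and yield a Gr\"onwall-style bound uniform in $\Lambda$.

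Second, I would use the Lieb--Robinson bound to show $\{\tau_t^{\Lambda_n}(A)\}$ is Cauchy in norm for each $A\in \calA_{loc}$. Given $A\in \calA_X$ with $\Lambda_n\supseteq X$ and $\Lambda_m\supseteq \Lambda_n$, the fundamental theorem of calculus applied to $s\mapsto \tau_s^{\Lambda_m}(\tau_{t-s}^{\Lambda_n}(A))$ gives
\[
    \tau_t^{\Lambda_m}(A) - \tau_t^{\Lambda_n}(A) = i\int_0^t \tau_s^{\Lambda_m}\paran{[H_{\Lambda_m} - H_{\Lambda_n},\, \tau_{t-s}^{\Lambda_n}(A)]}\, ds.
\]
The difference $H_{\Lambda_m}-H_{\Lambda_n}$ is a sum of terms $\Phi(Z)$ with $Z$ meeting $\Lambda_m\setminus \Lambda_n$; applying the Lieb--Robinson bound to each commutator inside the integrand, using isometry of $\tau_s^{\Lambda_m}$, and collecting the double sum via $\norm{\Phi}_F$ and the uniform integrability of $F$, the estimate reduces to a constant multiple of $\sum_{x\in X}\sum_{y\in \Gamma\setminus \Lambda_n} F(d(x,y))$, which tends to zero as $\Lambda_n\nearrow \Gamma$ because the full sum is finite. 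This proves that $\tau_t(A):=\lim_n \tau_t^{\Lambda_n}(A)$ exists for every $A\in \calA_{loc}$, uniformly for $t$ in compact sets.

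The remaining conclusions are routine. Independence of the exhausting sequence follows by interleaving two candidates $\{\Lambda_n\},\{\Lambda_n'\}$ into a third increasing exhaustion (e.g.\ $\Lambda_n\cup \Lambda_n'$), whose convergence forces the two originals to share a common limit. Since each $\tau_t^{\Lambda_n}$ is an isometric $*$-automorphism, the limit $\tau_t$ extends by bounded linearity from the dense subalgebra $\calA_{loc}$ to all of $\calA$, and the $*$-homomorphism relations pass through norm limits on the dense subalgebra. The group law $\tau_t\circ \tau_s = \tau_{t+s}$ is inherited from the finite-volume identity after taking limits, which also furnishes an inverse $\tau_{-t}$ and hence bijectivity. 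Strong continuity at $t=0$ for local $A$ follows from continuity of each $\tau_t^{\Lambda_n}(A)$ together with uniform-on-compacts convergence, and extends to $\calA$ by a standard density-plus-isometry argument.

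The main obstacle is the Lieb--Robinson bound itself: its spatial/temporal factorization is exactly the feature that survives the thermodynamic limit, and the three $F$-function axioms (monotonicity, uniform integrability, convolution) are engineered precisely so that the iterative Dyson-series argument closes with a bound whose decay in the distance between $X$ and $Y$ is preserved as $\Lambda\nearrow \Gamma$. Once this bound is in hand, the extension to a strongly continuous one-parameter $*$-automorphism group on $\calA$ is a standard dense-subalgebra argument for $C^*$-dynamical systems.
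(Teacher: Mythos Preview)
Your proposal is correct and follows exactly the approach the paper points to. Note that the paper does not itself prove this theorem: it is cited as background from \cite{nachtergaele2016quantum}, and the paper only remarks that ``the key technical tool to prove existence of the dynamics is a Lieb-Robinson bound'' before referring the reader to \cite{nachtergaele2016quantum} and \cite{nachtergaele2019quasi}. Your sketch---establishing the Lieb--Robinson bound via the iterated Duhamel/Gr\"onwall argument (closing thanks to the convolution condition on $F$), then using it to show the finite-volume dynamics form a Cauchy sequence on $\calA_{loc}$ uniformly on compact time intervals, and finally extending to a strongly continuous one-parameter group of $*$-automorphisms on $\calA$ by density and isometry---is precisely the standard proof carried out in those references.
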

The key technical tool to prove existence of the dynamics is a Lieb-Robinson bound, a critical family of results which allow one to upgrade quasi-locality of an interaction to quasi-locality of the flow it generates--a sort of approximate version of a light cone argument. While the first such bound was found by Lieb and Robinson in 1972~\cite{lieb1972finite}, it was not until the mid 2000's that a flurry of results would make it clear that such bounds can be used to great effect in a variety of quantum applications~\cite{hastings2004lieb,hastings2006spectral,nachtergaele2006lieb}. For an introduction to these ideas, see~\cite{nachtergaele2016quantum} or the comprehensive review~\cite{nachtergaele2019quasi}.

The strongly continuous, one-parameter group of $*$-automorphisms $\{\tau_t\}_{t\in \R}$ then has a strong generator $\delta$, the closure of the operator $\wt{\delta}$ defined by $\text{Dom}(\wt{\delta}) = \calA_{loc}$ and 
\begin{equation}
    \wt{\delta} = \lim_{\Lambda\nearrow \Gamma} [H_\Lambda, A] , \qquad A\in \calA_{loc}. 
\end{equation} We often write $\tau_t = e^{it\delta}$.\footnote{Compare again to the Adjoint representation Example \ref{ex:adjoint rep of su2}.} We call the pair $(\calA,\tau_t)$ a \textbf{$C^*$-dynamical system}.

Now, we may define a \textbf{ground state} $\omega:\calA\to \C$ of the $C^*$-dynamical system $(\calA,\tau_t=e^{it\delta})$ as a positive, normalized, linear functional such that 
\begin{equation}
    \omega(A^* \delta(A)) \geq 0 \qquad \text{for all } A\in \text{Dom}(\delta).
\end{equation} Notice that for a finite volume $\Lambda$, $\delta(A) = [H_\Lambda,A]$, and one can show that the above definition is equivalent to that of (\ref{def:finite chain ground state}) when the density matrix $\rho$ of $\omega$ has $H\rho = \min_{\lambda\in \text{spec}(H)}\lambda \rho$.

At the moment, unlike the case of a finite quantum system, the data of an infinite quantum system has no associated Hilbert space: just a $C^*$ algebra $\calA$, dynamics $\tau_t$, and perhaps a collection of states $\omega$. The Gelfand-Naimark-Segal (GNS) construction produces an appropriate Hilbert space from this data, and indeed will allow us to rigorously describe a variety of physical properties like gappedness. Recall that given a representation $\pi:\calA\to \calB(\calH)$ of a unital $C^*$ algebra into the bounded linear operators of a Hilbert space $\calH$, we say a vector $\Omega\in \calH$ is \textbf{cyclic} if 
\begin{equation}
    \{\pi(A)\Omega: A\in\calA \}\subseteq \calH
\end{equation} is a dense subspace of $\calH$.

\begin{theorem} (GNS construction~\cite{nachtergaele2016quantum}) \label{thm:GNS construction}

Let $\omega$ be a state on a $C^*$ algebra $\calA$. Then there exists a Hilbert space $H_\omega$, a representation $\pi_\omega$ of $\calA$ on $\calH_\omega$, and a vector $\Omega_\omega\in \calH_\omega$, which is cyclic for $\pi_\omega$, such that
\begin{equation}
\omega(A) = \inprod{\Omega_\omega , \pi_\omega(A) \Omega_\omega} \qquad\text{ for all } A\in \calA . 
\end{equation} Moreover, the triple $(\calH_\omega,\pi_\omega,\Omega_\omega)$ is uniquely determined by $\omega$ up to unitary equivalence, i.e. given two cyclic representations $(\calH_1,\pi_1,\Omega_1)$ and $(\calH_2,\pi_2,\Omega_2)$ of the same state $\omega$, there exists a unique unitary map $U:\calH_1\to \calH_2$ satisfying
\begin{equation}
    \Omega_2 = U \Omega_1 \quad \text{ and } \quad \pi_2(A) = U \pi_1(A) U^* \qquad \text{ for all } A\in \calA. 
\end{equation}
\end{theorem}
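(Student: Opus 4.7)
The plan is to construct $(\calH_\omega, \pi_\omega, \Omega_\omega)$ directly from $\omega$ by turning $\calA$ itself into a pre-Hilbert space with the form $\inprod{A, B}_\omega := \omega(A^* B)$, and then to settle uniqueness via the standard intertwining argument using cyclicity.

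First I would build the Hilbert space. Positivity of $\omega$, applied to $(A + \lambda B)^*(A + \lambda B) \ge 0$ for $\lambda \in \C$, yields the Cauchy-Schwarz inequality $\abs{\omega(A^* B)}^2 \le \omega(A^* A)\, \omega(B^* B)$. Setting $N_\omega := \{A \in \calA : \omega(A^* A) = 0\}$, Cauchy-Schwarz shows $N_\omega$ is a linear subspace, and the $C^*$-inequality $A^* A \le \norm{A}^2 \idty$ combined with positivity of $\omega$ gives $\omega((AB)^*(AB)) = \omega(B^* A^* A B) \le \norm{A}^2 \omega(B^* B) = 0$ whenever $B \in N_\omega$, so $N_\omega$ is a closed left ideal. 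Then $\inprod{\cdot, \cdot}_\omega$ descends to a genuine inner product on $\calA/N_\omega$, and I take $\calH_\omega$ to be its completion.

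Next I would define the representation and cyclic vector. On the dense subspace $\calA/N_\omega \subseteq \calH_\omega$, set $\pi_\omega(A)[B] := [AB]$; the left-ideal property makes this independent of the representative $B$, and the same estimate as above yields $\norm{\pi_\omega(A)[B]}^2 \le \norm{A}^2 \norm{[B]}^2$, so $\pi_\omega(A)$ extends to a bounded operator on $\calH_\omega$ of norm at most $\norm{A}$. Multiplicativity, linearity, and the $*$-property $\pi_\omega(A^*) = \pi_\omega(A)^*$ are routine to verify on $\calA/N_\omega$ and pass to the completion by continuity. Taking $\Omega_\omega := [\idty]$, the identity $\pi_\omega(A)\Omega_\omega = [A]$ makes $\Omega_\omega$ cyclic by construction, and $\inprod{\Omega_\omega, \pi_\omega(A) \Omega_\omega} = \omega(\idty^* A) = \omega(A)$.

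For uniqueness, given two cyclic representations $(\calH_i, \pi_i, \Omega_i)$ of the same $\omega$, I would define $U$ on the dense subspace $\pi_1(\calA) \Omega_1$ by $U \pi_1(A) \Omega_1 := \pi_2(A) \Omega_2$; both well-definedness and isometry follow from $\norm{\pi_i(A) \Omega_i}^2 = \omega(A^* A)$ being independent of $i$. By density $U$ extends to an isometry $\calH_1 \to \calH_2$, which cyclicity of $\Omega_2$ makes surjective and hence unitary. The relations $U \Omega_1 = \Omega_2$ and $U \pi_1(A) = \pi_2(A) U$ hold on the dense set and pass to $\calH_1$ by continuity, and uniqueness of $U$ is forced since any intertwiner is determined by its value on the cyclic vector. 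The only real obstacle is the boundedness estimate in the second step, which hinges on the $C^*$-operator inequality $A^* A \le \norm{A}^2 \idty$ (a consequence of $\norm{A^* A} = \norm{A}^2$ via functional calculus); every remaining step is algebra on $\calA/N_\omega$ followed by continuous extension.
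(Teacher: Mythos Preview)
Your proof is the standard GNS construction and is correct. The paper does not actually prove this theorem: it is stated as a cited result from~\cite{nachtergaele2016quantum} and used as background, so there is no ``paper's own proof'' to compare against. Your argument is exactly the classical one found in that reference and in any standard operator-algebra text.
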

The uniqueness condition gives rise to a useful corollary, 
\begin{corollary} \label{cor:GNS unitary implements symmetries}
Let $\calA$ be a $C^*$ algebra, $\omega$ a state on $\calA$, and $\alpha$ an automorphism of $\calA$ which leaves $\omega$ invariant, i.e. $\omega \circ \alpha = \omega$. Then, $\alpha$ is unitarily implementable in the GNS representation of $\omega$, i.e. there exists a unique unitary $U\in \calU(\calH_\omega)$ such that for all $A\in \calA$ we have
\begin{equation}
    \pi_\omega (\alpha(A)) = U \pi_\omega(A) U^*, \qquad U\Omega = \Omega. 
\end{equation}
\end{corollary}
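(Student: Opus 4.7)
The plan is to construct $U$ directly by prescribing how it should act on the dense cyclic subspace $\{\pi_\omega(A)\Omega : A\in\calA\}\subseteq\calH_\omega$, and then extend by continuity. Since we want $\pi_\omega(\alpha(A)) = U\pi_\omega(A)U^*$ together with $U\Omega=\Omega$, applying both sides to $\Omega$ forces the definition
\begin{equation}
U\pi_\omega(A)\Omega := \pi_\omega(\alpha(A))\Omega,\qquad A\in\calA.
\end{equation}
This tells us both what to do and immediately suggests why uniqueness will hold, since $U$ is pinned down on a dense set. Note also that taking $A=\idty$ yields $U\Omega = \pi_\omega(\alpha(\idty))\Omega = \pi_\omega(\idty)\Omega = \Omega$, so the normalization condition comes for free.

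The first real step is to verify that this prescription is well-defined and isometric. The key computation uses the invariance $\omega\circ\alpha=\omega$:
\begin{equation}
\inprod{\pi_\omega(\alpha(A))\Omega,\pi_\omega(\alpha(B))\Omega} = \omega(\alpha(A)^*\alpha(B)) = \omega(\alpha(A^*B)) = \omega(A^*B) = \inprod{\pi_\omega(A)\Omega,\pi_\omega(B)\Omega}.
\end{equation}
Taking $A=B$ shows $\norm{U\pi_\omega(A)\Omega} = \norm{\pi_\omega(A)\Omega}$, so $U$ is well-defined (vectors with the same representative map to the same image) and isometric on a dense subspace. Then I would extend $U$ by continuity to a bounded linear isometry on all of $\calH_\omega$.

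Next I would verify that $U$ is in fact unitary, i.e. also surjective. The cleanest way is to run the same construction using the automorphism $\alpha^{-1}$ to produce an isometry $V$ satisfying $V\pi_\omega(A)\Omega = \pi_\omega(\alpha^{-1}(A))\Omega$; then $UV$ and $VU$ both fix the dense set $\pi_\omega(\calA)\Omega$, hence equal the identity. Having $U$ unitary, the intertwining identity $\pi_\omega(\alpha(A)) = U\pi_\omega(A)U^*$ follows by checking it on vectors $\pi_\omega(B)\Omega$: both sides send this vector to $\pi_\omega(\alpha(AB))\Omega = \pi_\omega(\alpha(A))\pi_\omega(\alpha(B))\Omega$, using that $\pi_\omega$ is a $*$-homomorphism and $\alpha$ is an automorphism.

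For uniqueness, suppose $U'$ is another unitary satisfying the two conditions. Then $U'\pi_\omega(A)\Omega = U'\pi_\omega(A)U'^*U'\Omega = \pi_\omega(\alpha(A))\Omega = U\pi_\omega(A)\Omega$, so $U'$ agrees with $U$ on the dense cyclic subspace $\pi_\omega(\calA)\Omega$, hence $U'=U$. No step here is a genuine obstacle — everything is standard GNS bookkeeping; the only mild subtlety is remembering to invoke $\alpha^{-1}$ (which exists precisely because $\alpha$ is an automorphism, not just an endomorphism) to secure surjectivity rather than mere isometry.
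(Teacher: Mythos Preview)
Your proof is correct. The paper does not spell out a proof; it presents the result as an immediate corollary of the uniqueness clause in the GNS theorem: since $\omega\circ\alpha=\omega$, the triple $(\calH_\omega,\pi_\omega\circ\alpha,\Omega_\omega)$ is another cyclic representation of the same state $\omega$, and GNS uniqueness then furnishes the unique unitary $U$ with $U\Omega_\omega=\Omega_\omega$ and $\pi_\omega\circ\alpha = U\pi_\omega(\cdot)U^*$. Your direct construction is precisely what one does to \emph{prove} that uniqueness clause, specialized to this situation, so the two are the same argument at different levels of packaging; yours has the advantage of being self-contained. One small slip: when you verify the intertwining relation on vectors $\pi_\omega(B)\Omega$, both sides actually give $\pi_\omega(\alpha(A)B)\Omega$, not $\pi_\omega(\alpha(AB))\Omega$ (the right-hand side acts as $\pi_\omega(\alpha(A))\pi_\omega(B)\Omega$, and on the left $U^*\pi_\omega(B)\Omega=\pi_\omega(\alpha^{-1}(B))\Omega$); the conclusion is unaffected.
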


Since ground states $\omega$ may be shown to be invariant under time translations, $\omega\circ\tau_t = \omega$ for all $t\in \R$, a brief argument assures the existence of a strongly continuous group of unitaries $U_t\in \calU(\calH_\omega)$ which implement time evolution. Stone's theorem (see Chapter 10.2 in ~\cite{hall2013quantum}) then guarantees there is a densely defined self-adjoint operator $H_\omega$ acting on a dense subset of $\calH_\omega$ for which
\begin{equation}
    U_t = e^{-itH_\omega} . 
\end{equation} When applied to the $C^*$-dynamical system $(\calA,\tau_t)$ defining a quantum spin system, we call $H_\omega$ the \textbf{GNS Hamiltonian}. One can show that $H_\omega \geq 0$ and that $H_\omega \Omega = 0$. 

In this way we may reach a key definition. The \textbf{spectral gap} above a ground state $\omega$ of a quantum spin system $(\calA,\tau_t)$ is 
\begin{equation}
    \gamma:= \sup\{ \eps > 0 : (0,\eps)\cap \text{spec}(H_\omega) = \emptyset\} , 
\end{equation} where we write $\gamma = 0$ if the RHS is empty. If $\gamma>0$, we call the ground state \textbf{gapped}, and if $\gamma=0$, we call the ground state \textbf{gapless}.

Let us extend the special case Equation (\ref{eq:symmetry eigenspaces}) to a full definition of broken and unbroken symmetry. We have in mind the same examples: lattice symmetries, like translations on $\Gamma=\Z$ given by the group $G=\Z$ acting via $\alpha_x(A_j) = A_{j+x}$ for any $A_j\in \calA_j$; and local symmetries given by a Lie group $G$ acting on finite volumes by the expression (\ref{def:unitary conjugation symmetry}). Suppose that $\alpha:G\to \text{Aut}(\calA)$ is a symmetry, so $\alpha_g\circ \tau_t = \tau_t \circ \alpha_g$ for all $g\in G$. One can show that the space $\mathcal{S} := \{\omega \text{ a ground state of }(\calA,\tau_t) \}$ is invariant under precomposition with $\alpha_g$, i.e.
\begin{equation}
    \omega \circ \alpha_g \in \mathcal{S} ,\qquad \text{ for all } g\in G.
\end{equation} If for all $\omega\in\mathcal{S}$ and all $g\in G$ we have $\omega\circ \alpha_g = \omega$, we say the symmetry is \textbf{unbroken}. If there exists an $\omega\in\mathcal{S}$ and a $g\in G$ such that $\omega\circ \alpha_g \neq \omega$, we say the symmetry is \textbf{broken}. Notice that if $\omega$ is the unique ground state of an interaction $\Phi$, one immediately obtains that the symmetry is unbroken.

The following example may give an idea of how to check for a Hamiltonian's symmetry. We note that symmetry is often expressed in the language of representation theory, and indeed we will use definitions and constructions from the following Chapter~\ref{ch:app-rep_theory}. We will do many more detailed calculations in Section \ref{sec:Spin and SU(2), by example}, some of which will make the result quite obvious. However, this suggests a rather general approach to checking whether a given representation is a symmetry and contextualizes our usage of tools from representation theory.
\begin{example}\label{ex:checking for symmetry}
    Consider the term $\vec{S}_x\cdot \vec{S}_{x+1}$ of the AKLT Hamiltonian (\ref{ex:AKLT chain}). We implicitly defined the spin-1 Lie algebra representation $\pi^{(1)}:\su(2)\to \gl(\calH_x)$, where $\calH_x\cong \C^3$, by $\frac{1}{2}\sigma^i \mapsto S^i, i=X,Y,Z$. The tensor representation $\wt{\pi}:\su(2)\to \gl(\calH_{[a,b]})$ is then given by
    \begin{equation}
        \frac{1}{2}\sigma^i \xmapsto{\wt{\pi}} \sum_{j=a}^b S_j^i = S^i\otimes \underbrace{\idty\otimes \dots \otimes \idty}_{b-a} + \idty\otimes S^i\otimes \underbrace{\idty\otimes \dots \otimes \idty}_{b-a-1} + \dots + \underbrace{\idty\otimes \dots \otimes \idty}_{b-a}\otimes S^i . 
    \end{equation} Now, observe that the interaction term commutes with the representative of the Pauli matrix $\sigma^X$ (and likewise for $\sigma^Y,\sigma^Z$):
    \begin{equation}\begin{split}
        [\wt{\pi}(\frac{1}{2}\sigma^X), \vec{S}_x\cdot \vec{S}_{x+1}] &= \sum_{j=a}^b [S_j^X, \vec{S}_x\cdot \vec{S}_{x+1}]\\
        &= [S_x^X + S_{x+1}^X, S_x^X S_{x+1}^X + S_x^Y S_{x+1}^Y + S_x^Z S_{x+1}^Z] \\
        &= [S_x^X, S_x^Y] S_{x+1}^Y + [S_x^X, S_x^Z] S_{x+1}^Z  + S_x^Y[S_{x+1}^X, S_{x+1}^Y] + S_x^Z[S_{x+1}^X, S_{x+1}^Z] \\
        &= 2i(S_{x}^Z S_{x+1}^Y - S_{x}^Y S_{x+1}^Z + S_x^Y S_{x+1}^Z - S_x^Z S_{x+1}^Y) \\
        &= 0 , 
    \end{split}\end{equation} where we have used the Pauli commutation relations (\ref{eq:pauli commutation}) and that $\wt{\pi}$ is a Lie algebra representation. It is straightforward to see that the same holds for $(\vec{S}_x\cdot \vec{S}_{x+1})^2$.

    Now, since $SU(2)$ is simply connected, Theorem \ref{thm:alg reps lift to group reps when G simply connected} guarantees that we may ``exponentiate'' to obtain a unique representation $\wt{\Pi}:SU(2)\to \GL(\calH_{[a,b]})$ such that\footnote{Compare to the adjoint representation, where this theorem says we have $e^X(\cdot)e^{-X} = \Ad_{e^X}(\cdot) = e^{\ad_X(\cdot)} = e^{[X,\cdot]}$ for all $X\in \su(2)$. }
    \begin{equation}
        \wt{\Pi}(g) (\vec{S}_x \cdot \vec{S}_{x+1}) \wt{\Pi}(g)^{-1} = \vec{S}_x\cdot \vec{S}_{x+1} , \qquad \text{for all } g\in SU(2) . 
    \end{equation} Indeed, $\wt{\Pi}(g)$ is just the exponential of the tensor representation $\wt{\pi}$, so 
    \begin{equation}
        \wt{\Pi}(g) = (\Pi^{(1)}(g))^{\otimes b-a + 1} = \underbrace{\Pi^{(1)}(g)\otimes \dots \otimes \Pi^{(1)}(g)}_{b-a+1} , 
    \end{equation} where $\Pi^{(1)}:SU(2)\to \GL(\calH_x)$ denotes the spin-1 representation. In other words, if $\alpha_g = \Ad_{\wt{\Pi}(g)}$, then 
    \begin{equation}
        \alpha_g(\vec{S}_x\cdot \vec{S}_{x+1}) = \vec{S}_x\cdot \vec{S}_{x+1} . 
    \end{equation}
\end{example} Thus, the symmetry $\alpha_g(\cdot)$ may be thought of as applying two consecutive operations--the adjoint rep $\Ad$ and the tensor rep $\wt{\Pi}$--to the basic spin-1 rep $\Pi^{(1)}$.
\chapter{Some Representation Theory}\label{ch:app-rep_theory}

Representation theory is an enormous subject with many veins of work across several domains of mathematics. It is the mathematical language of symmetry and so pervades physics as well. As such a large field, many wonderful texts have been written about representation theory. The goal of this chapter is to briefly introduce some key definitions, examples, and theorems so that we can work with representations more practically, both to define symmetry rigorously and to simplify a variety of problems in quantum spin systems.

In Section \ref{sec:notation and background}, we rapidly recall some key constructions, definitions and theorems which may be present in any standard course on representation theory. Much of the structure here is guided by the similarly brief~\cite{nachtergaele2016quantum}, but for details and further exposition, we direct the interested reader to~\cite{simon1996representations, hall2015lie, fulton1991representation}. 

In Section \ref{sec:Spin and SU(2), by example}, we get our hands dirty with $SU(2)$ and its representation theory. Unlike the preceding section, this chapter is highly computational and explicit, serving as a largely self-contained introduction. For readers who may have only gently brushed against Lie representation theory, it may be wise to work through this section before Section \ref{sec:notation and background}.

In Section \ref{sec:Examples Revisited: Putting the ``Spin'' in Quantum Spin Systems}, we use our newfound understanding of $SU(2)$ representations to study our guiding quantum spin system example models.

In Section \ref{sec:highest weight reps}, we recall the notion of highest weight representations of a complex semi-simple Lie algebra $\g$. We have in mind the orthogonal Lie algebras $\g = \so(n)$, whose representation theory will govern the behavior of the states we will study in the final Chapter \ref{ch:SO(n)_Haldane_chains}. Especially important for us are the exterior power representations, which we will construct explicitly and discuss. We follow the conventions of~\cite{fulton1991representation}, with a few references to~\cite{simon1996representations}.

In Section \ref{sec:Clifford algebras}, we recall some standard results on Clifford algebras, following~\cite{simon1996representations}. A handful of results will be proven for later use.

In Section \ref{sec:the spin representations Pi of SO(n)}, we use Clifford algebras to construct a family of representations of $\so(n)$ which will play an important role in describing the ground state representations in the final chapter \ref{ch:SO(n)_Haldane_chains}. We will also in the process form a meaty class of examples of projective representations of $SO(n)$, which is the subject of the following chapter.

\section{Notation and Background} \label{sec:notation and background}
A \emph{Lie Group} $G$ is a group with the compatible structure of a smooth manifold, where by compatible we mean that the product $(g,h)\mapsto gh$ and inversion mappings $g\mapsto g^{-1}$ are smooth. We call $G$ compact if $G$ is compact as a manifold, and we call $G$ connected if any two points can be connected by a continuous curve in $G$. 
A \emph{Lie algebra} $\g$ is a vector space $\g$ over a field $\bbF\in \{\R,\C\}$ equipped with a Lie bracket, a bilinear mapping $\g\times \g \to \g$ which is skew-symmetric $[X,Y]=-[Y,X]$ for all $X,Y\in \g$ and satisfies the Jacobi identity
\begin{equation}
    [[X,Y],Z] + [[Y,Z],X] + [[Z,X],Y] = 0 ,\qquad \text{ for all } X,Y,Z\in \g . 
\end{equation}
Especially important are the matrix Lie groups, which are closed subgroups of the general linear group on $V = \R^n, \C^n$, 
\begin{equation}
    \GL(V) := \{g \text{ an } n\times n \text{ invertible matrix on } V\} . 
\end{equation} 
Every Lie group $G$ has a corresponding Lie algebra $\g$, which is the tangent space $T_{\idty}(G)$ at the identity $\idty\in G$ equipped with the Lie bracket given by the vector field commutator of the vectors $x,y$ pushed forward by left-multiplication. But for the special case of matrix Lie groups $G\subseteq \GL(V)$, an equivalent characterization is that the corresponding Lie algebra $\g$ is the set of matrices whose exponential is in $G$, equipped with the usual commutator:
\begin{equation}
    \g = \{ X \text{ an } n\times n \text{ matrix on } V \st e^{tX}\in G, t\in \R\} .
\end{equation} Here we mean the usual matrix exponential given by its convergent power series.\footnote{This follows the mathematicians convention for a Lie algebra: the set of operators $X\in \g$ such that $e^{X}\in G$. Physicists bake in the imaginary number so that $X\in \g$ if $e^{-iX}\in G$. Generally, when working with complex representations, one can be lazy about this distinction, but we will remain careful for the sake of exposition.} 
In agreement with the tangent space at identity characterization, a basis for $\g$ may be found by computing directional derivatives on $G$ and evaluating at the identity, e.g. $X = \frac{d}{dt} e^{tX} \vert_{t=0}$ . 

It can then be readily shown that the corresponding Lie algebra of $\GL(V)$ is 
\begin{equation}
    \gl(V) := \{X \text{ an } n\times n \text{ matrix on } V\} , 
\end{equation} where $\gl(V) = M_n(\R)$ (or $M_n(\C)$) equipped with matrix commutator $[A,B] = AB-BA$ as its Lie bracket.
This surprisingly rich class of Lie groups will be our primary focus, thanks in no small part to their privileged role in describing on-site symmetries in quantum spin systems. 

Our key examples going forward will be a collection of compact Lie groups. First, the unitary and special unitary groups
\begin{equation}\begin{split}
    U(n) &= \{U\in \GL(\C^n): U^* U = \idty\} ,\\
    SU(n) &= \{U\in U(n) \text{ and } \det(U)=1\},
\end{split}\end{equation} and then the orthogonal and special orthogonal groups 
\begin{equation}\begin{split}
    O(n) &= \{R \in \GL(\R^n): R^T R = \idty\}, \\
    SO(n) &= \{R \in O(n) \text{ and } \det(R)=1\}.
\end{split}\end{equation} As a comment, we will commonly also use the notation $\calU(\calH)$ to express the group of unitaries on a Hilbert space $\calH$. When $n = \dim(\calH)$ is finite, this is exactly $\calU(\calH) = U(n)$. 

We may recover the corresponding matrix Lie algebras in a straightforward manner, starting with $U(n)$. Let $U(t) = e^{tX}\in U(n)$. Then differentiating the equation $U(t)^*U(t) = \idty$ at the identity $t=0$, we have
\begin{equation} \label{eq:unitary and skew hermitian}
    0 = \frac{d}{dt} \paran{(e^{tX})^* e^{tX}} \big\vert_{t=0} = \paran{X^* e^{tX} + (e^{tX})^* X }\big\vert_{t=0} = X^* + X . 
\end{equation} This means that the Lie algebra $\g = \u(n)$ of $G = U(n)$ is given by 
\begin{equation}
    \u(n) = \{ X\in \gl(\C^n) : X^* = - X \} . 
\end{equation} Similarly, to obtain a description of $\su(n)$, we observe that 
\begin{equation}
    1 = \det U(t) = \det e^{tX} = e^{t \Tr X} \qquad \text{ for all } t\in \R , 
\end{equation} and so 
\begin{equation}
    \su(n) = \{ X\in \u(n) : \Tr X = 0 \} . 
\end{equation} By a similar pair of calculations, we see that
\begin{equation}
    \so(n) = \{X\in \gl(\R^n) : X^T = - X \} , 
\end{equation} noting that $\so(n) = \mathfrak{o}(n)$ because $\Tr X = \Tr X^T = -\Tr X$ and so $\Tr X = 0$. 

We will spend a great deal of time with the Lie groups $SU(2)$ and $SO(3)$ (and their corresponding Lie algebras) in the following Section \ref{sec:Spin and SU(2), by example}. 
In the remainder of this section, we will briskly go through some of the theorems and definitions which will appear in this thesis--it may be wise to first work through Section \ref{sec:Spin and SU(2), by example} before returning. Note that the primary class of models we investigate in the final chapter of this thesis are a family of quantum ground states invariant under a certain representation of $SO(n)$ and the $SO(3)$ prototype corresponds exactly to the AKLT chain.

\subsection{Lie group-Lie algebra correspondence}
\begin{definition}
    Let $G$ be a Lie group. A \emph{representation} is a finite dimensional vector space $V$ with a Lie group homomorphism $\Pi:G\to \GL(V)$, i.e. a map 
    \begin{equation}
        \Pi(g_1g_2) = \Pi(g_1)\Pi(g_2) \qquad \text{ for all } g_1, g_2 \in G . 
    \end{equation}
\end{definition} 
Note that it is common to call either the vector space $V$ or the homomorphism $\Pi$ a representation: really, it is the shared data $(\Pi,V)$ that makes a representation. It is also common to refer to a representation $V$ as a $G$-module, which is the same data with a slightly different perspective.

\begin{definition}
    Let $\g$ be a Lie algebra. A \emph{representation} is a finite dimensional vector space $V$ with a Lie algebra homomorphism $\pi:\g\to \gl(V)$, i.e. a map 
    \begin{equation}
        \pi([X,Y]) = [\pi(X),\pi(Y)] \qquad \text{ for all } X,Y\in \g .
    \end{equation}
\end{definition} Again, it is common to call either $V$ or $\pi$ the representation, and it is common to call $V$ a $\g$-module. In essence, Lie representation theory can be thought of as studying Lie groups/algebras by their homomorphisms into matrix Lie groups/algebras. Let's write a few examples, noting that Section \ref{sec:Spin and SU(2), by example} is chock full of much more thoroughly worked examples for $SU(2)$, starting with Example \ref{ex:trivial rep of su2}.

\begin{enumerate}
    \item The trivial representation of a Lie group $G$ is $\Pi:G\to \GL(\C)$ which maps $\Pi_g = \idty$, the 1 $\times$ 1 identity matrix.
    \item Every matrix Lie group $G\subseteq \GL(V)$ has a \textit{defining} representation $\Pi_g = g$, i.e. each matrix represents itself.
    \item The adjoint representation of $G$ is $\Ad: G\to \GL(\g)$ given by 
    \begin{equation}
        \Ad_g(Y) = g Y g^{-1}, \qquad \text{ for all } Y\in \g . 
    \end{equation}
\end{enumerate}

We may build up our class of examples for representations by extending our usual linear algebra constructions.
\begin{definition}
    Let $\Pi:G\to \GL(V)$, $\Pi_1:G\to \GL(V_1)$, and $\Pi_2:G\to\GL(V_2)$ be representations of a Lie group $G$. 

    The \textit{tensor representation} $\Pi_1\otimes \Pi_2:G\to \GL(V_1\otimes  V_2)$ is defined by
    \begin{equation}
        (\Pi_1\otimes \Pi_2)(g) = \Pi_1(g)\otimes \Pi_2(g), \qquad \text{ for all } g\in G. 
    \end{equation}

    The \textit{direct sum representation} $\Pi_1\oplus \Pi_2: G\to \GL(V_1\oplus V_2)$ is defined by 
    \begin{equation}
        (\Pi_1\oplus \Pi_2)(g) = \Pi_1(g)\oplus \Pi_2(g), \qquad \text{ for all } g\in G. 
    \end{equation}

    The \textit{dual representation} 
    $\Pi^*: G\to \GL(V^*)$ is defined for all $g\in G$ by
    \begin{equation}
        \Pi^*(g) = (\Pi(g^{-1}))^T, 
    \end{equation} where the $^T$ is transpose with respect to the dual pairing $(\cdot,\cdot)$ of $V^*$ and $V$. 
\end{definition}

We should have a notion of equivalence for representations.
\begin{definition}\label{def:equivalent reps}
        Two finite dimensional representations $(\Pi_1,V_1),(\Pi_2,V_2)$ of a Lie group or Lie algebra $G$ are \textit{equivalent} if there exists an invertible linear transformation $T: V_1\to V_2$ such that
        \[
            \Pi_2(g)T = T\,  \Pi_1(g) \qquad \text{for all }g\in G,
        \] We call $T$ an \textit{intertwiner} or \textit{isomorphism}. 
\end{definition} 
It is common to think of this definition as ``two reps are equivalent if they differ only by a change of basis''. Note that it is common to work with intertwiners $T:V\to W$ which are only invertible when the codomain is restricted to its image $T(V)$. We will run into such intertwiners when we consider a certain isometric embedding, i.e. a map such that $T^*T=\idty_V$.

Much of Lie theory's power is driven by Lie group-Lie algebra correspondence: the group structure of $G$ and the linear structure of $\g$ make Lie theory highly amenable to a wide family of applications. Since $\g$ is the tangent space $T_{\idty} G$, we have seen that we may pass from $G$ to $\g$ by taking derivatives of one-parameter subgroups $\{U(t) = e^{tX} : t\in \R, X\in \g \} \subseteq G$. This also holds for representations. 

\begin{theorem} (Lie group reps pass to Lie algebra reps) \label{thm:lie gp passes to lie alg}

    Let $G$ be a matrix Lie group. If we have a representation $\Pi:G\to \GL(V)$, then there is a representation of the corresponding Lie algebra $\pi:\g\to \gl(V)$ defined by
    \begin{equation}
        \pi(X) = \frac{d}{dt}\paran{\Pi(e^{tX})} \Big\vert_{t=0} \qquad \text{ for all } X\in\g . 
    \end{equation}
\end{theorem}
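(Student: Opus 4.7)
The goal is to verify three things: that $\pi(X)$ makes sense as an element of $\gl(V)$ for each $X \in \g$, that $\pi$ is linear, and that $\pi$ preserves brackets. Throughout I will take the standard hypothesis (implicit in the statement) that $\Pi$ is continuous as a map of matrix Lie groups, so that the relevant derivatives exist.

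My first step is to show the derivative defining $\pi(X)$ exists. For fixed $X \in \g$, the map $t \mapsto \Pi(e^{tX})$ is a continuous one-parameter subgroup of $\GL(V)$, since $\Pi(e^{(s+t)X}) = \Pi(e^{sX})\Pi(e^{tX})$ follows from $\Pi$ being a group homomorphism together with the fact that $\{e^{tX}\}$ is itself a one-parameter subgroup of $G$. By the classical theorem that every continuous one-parameter subgroup of $\GL(V)$ is smooth and of the form $t \mapsto e^{tA}$ for a unique matrix $A \in \gl(V)$, we obtain a well-defined element $\pi(X) := \frac{d}{dt}\Pi(e^{tX})\big|_{t=0} \in \gl(V)$ satisfying $\Pi(e^{tX}) = e^{t\pi(X)}$ for all $t \in \R$.

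Next I would verify $\R$-linearity. Scalar homogeneity $\pi(cX) = c\,\pi(X)$ is immediate from the chain rule applied to $t \mapsto \Pi(e^{tcX})$ at $t=0$. For additivity, I would invoke the Trotter product formula
\begin{equation}
    e^{t(X+Y)} = \lim_{n\to\infty}\bigl(e^{tX/n}e^{tY/n}\bigr)^n \qquad \text{in } G,
\end{equation}
apply the continuous homomorphism $\Pi$ to both sides (pulling $\Pi$ through the limit by continuity and through the power by the homomorphism property), and then differentiate at $t=0$ using $\Pi(e^{tX/n})\Pi(e^{tY/n}) = e^{(t/n)\pi(X)}e^{(t/n)\pi(Y)}$. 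The limit on the $V$ side reproduces $e^{t(\pi(X)+\pi(Y))}$, and differentiation yields $\pi(X+Y) = \pi(X) + \pi(Y)$.

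For the bracket identity, my plan is to start from the conjugation relation $e^{tX}e^{sY}e^{-tX} = \exp\bigl(s\,\Ad_{e^{tX}}(Y)\bigr)$ in $G$. Applying the homomorphism $\Pi$ and using Step 1 on each side gives
\begin{equation}
    \exp\bigl(s\,\pi(\Ad_{e^{tX}}Y)\bigr) = \Pi(e^{tX})\,e^{s\pi(Y)}\,\Pi(e^{tX})^{-1} = \exp\bigl(s\,\Ad_{\Pi(e^{tX})}\pi(Y)\bigr).
\end{equation}
Differentiating in $s$ at $s=0$ yields the intertwining identity $\pi(\Ad_{e^{tX}}Y) = \Ad_{\Pi(e^{tX})}\pi(Y)$, and then differentiating in $t$ at $t=0$ (using $\tfrac{d}{dt}\big|_0 \Ad_{e^{tX}}Y = [X,Y]$ on the left and $\tfrac{d}{dt}\big|_0 \Ad_{e^{tX'}}$ applied to $\pi(Y)$ on the right, with $X'=\pi(X)$) gives exactly $\pi([X,Y]) = [\pi(X),\pi(Y)]$. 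The main technical obstacle is the very first step: without smoothness of $\Pi$ a priori, one needs the one-parameter subgroup theorem in $\GL(V)$ to produce $\pi(X)$ at all; once that is in hand, both linearity and bracket preservation are formal consequences of the homomorphism property combined with Trotter and Ad-identities from the group itself.
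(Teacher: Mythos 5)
Your proof is correct. The paper states this theorem without proof, citing it as a standard result from references like~\cite{hall2015lie}; your argument follows exactly the textbook route found there (Hall, Theorem~3.28): produce $\pi(X)$ via the one-parameter subgroup theorem, get additivity from the Lie product (Trotter) formula, and get bracket preservation by pushing the conjugation identity $e^{tX}e^{sY}e^{-tX} = e^{s\,\Ad_{e^{tX}}Y}$ through $\Pi$ and differentiating twice. One minor point worth spelling out: in the final step you interchange $\pi$ with $\frac{d}{dt}\big|_0$ on the left-hand side, which requires continuity of $\pi$ on $\g$; this follows because $\pi$ is linear and $\g$ is finite-dimensional, but it is worth stating explicitly since linearity is only established in your second step.
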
 
We may start with a representation of a Lie group $G$ and use this to compute the representation it induces on its Lie algebra $\g$. Again, more explicit examples abound in \ref{sec:Spin and SU(2), by example}. 
\begin{enumerate}
    \item The trivial representation of $\g$ is $\pi:\g\to \gl(\C)$ by $\pi(X) =0$.
    \item The defining representation of a matrix Lie algebra $\g$ is $\pi:\g\to \gl(V)$ by $\pi(X) =X$.
    \item The adjoint representation of $\g$ is $\ad: \g\to \gl(\g)$ given by 
    \begin{equation}
        \ad_X(Y) = [X,Y], \qquad \text{ for all } Y\in \g . 
    \end{equation}
    \item The tensor representation of $\pi_1,\pi_2$ on $V_1\otimes V_2$ is given by
    \begin{equation}
        \pi_1(X)\otimes \idty_{V_2} + \idty_{V_1} \otimes \pi_2(X), \qquad X\in \g . 
    \end{equation}
    \item The direct sum representation of $\pi_1,\pi_2$ on $V_1\oplus V_2$ is 
    \begin{equation}
        (\pi_1 \oplus \pi_2)(X) = \pi_1(X) \oplus \pi_2(X), \qquad X\in \g . 
    \end{equation}
    \item The dual representation $\pi^*:\g\to \gl(V^*)$ is 
    \begin{equation}
        \pi^*(X) = - \pi(X)^T, \qquad X\in\g. 
    \end{equation}
\end{enumerate}

But what about the converse of this theorem? Just as in the study of ordinary differential equations, we may exponentiate to find solutions. Locally, i.e. in open neighborhoods around $0\in \g$ and $\idty\in G$, the exponential map is a diffeomorphism. Notice that we require $G$ to be connected--a quick counterexample is given by $O(n)$, which since the determinant $\det: O(n)\to \{-1,1\}$ is a continuous map, consists of two disconnected pieces. Since the exponential map is continuous and $\so(n)$ is connected, it may only map into the identity component $SO(n)$ of $O(n)$.
\begin{proposition}
    Let $G$ be a connected Lie group. The exponential map $\exp(\cdot):\g\to G$ is a local diffeomorphism, i.e. there exists an open neighborhood $N_0\subseteq \g$ containing $0\in \g$ and an open neighborhood $N_\idty\subseteq G$ containing $\idty\in G$ such that $\exp(\cdot): N_0\to N_\idty$ is a diffeomorphism.
\end{proposition}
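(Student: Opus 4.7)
The plan is to apply the inverse function theorem to $\exp(\cdot) : \g \to G$ at the origin $0 \in \g$. The key observations are that $\exp$ is smooth and that its differential at $0$ is the identity map, both of which follow from the defining power series (in the matrix Lie group case) or from the general construction of $\exp$ via one-parameter subgroups as integral curves of left-invariant vector fields.

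First I would verify smoothness of $\exp$. For matrix Lie groups—which is the setting the excerpt develops—this is immediate because the matrix exponential $e^X = \sum_{k \geq 0} X^k / k!$ converges absolutely on all of $\gl(V)$, is entire as a function of the matrix entries, and its image lands in $G$ by hypothesis. (In the general Lie group setting, smoothness instead follows from smooth dependence of flows of smooth vector fields on initial conditions.)

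Second, I would compute $d\exp_0 : T_0 \g \to T_\idty G$. Using the canonical identifications $T_0 \g \cong \g$ and $T_\idty G \cong \g$, for any $X \in \g$ the curve $t \mapsto \exp(tX)$ passes through $\idty$ at $t=0$ with velocity
\begin{equation}
    \frac{d}{dt}\Big|_{t=0} \exp(tX) = X,
\end{equation}
so $d\exp_0(X) = X$, i.e.\ $d\exp_0 = \mathrm{id}_\g$. In particular $d\exp_0$ is a linear isomorphism.

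Finally, the inverse function theorem applied to the smooth map $\exp$ at $0$, whose differential is invertible, produces open neighborhoods $N_0 \subseteq \g$ of $0$ and $N_\idty \subseteq G$ of $\idty$ such that $\exp\vert_{N_0} : N_0 \to N_\idty$ is a diffeomorphism. There is no real obstacle to overcome here; it is worth remarking only that connectedness of $G$ plays no role in this local statement (it would be relevant, for instance, for the global statement that $\exp(\g)$ generates the identity component of $G$). The hypothesis is presumably included because the excerpt contrasts the disconnected case $O(n)$ with its identity component $SO(n)$, emphasizing that $\exp$ lands in the latter.
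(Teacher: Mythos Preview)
Your proof is correct and is the standard argument via the inverse function theorem. Note, however, that the paper does not actually supply a proof of this proposition: it is stated as background in the review section on Lie group--Lie algebra correspondence, with details deferred to standard references. Your observation that connectedness of $G$ is not needed for this local statement is also accurate; the hypothesis is carried along in the paper only because the surrounding discussion is building toward the passage from Lie algebra representations back to Lie group representations, where connectedness (and eventually simple connectedness) becomes essential.
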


\begin{theorem} (Lie algebra reps locally lift to Lie group reps) \label{thm:lie alg lifts to lie gp}

    Let $G$ be a connected Lie group with Lie algebra $\g$. Let $\pi:\g\to \gl(V)$ a representation of $\g$. Then on the open subsets for which $\exp: N_0 \to N_\idty$ is a diffeomorphism, we may define a representation $\Pi:N_0\to \GL(V)$ by
    \begin{equation}
        \Pi(e^X) = e^{\pi(X)} \qquad \text{ for all } X\in N_0 . 
    \end{equation}
\end{theorem}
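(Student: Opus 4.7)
The plan is to leverage the Baker-Campbell-Hausdorff (BCH) formula, which expresses products of exponentials in a matrix Lie group in terms of iterated Lie brackets of the exponents, combined with the fact that $\pi$ is a Lie algebra homomorphism (so it intertwines those brackets). Concretely, I would start by defining the candidate map $\Pi$ on $N_\idty$ by $\Pi(e^X) = e^{\pi(X)}$ for $X \in N_0$; this is unambiguous because $\exp : N_0 \to N_\idty$ is a diffeomorphism by the preceding proposition, so each group element of the form $e^X \in N_\idty$ has a unique preimage $X \in N_0 \subseteq \g$. Smoothness of $\Pi$ on $N_\idty$ follows by composing the smooth local inverse $\log : N_\idty \to N_0$ with the linear map $\pi$ and the (entire) matrix exponential.

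The work is then to verify the local homomorphism property $\Pi(gh) = \Pi(g)\Pi(h)$ wherever $g, h, gh$ all lie in $N_\idty$. Shrink $N_0$ if necessary to an open neighborhood $N_0' \subseteq N_0$ of $0$ such that whenever $X, Y \in N_0'$ one has $e^X e^Y = e^{Z}$ for a unique $Z = Z(X,Y) \in N_0$, and such that the BCH series
\begin{equation}
Z(X,Y) = X + Y + \tfrac{1}{2}[X,Y] + \tfrac{1}{12}[X,[X,Y]] - \tfrac{1}{12}[Y,[X,Y]] + \cdots
\end{equation}
converges absolutely; this is possible because BCH converges in some operator-norm ball about $0$ in any matrix Lie algebra. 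Apply $\pi$ termwise to this series. Since $\pi$ is linear and satisfies $\pi([A,B]) = [\pi(A),\pi(B)]$, every iterated bracket on the right becomes the same iterated bracket in the $\pi(X), \pi(Y)$, and continuity of $\pi$ (it is linear on a finite-dimensional space) allows us to pass the sum through. Hence $\pi(Z(X,Y))$ equals the BCH expansion of $\pi(X), \pi(Y)$ in $\gl(V)$, which is precisely the $W$ for which $e^{\pi(X)} e^{\pi(Y)} = e^{W}$ in $\GL(V)$, provided $\pi(X), \pi(Y)$ are small enough for BCH in $\gl(V)$ to converge as well.

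Putting these together yields
\begin{equation}
\Pi(e^X e^Y) = \Pi(e^{Z(X,Y)}) = e^{\pi(Z(X,Y))} = e^{\pi(X)} e^{\pi(Y)} = \Pi(e^X)\Pi(e^Y),
\end{equation}
so $\Pi$ is a local Lie group homomorphism, i.e.\ a representation on the required open neighborhood. The main technical obstacle, and the only genuinely delicate point, is the bookkeeping of neighborhoods: one must simultaneously ensure that (i) $X, Y, Z(X,Y)$ all lie in the diffeomorphism neighborhood $N_0$ for $\Pi$ to be defined on both sides, and (ii) the images $\pi(X), \pi(Y)$ lie in a ball on which the BCH series in $\gl(V)$ converges. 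Both are arranged by shrinking $N_0$ uniformly using continuity of $\pi$ and of the composition in $G$. The remaining verifications (smoothness of $\Pi$, the identity $\Pi(\idty) = \idty$, and the local inversion $\Pi(g^{-1}) = \Pi(g)^{-1}$) are then immediate from the formula $\Pi(e^X) = e^{\pi(X)}$ and standard properties of the matrix exponential.
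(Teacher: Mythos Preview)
The paper does not actually prove this theorem; it is stated as a standard background result in the review chapter on representation theory, with the reader directed to references such as \cite{hall2015lie} for details. Your argument via the Baker--Campbell--Hausdorff formula is correct and is precisely the standard proof one finds in those references: the key observation that a Lie algebra homomorphism $\pi$ sends the BCH series $Z(X,Y)$ in $\g$ to the BCH series $Z(\pi(X),\pi(Y))$ in $\gl(V)$ is exactly what makes the local homomorphism property go through, and your handling of the neighborhood bookkeeping is appropriate. One small remark: the theorem as written in the paper has a typo in the domain (it writes $\Pi : N_0 \to \GL(V)$ where it should be $\Pi : N_\idty \to \GL(V)$), and you correctly interpreted and fixed this in your write-up.
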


Care must be taken if we wish to extend beyond this open neighborhood. Global features of the Lie group matter: it may be that the representation $(\Pi,V)$ induced by exponentiation is not well-defined on the entire group $G$. The topological obstruction is exactly the fundamental group $\pi_1(G)$, and we will encounter a key instance of this when we encounter the spin representations of $SO(n)$ in Section \ref{sec:the spin representations Pi of SO(n)} since $\pi_1(SO(n)) = \Z_2$. When $G$ is simply connected, i.e. $\pi_1(G) = 0$, there is no such issue. 

\begin{theorem} \label{thm:alg reps lift to group reps when G simply connected} (Lie algebra reps lift to global Lie group reps when $G$ simply connected) 

    Let $G$ be a simply connected Lie group with Lie algebra $\g$. Let $\pi:\g\to \gl(V)$ a representation of $\g$. Then there is a unique representation $\Pi:G\to \GL(V)$ such that
    \begin{equation}
        \Pi(e^X) = e^{\pi(X)} \text{ for all } X\in \g . 
    \end{equation}
\end{theorem}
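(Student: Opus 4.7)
The plan is to use the graph-of-representation trick inside the product Lie group $G\times \GL(V)$. Consider the subspace $\mathfrak{h} := \{(X,\pi(X)) : X\in\g\} \subseteq \g\oplus\gl(V)$. Because $\pi$ is a Lie algebra homomorphism, $\mathfrak{h}$ is closed under the bracket and is therefore a Lie subalgebra of $\g\oplus\gl(V)$, isomorphic to $\g$ via the first-coordinate projection. Invoking the Lie subgroup theorem (integration of subalgebras to connected immersed subgroups), let $H\subseteq G\times\GL(V)$ be the unique connected Lie subgroup whose Lie algebra is $\mathfrak{h}$.

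Next, I would study the restriction of the two coordinate projections to $H$. The map $p_1:H\to G$ is a Lie group homomorphism whose differential at the identity is the isomorphism $\mathfrak{h}\xrightarrow{\sim}\g$, so $p_1$ is a local diffeomorphism near $\idty$. Its image is therefore an open subgroup of $G$; by connectedness of $G$, the image is all of $G$, and the standard argument upgrades $p_1$ from a surjective local diffeomorphism between Lie groups to a smooth covering map. Here the simple connectedness hypothesis is the crucial input: a connected covering of a simply connected base is a diffeomorphism, so $p_1:H\to G$ is in fact an isomorphism of Lie groups. Define
\begin{equation}
    \Pi := p_2\circ p_1^{-1}:G\to \GL(V).
\end{equation}

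It remains to verify the exponential identity. Because exponentials in a product Lie group are computed componentwise and because $H$ is the integral subgroup with Lie algebra $\mathfrak{h}$, we have $\exp_H(X,\pi(X)) = (e^X, e^{\pi(X)})$ for all $X\in\g$. Applying $p_1^{-1}$ to $e^X$ and then $p_2$ yields $\Pi(e^X)=e^{\pi(X)}$; the formal slickness of this can be cross-checked by noting that $t\mapsto \Pi(e^{tX})$ is a smooth one-parameter subgroup of $\GL(V)$ with derivative $\pi(X)$ at $t=0$, so by uniqueness of one-parameter subgroups it equals $t\mapsto e^{t\pi(X)}$. For uniqueness, any two representations $\Pi_1,\Pi_2$ obeying the formula agree on the image $\exp(\g)\subseteq G$, which contains the open neighborhood $N_\idty$ from the previous proposition; two Lie group homomorphisms from a connected Lie group that coincide on a neighborhood of the identity coincide everywhere, since such a neighborhood generates $G$.

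The main obstacle is the integration step producing the subgroup $H$ from the subalgebra $\mathfrak{h}$: this is the nontrivial theorem of Lie (equivalently an application of Frobenius' theorem to the left-invariant distribution on $G\times\GL(V)$ determined by $\mathfrak{h}$). An alternative route, which trades this input for a different one, would be a monodromy/path-lifting argument: define $\Pi(g)$ by choosing a path $\gamma$ from $\idty$ to $g$, partitioning $[0,1]$ so that consecutive points lie in overlapping translates of $N_\idty$, and composing the local representations supplied by Theorem~\ref{thm:lie alg lifts to lie gp} along the partition. The well-definedness then rests on showing that the result is unchanged under homotopies of $\gamma$ rel endpoints, which is exactly where $\pi_1(G)=0$ enters. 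Either way, the simply connected hypothesis is precisely what removes the topological obstruction that in general produces only a representation of the universal cover of $G$, as will manifest concretely with $\pi_1(SO(n))=\Z_2$ and the spin representations in Section~\ref{sec:the spin representations Pi of SO(n)}.
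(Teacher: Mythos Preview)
The paper does not supply a proof of this theorem; it is stated as a standard background result in Section~\ref{sec:notation and background}, with the reader referred to \cite{hall2015lie, fulton1991representation, simon1996representations} for details. Your proposal is therefore not being compared against an in-paper argument but against the literature, and on that score it is correct: the graph-of-homomorphism construction inside $G\times\GL(V)$, followed by the covering argument using simple connectedness, is precisely the proof given in Hall~\cite{hall2015lie} (Theorem 5.6 there). The monodromy alternative you sketch is the other standard route and is closer in spirit to the way the paper's surrounding discussion is organized, since Theorem~\ref{thm:lie alg lifts to lie gp} already supplies the local lift you would patch together. Either approach is acceptable, and you have correctly identified where the simple-connectedness hypothesis enters in each.
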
 This means that every representation of a simply connected $G$ can be obtained from a representation of its Lie algebra. But what can we say of other Lie groups? Recall the universal cover construction from topology.

\begin{theorem} (Universal Covers~\cite{simon1996representations})
 \label{thm:universal cover of a Lie group}  Let $G$ be an arbitrary connected Lie group. Then there exists a simply connected Lie group $\wt{G}$, called the universal covering group of $G$, and a discrete normal subgroup $H\cong \pi_1(G)$ such that $G\cong \wt{G}/H$ and the Lie algebras of $G$ and $\wt{G}$ are isomorphic. Up to canonical isomorphism $\wt{G}$ is unique. 
\end{theorem}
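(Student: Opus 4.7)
The plan is to build $\wt{G}$ as a topological space using the standard algebraic topology construction of the universal cover, then upgrade it to a Lie group by transporting the smooth and group structure across the covering map. Concretely, since $G$ is a connected manifold it is locally path-connected and locally simply-connected, so the universal cover $p:\wt{G}\to G$ exists as a topological covering space with $\wt{G}$ simply connected and the fiber $p^{-1}(e)$ in canonical bijection with $\pi_1(G,e)$. Because $p$ is a local homeomorphism, pulling back the smooth atlas of $G$ along $p$ endows $\wt{G}$ with a unique smooth structure making $p$ a local diffeomorphism.

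Next I would install the Lie group structure on $\wt{G}$ using the lifting criterion for covering spaces. Pick $\tilde e\in p^{-1}(e)$ and consider the smooth map
\begin{equation}
\wt{G}\times \wt{G} \xrightarrow{p\times p} G\times G \xrightarrow{m} G,
\end{equation}
where $m$ is multiplication in $G$. Since $\wt{G}\times \wt{G}$ is simply connected (product of simply connected spaces), this map lifts uniquely to a smooth map $\tilde m:\wt{G}\times\wt{G}\to \wt{G}$ sending $(\tilde e,\tilde e)\mapsto \tilde e$. The same reasoning lifts inversion. The group axioms (associativity, identity, inverses) then follow from the uniqueness clause in the lifting criterion: for instance, the two maps $(\tilde x,\tilde y,\tilde z)\mapsto \tilde m(\tilde m(\tilde x,\tilde y),\tilde z)$ and $\tilde m(\tilde x,\tilde m(\tilde y,\tilde z))$ both lift the associative map $m\circ(m\times\idty)=m\circ(\idty\times m)$ on $G^{\times 3}$ and agree at $(\tilde e,\tilde e,\tilde e)$, hence coincide. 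By construction $p$ is then a smooth group homomorphism.

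It remains to identify $H:=\ker p$ and the Lie algebras. The kernel $H$ is a closed normal subgroup of $\wt{G}$; discreteness follows because $p$ is a local diffeomorphism near $\tilde e$, so some neighborhood of $\tilde e$ meets $H$ only in $\tilde e$, and then by left translation every point of $H$ is isolated. General covering space theory identifies $p^{-1}(e)$ with $\pi_1(G,e)$ as a set, and the deck transformation group, which acts freely and transitively on the fiber, agrees with left translation by $H$; this gives $H\cong \pi_1(G)$ as groups. Since $p$ is a local diffeomorphism the differential $dp_{\tilde e}:T_{\tilde e}\wt{G}\to T_e G$ is a linear isomorphism, and since $p$ is a group homomorphism it intertwines the adjoint actions, hence is a Lie algebra isomorphism.

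For uniqueness, suppose $\wt{G}_1,\wt{G}_2$ are two simply connected Lie groups with Lie algebras isomorphic to $\g$ via $\varphi:\g_1\xrightarrow{\sim}\g_2$. By Theorem~\ref{thm:alg reps lift to group reps when G simply connected} applied to the composition $\wt{G}_1\to \GL(V)$ arising from an embedding, or more directly to the Lie algebra homomorphism $\varphi$ itself viewed as $\g_1\to \g_2\hookrightarrow \gl(\wt{G}_2)$-valued, one obtains a unique Lie group homomorphism $\Phi:\wt{G}_1\to \wt{G}_2$ integrating $\varphi$; symmetrically one gets $\Psi:\wt{G}_2\to \wt{G}_1$ integrating $\varphi^{-1}$. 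The composites $\Psi\circ\Phi$ and $\Phi\circ \Psi$ integrate the identity Lie algebra map on $\g_i$ and therefore, by the uniqueness half of Theorem~\ref{thm:alg reps lift to group reps when G simply connected}, equal $\idty_{\wt{G}_i}$, so $\Phi$ is a canonical isomorphism. The main technical obstacle is the bookkeeping in step two, namely checking that the lifted multiplication actually satisfies the group axioms and that the resulting $p$ realizes $H$ as $\pi_1(G)$; both boil down to the uniqueness clause of the lifting criterion, but deserve care because one is simultaneously juggling topological, smooth, and algebraic structures.
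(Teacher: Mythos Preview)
The paper does not supply its own proof of this theorem; it is quoted as a background result from Simon~\cite{simon1996representations}. So there is nothing in the paper to compare against, and your write-up stands on its own. The construction you give---build the topological universal cover, pull back the smooth structure, lift multiplication and inversion via the covering lifting criterion, and verify the group axioms by uniqueness of lifts---is the standard argument and is correct as outlined.

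One point deserves tightening. In your uniqueness step you invoke Theorem~\ref{thm:alg reps lift to group reps when G simply connected}, but as stated in the paper that theorem only integrates Lie algebra homomorphisms $\pi:\g\to\gl(V)$ into a \emph{matrix} Lie algebra, not arbitrary Lie algebra homomorphisms $\g_1\to\g_2$. Your attempt to view $\varphi$ as ``$\g_1\to\g_2\hookrightarrow\gl(\wt{G}_2)$-valued'' does not parse (the target $\wt{G}_2$ is a group, not a vector space), and the general integration statement you want is precisely Corollary~\ref{prop:universal covers}, which in the paper is a \emph{consequence} of the theorem you are proving. The clean fix is to argue uniqueness purely at the level of covering spaces: given two simply connected covers $p_i:\wt{G}_i\to G$, lift $p_1$ through $p_2$ using the lifting criterion (since $\wt{G}_1$ is simply connected) to get $\Phi:\wt{G}_1\to\wt{G}_2$ with $p_2\circ\Phi=p_1$ and $\Phi(\tilde e_1)=\tilde e_2$; this is a diffeomorphism by symmetry, and is a group homomorphism because $\Phi\circ\tilde m_1$ and $\tilde m_2\circ(\Phi\times\Phi)$ both lift $m\circ(p_1\times p_1)$ and agree at $(\tilde e_1,\tilde e_1)$. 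This avoids any circularity.
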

\begin{remark}\label{rem:discrete normal subgroups are in center}
    A discrete normal subgroup $H$ of a connected Lie group is in necessarily in the center $H\subseteq Z(G)$. To prove this, take a neighborhood of the identity $\calN_\idty\subseteq G$ where the exponential map is bijective. Let $g=e^{tX}\in \calN_{\idty}$ and $h\in H$ and consider $e^{tX} h e^{-tX}$. Since $H$ is normal, $e^{tX} h e^{-tX}\in H$. At $t=0$, this is just $h$, and since $H$ is discrete and this path is continuous, it must be that $e^{tX} h e^{-tX} = h$, or $gh = hg$. Then the result follows by noting that $\calN_\idty$ generates the connected component of $G$, which is all of $G$.
\end{remark}
It should be mentioned that universal cover of a matrix Lie group need not be a matrix Lie group, as is the case for $\sl(2,\R)$ (see ~\cite{hall2015lie}), and the universal cover of a compact Lie group need not be compact, as is the case for $\R$, the universal cover of $U(1)$. But this is not the case for our most important examples for this thesis: the special orthogonal groups $SO(n)$ are double covered by another (compact) matrix Lie group called $Spin(n)$, i.e. $Spin(n)/\{\pm \idty\} \cong SO(n)$. The key example, which we will spend a great deal of time working through in the next section, is $Spin(3) = SU(2)$. Moreover, $\su(2)\cong \so(3)$ as Lie algebras. This means that any representation of $\so(3)$ lifts to a unique representation of $SU(2)$. But what can we say about $SO(3)$? 

\begin{corollary} (Universal covers control reps~\cite{hall2015lie}) \label{prop:universal covers}
    Let $G$ be a connected Lie group with universal cover $\wt{G}$ and shared Lie algebra $\g$. If $H$ is a Lie group with Lie algebra $\h$ and $\pi:\g\to \h$ a Lie algebra homomorphism, there exists a unique homomorphism $\wt{\Pi}:\wt{G}\to H$ such that $\wt{\Pi}(e^X) = e^{\pi(X)}$ for all $X\in \g$.
\end{corollary}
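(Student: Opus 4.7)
The plan is to adapt the proof of Theorem~\ref{thm:alg reps lift to group reps when G simply connected} (which handles $H = \GL(V)$) to the case where $H$ is an arbitrary Lie group. The key maneuver is the standard graph construction: we realize $\wt{\Pi}$ as a composition of projections from a carefully chosen Lie subgroup of $\wt{G}\times H$. Throughout, simple connectedness of $\wt{G}$ is what promotes a local lift (guaranteed by Theorem~\ref{thm:lie alg lifts to lie gp}) to a global one.

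First I would form the product Lie group $\wt{G}\times H$ with Lie algebra $\g\oplus\h$ (with componentwise bracket), and observe that the graph
\begin{equation}
    \k := \{(X,\pi(X)) : X\in \g\} \subseteq \g\oplus \h
\end{equation}
is a Lie subalgebra, thanks precisely to the fact that $\pi$ is a Lie algebra homomorphism:
\begin{equation}
    [(X_1,\pi(X_1)),(X_2,\pi(X_2))] = ([X_1,X_2],[\pi(X_1),\pi(X_2)]) = ([X_1,X_2],\pi([X_1,X_2])).
\end{equation}
Invoking the integral subgroup theorem, there is a unique connected immersed Lie subgroup $K\subseteq \wt{G}\times H$ whose Lie algebra is $\k$.

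Second, I would study the projection $p_1:K\to \wt{G}$. Its differential at the identity is the first-coordinate projection $\k\to \g$, which is a Lie algebra isomorphism. Hence $p_1$ is a local diffeomorphism, its image is an open subgroup of $\wt{G}$, and since $\wt{G}$ is connected this image must be all of $\wt{G}$. A short additional argument (using that a surjective local diffeomorphism of Lie groups whose domain is a connected Lie subgroup is a covering) shows $p_1$ is a covering map. Because $\wt{G}$ is simply connected, this covering is a diffeomorphism. I then define
\begin{equation}
    \wt{\Pi} := p_2 \circ p_1^{-1} : \wt{G} \to H,
\end{equation}
where $p_2:K\to H$ is the second projection. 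That $\wt{\Pi}$ is a smooth group homomorphism is immediate from the construction, and one verifies $\wt{\Pi}(e^X) = e^{\pi(X)}$ by checking that the one-parameter subgroup $t\mapsto (e^{tX},e^{t\pi(X)})$ lies in $K$ (its initial velocity is $(X,\pi(X))\in \k$), so $p_1^{-1}(e^{tX}) = (e^{tX},e^{t\pi(X)})$ and applying $p_2$ gives $e^{t\pi(X)}$.

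For uniqueness, if $\wt{\Pi}_1,\wt{\Pi}_2$ both satisfy the identity $\wt{\Pi}_i(e^X) = e^{\pi(X)}$, they agree on the open neighborhood $\exp(N_0)\subseteq \wt{G}$ supplied by Theorem~\ref{thm:lie alg lifts to lie gp}. Since $\wt{G}$ is connected, any such neighborhood generates the whole group, and homomorphisms agreeing on a generating set agree everywhere, so $\wt{\Pi}_1 = \wt{\Pi}_2$. The main obstacle I expect is the step promoting $p_1$ from a local diffeomorphism to a bona fide covering map: the immersed (not necessarily embedded) nature of $K$ requires some care, but this is a standard consequence of the integral subgroup theorem combined with the fact that $p_1$ restricted to $K$ has closed, discrete fibers $p_1^{-1}(\idty) = \{(\idty,h)\in K\}$, which is a discrete subgroup of $H$ since $p_1$ is a local diffeomorphism.
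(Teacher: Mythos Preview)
Your proof is correct and is precisely the standard graph-subgroup argument (as in the cited reference~\cite{hall2015lie}). The paper itself does not supply a proof for this corollary; it is stated with a citation to Hall and then used. So there is no in-paper argument to compare against, but your construction---building the graph subalgebra $\k$, integrating to $K$, showing $p_1:K\to\wt{G}$ is a covering and hence a diffeomorphism by simple connectedness---is exactly the textbook route, and your treatment of uniqueness via a generating neighborhood is the right closing step.
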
 We had earlier in Theorem \ref{thm:lie gp passes to lie alg} that every Lie group representation $\Pi:G\to \GL(V)$ (which is just a special case of a Lie group homomorphism) induces a Lie algebra representation $\pi:\g\to \gl(V)$. This corollary then tells us that there is a Lie group representation $\wt{\Pi}:\wt{G}\to \GL(V)$, which by uniqueness must pass through the canonical quotient projection $p: \wt{G}\to \wt{G}/\pi_1(G)$, i.e. $\Pi = p\circ \wt{\Pi}$. However, it may be that there are representations of a Lie algebra $\g$ which do not lift to the original Lie group $G$ when $G$ is not simply connected.

Indeed, this is the case for $SO(3)$. We later describe in Example \ref{ex:the spin s irrep} a class of $\su(2)$ representations labeled by a half integer $s= 0,1/2,1,3/2,\dots$ (the spin-$s$ irrep). Since $SU(2)$ is simply connected, each rep of $\su(2)$ lifts to a representation of $SU(2)$. However, only the representations corresponding to integer $s$ pass through to representations of $SO(3)$. The half-integer $s$ representations $\wt{\Pi}:SU(2)\to \GL(\C^{2s+1})$ fail to pass through the projection $S:SU(2)\to SU(2)/\{\pm \idty\} \cong SO(3)$ since $\Pi(\idty)\neq \Pi(-\idty)$. We will compute these things in the following chapter, but for now, we can summarize this observation in a remark.
\begin{remark}
    Every Lie group representation $\Pi:G\to \GL(V)$ lifts to a representation of its universal cover $\wt{G}:G\to \GL(V)$. But if $\pi_1(G)\neq 0$, there may be representations of its Lie algebra $\pi:\g\to \gl(V)$ which lift to a representation of the universal cover $\wt{\Pi}:\wt{G}\to \GL(V)$ but do not pass to a representation of $G$. 
\end{remark}

This idea will resurface when we discuss \textit{projective} representations.

\subsection{Other basic representation theoretic notions}

\begin{definition}
    Let $(\Pi,V)$ be a representation of a Lie group $G$ (or a Lie algebra $\g$). We call a vector subspace $W\subseteq V$ an \textit{invariant subspace} if for all $g\in G$ (or analogously all $X\in \g$)
    \begin{equation}
        \Pi(g) w \subseteq W \qquad \text{ for all } w\in W.
    \end{equation} 
\end{definition} To mention some equivalent terminology, we may also call $W$ a subrepresentation or a $G$-submodule.

The following proposition is exceptionally important and ensures that when working on representations, we may still pass between a Lie group and its Lie algebra. 
\begin{proposition} ~\cite{nachtergaele2016quantum}
    Let $G$ be a Lie group with Lie algebra $\g$. 
    \begin{enumerate}
        \item If $(\Pi,V)$ a representation of $G$ and $W$ is a $\Pi$-invariant subspace, then $W$ is also a $\pi$-invariant subspace where $\pi$ is the induced representation of $\g$.
        \item If $(\pi,V)$ a representation of $\g$ and $W$ is a $\pi$-invariant subspace, then $W$ is also a $\Pi$ invariant subspace where $\Pi$ is the (locally) induced representation of $G$ acting on $V$.
    \end{enumerate}
\end{proposition}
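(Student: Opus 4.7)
The plan is to prove both directions by exploiting the exponential map and the finite-dimensionality of $V$ (so that $W$ is automatically closed in $V$ and derivatives/limits taken inside $V$ remain in $W$).

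For part (1), I would start with $w \in W$ and $X \in \g$. Since $e^{tX} \in G$ for all $t \in \R$ and $W$ is $\Pi$-invariant, the curve $t \mapsto \Pi(e^{tX})w$ lies entirely in $W$. The induced Lie algebra representation is defined by $\pi(X)w = \frac{d}{dt}\Pi(e^{tX})w\big|_{t=0}$, a limit of difference quotients of vectors in $W$. Because $W$ is a finite-dimensional (hence closed) linear subspace of $V$, this limit is in $W$, so $\pi(X)w \in W$ as required.

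For part (2), I would run the argument essentially in reverse, but there is a subtlety about where the lifted $\Pi$ is defined. Let $N_0 \subseteq \g$ and $N_\idty \subseteq G$ be the open neighborhoods on which $\exp$ is a diffeomorphism, so that $\Pi(e^X) = e^{\pi(X)}$ makes sense on $N_\idty$. For $X \in N_0$ and $w \in W$, expand
\begin{equation}
\Pi(e^X)w \;=\; e^{\pi(X)} w \;=\; \sum_{n=0}^{\infty} \frac{\pi(X)^n w}{n!}.
\end{equation}
Each partial sum lies in $W$ since $\pi(X)W \subseteq W$ iteratively, and the full series converges in the finite-dimensional space $V$; as $W$ is closed, the limit is in $W$. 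This proves $\Pi$-invariance on the neighborhood $N_\idty$ of the identity, which is exactly the statement in the ``locally induced'' case of Theorem~\ref{thm:lie alg lifts to lie gp}.

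The only non-routine point is upgrading from a neighborhood of the identity to all of $G$ when $G$ is connected (and the representation $\Pi$ extends globally, e.g.\ by Theorem~\ref{thm:alg reps lift to group reps when G simply connected}). The main obstacle is conceptual rather than technical: any neighborhood of the identity in a connected Lie group generates the whole connected component, so every $g \in G$ can be written as a finite product $g = e^{X_1}\cdots e^{X_k}$ with each $X_j \in N_0$. Then
\begin{equation}
\Pi(g)w \;=\; \Pi(e^{X_1})\cdots \Pi(e^{X_k}) w,
\end{equation}
and applying the local case repeatedly keeps the vector inside $W$. I would explicitly flag that this step is the reason connectedness is used and why, in the non-simply-connected setting, one should only expect global $\Pi$-invariance when a globally defined lift $\Pi$ exists in the first place.
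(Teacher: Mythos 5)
Your proof is correct and is the standard argument. The paper itself cites~\cite{nachtergaele2016quantum} for this proposition rather than giving its own proof, so there is nothing internal to compare against; both directions — differentiating along $t \mapsto \Pi(e^{tX})w$ and using that $W$ is closed in part (1), and expanding $e^{\pi(X)}w$ as a power series whose partial sums stay in $W$ in part (2) — are exactly the textbook route, and your handling of the ``(locally)'' qualifier (extending to all of $G$ only when connectedness and a global lift are available) is the right thing to flag.
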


Now for a key definition. Note that we will commonly shorten ``representation'' to ``rep'' and ``irreducible representation'' to ``irrep''.
\begin{definition}
    Let $(\Pi,V)$ be a representation of a Lie group $G$ (or a Lie algebra $\g$). If there exists a invariant subspace $W\subseteq V$ such that $W$ is non-trivial ($W\neq \{0\}$ and $W\neq V$), then we call $(\Pi,V)$ \emph{reducible}. If the only invariant subspaces are the trivial ones, we call $(\Pi,V)$ \emph{irreducible}.
\end{definition}

Central to any application of linear algebra or functional analysis is the concept of diagonalization of an operator $T:V\to V$: that is, decomposition of $V = \bigoplus V_{\mu}$ into subspaces $V_{\mu}$ invariant under the action of $T$ (so we can restrict $T:V_{\mu}\to V_{\mu}$).

\begin{theorem}\label{thm:complete reducibility}
    Let $(\Pi,V)$ be a representation of a Lie group $G$ (or a Lie algebra $\g$). $(\Pi,V)$ is called \textit{completely reducible} if there exists a direct sum decomposition of $V$ into subspaces $\{V_j\}$, i.e.
    \begin{equation}
        V = \bigoplus_{j} V_j ,
    \end{equation} with each $V_j$ a $\Pi$-invariant subspace for which $(\Pi,V_j)$ is irreducible. 
\end{theorem}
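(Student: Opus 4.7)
The statement as worded is a definition, not a proposition requiring proof. It introduces the term \emph{completely reducible} by saying a representation $(\Pi,V)$ \emph{is called} completely reducible precisely when $V$ admits a direct sum decomposition $V = \bigoplus_j V_j$ into $\Pi$-invariant subspaces on which the restricted action is irreducible. There is no logical content to verify: one is simply naming a property. Thus no proof proposal is warranted for the statement as printed, and I would not attempt to manufacture one.

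What is very likely intended here—and what would fit naturally at this point in the exposition—is the companion \textbf{complete reducibility theorem}, which asserts that every finite-dimensional representation of a compact Lie group (or, by the unitarian trick, of a complex semisimple Lie algebra) is completely reducible in the sense just defined. If that is the statement the author meant to prove, the standard argument proceeds in two steps: first, use the Haar measure on $G$ to average an arbitrary inner product on $V$ and produce a $G$-invariant Hermitian inner product $\langle\cdot,\cdot\rangle_G$, so that every $\Pi(g)$ becomes unitary with respect to $\langle\cdot,\cdot\rangle_G$; second, observe that for any invariant subspace $W\subseteq V$, the orthogonal complement $W^{\perp}$ with respect to this invariant inner product is also invariant, which yields $V = W \oplus W^{\perp}$. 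Induction on $\dim V$ then gives the claimed irreducible decomposition.

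The main technical ingredient in that argument is the existence of a bi-invariant Haar measure on $G$, which is where compactness enters; everything else is linear algebra. For the Lie algebra case one passes through Weyl's unitarian trick, replacing $\g$ by a compact real form and applying the group-level result. Since, however, the statement as quoted only introduces terminology, I will leave the proposal at this remark rather than prove a theorem that is not stated.
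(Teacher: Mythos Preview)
You are correct: the statement as printed is a definition, and the paper treats it as such---no proof is given there either. The actual complete reducibility result is split across the two results that immediately follow in the paper: Weyl's unitary trick (averaging an inner product over Haar measure to make any representation of a compact group unitary) and the theorem that every finite-dimensional unitary representation is completely reducible (orthogonal complements of invariant subspaces are invariant, then iterate). Your sketch of the ``intended'' argument matches exactly how the paper organizes these companion results.
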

We have been intentionally vague with the indexing of this direct sum decomposition, as we will later refer to completely reducible representations on infinite dimensional separable Hilbert spaces.
Complete reducibility of a representation $V$ may be thought of as ``simultaneous block diagonalization'' of the representative matrices $\{\Pi(g): g\in G\}\subseteq \GL(V)$. Namely, if $V$ is completely reducible and finite dimensional, then $V = V_1\oplus \dots \oplus V_k$ then there exists a basis of $V$ for which the matrices $\Pi(g)$ act block diagonally:
\begin{equation}
    \Pi(g) = \begin{pmatrix}
        \Pi_1(g) & 0 & 0 & 0 \\
        0 & \Pi_2(g) & 0 & 0 \\
        0 & 0 & \ddots & 0 \\
        0 & 0 & \dots & \Pi_k(g)
    \end{pmatrix},  
\end{equation} where each $\Pi_j(g)\in \GL(V_j)$. 

Just as not all operators are diagonalizable, not all representations are completely reducible. A quick counterexample can be seen by defining $\Pi:\R\to \GL(\C^2)$ given by $\Pi(x) = \begin{pmatrix}
    1 & x \\ 0 & 1 
\end{pmatrix}$. We now introduce a particularly important class of representations.
\begin{definition}
Let $V$ be a Hilbert space with inner product $\inprod{\cdot,\cdot}$. We call the Lie group representation $\Pi:G\to \GL(V)$ \emph{unitary} if $\Pi(g)\Pi(g)^* = \idty$ for all $g\in G$. Likewise, we call the Lie algebra representation $\pi:\g\to \GL(V)$ \emph{skew-Hermitian} if $\pi(X)^* = -\pi(X)$ for all $X\in \g$.
\end{definition} It is straightforward to see by taking derivatives just as we did in \ref{eq:unitary and skew hermitian} that if $(\Pi,V)$ is a unitary representation of a Lie group $G$, the induced representation $(\pi,V)$ of its Lie algebra $\g$ is skew-Hermitian, and if $(\pi,V)$ is a skew-Hermitian representation of a Lie algebra $\g$, the (locally) induced representation $(\Pi,V)$ of the corresponding Lie group $G$ is unitary. 

This class of representations is even broader than it appears.
\begin{proposition} (Weyl's unitary trick~\cite{hall2015lie})
\label{prop:Weyl unitary trick}
   Let $G$ be a finite or compact Lie group and let $\Pi:G\to \GL(V)$ be a representation of $G$. Then $(\Pi,V)$ is equivalent to a unitary representation $U:G\to \GL(V)$. 
\end{proposition}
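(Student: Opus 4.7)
The plan is to produce an inner product on $V$ with respect to which every $\Pi(g)$ is unitary, and then use Definition~\ref{def:equivalent reps} to convert this into an equivalence with a genuinely unitary representation with respect to the original inner product.

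First I would fix an arbitrary inner product $\langle\cdot,\cdot\rangle$ on the finite dimensional space $V$ (which exists by choosing any basis and declaring it orthonormal). The key construction is the averaged inner product
\begin{equation}
    \langle v, w\rangle_G \;:=\; \int_G \langle \Pi(g)v,\, \Pi(g)w\rangle \, d\mu(g),
\end{equation}
where $\mu$ is the normalized Haar measure on $G$; in the finite case this is just $\frac{1}{|G|}\sum_{g\in G}\langle \Pi(g)v,\Pi(g)w\rangle$. I would check the three inner product axioms: sesquilinearity and Hermitian symmetry are inherited from $\langle\cdot,\cdot\rangle$ and the linearity of the integral, while positive definiteness follows because the integrand is continuous, nonnegative, and strictly positive at $g=\idty$ for any $v\neq 0$ (since $\Pi(\idty)=\idty$), and Haar measure assigns positive measure to any neighborhood of $\idty$.

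Next, unitarity of $\Pi(h)$ for every $h\in G$ with respect to $\langle\cdot,\cdot\rangle_G$ is an immediate consequence of the left (or right) invariance of Haar measure: substituting $g\mapsto gh$ in the defining integral gives
\begin{equation}
    \langle \Pi(h)v, \Pi(h)w\rangle_G \;=\; \int_G \langle \Pi(gh)v, \Pi(gh)w\rangle\, d\mu(g) \;=\; \langle v,w\rangle_G.
\end{equation}

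Finally I would convert ``unitary in a different inner product'' into the equivalence required by the proposition. Since $\langle\cdot,\cdot\rangle_G$ and $\langle\cdot,\cdot\rangle$ are both inner products on the same finite dimensional $V$, there exists a positive definite operator $P$ with $\langle v,w\rangle_G = \langle v, Pw\rangle$, and setting $T := P^{1/2}$ gives an invertible linear map $T:V\to V$ such that $\langle Tv, Tw\rangle = \langle v,w\rangle_G$. Defining $U(g) := T\,\Pi(g)\,T^{-1}$ then yields a representation $U:G\to\GL(V)$ unitary with respect to the original $\langle\cdot,\cdot\rangle$, and $T$ intertwines $\Pi$ and $U$ in the sense of Definition~\ref{def:equivalent reps}.

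The only genuine obstacle is the existence of the normalized Haar measure in the compact case (the finite case is trivial); I would simply invoke the standard theorem on Haar measure for compact Hausdorff topological groups rather than prove it. Everything else is essentially bookkeeping: verifying positive definiteness of $\langle\cdot,\cdot\rangle_G$, applying invariance of the measure, and performing the change-of-basis via the square root of the positive operator relating the two inner products.
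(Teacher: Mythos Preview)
Your proposal is correct and follows essentially the same approach as the paper: average an arbitrary inner product over $G$ via Haar measure and use invariance of the measure to show $\Pi$ is unitary for $\langle\cdot,\cdot\rangle_G$. The paper stops there, leaving the passage from ``unitary in a different inner product'' to ``equivalent to a unitary representation'' implicit, whereas you make this explicit via $T=P^{1/2}$; that extra step is a nice clarification but not a different argument.
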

\begin{proof}
    Choose an inner product $\inprod{\cdot,\cdot}$ on $V$, then define a map $\inprod{\cdot,\cdot}_G:V\times V\to \C$ by 
    \begin{equation}
        \inprod{v,w}_G = \int_G \inprod{\Pi(g)v,\Pi(g)w} \, d\mu(g).
    \end{equation} where $\int_G \, d\mu(g)$ denotes integration with respect to the normalized Haar measure on $G$. It may be readily verified that this defines an inner product on $V$. Observe that for any $h\in G$, we have
    \begin{align*}
        \inprod{\Pi(h)v,\Pi(h)w}_G &= \int_G \inprod{\Pi(g)\Pi(h)v, \Pi(g)\Pi(h)w} \, d\mu(g) \\
        &= \int_G \inprod{\Pi(gh)v, \Pi(gh)w} \, d\mu(g) \\
        &= \int_G \inprod{\Pi(g)v, \Pi(g)w} \, d\mu(g) \\
        &= \inprod{v,w}_G , 
    \end{align*} where we have used right-invariance of the Haar measure in the third line. In particular, $(\Pi,V)$ is a unitary representation with respect to the inner product $\inprod{\cdot,\cdot}_G$, completing the proof.
\end{proof} 
\begin{remark}
    It is essential that $G$ is compact or finite so that the Haar measure of the entire group $\int_G \, d\mu(g)<\infty$, and so we may normalize. There are immediate counterexamples if one considers noncompact groups, like $SL(\C^2)$.
\end{remark}

The following theorem ensures that all unitary representations may be ``block diagonalized''. The theorem may be readily upgraded to an infinite dimensional Hilbert space $\calH$ version by requiring that a unitary representation be strongly continuous $\Pi:G\to \mathcal{U}(\calH)$.
\begin{theorem} \label{thm:unitaries are completely reducible}
    Every finite dimensional unitary representation is completely reducible.
\end{theorem}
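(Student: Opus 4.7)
The plan is to proceed by strong induction on $\dim V$. If $V$ is itself irreducible, there is nothing to prove: take the trivial decomposition $V = V$. Otherwise, by definition of reducibility there exists a nontrivial invariant subspace $W \subseteq V$ with $W \neq \{0\}$ and $W \neq V$. Since $V$ is finite-dimensional, both $W$ and its orthogonal complement $W^\perp$ (with respect to the given inner product) are finite-dimensional with strictly smaller dimension than $V$, so the induction hypothesis will apply once I verify that both are subrepresentations.

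The key step is showing that $W^\perp$ is also $\Pi$-invariant. This is where unitarity does all the work. Let $v \in W^\perp$ and $g \in G$; I want to show $\Pi(g) v \in W^\perp$, i.e., $\langle \Pi(g) v, w \rangle = 0$ for every $w \in W$. Using unitarity $\Pi(g)^* = \Pi(g)^{-1} = \Pi(g^{-1})$ and then $\Pi$-invariance of $W$, I compute
\begin{equation}
\langle \Pi(g) v, w \rangle = \langle v, \Pi(g)^* w \rangle = \langle v, \Pi(g^{-1}) w \rangle = 0,
\end{equation}
since $\Pi(g^{-1}) w \in W$ and $v \perp W$. Hence $W^\perp$ is invariant, and we obtain the orthogonal direct sum decomposition $V = W \oplus W^\perp$ of subrepresentations.

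Now apply the induction hypothesis to the restricted unitary representations $(\Pi\vert_W, W)$ and $(\Pi\vert_{W^\perp}, W^\perp)$, which are themselves unitary with respect to the restricted inner product and of strictly smaller dimension. Each decomposes as a direct sum of irreducible invariant subspaces,
\begin{equation}
W = \bigoplus_i W_i, \qquad W^\perp = \bigoplus_j W_j',
\end{equation}
and concatenating gives $V = \bigoplus_i W_i \oplus \bigoplus_j W_j'$, a direct sum of irreducible invariant subspaces, which is exactly the claim.

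The argument is essentially one line once invariance of the orthogonal complement is established, so there is no substantive obstacle: the whole content of the theorem is the interplay between the inner product and the group action, which forces $W^\perp$ to be invariant. The only subtle point worth flagging is that this proof uses $\dim V < \infty$ to guarantee that the induction terminates and that $V = W \oplus W^\perp$ as vector spaces; the analogous statement for infinite-dimensional unitary representations (referenced in the remark following the theorem) requires Zorn's lemma or a Hilbert space argument in place of finite induction, but for the stated finite-dimensional case the induction suffices.
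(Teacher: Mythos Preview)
Your proof is correct and follows essentially the same approach as the paper: split off a nontrivial invariant subspace, use unitarity to show its orthogonal complement is also invariant via the computation $\langle \Pi(g)v, w\rangle = \langle v, \Pi(g^{-1})w\rangle = 0$, and then recurse. The paper phrases the recursion as ``iterating this argument'' where you use explicit strong induction, but the content is identical.
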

\begin{proof}
    Let $(\Pi,V)$ be a unitary representation of a Lie group $G$. If $(\Pi,V)$ irreducible, we are done. Otherwise, there exists a nontrivial invariant subspace $W\subseteq V$. Write $V = W\oplus W^\perp$, and note that if $x\in W^\perp, y\in W$, and $\g\in G$, then
    \begin{equation}
        \inprod{\Pi(g)(x), y} = \inprod{x,\Pi(g)^* y} = \inprod{x,(\Pi(g))^{-1} y} = \inprod{x,\Pi(g^{-1}) y} = 0 , 
    \end{equation} where for the last equality we have used that $W$ is an invariant subspace. This proves $W^\perp$ is also an invariant subspace. Iterating this argument for the restrictions to the finite dimensional subspaces $W,W^\perp$ yields the result.
\end{proof} 
One may think of this corollary as a representation-theoretic version of ``eigenspaces of distinct eigenvalues are orthogonal''. Of course, care must be taken when two invariant subspaces are isomorphic.
\begin{corollary} 
    From the proof of Theorem \ref{thm:unitaries are completely reducible}, any pair of invariant subspaces $V_\mu, V_\nu\subseteq V$ such that $V_\mu$ and $V_\nu$ are distinct irreps must be orthogonal.
\end{corollary}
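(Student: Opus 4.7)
The plan is to prove orthogonality by contrapositive: if $V_\mu$ and $V_\nu$ are non-orthogonal irreducible invariant subspaces, then they must be isomorphic as representations, so the hypothesis of ``distinct irreps'' (i.e.\ non-isomorphic as $G$-modules) forces $V_\mu \perp V_\nu$. The one external fact I will lean on is already proved inside Theorem~\ref{thm:unitaries are completely reducible}: the orthogonal complement of a $\Pi$-invariant subspace of a unitary representation is itself $\Pi$-invariant.

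First I would consider the subspace $W := V_\mu \cap V_\nu^\perp$. Because $V_\nu^\perp$ is invariant (by the cited proof) and $V_\mu$ is invariant, $W$ is invariant. But $W \subseteq V_\mu$, so irreducibility of $V_\mu$ forces $W = \{0\}$ or $W = V_\mu$. The case $W = V_\mu$ gives $V_\mu \subseteq V_\nu^\perp$, which is exactly the desired orthogonality. It therefore suffices to derive a contradiction from the assumption $W = \{0\}$ together with $V_\mu \not\cong V_\nu$.

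Second, I would introduce the orthogonal projection $P : V \to V_\nu$ and restrict it to $V_\mu$. The key observation is that $P$ intertwines $\Pi$: writing any $v\in V$ as $v = v_\nu + v_\perp$ with $v_\nu \in V_\nu$ and $v_\perp \in V_\nu^\perp$, invariance of both $V_\nu$ and $V_\nu^\perp$ gives $\Pi(g)v_\nu \in V_\nu$ and $\Pi(g)v_\perp \in V_\nu^\perp$, so this is again the orthogonal decomposition of $\Pi(g)v$ and hence $P\Pi(g) = \Pi(g)P$. Restricted to $V_\mu$, this yields an intertwiner $P|_{V_\mu} : V_\mu \to V_\nu$ whose kernel is exactly $V_\mu \cap V_\nu^\perp = W = \{0\}$, so it is injective.

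Finally, the image $P(V_\mu)\subseteq V_\nu$ is an invariant subspace of $V_\nu$, so irreducibility of $V_\nu$ forces $P(V_\mu) = \{0\}$ or $P(V_\mu) = V_\nu$. Injectivity and $V_\mu\neq\{0\}$ rule out the former, so $P|_{V_\mu}$ is a bijective intertwiner, i.e.\ a $G$-module isomorphism $V_\mu \xrightarrow{\sim} V_\nu$. This contradicts the distinctness hypothesis, completing the proof. The only real subtlety, and the place where I expect any reader friction, is the correct reading of ``distinct irreps'' as ``non-isomorphic'' rather than merely ``unequal subspaces''; without that convention the statement is false (e.g.\ the diagonal and antidiagonal copies of an irrep $U$ inside $U\oplus U$ are unequal invariant subspaces, isomorphic irreps, and not orthogonal). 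Modulo that point, the argument above is essentially Schur's lemma specialized to the unitary setting, with the orthogonal-complement-is-invariant step supplying everything one needs.
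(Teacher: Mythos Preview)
Your proof is correct. The paper does not give a standalone proof of this corollary; it simply asserts that it follows from the argument in Theorem~\ref{thm:unitaries are completely reducible}, whose only content is that the orthogonal complement of an invariant subspace is again invariant. That fact alone does not literally yield the corollary, and your argument supplies exactly the missing step: you use the invariance of $V_\nu^\perp$ to show the orthogonal projection onto $V_\nu$ is an intertwiner, and then run the Schur-type dichotomy on $P|_{V_\mu}$. This is essentially the proof of part (1) of Schur's Lemma~\ref{lem:schur} (which the paper states immediately afterward) specialized to the unitary case, so your route and the paper's intended route coincide in spirit; you have just made the logic explicit rather than leaving it as a forward reference. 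Your closing remark about reading ``distinct irreps'' as ``non-isomorphic'' is well taken and matches the paper's usage throughout.
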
 This theorem may be upgraded by considering strongly continuous unitary representations $\Pi:G\to \mathcal{U}(\calH)$.

The following result is perhaps the most important in all of representation theory.
\begin{lemma}(Schur's lemma) \label{lem:schur} 
Let $V_1,V_2$ be complex vector spaces and let $(\Pi_1, V_1)$ and $(\Pi_2,V_2)$ be two irreducible representations of $G$. Suppose that $T:V_1\to V_2$ a linear map such that
\begin{equation}
    T \, \Pi_1(g) = \Pi_2(g) T \qquad \text{ for all }g\in G . 
\end{equation} Then there are two possibilities. 
\begin{enumerate}
    \item If the two representations are not equivalent $V_1\not\cong V_2$, then $T = 0$.
    \item If the two representations are equivalent $V_1=V_2$, then $T = \lambda \idty$ for $\lambda\in \C$. 
\end{enumerate}
\end{lemma}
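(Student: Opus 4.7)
The plan is to exploit the fact that the kernel and image of an intertwiner are always invariant subspaces, and then use irreducibility to force them into the extreme cases.

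First I would verify two general observations. Let $T : V_1 \to V_2$ satisfy $T \, \Pi_1(g) = \Pi_2(g) \, T$ for every $g \in G$. Then $\ker T$ is a $\Pi_1$-invariant subspace of $V_1$, since for $v \in \ker T$ we have $T \, \Pi_1(g) v = \Pi_2(g) T v = 0$, and $\im T$ is a $\Pi_2$-invariant subspace of $V_2$, since $\Pi_2(g) (Tv) = T(\Pi_1(g) v) \in \im T$. These are the two facts that couple linear algebra to the representation structure.

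For part (1), I would use these invariant subspaces together with the irreducibility hypotheses. Since $V_1$ is irreducible, $\ker T$ must be either $\{0\}$ or all of $V_1$, and since $V_2$ is irreducible, $\im T$ must be either $\{0\}$ or all of $V_2$. Suppose $T \neq 0$, so $\ker T \neq V_1$ and $\im T \neq \{0\}$; then $\ker T = \{0\}$ and $\im T = V_2$, making $T$ a linear isomorphism. The intertwining relation can then be rewritten as $\Pi_2(g) = T \, \Pi_1(g) \, T^{-1}$, showing that $(\Pi_1, V_1)$ and $(\Pi_2, V_2)$ are equivalent in the sense of Definition~\ref{def:equivalent reps}. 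The contrapositive gives $T = 0$ whenever $V_1 \not\cong V_2$.

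For part (2), the essential input is that the underlying field is $\C$. With $V_1 = V_2 = V$ and $\Pi_1 = \Pi_2 = \Pi$, the linear map $T : V \to V$ has at least one eigenvalue $\lambda \in \C$ because $V$ is a finite-dimensional complex vector space. Consider $T - \lambda \idty$. Since $\idty$ trivially intertwines $\Pi$ with itself, so does $T - \lambda \idty$. By construction $\ker(T - \lambda \idty) \neq \{0\}$, so the invariant-subspace argument from part (1) forces $\ker(T - \lambda \idty) = V$, i.e.\ $T = \lambda \idty$.

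There is no serious obstacle here; the only subtle point worth flagging is the role of algebraic closure in part (2). Over a non-algebraically-closed field one cannot guarantee that $T$ has an eigenvalue, and the conclusion $T = \lambda \idty$ can fail (division-algebra phenomena appear in the real case). Everything else is a direct application of the two invariant-subspace observations together with irreducibility.
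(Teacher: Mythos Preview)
Your proof is correct and follows exactly the same approach as the paper: kernel and image of $T$ are invariant subspaces, forcing $T$ to be zero or an isomorphism for part (1), and for part (2) the eigenspace of an eigenvalue $\lambda$ (equivalently $\ker(T-\lambda\idty)$) is a nonzero invariant subspace, hence all of $V$. Your write-up is more detailed than the paper's terse sketch, but the ideas are identical.
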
 \begin{proof}
    To prove (1), observe that $\ker(T)\subseteq V_1$ is an invariant subspace and $\text{ran}(T)\subseteq V_2$ is an invariant subspace and use irreducibility. To prove (2), observe that since $T$ is a linear operator over a finite dimensional complex vector space, it has at least one eigenvalue $\lambda\in \C$. Then the non-empty eigenspace of $\lambda$ is an invariant subspace and so must be all of $V$.
\end{proof}

\subsubsection{An Aside on Complexification}
For the entirety of this chapter, we have considered representations over a complex vector space $V$. It can be shown by a universal property that the complex representations of a real Lie algebra $\g$ extend uniquely to representations of the so-called complexification $\g_\C:= \g\otimes_\R \C$ of $\g$. We call $\g$ the real form of $\g_\C$. In practice, this is a rather easy task: for instance, the complexification of $\su(2)$ is given by taking the real Pauli basis $\{\sigma_x,\sigma_y,\sigma_z\}$ and taking their complex span: $\su(2)_\C = \text{span}_\C \{\sigma_x,\sigma_y,\sigma_z\} = \sl(2,\C)$.
    Indeed, this streamlines a number of \textit{a priori} clunky expressions: notice that in several examples, like the 2 qubit tensor rep Example~\ref{ex:2 qubit tensor rep of su(2)} and the adjoint rep Example~\ref{ex:adjoint rep of su2}, we have several copies of $i$ floating around for every expression. If we consider $\pi$ as the complex representation of $\g_\C$ on $V$ instead of $\g$ on $V$, then $\pi$ is complex linear and so
    \[
        -i\pi \paran{\frac{i}{2}\sigma_Z} = \pi\paran{\frac{1}{2}\sigma_Z},
    \] and our expressions are much cleaner. When working with complex representations from the Lie algebra perspective, one almost always jumps immediately from the real form $\g$ to the complexification $\g_\C$. At the Lie algebra level, it is also straightforward to go from a complex Lie algebra to its real form, as we often do to return from $\sl(2,\C)$ to $\su(2)$. At the level of the Lie groups, the real form of a semisimple Lie algebra corresponds to a compact Lie group. Here, the exponential of $\su(2)$ is $SU(2)$, a compact real 3-manifold diffeomorphic to the 3-sphere. Note however that if we exponentiate the complexified Lie algebra $\sl(2,\C)$, we end up with $SL(2,\C)$, which is not compact. This subtlety will not matter much for our work here and we can proceed without worry: we will only be studying representations which come from real compact Lie groups. See~\cite{kirillov2008introduction} for more information.

\section{Spin and \texorpdfstring{$SU(2)$}{SU(2)}, by Example} \label{sec:Spin and SU(2), by example}
In this section we will study representations of the Lie group $SU(2)$ and its associated Lie algebra $\su(2)$. The spin $s\in \frac{1}{2}\N$ of a quantum particle is a label for the irreducible representation $V_{s} = \C^{2s+1}$ of $SU(2)$. In addition to their foundational role in quantum spin systems, they also serve as the prototypical examples of semi-simple Lie group representations. We approach this by way of example, leaving the full story and proofs to standard references~\cite{hall2015lie, fulton1991representation,simon1996representations}. By the end of this, we will be able to start doing routine spin calculations with a clearer mathematical picture, allowing us to study the ground states of our key examples, the Majumdar-Ghosh chain Example \ref{ex:Majumdar-Ghosh chain} and the AKLT chain Example \ref{ex:AKLT chain}. The story will begin with studying the structure of $SU(2)$ and $SO(3)$.

\subsection{$SU(2)$ and $SO(3)$}

The term ``spin'' is often described as a form of intrinsic angular momentum. What does this have to do with $SU(2)$? Shouldn't the relevant group be the group of rotations on $\R^3$? Indeed, these two groups are closely connected: they share the same Lie algebra $\so(3)\cong \su(2)$, and as Lie groups, $SU(2)/\Z_2 \cong SO(3)$. These two facts will be demonstrated in the remainder of this section, and we will get rather comfortable with these examples here.

Recall from earlier the Pauli matrices 
\begin{equation}
    \sigma^X = \begin{pmatrix} 0 & 1 \\ 1 & 0 \end{pmatrix}, \quad \sigma^Y = \begin{pmatrix} 0 & -i \\ i & 0 \end{pmatrix}, \quad \sigma^Z = \begin{pmatrix} 1 & 0 \\ 0 & -1 \end{pmatrix} . 
\end{equation} which for future reference enjoy commutation relations
\begin{equation} \label{eq:pauli commutation}
    [\sigma^X, \sigma^Y] = 2i\sigma^Z , \quad [\sigma^Y, \sigma^Z] = 2i\sigma^X, \quad [\sigma^Z, \sigma^X] = 2i\sigma^Y . 
\end{equation}

After some algebra, one realizes that
\begin{equation} 
    SU(2) = \left\{ A = \begin{pmatrix}z & -w \\ \overline{w} & \overline{z} \end{pmatrix}: (z,w)\in \C^2 \text{ and } \abs{z}^2 + \abs{w}^2 = 1 \right\} ,
\end{equation} which leads us to see that any $A\in SU(2)$ can be written as
\begin{equation} \label{eq:SU(2) is 3-sphere}
    A = x_0 \idty + i (x_1 \sigma^X + x_2 \sigma^Y + x_3 \sigma^Z), \qquad \text{ where } x= (x_0, x_1, x_2, x_3) \in \R^4 \text{ has } \abs{x} = 1 . 
\end{equation} This defines an invertible smooth map from the 3-sphere $S^3 = \{x\in \R^4: \abs{x} = 1\}$ to $SU(2)$, and so they are diffeomorphic as smooth manifolds. In particular, this means $SU(2)$ is compact, connected, and simply connected real 3-manifold. Now, the Lie algebra $\su(2)$ is the tangent space $T_\idty SU(2)$, where we see $\idty$ corresponds to $(1,0,0,0)\in S^3$ through the this map. The tangent space to the sphere at this point is generated by $(0,1,0,0), (0,0,1,0), (0,0,0,1)$, and so 
\begin{equation}
    \su(2) = \text{span}_{\R} \{i \sigma^X, i\sigma^Y, i\sigma^Z\} . 
\end{equation} One may equivalently arrive at this conclusion by considering three families of 1-parameter groups: the ``rotation paths'' parameterized by $\theta\in \R$
\begin{equation}
    U_X(\theta) = e^{i\theta \sigma^X}, \quad U_Y(\theta) = e^{i\theta \sigma^Y} , \quad U_Z(\theta) = e^{i\theta \sigma^Z},
\end{equation} which have derivatives at the identity $\theta = 0$
\begin{equation}
    U_X'(\theta)\Big\vert_{\theta = 0} = i\sigma^X, \quad U_Y'(\theta)\Big\vert_{\theta = 0} = i\sigma^Y, \quad U_Z'(\theta)\Big\vert_{\theta = 0} = i\sigma^Z. 
\end{equation}
Let us shift our attention to $SO(3)$. The group of rotations $SO(3)$ is generated by three one-parameter groups of rotations
\[
    R_X(\theta) = \begin{pmatrix}
        1 & 0 & 0 \\
        0 & \cos\theta & -\sin\theta \\ 
        0 & \sin\theta & \cos\theta
        \end{pmatrix},  R_Y(\theta) = \begin{pmatrix}
            \cos \theta & 0 &  \sin \theta\\
            0 & 1 &  0 \\ 
            -\sin \theta & 0 & \cos\theta
        \end{pmatrix},  R_Z(\theta) = \begin{pmatrix}
            \cos \theta & -\sin \theta & 0 \\
            \sin\theta & \cos\theta & 0 \\ 
            0 & 0 & 1
        \end{pmatrix}
\] with $\theta\in \R$. These correspond to rotations about the $X, Y, $ and $Z$ axes, respectively. That these rotations generate all of $SO(3)$ can be proven in a variety of ways, including using classical Euler Rotation Theorem, by Lie theoretic techniques (see Corollary 3.47 in~\cite{hall2015lie}), or by simply picking up a ball and using only products of rotations about two axes, say $X, Y$, to construct any other rotation. Taking derivatives at the identity $\theta = 0 $ of these 1-parameter groups, we get 
\begin{equation}
    R_X'(0) = \begin{pmatrix}
        0 & 0 & 0 \\
        0 & 0 & -1 \\ 
        0 & 1 & 0
        \end{pmatrix}, \;R_Y'(0)\Big\vert_{\theta = 0} = \begin{pmatrix}
        0 & 0 & 1 \\
        0 & 0 & 0 \\ 
        -1 & 0 & 0
        \end{pmatrix}, \; R_Z'(0)\Big\vert_{\theta = 0} = \begin{pmatrix}
        0 & -1 & 0 \\
        1 & 0 & 0 \\ 
        0 & 0 & 0
        \end{pmatrix}, \;
\end{equation} This means that these operators enjoy the same commutation relations as the Paulis (\ref{eq:pauli commutation}) after rescaling by 2i:
\begin{equation}
    [R_X'(0), R_Y'(0)] = R_Z'(0), \qquad [R_Y'(0), R_Z'(0)] = R_X'(0), \qquad [R_Z'(0), R_X'(0)] = R_Y'(0) . 
\end{equation} In particular, they form a basis for the Lie algebra $\so(3)$, and the map $R_j'(0) \mapsto \frac{1}{2i}\sigma^j$ for $j\in \{X,Y,Z\}$ grants the Lie algebra isomorphism $\su(2)\cong \so(3)$. 

Now, we already said that $SU(2)$ is simply connected since $S^3$ is simply connected. There are several ways to see that $SU(2)/\Z_2 \cong SO(3)$, but one nice way is the following. 

\begin{proposition}\label{prop:SU(2) is double cover of SO(3)}
    $SO(3)$ is isomorphic to $SU(2)/\{\idty,-\idty\}$.
\end{proposition}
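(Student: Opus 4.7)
The plan is to construct an explicit surjective Lie group homomorphism $\Phi: SU(2) \to SO(3)$ with kernel $\{\idty, -\idty\}$, then invoke the first isomorphism theorem. The natural candidate is the conjugation (adjoint) action of $SU(2)$ on the real 3-dimensional vector space $V = \mathrm{span}_\R\{\sigma^X, \sigma^Y, \sigma^Z\}$ of traceless Hermitian $2\times 2$ matrices, which is just $-i\,\su(2)$. Equipping $V$ with the real inner product $\langle A, B\rangle = \tfrac{1}{2}\Tr(AB)$, one checks that $\{\sigma^X,\sigma^Y,\sigma^Z\}$ is an orthonormal basis, identifying $V \cong \R^3$ isometrically.

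Next I would define $\Phi(U)(A) := U A U^*$ for $U \in SU(2), A\in V$. A short calculation shows $\Phi(U)$ preserves both the trace and Hermiticity (so maps $V$ into $V$) and the inner product, so $\Phi(U)\in O(V) = O(3)$. Since $SU(2)$ is connected and $\Phi$ is continuous with $\Phi(\idty)=\idty$, the image lies in the identity component $SO(3)$, yielding a Lie group homomorphism $\Phi:SU(2)\to SO(3)$.

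To compute $\ker\Phi$, I would observe that $U\in \ker\Phi$ iff $U$ commutes with all three Paulis, equivalently with a generating set of $M_2(\C)$; by Schur's Lemma~\ref{lem:schur} applied to the (irreducible) defining representation of $SU(2)$ on $\C^2$, such a $U$ must be a scalar multiple of $\idty$. The determinant constraint $\det U = 1$ then forces $U = \pm\idty$. For surjectivity, the quickest route is to compute the differential $d\Phi_\idty:\su(2)\to \so(3)$: by Theorem~\ref{thm:lie gp passes to lie alg}, $d\Phi_\idty(X)(A) = [X,A]$, and a direct calculation on the Pauli basis shows $d\Phi_\idty$ sends $\tfrac{i}{2}\sigma^X, \tfrac{i}{2}\sigma^Y, \tfrac{i}{2}\sigma^Z$ to the generators $R_X'(0), R_Y'(0), R_Z'(0)$ of $\so(3)$ already identified in the paragraphs preceding the proposition, up to sign. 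Hence $d\Phi_\idty$ is the Lie algebra isomorphism $\su(2)\cong\so(3)$, so the image of $\Phi$ contains an open neighborhood of $\idty\in SO(3)$; since $SO(3)$ is connected and such a neighborhood generates the whole group, $\Phi$ is surjective.

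With surjectivity and kernel in hand, the first isomorphism theorem gives $SU(2)/\{\pm\idty\} \cong SO(3)$. The main obstacle is really surjectivity; bare-hands construction of a preimage for an arbitrary rotation (e.g.\ via Euler angles or the rotation-by-angle-$\theta$-about-axis-$\hat n$ formula $U = \cos(\theta/2)\idty - i\sin(\theta/2)\,\hat n\cdot\vec\sigma$) is a concrete alternative, but invoking the Lie algebra isomorphism already established just before the proposition makes the argument essentially free. Everything else is routine linear algebra.
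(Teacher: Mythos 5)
Your proposal is correct and rests on the same core construction as the paper's proof: the adjoint action of $SU(2)$ on the space $V$ of traceless Hermitian matrices with the normalized Hilbert--Schmidt inner product, followed by the first isomorphism theorem. The two arguments diverge in two places, both tactical. For the kernel, you invoke Schur's Lemma applied to the irreducible defining representation of $SU(2)$ on $\C^2$, which forces a commuting unitary to be scalar and then $\det = 1$ pins it to $\pm\idty$; the paper instead asserts directly that any $U \neq \pm\idty$ fails to commute with some $A$, which is true but left unverified (and, incidentally, says ``$A \in SO(3)$'' where it means $A \in V$). Your route is tighter and reuses machinery the paper has already set up. For surjectivity, you compute the differential $d\Phi_\idty: \su(2) \to \so(3)$, observe it recovers the Lie algebra isomorphism established just before the proposition, and conclude that the image of $\Phi$ contains an open neighborhood of the identity and hence all of the connected group $SO(3)$. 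The paper does the concrete version: it computes $S(U_Z(\theta))$ explicitly in the Pauli basis, obtains the $Z$-axis rotation by angle $2\theta$, and notes the same works for the other axes, appealing (implicitly) to the fact that these one-parameter subgroups generate $SO(3)$. Your abstract argument is shorter and cleaner, but the paper's explicit calculation is not wasted effort: the ``double angle'' $2\theta$ that falls out of it is exactly what the paper uses in the next paragraph to explain the geometric intuition for why the cover is $2$-to-$1$. If you wanted both the clean proof and the intuition, you could keep your Lie-algebra argument for surjectivity and do the $S(U_Z(\theta))$ computation separately as a remark. One small caveat on your differential computation: with the paper's convention that $R'_j(0) \mapsto \tfrac{1}{2i}\sigma^j$ gives the isomorphism $\so(3) \cong \su(2)$, one should check that $d\Phi_\idty$ sends $\tfrac{1}{2i}\sigma^j$ to $R'_j(0)$ (not to $-R'_j(0)$); the sign works out, but ``up to sign'' alone would not suffice since negating all generators is not a Lie algebra homomorphism of $\so(3)$.
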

\begin{proof}
    Let $V$ be the set of traceless Hermitian matrices, i.e. $V = \{\vec{a}\cdot \vec{\sigma}: \vec{a}\in \R^3\}$. In particular $V\cong \R^3$. Then the Hilbert-Schmidt inner product
    \[
    \inprod{A,B} = \frac{1}{2} \Tr A^* B 
    \] is exactly the Euclidean inner product 
    \[
    \inprod{\vec{a}\cdot \vec{\sigma}, \vec{b}\cdot \vec{\sigma}} = \sum_{i=1}^3 a_i b_i . 
    \] Given $U\in SU(2)$, define $S(U) \in GL(V)$ by
    \[
    S(U) \cdot A = UAU^{-1}.
    \] Then $S(U)$ is orthogonal since $A\mapsto U A U^{-1}$ preserves the Hilbert-Schmidt inner product, i.e. $S:SU(2)\to O(3)$. Then, since $S(\idty) = \idty\in SO(3)$ and $SU(2)$ is connected, $S: SU(2)\to SO(3)$. It is also clear that $S$ is a group homomorphism. To see that $S$ is onto, one needs to demonstrate that each of the ``rotation about an axis'' $R_X(\theta),R_Y(\theta),R_Z(\theta)$ 1-parameter groups are in the image. We will show one such case, the $Z$-axis rotations. Observe that we may rewrite (\ref{eq:SU(2) is 3-sphere}) to say that any 1-parameter group $U(\theta) \in SU(2)$ can be written in terms of real vector $\vec{x}\in \R^3$:
    \begin{equation}
        U(\theta) = e^{i\theta \vec{x}\cdot \vec{\sigma}} = \cos \theta \idty + i \sin \theta (\vec{x}\cdot \vec{\sigma}) . 
    \end{equation} Consider the family $U_Z(\theta) = e^{i\theta \sigma^Z}$, and observe how it acts on the basis $\{\sigma^X,\sigma^Y,\sigma^Z\}$ of $V$:
    \begin{align*}
        U_Z(\theta)\sigma^X U_Z^{-1}(\theta) &= (\cos^2\theta - \sin^2 \theta )\sigma^X + ( - 2\sin\theta \cos \theta )\sigma^Y \\
        U_Z(\theta)\sigma^Y U_Z^{-1}(\theta) &= ( 2\sin\theta \cos \theta )\sigma^X + (\cos^2\theta - \sin^2 \theta )\sigma^Y  \\
        U_Z(\theta)\sigma^Z U_Z^{-1}(\theta) &= \sigma^Z . 
    \end{align*} where we have used $\sigma^Z\sigma^X = i\sigma^Y$, $\sigma^Y\sigma^Z = i\sigma^X$, and the easily shown anticommutation relations
    \begin{equation}
        \sigma^i \sigma^j = - \sigma^j \sigma^i \text{  when }i\neq j, \quad \sigma_i^2 = \idty . 
    \end{equation} So, recalling the double angle formulas, this means that on the Pauli basis of $V$, $S(U_Z(\theta))\in \GL(V)$ has the matrix
    \begin{equation} \label{eq:matrix double cover}
        S(U_Z(\theta)) = \begin{pmatrix}
            \cos 2\theta & -\sin 2\theta & 0 \\
            \sin 2\theta & \cos 2\theta & 0 \\
            0 & 0 & 1 
        \end{pmatrix}.
    \end{equation} Thus, $S$ is onto. The final thing to show is that $\ker S = \{\idty, -\idty\}$, then we are done by the first isomorphism theorem. It is clear that $S(\idty) = S(-\idty)$. It is also clear that if $U\neq \pm \idty$, then one can always find an $A\in SO(3)$ for which $U A \neq A U$, so we are done. 
\end{proof}
As a corollary, we see that as a manifold $SO(3)\cong \R \mathbb{P}^3$, real projective 3-space, and indeed, $SU(2)$ is the double cover of $SO(3)$--i.e., there is a 2-to-1 map $S:SU(2)\to SO(3)$ which is locally a homeomorphism. From facts about real projective space, we immediately have that
\begin{equation} \label{eq:fundamental group SO(3)}
    \pi_1(SO(3))=\Z_2 . 
\end{equation} The intuition for this double cover idea is greatly clarified by the matrix formula for rotations (\ref{eq:matrix double cover}). Here, we see the appearance of angles $2\theta$ instead of $\theta$. So when $\theta\in [0,2\pi)$, the double angle $2\theta$ has wrapped around $SO(3)$ twice while $\theta$ only wraps once.

Moreover, since $SU(2)$ is simply connected, it is the universal cover of $SO(3)$. As per Corollary \ref{prop:universal covers}, the universal cover in some sense controls the representation theory of its quotients: indeed, every representation of $SO(3)$ must be obtained from a representation of $SU(2)$. As we discussed, there are representations of $SU(2)$ which do not pass through the quotient and become representations of $SO(3)$: these will exactly be projective representations of $SO(3)$, and they play a key role in the SPT story. In some sense, one can think of the fundamental group (\ref{eq:fundamental group SO(3)}) as a topological obstruction to this passage of representations. But we are getting ahead of ourselves. It is time to discuss the irreducible representations of $SU(2)$ which are labeled by spin $s$.

\subsection{The spin-$s$ irreducible representations of $G=SU(2)$}

We will now build some representations of $\su(2)$ with bare hands, eventually demonstrating the structure of \textit{any} irrep of $\su(2)$. Recall that since $SU(2)$ is simply connected, Theorem\ref{thm:alg reps lift to group reps when G simply connected} guarantees these representations are in one-to-one correspondence with the representations of $SU(2)$.

Like any group or algebra, we have a notion of \textit{trivial representation}.

\begin{example}{The trivial representation of $\su(2)$, aka the spin-0 or singlet rep} \label{ex:trivial rep of su2}

        Let $V = \C$ and define $\Pi: SU(2) \to GL(\C)$ to be $\Pi(g) = \idty$ for any $g\in SU(2)$. Theorem \ref{thm:lie gp passes to lie alg} equips us to pass to a corresponding Lie algebra representation $\pi: \su(2) \to \gl(\C)$. More concretely, we pass by taking derivatives of paths in $SU(2)$, or just by writing for any $X\in \su(2)$
        \[
            \idty = \Pi(e^{tX}) = e^{t\pi(X)},
        \] which means that $\pi(X) = 0$. Notice in particular that $-i\pi(\frac{i}{2}\sigma^Z)=0$ is the (trivially) diagonal $1\times 1$ matrix with eigenvalue $0$. 
    \end{example}
    When we later describe every irrep of $\su(2)$ in Example \ref{ex:the spin s irrep}, it will become clear why this observation leads us to call the trivial representation of $\su(2)$ the spin $s=0$ representation.

    Matrix Lie groups also come with the natural \textit{defining representation}. This just means ``the representation where the matrices are their own representatives'', $g= \Pi(g)$.

\begin{example}{The defining representation of $\su(2)$, aka the spin-$1/2$ or doublet rep} \label{ex:def rep su2}

    Let $V=\C^2$ and define $\Pi:SU(2) \to GL(\C^2)$ by $\Pi(g) = g$. Of course, this also means that $\pi(X) = X$ for all $X\in \su(2)$. In the standard orthonormal basis of $\C^2$ given by $\ket{\up} := \begin{pmatrix} 1 \\ 0 \end{pmatrix},\ket{\down} = \begin{pmatrix} 0 \\ 1 \end{pmatrix}$, we have the diagonalization
    \[
        -i\pi\paran{\frac{i}{2} \sigma^Z} = \begin{pmatrix} \frac{1}{2} & 0 \\ 0 & -\frac{1}{2}
        \end{pmatrix}.
    \]
    \end{example}
    The following representation is especially important, as this gives powerful insight into how a Lie algebra acts on itself. We in fact have already encountered this representation: it was key to the proof of Proposition \ref{prop:SU(2) is double cover of SO(3)}.
    \begin{example}{The adjoint representation of $\su(2)$, aka the spin-1 or triplet rep}\label{ex:adjoint rep of su2}
    
        Let $V = \g_\C = \text{span}_\C \{\sigma^X,\sigma^Y,\sigma^Z\}$.
        Then $G=SU(2)$ acts on the vector space $\g_\C$ in a natural way: let $: G\to GL(\g_\C)$ be defined for all $g\in G$ by 
        \[
            \Ad_g(X) = gXg^{-1} \qquad \text{for all }X\in \g_\C .
        \] Again, taking $\frac{d}{dt}$ of the above where $g=e^{tX}$ with $X\in \g$ and evaluating at $t=0$ (or just using the exponential map), this induces the Lie algebra representation $\ad:\g \to \gl(\g_\C)$ which has that for any $X\in \g$,
        \[
            \ad_X(Y) = [X,Y] \qquad \text{for all }Y\in \g_C .
        \] Now, let us define operators $\sigma^{\pm}\in \g_\C$ by $\sigma^{\pm}:= \frac{1}{2}(\sigma^X \pm i\sigma^Y)$, or in matrices,
        $\sigma^+ = \begin{pmatrix}
                0 & 1 \\
                0 & 0
            \end{pmatrix}$ and $ \sigma^- = \begin{pmatrix}
                0 & 0 \\
                1 & 0
            \end{pmatrix}$.
        The commutation relations of the Pauli matrices (or direct computation) reveals that
        \begin{equation}\begin{split}
            -i[\frac{i}{2}\sigma^Z, \sigma^+] &= \sigma^+ ,\\
            -i[\frac{i}{2}\sigma^Z, \sigma^Z] &= 0 ,\\
            -i[\frac{i}{2}\sigma^Z, \sigma^-] &= -\sigma^- .
        \end{split}\end{equation} Or in other words, if we choose the ordered basis for $\g_\C$ to be $\sigma^+, \sigma^Z, \sigma^-$, then we have the diagonalization
        \[
            -i \ad_{\frac{i}{2}\sigma^Z} = \begin{pmatrix}
                1 & 0 & 0 \\
                0 & 0 & 0 \\
                0 & 0 & -1 
            \end{pmatrix}.
        \]
    \end{example}

    In physics, the operators $\sigma^\pm$ are often called ``raising and lowering'' or ``ladder'' operators. The interpretation being that the eigenvalues of (the representative of) the diagonal matrix $\sigma^Z$ are like rungs of a ladder, and the matrix $\sigma^+$ moves you up the ladder, while $\sigma^-$ moves you down the ladder. We list the relevant commutation relations here, where recall that $\sigma^\pm := \frac{1}{2}(\sigma^X \pm i \sigma^Y)$: 
    \begin{equation} \label{eq:ladder operators}
        [\frac{1}{2}\sigma^Z, \sigma^\pm] = \pm \sigma^\pm, \qquad [\sigma^+,\sigma^-] = \sigma^Z . 
    \end{equation} Note that across the literature, different conventions for the scaling of these operators are often taken.

    Let us unify these examples and describe any spin-$s$ irrep, keeping this ladder picture in mind. The commutation relations play an exceptionally important role.

    \begin{example}{The spin-$s$ irrep}\label{ex:the spin s irrep}
    
    Let $V = \C^{2s+1}$ and define a representation $\pi:\su(2)\to \gl(V)$ by 
        \begin{equation}\label{eq:diagonalized J3}
            -i\pi\paran{\frac{i}{2}\sigma^Z} = \begin{pmatrix}
                s & 0 & 0 & \dots & 0 & 0 \\
                0 & s-1 & 0 & \dots & 0 & 0 \\
                0 & 0 & s-2 & \dots & 0 & 0 \\
                \vdots & \vdots & \vdots & \ddots & \vdots & \vdots \\
                0 & 0 & 0 & \dots & -s+1 & 0 \\
                0 & 0 & 0 & \dots & 0 & -s
                 
            \end{pmatrix} 
        \end{equation} where we have defined $J_3:= -i \pi\paran{\frac{i}{2}\sigma^Z}$. Likewise, call the representatives $J_\pm = \frac{1}{\sqrt{2}} \pi(\sigma^\pm) = -\frac{i}{\sqrt{2}} \pi\paran{\frac{i}{2}(\sigma^X + \sigma^Y)}$. Label the eigenbasis of $V$ by $\ket{s},\ket{s-1},\dots , \ket{-s}$. We can recover matrix elements of the representatives of $\sigma^X, \sigma^Y$ by equivalently recovering the matrix elements of the representatives of $\sigma^\pm$. Using the commutation relations (\ref{eq:ladder operators}) and that Lie algebra homomorphisms have $\pi([X,Y]) = [\pi(X),\pi(Y)]$, one quickly demonstrates the commutation relations
        \begin{equation} \label{eq:comm relations raising lowering}
            [J_3, J_\pm] = \pm J_\pm, \qquad [J_+,J_-] = J_3 . 
        \end{equation} Observe that for any eigenvector $\ket{\lambda}$, we have the following fundamental equation
        \begin{equation}\begin{split} \label{eq:raising operator}
            J_3 J_\pm \ket{\lambda} &= (J_{\pm}J_3 \pm J_{\pm}) \ket{\lambda} \\
            &= (\lambda \pm 1) J_\pm \ket{\lambda}.
        \end{split}\end{equation} 
        In particular, $J_+$ either maps $\ket{\lambda}\mapsto (\lambda+1) \ket{\lambda + 1}$ or $\ket{\lambda}\mapsto 0$, and similarly $J_-$ either maps $\ket{\lambda} \mapsto (\lambda - 1 ) \ket{\lambda-1}$ or $\ket{\lambda} \mapsto 0$. 
    \end{example}
    Again, thinking of the eigenvalues of $J_3$ as the rungs of a ladder, we might think of $J_+$ as hopping up the rungs until falling off the top, and $J_-$ as hopping down until falling off the bottom:
    \begin{figure}
        \centering
        \includegraphics[width=1\textwidth]{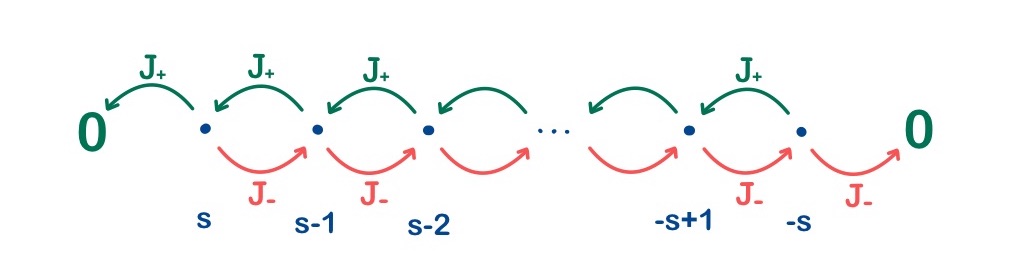}
        \caption{The ``ladder operators'' of the spin-$s$ irrep.}
        \label{fig:ladder_ops}
    \end{figure}

    It is not at all obvious from this construction that \textit{any} irrep of $\su(2)$ is equivalent to one of these. In fact, it is rather backwards from the typical presentation (see for instance Chapter 9 of~\cite{nachtergaele2016quantum}), but it is not hard to see the equivalence. Let us sketch this idea briefly. Let $V$ be irreducible. Since $V$ is a complex vector space, the representative of $\sigma^Z$, $J_3$, has at least one eigenvector $\phi\in V$. Then, since $V$ is finite dimensional, there exists a $k$ for which $J_+^k \phi \neq 0$ but $J_+^{k+1} \phi = 0$. Looking at Equation (\ref{eq:raising operator}), it is clear that $J_+^k \phi$ is the eigenvector of largest eigenvalue of $J_3$, call it $\ket{s}$. To see that $s$ must be an integer and not some arbitrary real number, observe that again by finite dimensionality and Equation (\ref{eq:raising operator}) there must exist a $k'$ such that $J_-^{k'}\ket{s} \neq 0$ but $J_-^{k'+1}\ket{s} = 0$. This can only happen when $s$ is an integer. One can show without too much effort that irreducibility implies that $\ket{s}$ is in fact the unique eigenvector of largest eigenvalue $s$ of $J_3$. This in turn forces $J_3$ to have the diagonalization (\ref{eq:diagonalized J3}). Once the diagonalization is determined, commutation relations decide the matrix elements of $J_\pm$. We thus realize that \textit{any} irreducible representation of $\su(2)$ is exactly determined by the largest ``eigenvalue'' $s$ of $J_3$. 
    
    We should be a careful saying ``eigenvalue'': after all, rescaling $J_3\mapsto cJ_3$ with $c\in \C$ yields an equivalent representation via Definition \ref{def:equivalent reps} but the eigenvalues are rescaled by $c$. This is easily remedied by thinking of $s$ not as an eigenvalue, but as a linear functional $s:\h \to \C$ where $\h := \text{span}_\C \{ \sigma^Z\}$ called a \textit{weight}.\footnote{This $\h$ is the first example of a \text{Cartan subalgebra}, which we will encounter later.} Letting $\ket{s},\ket{s-1}, \dots ,\ket{-s}$ be the eigenbasis of $V$ for $\pi(\sigma^Z)$, we define weights $m:\h\to \C$ by
    \begin{equation}
        \pi(\sigma^Z) \ket{m} = m(\sigma^Z) \ket{m} ,
    \end{equation} i.e. $m(\sigma^Z)$ is the $m^{th}$ diagonal entry of the representative $\pi(\sigma^Z)$. This has the benefit of storing eigendata while being invariant under rescaling. We will need a way of (partially ordering) these weights, but for now, the linear functional $s$ which corresponds to the largest eigenvalue of $\pi(\sigma^Z)$ will be called the highest weight. Then, per the argument above, irreps of $\su(2)$ are uniquely labeled by their highest weights $s$. This justifies our saying ``the spin-$s$ rep of $\su(2)$''. This is the $\su(2)$ case of the Theorem of Highest Weight, which we will present (without proof) in Section \ref{sec:highest weight reps}.

\subsection{Clebsch-Gordan and Tensor Decomposition}
    We can now identify any irrep of $\su(2)$ as a spin-$s$ irrep. Theorem~\ref{thm:complete reducibility} guarantees that any unitary representation (which is every finite dimensional representation by Weyl's unitary trick Proposition~\ref{prop:Weyl unitary trick}) is completely reducible and so can be decomposed into irreps. Tensor representations are of particular importance to quantum physicists and Lie theorists alike, and so understanding their decomposition into irreps is a subject of great study. Let $\mu,\nu\in \frac{1}{2}\N$ be spins labeling irreps $V_\mu, V_\nu$. We intend to study the irrep decomposition\footnote{In physics, this is often referred to as the ``Clebsch-Gordan decomposition''.}
    \begin{equation}\label{eq:tensor decomp su2}
        V_\mu \otimes V_\nu \cong \bigoplus_{\lambda\in \frac{1}{2}\N} V_{\lambda}^{\oplus c_{\mu,\nu}^\lambda} , 
    \end{equation} where $c_{\mu,\nu}^\lambda$ is an integer labeling the multiplicity with which the irrep $V_{\lambda}$ appears. Let us start off with a basic but illuminating example, wherein two identical spin-1/2 particles are rotated by an identical amount $U_g\in SU(2)$. Physically, one might think of this as an identical rotation of two dipoles, which preserves their alignment: ``rotated ferromagnets are still ferromagnets''. We will first need a lemma, an especially important tool which generalizes the commonly encountered lemma ``two diagonalizable operators are simultaneously diagonalizable if and only if they commute''.

    \begin{lemma}{(Simultaneous Block Diagonalization)}\label{lem:sim block diag}
    
    \noindent Let $U: G\to GL(V)$ be a representation of a group $G$ (or a Lie algebra $\g$), and let $H=H^*$ be a Hermitian operator such that $[U_g, H] = 0$ for all $g\in G$. Then, for any eigenvector $\ket{\psi}$ of $H$ with eigenvalue $\lambda$, $U_g \ket{\psi}$ is also an eigenvector of $H$ with eigenvalue $\lambda$.
    \end{lemma}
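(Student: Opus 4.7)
The plan is essentially a one-line computation exploiting the commutator hypothesis. Given an eigenvector $\ket{\psi}$ of $H$ with $H\ket{\psi} = \lambda \ket{\psi}$, I would act with $H$ on $U_g\ket{\psi}$ and use $[U_g,H]=0$ to move $H$ past $U_g$, reducing to the known eigenvalue equation. Explicitly,
\begin{equation}
H\,U_g \ket{\psi} \;=\; U_g\, H\ket{\psi} \;=\; U_g (\lambda \ket{\psi}) \;=\; \lambda\, U_g\ket{\psi},
\end{equation}
which is exactly the desired conclusion.

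A couple of remarks on the framing. The hypothesis that $H$ is Hermitian is not actually needed for the statement as written, and indeed the computation above works for any linear operator commuting with the action. The self-adjointness of $H$ will matter for the intended downstream use: one will want to apply the result to eigenspaces of $H$, conclude that each eigenspace is $U_g$-invariant, and then further invoke the spectral theorem (so that the eigenspaces span $V$ and are mutually orthogonal) to get a genuine simultaneous block diagonalization. I would mention this briefly but not develop it in the lemma itself.

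There really is no obstacle here; the only thing to be a little careful about is the dual formulation for Lie algebras. If instead $\pi:\g\to \gl(V)$ is a Lie algebra representation with $[\pi(X),H]=0$ for all $X\in \g$, the identical three-line computation gives $H\,\pi(X)\ket{\psi} = \lambda\, \pi(X)\ket{\psi}$, so again the $\lambda$-eigenspace of $H$ is $\pi(X)$-invariant. I would state the Lie algebra version in parallel to the group version, so the lemma can be cited directly in either setting (which is how it will be applied when decomposing tensor representations of $\su(2)$ into irreducible components in the following subsection).
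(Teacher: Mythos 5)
Your proof is correct and coincides with the paper's one-line computation moving $H$ past $U_g$ using the commutation hypothesis. The additional observations (Hermiticity unused in the lemma itself, the parallel Lie algebra version) are accurate asides but do not change the argument.
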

    \begin{proof}
        Observe that for any $g\in G$,
        \[
        H(U_g\ket{\psi}) = U_g H\ket{\psi} = U_g \lambda \ket{\psi} = \lambda (U_g \ket{\psi}) .
        \]
    \end{proof}
    \begin{remark}
        As a consequence, this means that for such a commuting pair $U,H$, any irreducible invariant subspace of $V$ must be contained in a $\lambda$-eigenspace of $H$.
    \end{remark}
    \begin{example}{The 2 qubit tensor rep $\C^2\otimes \C^2$ of $\g = \su(2)$} \label{ex:2 qubit tensor rep of su(2)}
    
    Recall from Example \ref{ex:def rep su2} that the defining representation $U_{(\cdot)}: SU(2) \to GL(\C^2)$ of $SU(2)$ is given by mapping each group element to its corresponding matrix $g = U_g$.
    We can tensor two of these together to construct another unitary representation. The tensor representation $\Pi := U^{\otimes 2}$, given by the map $\Pi: SU(2) \to GL(\C^2\otimes \C^2)$ defined by $g\mapsto \Pi(g) = U_g\otimes U_g$.
    Now, observe that this representation commutes with the Hermitian operator $\SWAP$, which is defined by $\SWAP \ket{u v} = \ket{vu}$:
    \[
        [U_g\otimes U_g, \SWAP] = 0 \qquad \text{for all } g\in SU(2). 
    \] $\SWAP$ can be readily diagonalized: it acts as $\idty$ on the symmetric subspace $\Sym^2(\C^2) = \text{span}\{ \ket{\up\up}, \ket{\up\down}+\ket{\down\up}, \ket{\down\down}\}$ and as $-\idty$ on the antisymmetric subspace $\Exterior^2(\C^2) = \text{span}\{\ket{\up\down}-\ket{\down\up} \}$.
    Thus, by the simultaneous block diagonalization Lemma \ref{lem:sim block diag}, every representative is block-diagonal in the ordered basis of $V = \Sym^2(\C^2)\oplus \Exterior^2(\C^2)$
    \begin{equation}\label{eq:su2-block-prod}
    U_g\otimes U_g =\begin{pmatrix} * & * & * & 0 \\
   * & * & * & 0 \\
    * & * & * & 0\\
    0 & 0 & 0 & * 
    \end{pmatrix}\, , \qquad  
    \SWAP = \begin{pmatrix}
        & & & 0 \\
        & \idty & & 0 \\
        & & & 0\\
        0 & 0 & 0 & -\idty 
    \end{pmatrix} \, .
\end{equation} This is exactly saying that $\Sym^2(\C^2)$ and $\Exterior^2(\C^2)$ are invariant subspaces of the representation $U^{\otimes 2}$.

Now, irreducibility of $\Exterior^2(\C^2)$ is immediate: it is a 1-dimensional representation and thus has no nontrivial invariant subspaces.
The irreps of $\su(2)$ are uniquely classified by spin, which is related to the dimension $d = 2s+1$. Since $d=1$, $s=0$, and so $\Exterior^2(\C^2)$ is the trivial representation from Example \ref{ex:trivial rep of su2}.

To discern the irreducibility of $\Sym^2(\C^2)$ takes only slightly more work. We proceed by diagonalizing the representative $\pi(\sigma^Z)$: if it matches (up to scalar multiplication) the diagonalization (\ref{eq:diagonalized J3}), then since each eigenvalue is distinct and shuffled about by the raising and lowering operators $\pi(\sigma^\pm)$ it must be irreducible, whence the Theorem of Highest Weight \ref{thm:highest weight reps} will uniquely identify it. But if any of the eigenspaces have dimension $>1$ (or as the physicists would say, if the eigenvalues have degeneracy), it must be that $\Sym^2(\C^2)$ is reducible from our classification of $\su(2)$ irreps.\footnote{A warning: other compact Lie groups can have irreps whose weight spaces--the generalization of this $\pi(\sigma^Z)$ eigenspace--are not necessarily 1-dimensional.}

The tensor representation $\Pi$ induces the tensor representation of the Lie algebra $\pi: \su(2) \to \gl(\C^2 \otimes \C^2)$ defined by $\pi(\sigma^Z) = \sigma^Z\otimes \idty + \idty \otimes \sigma^Z$. We will work on $\su(2)_\C$ so we can write $-i\pi(i\sigma^Z) = \pi(\sigma^Z)$. Now, notice that $\pi(\sigma^Z)$ is already diagonal in the earlier basis of $\Sym^2(\C^2)$:
\begin{equation}\begin{split} \label{eqn:Sym diagonalization}
    \pi(\sigma^Z) \ket{\up\up} &= 2 \ket{\up\up} \\
    \pi(\sigma^Z) \paran{\ket{\down\up} + \ket{\up\down}} &= 0 \paran{\ket{\down\up} + \ket{\up\down}} \\
    \pi(\sigma^Z) \ket{\down\down} &= -2 \ket{\down\down}.
\end{split}\end{equation}
So, rescaling, we see that $\pi(\frac{1}{2}\sigma_Z) = \begin{pmatrix}
    1 & 0 & 0 \\
    0 & 0 & 0 \\
    0 & 0 & -1
\end{pmatrix}$. Any two representations related by rescaling are equivalent, so this is exactly the spin-1 representation from Example \ref{ex:adjoint rep of su2}, aka the adjoint representation. 

While we are here, we may as well glance at the behavior of the raising and lowering operators from Example \ref{ex:the spin s irrep}. Recall that in the defining representation, we had $\sigma^+ = \begin{pmatrix} 0 & 1 \\ 0 & 0 \end{pmatrix}, \sigma^- = \begin{pmatrix} 0 & 0 \\ 1 & 0 \end{pmatrix}$, or 
    \[
    \ket{\down} \xmapsto{\sigma^+ \cdot } \ket{\up} \xmapsto{\sigma^+ \cdot } 0, \qquad \ket{\up} \xmapsto{\sigma^- \cdot } \ket{\down} \xmapsto{\sigma^- \cdot } 0.
\] Then in this tensor representation we have $\pi(\sigma^\pm) = \sigma^\pm \otimes \idty + \idty \otimes \sigma^\pm$, so 
    \begin{equation}\begin{split} \label{eqn:raising and lowering spin 1}
        &\ket{\down\down} \xmapsto{
        \pi(\sigma^+) \cdot } \paran{\ket{\up\down}+\ket{\down\up}} \xmapsto{\pi(\sigma^+) \cdot } 2\ket{\up\up} \xmapsto{\pi(\sigma^+) \cdot } 0  \\
        &\ket{\up\up} \xmapsto{\pi(\sigma)^- \cdot } \paran{\ket{\down\up}+\ket{\up\down}} \xmapsto{\pi(\sigma)^- \cdot } 2\ket{\down\down} \xmapsto{\pi(\sigma)^- \cdot } 0. 
    \end{split}\end{equation}
    As promised for any spin-$s$ irrep, the raising operator $\pi(\sigma^+)$ maps between the $-1\mapsto 0 \mapsto 1 $ eigenspaces of $\pi(\frac{1}{2}\sigma^Z)$ and annihilates the highest spin vector $\ket{\up\up}$ with $s=1$. 
    \end{example}

    Before fully answering the $\su(2)$ tensor irrep decomposition problem, i.e. computing the constants $c_{\mu,\nu}^\lambda$ for any spins $\mu,\nu$ from Equation (\ref{eq:tensor decomp su2}), let us think a bit more abstractly about properties of $\otimes, \oplus$. It is a standard fact, essentially linear algebra, that given reps $V_\mu,V_\nu,V_\kappa$, the distributive property holds:
    \begin{align*}
        V_\mu \otimes (V_\nu \oplus V_\kappa) &\cong (V_\mu \otimes V_\nu) \oplus (V_\mu \otimes V_\kappa )\\
        (V_\mu \oplus V_\nu) \otimes V_\kappa &\cong (V_\mu \otimes V_\kappa ) \oplus (V_\nu \otimes V_\kappa) , 
    \end{align*} and likewise, these operations on representations are associative. Indeed, the set of linear finite dimensional representations $\mathrm{Rep}_G$ equipped with the ``product'' $\otimes$ and ``sum'' $\oplus$ operations forms a ring, called the \textit{representation ring} of $G$. It is in fact a ring with identity, where the identity is the trivial representation $V_0$, since one can show that for any representation $V_\mu$, $V_0\otimes V_\mu \cong V_\mu \otimes V_0 \cong V_\mu$. This observation, combined with the following proposition, will allow us to do essentially all of our routine spin calculations.
    
\begin{proposition}(Clebsch-Gordan Rule)\label{prop:clebsch-gordan}
    
    Let $\mu,\nu\in \frac{1}{2}\N$ be spins labelling $\su(2)$ irreps $V_\mu, V_\nu$. Then their tensor product decomposes into irreps as 
    \begin{equation} \label{eq:clebsch-gordan}
        V_\mu \otimes V_\nu \cong V_{\mu+\nu} \oplus V_{\mu+\nu - 1} \oplus \dots \oplus V_{\,\abs{\mu - \nu}}. 
    \end{equation}
    
    \end{proposition}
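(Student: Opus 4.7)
The plan is to reduce the statement to a weight-counting computation by exploiting the classification of $\su(2)$-irreps by their highest weight (Example~\ref{ex:the spin s irrep}) together with complete reducibility of finite-dimensional unitary representations (Theorem~\ref{thm:unitaries are completely reducible} combined with Weyl's unitary trick, Proposition~\ref{prop:Weyl unitary trick}). First I would observe that since $V_\mu \otimes V_\nu$ is finite-dimensional and $SU(2)$ is compact, the tensor representation is completely reducible, so
\begin{equation}
    V_\mu \otimes V_\nu \cong \bigoplus_\lambda V_\lambda^{\oplus c_{\mu,\nu}^\lambda}
\end{equation}
for some nonnegative integers $c_{\mu,\nu}^\lambda$, and it suffices to identify these multiplicities.

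Next I would study the spectrum of $J_3 := -i \pi_{\mu\otimes\nu}(\tfrac{i}{2}\sigma^Z)$ on the tensor product. In $V_s$ the operator $J_3$ has simple spectrum $\{-s,-s+1,\dots,s\}$, and on the tensor product it acts as $J_3 = J_3^{(\mu)}\otimes \idty + \idty \otimes J_3^{(\nu)}$, so the vectors $\ket{m}\otimes\ket{m'}$ form a weight-eigenbasis with eigenvalue $m+m'$. Let $N(k)$ denote the multiplicity of the weight $k$ in $V_\mu\otimes V_\nu$; this counts pairs $(m,m')\in\{-\mu,\dots,\mu\}\times\{-\nu,\dots,\nu\}$ with $m+m'=k$. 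Assuming WLOG $\mu\ge\nu$, a direct count gives
\begin{equation}
    N(k) = \begin{cases}
        2\nu+1, & |k|\le \mu-\nu, \\
        \mu+\nu-|k|+1, & \mu-\nu\le |k|\le \mu+\nu, \\
        0, & |k|>\mu+\nu.
    \end{cases}
\end{equation}
On the other hand, from the decomposition above each summand $V_\lambda$ contributes $1$ to $N(k)$ whenever $|k|\le \lambda$, so for $k\ge 0$ we have $N(k) = \sum_{\lambda \ge k} c_{\mu,\nu}^\lambda$.

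Solving recursively by taking successive differences $c_{\mu,\nu}^\lambda = N(\lambda) - N(\lambda+1)$ yields $c_{\mu,\nu}^\lambda = 1$ precisely when $|\mu-\nu|\le \lambda \le \mu+\nu$ and $c_{\mu,\nu}^\lambda = 0$ otherwise, which is exactly the Clebsch-Gordan rule. The main obstacle is not conceptual but combinatorial bookkeeping---getting the three-case formula for $N(k)$ correct and verifying that the telescoping recovers $c_{\mu,\nu}^\lambda\in\{0,1\}$ in the claimed range. An alternative route I might prefer in a longer treatment is to directly exhibit, for each $\lambda\in\{|\mu-\nu|,\dots,\mu+\nu\}$, a nonzero highest-weight vector $v_\lambda\in V_\mu\otimes V_\nu$ of weight $\lambda$ annihilated by $\pi_{\mu\otimes\nu}(\sigma^+)$---which generates a copy of $V_\lambda$ under the raising/lowering operators of (\ref{eq:comm relations raising lowering})---and then verify the dimension identity $\sum_{\lambda=|\mu-\nu|}^{\mu+\nu}(2\lambda+1) = (2\mu+1)(2\nu+1)$ to rule out further summands.
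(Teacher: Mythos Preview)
Your proof is correct. The paper takes a closely related but formally different route via characters: it defines $\chi_\mu(e^{i\theta\sigma^Z/2}) = q^\mu + q^{\mu-1} + \cdots + q^{-\mu}$ (with $q=e^{i\theta}$), notes that characters are multiplicative under tensor product and additive under direct sum, and then expands the Laurent polynomial $\chi_\mu\chi_\nu$ directly to recognize it as $\chi_{\mu+\nu}+\cdots+\chi_{|\mu-\nu|}$. Your weight-multiplicity argument and the paper's character computation are really the same combinatorics in two languages---the character evaluated at a torus element is precisely the generating function $\sum_k N(k)q^k$ for your weight multiplicities $N(k)$, and the polynomial manipulation is the generating-function packaging of your telescoping $c_{\mu,\nu}^\lambda = N(\lambda)-N(\lambda+1)$. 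The character approach is more compact and points toward the general machinery for other compact Lie groups; your direct count is more elementary in that it avoids introducing characters and the (here tacit) fact that distinct irreps have distinct characters.
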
 \begin{proof}
        Let $\Pi_\mu: SU(2)\to \GL(V_\mu)$ be a representation. Consider the ``character'' function $\chi_\mu:SU(2)\to \C$ given by 
        \begin{equation}
            \chi_\mu (g) = \Tr_{V_{\mu}} \Pi_\mu(g) . 
        \end{equation} We stated that any $SU(2)$ spin-$s$ irrep is equivalent to the form presented in Example \ref{ex:the spin s irrep}. Since the trace is invariant under equivalence of representations, i.e. under change of basis $\Tr A = \Tr S A S^{-1}$, we have that
        \begin{equation} \label{eq:character of sigmaZ}
            \chi_\mu\paran{e^{\frac{i\theta}{2} \sigma^Z}} = q^{\mu} + q^{\mu-1} + \dots + q^{-\mu}.
        \end{equation} where $q=e^{i\theta}$ is a complex number. By inspection it is clear that any irrep of $SU(2)$ yields a distinct character. By properties of trace, we have that the character of the direct sum $V_\mu \oplus V_\nu$ is $\chi_\mu + \chi_\nu$, and the character of the tensor product $V_\mu\otimes V_\nu$ is $\chi_\mu \cdot \chi_\nu$. WLOG let $\mu \geq \nu$. Then the tensor product character is 
        \begin{align*}
            \chi_\mu\cdot \chi_\nu &= (q^{\mu} + q^{\mu-1} + \dots + q^{-\mu}) (q^{\nu} + q^{\nu-1} + \dots + q^{-\nu}) \\
            &= (q^{\mu + \nu} + q^{\mu + \nu-1} + \dots + q^{-\mu - \nu}) + (q^{\mu + \nu-1} + \dots + q^{-\mu - \nu +1}) \\
            &\quad + \dots + (q^{\mu - \nu} + q^{\mu - \nu-1 } + \dots + q^{-\mu + \nu+1 } + q^{-\mu + \nu}) \\
            &= \chi_{\mu+\nu} + \chi_{\mu+\nu - 1} + \dots + \chi_{\mu-\nu}, 
        \end{align*} which by uniqueness is the character of the right hand side of \ref{eq:clebsch-gordan}.
    \end{proof}
    Characters are indispensable tools in representation theory, and this quick proof is just a shadow of their full power. See~\cite{fulton1991representation} for an introduction. 

    \begin{example}{Some examples and special cases}
        \begin{itemize}
            \item $V_{\mu}\otimes V_0 \cong V_\mu$ (the trivial rep is the identity in the representation ring)
            \item $V_{1/2}\otimes V_{1/2} \cong V_{1} \oplus V_{0}$ (see 2 qubit Example \ref{ex:2 qubit tensor rep of su(2)}) 
            \item $V_{2}\otimes V_{1} \cong V_{3} \oplus V_2 \oplus V_{1}$
            \item $V_{\mu}\otimes V_{1/2} \cong V_{\mu +1/2}\oplus V_{\mu -1/2}$
            \item $V_{1/2}\otimes V_{1/2} \otimes V_{1/2}\cong V_{1/2}\otimes (V_{1}\oplus V_0) \cong V_{3/2}\oplus 2V_{1/2}$.
        \end{itemize}
    \end{example}

\subsection{Total Spin: the Quadratic Casimir}
    
    We will now see two commonly appearing examples of quadratic Casimir operators, which will suffice for our needs here and allow us to study the spin interactions from earlier. We should note that these essentially come from the same source: whenever one has a Lie algebra $\g$ equipped with a non-degenerate $\Ad$-invariant bilinear form, it possesses a unique (up to scalars) quadratic Casimir element in the center of its universal enveloping algebra $U(\g)$. To go into this is beyond our scope, and we direct the curious reader to~\cite{kirillov2008introduction,hall2015lie}. All we will say at the moment is that the bilinear form $(A,B) = \Tr AB$ on $\su(2)$ is non-degenerate and $\Ad$-invariant, and every representation of $\su(2)$ stores a representative of the Casimir element. The utility of this element is its centrality: by Schur's lemma \ref{lem:schur}, central elements act as 0 between inequivalent irreps, and as scalar multiples of the identity between equivalent irreps.
    
    Let $\calH = \C^{2s+1}$ be the unitary spin $s$ irrep of $\su(2)$ where $\pi:\su(2)\to \gl(\calH)$. Denote the representatives of the Pauli matrices $S^i = -i\pi(\frac{i}{2}\sigma^i)$, $i=X,Y,Z$, and the representatives of the ladder operators $S^\pm = S^X \pm i S^Y$.
    We have encountered several similar expressions to the \textit{total angular momentum operator}
    \begin{equation} \label{def:total angular momentum}
        \vec{S}^2 := \vec{S}\cdot \vec{S} = (S^X)^2 + (S^Y)^2 + (S^Z)^2. 
    \end{equation} One can quickly compute that $[\vec{S}^2, S^i]=0$ for each generator $i=X,Y,Z$, and so $\vec{S}^2$ commutes with every representative. Schur's lemma \ref{lem:schur} then assures us that $\vec{S}^2 = \lambda \idty$ for some constant $\lambda\in \C$. We may compute this constant exactly. Just as before in (\ref{eq:ladder operators}), commutation relations of the Paulis yield commutation relations of the ladder operators
    \begin{equation}
        [S^Z, S^\pm] = \pm S^\pm, \qquad [S^+,S^-] = 2S^Z . 
    \end{equation} Now, $S^+$ annihilates the highest weight vector $\ket{s} \in \calH$. So, using the above and $[S^X,S^Y]=iS^Z$, we have 
    \begin{equation}\begin{split}\label{eq:eigenvalues of Casimir}
        0 = S^- S^+ \ket{s} &= (S^+ S^- - 2S^Z)\ket{s} \\
        &= \paran{(S^X + iS^Y)(S^X - iS^Y) - 2S^Z} \ket{s} \\ 
        &= \paran{ \vec{S}^2 - (S^Z)^2 - S^Z }\ket{s} \\ 
        &= \lambda \ket{s} - s^2 \ket{s} - s\ket{s}, 
    \end{split}\end{equation} which means that $\lambda = s(s+1)$ and so on the spin-$s$ irrep $\calH$ we have 
    \begin{equation}
        \vec{S}^2 = s(s+1)\idty. 
    \end{equation} 
    
    Let us now consider the case of $\calH \otimes \calH$, the tensor representation of two spin-$s$ particles. This is the setting for all of our nearest-neighbor interactions. The representatives then become $S^i\otimes \idty + \idty \otimes S^i$, or $S^i_1 + S^i_2$,  $i=X,Y,Z$. Then the Casimir operator here, call it $C\in \gl(\calH\otimes \calH)$, is given by
    \begin{equation}\begin{split}
        C :&= (S_1^X + S_2^X)^2 + (S_1^Y + S_2^Y)^2 + (S_1^Z + S_2^Z)^2 \\
        &= \vec{S}_1^2 + \vec{S}_2^2 + 2\vec{S}_1\cdot \vec{S}_2 \\
        &= 2s(s+1)\idty +  2\vec{S}_1\cdot \vec{S}_2 ,
    \end{split}\end{equation} where the last line holds by our earlier calculation of $\vec{S}^2$ on a spin-$s$ irrep, and each tensor slot in $\calH\otimes \calH$ is a spin-$s$ irrep.
    Again, $C$ commutes with every representative.
    The Clebsch-Gordan decomposition (\ref{prop:clebsch-gordan}) tells us that $\calH\otimes \calH\cong V_{2s} \oplus V_{2s-1} \oplus \dots \oplus V_{0}$. Let $P^{(\mu)}$ denote the orthogonal projection onto the invariant spin-$\mu$ subspace $V_{\mu}$. Multiplicity-freeness of this decomposition along with Schur's lemma (\ref{lem:schur}) allows us to expand the Casimir as 
    \begin{equation}
        C = \lambda_{2s} P^{(2s)} + \lambda_{2s-1} P^{(2s-1)} + \dots + \lambda_{0} P^{(0)}, 
    \end{equation} where $\lambda_\mu = \mu(\mu+1)$ by calculation (\ref{eq:eigenvalues of Casimir}) and irreducibility of $V_{\mu}$. Rearranging these two expressions for $C$, we have that
    \begin{equation} \label{eq:spin interaction}
        \vec{S}_1 \cdot \vec{S}_2 = - s(s+1)\idty + \frac{1}{2}(\lambda_{2s} P^{(2s)} + \lambda_{2s-1} P^{(2s-1)} + \dots + 2P^{(1)}) .
    \end{equation}


\section{Examples Revisited: Putting the ``Spin'' in Quantum Spin Systems} \label{sec:Examples Revisited: Putting the ``Spin'' in Quantum Spin Systems}
Let us finally return to our example interactions from earlier: the Heisenberg Chain (\ref{ex:heisenberg chain}), the Majumdar-Ghosh Chain (\ref{ex:Majumdar-Ghosh chain}), and the AKLT chain (\ref{ex:AKLT chain}).

\begin{example}{The Heisenberg Interaction, revisited}\label{ex:heisenberg, revisited}

Let $\calH_x$ be a spin-1/2 particle for every $x\in [a,b]\subseteq \Z$. Then by Clebsch-Gordan (\ref{prop:clebsch-gordan}), $V_{1/2}\otimes V_{1/2}\cong V_{1} \oplus V_{0}$, so by Equation (\ref{eq:spin interaction}) and $\sigma^i = 2S^i$ for $i=X,Y,Z$, the interaction becomes
\begin{align*}
 H_{\Lambda} = - J \sum_{x=a}^b \vec{\sigma}_x \cdot \vec{\sigma}_{x+1} = J\sum_{x=a}^b -4 ( P_{x,x+1}^{(1)} - \frac{3}{4} \idty ) = J \sum_{x=a}^b -4 P_{x,x+1}^{(1)} + 3 \idty 
\end{align*} But our work in the 2 qubit tensor rep Example \ref{ex:2 qubit tensor rep of su(2)} shows that $P^{(1)}$ is the projection onto the +1 eigenspace $\Sym^2(\C^2)$ of the $\SWAP$ operator, i.e. $P^{(1)} = \frac{1}{2}(\idty + \SWAP)$. This means that up to a trivial shift in ground state energy, the Hamiltonian is expressible as
\begin{equation}
    H_{\Lambda} = 2J\sum_{x=a}^b - \SWAP_{x,x+1} . 
\end{equation} In particular, for the ferromagnetic $J>0$, this Hamiltonian rewards symmetric vectors with lower energy.
\end{example}

\begin{example}{The Majumdar-Ghosh Interaction, revisited}\label{ex:Majumdar-Ghosh, revisited}

Again letting $\Lambda = [a,b]$ denote a finite chain of spin-1/2 particles, the Majumdar-Ghosh interaction is closely related to the Heisenberg interaction:
\begin{equation} \begin{split}
    H_{\Lambda} &= \sum_{x=a}^{b-2} h^{(MG)}_{x,x+1,x+2} \\
    &= J \sum_{x=a}^{b-2} 2\vec{\sigma}_x\cdot \vec{\sigma}_{x+1} + \vec{\sigma}_x\cdot \vec{\sigma}_{x+2} . 
\end{split} \end{equation}
Expanding out terms, one realizes that an equivalent formulation is
\begin{equation}
    H_{\Lambda} =  \underbrace{ J \vec{\sigma}_a \cdot \vec{\sigma}_{a+1} - J \vec{\sigma}_{b-1}\cdot \vec{\sigma}_b  }_{\text{``boundary terms''}}+ J \sum_{x=a}^{b-2} \vec{\sigma}_x \cdot \vec{\sigma}_{x+1}+ \vec{\sigma}_x \cdot \vec{\sigma}_{x+2} + \vec{\sigma}_{x+1} \cdot \vec{\sigma}_{x+2} . 
\end{equation}

We again perform a Clebsch-Gordan decomposition: $V_{1/2}\otimes V_{1/2}\otimes V_{1/2}\cong V_{3/2} \oplus 2 V_{1/2}$. One can then perform a similar calculation to that of \ref{eq:spin interaction} by using the quadratic Casimir on $V_{1/2}^{\otimes 3}$ to see that
\begin{equation}
    \vec{\sigma}_{x} \cdot \vec{\sigma}_{x+1}+ \vec{\sigma}_x \cdot \vec{\sigma}_{x+2} + \vec{\sigma}_{x+1} \cdot \vec{\sigma}_{x+2} = 6 P_{x,x+1,x+2}^{(3/2)} - 3\idty . 
\end{equation} So, up to a shift in the spectrum and absorbing positive scalars into $J$, we have
\begin{equation}
    H_{\Lambda} = (\text{boundary terms}) + J \sum_{x=a}^{b-2} P_{x,x+1,x+2}^{(3/2)} .
\end{equation} The kernel of each term consists of two spin-1/2 particles $2V_{1/2}$. We will later find that this Hamiltonian is frustration-free, which will allow us to compute exact ground states for the full chain. 
\end{example}

\begin{example}{The AKLT Interaction, revisited}\label{ex:AKLT, revisited}

Let $\Lambda = [a,b]\subseteq \Z$ be a chain of spin-1 particles and recall the AKLT interaction
\begin{equation} \begin{split}
    H_{\Lambda} &= \sum_{x=a}^{b-1} \frac{1}{3}\idty + \frac{1}{2} \vec{S}_x\cdot \vec{S}_{x+1} + \frac{1}{6} (\vec{S}_x\cdot \vec{S}_{x+1})^2 . 
\end{split} \end{equation} We again do a Clebsch-Gordan decomposition (\ref{prop:clebsch-gordan}) to see that 
\begin{equation}
    V_1\otimes V_1 \cong V_2 \oplus V_1 \oplus V_0 . 
\end{equation} Since the on-site spin is $s=1$, the calculation (\ref{eq:spin interaction}) tells us
\begin{equation}\begin{split}
    \vec{S}_{x}\cdot \vec{S}_{x+1} &= - 2\idty + \frac{1}{2}(2(2+1) P^{(2)} + 2P^{(1)}) \\
    & = -2\idty+3P^{(2)}+P^{(1)} \\
    (\vec{S}_{x}\cdot \vec{S}_{x+1})^2 &= 4 \idty - 3P^{(2)} - 3 P^{(1)} , 
\end{split}\end{equation} where for expanding the square we have used orthogonal projectors have $P^{(s)}P^{(s)}=P^{(s)}$ and that distinct irreducibles are orthogonal from one another, so $P^{(1)}P^{(2)} = P^{(2)}P^{(1)} = 0$. Altogether we have
\begin{equation}
    \frac{1}{3}\idty + \frac{1}{2} \vec{S}_x\cdot \vec{S}_{x+1} + \frac{1}{6} (\vec{S}_x\cdot \vec{S}_{x+1})^2 = \paran{\frac{1}{3} - 1 + \frac{2}{3} } \idty + \paran{\frac{3}{2} - \frac{1}{2} }P^{(2)} + P^{(1)}\paran{\frac{1}{2} - \frac{1}{2}} = P^{(2)} . 
\end{equation} So, the AKLT interaction is in fact an orthogonal projection onto the spin-2 subspace of two neighboring spin-1's:
\begin{equation}
    H_{\Lambda} = \sum_{x=a}^{b-1} P^{(2)}_{x,x+1}.
\end{equation} The ground state space of the 2-site interaction $P^{(2)}_{x,x+1}$ then contains a singlet and a triplet, $V_1\oplus V_0$. We will later learn that the exact ground state space of $H_{\Lambda}$ for finite chains of any length is still a singlet and a triplet. It is worth mentioning that we can think of these representations as arising in a different way. Take two spin-1/2 particles. Then by Clebsch-Gordan we have the following decomposition into irreps:
\begin{equation}
    V_{1/2}\otimes V_{1/2} \cong V_1 \oplus V_0 . 
\end{equation} Thus, the ground state space of the AKLT interaction can be described by two spin-1/2 particles. This observation will later form the basis for the valence bond state (VBS) description of the AKLT ground state, which is intimately tied to the matrix product state (MPS) description. 
\end{example}


\section{Highest Weight Representations} \label{sec:highest weight reps}

In the section following Example \ref{ex:the spin s irrep}, we sketched an argument that every finite dimensional irrep of $\su(2)$ is a spin-$s$ irrep for some $s\in \frac{1}{2}\mathbb{N}$. This is the $\su(2)$ case of the Theorem of Highest Weight, which entirely classifies finite dimensional irreducible representations of a semi-simple Lie algebra $\g$. We will cite the relevant definitions and the theorem. Our story will be highly biased towards two families of Lie algebras: the orthogonal Lie algebras $\so(n)$, which correspond to the series $B_{(n-1)/2}$ for odd $n$ and $D_{n/2}$ for even $n$ in the classification of simple Lie algebras. Each of these algebras are simple, except for $\so(4)\cong \so(3)\oplus \so(3)$ which is semi-simple. Their representation theory is particularly important for this Thesis, as they dictate the behavior of $SO(n)$ AKLT chains we are studying.

Throughout this chapter, we work with a complex semi-simple Lie algebra $\g$.
\begin{definition}
Let $\g$ a complex semi-simple Lie algebra. A \emph{Cartan subalgebra} $\h$ is a maximal abelian subalgebra in $\g$ such that for every $H\in \h$, the adjoint representative $\ad_H: \g \to \g$ is diagonalizable.
\end{definition} Let us find a nice choice of Cartan subalgebra for $SO(n)$ and then proceed.

First, let $R\in SO(3)$. $R$ commutes with $R^* = R^T$, and so we can apply the spectral theorem to diagonalize $R$ over $\C^3$. Since $\det(R) = 1$, the product of the eigenvalues is $1$. $R$ is real, so complex eigenvalues must appear in conjugate pairs, so $A$ must have eigenvalue 1. By $R$ unitary, the remaining pair of eigenvalues must have modulus 1 and so be of the form $e^{i\theta}, e^{-i\theta}$. Let $u,\overline{u}\in \C^3$ denote the eigenvectors of $R$, i.e.
    \[
        Ru = e^{i\theta} u, \qquad R \overline{u} = e^{-i\theta} \overline{u}.
    \] We can define $x := \frac{1}{2}(u - \overline{u})$ and $y:= \frac{-i}{2}(u+\overline{u})$ so that, using $\frac{1}{2}(e^{i\theta} + e^{-i\theta}) = \cos \theta$ and $\frac{-i}{2}(e^{i\theta} + e^{-i\theta}) = \sin \theta$, 
    \begin{equation}\begin{split}
        Rx &= (\cos \theta)x + (\sin \theta)y \\
        Ry &= -(\sin \theta)x + (\cos \theta)y 
    \end{split} \end{equation}
    So there exists a unitary change of basis $V$ such that
    \[
    R = V \begin{pmatrix}
       \cos \theta & \sin \theta  & 0\\
        -\sin \theta & \cos \theta & 0 \\
         0& 0 & 1 
    \end{pmatrix} V^{-1} . 
    \] This argument, used on conjugate pairs of eigenvalues, proves the following.
    \begin{proposition}~\cite{simon1996representations}
    When $n$ even, every $R\in SO(n)$ may be diagonalized to
    \begin{equation}
        R = V\begin{pmatrix}
        \cos \theta_1 & \sin \theta_1 & 0 & 0 &\dots & 0 & 0 \\
        -\sin \theta_1 & \cos \theta_1 & 0 & 0 &\dots & 0 & 0 \\ 
        0 & 0 & \cos \theta_2 & \sin \theta_2 &\dots & 0 & 0 \\
        0 & 0 & -\sin \theta_2 & \cos \theta_2 &\dots & 0 & 0 \\
        \vdots & \vdots & \vdots & \vdots &\ddots & \vdots & \vdots \\
        0 & 0 & 0 & 0 &\dots & \cos \theta_{n/2} & \sin \theta_{n/2}  \\
        0 & 0 &0 & 0 &\dots & -\sin \theta_{n/2} & \cos \theta_{n/2}  
        \end{pmatrix} V^{-1}
    \end{equation} When $n$ odd, the diagonalization is the same, except we add a $1$ in the $(n,n)^{th}$ entry of the above matrix.
\end{proposition}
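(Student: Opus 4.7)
The plan is to extend the argument already sketched for $SO(3)$ to arbitrary $n$ by combining the spectral theorem for unitaries with the reality of $R$, and then organizing the resulting eigendata into $2\times 2$ rotation blocks. First I would view $R\in SO(n)\subseteq U(n)$ as acting on the complexification $\C^n$. Since $R$ is unitary, the spectral theorem provides an orthonormal eigenbasis of $\C^n$ with eigenvalues on the unit circle. Since $R$ is real, its characteristic polynomial has real coefficients, so non-real eigenvalues come in complex conjugate pairs $\{e^{i\theta}, e^{-i\theta}\}$, and moreover the corresponding eigenspaces $E_{e^{i\theta}}$ and $E_{e^{-i\theta}}$ are related by complex conjugation $u\mapsto \bar u$, so in particular they have the same (complex) dimension.

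Next I would handle each conjugate pair of eigenspaces exactly as in the $SO(3)$ sketch. For an eigenvalue $e^{i\theta}$ with $\theta\notin\{0,\pi\}$, pick an orthonormal basis $u_1,\dots,u_k$ of $E_{e^{i\theta}}$; then $\bar u_1,\dots,\bar u_k$ is an orthonormal basis of $E_{e^{-i\theta}}$. The real vectors $x_j := \tfrac{1}{\sqrt{2}}(u_j + \bar u_j)$ and $y_j := \tfrac{-i}{\sqrt{2}}(u_j - \bar u_j)$ then span a $2k$-dimensional real invariant subspace on which $R$ acts, in the ordered basis $(x_1,y_1,\dots,x_k,y_k)$, as $k$ copies of the $2\times 2$ block with angle $\theta$. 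The $\{x_j,y_j\}$ are real and orthonormal with respect to the Euclidean inner product on $\R^n$ because the $u_j,\bar u_j$ are orthonormal in $\C^n$ and $E_{e^{i\theta}}\perp E_{e^{-i\theta}}$.

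The remaining eigenvalues are $\pm 1$. The $+1$ and $-1$ eigenspaces are already defined over $\R$, and the constraint $\det R = +1$ combined with the fact that conjugate pairs contribute $1$ to the determinant forces the multiplicity of $-1$ to be even. For $n$ even this likewise forces the multiplicity of $+1$ to be even; for $n$ odd it forces the multiplicity of $+1$ to be odd. I can then pair an orthonormal basis of the $(-1)$-eigenspace into blocks of two to produce $\theta=\pi$ blocks, pair an orthonormal basis of the $(+1)$-eigenspace into blocks of two to produce $\theta=0$ blocks, and in the odd case set aside a single left-over $+1$ eigenvector as the last diagonal entry. Assembling all these real orthonormal vectors into the columns of $V$ gives the desired block diagonalization with $V$ orthogonal.

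The main obstacle is really just bookkeeping rather than substance: one has to verify that the vectors constructed from different eigenspaces are mutually orthogonal (immediate from orthogonality of eigenspaces of distinct eigenvalues of the unitary $R$) and that the parity argument forcing $\dim E_{-1}$ to be even genuinely uses $\det R = +1$ (as opposed to $R\in O(n)$). A mild point worth stating explicitly is that the argument produces a \emph{real} orthogonal $V$, i.e.\ the conjugation takes place inside $O(n)$, which is the natural statement even though the proposition as written only demands $V$ unitary.
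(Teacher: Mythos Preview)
Your proposal is correct and follows essentially the same approach as the paper: the paper sketches the $SO(3)$ case immediately before the proposition (spectral theorem on $\C^n$, conjugate eigenvalue pairs, recombine eigenvectors into real $x,y$ giving a $2\times2$ rotation block) and then simply asserts that ``this argument, used on conjugate pairs of eigenvalues, proves the following.'' Your write-up is in fact more thorough than the paper's, since you explicitly handle the $\pm 1$ eigenspaces via the parity constraint from $\det R=1$ and note that the resulting $V$ can be taken real orthogonal.
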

Now, we may pass to the Lie algebra $\so(n)$ by taking derivatives in our coordinates $\theta_i$. This gives us a natural choice of Cartan subalgebra $\h$. Let $\ket{j}, j=1,\dots, n$ label an orthogonal basis of $\C^n$, and define the skew-symmetric matrix $L_{j,k} = \ket{j}\bra{k} - \ket{k}\bra{j}$. Then, when $n$ even, a basis for the Cartan subalgebra is given by $\{H_j = L_{2j-1,2j}: 1\leq j \leq n/2\}$. Or in block matrix form, where each copy of $L_{2j-1,2j}=\begin{pmatrix} 0 & 1 \\ -1 & 0 \end{pmatrix}$ may be thought of as specifying a $2\times 2$ matrix block,
    \begin{equation} \label{eq:cartan subalgebra so(n)}
        H_1 = \begin{pmatrix}
             L_{1,2} & & \\
             & \ddots & \\
             & & \ddots
        \end{pmatrix}, \; H_2 = \begin{pmatrix}
             \ddots  & & \\
             & L_{3,4} & \\
             & & \ddots 
        \end{pmatrix}, \dots , \; H_{n/2} = \begin{pmatrix}
              \ddots  & & \\
             & \ddots & \\
             & & L_{n-1,n}
        \end{pmatrix}
    \end{equation} When $n$ is odd, we can obtain a similar basis for $\h$ by taking $\{H_1,H_2, \dots, H_{(n-1)/2}\}$, noticing that the $n^{th}$ row and column of any matrix in $\mathfrak{h}$ is always 0 in this case.

\subsection{Weights of a representation}
    Let $\mathfrak{g}$ be a complex semi-simple Lie algebra and let $\h$ a fixed Cartan subalgebra of $\mathfrak{g}$. Let $(\pi,V)$ be a representation of $\mathfrak{g}$.
    \begin{definition}
    A linear functional $L\in \h^*$ is a \emph{weight} of $\pi$ if there exists a nonzero vector $v\in V$ such that
    \[
        \pi(H) v = L(H) v , \qquad \text{ for all }H\in \h .
    \] The \emph{weight space} corresponding to $L$ is the set of all $v\in V$ satisfying the above, and the \emph{multiplicity} of $L$ is the dimension of the corresponding weight space.\footnote{As a comment, there are many equivalent definitions for a weight, all highlighting different features. Indeed, as in~\cite{hall2015lie}, using e.g. the Hilbert-Schmidt inner product $\inprod{A,B} = \Tr A^* B$, one can identify $\lambda\in \h^*$ with $\inprod{\wt{\lambda}, \cdot}$ where $\wt{\lambda}\in \h$; it is this $\wt{h}$ that is called the weight. This has the advantage of being a bit more visually clear when working with root lattices and the Weyl group action, but we won't be doing that here.}
    \end{definition}
    
    Given a Cartan subalgebra $\h$, there are several commonly chosen bases for the space of linear functionals $\h^*$, each of which comes with its own merits.
    For the Cartan subalgebra (\ref{eq:cartan subalgebra so(n)}) of $\so(n)$, we will take the basis $\{L_j\}$ to be the dual basis to $\{H_j\}$ up to a slightly annoying but necessary constant $i$, i.e.
    \begin{equation}\label{def:weights of cartan subalgebra}
            L_j(H_k) = i\delta_{jk} ,\qquad 1\leq j,k\leq n/2.
    \end{equation}

    \begin{example}{The weights of the $\Sym^2(\C^2)$ representation} 
    Recall the two qubit tensor rep $\pi: \su(2)\to \gl(V)$ where $V = \C^2 \otimes \C^2$ from Example \ref{ex:2 qubit tensor rep of su(2)}. We did some calculations to identify the invariant subspace $\Sym^2(\C^2)$ with the spin-1 representation, but let us revisit them with our new definitions.
    In this case, the Cartan subalgebra is the one-dimensional subalgebra $\h = \text{span} \{H\}$, where $H= \frac{-i}{2}\sigma^Z$. Let us recast the key step, Equation (\ref{eqn:Sym diagonalization}), wherein we simultaneously diagonalized the Cartan subalgebra:
    \begin{equation}\begin{split}
        \pi(H) \ket{\up\up} &= 1\cdot L (H) \ket{\up\up} \\
        \pi(H) \paran{\ket{\up\down} + \ket{\down\up}} &= 0\cdot L (H) \paran{\ket{\up\down} + \ket{\down\up}} \\
        \pi(H) \ket{\down\down} &= -1\cdot L (H) \ket{\down\down},
    \end{split}\end{equation} where $L\in \h^*$ is the functional defined by $L(H) = 1$ (note that $\dim(\h) = \dim(\h^*) = 1$).
    So, each vector in this diagonalization has a corresponding weight: $\ket{\up\up}$ has weight $L$, $\ket{\up\down}+\ket{\down\up}$ has weight $0$, and $\ket{\down\down}$ has weight $-L$. 

    We can also take a moment to reflect on the action of the raising and lowering operators $\sigma^+ = \frac{1}{2}(\sigma^X + i\sigma^Y) = \begin{pmatrix}
        0 & 1 \\
        0 & 0
    \end{pmatrix}$ and $\sigma^- = \frac{1}{2}(\sigma^X - i\sigma^Y) = \begin{pmatrix}
        0 & 0 \\
        1 & 0
    \end{pmatrix}$. We saw that they ``raise and lower'' between eigenvalue rungs on a ladder in (\ref{fig:ladder_ops}). But now, it is clear that we could just as well take the rungs to be labeled as $L, 0, -L$ instead of $1, 0, -1$. Indeed, by doing so we remove the ambiguity caused by rescaling $\sigma^Z\mapsto c\sigma^Z$, which will change the eigenvalues of $\pi(c\sigma^Z)$ to $c,0,-c$, but will not affect the linear functionals. In this way, the representation yields a finite lattice of weight spaces corresponding to integer-size functionals $L$, where we may hop between sites on the lattice using $\sigma^\pm$. 

    Moreover, one may generate the entire representation $\C^3$ by starting with the ``highest weight'' vector, the $\ket{\up\up}$ with weight $L$, and repeatedly acting via $\pi(\sigma^-)$ to generate the lattice. One way to define the ``highest weight'' is say a vector is highest weight if it is annihilated by $\sigma^+$, i.e. $\pi(\sigma^+)\ket{\up\up} = 0$.
    \end{example}
    
    For $\su(2)$, the story is rather one-dimensional, because well, $\h$ is one-dimensional. But it does highlight some important features. The weights of finite dimensional irreps of other semi-simple Lie algebras will similarly form finite lattices. These weight lattices may be thought of as being generated by acting on a distinguished weight vector (the ``highest weight'' vector) with a generalization of raising and lowering operators (the ``roots'' of a Lie algebra). 
    This motivates the following definitions.

    \begin{definition} \label{def:root}
     A nonzero linear functional $\alpha\in \h^*$ is a \emph{root} if there exists a nonzero $X\in \g$ with 
    \begin{equation} 
        [H, X] = \alpha(H) X \qquad \text{for all } H\in\h .
    \end{equation} In other words, $\alpha$ is a nonzero weight of the adjoint representation $\text{ad} : \g \to L(\g)$, where $\text{ad}_H(\cdot) = [H,\cdot]$. We call the weight space corresponding to a root $\alpha$ the \emph{root space} of $\alpha$. 
    \end{definition}
    \begin{example}
        Consider $\su_\C(2)$ and let $\alpha\in \h^*$ be defined by $\alpha(\sigma^Z) = 1$. Then $\alpha$ is a root because the raising and lowering operators $\sigma^{\pm} = \frac{1}{2}(\sigma^X \pm i \sigma^Y)$ enjoy the commutation relations
        \begin{equation}
            [\sigma^Z, \sigma^\pm] = \pm \alpha(\sigma^Z) \sigma^\pm . 
        \end{equation}
    \end{example}
    
     To connect to the $\su(2)$ case, recall that the raising operator annihilates the highest spin vector and otherwise changes the spin $s$ of any other vector by the map $s\mapsto s+1/2$, while the lowering operator annihilates the lowest spin vector and the changes the others by $s\mapsto s-1/2$. Roots similarly allow us to ``hop'' between the weights of a representation in the following sense: if $(\pi,V)$ a representation, $\alpha$ a root with $X_\alpha\in \g$ in the associated root space (i.e. satisfying (\ref{def:root}), and $v\in V$ a vector of weight $\beta$, then
    \begin{equation}\begin{split} \label{eqn:lie theory fundamental computation}
        \pi(H) \pi(X_\alpha)v  &= \pi(X_\alpha) \pi(H)v  + [\pi(H),\pi(X_\alpha)]v \\
        &= \pi(X_\alpha)\beta(H)v + \alpha(H)\pi(X_\alpha) v \\
        &= (\beta(H) + \alpha(H))\pi(X_\alpha) v .
    \end{split}\end{equation} In particular, $v$ is a vector of weight $\beta$ and $\pi(X_\alpha) v$ is a vector of weight $\beta+\alpha$. If $\beta+\alpha$ leaves the finite weight lattice corresponding to the representation $(\pi,V)$, then $\pi(X_\alpha)v = 0$. This is exactly analogous to the $\su(2)$ case. In this case, the weight lattice of an irrep $V$ is a one dimensional finite lattice labeled by the spins of each $v\in V$. The raising operator then annihilates the highest spin vector and lowering operator annihilates the lowest spin vector. One could in principle choose to identify irreps with their highest spin or their lowest spin--a choice has to be made, and the convention is to describe irreps in terms of the highest spin vector, i.e. the vector which is annihilated by the raising operator. The set of positive simple roots serves to generalize this idea and allow us to consistently choose a vector of ``highest weight'' for a finite-dimensional irrep $V$.

    When $n$ is even (Lie type $D_{n/2}$), a set of positive simple roots $\Delta$ in terms of $L_i\in \h^*$ given by (\ref{def:weights of cartan subalgebra}) is 
    \begin{equation}
       \Delta := \{L_1-L_2, L_2-L_3, \dots, L_{n/2-1} - L_{n/2}, L_{n/2-1}+L_{n/2} \}.
    \end{equation} 
    When $n$ is odd (Lie type $B_{(n-1)/2}$),
    a set of positive simple roots $\Delta$ is 
    \begin{equation}
        \Delta := \{L_1-L_2, L_2-L_3, \dots, L_{(n-1)/2-1} - L_{(n-1)/2}, L_{(n-1)/2}\}.
    \end{equation} It is a fact that any root $\alpha$ can be expressed as an integer linear combination of the positive simple roots~\cite{hall2015lie}. Given a root $\alpha$, it is an exercise in linear algebra to find an explicit matrix $X_\alpha\in \g$ in the associated root space (see e.g. Chapter 7.7 of~\cite{hall2015lie}). 

    \begin{definition}
     Let $(\pi,V)$ a representation of $\g$. A \emph{highest weight vector} $v\in V$ is a vector annihilated by all of the positive simple roots, i.e. 
    \begin{equation}
        \pi(X_\alpha) v = 0 , \qquad \text{for all } \alpha\in \Delta .
    \end{equation} The \emph{highest weight} of a representation is then the weight of its highest weight vector.   
    \end{definition}
    In general, this vector may not be unique. But for an irrep $(\pi,V)$, there is exactly one highest weight vector $v$, and one can use the fundamental computation (\ref{eqn:lie theory fundamental computation}) to cyclically generate $V$:
    \begin{equation}
        V = \{ \pi(X_\alpha)v : \alpha \text{ a root}\}.
    \end{equation} We will provide several examples of this in the next section. But first, the big theorem.
    
    \begin{theorem}{(Theorem of highest weight~\cite{hall2015lie})} \label{thm:highest weight reps}
    
    Every irreducible, finite-dimensional representation of $\g$ has a highest weight, and two irreducible, finite-dimensional representations of $\g$ with the same highest weight are isomorphic.
    \end{theorem}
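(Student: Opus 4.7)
The plan is to split the proof into existence of a highest weight in every finite-dimensional irrep $(\pi,V)$, and uniqueness of such an irrep up to isomorphism once the highest weight is fixed. Existence will be a short argument using the weight decomposition and the identity (\ref{eqn:lie theory fundamental computation}); uniqueness will use a ``diagonal subrepresentation'' trick inside $V_1\oplus V_2$, with the main technical ingredient being one-dimensionality of the highest weight space of any irrep.

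For existence, I would first invoke the standard fact (for $\g$ semisimple) that $\pi(\h)$ acts simultaneously diagonalizably on $V$, so $V=\bigoplus_\mu V_\mu$ over a finite set of weights. Equipping these weights with the partial order in which $\mu\leq\nu$ iff $\nu-\mu$ is a nonnegative integer combination of elements of $\Delta$, I would pick a maximal weight $\lambda$ and any nonzero $v\in V_\lambda$. By (\ref{eqn:lie theory fundamental computation}), for each $\alpha\in\Delta$ the vector $\pi(X_\alpha)v$ has weight $\lambda+\alpha$, which cannot occur by maximality, so $\pi(X_\alpha)v=0$. Hence $v$ is a highest weight vector.

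For uniqueness, given two such irreps $(\pi_1,V_1),(\pi_2,V_2)$ with common highest weight $\lambda$ and highest weight vectors $v_1,v_2$, I form the subrepresentation $W\subseteq V_1\oplus V_2$ generated by $v:=(v_1,v_2)$. The canonical projections $p_i:W\to V_i$ are $\g$-equivariant and surjective, since each $p_i(W)$ is a nonzero subrepresentation of the irreducible $V_i$; the composition $p_2\circ p_1^{-1}$ will then be the desired intertwiner, provided both $p_i$ are also injective. For $p_1$, the kernel is a subrepresentation of $0\oplus V_2$, and by irreducibility is either $0$ or all of $0\oplus V_2$; the latter would place $(0,v_2)$ into $W$, making $W_\lambda$ at least two-dimensional. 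Uniqueness thus reduces to showing $\dim W_\lambda=1$.

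This final reduction is the hard part, and I would argue it by Poincar\'e--Birkhoff--Witt. Writing $\g=\mathfrak{n}^-\oplus\h\oplus\mathfrak{n}^+$ as the sum of negative root spaces, the Cartan, and positive root spaces, any element of the universal enveloping algebra $U(\g)$ is a sum of monomials $YHZ$ with $Y\in U(\mathfrak{n}^-)$, $H\in U(\h)$, $Z\in U(\mathfrak{n}^+)$. Because $v$ is annihilated by $\mathfrak{n}^+$ and is an $\h$-eigenvector with weight $\lambda$, PBW forces $W=U(\g)v=U(\mathfrak{n}^-)v$. Iterating (\ref{eqn:lie theory fundamental computation}) with negative roots shows every nonzero monomial in $U(\mathfrak{n}^-)v$ beyond scalar multiples of $v$ lies in a strictly lower weight space. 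Therefore $W_\lambda=\C v$ is one-dimensional, which rules out $\ker p_1=0\oplus V_2$ and completes the argument.
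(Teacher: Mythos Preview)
The paper does not prove this theorem; it is stated with a citation to Hall's book and followed only by the remark that the full statement involves dominant integral weights, with a pointer to \cite{hall2015lie} for details. So there is no in-paper proof to compare against.

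Your argument is correct and is essentially the standard textbook proof (as in Hall or Humphreys). The existence half is fine once you grant the weight-space decomposition for semisimple $\g$; picking a maximal weight for the root-lattice partial order and applying (\ref{eqn:lie theory fundamental computation}) is exactly right. The uniqueness half via the diagonal cyclic submodule $W=U(\g)\cdot(v_1,v_2)\subseteq V_1\oplus V_2$ is also standard, and your reduction to $\dim W_\lambda=1$ via PBW and $W=U(\mathfrak n^-)v$ is the usual way to close the argument. One small point worth making explicit: you only spell out injectivity for $p_1$, but the same reasoning with the roles of $V_1,V_2$ swapped gives injectivity of $p_2$, so that $p_2\circ p_1^{-1}:V_1\to V_2$ is a well-defined $\g$-isomorphism.
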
 This is actually a simplified version of the theorem of highest weight. The full theorem guarantees that such a highest weight must be dominant integral, and that any dominant integral weight produces a unique finite dimensional irreducible representation. We will not expand on these definitions, except to say that the ``dominant'' definition gives a convention to consistently choose a highest weight and the ``integral'' definition ensures that the irrep associated to a highest weight is finite dimensional. See~\cite{hall2015lie} for more information. Let us get to a meaty class of examples.
    
    \subsection{Exterior powers $\Exterior^k V$}\label{sec:exterior power reps}
    
    We begin with the defining representation $V=\C^n$, where all matrices $X\in \so(n)$ coincide with their representatives. We demonstrate the even $n$ case. The Cartan subalgebra $\h$ given by (\ref{eq:cartan subalgebra so(n)}) is then simultaneously diagonalized by the basis $\{f_1,f_2,\dots f_{n/2}, \wt{f}_1, \wt{f}_2,\dots \wt{f}_{n/2} \} \subseteq V$, which we express in terms of the earlier orthonormal basis $\ket{1},\ket{2},\dots \ket{n}$ of $V$:
    \begin{equation}
    f_k := \ket{2k-1} + i\ket{2k}, \quad \wt{f}_k := \ket{2k-1} - i\ket{2k}, \qquad 1\leq k \leq n/2 . 
    \end{equation} The computation below verifies that this basis simultaneously diagonalizes $\h$:
    \begin{equation} \label{eqn:sim diag cartan so(n)}
        H_j f_k = L_k(H_j) f_k, \qquad H_j \wt{f_k} = - L_k(H_j) \wt{f_k} ,
    \end{equation} where the linear functionals $L_k:\h\to \C$ are as in (\ref{def:weights of cartan subalgebra}). For example, in the $n=4$ case, this means that $f_1$ has weight $L_1$, $\wt{f_1}$ has weight $-L_1$, $f_2$ has weight $L_2$, and $\wt{f_2}$ has weight $-L_2$. In particular, one can draw the corresponding weight lattice, where the fundamental computation (\ref{eqn:lie theory fundamental computation}) shows how to use an element of the root space $X_\alpha\in \g$ where $\alpha\in \Delta$ a positive simple root to hop between the weight spaces:
\begin{figure} \centering
\includegraphics{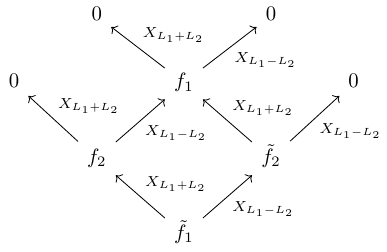}
\caption{The defining representation $V=\C^4$ of $\so(4)$. The Cartan subalgebra $\h$ from (\ref{eq:cartan subalgebra so(n)}) acts as diagonal matrices when we pick the basis $\{f_1,f_2,\wt{f}_1,\wt{f}_2\}$ of $V$. Given a root $\alpha$, an element of the root space $X_\alpha$ maps between the linear subspaces spanned by the written basis elements. For instance, $X_{L_1-L_2} f_2 \in \C f_1$. In this sense roots ``generalize'' the raising and lowering operators of $\su(2)$.}
\end{figure}

    Just as in the $n=4$ case, one can directly compute using explicit matrices $X_{\alpha}$ that $X_\alpha f_1 = 0$ for all $\alpha\in \Delta$, i.e. the highest weight of the defining representation is $L_1$.
    It is a nice exercise to see that this representation is irreducible.\footnote{Hint: each basis vector has a different weight.} By the theorem of highest weight, any irrep with highest weight $L_1$ must be isomorphic to the defining representation.

    We can now easily compute the highest weights of the representations $\Exterior^k V$ for $k<n/2$. The action of $\pi(X)$ is given by Leibniz rule, and after one verifies that $f_1\wedge f_2 \wedge \dots \wedge f_k$ is the highest weight vector, we have
    \begin{align*}
        \pi(H_j) f_1 \wedge \dots \wedge f_k &= \sum_{i=1}^k f_1 \wedge \dots \wedge H_j f_i \wedge \dots \wedge f_k \\
         &= \paran{\sum_{i=1}^k L_i(H_j)} (f_1  \wedge \dots \wedge f_k ),
    \end{align*} and so $f_1\wedge f_2 \wedge \dots \wedge f_k$ has weight $L_1+\dots + L_k$. Thus, the highest weight of the representation $\Exterior^k V$ is $L_1+\dots + L_k$. 

    In the case where $n$ even, $\Exterior^{n/2}V \cong U_+ \oplus U_-$. One of these irreps has weight $L_1+\dots + L_{n/2-1} + L_{n/2}$ with highest weight vector $f_1\wedge f_2 \wedge \dots \wedge f_{n/2-1} \wedge f_{n/2}$, while the other has weight $L_1+\dots + L_{n/2-1} - L_{n/2}$ with highest weight vector $f_1\wedge f_2 \wedge \dots \wedge f_{n/2-1} \wedge \wt{f}_{n/2}$.

    We now state a well-known proposition, which follows from the short discussion following Prop IX.10.4 in~\cite{simon1996representations}.

    \begin{proposition} \label{prop:ext k and ext n-k are isomorphic}
        As representations of $\so(n)$, $\Exterior^k V \cong \Exterior^{n-k} V$.
    \end{proposition}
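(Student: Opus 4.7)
The plan is to construct an explicit $\so(n)$-equivariant isomorphism via the Hodge-star/wedge-pairing, using the fact that $\so(n)$ preserves the standard symmetric inner product on $V = \C^n$.

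First I would observe that, since every $X \in \so(n)$ satisfies $X^T = -X$, the standard symmetric bilinear form $(\cdot,\cdot)$ on $V$ is $\so(n)$-invariant, in the sense that $(Xv,w) + (v,Xw) = 0$ for all $v,w \in V$. Non-degeneracy of $(\cdot,\cdot)$ therefore gives a canonical $\so(n)$-module isomorphism $V \cong V^*$, and hence $\Exterior^k V \cong \Exterior^k V^* \cong (\Exterior^k V)^*$ as $\so(n)$-modules for every $k$.

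Next I would consider the wedge-product pairing
\begin{equation}
    \Exterior^k V \otimes \Exterior^{n-k} V \longrightarrow \Exterior^n V, \qquad \omega \otimes \eta \longmapsto \omega \wedge \eta .
\end{equation}
Because $\so(n)$ consists of traceless matrices, the induced action on the one-dimensional space $\Exterior^n V$ is multiplication by $\Tr X = 0$, so $\Exterior^n V$ is the trivial $\so(n)$-module, isomorphic to $\C$. The pairing above is manifestly $\so(n)$-equivariant and is non-degenerate by standard linear algebra (given a basis of $V$, the basis elements of $\Exterior^k V$ and $\Exterior^{n-k} V$ pair to $\pm e_1 \wedge \dots \wedge e_n$ precisely when the index sets are complementary). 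Consequently it identifies
\begin{equation}
    \Exterior^{n-k} V \cong (\Exterior^k V)^*
\end{equation}
as $\so(n)$-representations.

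Combining the two steps yields
\begin{equation}
    \Exterior^{n-k} V \cong (\Exterior^k V)^* \cong \Exterior^k V^* \cong \Exterior^k V,
\end{equation}
which is the desired isomorphism. Concretely, the composite is a Hodge-star map sending $e_{i_1}\wedge \dots \wedge e_{i_k}$ to $\pm e_{j_1}\wedge \dots \wedge e_{j_{n-k}}$ where $\{j_1,\dots,j_{n-k}\}$ is the complementary index set and the sign is the usual permutation sign. There is no serious obstacle: the only subtlety is making sure the target of the wedge pairing is recognized as the trivial representation, which is immediate from tracelessness of $\so(n)$. One could alternatively verify the isomorphism on the level of highest weights, using that the highest weights $L_1 + \dots + L_k$ and $L_1 + \dots + L_{n-k}$ are related by the Weyl group (and in even $n$ at $k = n/2$ split into the same two summands), but the pairing argument has the virtue of producing an explicit, natural intertwiner.
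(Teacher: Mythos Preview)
Your proof is correct and follows essentially the same route as the paper: both use the $\so(n)$-equivariant wedge pairing $\Exterior^k V \otimes \Exterior^{n-k} V \to \Exterior^n V \cong \mathbf{1}$ to identify $\Exterior^{n-k} V$ with the dual of $\Exterior^k V$, and then invoke self-duality of $\Exterior^k V$. The one noteworthy difference is in how self-duality is established. The paper cites Dynkin diagram automorphisms (Theorem 3.E in Samelson) to argue $\Exterior^k V \cong \overline{\Exterior^k V}$, whereas you give the more direct and elementary observation that the standard symmetric bilinear form on $V$ is $\so(n)$-invariant (since $X^T = -X$), yielding $V \cong V^*$ and hence $\Exterior^k V \cong (\Exterior^k V)^*$ immediately. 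Your argument is thus more self-contained and avoids the detour through structure theory; the paper's phrasing, on the other hand, makes explicit contact with the broader fact about when irreps of simple Lie algebras are self-dual, which is useful context for the later discussion of the half-spin representations $\Pi_\pm$ (which are \emph{not} self-dual when $n \equiv 2 \bmod 4$). The paper also mentions the Hodge star as a second approach, which is precisely the explicit intertwiner your composite produces.
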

    
    There are several approaches to prove this, none of which are particularly difficult. One approach is to use the $\so(n)$-equivariant wedge map $\Exterior^k V\otimes \Exterior^{n-k}V\to \Exterior^n V$ given by $(f,g)\mapsto f\wedge g$, and then check that $\Exterior^n V \cong \mathbf{1}$, the 1-dimensional trivial representation. This then identifies $\Exterior^{k}V \cong \overline{\Exterior^{n-k}}$, recalling that $\overline{(\cdot)}$ is the dual representation (or equivalently by unitarity, the conjugate representation). 
    Then, via e.g. Dynkin diagram automorphisms which implement the map sending a representation to its dual, one realizes that for the Lie types corresponding to $\so(n)$, $\Exterior^{k} V \cong \overline{\Exterior^k V}$ (see Theorem 3.E in~\cite{samelson1990notes} for details). This self-duality is special here: for instance, for $\su(3)$, the representation $\Exterior^1 V$ is not isomorphic to its dual $\Exterior^2 V$. Indeed, there are also representations of $\so(n)$ when $n=2\bmod{4}$ which are famously not self-dual: the ``half-spin'' representations $\Pi_+$ and $\Pi_-$ are not isomorphic but are dual to one another. We will encounter these later when we study Clifford algebras. 
    
    Another approach is to define the Hodge star map. Given an inner product $\inprod{\cdot,\cdot}$ on $V$ and a distinguished unit vector $\eta\in \Exterior^n V$,\footnote{A choice of orientation, if you wish.} this is the unique unitary\footnote{Note that in~\cite{simon1996representations}, Simon first defines the Hodge star as an antiunitary map. Our $\star$ consists of Simon's star composed with complex conjugation, which he discusses after the proof of the proposition.} map $\star: \Exterior^k V \to \Exterior^{n-k} V$ such that for all $u,v \in \Exterior^k V$,  
    \begin{equation} \label{def:original hodge star on exterior algebras}
        u \wedge \star v = \inprod{u,v} \eta .
    \end{equation} One can show that this map in fact intertwines the $SO(n)$ representations $\Exterior^k V\cong \Exterior^{n-k} V$. 


\section{The Rank-\texorpdfstring{$n$}{n} Clifford Algebra \texorpdfstring{$\calC_n$}{Cn}}\label{sec:Clifford algebras}
    We recall some standard results for Clifford algebras, see e.g.~\cite{simon1996representations,fulton1991representation}. The rank-$n$ Clifford algebra $\calC_n$ is generated by gamma operators $\gamma_1,\dots, \gamma_n$ subject to the relations
	\begin{equation} \label{def:gamma operators}
		\gamma_i \gamma_j + \gamma_j \gamma_i = 2\idty \delta_{ij} , \qquad \gamma_i^* = \gamma_i , \qquad i,j=1,\dots, n .
	\end{equation}
	We adopt the convention of defining the ``fifth gamma matrix'' $\gamma_0$ in the following way:
	\begin{equation} \label{def:gamma_0}
		\gamma_0 := \begin{cases}
			\gamma_1 \dots \gamma_n & n\equiv 0,1 \text{ mod}4 \\
			i\gamma_1 \dots \gamma_n & n\equiv 2,3 \text{ mod}4 \\
		\end{cases}
	\end{equation} The coefficient in front ensures that $\gamma_0^2 = \idty$ and $\gamma_0^*=\gamma_0$. Also note that if $n$ even, $\{\gamma_0,\gamma_i\} = 0$, while if $n$ odd, $[\gamma_0,\gamma_i]=0$.
	The element leads us towards a key definition: define a pair of mutually orthogonal commuting projectors $P_\pm\in\calC_n$ which sum to the identity $\idty\in \calC_n$ by
	\begin{equation} \label{def:P+ and P-}
		P_\pm := \frac{1}{2}(\idty \pm \gamma_0), \qquad \idty = P_+ + P_-, \qquad P_+P_- = P_-P_+ = 0 .
	\end{equation}
    When $n$ is even, we have that as associative algebras, $\calC_n\cong M_{2^{n/2}}(\C)$, the $2^{n/2}\times 2^{n/2}$ complex matrices. When $n$ is odd, since $[\gamma_0,\gamma_i]=0$, it can be checked on the basis (\ref{basis for gamma operators}) that $\gamma_0$ is a nontrivial element in the center, and so we have that as associative algebras, $\calC_n\cong P_+\calC_n \oplus P_-\calC_n$ where $P_+\calC_n\cong P_-\calC_n \cong M_{2^{(n-1)/2}}(\C)$.

    As a special case, when $n=3$, we have $\calC_3 \cong M_2(\C) \oplus M_2(\C)$. We can write $M_2(\C) = \text{span}_{\C}\{\idty, \sigma_x,\sigma_y,\sigma_z\}$, where the Pauli matrices enjoy the Clifford relations $\sigma_i\sigma_j +\sigma_j\sigma_i = 2\delta_{i,j}\idty$, i.e. $M_2(\C)$ is one of two irreps of $\calC_3$. Should one wish to work with Clifford algebras in a more explicit fashion, Chapter IV.3 of~\cite{simon1996representations} demonstrates how to realize their closely related Clifford groups in terms of Pauli strings.
	
	\subsection{Multi-index notation}
	We introduce multi-index notation to express elements of $\calC_n$ generated by products of $\gamma_i$: let 
	\begin{equation}
	\gamma_I =\gamma_{(i_1,\dots, i_k)} := \gamma_{i_1}\dots \gamma_{i_k} .
	\end{equation} where $1\leq i_2 < \dots < i_k \leq n$ and let $\abs{I} = $ \# of indices (e.g. $\abs{(2,3)} = 2$).
	Notice that any product of generators can be reorganized uniquely into $\pm\gamma_I$ by using the anticommutation relations (\ref{def:gamma operators}).
	We adopt the natural convention that if $I = ()$, then $\gamma_I = \idty$.
	The same notation can and will be applied to products $F_I$ described in the next section.
	
	Using this notation, one convenient basis for $\calC_n$ is given by 
	\begin{equation} \label{basis for gamma operators}
		\{\gamma_I : \abs{I}\leq n\} = \{\idty\} \cup \{\gamma_{i}\}_{1\leq i \leq n} \cup  \{\gamma_{i_1}\gamma_{i_2}\}_{1\leq i_1 < i_2 \leq n} \cup \dots \cup \{\gamma_{i_1}\dots \gamma_{i_{n-1}}\}_{1\leq i_1 < \dots < i_{n-1} \leq n} \cup \{\gamma_0\} .
	\end{equation} This basis reflects the vector space decomposition $\calC_n \cong \bigoplus_{k=1}^n \Exterior^k \C^n$. 

	Multi-index notation makes short work of traces of products of generators, as shown in the following lemma. 
	
	\begin{lemma} \label{lem:traces of products of gamma operators}
		Let $I=(i_1,\dots,i_a)$ and $J=(j_1,\dots,j_b)$. Then
		\[
		\gamma_I \gamma_J = \sgn(\pi) \gamma_K,
		\] where $K$ is the ordered symmetric difference of $I$ and $J$, so as sets, $K = (I\cup J) - (I\cap J)$; and $\pi$ is the permutation that orders the concatenation $(I\cup J)$, after omitting common indices in $(I\cap J)$, to the index $K$: i.e. $\pi$ maps $(i_1,\dots, \hat{i_p},\dots, i_a, j_1, \dots, \hat{j_q}, \dots, j_b) \mapsto K$, using $\hat{\cdot}$ to denote absent elements belonging to $I\cap J$.
	\end{lemma}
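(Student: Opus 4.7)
The plan is to reduce $\gamma_I\gamma_J$ to normal form using only the defining relations $\gamma_i\gamma_j=-\gamma_j\gamma_i$ for $i\neq j$ and $\gamma_i^2=\idty$, and then match the accumulated sign to the sign of the sorting permutation $\pi$. It is cleanest to handle the disjoint case first and then bootstrap the general case by induction on $|I\cap J|$.

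Assume first that $I\cap J=\emptyset$. Then $\gamma_I\gamma_J=\gamma_{i_1}\cdots\gamma_{i_a}\gamma_{j_1}\cdots\gamma_{j_b}$ is a product of $a+b$ generators with pairwise distinct indices. Because any two distinct generators anticommute, permuting the factors into any other order incurs exactly $(-1)^{\text{inv}(\sigma)}$, where $\text{inv}(\sigma)$ is the number of inversions of the rearranging permutation $\sigma$. Applying this to the permutation $\pi$ which sorts the concatenation $(i_1,\dots,i_a,j_1,\dots,j_b)$ into $K$ gives $\gamma_I\gamma_J=\sgn(\pi)\gamma_K$ immediately.

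For the general case, I induct on $|I\cap J|$. Pick any $k\in I\cap J$, occurring at position $p$ in $I$ and position $q$ in $J$. I would anticommute the factor $\gamma_k$ of the $J$-block to the left past the $q-1$ factors $\gamma_{j_1},\dots,\gamma_{j_{q-1}}$ and then past the $a-p$ factors $\gamma_{i_{p+1}},\dots,\gamma_{i_a}$ until it sits adjacent to its partner $\gamma_{i_p}=\gamma_k$; the relation $\gamma_k^2=\idty$ then collapses the pair. This produces a sign $(-1)^{(q-1)+(a-p)}$ together with the product $\gamma_{I'}\gamma_{J'}$, where $I':=I\setminus\{k\}$ and $J':=J\setminus\{k\}$ are the tuples with the specified entry deleted (preserving order of the remaining entries). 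Since $|I'\cap J'|=|I\cap J|-1$, the inductive hypothesis yields $\gamma_{I'}\gamma_{J'}=\sgn(\pi')\gamma_K$, where $\pi'$ sorts the concatenation of $I'$ and $J'$ into $K$.

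It remains to verify $(-1)^{(q-1)+(a-p)}\sgn(\pi')=\sgn(\pi)$, which is the main obstacle of the proof: it is an entirely elementary bookkeeping of inversions, but one has to be careful. The idea is that every inversion of $\pi$ is either entirely among the surviving indices (contributing to $\pi'$) or involves at least one of the two copies of $k$ being removed. Inversions of the first type are in bijection with the inversions of $\pi'$. For those of the second type, the copy of $k$ at position $p$ of $I$ sits in the concatenation before positions $p+1,\dots,a$ of $I$ and all of $J$, so it contributes $(a-p)+q$ inversions against entries that end up on its right in $K$ relative to none, and symmetrically the copy of $k$ at position $q$ of $J$ contributes $q-1$ inversions, minus $1$ for the single mutual pairing between the two copies themselves. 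Tallying yields exactly $(q-1)+(a-p)\pmod 2$ additional inversions, completing the induction.
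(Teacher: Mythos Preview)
Your approach is the same as the paper's: handle the disjoint case by anticommuting distinct generators into order, then induct on $|I\cap J|$ by anticommuting one matched pair together and collapsing via $\gamma_k^2=\idty$. The paper's proof is in fact just two sentences and does not attempt to verify the sign identity at all; it effectively treats the lemma as ``$\gamma_I\gamma_J=\pm\gamma_K$ with $K$ the ordered symmetric difference, and the sign is whatever the anticommutations produce.''

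Where your write-up goes beyond the paper is the last paragraph, and that is where it becomes shaky. As the lemma is phrased, $\pi$ is the permutation sorting only the \emph{surviving} indices $(i_1,\dots,\hat i_p,\dots,i_a,j_1,\dots,\hat j_q,\dots,j_b)$ into $K$. Removing one common pair does not change which indices survive, so under that reading $\pi=\pi'$ and $\sgn(\pi)=\sgn(\pi')$, contradicting your claimed identity $(-1)^{(q-1)+(a-p)}\sgn(\pi')=\sgn(\pi)$. Your inversion tally does not add up either: $((a-p)+q)+(q-1)-1\equiv a-p\pmod 2$, not $(q-1)+(a-p)$. The paper's own worked example right after the lemma ($I=(1,2)$, $J=(1,3)$) already exposes the issue: the survivors $(2,3)$ are sorted, so the literal $\sgn(\pi)$ is $+1$, yet the actual sign is $-1$. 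In other words, the statement's description of $\pi$ is informal and really means ``the sign accumulated by all the anticommutations (both collapsing pairs and sorting the remainder).'' Your operational count $(-1)^{(q-1)+(a-p)}$ for collapsing one pair is exactly right; just drop the attempt to match it to inversions of a permutation on the surviving indices alone.
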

     As an example, let $I = (1,2)$ and $J=(1,3)$. Then $\gamma_I\gamma_J = \gamma_1\gamma_2\gamma_1\gamma_3 = -\gamma_1\gamma_1\gamma_2\gamma_3 = -\gamma_2\gamma_3$, so $K=(2,3)$ and $\sgn(\pi) = -1$.
	\begin{proof}
		If no elements are common to $I$ and $J$, then all terms in this product anticommute and as a set, $K=I\cup J$.
		If we have two elements that are common, say $x\in I, x\in J$, then we can anticommute until we have the term $\gamma_{x}\gamma_{x} = \idty$ by (\ref{def:gamma operators}).
	\end{proof}
	\begin{corollary}
		If we let $D := \Tr(\idty)$, then
		\[
		\Tr(\gamma_I \gamma_J) = \begin{cases}
			0 & \text{if } I\neq J \\
			D & \text{if } I = J\text{ and } \abs{I}=0,1 \mod4 \\
			-D & \text{if } I = J\text{ and } \abs{I}=2,3 \mod4  \\
		\end{cases}
		\] 
	\end{corollary}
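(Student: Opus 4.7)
The plan is to invoke the preceding lemma, which rewrites $\gamma_I \gamma_J = \sgn(\pi)\gamma_K$ where $K$ is the ordered symmetric difference of $I$ and $J$. The three cases of the corollary then correspond to $K = \emptyset$ versus $K \neq \emptyset$, and within the former to the sign of $\sgn(\pi)$.

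First, suppose $I = J$. Then $K$ is empty and $\gamma_I \gamma_J = \pm\idty$, so $\Tr(\gamma_I\gamma_J) = \pm D$; only the sign needs computing. I would reorganize the product $\gamma_{i_1}\cdots\gamma_{i_k}\gamma_{i_1}\cdots\gamma_{i_k}$ by shuttling the first $\gamma_{i_1}$ rightward to meet its duplicate through $k-1$ anticommutations before collapsing via $\gamma_{i_1}^2 = \idty$, then repeating with $\gamma_{i_2}$ ($k-2$ swaps), and so on. The net sign is $(-1)^{(k-1)+(k-2)+\cdots+0} = (-1)^{k(k-1)/2}$, which is $+1$ exactly when $k \equiv 0,1\pmod{4}$, matching the stated dichotomy.

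Next, suppose $I \neq J$, so $K$ is nonempty and the goal is $\Tr(\gamma_K) = 0$. The strategy is to exhibit a generator $\gamma_j$ with $\gamma_j\gamma_K\gamma_j^{-1} = -\gamma_K$; cyclicity of the trace then forces $\Tr(\gamma_K) = 0$. Split by the parity of $m := |K|$. When $m$ is odd and $m < n$, any $j \notin K$ works, since $\gamma_j$ anticommutes past each of the $m$ factors in $\gamma_K$, contributing an overall sign $(-1)^m = -1$. When $m$ is even (hence $m\geq 2$), pick any $j \in K$ at position $\ell$; a short bookkeeping calculation gives $\gamma_j\gamma_K = (-1)^{\ell-1}\gamma_{K\setminus\{j\}}$ and $\gamma_K\gamma_j = (-1)^{m-\ell}\gamma_{K\setminus\{j\}}$, whose ratio is $(-1)^{2\ell-1-m} = -1$ because $m$ is even.

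The main obstacle is the single residual case $m = n$ with $n$ odd, where every $\gamma_j$ commutes with $\gamma_K \propto \gamma_0$ and the anticommutation trick fails. Here I would instead invoke the algebra-level structure $\calC_n \cong M_{2^{(n-1)/2}}(\C)\oplus M_{2^{(n-1)/2}}(\C)$ together with the projector decomposition $P_\pm = \tfrac{1}{2}(\idty \pm \gamma_0)$: the central element $\gamma_0$ acts as $+\idty$ on the image of $P_+$ and as $-\idty$ on the image of $P_-$, so the two matrix-trace contributions are equal in magnitude and opposite in sign, giving $\Tr(\gamma_0) = 0$ and closing the argument.
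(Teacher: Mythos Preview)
Your argument is correct, and the organization differs from the paper's in both halves.

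For the sign when $I=J$, you count transpositions directly to get $(-1)^{k(k-1)/2}$, whereas the paper reduces inductively via $\gamma_I^2 = (-1)^{|I|-1}\gamma_{\tilde I}^{\,2}$; these are of course equivalent.

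For $\Tr(\gamma_K)=0$ when $K\neq\emptyset$, you split by the parity of $m=|K|$ and exhibit a generator $\gamma_j$ with $\gamma_j\gamma_K\gamma_j^{-1}=-\gamma_K$, then close the corner case $m=n$ with $n$ odd by invoking the direct-sum decomposition $\calC_n\cong M_{2^{(n-1)/2}}(\C)\oplus M_{2^{(n-1)/2}}(\C)$ on which $\gamma_0$ acts as $\pm\idty$. The paper instead runs an induction on $|K|$: the base case $|K|=1$ inserts $\gamma_q^2=\idty$ for some $q\neq k$ and uses cyclicity, and the step peels off one factor $\gamma_K=\gamma_{\tilde K}\gamma_k$. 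Your parity split is arguably cleaner and makes the mechanism (existence of an anticommuting element) transparent, at the cost of appealing to the structure theorem for the single residual case; the paper's route stays purely at the level of generator relations and cyclicity throughout.
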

	\begin{proof}
		First, observe that if $I\neq J$, then $K\neq ()$. Any nontrivial product $\gamma_K$ must have trace zero, using the relations (\ref{def:gamma operators}) and induction on $\abs{K}$: if $\abs{K}=1$, i.e. $\gamma_K = \gamma_k$, then picking $q\neq k$, we have
		\begin{align*}
			2 \, \Tr\gamma_k &= \Tr (\gamma_q\gamma_q + \gamma_q\gamma_q)\gamma_k \\
			 &=  \Tr \gamma_q\gamma_q\gamma_k + \gamma_q\gamma_k\gamma_q \\
			&=  \Tr \gamma_q(\gamma_q\gamma_k + \gamma_k\gamma_q) \\
			&= 0.
		\end{align*} where the second equality is cyclicity of trace and the third is by Clifford relations.
		
		If $\abs{K}>1$, then $\gamma_K = \gamma_{\wt{K}} \cdot \gamma_k$ where $\abs{\wt{K}} = \abs{K}-1$ and so taking a trace of anticommutation relations forces
		\[
		\Tr(\gamma_{\wt{K}} \cdot \gamma_k) + \Tr(\gamma_k \cdot \gamma_{\wt{K}}) = 0,
		\] whence cyclicity of trace gives $\Tr(\gamma_{\wt{K}}\cdot \gamma_k) = 0$. 
		
		Now, if $I=J$, then $K = ()$ and $\gamma_{K}=\idty$, so we need to determine the sign. In the case $\abs{I} = 1$, $\gamma_i \gamma_i = \idty$. In the case $2<\abs{I}\leq n$, writing $\gamma_I = \gamma_{\wt{I}}\cdot \gamma_i$
		\[
		\gamma_I \cdot \gamma_I = \gamma_{\wt{I}}\cdot \gamma_i\cdot \gamma_{\wt{I}}\cdot \gamma_i = (-1)^{\abs{\wt{I}}} \gamma_{\wt{I}} \cdot \gamma_{\wt{I}} \cdot \gamma_i\cdot \gamma_i = (-1)^{\abs{\wt{I}}} \gamma_{\wt{I}}\cdot \gamma_{\wt{I}} = (-1)^{\abs{I} - 1} \gamma_{\wt{I}}\cdot \gamma_{\wt{I}}.
		\] Induction on $\abs{I}$ gives the desired result.
		
	\end{proof}
 \begin{remark}
     Note that this labeling convention might be subtle when one passes to a representation of a Clifford algebra. Take for example the irrep $M_2(\C) = \text{span}_{\C} \{\idty, \sigma_x, \sigma_y,\sigma_z\}$ of $\calC_3$. Notice that in this representation, $\Tr \sigma_x\sigma_y \sigma_z = 2i$. But this does not contradict our statement: in this representation, $\sigma_x\sigma_y = i\sigma_z$. Indeed, this may be viewed as an artifact of the fact that this irrep is not faithful, since e.g. $\gamma_0 = i\sigma_x\sigma_y\sigma_z$ acts as $i\idty$. This is no longer an issue if we consider faithful irreps, or if we adjust this lemma to treat equivalence classes of strings up to the kernel of the representation. For the even $\calC_n$ we consider here, we are always taking a faithful representation. For the odd $\calC_n$, this requires us to work with the equivalence $\gamma_I \sim \gamma_0\gamma_I$. Then, one resolves the problem by restricting to either odd or even strings $\abs{I}$.
 \end{remark}

\subsection{The even subalgebra $\calC_n^{[ev]}$ and the odd subspace $\calC_n^{[odd]}$} \label{secapp:the even subalgebra}
    Let $n$ be even. Then we can use the commuting mutually orthogonal projectors $P_\pm$ (\ref{def:P+ and P-}) to break $\calC_n\cong M_{2^{n/2}}(\C)$ into four blocks:
    \begin{equation}\label{eqn:Cn splits into 4 blocks}
        \calC_n = P_+\calC_n P_+ + P_+\calC_n P_- + P_-\calC_n P_+ + P_-\calC_n P_-.
    \end{equation} Using that $\gamma_i\gamma_0=-\gamma_0\gamma_i$ for $i=1,\dots,n$ while $[\gamma_0,\gamma_I]=0$ for any even $\abs{I}$, we see that
    \begin{equation}\label{eq:even and odd subspace diagonalizations}
        \calC_n^{[ev]} = P_+\calC_n P_+ + P_-\calC_n P_-, \quad \calC_n^{[odd]} = P_+\calC_n P_- + P_-\calC_n P_+, 
    \end{equation} where $\calC_n^{[ev]}$ has a basis of even products $\gamma_I$ and odd $\calC_n^{[odd]}$ has a basis of odd products. Of course, only $\calC_n^{[ev]}$ is a subalgebra, since the product of two odd $\gamma_I,\gamma_J$ is even.
    
	The even subalgebra is (non-canonically) isomorphic to $\calC_{n-1}$, for instance by defining generators $\{i F_j\st j=2,\dots, n\}$ where the operators $F_j := \gamma_1\gamma_j$ enjoy the relations
	\begin{equation} \label{def:even Gamma operators}
		F_i F_j + F_j F_i = -2\delta_{i,j}\idty, \qquad F_i^* = - F_i , \qquad \text{for all } i,j=2,\dots, n.
	\end{equation} 
    As a consequence, we know that when $n-1$ even (and so $n$ odd), $\calC_n^{[ev]}\cong M_{2^{(n-1)/2}}(\C)$ is a square matrix algebra. When $n-1$ odd (and so $n$ even), the even subalgebra splits into two square matrix algebras $\calC_n^{[ev]}\cong M_{2^{n/2-1}}(\C) \oplus M_{2^{n/2-1}}(\C)$. Let us see this split in another way. Since $\gamma_0$ commutes with even products of gamma operators, it is a nontrivial element in the center $Z(\calC_n^{[ev]})$. So the block decomposition (\ref{eq:even and odd subspace diagonalizations}) is an algebra decomposition:
    \begin{equation} \label{even subalgebra splits into P+ and P-}
		\calC_n^{[ev]} = P_+ \calC_n^{[ev]} \oplus P_-\calC_n^{[ev]} , \qquad n \text{ even },
	\end{equation} where each of the summands is isomorphic to $M_{2^{n/2-1}}(\C)$, the $2^{n/2 - 1} \times 2^{n/2 - 1}$ matrices. As a concrete example, we write a basis for $\calC_4^{[ev]}$ which respects this decomposition in (\ref{concrete basis for Clifford algebra C4}).
		
    \subsubsection{Hodge Duals}\label{secapp:hodge duals}
	We define the ``Hodge star'' map on a basis of products $\{\gamma_I\}\subseteq \calC_n$ by writing
			\begin{equation} \begin{split}\label{def:Hodge dual}
				(\star\gamma_I)\gamma_I &= (-1)^{\abs{I}/2}\gamma_0, \qquad I \text{ even} \\ 
                (\star\gamma_I)\gamma_I &= (-1)^{(\, \abs{I}-1)/2}\gamma_0, \qquad I \text{ odd}. 
			\end{split} \end{equation} Up to a sign, this agrees with the earlier definition on exterior algebras (\ref{def:original hodge star on exterior algebras}), noting that $\gamma_0\in \Exterior^n \C^n \subseteq \calC_n$. Notice also that using anticommutation relations, we have that $\gamma_I^2 = (-1)^{\abs{I}/2}\idty$ for even $\abs{I}$ (or $\gamma_I^2 = (-1)^{(\, \abs{I}-1)/2}\idty$ for odd $\abs{I}$). In any case our sign convention is chosen so that in both the even and odd cases, one can multiply Equation (\ref{def:Hodge dual}) on the right by $\gamma_I$ to obtain
            \begin{equation} \label{eqn:hodge dual is gamma0 multiplication}
                \star \gamma_I = \gamma_0 \gamma_I.
            \end{equation}
            Recalling that $P_\pm = \frac{1}{2}(\idty \pm \gamma_0)$ from (\ref{def:P+ and P-}) are commuting orthogonal projectors, this will allow us to readily simultaneously diagonalize left multiplication $L_{P_\pm}:\calC_n\to \calC_n$, i.e. $L_{P_\pm}\gamma = P_\pm \gamma$. Elements in the image of $L_{P_\pm}$ are eigenvectors of eigenvalue 1 for the left multiplication map $L_{\gamma_0}:\calC_n\to \calC_n$, and elements in the image of $L_{P_-}$ are eigenvectors of eigenvalue $-1$ for $L_{\gamma_0}$. Observe that since $\gamma_0^2=\idty$,
			\begin{align*}
			 L_{\gamma_0} (\gamma_I + \star \gamma_I) &= \gamma_0 \gamma_I + \gamma_0 \star \gamma_I = \star \gamma_I + \gamma_I = \gamma_I + \star \gamma_I \\
            L_{\gamma_0} (\gamma_I - \star \gamma_I) &= \gamma_0 \gamma_I - \gamma_0 \star \gamma_I = \star \gamma_I - \gamma_I = -( \gamma_I - \star \gamma_I) .
			\end{align*} Thus, by starting with the basis $\{\gamma_I: \; \abs{I} \leq n \}$, the above leads to the following diagonalization of $L_{P_{\pm}}$, which respects the decomposition $\calC_n = P_+\calC_n \oplus P_-\calC_n$ from (\ref{eqn:Cn splits into 4 blocks}):
			\begin{equation} \label{eqn:basis gamma_I + star gamma_I} 
			\text{span}\{\gamma_I + \star \gamma_I : \; \abs{I}\leq n/2 \} \cup \{\gamma_I - \star \gamma_I : \; \abs{I}\leq n/2 \} = P_+\calC_n \oplus P_-\calC_n = \calC_n .
			\end{equation}
            When $n$ odd, this is in fact a decomposition into subalgebras.
            When $n$ even, it is only a decomposition of vector spaces like (\ref{eqn:Cn splits into 4 blocks}). But since $\star$ sends even products to even products and odd products to odd products, the trick similarly works after restricting to $\calC_n^{[ev]}$ and $\calC_n^{[odd]}$, obtaining bases respecting their decompositions (\ref{eq:even and odd subspace diagonalizations}). This diagonalization then becomes quite crucial to understanding the spin representation $\Pi$ of $Spin(n)$ from the following Section \ref{sec:the spin representations Pi of SO(n)}. The representatives of $\Pi$ are given by left multiplication by invertible elements $\gamma_I\in \calC_n^{[ev]}$ and so all commute with $\gamma_0$, and thus these $\gamma_I\pm \star \gamma_I$ are bases for the invariant subspaces of $\Pi_+\oplus \Pi_- \cong \Pi$. 
            \begin{example}
                When $n =4$, the above diagonalization of left multiplication by $\gamma_0$ yields the following bases:
                \begin{equation}\begin{split}\label{concrete basis for Clifford algebra C4}
				 &\{\idty + \gamma_0, \; \gamma_1\gamma_2 - \gamma_3\gamma_4, \; \gamma_1\gamma_3 + \gamma_2\gamma_4, \; \gamma_1\gamma_4 - \gamma_2\gamma_3\} \subseteq \im(P_+ \cdot )|_{\calC_n^{[ev]}} \\
				 &\{\idty - \gamma_0, \; \gamma_1\gamma_2 + \gamma_3\gamma_4, \; \gamma_1\gamma_3 - \gamma_2\gamma_4, \; \gamma_1\gamma_4 + \gamma_2\gamma_3\} \subseteq \im(P_- \cdot )|_{\calC_n^{[ev]}}.
			\end{split} \end{equation}
            \end{example}


\section{The Spin Representations \texorpdfstring{$\Pi$}{Pi} of \texorpdfstring{$SO(n)$}{SO(n)}}\label{sec:the spin representations Pi of SO(n)}
    In the following discussion, we cite the construction of spin representations in \cite{simon1996representations} to define a (projective) representation $\Pi:SO(n) \to \exp(\calC_n)$.\footnote{By placing any norm on the finite dimensional matrix algebra $\calC_n$, one sees that the exponential power series $e^A$ is well defined and invertible for any $A\in \calC_n$.} We have not yet defined the notion of projective representation, but the construction begins with a true representation, so we will only define it once it is needed. We will spend some time in this section developing theory for both $\Pi$ and for the ``Adjoint'' representation $\Pi(\cdot)\Pi^{-1}$.
    
	First, choose generators $L_{ij}$ for the Lie algebra $\so(n)$ by picking an orthonormal basis $\{\ket{i}\}$ of $\C^n$ and defining the matrix $L_{ij} = \ket{i}\bra{j} - \ket{j}\bra{i}$ for all $1\leq i<j\leq n$.
	Then, thinking of $\calC_n$ as a Lie algebra with the commutator as its Lie bracket $[\cdot, \cdot]$, observe that the map $\pi: \so(n)\to \calC_n$ which sends $L_{ij}\mapsto \frac{1}{2} \gamma_i\gamma_j$ defines a skew-Hermitian Lie algebra representation since it preserves the commutation relations of $\so(n)$:
		\begin{equation}
			\frac{1}{4}[\gamma_i\gamma_j,\gamma_r\gamma_s] = \delta_{jr}\gamma_i\gamma_s - \delta_{ir}\gamma_j\gamma_s + \delta_{is}\gamma_j\gamma_r - \delta_{js}\gamma_i\gamma_r .
		\end{equation}
		Using Theorem (\ref{thm:alg reps lift to group reps when G simply connected}), this Lie algebra representation $\pi$ then exponentiates to a unitary group representation $\wt{\Pi}$ of $Spin(n)$, the simply connected double cover $Spin(n)/\Z_2 \cong SO(n)$.
		The image of $\pi$ is contained in the even subalgebra $\calC_n^{[ev]}$, so after exponentiating, we can view $\wt{\Pi}$ as left multipiclation in $\calC_n$ via invertible elements of $\calC_n^{[ev]}$. For instance, the operator $\wt{\Pi}(g) = e^{\theta \gamma_1\gamma_2} \in \im(\Pi)$ for all $\theta\in \R$, and for any $\Gamma\in\calC_n$, $\wt{\Pi}(g)\cdot \Gamma = e^{\theta \gamma_1\gamma_2} \Gamma$.
        We will exponentiate the adjoint representation $[\pi,\cdot]$ to $\wt{\Pi}(\cdot)\wt{\Pi}^{-1}$ acting on the vector space $\calC_n$, where for example $\wt{\Pi}(g) \Gamma\, \wt{\Pi}(g)^{-1} = e^{\theta \gamma_1\gamma_2}\Gamma e^{-\theta \gamma_1\gamma_2}$.

        In both even and odd $n$ cases, $\wt{\Pi}$ is a true representation of $Spin(n)$, but it induces a (nontrivial) \textit{projective} representation $\Pi$ of $SO(n)$, meaning that if $w,v\in SO(n)$, then $\Pi(w)\Pi(v) = \alpha(w,v)\Pi(wv)$ where $\alpha(w,v)\in U(1)$ is a phase. As a special case, consider the spin-1/2 representation $\pi$ of $\so(3)$, which is given by the map $\pi:\so(3)\to P_+\calC_3\cong \gl(\C^2)$ sending $L_{ij}\mapsto \frac{1}{2}\sigma_i\sigma_j$ where $1\leq i<j\leq 3$ and $\{\sigma_i\}$ are the Pauli matrices. After exponentiating, we see that
        \begin{equation}\begin{split}
            \Pi(e^{\theta L_{12}})\Pi(e^{\theta L_{13}})\Big|_{\theta=\pi} = e^{\frac{\theta}{2}\sigma_1\sigma_2} e^{\frac{\theta}{2}\sigma_1\sigma_3}\Big|_{\theta=\pi} &= (\cos \frac{\theta}{2}\idty + \sin \frac{\theta}{2} \sigma_1\sigma_2 ) (\cos \frac{\theta}{2} \idty + \sin \frac{\theta}{2} \sigma_1\sigma_3)\Big|_{\theta=\pi} \\
             &= \sigma_1\sigma_2\sigma_1\sigma_3 \\
            &= -\sigma_2\sigma_3
        \end{split}\end{equation} and so for $\theta= \pi\in \R$, we have $\Pi(e^{\theta L_{12}})\Pi(e^{\theta L_{13}})\Pi(e^{\theta L_{12}})^{-1}\Pi(e^{\theta L_{13}})^{-1} = \sigma_2\sigma_3 \sigma_2\sigma_3 = -\idty$. This means that the representation $\Pi:SO(3)\to \gl(\C^2)$ is projective, since $e^{\pi L_{12}}e^{\pi L_{13}} = e^{\pi L_{13}}e^{\pi L_{12}}$ and so $e^{\pi L_{12}}e^{\pi L_{13}} e^{-\pi L_{12}}e^{-\pi L_{13}} = \idty$. It is clear that this computation can be easily repeated by replacing $\sigma_i$ with $\gamma_i$ to see that the spin representation $\Pi$ of $SO(n)$ is indeed projective for all $n\geq 3$. 
        
        One may then ask whether this projective representation may be de-projectivized--i.e. is there a sense in which this projective representation is  ``equivalent'' to a true representation? This question is a bit more subtle and it will be a guiding theme of Chapter \ref{ch:proj-reps}. We will define a natural notion of equivalence, and we will eventually find that this representation is nontrivial projective and may not be de-projectivized. But for now, let us summarize this calculation in a remark.
        \begin{remark} \label{rem:spin reps are projective}
            Let $n$ even and consider the spin representation $\wt{\Pi}:Spin(n)\to \calU(\C^{2^{n/2}})$, where we have identified $\calC_n\cong M_{2^{n/2}}(\C)$ so that $\calU(\C^{2^{n/2}})\subseteq \exp(\calC_n)$. The spin representation maps $\wt{\Pi}(e^{\theta L_{ij}}) = e^{\pi(\theta L_{ij})} =  e^{\frac{\theta}{2} \gamma_i\gamma_j}$ and satisfies $\wt{\Pi}(\idty) = \idty$ and $\wt{\Pi}(-\idty) = -\idty$. It thus descends to a projective representation $\Pi:SO(n)\to P\calU(\C^{2^{n/2}})$. Further, this representation is still projective when we restrict to the subgroup $\Z_2\times \Z_2  = \{\idty, e^{\pi L_{12}}, e^{\pi L_{23}}, e^{\pi L_{12}}e^{\pi L_{23}}\}\subseteq SO(n)$. 

            The same situation occurs when $n$ odd, after recalling that $\calC_n\cong M_{2^{(n-1)/2}}(\C)$.

            As a special case, $Spin(3)\cong SU(2)$, and $\wt{\Pi}:SU(2)\to \calU(\C^2)$ is exactly the defining spin-1/2 representation $\wt{\Pi}(g) = g$.
        \end{remark}

        We have singled out a certain finite subgroup $\Z_2\times \Z_2\subseteq SO(n)$ generated by $\pi$-rotations about two orthogonal axes $e^{\pi L_{12}}, e^{\pi L_{13}}$. This is of significant historical significance: it is this famous projective representation which endows the AKLT chain with a nontrivial $H^2(\Z_2\times \Z_2,U(1))$ SPT index. We will later show the extraction of this index in Chapter \ref{ch:gapped_ground_states_phases}, and indeed the story for $SO(n)$ AKLT chains in Chapter \ref{ch:SO(n)_Haldane_chains} will parallel this calculation.

    \subsection{Weights and reducibility of the spin representations}
        The image of the Cartan subalgebra of $\so(n)$ given by (\ref{eq:cartan subalgebra so(n)}) is given by
    \begin{equation} \label{eq:cartan subalg cliffords}
        \pi(\h) = \text{span}\{\gamma_1\gamma_2,\,  \gamma_3\gamma_4,\dots , \gamma_{n-1}\gamma_n\}.
    \end{equation} When $n$ is odd, $\Pi$ is an irreducible representation of $Spin(n)$ on $\C^{2^{(n-1)/2}}$ (recall that $P_+\calC_n\cong M_{2^{(n-1)/2}}(\C)$). Its highest weight may be computed directly to be $\frac{1}{2}(L_1 + \dots + L_{(n-1)/2})$ (see e.g. Proposition 20.20 in~\cite{fulton1991representation}). When $n$ is even, $\Pi$ is reducible and splits into so-called ``half-spin'' representations $\Pi\cong \Pi_+\oplus \Pi_-$, since $\gamma_0\in Z(\calC_n^{[ev]})$ and so the representation space splits  $\C^{2^{n/2}} \cong \im(P_+) \oplus \im(P_-)$ (recall that $\calC_n \cong M_{2^{n/2}}(\C)$). The respective highest weights of $\Pi_+$ and $\Pi_-$ are $\frac{1}{2}(L_1 + \dots + L_{n/2-1} + L_{n/2})$ and $\frac{1}{2}(L_1 + \dots + L_{n/2-1} - L_{n/2})$.

    This fundamental fact will later have physical consequences for $SO(n)$ AKLT chains: in the odd $n$ case, we obtain a unique ground state $\omega$, while in the even $n$ case, we obtain a pair of dimerized ground states $\omega_\pm$.

    \subsection{The ``Adjoint'' of the Spin Representation $\Pi (\cdot) \Pi^{-1}$} \label{sec:The Adjoint of the Spin Representation Pi (-) Pi}

    Now, while the spin representation $\Pi$ is a projective representation of $SO(n)$, the ``Adjoint'' representation $\Pi (\cdot) \Pi^{-1}$ acting on $\calC_n$ is a true representation of $SO(n)$ since the phases cancel.\footnote{Note that we may see this as the Adjoint representation by restricting the action of $\Pi(\cdot)\Pi^{-1}$ to the faithfully embedded Lie algebra $\Pi(\so(n))\subseteq \calC_n$} 
    It can be shown~\cite{simon1996representations} that this representation has the following action on generators $\gamma_i \in \calC_n$:
		\begin{equation} \label{eq:representation of SO(n) on clifford algebra}
			\Pi(w) \gamma_i \Pi(w)^{-1} = \sum_{j} w_{ji} \gamma_j, \qquad w\in SO(n).
		\end{equation} It is quickly clear how $\Pi(\cdot)\Pi^{-1}$ acts on the entire $\calC_n$, since on basis elements $\gamma_I \in \calC_n$,
        \[
            \Pi(w)(\gamma_{i_1}\gamma_{i_2} \dots \gamma_{i_k} ) \Pi(w)^{-1} = \Pi(w) \gamma_{i_1} \Pi(w)^{-1} \Pi(w) \gamma_{i_2} \Pi(w)^{-1} \dots \Pi(w) \gamma_{i_k} \Pi(w)^{-1}.
        \]
        This representation in fact decomposes into some familiar representations. We will prove this in a very hands-on way which produces nice bases, and do a more representation-theoretic proof in Section \ref{sec:another take on lemma pi reps}.

    \begin{lemma} \label{lem:rep of Pi( )Pi^-1 on bond algebras}
	Let $V=\C^n$ denote the defining representation of $SO(n)$. 
	Then the representation $\Pi(\cdot)\Pi^{-1}$ of $SO(n)$ on the algebra $\calB = P_+\calC_n$ and $\calB = \calC_n$ have the following decomposition into irreducible representations:
	\begin{itemize}
            \item When $n$ odd, $\calB = P_+\calC_n$ and 
            \begin{align*}
                \calB &\cong \Exterior^0 V \oplus \Exterior^2 V \oplus \dots \oplus \Exterior^{n-1} V \\
                &\cong \Exterior^1 V \oplus \Exterior^3 V \oplus \dots \oplus \Exterior^{n} V,
            \end{align*} where we can pick explicit bases for these irreps by
            \[
                \Exterior^k V = \text{span} \{\gamma_I + \star \gamma_I: \abs{I} = k\}.
            \]
            \item When $n$ even, $\calB = \calC_n$ and 
            \[
                \calB \cong \Exterior^0 V \oplus \Exterior^1 V \oplus \dots \Exterior^{n/2 - 1} V \oplus U_+ \oplus U_- \oplus \Exterior^{n/2 + 1} V \oplus \dots \oplus \Exterior^{n-1} V\oplus \Exterior^n V ,
            \] where $U_+\oplus U_- \cong \Exterior^{n/2} V$, with $\dim(U_+) = \dim(U_-)$ and $U_+\not\cong U_-$. We can pick bases for these irreps by 
            \[
            \Exterior^k V = \begin{cases} \text{span}\{\gamma_I + \star \gamma_I: \abs{I} = k\} & k<n/2 \\
            \text{span}\{\gamma_I - \star \gamma_I: \abs{I} = k\} & k>n/2
                \end{cases}
            \] and 
        \[
            U_+ = \text{span}\{\gamma_I + \star \gamma_I : \abs{I} = n/2\}, \quad U_- = \text{span}\{\gamma_I - \star \gamma_I : \abs{I} = n/2\}.
        \]
		\end{itemize}
        
        \end{lemma}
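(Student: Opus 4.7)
The plan is to exhibit an $SO(n)$-equivariant linear isomorphism from $\calC_n$ onto the graded exterior algebra $\bigoplus_{k=0}^n \Exterior^k V$, combine this with the irreducibility results of Section~\ref{sec:exterior power reps}, and control the $P_+\calC_n$ restriction for odd $n$ via the Hodge involution $\star = L_{\gamma_0}$. The intertwiner will be the linear map $\phi:\calC_n \to \bigoplus_{k=0}^n \Exterior^k V$ sending $\gamma_{i_1}\cdots \gamma_{i_k} \mapsto e_{i_1}\wedge \cdots \wedge e_{i_k}$ for ordered $i_1<\cdots<i_k$ on the basis (\ref{basis for gamma operators}). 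Equivariance of $\phi$ will follow from (\ref{eq:representation of SO(n) on clifford algebra}):
\[ \Pi(w)\gamma_{i_1}\cdots\gamma_{i_k}\Pi(w)^{-1} = \sum_{l_1,\ldots,l_k} w_{l_1 i_1}\cdots w_{l_k i_k}\gamma_{l_1}\cdots \gamma_{l_k}, \]
where any tuple with repeated $l_j$'s reduces via $\gamma_l^2=\idty$ to a lower-degree string whose coefficient factors through the orthogonality sum $\sum_l w_{l i_p}w_{l i_q} = \delta_{i_p i_q}$, vanishing for ordered distinct $i$'s; what survives antisymmetrizes into the standard $SO(n)$-action on $\Exterior^k V$. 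Thus each graded piece $W_k := \mathrm{span}\{\gamma_I : \abs{I}=k\}$ is $SO(n)$-invariant with $W_k\cong \Exterior^k V$, giving $\calC_n\cong \bigoplus_{k=0}^n \Exterior^k V$. Combining this with Section~\ref{sec:exterior power reps} (where each $\Exterior^k V$ is irreducible of highest weight $L_1+\cdots+L_k$ for $k\neq n/2$, Proposition~\ref{prop:ext k and ext n-k are isomorphic} supplies $\Exterior^k V\cong \Exterior^{n-k}V$, and $\Exterior^{n/2}V\cong U_+\oplus U_-$ for even $n$) yields the claimed abstract decomposition.

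For the explicit bases in the even-$n$ case, I use that $\Pi(w)\gamma_0\Pi(w)^{-1}=\det(w)\gamma_0 = \gamma_0$ for $w\in SO(n)$, so the Hodge map $\star = L_{\gamma_0}$ of (\ref{eqn:hodge dual is gamma0 multiplication}) is $SO(n)$-equivariant, and since $\star^2 = \idty$, its $\pm 1$ eigenspaces $P_\pm\calC_n$ are $SO(n)$-invariant. For $k\neq n/2$, the vectors $\{\gamma_I + \star\gamma_I : \abs{I}=k\}$ span an $SO(n)$-invariant subspace of $P_+\calC_n$ projecting isomorphically onto $W_k$ under the degree-$k$ projection, while $\{\gamma_I - \star\gamma_I : \abs{I}=k\}$ does the same inside $P_-\calC_n$; the lemma's convention selects the $P_+$ copy for $k<n/2$ and the $P_-$ copy for $k>n/2$, which together exhaust $\calC_n$ outside the middle degree. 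At $k=n/2$, $\star$ restricts to a trace-zero involution of $W_{n/2}$ whose $\pm 1$ eigenspaces are, by dimension counting, the two irreducible half-spin summands $U_\pm$.

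The odd-$n$ case follows by restriction. Since $\gamma_0$ is central for odd $n$, $P_+\calC_n$ is an $SO(n)$-invariant subalgebra on which $\star$ acts as $+\idty$; the identity $P_+\gamma_I \propto P_+\gamma_{I^c}$ pairs $\abs{I}=k$ with $\abs{I^c}=n-k$, so indexing by $\abs{I}\leq (n-1)/2$ avoids double-counting and yields $P_+\calC_n \cong \bigoplus_{k=0}^{(n-1)/2}\Exterior^k V$. Applying the duality $\Exterior^k V\cong \Exterior^{n-k}V$ rewrites this decomposition as either the even-$k$ or the odd-$k$ presentation appearing in the lemma.

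The main technical obstacles I expect are sign bookkeeping. In the intertwining step, the repeated-index contractions for $k\geq 4$ could in principle involve multiple independent pairs $(l_p=l_q,\, l_r=l_s)$ with coefficients $\delta_{i_p i_q}\delta_{i_r i_s}$; these still vanish for ordered distinct $i$'s, but the cleanest justification passes through the infinitesimal $\so(n)$-action, where Leibniz applied to $[\pi(L_{ab}),\gamma_I]$ shows that the two degree-$(k-2)$ contraction terms arising when $a,b\in I$ cancel with opposite signs. For the $U_\pm$ identification, one must compute the $\star$-eigenvalue on a canonical highest-weight vector of each half-spin irrep (built from the $f_j,\wt{f}_j$ of Section~\ref{sec:exterior power reps}) to pin down which eigenspace is $U_+$ versus $U_-$; a swap of convention would interchange the labels but not affect the correctness of the decomposition.
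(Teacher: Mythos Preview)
Your proposal is correct and follows essentially the same approach as the paper: identify $\calC_n \cong \bigoplus_k \Exterior^k V$ via the graded linear isomorphism, show each $W_k$ is invariant (the paper does this directly at the Lie algebra level via the case analysis of $[\gamma_a\gamma_b,\gamma_K]$ you sketch in your final paragraph, bypassing the group-level orthogonality bookkeeping you flag as delicate), invoke Section~\ref{sec:exterior power reps} for irreducibility, and use equivariance of $\star = L_{\gamma_0}$ to extract the explicit bases. Your Schur-plus-trace-zero argument pinning down the $U_\pm$ eigenspaces is a clean variant of the paper's direct verification that $[\pi(X),\cdot]$ commutes with $\star$.
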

		\begin{proof}
			We always have the canonical linear isomorphism generated by $\gamma_i\mapsto \ket{i}\in \C^n$:
			\[
				\calC_n \cong \Exterior^0 V \oplus \Exterior^1 V \oplus \dots \oplus \Exterior^{n} V,
			\] where a convenient basis for the subspace of $\calC_n$ corresponding to $\Exterior^k V$ is given by $\{\gamma_K: \; \abs{K} = k\}$, recalling that the empty string $\abs{K} = 0$ corresponds to $\idty$.
            To see that these subspaces are invariant under $\Pi(\cdot)\Pi^{-1}$, we pass to the Lie algebra representation given by $[\pi(X),\cdot)]$ for all $X\in \so(n)$. It suffices to work on generators $\pi(X) = \gamma_a\gamma_b, a<b$. Fix a basis element $\gamma_K$ with $\abs{K} = K$ and observe that
            \begin{align*}
                [\pi(X), \gamma_K] &= [\gamma_a\gamma_b, \gamma_K] \\
                &= [\gamma_a,\gamma_K]\gamma_b + \gamma_a[\gamma_b,\gamma_K].
            \end{align*} There are three cases to consider. If $a\not\in K$ and $b\not\in K$, then since $\abs{K}$ is even, $\gamma_a \gamma_K = \gamma_K \gamma_a$ and so $[\gamma_a,\gamma_K]=0$, and likewise for $\gamma_b$. If $a\in K$ and $b\not\in K$ then
            \[
                (\gamma_a\gamma_I - \gamma_I\gamma_a)\gamma_b = \pm \gamma_{K-a} \gamma_b,
            \] which has length $(K-1)+1 = K$. 
            Finally, if both $a\in K$ and $b\in K$, then since $K = (i_1, \dots, a, \dots b \dots, i_k)$ is ordered, we can bring $a,b$ to the front to write $K = \pm(a,b, i_1, \dots, i_k)$ and then 
            \[
            [\gamma_a \gamma_b, \gamma_K] = \pm[\gamma_a \gamma_b, \gamma_a\gamma_b\gamma_{i_1}\dots \gamma_{i_k}] = \pm(\gamma_a\gamma_b \gamma_a\gamma_b \gamma_{i_1}\dots \gamma_{i_k} - \gamma_a\gamma_b \gamma_{i_1}\dots \gamma_{i_k}\gamma_a\gamma_b) = 0 ,
            \] since $a,b\not\in (K-ab)$. In all cases, $\Exterior^k V$ is indeed an invariant subspace.
    
			Now, as per the discussion in Section \ref{sec:exterior power reps}, each $\Exterior^k V$ is irreducible for odd $n$, and each $\Exterior^k V$ except $k=n/2$ is irreducible for $n$ even. In the $k=n/2$ case, $\Exterior^{n/2} V$ splits into two non-isomorphic irreps $U_\pm$ of equal dimension. 
   
            It remains to show the basis claim. Recall the basis $\{\gamma_I \pm \star \gamma_I : \abs{I}\leq n/2\}$ which diagonalizes left multiplication $L_{P_\pm} : \calC_n\to \calC_n$ \ref{eqn:basis gamma_I + star gamma_I}. From the above argument, it is clear that the action of $[\pi,\cdot]$ preserves the grading of any fixed $\gamma_I$. It remains to show that it commutes with $P_\pm$ and so maps $\gamma_I+\star\gamma_I$ to $\gamma_{\wt{I}} + \star \gamma_{\wt{I}}$ and $\gamma_I-\star\gamma_I$ to $\gamma_{\wt{I}} - \star \gamma_{\wt{I}}$. Since any representative $\pi(X)\in \calC_n^{[ev]}$, we have $\pi(X)\gamma_0 = \gamma_0 \pi(X)$, and so using $\star \gamma_I = \gamma_0\gamma_I$ we see 
            \begin{align*}
                [\pi(X),\gamma_I + \star \gamma_I] &= [\pi(X),\gamma_I] + [\pi(X),\gamma_0\gamma_I] \\
                &= [\pi(X),\gamma_I] + \gamma_0[\pi(X),\gamma_I] \\
                &= [\pi(X),\gamma_I] + \star [\pi(X),\gamma_I] ,
            \end{align*} and by the same computation, $[\pi(X),\gamma_I - \star \gamma_I] = [\pi(X),\gamma_I] - \star [\pi(X),\gamma_I]$. In the even $n$ case, this tells us that $U_\pm$ have the promised bases. In the odd $n$ case, this means that we can restrict $\Pi(\cdot)\Pi^{-1}$ to $P_+\calC_n$, leaving only $\gamma_I + \star \gamma_I$. The freedom in the description of these representations comes from the fact that for both even and odd $n$, we have $\Exterior^k V \cong \Exterior^{n-k}V$ from Proposition \ref{prop:ext k and ext n-k are isomorphic}.
			\end{proof}

	\begin{remark} \label{rem:transitivity of adjoint action [pi(X), ]} It isn't too hard to see irreducibility for the invariant subspaces $\Exterior^k V$ and $U_\pm$ from raw computation, although it is guaranteed by e.g. the Weyl Character Formula. 

    Here we actually have more than irreducibility: the representation acts transitively on each family $\{\gamma_I:\abs{I} = k\}$ (and since $\pi(X)$ commutes with $\star$ for all $X\in\so(n)$, this similarly holds for $\{\gamma_I+\star\gamma_I: \abs{I}=k\}$). I.e. given any pair $\gamma_I, \gamma_J$ with $\abs{I}=\abs{J}$, we can find an $X\in\so(n)$ such that $[\pi(X),\gamma_I]=\gamma_J$, and conversely, if there is an $X\in\so(n)$ satisfying this equality, then $\abs{I}= \abs{J}$. 
    The proof is straightforward, and it is perhaps clearer to demonstrate this by example. Let $\gamma_I = \gamma_1\gamma_2$ and let $\gamma_J= \gamma_2\gamma_3$. Then picking $\pi(X) = \gamma_1\gamma_3$ suffices, since $[\gamma_1\gamma_3,\gamma_1\gamma_2] = 2\gamma_2\gamma_3 = 2\gamma_J$. So, we may algorithmically convert $I$ into $J$ by using successive applications of these $[\pi(X_k),\cdot]$, and since this is a Lie algebra representation, these nested commutators may be replaced by a single $[\pi(X),\cdot]$. We will use this observation frequently in later proofs.
	\end{remark}
        \begin{remark} We need concrete bases to prove future theorems. However, this decomposition is in fact a standard result, e.g. Ch. 19 of \cite{fulton1991representation}, which writes that the spin representation $\Pi$ acting on $\calC_n$ decomposes into\footnote{We are abusing notation here and writing $\Pi$ as the representation space.}
        \[
             \Pi\otimes \Pi \cong \Exterior^0 V \oplus \Exterior^1 V \oplus \dots \oplus \Exterior^{n-1} V \oplus \Exterior^n V .
        \]
        But $\Pi(\cdot)\Pi^{-1}$ is in fact the representation $\Pi\otimes \overline{\Pi}$, where $\overline{\Pi}$ denotes the dual representation of $\Pi$, or equivalently the conjugate representation of $\Pi$ since $\Pi$ unitary. 
        When $n$ is odd, the only spin representation is self-dual, $\Pi\cong \overline{\Pi}$.
        When $n\bmod{4} = 0$, both half-spin representations are self-dual, so $\Pi_+ \cong \overline{\Pi}_+$ and $\Pi_- \cong \overline{\Pi}_-$; when $n\bmod{4} = 2$, they are duals of each other $\Pi_+ \cong \overline{\Pi}_-$. In any case,
        \[
            \overline{\Pi} = \overline{\Pi_+ \oplus \Pi_-} = \overline{\Pi}_+ \oplus \overline{\Pi}_- \cong \Pi,
        \] whence we obtain the desired decomposition, after carefully noting that $\Exterior^{n/2}V$ splits into two irreps when $n$ even, and that $\calC_n\cong P_+\calC_n \oplus P_-\calC_n$ when $n$ odd where as $SO(n)$ representations, $P_+\calC_n\cong P_-\calC_n$.
        \end{remark}
        As a sanity check, we can check dimensions, recalling that $\dim(\Exterior^k V) = \binom{n}{k}$. When $n$ odd, 
        \[
            \dim(P_+\calC_n) = \dim(M_{2^{(n-1)/2}}(\C)) = 2^{(n-1)/2} \cdot 2^{(n-1)/2} = \frac{1}{2} 2^{n} = \frac{1}{2} \sum_{k \text{ even}}^n \binom{n}{k} = \frac{1}{2} \sum_{k \text{ odd}}^n \binom{n}{k} , 
        \] and when $n$ even,
        \[
            \dim(\calC_n) = \dim(M_{2^{n/2}}(\C)) = 2^{n/2} \cdot 2^{n/2} = 2^{n} = \sum_{k}^n \binom{n}{k} .
        \]

\subsection{Another take on Lemma \ref{lem:rep of Pi( )Pi^-1 on bond algebras}} \label{sec:another take on lemma pi reps}
    
    We may get the result of Lemma \ref{lem:rep of Pi( )Pi^-1 on bond algebras} in a different way. The following approach is a bit more natural and elegant, but the bases we get out of Lemma \ref{lem:rep of Pi( )Pi^-1 on bond algebras} are much easier to work with in later chapters.
    
    Recall that the image of the Cartan subalgebra $\pi(\h)$ is given by Equation (\ref{eq:cartan subalg cliffords}). We wish to diagonalize the representatives $[\pi(H),\cdot]$ of the Cartan subalgebra $\h$ in a way that mimics (\ref{eqn:sim diag cartan so(n)}), i.e.
    \[
        [\gamma_1\gamma_2, \gamma_1 + i \gamma_2] =  i (\gamma_1 + i\gamma_2) , \qquad [\gamma_1\gamma_2, \gamma_1 - i \gamma_2] =  -i (\gamma_1 - i\gamma_2)
    \] So, to define our change of basis $S:\calC_n\to \calC_n$ from the basis $\{\gamma_I: \abs{I}\leq n\}$ to a new basis $\{f_K: \abs{K} \leq n\}$, we define it on individual operators and extend to products by $S(\gamma_{i_1} \gamma_{i_2}\dots \gamma_{i_\ell}) = S(\gamma_{i_1}) S(\gamma_{i_2}) \dots S(\gamma_{i_\ell})$. We first write $\{f_1,f_2,\dots, f_{n/2}, \wt{f_1},\wt{f_2}, \dots ,\wt{f}_{n/2}\}$ where
    \[
        f_k := \gamma_{2k-1} + i\gamma_{2k}, \quad \wt{f_k} := \gamma_{2k-1} - i\gamma_{2k},
    \] whence we obtain
    \[
        [\pi(H_j), f_k] = L_k(H_j) f_k, \qquad [\pi(H_j), \wt{f_k}] = - L_k(H_j) \wt{f_k}.
    \] Then, to work on arbitrary products $f_K$, we use Leibniz yet again. One first sees that
    \begin{align*}
        [\pi(H_j), f_{k_1}f_{k_2}] = f_{k_1}[\pi(H_j),f_{k_2}] + [\pi(H_j),f_{k_1}]f_{k_2} = (L_{k_1} + L_{k_2})(H_j) f_{k_1}f_{k_2} ,
    \end{align*} and by induction, it is clear that when $\abs{K}< n/2$, $f_K = f_1 f_2 \dots f_k$ is the highest weight vector with weight
    \[
        [\pi(H_j), f_K] = \paran{L_1 + L_2 + \dots + L_k}(H_j) f_K .
    \] By the theorem of highest weight, this means that $\text{span}\{f_K : \abs{K} = k\} = \text{span}\{\gamma_K : \abs{K} = k\}\cong \Exterior^k V$ whenever $k< n/2$. 

    Now, when $k>n/2$, we observe that since $\pi(H_j)\in \calC_n^{[ev]}$, it commutes with $\gamma_0$, and so from (\ref{def:Hodge dual}) we see that
    \[
        [\pi(H_j), \star f_K] = \star [\pi(H_j), f_K] = \paran{L_1 + L_2 + \dots + L_k}(H_j) \star f_K.
    \] Similarly, by seeing that $[\pi(X),\star f_K] = \star [\pi(X), f_K]$ for any representative $\pi(X)$ of $\so(n)$, it follows that $f_K$ a highest weight vector implies $\star f_K$ a highest weight vector, since they are both annihilated by the set of positive simple roots. Thus, $\star(f_1 \dots f_{n-k})$ is the highest weight vector of $\Exterior^k V$ with weight $L_1 + L_2 + \dots L_{n-k}$.

    Now, when $n$ even, we had that $\Exterior^{n/2} V \cong U_+\oplus U_-$. We can directly find the vectors of highest weight for each of these representations, just as before: $f_1 f_2 \dots f_{n/2-1} f_{n/2}$ has weight $L_1 + L_2 + \dots + L_{n/2-1} + L_{n/2}$, while $f_1 f_2 \dots f_{n/2-1} \wt{f}_{n/2}$ has weight $L_1 + L_2 + \dots + L_{n/2-1} - L_{n/2}$. This is a different (but equivalent) perspective from earlier, where we leveraged that $U_+\oplus U_-$ corresponded to the images of left-multiplication by $P_\pm$. To check which of these highest weight vectors respectively correspond to $U_+$ or $U_-$, one can simply multiply each on the left by $P_+$. An explicit instance of this is demonstrated for $n=4$ in Equation (\ref{eq:n=4 example, f1f2 and f1wtf2 expanded}).

\subsection{Comment for the proof of the parent property} \label{secapp:comment for proof of parent property}
    In the final Chapter \ref{ch:SO(n)_Haldane_chains}, we will need explicit highest weight vectors for the proof of Theorem \ref{thm:parent property (SO(n) chains)}. It is thus convenient to list a couple of summarizing statements from the previous section.
    \begin{itemize}
        \item $f_1 f_2 \dots f_k$ is the highest weight vector of $\Exterior^k V$ for $k< n/2$ with weight $L_1+ L_2 + \dots + L_{k}$.
        \item $f_1 f_2 \dots f_{n/2-1} f_{n/2}$ and $f_1 f_2 \dots f_{n/2-1} \wt{f}_{n/2}$ are the highest weight vectors of $U_+\oplus U_- \cong \Exterior^{n/2}V$ with respective weights $L_1+L_2 + \dots + L_{n/2-1} + L_{n/2}$ and $L_1+L_2 + \dots + L_{n/2-1} - L_{n/2}$.
        \item $\star(f_1 \dots f_{n-k})$ is the highest weight vector of $\Exterior^k V$ for $k> n/2$ with weight $L_1+ L_2 + \dots + L_{n-k}$.
    \end{itemize} Note that since $\Exterior^k V \cong \Exterior^{n-k} V$, we have degeneracy of the irreps appearing in $\calC_n$ and any linear combination of their highest weight vectors $a f_1 \dots f_k + b \star(f_1 \dots f_{n-k})$, $a,b\in \C$, will be a vector of highest weight. This means that the above choices are not unique, although more convenient for our purposes.

    \begin{example} \label{ex:n=4 highest weight vectors}
        Let $n=4$. Then, with respect to the representation $[\pi,(\cdot)]$ on $\calC_4$, $f_1$ has weight $L_1$, $f_2$ has weight $L_2$, $\wt{f}_1$ has weight $-L_1$, and $\wt{f}_2$ has weight $-L_2$. Then the corresponding highest weights of the irreps decomposition of $\calC_4$ are
    \begin{alignat*}{2} 
        \Exterior^0 V &\text{ has highest weight vector } \idty &&\text{ with weight } 0  \\
        \Exterior^1 V &\text{ has highest weight vector } f_1 &&\text{ with weight } L_1 \\ 
        U_+ &\text{ has highest weight vector } f_1f_2 &&\text{ with weight } L_1+L_2 \\ 
        U_- &\text{ has highest weight vector } f_1\wt{f}_2 &&\text{ with weight } L_1-L_2 \\
        \Exterior^3 V &\text{ has highest weight vector } \star f_1 &&\text{ with weight } L_1 \\
        \Exterior^4 V &\text{ has highest weight vector } \star \idty \quad &&\text{ with weight } 0.
    \end{alignat*} As a sanity check, observe that $\star f_1 = \star(\gamma_1 + i\gamma_2) = -\gamma_2\gamma_3 \gamma_4 + i\gamma_1\gamma_3\gamma_4$, which is certainly in $\Exterior^3 V$. We also note that 
    \begin{equation}\begin{split} \label{eq:n=4 example, f1f2 and f1wtf2 expanded}
        f_1f_2 = (\gamma_1+i\gamma_2)(\gamma_3+i\gamma_4) &= \gamma_1\gamma_3 - \gamma_2\gamma_4 + i(\gamma_1\gamma_4 + \gamma_2 \gamma_3) \\
        &= \gamma_1\gamma_3 + \star (\gamma_1\gamma_3) + i (\gamma_1\gamma_4 + \star (\gamma_1\gamma_4) )\\
        f_1 \wt{f}_2 = (\gamma_1+i\gamma_2)(\gamma_3-i\gamma_4) &= \gamma_1\gamma_3 + \gamma_2\gamma_4 + i(\gamma_1\gamma_4 - \gamma_2 \gamma_3) \\
        &= \gamma_1\gamma_3 - \star (\gamma_1\gamma_3) + i (\gamma_1\gamma_4 - \star (\gamma_1\gamma_4) ),
    \end{split}\end{equation} meaning $U_+ = P_+\calC_4$ and $U_-=P_-\calC_4$.
    \end{example}
    
\chapter{Projective Representations of Finite and Compact Lie Groups}\label{ch:proj-reps}

Projective representations are common in quantum mechanics and will play an important role in the story of SPT phases later in this thesis. Commonly, we think of these representations as arising from the following ambiguity: while we often treat pure states as vectors $\psi\in \calH$, we may only measure expectation values $\inprod{\psi,A\psi}$ of observables $A\in \calA$. But expectation values cannot distinguish phases $e^{i\theta}\in U(1)$, $\theta\in \R$, i.e. if $\phi= e^{i\theta}\psi$, then 
\begin{equation}
   \inprod{\phi, A \phi} = \inprod{e^{i\theta}\psi, Ae^{i\theta} \psi} = \inprod{\psi, e^{-i\theta} Ae^{i\theta} \psi} =   \inprod{\psi, A\psi} .
\end{equation}
To remove this ambiguity then, one may quotient by this phase, and indeed this is exactly what we will do to define projective representations.

In Section \ref{sec:Projective Representations and De-projectivization}, we present a first definition of a projective (unitary) representation on a Hilbert space and clarify what we mean by de-projectivization. Here, we restrict our attention to compact Lie groups (which includes finite groups), whose projective representations may always be chosen to be unitary by Weyl's unitary trick Proposition~\ref{prop:Weyl unitary trick}. We also present our motivating example of the spin representations, first encountered in Remark \ref{rem:spin reps are projective}, which will guide us through the entire chapter.

In Section \ref{sec:Central Extensions and Projective Representations} we transform the data of a projective representation into the equivalent data of a central extension of a group $G$ by the abelian group $U(1)$. This leads us to develop some theory of central extensions. We will eventually see that projective representations which may be de-projectivized are exactly those which yield trivial central extensions, i.e. central extensions which split and so admit a section.

In Section \ref{sec:Universal Covers and De-projectivization}, we show a ``motivating calculation'': using Bargmann's theorem and Schur's lemma, we show that for an irreducible projective representation of a semi-simple Lie group $G$, the only obstruction to de-projectivization comes from the fundamental group $\pi_1(G)$, which is finite. This calculation will be subsumed by Theorem~\ref{thm:second group cohomology of semi-simple} in the following section, but it provides valuable insight and intuition.

In Section \ref{sec:Central Extensions and Group Cohomology}, we relate the central extension picture with a group cohomology picture. It begins with another definition of a projective representation, which is shown to be equivalent to the earlier definition. The section concludes with Remark \ref{rem:equivalence of Borel group cohomology}, which consolidates equivalences between projective representations, $U(1)$-central extensions, and elements of the second cohomology group $H^2(G,U(1))$. We wrap up with Theorem \ref{thm:second group cohomology of semi-simple}, which gives a very explicit characterization of $H^2(G,U(1))$ for compact Lie groups $G$.

To write this chapter, we follow the book by Schottenloher~\cite{schottenloher2008mathematical}, the paper by Bagchi and Misra~\cite{bagchi2000note}, a note by Levi Poon\footnote{Somewhat embarassingly, I cannot find this wonderful note online anymore. If you find it, please let me know so I may add this citation.}, and the StackExchange postings of user @ACuriousMind. We have also found the work of Duivenvoorden and Quella~\cite{duivenvoorden2013topological} to be a useful exploration of the ideas we discuss here in the context of SPT phases, and their work has greatly shaped this writing. Finally, for those interested in a separate development of the character theory of projective representations paralleling that of standard representation theory, we point to the paper~\cite{cheng2015character}.

\section{Projective Representations and De-projectivization} \label{sec:Projective Representations and De-projectivization}
Let $G$ be a compact Lie group. Note that every finite group is a compact Lie group when we equip it with the discrete topology, noting that in this case the fundamental group $\pi_1(G)\cong G$. In this case, the celebrated Peter-Weyl theorem (and its projective analogue~\cite{cheng2015character}) guarantees that every irreducible representation of $G$ is finite-dimensional $\dim(\calH)<\infty$.\footnote{Much of the following story may be upgraded to analogous statements for strongly continuous representations $G\to \calU(\calH)$ where $\calU(\calH)$ denotes unitary operators on a separable Hilbert space $\calH$. This is a topological group~\cite{espinoza2016topological}, not a Lie group, so some care must be taken. See Schottenloher~\cite{schottenloher2008mathematical}.}

Let $\calH$ be a complex Hilbert space, and let $\calU(\calH)$ denote the unitary operators on $\calH$. We may embed $U(1)\to \calU(\calH)$ by the map $\lambda \mapsto \lambda \idty$. This is a normal subgroup of $\calU(\calH)$, so define the \textit{projective unitary group} $P\calU(\calH) := \calU(\calH)/U(1)$ and let $p:\calU(\calH)\to P\calU(\calH)$ be the projection. Then we have a short exact sequence of Lie groups\footnote{Even more, this is a principle $U(1)$-bundle. In general, given a Lie group $G$ and a Lie subgroup $H$, we may think of $G$ as a principle $H$-bundle over the coset space $G/H$.}
\begin{equation} \label{eq:projective unitary group SES}
    1 \longrightarrow U(1) \longrightarrow \calU(\calH) \xlongrightarrow{p} P\calU(\calH) \longrightarrow 1 .
\end{equation}

\begin{definition}
    A \textit{projective} unitary representation of $G$ is a Lie group homomorphism $\Pi:G\to P\calU(\calH)$.
\end{definition}
Note that for compact Lie groups, every projective representation may be made into a projective unitary representation by Weyl's Unitary trick, the proof technique for Proposition~\ref{prop:Weyl unitary trick}. We will henceforth shorten ``projective unitary representation'' to ``projective representation''.

Now, of course, any representation is automatically a projective representation. But what about the converse: can we de-projectivize every projective representation? I.e. given a projective representation $\Pi:G\to P\calU(\calH)$, when may we lift it to a true representation, i.e. a map $\Phi: G\to \calU(\calH)$ such that $p \circ \Phi = \Pi$? In diagram form,
\begin{equation} \label{tikzcd:lifting projective rep to rep}
\includegraphics{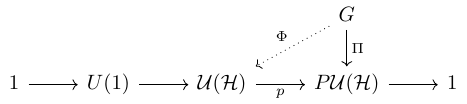}
\end{equation}
A key example to keep in mind throughout this section comes from Remark \ref{rem:spin reps are projective}. There we began with a representation $\pi:\so(3)\to \gl(\C^2)$ (equivalent to the spin-1/2 representation of $\su(2)\cong \so(3)$), exponentiated it to a representation $\wt{\Pi}:SU(2)\to \calU(\C^2)$, and then saw that this induces a projective representation $\Pi:SO(3)\to P\calU(\C^2)$. Is this equivalent to a true representation, i.e. can we find a representation $\Phi:SO(3)\to \calU(\C^2)$ agreeing with $\Pi$, $\Pi = p\circ \Phi$? The problem there was that since $\wt{\Pi}(-\idty) = -\idty$, it cannot pass through the projection $SU(2)\to SO(3)$ to yield a representation. But this argument does not really tell us that \textit{no} such $\Phi$ exists. It will turn out that indeed no such $\Phi$ exists, but to see this we will need to recast this question using some algebraic machinery (central extensions). Later, we will develop more computable topological invariants of this machinery (group cohomology) to answer similar problems.

\section{Central Extensions and Projective Representations} \label{sec:Central Extensions and Projective Representations}

Let us provide a small handful of definitions that will help us clarify the lifting problem. 
\begin{definition}
    Let $G$ and $A$ be Lie groups with $A$ abelian. A \emph{central extension} of $G$ by $A$ is a short exact sequence of Lie groups
    \begin{equation}
        1\longrightarrow A \longrightarrow E \longrightarrow G \longrightarrow 1 , 
    \end{equation} such that $A$ is embedded in the center $Z(E)$. 

    Two central extensions $E,E'$ of $G$ by $A$ are \emph{equivalent} if there exists an isomorphism $\Psi: E\to E'$ such that the following diagram commutes:
    \begin{equation}
        \includegraphics{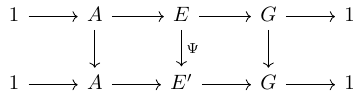}
    \end{equation}   
    
    We call the product group $A\times G$ the \emph{trivial extension}. 
\end{definition} 

\begin{example}
    Let $G=SO(3)$. Then a trivial central extension by $\Z_2$ is formed by $\Z_2 \times SO(3)\cong O(3)$:
    \begin{equation}
        1\longrightarrow \Z_2 \xlongrightarrow{\pm 1 \mapsto \pm \idty } O(3) \longrightarrow SO(3) \longrightarrow 1 ,
    \end{equation} and a nontrivial central extension is formed by $SU(2)$, since $SU(2)/\Z_2 \cong SO(3)$:
    \begin{equation}
        1\longrightarrow \Z_2 \xlongrightarrow{\pm 1 \mapsto \pm \idty } SU(2) \longrightarrow SO(3) \longrightarrow 1 .
    \end{equation} These are not equivalent extensions because $SU(2)\not\cong O(3)$, for instance since $SU(2)$ is connected but $O(3)$ is not. Note that in the first, the embedding map $\sigma:SO(3)\to O(3)$ is a Lie group homomorphism, but in the second, there is no Lie group homomorphism $\sigma:SO(3)\to SU(2)$. This may be seen by noting that such a continuous map would induce an embedding $\sigma_*:\pi_1(SO(3))\to \pi_1(SU(2))$, but $\pi_1(SO(3))=\Z_2$ and $\pi_1(SU(2)) = 0$.

    In the same way, for $G=SO(n)$, one can see that the following is a nontrivial central extension:
    \begin{equation}
        1\longrightarrow \Z_2 \longrightarrow Spin(n) \longrightarrow SO(n) \longrightarrow 1 . 
    \end{equation}
\end{example}

\begin{proposition} (Every projective representation yields a central extension~\cite{schottenloher2008mathematical}) \label{prop:proj rep yields central extension}

Let $G$ be a Lie group and $\Pi:G\to P\calU(\calH)$ a homomorphism. Then there is a central extension $E$ of $G$ by $U(1)$ and a homomorphism $p_U:E  \to \calU(\calH)$ so that the following diagram commutes:

\begin{equation}
\includegraphics{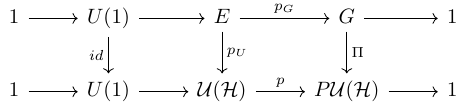}
\end{equation}
\end{proposition}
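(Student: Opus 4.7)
The plan is to construct $E$ as the fiber product (pullback) of $\Pi$ and $p$ over $P\calU(\calH)$. Concretely, I would define
\begin{equation*}
    E := \{(g,U)\in G\times \calU(\calH) : \Pi(g) = p(U)\},
\end{equation*}
equipped with the componentwise group operation inherited from $G\times \calU(\calH)$. The candidate maps are the two coordinate projections: $q:E\to G$, $(g,U)\mapsto g$, and $p_U:E\to \calU(\calH)$, $(g,U)\mapsto U$. Commutativity of the diagram is then built into the definition of $E$, since $p\circ p_U(g,U) = p(U) = \Pi(g) = \Pi\circ q(g,U)$.

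The main verifications proceed in three steps. First, I would show that $E$ is a closed subgroup of $G\times \calU(\calH)$, hence a Lie group by the closed subgroup theorem; closedness follows because $E$ is the preimage of the diagonal in $P\calU(\calH)\times P\calU(\calH)$ under the continuous map $(g,U)\mapsto (\Pi(g), p(U))$. Second, I would identify $\ker q$ with $U(1)$ via the embedding $\lambda\mapsto (e_G, \lambda\idty)$ — this uses that $p(U)=\Pi(e_G)=[\idty]$ forces $U=\lambda \idty$ for some $\lambda\in U(1)$. Third, I would show $q$ is surjective: given any $g\in G$, surjectivity of $p:\calU(\calH)\to P\calU(\calH)$ from the sequence (\ref{eq:projective unitary group SES}) lets us pick some $U\in p^{-1}(\Pi(g))$, so $(g,U)\in E$ with $q(g,U)=g$. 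These three facts assemble into the required short exact sequence
\begin{equation*}
    1 \longrightarrow U(1) \longrightarrow E \xlongrightarrow{q} G \longrightarrow 1 .
\end{equation*}

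Finally, I would check centrality of the embedded $U(1)$ inside $E$: for any $(g,U)\in E$ and $\lambda\in U(1)$, the element $(e_G,\lambda \idty)$ commutes with $(g,U)$ because scalar multiples of the identity lie in the center of $\calU(\calH)$, and the $G$-coordinate is trivial. This confirms $E$ is a central extension of $G$ by $U(1)$, and by construction $p_U$ is a genuine unitary representation of $E$ lifting the projective representation $\Pi$.

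The main obstacle I anticipate is the Lie-theoretic regularity: showing $E$ inherits a smooth manifold structure making $q$ a smooth submersion. In finite dimensions this follows cleanly from the closed subgroup theorem once continuity of $\Pi$ and $p$ are in hand, but for infinite-dimensional $\calH$ one must instead invoke that $p:\calU(\calH)\to P\calU(\calH)$ is a principal $U(1)$-bundle and that the fiber product of a Lie group homomorphism with a principal bundle projection retains a smooth structure — this is where a careful citation to the appropriate result in Schottenloher~\cite{schottenloher2008mathematical} would be needed. Everything else is a straightforward diagram chase.
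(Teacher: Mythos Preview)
Your proof is correct and follows exactly the same approach as the paper: both construct $E$ as the fiber product $\{(g,U): \Pi(g)=p(U)\}$ (the paper writes the coordinates in the other order, $(U,g)$, which is immaterial) and take the two coordinate projections as $q$ and $p_U$. If anything, you supply more detail than the paper does --- the paper simply writes down the definition of $E$, the inclusion of $U(1)$, and the two projections, and asserts the result without explicitly checking closedness, surjectivity, or centrality.
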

\begin{proof}
    Define $E$ to be the subgroup of the product group $\calU(\calH)\times G$ given by
    \begin{equation}\label{def:central extension of a proj rep}
        E := \{(U,g): p(U) = \Pi(g) \} . 
    \end{equation} $E$ is indeed a product group because $p, \Pi$ are both Lie group homomorphisms. Notice that the inclusion
    \begin{equation}\begin{split}
        U(1) &\longrightarrow E\subseteq \calU(\calH)\times G \\
        \lambda &\longmapsto (\lambda\idty_{\calH}, \idty) 
    \end{split}\end{equation} and the projection $p_G: E\to G$ given by $p_G(U,G) = G$ are homomorphisms such that the upper row is a central extension. Finally, the projection $p_U:E\to \calU(\calH)$ given by $p_U(U,G) = U$ is a homomorphism satisfying $\Pi \circ p_G = p_U \circ p$ . 
\end{proof}

Suppose that we had a projective representation which yields a central extension $E$. What does it mean if the central extension is trivial, i.e. equivalent to the product group extension $A\times G$? We always have the Lie homomorphism embedding $G$ into $A\times G$, given by $g\mapsto (1,g)$. So, when $E$ is trivial, we are guaranteed the existence of a section map $\sigma: G\to E$ with $p_G\circ \sigma = \ide_G$:
\begin{equation} 
\includegraphics{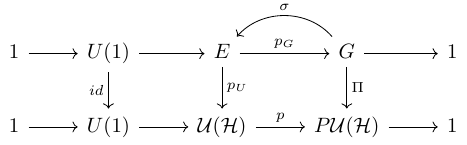}
\end{equation}
In this case, we may satisfy the question posed by (\ref{tikzcd:lifting projective rep to rep}): the desired representation is now given by $\Phi:G\to \calU(\calH)$ by $\Phi := \sigma\circ p_U$. Let us summarize this observation.

\begin{remark}
    If a projective representation $\Pi:G\to P\calU(\calH)$ yields a central extension $E$ (in the sense of Proposition \ref{prop:proj rep yields central extension}) which is equivalent to the trivial extension $U(1)\times G$, we may lift to a representation $\Phi:G\to \calU(\calH)$ where $p\circ \Phi = \Pi$. 
\end{remark}

The existence of this section map is actually a necessary and sufficient condition for saying a central extension is trivial.
\begin{lemma}\cite{schottenloher2008mathematical}
    A central extension $E$ is equivalent to the trivial extension $A\times G$ if and only if $E$ splits, i.e. if there exists a Lie homomorphism $\sigma: G\to E$ such that $p_G\circ \sigma = \ide_G$. 
\end{lemma}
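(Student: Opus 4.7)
The plan is to prove both directions by constructing explicit maps between $E$ and $A\times G$. For the forward direction, suppose $E$ is equivalent to the trivial extension via an isomorphism $\Psi: A\times G \to E$ making the equivalence diagram commute. The trivial extension obviously admits the section $\sigma_0: G\to A\times G$ given by $\sigma_0(g) = (1_A, g)$, which is a Lie group homomorphism. Setting $\sigma := \Psi \circ \sigma_0$, commutativity of the equivalence diagram immediately forces $p_G \circ \sigma = \ide_G$, so $E$ splits.

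For the reverse direction, given a Lie homomorphism section $\sigma: G \to E$ and writing $\iota:A\to E$ for the central inclusion, the natural candidate for an equivalence is
\begin{equation}
    \Psi: A\times G \to E, \qquad \Psi(a, g) := \iota(a) \cdot \sigma(g).
\end{equation}
I would check three things in turn: that $\Psi$ is a Lie group homomorphism, that $\Psi$ makes the equivalence diagram commute, and that $\Psi$ is bijective (hence a Lie group isomorphism).

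The homomorphism check is the key step and is precisely where the centrality hypothesis is used: expanding $\Psi((a_1,g_1)(a_2,g_2))$ against $\Psi(a_1,g_1)\Psi(a_2,g_2)$ leaves the discrepancy $\sigma(g_1)\iota(a_2)$ versus $\iota(a_2)\sigma(g_1)$, and these agree exactly because $\iota(A)\subseteq Z(E)$. The remaining diagram checks are formal bookkeeping: $\Psi(a,1_G) = \iota(a)$ is immediate, and $p_G(\Psi(a,g)) = p_G(\iota(a))p_G(\sigma(g)) = g$ uses exactness ($\iota(A) = \ker p_G$) together with the section property.

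For bijectivity, surjectivity follows from the observation that for any $e \in E$, the element $e \cdot \sigma(p_G(e))^{-1}$ lies in $\ker p_G = \iota(A)$, so writing it as $\iota(a)$ yields $e = \Psi(a, p_G(e))$. Injectivity is equally short: if $\Psi(a,g) = 1_E$, applying $p_G$ forces $g=1_G$, and then $\iota(a) = 1_E$ forces $a = 1_A$ by injectivity of $\iota$. The main conceptual point is that centrality is used exactly once—but essentially—without it $\Psi$ fails to be a homomorphism, which is what prevents non-central split extensions (for instance semidirect products) from being equivalent to direct products as central extensions.
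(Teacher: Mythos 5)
Your proof is correct, and it is the standard textbook argument; the paper itself states this lemma with a citation to Schottenloher and does not reproduce a proof, so there is nothing to contrast with. The construction $\Psi(a,g) = \iota(a)\sigma(g)$, the use of centrality exactly once to verify the homomorphism property, and the bijectivity argument via $e\mapsto (\iota^{-1}(e\,\sigma(p_G(e))^{-1}),\,p_G(e))$ are exactly how the result is proven. One micro-remark worth a sentence if you want the Lie-group claim airtight: you show $\Psi$ is a smooth bijective homomorphism, and should either invoke the standard fact that a bijective Lie group homomorphism is automatically a Lie group isomorphism, or observe that the inverse you implicitly construct in the surjectivity step, $e\mapsto(\iota^{-1}(e\,\sigma(p_G(e))^{-1}),\,p_G(e))$, is itself visibly smooth.
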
 A key point is that $\sigma$ is a Lie homomorphism and not just some function (after all, since $p_G$ surjective, it is straightforward to find some choice of $\sigma$ such that $p_G\circ \sigma = \ide_G$.). This lemma then allows us to say the following.
\begin{corollary}
    If a projective representation $\Pi:G\to P\calU(\calH)$ yields a central extension $E$ which is not equivalent to the trivial extension, then there does not exist a representation $\Phi:G\to \calU(\calH)$ with $p\circ \Phi = \Pi$.
\end{corollary}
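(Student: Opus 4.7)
The plan is to prove the contrapositive using the lemma immediately preceding the statement, which tells us that a central extension is equivalent to the trivial extension if and only if it splits via a Lie group homomorphism section $\sigma: G \to E$ satisfying $p_G \circ \sigma = \ide_G$. So it suffices to show that any lift $\Phi$ of $\Pi$ produces such a splitting of the particular central extension $E$ constructed in Proposition~\ref{prop:proj rep yields central extension}.

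Recall from the construction (\ref{def:central extension of a proj rep}) that $E = \{(U,g) \in \calU(\calH)\times G : p(U) = \Pi(g)\}$, and $p_G: E\to G$ is the projection $(U,g)\mapsto g$. Suppose we are given a Lie group homomorphism $\Phi: G \to \calU(\calH)$ such that $p\circ \Phi = \Pi$. The natural candidate for the section is
\begin{equation}
\sigma: G\longrightarrow E, \qquad \sigma(g) := (\Phi(g), g).
\end{equation}
The proof would then consist of three small verifications: first, that $\sigma(g)\in E$, which holds precisely because $p(\Phi(g)) = \Pi(g)$ by the lifting hypothesis; second, that $\sigma$ is a Lie group homomorphism, which follows immediately because $\Phi$ is one and the identity map on $G$ is one, so their pairing $g\mapsto (\Phi(g),g)$ is a homomorphism into $\calU(\calH)\times G$ whose image lies in the Lie subgroup $E$; and third, that $p_G\circ \sigma = \ide_G$, which is clear from the formula for $\sigma$.

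With $\sigma$ in hand, the cited lemma yields that $E$ is equivalent to the trivial extension $U(1)\times G$, contradicting the hypothesis that $E$ is not equivalent to the trivial extension. Hence no such $\Phi$ can exist. I do not anticipate any genuine obstacle here: the work has already been done in Proposition~\ref{prop:proj rep yields central extension} and in the lemma characterizing trivial central extensions, so the corollary is essentially a bookkeeping exercise. The only mild subtlety is to remember that ``splits'' means splits via a Lie group homomorphism (not merely a set-theoretic section), which is why it is important that $\Phi$ itself is assumed to be a homomorphism rather than just a function.
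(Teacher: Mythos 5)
Your proposal is correct and matches the paper's argument exactly: both construct the section $\sigma(g) = (\Phi(g),g)$ from a hypothetical lift $\Phi$, verify it lands in $E$ via $p\circ\Phi = \Pi$, and invoke the preceding lemma on splitting to reach a contradiction. The only cosmetic difference is that you phrase it as a contrapositive while the paper phrases it as a proof by contradiction.
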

\begin{proof}
    Suppose towards contradiction that there existed such a $\Phi$. Then we could construct a Lie homomorphism $\sigma:G\to \calU(\calH)\times G$ by $g\mapsto (\Phi(g),g)$. Notice that $p \circ \Phi(g) = \Pi(g)$, so from the definition of $E$ (\ref{def:central extension of a proj rep}), $\sigma$ is actually a map $\sigma:G\to E$. But $E$ is a nontrivial extension so no such splitting map exists.
\end{proof}

So, given a projective representation, one may check to see whether it lifts to a true representation by checking whether the central extension it induces is trivial or not.

\begin{example}\label{ex:proj rep has SU(2) central extension}
    Recall the projective representation $\Pi:SO(3)\to P\calU(\C^2)$ from Remark (\ref{rem:spin reps are projective}). This is exactly the representation induced by the defining representation $\wt{\Pi}:SU(2)\to \calU(\C^2)$, which has $\wt{\Pi}(A) = A$, using that $SU(2)\cong e^{\so(3)}$. Note in particular that $\wt{\Pi}(\idty) = \idty$ and $\wt{\Pi}(-\idty) = -\idty$. Now, the central extension we get from Proposition \ref{prop:proj rep yields central extension} is the subgroup $E\subseteq \calU(\C^2)\times SO(3)$ 
    \begin{equation}
        E := \{ (U,g): p(U) = \Pi(g) \} .
    \end{equation} We claim that $E\cong SU(2)$. Recall that $S:SU(2)\to SO(3)$ implements the double cover projection. Consider the map 
    \begin{equation}\begin{split}
        SU(2) &\to E \subseteq \calU(\C^2)\times SO(3) \\
        A&\mapsto (\wt{\Pi}(A), S(A)) . 
    \end{split}\end{equation} This map is evidently a Lie group homomorphism. This map is invertible if its kernel is $\idty$. Observe then that $S$ is a double cover, so its kernel is $\pm \idty$. But $\wt{\Pi}(\idty)=\idty$ and $\wt{\Pi}(-\idty) = -\idty$, so this map is invertible and the claim is shown. 
    
    As stated earlier, there is no Lie group homomorphism $\sigma: SO(3)\to SU(2)$, so this central extension is nontrivial.
\end{example}

In a similar manner one can show that all half-integer $s$ representations of $SU(2)$ induce nontrivial projective representations of $SO(3)$, and likewise, that the spin representations from Remark \ref{rem:spin reps are projective} are all nontrivial projective. The key observation to do so is that for half-integer $s$ representations, $\wt{\Pi}(\idty) = \idty$ and $\wt{\Pi}(-\idty) = -\idty$. Compare this situation to that of integer $s$ representations, where $\wt{\Pi}(\idty) = \wt{\Pi}(-\idty)$. The following example is a rather overkill way of saying ``the defining representation of $SO(3)$ is a trivial projective representation'', but it serves to highlight the ideas of this section.

\begin{example} \label{ex:trivial projective rep, spin-1}
    The integer spin-1 representation $\wt{\Pi}:SU(2) \to \calU(\C^{3})$ naturally induces a projective representation $\Pi:SO(3)\to P\calU(\C^3)$. We can show that $\wt{\Pi}(\idty) = \wt{\Pi}(-\idty) = \idty$ by e.g. looking at the induced representation $\pi:\su(2)\to \gl(\C^3)$ which maps the Pauli $i\sigma^Z$ to the representative $H:= -i\pi(\frac{i}{2} \sigma^Z)= \begin{pmatrix}
        1 & 0 & 0 \\
        0 & 0 & 0 \\
        0 & 0 & -1
    \end{pmatrix}$. We have that $e^{\pi\sigma^Z} = -\idty$. So, 
        $\wt{\Pi}(-\idty) = e^{2\pi H} = \idty $. 
    In particular, since the double covering map $S:SU(2)\to SO(3)$ has $\ker S = \{\pm \idty\}$, we may meaningfully define the representation $\Phi:SO(3)\to \calU(\C^3)$ by $\Phi = \wt{\Pi}\circ S^{-1}$. Thus, this is the trivial extension $E = SO(3)\times \Z_2$.
\end{example} 

To summarize what has occurred to this point: a projective representation $\Pi:G\to P\calU(\calH)$ induces a central extension $E$ of $G$ by $U(1)$. When a central extension splits, it is equivalent to the trivial extension $U(1)\times G$ and we may find a true representation $\Phi:G\to \calU(\calH)$ which agrees with the projective representation $\Pi = p\circ \Phi$. When a central extension does not split (and so is not equivalent to the trivial extension), we can not find a true representation $\Phi:G\to \calU(\calH)$ agreeing with the projective representation. This shifts our problem to identifying when a central extension splits.\footnote{Bundle lovers may recall that a $U(1)$ principle bundle $E$ over a base $G$ is trivial if and only if there is a global section $\sigma:G\to E$.} Perhaps without realizing it, we have already been exposed to an important example of a central extension: the universal cover of a Lie group.

\section{Universal Covers and De-projectivization} \label{sec:Universal Covers and De-projectivization}

In this section, we will perform a brief analysis for the case of irreducible projective representations of a semi-simple Lie group $G$. While the result will be subsumed by Theorem~\ref{thm:second group cohomology of semi-simple}, the calculation provides valuable intuition for the full Theorem.

We start by citing a special case of an important theorem.\footnote{Typically, this theorem is stated as ``If the second Lie algebra cohomology $H^2(\g,\R)=0$, then every projective representation is equivalent to a true representation.'' Lie algebras are in one-to-one correspondence with their simply connected lie groups, and for semi-simple Lie algebras, $H^2(\g,\R)=0$.} As a warning, recall that by definition abelian Lie groups, like $U(1)$, are not semi-simple.
\begin{theorem} (Bargmann's Theorem~\cite{bargmann1954unitary})
    Let $G$ be a simply connected semi-simple Lie group. Then every projective representation $\Pi:G\to P\calU(\calH)$ may be lifted to a true representation $\Phi:G\to \calU(\calH)$ such that $p\circ \Phi = \Pi$. 
\end{theorem}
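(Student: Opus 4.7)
The plan is to use Proposition~\ref{prop:proj rep yields central extension} to convert the projective representation $\Pi$ into a central extension
\[
1 \longrightarrow U(1) \longrightarrow E \xlongrightarrow{p_G} G \longrightarrow 1
\]
together with a Lie group homomorphism $p_U: E \to \calU(\calH)$ satisfying $\Pi\circ p_G = p\circ p_U$. As discussed after that proposition, it suffices to produce a Lie group section $\sigma: G\to E$ with $p_G\circ \sigma = \ide_G$, since then $\Phi := p_U\circ \sigma$ is a genuine unitary representation of $G$ satisfying $p\circ \Phi = \Pi$. The entire burden of the proof is therefore to show that the central extension $E$ necessarily splits when $G$ is semi-simple and simply connected.

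The first step would be to pass to the corresponding central extension of Lie algebras
\[
0 \longrightarrow \R \longrightarrow \mathfrak{e} \longrightarrow \g \longrightarrow 0,
\]
obtained by taking tangent spaces at the identity and using that the Lie algebra of $U(1)$ is $\R$, acted on trivially. The key ingredient is then Whitehead's second lemma: for a finite-dimensional semi-simple Lie algebra $\g$ over a field of characteristic zero, $H^2(\g,\R)=0$ when $\R$ carries the trivial module structure. Since central extensions of $\g$ by $\R$ are classified up to equivalence by this cohomology group, the extension $\mathfrak{e}$ splits, i.e. there exists a Lie algebra homomorphism section $\sigma_\g : \g \to \mathfrak{e}$ with $p_{G*}\circ \sigma_\g = \ide_\g$.

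The final step is to promote this Lie algebra section to a Lie group section. Here I would invoke Theorem~\ref{thm:alg reps lift to group reps when G simply connected}: because $G$ is simply connected, the Lie algebra homomorphism $\sigma_\g:\g\to \mathfrak{e}$ integrates uniquely to a Lie group homomorphism $\sigma: G \to E$ with $\sigma(e^X) = e^{\sigma_\g(X)}$ for all $X\in \g$. Differentiating at the identity, $p_G\circ \sigma$ is a Lie group homomorphism $G\to G$ whose differential is $p_{G*}\circ \sigma_\g = \ide_\g$, so it agrees with $\ide_G$ on a neighborhood of the identity in $G$; connectedness of $G$ (which is implied by simple connectedness) then forces $p_G\circ \sigma = \ide_G$ globally. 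Thus $E$ is equivalent to the trivial extension $U(1)\times G$, and defining $\Phi := p_U\circ \sigma$ yields the desired true representation.

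The main obstacle is Whitehead's second lemma, which is the only place where semi-simplicity is genuinely used: passage to the Lie algebra and integration on a simply connected group are both standard. I would either invoke the vanishing of $H^2(\g,\R)$ as a black box citation, or sketch the classical cohomological argument that uses the quadratic Casimir element coming from the Killing form to construct, for any given $\R$-valued $2$-cocycle on $\g$, an explicit $1$-cochain whose coboundary it is, thereby producing the splitting $\sigma_\g$ directly.
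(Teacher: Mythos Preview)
The paper does not actually prove Bargmann's Theorem; it is stated as a cited result from \cite{bargmann1954unitary}, introduced with the words ``We start by citing a special case of an important theorem.'' The only proof-adjacent content is a footnote noting that the theorem is usually phrased as ``if $H^2(\g,\R)=0$, then every projective representation is equivalent to a true representation,'' and that semi-simplicity guarantees $H^2(\g,\R)=0$.

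Your proposal is precisely the standard argument that footnote alludes to: convert $\Pi$ into a $U(1)$-central extension via Proposition~\ref{prop:proj rep yields central extension}, differentiate to a Lie algebra central extension, split it using Whitehead's second lemma (the vanishing of $H^2(\g,\R)$ for semi-simple $\g$), and then integrate the section back to the group using simple connectedness and Theorem~\ref{thm:alg reps lift to group reps when G simply connected}. This is correct and is exactly the mechanism the paper's footnote gestures at, so there is nothing to compare against beyond noting that you have supplied the details the paper deliberately omitted.
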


Already this is a huge step. For instance, this immediately tells us that every projective representation of $SU(2)$ may be de-projectivized, i.e. is equivalent to a true representation. So: how can we use this information to say something about other semi-simple Lie group representations? Recall Theorem (\ref{thm:universal cover of a Lie group}), which states that connected Lie groups $G$ have a universal cover $\wt{G}$, a simply connected Lie group with a discrete normal subgroup $H\cong \pi_1(G)$ such that $G\cong \wt{G}/H$ and the Lie algebras of $G$ and $\wt{G}$ are isomorphic. Remark~\ref{rem:discrete normal subgroups are in center} reveals that a discrete normal subgroup of a Lie group is contained in the center $Z(\wt{G})$, so we have the exact sequence 
\begin{equation}
    1\longrightarrow \pi_1(G) \longrightarrow \wt{G} \longrightarrow G \longrightarrow 1 . 
\end{equation} In fact, our situation is even better: for a compact semi-simple Lie group $G$, a result of Weyl's shows that the fundamental group $\pi_1(G)$ is a finite group~\cite{knapp1996lie}. This was the case for $SU(2)$ double-covering $SO(3)$ (and more generally $Spin(n)$ double-covering $SO(n)$). This seems to suggest that the equivalence classes of central extensions of $G$ by $U(1)$ are controlled by the fundamental group $\pi_1(G)$--let us see how this is so.

Let $\Pi:G\to P\calU(\calH)$ be irreducible. By passing to the Lie algebra $\g$ and then exponentiating to the universal cover $\wt{G}$, we get a projective unitary representation $\wt{\Pi}:\wt{G}\to P\calU(\calH)$. Bargmann's theorem guarantees that we may de-projectivize this representation, so we write $\wt{\Pi}:\wt{G}\to \calU(\calH)$. By Schur's lemma, the center of this group $\pi_1(G)\subseteq \wt{G}$ acts as scalar multiples of the identity $c\idty\in U(1)\subseteq \calU(\calH)$. In other words, we have a homomorphism $\lambda\in \Hom(\pi_1(G), U(1))$. When $\pi_1(G)$ acts trivially on $\calH$, i.e. when $\lambda(\gamma) = \idty$ for all $\gamma\in \pi_1(G)$, the representation $\wt{\Pi}:\wt{G}\to \calU(\calH)$ descends to a representation $\Pi:G\to \calU(\calH)$ which projective unitary representation $\Pi:G\to P\calU(\calH)$ we began with. In the language of central extensions, this is exactly the trivial central extension $\pi_1(G)\times G$. Compare this to Example \ref{ex:trivial projective rep, spin-1}, where we saw that the spin-1 representation of $SU(2)$ passes to a representation of $SO(3)$. To contrast, when $\pi_1(G)$ acts nontrivially on $\calH$, so $\lambda$ not constant, the representation $\wt{\Pi}:\wt{G}\to \calU(\calH)$ does not descend to a representation of $G$, and so we may not de-projectivize the representation $\Pi:G\to P\calU(\calH)$. Compare this to Example \ref{ex:proj rep has SU(2) central extension}.

In short: the argument above tells us that a nontrivial fundamental group $\pi_1(G)$ is the only obstruction to de-projectivizing a semi-simple Lie group. Let us summarize this in a remark. Of particular note is that the intimidating task of classifying equivalence classes of central extensions of semi-simple compact Lie groups by $U(1)$ may be transformed into a combinatorial question about finite abelian groups.

\begin{remark} \label{rem:upper bound irreducible proj reps}
    Each irreducible projective representation connected semi-simple Lie group gives rise to an equivalence class of central extensions by $U(1)$. The number of classes of central extensions is bounded above by the size of the finite group $\abs{\Hom(\pi_1(G),U(1))}$ and thus finite.
\end{remark}

This brief calculation will be a special case of the broader Theorem~\ref{thm:second group cohomology of semi-simple}, whose proof we omit. We will need to develop a group cohomology theory which will allow us to show that these equivalence classes of central extensions may be labeled by a certain cohomology group, $H^2(G,U(1))$. Each cohomology class will constructively label an inequivalent central extension, and then Theorem~\ref{thm:second group cohomology of semi-simple} will reveal that for any compact Lie group, $H^2(G,U(1))$ is finite and controlled by its fundamental group.

\section{Central Extensions and Group Cohomology} \label{sec:Central Extensions and Group Cohomology}
We now have a few equivalent situations. A projective representation may be de-projectivized when the central extension of $G$ by $U(1)$ it induces is trivial, or equivalently, when the $U(1)$-bundle it induces admits a global section. A short calculation for irreducible projective representations, combined with Bargmann's theorem, suggests that the number of inequivalent central extensions is controlled by $\Hom(\pi_1(G),U(1))$. The general story is resolved by constructing an appropriate invariant: group cohomology. The following discussion is an amalgam of results from~\cite{schottenloher2008mathematical,bagchi2000note,moore1964extensions}, although certainly many of these observations date back to earlier results. 

Let us start by considering an equivalent definition of a projective representation. Equivalence will be shown in Remark \ref{rem:equivalence of projective rep definitions}.
In the ensuing discussion, we shorten ``projective unitary representation'' to ``projective representation''.

\begin{definition}
    A projective representation of $G$ is a (Borel) measurable map $\Pi:G\to \calU(\calH)$ together with a measurable map $\alpha: G\times G \to U(1)$ such that for all $g,h\in G$,
    \begin{equation} \label{def:proj rep with schur cocycle}
        \Pi(1) = \idty, \quad \Pi(g)\Pi(h) = \alpha(g,h)\Pi(gh) . 
    \end{equation}

    Two projective representations $\Pi_1:G\to \calU(\calH_1)$, $\Pi_2:G\to \calU(\calH_2)$ are  \emph{equivalent} if there exists a unitary $U:\calH_1\to \calH_2$ and a (Borel) function $\lambda:G\to U(1)$ such that
    \begin{equation}
        \Pi_2(g) U = \lambda(g) U \Pi_1(g) \qquad \text{ for all } g\in G. 
    \end{equation}
\end{definition}

We still have the close relationship with central extensions here. Let 
\begin{equation}
    1\longrightarrow U(1) \xlongrightarrow{\iota} E \xlongrightarrow{p_G} G \longrightarrow 1 
\end{equation} be a central extension and let $\tau:G\to E$ be a Borel function with $p_G \circ \tau = \ide_G$ and $\tau(\idty) = \idty$. We set $\tau_g:= \tau(g)$ for $g\in G$ and define
\begin{equation}\begin{split}
    \alpha: G\times G &\longrightarrow U(1)\cong \iota(U(1))\subseteq E, \\
    (g,h) &\mapsto \tau_g\tau_h \tau_{gh}^{-1} . 
\end{split}\end{equation} To see that this function is well-defined, observe that $\tau_g \tau_h \tau_{gh}^{-1} \in \ker p_G$, $\alpha(1,1) =1$, and use associativity of the group $G$ to see that $\alpha(1,1) = 1$ and 
\begin{equation}\label{eq:cocycle condition}
    \alpha(g,h)\alpha(gh,j) = \alpha(g,hj)\alpha(h,j) , \qquad g,h,j\in G . 
\end{equation}

\begin{remark} \label{rem:equivalence of projective rep definitions}
    We can now see why the two definitions of projective representation are equivalent. Every Borel map $\Pi:G\to \calU(\calH)$ obeying (\ref{def:proj rep with schur cocycle}) descends to a Lie group homomorphism $G\to P\calU(\calH)$, where continuity of the homomorphism comes from that every Borel group homomorphism between Lie groups is smooth. Now, let us see that every Lie group homomorphism $G\to P\calU(\calH)$ lifts to a Borel map $\Pi:G\to \calU(\calH)$ with a Borel map $\alpha:G\times G\to U(1)$ obeying (\ref{def:proj rep with schur cocycle}). Proposition \ref{prop:proj rep yields central extension} guarantees the existence of a principal $U(1)$-bundle $E$ over $G$. We may take a trivializing open cover $\{U_j\}$ of $G$. On each open set $U_j$ we may find a smooth local section $\tau_j:U_j\to E$. Define a function $\tau:G\to E$ such that $\tau(\idty) = \idty$, $\tau$ restricts to $\tau_j$ on $U_j$ whenever the intersection $U_j\cap U_k = \emptyset$, and $\tau$'s restriction to nonempty intersections is made by a coherent choice of $\tau_j$. Such a choice is guaranteed by the axiom of choice. This gives a Borel function $\tau:G\to E$.\footnote{$\tau$ need not be continuous. In fact, it often isn't--if it were, then $\tau$ would be a section of a principle bundle and thus $E$ would be the trivial bundle.} Then, the map $\alpha(g,h) = \tau_g\tau_h\tau_{gh}^{-1}$ is Borel, as is the map $\Pi:= p_U \circ \tau$ where $p_U:E\to \calU(\calH)$ as in Proposition \ref{prop:proj rep yields central extension}.
\end{remark}

\begin{definition}
    Any Borel function $\alpha: G\times G \to U(1)$ obeying (\ref{eq:cocycle condition}) is called a \emph{2-cocycle} (or just \emph{cocycle}) on $G$ with values in $U(1)$.

    The central extension of $G$ by $U(1)$ \emph{associated with a cocycle} $\alpha$ is given by the exact sequence
    \begin{equation}\begin{split}
        1\longrightarrow U(1) &\xlongrightarrow{\iota} U(1)\times_\alpha G \xrightarrow{p_G} G \longrightarrow 1 ,\\ 
        a&\longmapsto (a,\idty) ,
    \end{split}\end{equation} where $U(1)\times_\alpha G$ denotes the product $U(1)\times G$ endowed with the ``twisted'' multiplication defined by 
    \begin{equation}
        (a,g)(b,h) := (\alpha(g,h)ab,gh), \qquad (a,g),(b,h)\in U(1)\times G . 
    \end{equation}
\end{definition}
One must show that $U(1)\times_\alpha G$ is indeed a topological group. Associativity comes from the cocycle condition (\ref{eq:cocycle condition}). The regularity question is far more subtle and beyond the scope of these notes: we refer the interested reader to Bachgi and Misra~\cite{bagchi2000note} and Chapter 2.6 of Tao~\cite{tao2014hilbert}.\footnote{I would like to thank Jerry Kaminker for pointing me to Tao's book. A key point of the argument is showing that given $\alpha$, one may find a smooth cohomologous cocycle $\alpha'$ which allows one to define a smooth structure on $U(1)\times_{\alpha'} G$}

Thus, central extensions yield cocycles and cocycles yield central extensions. We would like a suitable notion of equivalence for cocycles to get a one-to-one correspondence with our (equivalence classes of) central extensions. So: when does a cocycle yield the trivial extension?

\begin{lemma} ~\cite{schottenloher2008mathematical} \label{lem:central extension splits for coboundaries}
    Let $\alpha:G\times G\to U(1)$ be a cocycle. Then the central extension $U(1)\times_\alpha G$ associated with $\alpha$ splits if and only if there is a function $\lambda:G\to U(1)$ with 
    \begin{equation}
        \lambda(gh) = \alpha(g,h) \lambda(g)\lambda(h) .
    \end{equation}
\end{lemma}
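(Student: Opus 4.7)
The plan is to observe that a splitting of the central extension is essentially the same data as the function $\lambda$ in the statement, and then simply to unpack what the homomorphism condition on the splitting says in terms of the twisted multiplication of $U(1)\times_\alpha G$.

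First, suppose the extension splits, i.e.\ there exists a Lie group homomorphism (in particular a Borel map) $\sigma:G\to U(1)\times_\alpha G$ with $p_G\circ \sigma = \ide_G$. Since $p_G$ is the projection onto the second factor, $\sigma$ must be of the form $\sigma(g)=(\lambda(g),g)$ for some function $\lambda:G\to U(1)$, and this $\lambda$ inherits the regularity of $\sigma$. The group homomorphism condition $\sigma(gh)=\sigma(g)\sigma(h)$ reads, using the twisted multiplication,
\begin{equation*}
(\lambda(gh),gh)=(\lambda(g),g)(\lambda(h),h)=(\alpha(g,h)\lambda(g)\lambda(h),\,gh),
\end{equation*}
which is precisely $\lambda(gh)=\alpha(g,h)\lambda(g)\lambda(h)$.

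Conversely, given a $\lambda:G\to U(1)$ satisfying this identity, define $\sigma:G\to U(1)\times_\alpha G$ by $\sigma(g):=(\lambda(g),g)$. The same calculation run in reverse shows $\sigma(g)\sigma(h)=\sigma(gh)$, so $\sigma$ is a group homomorphism, and by construction $p_G\circ \sigma=\ide_G$. Thus $\sigma$ is a splitting of the central extension.

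The only place where I expect any subtlety is in matching the regularity class of $\sigma$ with what is meant by ``splits'' for Lie group central extensions: strictly, one wants $\sigma$ to be smooth (or at least continuous), whereas the $\lambda$ coming out of the construction is only guaranteed Borel. However, by the same regularity arguments alluded to in the text following the definition of $U(1)\times_\alpha G$ (cf.\ Bachgi--Misra and Tao), a Borel group homomorphism between Lie groups is automatically smooth, so this distinction does not cause any problem and both directions of the equivalence go through as stated.
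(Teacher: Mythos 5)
Your argument is correct and is essentially the same as the paper's: both sides of the equivalence reduce to the single computation that $\sigma(g):=(\lambda(g),g)$ is a homomorphism for the twisted product if and only if $\lambda(gh)=\alpha(g,h)\lambda(g)\lambda(h)$. The extra remark on Borel-vs-smooth regularity is a sensible addition but does not change the substance.
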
\begin{proof}
    Consider the map $\sigma:G\to U(1)\times_\alpha G$ given by $\sigma(g) := (\lambda(g),g)$ for $g\in G$. Certainly we have $p_G\circ \sigma = \ide_G$. $\sigma$ is a homomorphism, and thus a section, if and only if
    \begin{equation}
        \sigma(gh) = \sigma(g)\sigma(h) \iff \lambda(gh) = \alpha(g,h)\lambda(g)\lambda(h) . 
    \end{equation} 
\end{proof} This gives an algebraic criterion to check whether our central extension is trivial, which in turn allows us to check whether we may lift a projective representation to a true representation. We can get our desired one-to-one correspondence then by using the lemma to take a suitable quotient.

\begin{definition} \label{def:second group cohomology}
    A cocycle $\alpha$ is called a \emph{coboundary} if there exists a Borel function $\lambda:G\to U(1)$ such that
    \begin{equation}
        \alpha(g,h) = \frac{\lambda(g)\lambda(h)}{\lambda(gh)}, \qquad g,h\in G.
    \end{equation} The set of cocycles $Z$ forms an abelian group with pointwise multiplication, and the set of coboundaries $B$ is a subgroup of $Z$. Define the \emph{second Borel group cohomology of $G$ with values in $U(1)$} by
    \begin{equation}
        H^2(G, U(1)) := Z/B . 
    \end{equation}
\end{definition} Note that one may indeed define a \textit{bona fide} cohomology theory which enjoys the usual cohomology features, like short exact sequence $\implies$ long exact sequence. See~\cite{moore1964extensions} for more information.

\begin{example} \label{ex:second group cohomology U(1)}
    We may immediately compute a rather boring group cohomology. If $G=U(1)$, then any projective representation $\Pi: U(1)\to \calU(\calH)$ with cocycle $\alpha$ lands in the center, i.e. $\im(\Pi)\subseteq U(1)\idty\subseteq \calU(\calH)$, and so may be treated as a map $\Pi: U(1)\to U(1)$. Picking $\lambda = \Pi$, the cocycle $\alpha$ must be a coboundary. Thus,
    \begin{equation}
        H^2(U(1),U(1)) = 0 . 
    \end{equation}
\end{example}

Let us wrap up this chapter with a summarizing remark and a theorem. In the original paper~\cite{moore1976group}, this result holds for $G$ and $A$ locally compact separable with $A$ abelian.

\begin{remark} (Theorem 10~\cite{moore1976group})\label{rem:equivalence of Borel group cohomology}
    From its definition and Lemma (\ref{lem:central extension splits for coboundaries}), we see that central extensions\footnote{Again, recall that these $U(1)$ central extensions of $G$ are the same as considering $G$ as a $U(1)$-principle bundle, which opens the door to other topological methods to classify these isomorphism classes: namely, the classifying space $BG$. See Proposition 4 in~\cite{schottenloher2018unitary} and use the long exact sequence in cohomology induced by the short exact sequence of trivial $G$-modules $1\to \Z\to \R \to U(1)\to 1$, using Appendix 4 of~\cite{chen2013symmetry} to complete the argument.}  of $G$ by $U(1)$ are in one-to-one correspondence with elements in $H^2(G,U(1))$, the Borel group cohomology of $G$ with values in $U(1)$. Altogether, we have the following one-to-one correspondences
    \begin{equation}
    H^2(G,U(1)) \leftrightarrow \left\{
        \parbox{4cm}{\centering 
           Projective representations $\Pi:G\to \calU(\calH)$}
    \right\}
     \leftrightarrow
     \left\{
        \parbox{4cm}{\centering 
           $U(1)$-central extensions of $G$}
     \right\}
    \end{equation}
\end{remark}

For finite groups $G$, the cohomology group $H^2(G,U(1))$ is finite and this is the same as the typical group cohomology. For connected compact Lie groups, we fulfill Remark~\ref{rem:upper bound irreducible proj reps} and present the following theorem of Moore. It is perhaps worth mentioning that the semi-simple case of this result was re-proven with different techniques by Bagchi and Misra in~\cite{bagchi2000note}: their goal was to classify projective representations, and their proof is in spirit a natural extension of the argument presented in Section~\ref{sec:Universal Covers and De-projectivization}.

Any compact Lie group is reductive and so can be decomposed as a product of an abelian Lie group and a semi-simple Lie group, $G=G_{ab}\times G_{ss}$~\cite{kirillov2008introduction}.

\begin{theorem}(Proposition 2.1 and 2.2~\cite{moore1964extensions})\footnote{It should be mentioned that Moore partially attributes these propositions to earlier results by Calabi and Shapiro.}\label{thm:second group cohomology of semi-simple}

Let $G$ be a compact Lie group. Then $H^2(G,U(1))$ is a finite group, and when $G$ is connected, we have the isomorphism
\begin{equation}
    H^2(G,U(1)) \cong \pi_1(G_{ss}),
\end{equation} where $G_{ss}$ is the semi-simple group appearing in the decomposition $G = G_{ab}\times G_{ss}$.

\end{theorem}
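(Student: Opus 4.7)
The plan is to establish the theorem in three main steps: first, reduce the disconnected case to the connected one; second, reduce the connected case to computing $H^2(G_{ss},U(1))$ for a connected compact semi-simple Lie group; and third, compute $H^2(G_{ss},U(1))$ using Bargmann's theorem together with the universal cover $\wt{G}_{ss}\to G_{ss}$. Throughout I rely on the Hochschild-Serre spectral sequence and the inflation-restriction exact sequence for Moore's Borel group cohomology, which are the proper analogues of the classical tools.

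For the disconnected-to-connected reduction, I would use that the identity component $G_0\subseteq G$ is an open normal subgroup with finite quotient $\pi_0(G)$. The Hochschild-Serre spectral sequence for $1\to G_0\to G\to \pi_0(G)\to 1$ has $E_2^{p,q} = H^p(\pi_0(G), H^q(G_0,U(1)))$; since $\pi_0(G)$ is finite, its Borel cohomology in each degree with values in any finite abelian group is finite, so finiteness of $H^2(G,U(1))$ reduces to finiteness of $H^2(G_0,U(1))$ (noting that in intermediate steps the coefficient module $H^q(G_0,U(1))$ will have been shown finite).

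For the connected case I would use the decomposition $G = G_{ab}\times G_{ss}$ and the Hochschild-Serre spectral sequence for $1\to G_{ab}\to G\to G_{ss}\to 1$. In degree two this yields contributions from $H^2(G_{ab},U(1))$, $H^1(G_{ss}, H^1(G_{ab},U(1)))$, and $H^2(G_{ss},U(1))$. The first summand vanishes for a torus $G_{ab} = U(1)^k$ by induction on $k$, starting from Example~\ref{ex:second group cohomology U(1)} and using the same spectral sequence at each step, where the cross term $H^1(U(1), \Z^{k-1})$ vanishes because $U(1)$ is connected and $\Z^{k-1}$ discrete. The middle term vanishes because $H^1(G_{ab},U(1))$ is a lattice isomorphic to $\Z^k$ and $\text{Hom}_{cts}(G_{ss},\Z^k) = 0$ by connectedness of $G_{ss}$. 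Hence $H^2(G,U(1))\cong H^2(G_{ss},U(1))$.

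For the semi-simple core, I would apply the inflation-restriction sequence to the central extension $1\to \pi_1(G_{ss})\to \wt{G}_{ss}\to G_{ss}\to 1$. Bargmann's theorem gives $H^2(\wt{G}_{ss},U(1)) = 0$, and $H^1(\wt{G}_{ss},U(1)) = \text{Hom}_{cts}(\wt{G}_{ss},U(1)) = 0$ since the semi-simple group equals its own commutator subgroup. The five-term exact sequence collapses to the isomorphism $H^2(G_{ss},U(1))\cong \text{Hom}(\pi_1(G_{ss}),U(1))^{G_{ss}}$, and since $\pi_1(G_{ss})$ is central the $G_{ss}$-action on it is trivial, so the invariants equal the full Hom group. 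A classical theorem of Weyl gives that $\pi_1(G_{ss})$ is a finite abelian group; Pontryagin duality then identifies $\text{Hom}(\pi_1(G_{ss}),U(1))\cong \pi_1(G_{ss})$ as abstract groups, completing the proof. The main obstacle will be verifying that Hochschild-Serre and inflation-restriction genuinely hold in Moore's Borel cohomology (as opposed to ordinary discrete group cohomology), and that Borel-measurable characters and cocycles appearing in intermediate steps can be taken continuous for compact Lie groups; these are standard but subtle measurability points treated in Moore's original series of papers.
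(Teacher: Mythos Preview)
The paper does not actually prove this theorem: it cites Moore's result and then, in the paragraph following the statement, explains why Moore's original formulation $H^2(G,U(1)) \cong \Hom(t(\pi_1(G)),U(1))$ is equivalent to the stated one by computing $t(\pi_1(G_{ab})\times\pi_1(G_{ss})) = \pi_1(G_{ss})$ and invoking Pontryagin duality. The only proof-adjacent content is the ``motivating calculation'' in Section~\ref{sec:Universal Covers and De-projectivization}, which informally argues the semi-simple case via Bargmann's theorem and Schur's lemma and is explicitly flagged as intuition subsumed by the cited theorem.

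Your sketch, by contrast, is a genuine proof strategy and is correct in outline. Your Step~3 is a rigorous packaging of the paper's motivating calculation: where the paper argues informally that an irreducible projective representation of $G_{ss}$ determines a character of $\pi_1(G_{ss})$, you extract this cleanly from the inflation--restriction five-term sequence for $1\to\pi_1(G_{ss})\to\wt{G}_{ss}\to G_{ss}\to 1$, with the same inputs (Bargmann for $H^2(\wt{G}_{ss},U(1))=0$, perfectness of $\wt{G}_{ss}$ for $H^1(\wt{G}_{ss},U(1))=0$). Your Step~2 differs in flavor from the paper's post-theorem comments: you reduce $H^2(G,U(1))$ to $H^2(G_{ss},U(1))$ via Hochschild--Serre, while the paper works inside Moore's formula as a black box and just identifies the torsion part of $\pi_1(G)$. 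Your Step~1 (the disconnected reduction) has no analogue in the paper at all. Two small points worth making explicit in a full write-up: for the split extension $1\to G_{ab}\to G\to G_{ss}\to 1$ the transgression $d_2: E_2^{0,1}\to E_2^{2,0}$ vanishes (inflation is split-injective), so $E_\infty^{2,0}$ really is all of $H^2(G_{ss},U(1))$; and, as you note, the availability of Hochschild--Serre and the five-term sequence in Moore's Borel cohomology is the genuine technical load-bearing point, established in Moore's series of papers.
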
 
Let us make a few comments, as the original statement by Moore was aesthetically different. Moore originally wrote that 
\begin{equation} \label{eq:original moore statement}
    H^2(G,U(1)) \cong \Hom(t(\pi_1(G)),U(1),
\end{equation} where $t(\pi_1(G))$ denotes the torsion subgroup of $\pi_1(G)$. By a standard result in algebraic topology (Proposition 1.12~\cite{hatcher2002algebraic}), $\pi_1(G)\cong \pi_1(G_{ab})\times \pi_1(G_{ss})$. Every compact connected abelian Lie group $G_{ab}$ is a product of $k$ tori~\cite{simon1996representations}, and so $\pi_1(G_{ab}) \cong \Z^k$. It is a well-known result that if $G_{ss}$ is semi-simple, then $\pi_1(G_{ss})$ is finite and abelian~\cite{knapp1996lie}. Thus, the torsion subgroup $t(\pi_1(G_{ab})\times \pi_1(G_{ss})) \cong \pi_1(G_{ss})$. Finally, since $\pi_1(G_{ss})$ is a finite abelian group, it is (non-canonically) isomorphic to its Pontryagin dual, $\pi_1(G_{ss})\cong \Hom(\pi_1(G_{ss}),U(1))$~\cite{terras1999fourier}. Thus, Moore's original statement Equation (\ref{eq:original moore statement}) is equivalent to Theorem~\ref{thm:second group cohomology of semi-simple}.

\begin{remark}
    For $G = SO(n)$, $H^2(SO(n),U(1)) = \Z_2$ since $\pi_1(SO(n)) = \Z_2$.
\end{remark} For $SO(3)$, the trivial projective representations are the integer spin-$s$ representations of $SU(2)$, and the nontrivial projective representations are the half-integer spin-$s$ representations of $SU(2)$. Note that in this case, when we restrict to the dihedral subgroup $\Z_2\times \Z_2\subseteq SO(3)$, it is likewise trivial projective for integer spin-$s$ and nontrivial projective for half-integer spin-$s$.

\chapter{The AKLT Chain and the Haldane Phase }\label{ch:AKLT_chain}

\section{The Bilinear-Biquadratic Spin-1 Phase Diagram}
Recall Example \ref{ex:heisenberg chain} (and the revisited Example \ref{ex:heisenberg, revisited}) of the Heisenberg antiferromagnetic ($J<0$) spin chain on $\Lambda = [a,b]\subseteq \Z$. We can generalize this slightly to consider the case where the on-site Hilbert space is the spin-$s$ irrep of $\su(2)$, $\calH_x = V_s =\C^{2s+1}$: 
\begin{equation}\label{ex:heisenberg (AKLT chapter)}
    H_\Lambda = -J \sum_{x=a}^{b-1} \vec{S}_{x} \cdot \vec{S}_{x+1} ,
\end{equation} where we have defined the spin matrices $S^X,S^Y,S^Z$ along with the total angular momentum operator in Equation (\ref{def:total angular momentum}). We saw earlier that this is a natural generalization of the classical Heisenberg antiferromagnet, wherein each site $x\in \Lambda$ hosts ``classical'' spins $\vec{r}_x\in \R^{2s+1}$ with $\abs{\vec{r}_x} = 1$,  and the Hamiltonian rewards anti-aligned neighboring spins. When $s=1/2$, the quantum Heisenberg antiferromagnet is quite well understood: exact eigenvalues and eigenvectors are known via Bethe ansatz, first performed in 1931~\cite{bethe1931theorie}; the ground state is known to be unique by Marshall-Lieb-Mattis; the spectral gap is known to vanish in the thermodynamic limit by Lieb-Schultz-Mattis theorem; and the ground state correlations have been computed (but not proven) to have power law decay~\cite{tasaki2020physics}.

It is natural to suspect that changing from $s=1/2$ to $s=1$ should not seriously disrupt these qualitative properties--after all, by classical analogy, $\R^2$ and $\R^3$ are not so different. So it came as a great surprise when Haldane predicted the following features of the $s=1$ antiferromagnetic quantum Heisenberg chain:
\begin{itemize}
    \item The ground state is unique.
    \item The ground state correlation function decays exponentially.
    \item There is a spectral gap above the ground state energy.
\end{itemize} Moreover, Haldane predicted these features to be the hallmarks of a full quantum phase of matter. Turns out, the $s=1$ Heisenberg antiferromagnet is a tricky beast. Despite abundant numerical and theoretical evidence suggesting these properties are true, proving the existence of a gap for this model remains an open problem even today. But in 1987, Affleck, Kennedy, Lieb, and Tasaki presented the AKLT chain, an exactly solvable model whose features matched Haldane's prediction. The AKLT chain can be thought of as a perturbation of the Heisenberg chain (\ref{ex:heisenberg (AKLT chapter)}), found by normalizing $J=-1$ and adding a ``biquadratic term'' $(\vec{S}_x \cdot \vec{S}_{x+1})^2$ to the already present bilinear term $\vec{S}_x\cdot \vec{S}_{x+1}$:
\begin{equation}
    \wt{H}_\Lambda = \sum_{x=a}^{b-1} \frac{1}{2} \vec{S}_x \cdot \vec{S}_{x+1} + \frac{1}{6} (\vec{S}_x \cdot \vec{S}_{x+1})^2 ,
\end{equation} which, up to a shift in the ground state energy accomplished by the terms $\frac{1}{3}\idty$, is the same as the AKLT chain Hamiltonian (\ref{ex:AKLT, revisited}).

But why stop at a small perturbation? Indeed, one can consider a whole family of models: the spin-1 bilinear biquadratic chains
\begin{equation} \label{def:bilinear-biquadratic}
    H_{\Lambda}(u,v) = \sum_{x=a}^{b-1} \cos \theta (\vec{S}_{x} \cdot \vec{S}_{x+1} ) + \sin \theta (\vec{S}_{x} \cdot \vec{S}_{x+1})^2 .
\end{equation} Up to some physically inconsequential rescaling so that the models are parameterized on a circle, this is the most general phase diagram for spin-1 $SU(2)$-invariant nearest-neighbor Hamiltonians, which can be seen in the following way. Write the homomorphism for the tensor representation of spin-$1$'s as $\pi: \su(2)\to \gl(V_1\otimes V_1)$.
One can calculate that both $\vec{S}_x\cdot \vec{S}_{x+1}$ and $(\vec{S}_x\cdot \vec{S}_{x+1})^2$ are in the commutant $\calC(V_1\otimes V_1)$ of the representation $V_1\otimes V_1$, i.e. 
\begin{equation}
    [\pi(X), \vec{S}_x\cdot \vec{S}_{x+1}] =[\pi(X), (\vec{S}_x\cdot \vec{S}_{x+1})^2] = 0, \qquad \text{ for all } X\in \su(2) .
\end{equation} Then, the Clebsch-Gordan decomposition (\ref{prop:clebsch-gordan}) of two neighboring spin-1 sites is $V_{1}\otimes V_{1}\cong V_2 \oplus V_1 \oplus V_0$. 
These irreducibles are all distinct, so by Schur's lemma (\ref{lem:schur}), the commutant $\calC(V_1\otimes V_1)$ is 3-dimensional and spanned by the orthogonal projections $P^{(2)}, P^{(1)}, P^{(0)}$. Notice that $\idty, \vec{S}_x\cdot \vec{S}_{x+1}, (\vec{S}_x\cdot \vec{S}_{x+1})^2$ are linearly independent and so form a basis for $\calC(V_1\otimes V_1)$. Any $SU(2)$-invariant interaction $h_{x,x+1}\in \gl(V_1\otimes V_1)$ by definition has $h_{x,x+1}\in \calC(V_1\otimes V_1)$, and so can be written as 
\[
    h_{x,x+1} = c_0 \idty + c_1 (\vec{S}_x\cdot \vec{S}_{x+1}) + c_2 (\vec{S}_x\cdot \vec{S}_{x+1})^2 , \qquad c_i\in \R . 
\] where we have used self-adjointness $h_{x,x+1} = h_{x,x+1}^*$ to see that $c_i\in \R$. But $c_0\idty$ inconsequentially shifts the spectrum of $h_{x,x+1}$ by $c_0$, and so does not contribute to the ground state structure. This completes the argument. Let us now proceed to look at the phase diagram of the Hamiltonian (\ref{def:bilinear-biquadratic}), which has been thoroughly studied throughout the past 40 years:
\begin{figure}
    \centering
    \includegraphics[width=0.5\textwidth]{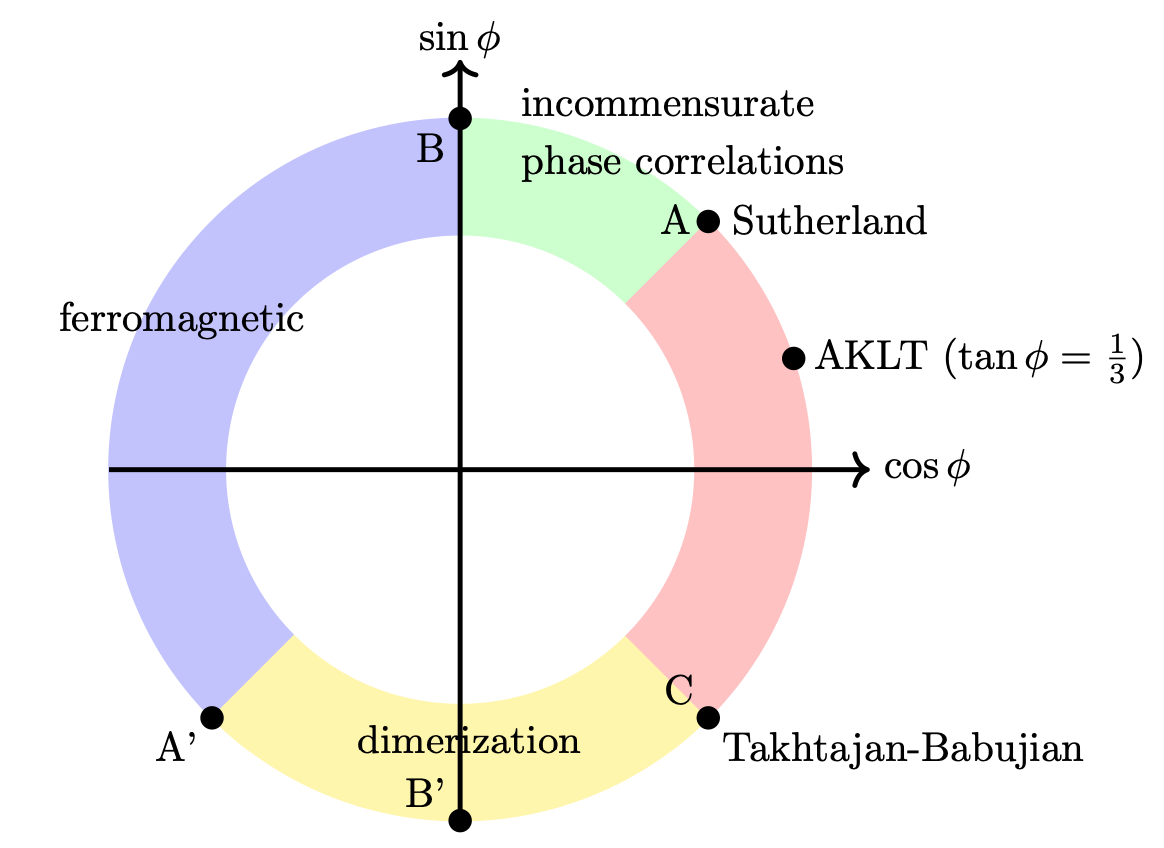}
    \caption{The spin-1 bilinear biquadratic phase diagram (Figure 2 in ~\cite{bjornberg2021dimerization}). The red phase corresponds to the Haldane phase, inhabited by the AKLT chain.}
    \label{fig:bilin_biquad}
\end{figure}

The Heisenberg antiferromagnet lands at $\phi = 0$, whence the picture of the AKLT model as a ``perturbation'' of this model seems a bit more realistic. A great deal of numerical evidence suggests that the red phase is indeed a stable gapped phase between the gapless models studied by Sutherland and Takhtajan-Babujian~\cite{bjornberg2021dimerization}. This figure was taken from~\cite{bjornberg2021dimerization}, wherein properties of the bottom yellow ``dimerization'' region were proven.\footnote{This same reference also features a helpful discussion on the regions in this phase diagram.} We will have more to say comparing the red and yellow regions in later chapters, but for now, the takeaway is that the AKLT chain is an exactly solvable model believed to qualitatively model the behavior of the red Haldane phase. 

\subsection{Overview of the chapter}

In this chapter, we will study the AKLT ground state from three different perspectives: the valence bond solid (VBS) picture, the matrix product state (MPS) picture, and the finitely correlated state (FCS) picture.\footnote{These notes have been largely shaped by Tasaki's book~\cite{tasaki2020physics} and Nachteraele and Sims' lecture notes~\cite{nachtergaele2016quantum}. Most of the theorems/propositions/lemmas are taken from one of these sources, with most of the VBS picture from Tasaki and most of the MPS/FCS picture from Nachtergaele and Sims.} This state would end up having impact beyond its unique role in probing the Haldane phase: it would lead to the invention of MPSs in the seminal pair of papers~\cite{white1992density,fannes1992finitely} and then the more general projected entangled pair states (PEPS), a massive class of useful low-entanglement quantum states commonly used to approximate other low-entanglement quantum states both analytically and numerically, as well as the rich study of symmetry protected topological (SPT) order states.

In Section \ref{sec:The AKLT Ground State as a Valence Bond State (VBS)}, we warm-up to the VBS picture by way of the Majumdar-Ghosh model, a predecessor to the AKLT chain whose VBS ground states provide a simple first view of the ``spins and bonds'' calculations common to $SU(2)$-invariant models. It also serves as a caricature of dimerization, the breaking of a Hamiltonian's translation invariance into a pair of 2-periodic ground states; here, the two ground states consist only of singlet bonds between nearest neighbors, and the translation-invariance is intuitively broken by the ground states ``choosing'' to either bond the spin at a site $x$ with its left $x-1$ or right $x+1$ neighbor. We then graduate to the VBS picture of the AKLT ground state. The process goes by constructing a ``virtual'' space of spin-1/2 particles and then constructing the ``physical space'' of spin-1s by way of an orthogonal projection $P_{sym}:V_{1/2}\otimes V_{1/2}\to V_1$. One constructs a line of entangled singlet bonds in the virtual space and then projects down to the physical space. Then, since $P_{sym}$ is an intertwiner of $SU(2)$ reps, we may commute the interaction terms $P_{x,x+1}^{(2)}$ of the AKLT chain past it, whence orthogonality of distinct irreps ensures that $P^{(2)}$ annihilates a singlet bond. In this picture, it is particularly clear that the finite-chain ground state space forms the $SU(2)$ representation $V_1\oplus V_0$. 

In Section \ref{sec:The AKLT Ground State as a Matrix Product State (MPS)} we proceed to the MPS picture. The fundamental object here is the tensor $T:\C^2\to \C^3\otimes \C^2$, which we introduce as an intertwining isometry. One then builds a tensor network by stringing several copies of this tensor together, where the free indices of this tensor network correspond to physical degrees of freedom. This gives us a very explicit expression of these states which grant access to a variety of calculations. This picture is in fact the same as the previous picture, since tensor network contractions may be equivalently described as projections of entangled pairs (hence PEPS). We spend some time in the end of this section detailing this correspondence. 

In Section \ref{sec:The AKLT Ground State as a Finitely Correlated State (FCS)} we reach our final picture, FCS. In essence this picture is the same as that of (translation invariant and $k$-periodic) MPSs, but described with the thermodynamic limit in mind. Here the tensor $T$ is repackaged as a CP map $\bbE:\calA_{x}\otimes \calB \to \calB$ given by $\bbE_A(B) = T^*(A\otimes B) T$, and the finite chain boundary data of the MPS are stored in a positive linear functional (``density matrix'') $\rho\in \calB^*$ and positive element $e\in \calB$. The advantage of this approach is that expectations in the thermodynamic limiting states can be readily accessed by diagonalizing the transfer operator $\bbE_\idty$. It is here that we will verify two of the defining features of the Haldane phase: that the thermodynamic limiting ground state is unique, and that the ground state exhibits exponential decay of correlations. The final property, that the Hamiltonian is gapped, will not be proven, but references will be provided.

\subsection{Setup}

Take a finite chain $\Lambda = [1,\ell]$ of spin-1's $V_1\cong \C^3$, letting as usual $\calH = \bigotimes_{x=1}^\ell \C^3$. As per Example (\ref{ex:AKLT, revisited}), the Hamiltonian can be written as
\begin{equation} \label{def:AKLT Hamiltonian (AKLT chap)}
    H_{\Lambda} = \sum_{x=a}^{b-1} P^{(2)}_{x,x+1} . 
\end{equation} Recall that by Clebsch-Gordan (\ref{prop:clebsch-gordan}) two neighboring spin-1's decompose as $V_1\otimes V_1 \cong V_2 \oplus V_1\oplus V_0$. Crucially, this Hamiltonian will turn out to be frustration-free.

Much of the key to this model's exact solvability is the frustration-free property enjoyed by its Hamiltonian.

\begin{definition}\label{def:frustration free}
    An interaction $\Phi:\calP_0(\Gamma) \to \calA_{loc}$ is \textbf{frustration-free} if for any finite volume $\Lambda\subseteq \Gamma$ the finite volume Hamiltonian $H_{\Lambda} = \sum_{X\subseteq \Lambda} \Phi(X)$ and each of the terms $\Phi(X)$ appearing in it have a common eigenvector belonging to their respective smallest eigenvalues.
\end{definition}
Let us clarify this definition by considering a special case, which corresponds to ``shifting'' each interaction term so that they are all $\Phi(X)\geq 0$.

\begin{lemma} \label{lem:frustration freeness for nonnegative operators} Let $A_j \geq 0$, $j=1,\dots, k$ be non-negative operators on a complex Hilbert space $\calH$. Then
\begin{equation}
    \ker \paran{\sum_{j=1}^k A_j} = \bigcap_{j=1}^k \ker (A_j) .
\end{equation}
\end{lemma}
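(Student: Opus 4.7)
The statement is a standard fact about non-negative operators; I would expect a short proof in two directions.

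The containment $\bigcap_j \ker(A_j) \subseteq \ker(\sum_j A_j)$ is immediate by linearity: if $A_j v = 0$ for all $j$, then $(\sum_j A_j) v = 0$.

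For the nontrivial containment $\ker(\sum_j A_j) \subseteq \bigcap_j \ker(A_j)$, the key trick is to pair against $v$ and use positivity. Fix $v \in \ker(\sum_j A_j)$, so that
\begin{equation}
0 = \inprod{v, \paran{\sum_{j=1}^k A_j} v} = \sum_{j=1}^k \inprod{v, A_j v}.
\end{equation}
Since each $A_j \geq 0$, every summand $\inprod{v, A_j v}$ is a non-negative real number, and a sum of non-negative reals vanishes iff each summand vanishes. Hence $\inprod{v, A_j v} = 0$ for every $j$.

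The final step is to upgrade $\inprod{v, A_j v} = 0$ to $A_j v = 0$. Since $A_j$ is a non-negative operator on a complex Hilbert space, it admits a (unique) non-negative square root $\sqrt{A_j}$ with $(\sqrt{A_j})^* \sqrt{A_j} = A_j$. Therefore
\begin{equation}
0 = \inprod{v, A_j v} = \inprod{v, (\sqrt{A_j})^* \sqrt{A_j} v} = \norm{\sqrt{A_j} v}^2,
\end{equation}
which forces $\sqrt{A_j} v = 0$ and consequently $A_j v = \sqrt{A_j} (\sqrt{A_j} v) = 0$. Thus $v \in \ker(A_j)$ for every $j$, completing the proof. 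The only mildly nontrivial ingredient is the existence of the square root, but this is a standard consequence of the continuous functional calculus (or, in finite dimensions, the spectral theorem) for a non-negative self-adjoint operator; no obstacle is expected.
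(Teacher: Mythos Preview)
Your proof is correct and follows essentially the same approach as the paper: establish the easy containment by linearity, then for the reverse containment pair against $v$, split the sum of non-negative quadratic forms into individually vanishing terms, and use the square root trick to upgrade $\inprod{v,A_j v}=0$ to $A_j v=0$. The only cosmetic difference is that the paper first takes the square root of the sum $A+B$ (a slightly roundabout detour) before invoking the same $\inprod{\psi,A\psi}=0 \Rightarrow A\psi=0$ step for each summand, whereas you apply the square root directly to each $A_j$; your version is arguably the cleaner presentation of the same idea.
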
 
\begin{proof}
    Let us demonstrate this with two operators $A,B\geq 0$, the extension will be clear.

    First, we show $\ker(A+B)\supseteq \ker A \cap \ker B$. Let $\psi\in \ker(A)\cap \ker(B)$. Then $(A+B)\psi = A\psi + B\psi = 0$.
    
    Now, we show $\ker(A+B)\subseteq \ker A \cap \ker B$. Let $\psi \in \ker(A+B)$, so $(A+B)\psi = 0$. Since $A,B\geq 0$, we have $A + B \geq 0$. We can thus take a square root $(A+B)^{1/2}$, and by looking at the eigenvalue zero and doing functional calculus with the square root function (zero has a unique square root), we see that $\ker(A+B) = \ker(A+B)^{1/2}$. Thus,
    \begin{equation} \begin{split}
        0 = \norm{(A+B)\psi}^2 = \norm{(A+B)^{1/2}\psi}^2 &= \inprod{(A+B)^{1/2}\psi, (A+B)^{1/2}\psi} \\
        &= \inprod{\psi, (A+B)\psi} \\
        &\geq 0
    \end{split} \end{equation} where the second to last line follows by $A+B\geq 0 $. And thus, since this is a complex Hilbert space, $\inprod{\psi,A\psi}=\inprod{\psi,B\psi} = 0$ implies $\psi\in \ker (A)\cap \ker (B)$.
\end{proof}

\begin{remark} \label{rem:frustration free}
    If every interaction term $\Phi(X)\geq 0$, then the Hamiltonian $H_{\Lambda}$ is frustration-free if the intersection is nonempty $\bigcap_{X\subseteq \Lambda} \ker(\Phi(X))\neq \{0\}$.
\end{remark}

Note that this includes as a special case any commuting projector model, like the toric code. Certainly though it contains many models beyond this class, including the star of this chapter the AKLT chain.
The AKLT Hamiltonian will turn out to be frustration-free, once we demonstrate that it has a non-trivial $\ker H_{[1,\ell]}$.

\section{The AKLT Ground State as a Valence Bond State (VBS)} \label{sec:The AKLT Ground State as a Valence Bond State (VBS)}

\subsection{Warm-up: the Majumdar-Ghosh Chain Ground States} \label{sec:Warm-up: the Majumdar-Ghosh Chain Ground States}
Recall from Example \ref{ex:Majumdar-Ghosh, revisited} the Majumdar-Ghosh model, a spin-1/2 chain $\Lambda = [a,b] \subseteq \Z$ whose Hamiltonian can be expressed as 
\begin{equation} \begin{split}\label{ex:MG chain (AKLT chapter)}
    H_{\Lambda} &= \sum_{x=a}^{b-2} 2\vec{\sigma}_x\cdot \vec{\sigma}_{x+1} + \vec{\sigma}_x\cdot \vec{\sigma}_{x+2} \\
    &= (\textit{boundary terms}) +  \sum_{x=a}^{b-2} P_{x,x+1,x+2}^{(3/2)} , 
\end{split}\end{equation} where $P_{x,x+1,x+2}^{(3/2)}$ is the orthogonal projection onto the spin-3/2 subspace of three neighboring spin-1/2 particles $V_{1/2}\otimes V_{1/2}\otimes V_{1/2} \cong V_{3/2}\oplus 2V_{1/2}$. The original paper by Majumdar and Ghosh~\cite{majumdar1969next}, along with the work by Caspers, Emmett, and Magnus~\cite{caspers1984majumdar}, shows that the only finite chain ground states of this model (with open boundary conditions) are given by products of singlets. Recall that a singlet refers to the trivial spin-0 representation $V_0\cong \C$ of $SU(2)$: here, just as in Example \ref{ex:2 qubit tensor rep of su(2)}, the singlet appears as the one dimensional invariant subspace of the two-qubit tensor rep $V_{1/2}\otimes V_{1/2}$, and when we choose the basis $\ket{\up},\ket{\down}$ of $V_{1/2}\cong \C^2$, the singlet rep has 
\begin{equation}
    V_0 = \text{span}\{\ket{\up\down}-\ket{\down\up}\} \subseteq \C^2\otimes \C^2. 
\end{equation} Let us describe these ground states more explicitly, and let $\phi := \ket{\up\down}-\ket{\down\up}$ denote a copy of the singlet. When the chain length is even, say $[a,b] = [1,2\ell]$, we have a 5-dimensional ground state space given by
\begin{equation}\label{eq:MG even l ground states}
    \ker H_{[1,2\ell]} = \paran{\C \underbrace{\phi\otimes \dots \otimes \phi}_{\ell \, \text{copies} } }\cup \paran{\C^2\otimes \underbrace{\phi \otimes \dots \otimes \phi}_{\ell-2 \, \text{copies}} \otimes \C^2} , 
\end{equation} Recall from earlier that the singlet $V_0\cong \C \phi$ acts as the identity in the representation ring of $SU(2)$, which is just a fancy way of saying $W \otimes V_0 \cong W$ for any rep $W$. In particular, the left expression is a singlet $V_0\otimes \dots \otimes V_0 \cong V_0$, and the right expression is a product of two spin-1/2's $V_{1/2}\otimes V_0\otimes \dots \otimes  V_0 \otimes V_{1/2} \cong V_{1/2}\otimes V_{1/2}$, so the ground state space (\ref{eq:MG even l ground states}) is the $SU(2)$ rep given by
\begin{equation} 
    \ker H_{[1,2\ell]} = V_0 \oplus \paran{V_{1/2}\otimes V_{1/2}} . 
\end{equation} We can pictorially represent these states, using dots to denote ``free'' spin-1/2 particles $V_{1/2} = \C^2$ and lines to denote ``singlet bonds'' $V_0 = \C \phi$. Let us write them for $2\ell = 4$. Firstly, the singlet state $V_0$: 
\begin{figure}\centering
\includegraphics{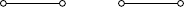}
\end{figure}

Now, the two free spin-1/2 particles $V_{1/2}\otimes V_{1/2}$: 
\begin{figure}\centering
\includegraphics{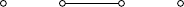}
\end{figure}

Let us also look at the odd length case $\Lambda = [1,2\ell+1]$. In this case we have a 4-dimensional ground state space 
\begin{equation}\label{eq:MG odd l ground states}
    \ker H_{[1,2\ell+1]} = \paran{\C^2\otimes \underbrace{\phi\otimes \dots \otimes \phi}_{\ell \, \text{copies} } }\cup \paran{ \underbrace{\phi \otimes \dots \otimes \phi}_{\ell \, \text{copies}} \otimes \C^2} , 
\end{equation} or in terms of $SU(2)$ reps,
\begin{equation}
    \ker H_{[1,2\ell+1]} = V_{1/2} \oplus V_{1/2} . 
\end{equation} Pictorially, we have for $2\ell+1=3$ sites 

\begin{figure}\centering
    \includegraphics{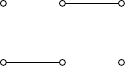}
\end{figure}

While we will not write a full argument proving that these are the exact kernels of the Hamiltonian (\ref{ex:MG chain (AKLT chapter)}), one can reconstruct a proof using two key observations. The first is that the interaction term $P^{(3/2)}$ will punish any vectors in the spin-3/2 representation of 3 neighboring spin-1/2s, leaving only representations of the form $V_0\otimes V_{1/2}$ and $V_{1/2}\otimes V_0$ in the kernel. In words, to be a ground state one must have a singlet supported on three consecutive sites. The second is that this Hamiltonian is frustration-free (see Definition \ref{def:frustration free}), which guarantees that any ground state of $H_\Lambda$ must be a ground state of every term $P_{x,x+1,x+2}^{(3/2)}$. Here, this has the powerful consequence that every consecutive three sites must contain a singlet representation--since we are working with spin-1/2 particles, this almost immediately forces the ground states to have the desired structure.

Notice before proceeding that the dimension of the finite-chain Hamiltonian's ground state space is somewhat misleading. Simply by inspection (and indeed, this is provably the case) it seems that there are only two pure ground states $\omega_\pm$ in the thermodynamic limit given by infinite products of singlet bonds where one is the translate of the other. In essence, the finite chain ground state space contains ``spurious'' ground states which differ only by a choice of boundary condition. Indeed, given any finitely supported observable $A\in \calA_{[1,2\ell+1]}$, one may pick a sufficiently large chain (here, $[-1,2\ell+3]$ will suffice) so that these boundary conditions are inconsequential, e.g. for any $\ket{\psi_+},\ket{\wt{\psi}_+}\in \paran{\C^2\otimes \phi^{\otimes \ell}}$, the expectation values agree 
\[
    \bra{\psi_+} A \ket{\psi_+} = \bra{\wt{\psi}_+} A \ket{\wt{\psi}_+} . 
\] Or in other words, the reduced density matrices of $\ket{\psi_+}$ and $\ket{\wt{\psi}_+}$ are identical. This ambiguity is resolved by taking the thermodynamic limit: in this case, we see only the two ``true'' pure ground states $\omega_\pm$, which are uniquely determined by the property that 
\[
    \omega_\pm (P_{x,x+1,x+2}^{(3/2)}) = 0 , \qquad \text{for all } x\in \Z.
\] The AKLT ground state will exhibit a similar behavior, as we will see in the next section.

\subsection{AKLT Ground State}
We will now present the valence bond state (VBS) construction of the ground state of the AKLT Hamiltonian (\ref{def:AKLT Hamiltonian (AKLT chap)}). Let us begin with an observation. 
Clebsch-Gordan yields the two spin-1/2 tensor decomposition $V_{1/2}\otimes V_{1/2}\cong V_{1}\oplus V_0$. Let $P_{sym}:V_{1/2}\otimes V_{1/2} \to V_1$ denote the orthogonal projection onto the spin-1 subspace. In more words, $P_{sym}$ acts as $\idty$ on spin-1 and $0$ elsewhere. By construction $P_{sym}$ is an intertwiner (\ref{def:equivalent reps}), which means that
\begin{equation}\label{eq:P_sym an intertwiner}
    P_{sym} \Pi^{(1/2)}_g \otimes \Pi^{(1/2)}_g  = \Pi^{(1)}_g P_{sym}, \qquad \text{for all } g\in SU(2) , 
\end{equation} where $\Pi^{(s)}: SU(2)\to GL(\C^{2s+1})$ gives the spin-$s$ representation. This then quickly implies that any tensor power $P_{sym}^{(\otimes \ell)}$ is an intertwiner in the same way.

Now we construct the ``virtual'' Hilbert space. Take a chain of spin-1/2 $V_{1/2}\cong \C^2$ particles of length $2\ell$, labeled as
\begin{equation}
    \calH_{virt} := V_{1/2}^{(L,1)} \otimes V_{1/2}^{(R,1)} \otimes V_{1/2}^{(L,2)} \otimes V_{1/2}^{(R,2)} \otimes \dots \otimes V_{1/2}^{(L,\ell)} \otimes V_{1/2}^{(R,\ell)}. 
\end{equation} Then, similarly to the Majumdar-Ghosh model in Equation (\ref{eq:MG even l ground states}), define the 4 dimensional space of vector states $W$ as singlets $\phi = \ket{\up\down}-\ket{\down\up}$ with ``dangling free spins'' at the boundary:
\begin{equation} 
    W_{virt} := \C^2\otimes \underbrace{\phi \otimes \dots \otimes \phi}_{\ell-2 \, \text{copies}} \otimes \C^2. 
\end{equation} Now, define the subspace $W_{gs}\subseteq \calH$ by projecting:
\begin{equation} \label{def:VBS ground states}
    W_{gs} := P^{\otimes \ell}_{sym} W_{virt} . 
\end{equation} Note that this projection does \textit{not} preserve the simple structure of ``separable product of singlets''. Indeed, it creates a significant amount of entanglement. We claim that this is the ground state space of the AKLT Hamiltonian. The first thing to check is that any vector $\ket{\psi} = P^{\otimes \ell}_{sym} \ket{\psi_{virt}} \in W_{gs}$ is in fact a ground state vector. This is a straightforward calculation using the intertwining relation (\ref{eq:P_sym an intertwiner}). One first sees that the projection onto the spin-2 subspace has $P^{(2)}_{x,x+1} P_{sym}^{(\otimes 2)} = P_{sym}^{(\otimes 2)} P^{(2)}_{(L,x),(R,x),(L,x+1),(R,x+1)}$, the orthogonal projection onto the spin-2 subspace of $V_{1/2}^{\otimes 4}$, then we have
\begin{equation}
    P^{(2)}_{x,x+1} P^{\otimes \ell}_{sym} \ket{\psi_{virt}} = P^{\otimes \ell}_{sym} P^{(2)}_{(L,x),(R,x),(L,x+1),(R,x+1)} \ket{\psi_{virt}}
    = 0 ,
\end{equation} where the second equation follows because the representation $W_{virt}$ decomposes as $W_{virt} \cong V_{1}\oplus V_0$ and contains no spin-2 irreps. So long as $W_{gs}\neq \{0\}$, $H_{\Lambda}$ is frustration-free and so $W_{gs}$ consists of ground states of $H_{\Lambda}$. We will now show that $W_{gs} \cong V_{1}\oplus V_0$, i.e. it is a 4-dimensional rep of $SU(2)$. By the intertwining relation (\ref{eq:P_sym an intertwiner}), $W_{gs}$ is an invariant subspace of $\calH$ since 
\[
    (\Pi_g^{(1)})^{\otimes \ell} W_{gs} = (\Pi_g^{(1)})^{\otimes \ell} P^{\otimes \ell}_{sym} W_{virt} = P^{\otimes \ell} (\Pi_g^{(1/2)}\otimes \Pi_g^{(1/2)})^{\otimes \ell} W_{virt} =  P^{\otimes \ell} W_{virt} = W_{gs}. 
\] Just as in the warm-up in Equation (\ref{eq:MG even l ground states}), $W_{virt}\cong V_0 \oplus V_1$. This is an irrep decomposition, so it suffices to demonstrate that these irreps are not annihilated by the projection, i.e. that there exists $v_0\in V_0$ with $P_{sym}^{\otimes \ell} v_0 \neq 0$ and that there exists $v_1\in V_1$ with $P_{sym}^{\otimes \ell} v_1 \neq 0$. It is clear from our 2 qubit Example (\ref{ex:2 qubit tensor rep of su(2)}) that the following two vectors respect this decomposition (we are essentially creating a singlet bond for the first and a triplet bond for the second):
\begin{equation}\begin{split}
    v_0 &:= \ket{\up} \otimes \phi^{\otimes \ell-2} \otimes\ket{\down} - \ket{\down} \otimes \phi^{\otimes \ell-2} \otimes\ket{\up} \\
    v_1 &:= \ket{\up} \otimes\phi^{\otimes \ell-2}\otimes \ket{\up}. 
\end{split}\end{equation}
We can compute the matrix elements of $P_{sym}$ in the tensor basis of $\ket{\up},\ket{\down}$ by
\begin{equation}
    P_{sym} = \ket{\up\up}\bra{\up\up} + \frac{1}{2}\paran{\ket{\up\down}+\ket{\down\up}}\paran{\bra{\down\up}+\bra{\up\down}} + \ket{\down\down}\bra{\down\down} . 
\end{equation} Observe that on overlapping singlets
\begin{equation} \begin{split}
    \idty \otimes P_{sym} \otimes \idty (\phi\otimes \phi ) &= \idty \otimes P_{sym} \otimes \idty \paran{\ket{\up\down\up\down} - \ket{\down\up\up\down} - \ket{\up\down\down\up} + \ket{\down\up\down\up}} \\
    &= -\ket{\down\up\up\down} - \ket{\up\down\down\up} + \frac{1}{2}\paran{\ket{\up\up\down\down} + \ket{\up\down\up\down} + \ket{\down\down\up\up} + \ket{\down\up\down\up}} \\
    &\neq 0 .
\end{split}\end{equation} One can perform similar calculations to explicitly (albeit tediously) verify that $P_{sym}^{\otimes \ell} v_0 \neq 0$ and $P_{sym}^{\otimes \ell} v_1 \neq 0 $. This completes the construction and proves that as a rep, $W_{gs}\cong V_0\oplus V_1$. 

The VBS construction is often portrayed pictorially, much as we did with the Majumdar-Ghosh ground states. Again, free dots represent spin-1/2 particles $V_{1/2}$, lines represent singlet $V_0$ bonds, and the new red circles represent spin-1 projections. 

\begin{figure}\label{fig:VBS picture}
\centering \includegraphics{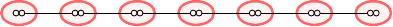}
\captionof{figure}{VBS picture of AKLT chain ground states}
\end{figure}
To recap, there are $2\ell$ ``virtual'' spin-1/2 particles $V_{1/2}$ and $\ell$ ``physical'' spin-1 $V_1$ particles, where each physical particle arises as the image of the projection map $P_{sym}$ applied to a pair of spin-1/2s $V_1^{(x)} = P_{sym} (V_{1/2}^{(L,x)}\otimes V_{1/2}^{(R,x)})$. At the end of it all, there are two free spin-1/2s at either end of the chain, meaning this rep is $V_{1/2}\otimes V_{1/2}\cong V_1\oplus V_0$. In the thermodynamic limit this four dimensional space $W_{gs}$ will coalesce into a single ground state $\omega$, uniquely specified by the property that
\begin{equation}
    \omega(P_{x,x+1}^{(2)}) = 0 \qquad \text{ for all } x\in \Z . 
\end{equation}

Proving uniqueness of a ground state is in general tricky. Given a candidate state space $W_{gs}$, it is generally a straightforward\footnote{Of course, even ``simple'' linear algebra becomes more involved when working with high dimensional vector spaces.} verification to see that $W_{gs} \subseteq \ker H_{[1,\ell]}$ for all chain lengths $\ell$. But to show that this containment is saturated may be very difficult in general. In this case however, the combination of the frustration-free property (\ref{def:frustration free}) and the structure of $W_{gs}$ as an $SU(2)$ rep makes it possible. It is totally possible using the VBS construction to prove uniqueness of the AKLT ground state (indeed, this is the approach in Tasaki's book~\cite{tasaki2020physics}), but the proof is so similar to that corresponding to the MPS construction in the following section that we will delay this. This type of uniqueness result means that the Hamiltonian $H_{\Lambda}$ is a \textit{parent Hamiltonian} for the family of ground states $W_{gs}$, a notion we will spend more time with in Chapter \ref{ch:MPSs}.

\begin{remark}
The VBS construction presented here consists of constructing a virtual Hilbert space of spins, entangling pairs of these spins, and then projecting to the physical Hilbert space. Initially it may not be obvious, but this is the same data as the following matrix product state (MPS) construction. Both are special cases of the more general projected entangled-pair state (PEPS) construction, which builds a wide variety of states in varying spatial dimensions using exactly this procedure. Note also that due to the $SU(2)$ symmetry so integrated in these models, we have repeatedly focused on the rep theory aspects of these states. General VBS, MPS, or PEPS need not have any symmetry whatsoever, only requiring a collection of entangled virtual particles and a projection to the physical Hilbert space. 
\end{remark}

\section{The AKLT Ground State as a Matrix Product State (MPS)} \label{sec:The AKLT Ground State as a Matrix Product State (MPS)}
We will now present the MPS construction of the ground state space for the AKLT chain (\ref{def:AKLT Hamiltonian (AKLT chap)}). We begin again with a Clebsch-Gordan (\ref{prop:clebsch-gordan}) decomposition $V_{1}\otimes V_{1/2} \cong V_{3/2}\oplus V_{1/2}$, which furnishes a unique-up-to-phase isometry $T: V_{1/2}\to V_{1}\otimes V_{1/2}$ which intertwines (see Definition \ref{def:equivalent reps}) these representations, i.e. 
\begin{equation} \label{eq:intertwiner MPS}
    (\Pi_g^{(1)}\otimes \Pi_g^{(1/2)}) T = T \, \Pi_g^{(1/2)} , \qquad \text{for all } g\in SU(2) . 
\end{equation} Just as a reminder, since $T:\C^2\to \C^3\otimes \C^2$, $T$ an isometry means that $T^*T =\idty_2$, where we have made the dimension a bit more explicit. One may use tensor network diagrams to write this relation, where each free leg corresponds to an index in the tensor and each bond between legs corresponds to an index contraction:
\begin{figure}
    \centering
    \includegraphics[width=0.5\textwidth]{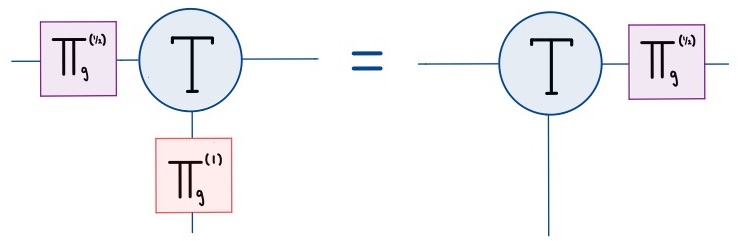}
    \caption{Tensor network diagram of the intertwining relation \ref{eq:intertwiner MPS}.}
    \label{fig:intertwiner}
\end{figure}

Let $\alpha,\beta\in V_{1/2} = \C^2$ and $n\geq 2$ and define $\psi_{\alpha\beta}^{(\ell)}\in \calH_{[1,\ell]}$ by\footnote{MPSs demand piles of linear algebra, which all at once may be overwhelming (as it was to the author). We recommend writing small chains, like a chain of length 2, to understand these sorts of expressions.}
\begin{equation} \label{def:MPS AKLT (isometry)}
    \psi_{\alpha\beta}^{(\ell)} = (\idty_3^{\otimes \ell}\otimes \bra{\beta}) (\underbrace{\idty_3\otimes \dots \otimes \idty_3}_{\ell-1}\otimes T) \dots (\idty\otimes T) T \ket{\alpha} . 
\end{equation} In tensor network form, we may write this state by
\begin{figure}
    \centering
    \includegraphics[width=0.5\textwidth]{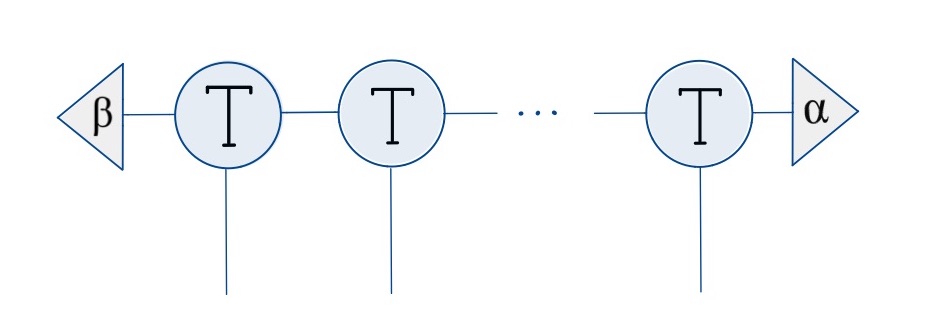}
    \caption{Tensor network diagram of $\psi_{\alpha\beta}^{(\ell)}$ (\ref{def:MPS AKLT (isometry)}).}
    \label{fig:MPS_tensor}
\end{figure}
By using the intertwining property (\ref{eq:intertwiner MPS}) tensored $\ell$ times, one sees that for any $g\in SU(2)$ 
\begin{equation}\begin{split} \label{eq:AKLT SU(2) rep}
    (\Pi_g^{(1)})^{\otimes \ell}\psi_{\alpha\beta}^{(\ell)} &= (\idty_3^{\otimes \ell}\otimes \bra{\Pi_g^{(1/2)}\beta}) (\underbrace{\Pi_g^{(1)}\otimes \dots \otimes \Pi_g^{(1)}}_{\ell} \otimes \Pi_g^{(1/2)} T) \dots  T \ket{\alpha} \\
    &= (\idty_3^{\otimes \ell}\otimes \bra{\Pi_g^{(1/2)}\beta}) (\underbrace{\idty_3\otimes \dots \otimes \idty_3}_{\ell-1} \otimes T) \dots  (\idty_3\otimes T)T \ket{\Pi_g^{(1/2)}\alpha}
\end{split}\end{equation} where in the first line we have used that $\bra{\beta}(\Pi_g^{(1/2)})^* = \bra{\Pi_g^{(1/2)} \beta}$. In tensor network form, using the diagram Figure \ref{fig:intertwiner} corresponding to the intertwining relation (\ref{eq:intertwiner MPS}), we have 
\begin{figure}
    \centering
    \includegraphics[width=1\textwidth]{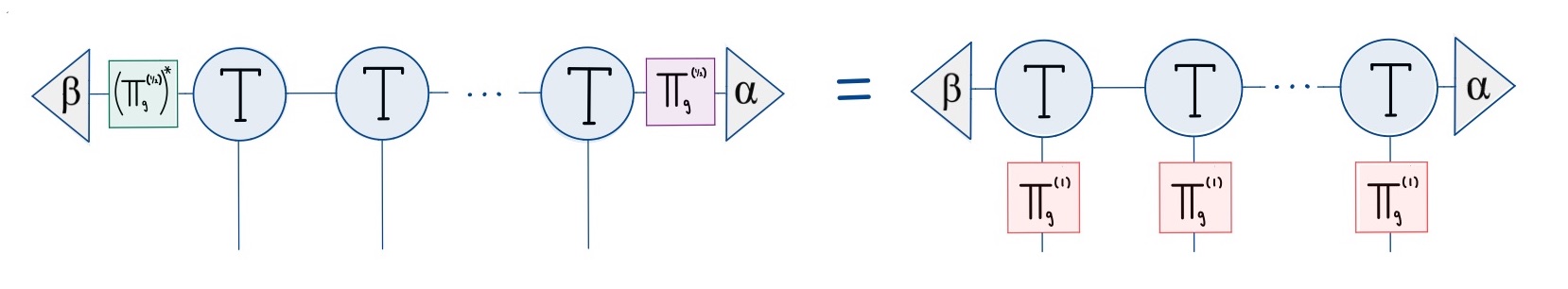}
    \caption{Tensor network version of Equation (\ref{eq:AKLT SU(2) rep}).}
    \label{fig:MPS_bulk_boundary}
\end{figure}

This means that as a rep of $SU(2)$, the vector space $\{\psi_{\alpha\beta}^{(\ell)} : \alpha,\beta\in \C^2\}$ is the rep $V_{1/2}^* \otimes V_{1/2}$. But for $SU(2)$, since dimension $2s+1$ is a unique label for irreps\footnote{This is \textit{not} true in general. E.g. for $SU(n)$ with $n\geq 3$, the dual of the defining rep is not isomorphic to the defining rep.} and since a rep $V$ is irreducible iff the dual rep $V^*$ is irreducible, $V_{1/2}^*\cong V_{1/2}$ and so $V_{1/2}^*\otimes V_{1/2} \cong V_{1}\oplus V_0$. This contains no spin-2 irreps $V_2$, and so by orthogonality of distinct irreps we have for $\ell=2$
\begin{equation}\label{eq:2 site MPS is ground state}
    P^{(2)} \psi_{\alpha \beta}^{(2)} = 0 ,\qquad \text{ for all } \alpha,\beta\in \C^2.
\end{equation} For the same reason we see that if the map $\psi^{(\ell)}: \C^2\otimes \C^2 \to \calH_{[1,\ell]}$ which maps $\psi^{(\ell)}\paran{\ket{\alpha}\bra{\beta}} = \psi^{(\ell)}_{\alpha\beta}$ is injective, then we may conclude that for any $\ell$ and any site $x\in [1,\ell-1]$, $P^{(2)}_{x,x+1} \psi^{(\ell)}_{\alpha\beta} = 0$ and thus show that $\{\psi^{(\ell)}_{\alpha\beta}\} = \ker H_{[1,\ell]}$.

This family of maps $\psi^{(\ell)}$ will in fact be our matrix product states, and in the following argument we will be able to directly show that 
\begin{equation}
    \text{span} \{ \psi_{\alpha\beta}^{(\ell)} \st \alpha,\beta\in \C^2 \} = \ker H_{[1,\ell]} . 
\end{equation} While we already know that $\{\psi_{\alpha\beta}^{(\ell)}\} \subseteq \ker H_{[1,\ell]}$, it will be instructive to demonstrate this again more directly.

Recall from Chapter (\ref{ch:app-rep_theory}) that $\su(2)\cong \so(3)$ and so $SU(2)\cong Spin(3) = e^{\so(3)}$. By our classification of spin-$s$ irreps (\ref{ex:the spin s irrep}), we see that $\Pi^{(1)}$ corresponds to the 3D defining rep of $Spin(3)$, i.e. $\Pi^{(1)}_w = w$ for all $w\in Spin(3)$, and $\Pi^{(1/2)}$ is the 2D spin representation of $Spin(3)$.
\begin{lemma} 
    Pick the standard (orthonormal) basis $\ket{1},\ket{2},\ket{3}$ of $\C^3$ and define $2\times 2$ matrices $t_i$, $i=1,2,3$ as the rescaled Pauli matrices
\begin{equation}
    t_1 := \frac{1}{\sqrt{3}} \sigma^X, \quad t_2 := \frac{1}{\sqrt{3}} \sigma^Y, \quad t_3 := \frac{1}{\sqrt{3}} \sigma^Z. 
\end{equation} 
    Then the isometry $T: \C^2\to \C^3\otimes \C^2$ defined by Equation (\ref{def:MPS AKLT (isometry)}) is exactly given by the matrix\footnote{As always in linear algebra, the matrix depends on the choice of basis. This is especially important to remember when working with tensor states.}
    \begin{equation}
        T = \sum_{i=1}^3 \ket{i} \otimes t_i . 
    \end{equation}
\end{lemma}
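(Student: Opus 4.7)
The plan is to leverage uniqueness (up to scalar) of the intertwiner afforded by Schur's lemma, and then verify directly that the proposed formula satisfies the intertwining and isometry conditions that were used to define $T$.

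First I would invoke Clebsch-Gordan: $V_1 \otimes V_{1/2} \cong V_{3/2} \oplus V_{1/2}$, so $V_{1/2}$ appears with multiplicity one. Applied as in Schur's lemma (Lemma~\ref{lem:schur}), this means the space of intertwiners $V_{1/2} \to V_1 \otimes V_{1/2}$ is one-dimensional; hence any two isometric intertwiners agree up to a global phase. Thus it suffices to check that the candidate $T_0 := \sum_{i=1}^3 \ket{i} \otimes t_i$ is (i) an isometry and (ii) an $SU(2)$-intertwiner for the representations in question.

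Next I would verify the isometry property by direct computation: since $t_i^* = t_i$ and $\sigma^i \sigma^i = \idty$,
\begin{equation*}
T_0^* T_0 \;=\; \sum_{i,j} (\bra{i}\otimes t_i)(\ket{j}\otimes t_j) \;=\; \sum_i t_i^2 \;=\; \frac{1}{3}\sum_{i=1}^3 (\sigma^i)^2 \;=\; \idty_{\C^2}.
\end{equation*}

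For the intertwining property, I would pass to the Lie algebra via Theorem~\ref{thm:lie gp passes to lie alg} and check the infinitesimal form: writing the spin-$1$ generator $\pi^{(1)}(X)$ as the matrix with entries $R_{ji}$ in the basis $\{\ket{i}\}$ (i.e., the $\so(3)$-representative under $\su(2)\cong \so(3)$), and the spin-$1/2$ generator as $\pi^{(1/2)}(X)$, the condition $(\pi^{(1)}(X)\otimes \idty + \idty \otimes \pi^{(1/2)}(X)) T_0 = T_0 \, \pi^{(1/2)}(X)$ reduces, after matching $\C^3$-components, to the family of identities
\begin{equation*}
\sum_i R_{ji} \, t_i \;=\; [\, t_j, \, \pi^{(1/2)}(X)\,], \qquad j=1,2,3.
\end{equation*}
It suffices to verify this on a basis of $\su(2)$, say $X=\tfrac{i}{2}\sigma^k$ for $k=1,2,3$; then $\pi^{(1/2)}(X)=\tfrac{i}{2}\sigma^k$ acts in the defining rep while $R$ is, via the double-cover map $S:SU(2)\to SO(3)$ from Proposition~\ref{prop:SU(2) is double cover of SO(3)}, the $k$-th generator of rotations on $\R^3$. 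Both sides then become (up to the common $1/\sqrt{3}$ prefactor) statements about the Pauli commutators $[\sigma^j,\sigma^k] = 2i\,\epsilon_{jkl}\sigma^l$, which is precisely the adjoint-action calculation that appeared already in the proof of Proposition~\ref{prop:SU(2) is double cover of SO(3)} (Equation~(\ref{eq:matrix double cover})). In other words, the identity we must verify is nothing but the statement that the linear span of the Paulis carries the defining $\so(3)$-representation under conjugation, which we have already established.

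The only real obstacle is bookkeeping of conventions (skew-Hermitian vs.\ Hermitian generators, the factor of $i$ in $\su(2)$ versus $\su(2)_{\C}$, the sign in the $\so(3)\leftrightarrow \su(2)$ identification); none of this is deep but it must be consistent with the conventions fixed in Example~\ref{ex:adjoint rep of su2}. Once (i) and (ii) are checked, uniqueness-up-to-phase from Schur gives $T = e^{i\theta} T_0$, and the phase may be absorbed into the definition of $T$ (which was only fixed up to phase in~(\ref{def:MPS AKLT (isometry)})), concluding the proof.
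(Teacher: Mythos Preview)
Your proposal is correct and follows the same overall strategy as the paper: verify the isometry condition by direct computation using the Pauli anticommutation relations (this step is identical), and then check the intertwining relation, invoking Schur/multiplicity-one for the up-to-phase uniqueness. The one genuine difference is in how the intertwining is verified. You pass to the Lie algebra and reduce the check to the Pauli commutator identity $[\sigma^j,\sigma^k]=2i\epsilon_{jkl}\sigma^l$, i.e.\ to the adjoint-action computation already done in Proposition~\ref{prop:SU(2) is double cover of SO(3)}. The paper instead works directly at the group level by citing the general Clifford-algebra relation $\Pi(w)\gamma_i\Pi(w)^{-1}=\sum_j w_{ji}\gamma_j$ from Section~\ref{sec:the spin representations Pi of SO(n)} and specializing to $n=3$, $\gamma_i=\sigma^i$. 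Your route is more self-contained at this point in the exposition (no forward reference to the Clifford material) and makes the role of Schur's lemma explicit; the paper's route is shorter and emphasizes that the $n=3$ case is an instance of the general $SO(n)$ story developed later. Substantively they are the same argument viewed infinitesimally versus globally.
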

    \begin{proof}
        First, we verify the isometry condition $T^*T = \idty$. Using $t_i^*=t_i$ and the Clifford anticommutation relation of the Paulis $\sigma^i \sigma^j + \sigma^j \sigma^i = 2\delta_{ij}$, we see that
        \[
        T^*T = \frac{1}{3} \sum_{i,j=1}^3 \inprod{i,j}\otimes t_i^* t_j = \frac{1}{3} \sum_{i=1}^3 t_i t_i = \idty . 
        \]

        Now, to see that $T$ satisfies the intertwining relation (\ref{def:MPS AKLT (isometry)}), we recall Equation (\ref{eq:intertwiner MPS}) from the discussion of spin representations of $SO(3)$:
        \begin{equation} 
	       T \Pi^{(1/2)}_w = (\Pi^{(1)}_w\otimes \Pi^{(1/2)}_w)T, \qquad w\in Spin(3).
        \end{equation} But since $SU(2)\cong Spin(3)$, this is exactly the same as the intertwining relation from earlier.
\end{proof}

With an explicit matrix in hand, we can now compute matrix elements of $\psi_{\alpha\beta}^{(\ell)}$ in the tensor basis of $\calH_{[1,\ell]}$\footnote{This readily provable trick is almost like ``cyclicity of trace''. To compute these matrix elements, we again recommend working out the $\ell=1,2$ examples.}:
\begin{equation}
    \inprod{i_1,\dots,i_\ell \st \psi_{\alpha\beta}^{(\ell)}} =  \bra{\beta} t_{i_{\ell}} \dots t_{i_1} \ket{\alpha} = \Tr \ket{\alpha}\bra{\beta} t_{i_\ell} \dots t_{i_1} . 
\end{equation} This means that
\begin{equation}
    \psi_{\alpha\beta}^{(\ell)} = \sum_{i_1,\dots,i_\ell} \Tr \paran{\ket{\alpha}\bra{\beta} t_{i_{\ell}} \dots t_{i_1}} \ket{i_1,\dots , i_n}.
\end{equation} By linear extension we get the promised map $\psi^{(\ell)}: M_2(\C) \to \calH_{[1,\ell]}$ : 
\begin{equation} \label{def:MPS AKLT (full matrix form)}
    \psi^{(\ell)}(B) = \sum_{i_1,\dots,i_\ell} \Tr \paran{B t_{i_{\ell}} \dots t_{i_1}} \ket{i_1,\dots , i_n}, \qquad B\in M_2(\C).
\end{equation}
We now define the state space $\calG_\ell\subseteq \calH_{[1,\ell]}$ as 
\begin{equation}
    \calG_{\ell} := \{ \psi^{(\ell)}(B) \st B\in M_2(\C) \}, 
\end{equation} which is of course equal to the earlier space, $\calG_\ell = \text{span}\{ \psi_{\alpha\beta}^{(\ell)} \st \alpha , \beta \in \C^2 \}$.

The next lemma we have already shown in Equation (\ref{eq:2 site MPS is ground state}), but we show this proof to demonstrate an illuminating direct calculation.
\begin{lemma}
    The matrix product states are ground states of $H_{[1,\ell]}$, i.e.
    \begin{equation}
        \calG_{\ell} \subseteq \ker H_{[1,\ell]}.
    \end{equation}
\end{lemma}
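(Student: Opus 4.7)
The plan is to reduce to a purely two-site computation via frustration-freeness, and then identify the ``two-site tensor'' that matters as living in $V_0 \oplus V_1 \subseteq \C^3 \otimes \C^3$, which is annihilated by $P^{(2)}$ by orthogonality of distinct irreducibles.

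First, I would invoke Remark~\ref{rem:frustration free}: since each $P^{(2)}_{x,x+1} \geq 0$ and $H_{[1,\ell]} = \sum_{x=1}^{\ell-1} P^{(2)}_{x,x+1}$, we have $\ker H_{[1,\ell]} = \bigcap_{x=1}^{\ell-1} \ker P^{(2)}_{x,x+1}$. It therefore suffices to verify $P^{(2)}_{x,x+1}\, \psi^{(\ell)}(B) = 0$ for every $B\in M_2(\C)$ and every bond $(x,x+1)$. Starting from the matrix-element formula for $\psi^{(\ell)}(B)$ and using cyclicity of the trace, I would reorganize so that every factor other than $t_{i_{x+1}} t_{i_x}$ is absorbed into a single $2\times 2$ matrix $M = M(B,\{i_k\}_{k\neq x,x+1})$, whence the projection $P^{(2)}_{x,x+1}$ sees only the ``two-site tensor''
\begin{equation*}
\Theta := \sum_{i,j=1}^3 t_j t_i \otimes \ket{i}\otimes \ket{j} \in M_2(\C)\otimes \C^3\otimes \C^3.
\end{equation*}
The whole claim then reduces to showing $(\idty_{M_2}\otimes P^{(2)})\Theta = 0$.

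The remaining step is a direct calculation. Using $\sigma^j\sigma^i = \delta_{ij}\idty + i\epsilon_{jik}\sigma^k$ together with $t_i = \sigma^i/\sqrt{3}$, I would expand
\begin{equation*}
\Theta = \frac{1}{3}\idty \otimes \Big(\sum_{i=1}^3 \ket{i}\otimes\ket{i}\Big) \; - \; \frac{i}{3}\sum_{k=1}^3 \sigma^k \otimes \Big(\sum_{i,j=1}^3 \epsilon_{ijk}\ket{i}\otimes\ket{j}\Big).
\end{equation*}
The vector $\sum_i \ket{i}\otimes\ket{i}$ is (up to scalar) the unique $SO(3)$-invariant in $\C^3\otimes \C^3$ and spans the singlet $V_0$; for each $k$, the antisymmetric vector $\sum_{i,j}\epsilon_{ijk}\ket{i}\otimes\ket{j}$ lies in $\Exterior^2 \C^3 \cong V_1$. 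Both summands of $\Theta$ therefore lie in $M_2(\C)\otimes (V_0\oplus V_1)$, which is orthogonal to $M_2(\C)\otimes V_2$, so $(\idty_{M_2}\otimes P^{(2)})\Theta = 0$.

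The main obstacle is not technical depth but notational bookkeeping: organizing the cyclic factoring of the trace cleanly enough that the projection clearly acts only on the isolated tensor $\Theta$. Once that reduction is in place, the conclusion follows from the Clebsch--Gordan decomposition $\C^3\otimes \C^3 \cong V_0\oplus V_1\oplus V_2$ and orthogonality of distinct irreducibles, exactly as in the representation-theoretic argument given after (\ref{eq:2 site MPS is ground state}).
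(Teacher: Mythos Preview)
Your proposal is correct and follows essentially the same route as the paper: reduce to a two-site computation by cyclicity of the trace, then use that the two-site MPS space lies in $V_0\oplus V_1$ and is annihilated by $P^{(2)}$. The only cosmetic differences are that the paper packages the reduction via the expectation $\inprod{\psi^{(\ell)}(B),P^{(2)}_{x,x+1}\psi^{(\ell)}(B)}$ and cites the earlier representation-theoretic fact~(\ref{eq:2 site MPS is ground state}), whereas you isolate the two-site tensor $\Theta$ directly and verify $\Theta\in M_2(\C)\otimes(V_0\oplus V_1)$ by the explicit Pauli identity $\sigma^j\sigma^i=\delta_{ij}\idty+i\epsilon_{jik}\sigma^k$.
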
 
\begin{proof}
    Let $B\in M_2(\C)$ and let $x=1,\dots, \ell-1$ and compute the expectation of $P_{x,x+1}^{(2)}$ in the state $\psi^{(\ell)}(B)$:
    \begin{equation}\begin{split}
    &\inprod{\psi^{(\ell)}(B), P^{(2)}_{x,x+1} \psi^{(\ell)}(B)}\\
    &= \sum_{\substack{i_1,\dots, i_\ell\\j_1, \dots, j_\ell}} \overline{\Tr[Bt_{i_\ell} \dots t_{i_1}]}  \Tr[Bt_{j_\ell}\dots t_{j_1}] \paran{ \bra{i_1,\dots, i_\ell} \idty_{[1,x-1]} \otimes P^{(2)} \otimes \idty_{[x+2,\ell]} \ket{j_1,\dots,j_\ell}} \\
    &= \sum_{\substack{ i_1,\dots,i_{x-1} \\ i_{x+2}, \dots, i_\ell }} \sum_{\substack{i_x,i_{x+1}\\j_x,j_{x+1}}}  \overline{\Tr[Bt_{i_\ell} \dots t_{i_1}]} \Tr[Bt_{j_\ell} \dots t_{j_1}] \bra{i_{x},i_{x+1}} P_{x,x+1}^{(2)} \ket{j_x,j_{x+1}} \\
    &= \sum_{\substack{ i_1,\dots,i_{x-1} \\ i_{x+2}, \dots, i_\ell }} \sum_{\substack{i_x,i_{x+1}\\j_x,j_{x+1}}}  \Bigg( \overline{\Tr[(t_{i_{x-1}}\dots t_{i_1} B t_{i_\ell} \dots t_{i_{x+2}}) t_{i_{x+1}} t_{i_x} }] \Tr[(t_{i_{x-1}}\dots t_{1} B t_{i_\ell} \dots t_{i_{x+2}}) t_{j_{x+1}} t_{j_x} ] \\
    &\qquad\qquad \bra{i_{x},i_{x+1}} P_{x,x+1}^{(2)} \ket{j_x,j_{x+1}} \Bigg) \\
    &= \sum_{\substack{ i_1,\dots,i_{x-1} \\ i_{x+2}, \dots, i_\ell }} \inprod{\psi^{(2)}(t_{i_{x-1}}\dots t_{1} B t_{i_\ell} \dots t_{i_{x+2}}), P^{(2)} \psi^{(2)}(t_{i_{x-1}}\dots t_{1} B t_{i_\ell} \dots t_{i_{x+2}})} . 
\end{split}\end{equation} But we have already computed in (\ref{eq:2 site MPS is ground state}) that for any two site MPS $\psi^{(2)}(C)$, the expectation of $P^{(2)}$ vanishes. Therefore the final line is a sum of many $0$'s, and so 
\[
    \inprod{\psi^{(\ell)}(B), P^{(2)}_{x,x+1} \psi^{(\ell)}(B)} = 0 .
\] frustration-freeness of $H_{\Lambda}$ then completes the proof. 
\end{proof} 
\begin{remark}
    We will later prove the same result for a class of $SO(n)$-invariant MPSs which generalize the AKLT case in Lemma \ref{lem:SO(n) MPS are ground states}.
\end{remark} Note that this entire calculation goes through thanks to the vanishing energy of the length-2 chain. Many matrix product state calculations proceed in a similar manner, where the 2 is replaced by the \textit{injectivity length}, the smallest chain length $\ell$ such that the map $\psi^{(\ell)}(\cdot)$ is injective. Here, $\ell=2$, which we will show soon.

We now proceed to show that this family of MPSs in fact saturates the ground state space, i.e. every ground state of $H_{[1,\ell]}$ is an MPS. This means exactly that $H_{[1,\ell]}$ is a \textit{parent Hamiltonian} of the family of matrix product states $\calG_{\ell}$.

\begin{theorem} \label{thm:AKLT parent Hamiltonian}
    The ground state space is exactly the MPS space, i.e.
    \begin{equation} \label{eq:AKLT parent Hamiltonian}
    \calG_{\ell} = \ker H_{[1,\ell]} . 
    \end{equation}
\end{theorem}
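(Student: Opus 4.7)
The plan is induction on $\ell\geq 2$, combining frustration-freeness, injectivity of $\psi^{(\ell)}$, and a dimension count. The inclusion $\calG_\ell\subseteq \ker H_{[1,\ell]}$ is already established, so the task is the reverse inclusion.

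For the base case $\ell = 2$, I would note first that Clebsch--Gordan gives $\ker P^{(2)} \cong V_1 \oplus V_0$, of dimension $4$. The goal reduces to exhibiting $\dim\calG_2 = 4$, equivalently injectivity of $\psi^{(2)}$. Using $\inprod{i,j|\psi^{(2)}(B)} = \tfrac{1}{3}\Tr(B\sigma^j\sigma^i)$ together with $\sigma^j\sigma^i = \delta_{ij}\idty + i\epsilon_{ijk}\sigma^k$, the nine products $\{\sigma^j\sigma^i\}$ span $M_2(\C)$, and non-degeneracy of the Hilbert--Schmidt pairing finishes the base case.

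For the inductive step, assuming the result through length $\ell - 1$, the same spanning argument propagated by left multiplication by the invertible $t_j$ would show injectivity of $\psi^{(\ell-1)}$. Given $\psi\in \ker H_{[1,\ell]}$, frustration-freeness together with the inductive hypothesis on the first $\ell - 1$ sites places $\psi\in \calG_{\ell-1}\otimes V_1$, so uniquely
\[
\psi = \sum_{i=1}^{3} \psi^{(\ell-1)}(B_i)\otimes \ket{i}
\]
for some $B_1,B_2,B_3\in M_2(\C)$. Expanding matrix elements,
\[
\inprod{j_1,\ldots,j_{\ell-2},k,i|\psi} = \Tr\bigl(AB_it_k\bigr),\qquad A := t_{j_{\ell-2}}\cdots t_{j_1},
\]
so for every bulk configuration the two-site slice $\sum_{k,i}\Tr(AB_it_k)\ket{k,i}$ at sites $\ell-1,\ell$ must lie in $\ker P^{(2)}$. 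Fixing any single invertible $A$ (for example $A = t_1$) and invoking the base case for the triple $(AB_1,AB_2,AB_3)$ would produce $C\in M_2(\C)$ with $AB_i = Ct_i$, so $B := A^{-1}C$ satisfies $B_i = Bt_i$ and hence $\psi = \psi^{(\ell)}(B)\in \calG_\ell$.

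\textbf{Main obstacle.} The subtle point is that, a priori, distinct choices of the bulk product $A$ could yield incompatible $C^{(A)}$ and thus no coherent $B$. I expect this to be resolved a posteriori: once $B_i = Bt_i$ is extracted from one $A$, the remaining constraints for other $A$'s hold automatically because $\psi^{(\ell)}(B)\in \calG_\ell\subseteq \ker H_{[1,\ell]}$ is already known. The only genuine inputs are the base case and injectivity of $\psi^{(\ell-1)}$, both of which reduce to the single spanning statement $\mathrm{span}\{\sigma^j\sigma^i\}_{i,j=1}^3 = M_2(\C)$. This clean interlocking of frustration-freeness, the base case, and the algebraic structure of the $t_j$'s is precisely what makes $H_{[1,\ell]}$ a parent Hamiltonian for the AKLT MPS family.
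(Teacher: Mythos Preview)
Your argument has a genuine gap at the step ``invoking the base case for the triple $(AB_1,AB_2,AB_3)$ would produce $C\in M_2(\C)$ with $AB_i = Ct_i$.'' This does not follow. What the base case actually gives is that the two-site slice equals $\psi^{(2)}(C)$ for some $C$, i.e.\ $\Tr(AB_it_k)=\Tr(Ct_it_k)$ for all $i,k$. Since $t_1,t_2,t_3$ span only the \emph{traceless} part of $M_2(\C)$, you may conclude only $AB_i-Ct_i\in\C\idty$, not $AB_i=Ct_i$. Equivalently, the single-site map $\psi^{(1)}(B)=\sum_i\Tr(Bt_i)\ket{i}$ is not injective (it kills $\idty$), so matching one layer of coefficients cannot pin down $AB_i$. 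Your ``main obstacle'' paragraph misidentifies the difficulty: the problem is not compatibility across different bulk words $A$, but that a single $A$ already leaves an ambiguity $B_i = A^{-1}Ct_i + \lambda_i A^{-1}$, and the extra term $\lambda_i A^{-1}$ is not of the form $(\text{matrix})\cdot t_i$.

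This is precisely why the paper's proof needs two ingredients you are missing. First, the general intersection property $(\calG_{L+M}\otimes\calH_{[1,R]})\cap(\calH_{[1,L]}\otimes\calG_{M+R})=\calG_{L+M+R}$ is proved only for overlap $M\geq 2$: there the key step reads $\psi^{(M)}(C_{\bi}t_{\bk}-t_{\bi}D_{\bk})=0$, and \emph{injectivity of $\psi^{(M)}$} (which fails for $M=1$) forces $C_{\bi}t_{\bk}=t_{\bi}D_{\bk}$ exactly, after which summing against $t_{\bi}^*$ and using $\sum_i t_i^*t_i=\idty$ gives $D_{\bk}=Bt_{\bk}$. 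Your argument is the $M=1$ version of this and cannot be repaired in that form. Second, the passage $\ell=2\to3$ (which would feed any induction) is handled separately by representation theory: one decomposes $\calG_2\otimes\C^3\cong V_2\oplus 2V_1\oplus V_0$ and checks on highest-weight vectors that $V_2$ and one copy of $V_1$ are not annihilated by $P^{(2)}_{2,3}$, leaving exactly $\calG_3$.
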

\begin{remark}
    When $\calG_{\ell}$ a family of MPSs satisfying (\ref{eq:AKLT parent Hamiltonian}), we say that $H_{[1,\ell]}$ is a parent Hamiltonian for the MPS space $\calG_{\ell}$.
\end{remark}
\begin{proof}
    We begin by showing that the maps $\psi^{(\ell)}: M_2(\C) \to \calH_{[1,\ell]}$ are injective by induction. 

\textbf{Injectivity for chains of length $\ell=2$.}
    
    First, let $\ell = 2$. Recall the tensor action of the spin-1 representation $\Pi^{(1)}: SU(2) \to \GL(\C^3)$ on the MPS space Equation (\ref{eq:AKLT SU(2) rep}). For $\ell=2$ this means that for all $B\in M_2(\C)$, 
    \begin{equation} \label{eq:2 site MPS intertwining rep}
        (\Pi_g^{(1)} \otimes \Pi_g^{(1)}) \psi^{(2)}(B) = \psi^{(2)}(\Pi_g^{(1/2)} B (\Pi_g^{(1/2)})^* ), \qquad g\in SU(2).
    \end{equation} Either these vectors vanish, or $\psi^{(2)}(\idty)$ is a spin-0 rep and the span of $\psi^{(2)}(\sigma^i)$, $i=X,Y,Z$ is a spin-1 rep. A direct calculation reveals that none of these vectors vanish. Using the anticommutation relations of the Paulis, one has that $\Tr \sigma^i \sigma^j = 2\delta_{ij}$, and so
    \begin{equation}
        \psi^{(2)}(\idty) = \sum_{i_1,i_2 = 1}^3 \Tr \, t_{i_2} t_{i_1} \ket{i_1, i_2} = \frac{2}{3}(\ket{1,1} + \ket{2,2} + \ket{3,3}).
    \end{equation} Similarly for the spin-1 rep, recognizing that $\sigma^Y\sigma^Z = i\sigma^X$, 
    \begin{equation}\begin{split}
        \psi^{(2)}(\sigma^X) = \sum_{i_1,i_2 = 1}^3 \Tr \, \sigma^X t_{i_2} t_{i_1} \ket{i_1, i_2} &= \frac{2}{3}(\Tr \sigma^X\sigma^Y\sigma^Z\ket{3,2} + \Tr\sigma^X\sigma^Z\sigma^Y \ket{2,3}) \\
        &= \frac{2i}{3}(\ket{3,2}-\ket{2,3}) . 
    \end{split}\end{equation} and $\psi^{(2)}(\sigma^Y) = \frac{2i}{3}(\ket{1,3} - \ket{3,1})$ and $\psi^{(2)}(\sigma^Z) = \frac{2i}{3}(\ket{2,1} - \ket{1,2})$. This demonstrates that $\psi^{(2)}$ is injective.

\textbf{Injectivity for arbitrary chains $\ell$ by induction.}

    Assume now that $\psi^{(\ell)}$ injective for $\ell\geq 2$. We wish to show $\psi^{(\ell+1)}$ injective. As an exercise, one can show that the map $\psi^{(\ell)}$ between finite dimensional vector spaces\footnote{If the vector spaces were infinite dimensional, the existence of such a positive constant is stronger than injectivity, i.e. there exist injective maps where no such $c_\ell>0$ exists. For a counterexample, take a basis $\{e_n: n \in \N\}$ of $L^2(\R)$ and define the diagonal linear operator $T e_n = \frac{1}{n} e_n$. } is injective if and only if there exists a constant $c_\ell> 0 $ such that 
    \begin{align}
        \norm{\psi^{(\ell)}(B)}^2 \geq c_\ell \Tr B^* B \qquad \text{for all } B\in M_n(C) . 
    \end{align} Let us now estimate $\norm{\psi^{(\ell+1)}(B)}$:

    \begin{equation}\begin{split}
        \norm{\psi^{(\ell+1)}(B)}^2 &= \norm{ \sum_{i_1,\dots, i_\ell} \Tr[Bt_{i_{\ell+1}}t_{i_\ell}\dots t_{i_1} ]\ket{i_1, \dots, i_\ell, i_{\ell+1}} }^2 \\
        &= \sum_{i_{\ell+1}}\sum_{i_1,\dots, i_\ell} \abs{\Tr B t_{i_{\ell+1}} t_{i_\ell} \dots t_{i_1} }^2 \\
        &= \sum_{i_{\ell+1}} \norm{\psi^{(\ell)}(Bt_{i_{\ell+1}})}^2 \\
        &\geq c_\ell \sum_{i_{\ell+1}} \Tr(B t_{i_{\ell+1}})^* Bt_{i_{\ell+1}} \\
        &= c_\ell \sum_{i_{\ell+1}} \Tr(t_{i_{\ell+1}}t_{i_{\ell+1}}^* B^* B) \\
        &= c_\ell \Tr \left[ \sum_{i_{\ell+1}} t_{i_{\ell+1}} t_{i_{\ell+1}}^* \right] B^*B \\
        &= c_\ell \Tr ([\idty] B^*B) \\
        &= c_\ell \Tr B^*B
    \end{split}\end{equation} where the second to last line follows by computing the bracketed sum with the explicit form of the matrices $t_i$. Since $c_\ell > 0$ by induction hypothesis, we are done and $\psi^{(\ell+1)}$ is injective.

\textbf{A short chain intersection property and $\ker H_{[1,3]} = \calG_3$.}

    We have thus far shown that $\dim(\calG_{\ell}) = 4$ for all $\ell$. Noting that we already have $\calG_3\subseteq \ker H_{[1,3]}$, we will now show an intersection property for short chains
    \begin{equation} \label{eq:short chain intersection property}
        \ker H_{[1,3]} = (\calG_2\otimes \C^3) \cap (\C^3\otimes \calG_2) = \calG_3 . 
    \end{equation} The first equality holds by observing that $H_{[1,2]}, H_{[2,3]}\geq 0$ (this is just $P^{(2)}_{x,x+1}$, the orthogonal projection onto the spin-2 subspace $V_2\subseteq \C^3\otimes \C^3$) and by recalling Lemma (\ref{lem:frustration freeness for nonnegative operators}) which characterizes the frustration-freeness condition for nonnegative operators as non-emptiness of the intersected kernel. The second equality requires us to demonstrate that $\dim \ker H_{[1,3]} \leq 4$, which we can accomplish by doing some representation theory. Clebsch-Gordan decomposition demands that as $\su(2)$ reps, 
    \begin{equation}
        \calG_{2}\otimes \C^3 = (V_0 \oplus V_1) \otimes V_1 \cong V_2 \oplus 2 V_1 \oplus V_0 .
    \end{equation} We wish to show that the kernel contains at most $V_1\oplus V_0$, i.e. we will use orthogonality of distinct irreducibles to show that $V_2\perp \ker H_{[2,3]}$ and one copy of $V_1 \perp \ker H_{[2,3]}$. Let us start with $V_1$. We know automatically that the highest weight vector of $V_0\subseteq \calG_2$ is $\psi^{(2)}(\idty)$. Since the highest weight vector of the standard rep $\C^3$ is $\ket{1}$, we may extract a highest weight vector\footnote{Since there are two copies of $V_1$, there are many highest weight vectors and no uniqueness is guaranteed.} $\xi_1:= \psi^{(2)}(\idty)\otimes \ket{1}$ and check whether it is in the kernel of $P^{(2)}_{2,3}$:
    \begin{equation}
        (\idty \otimes P^{(2)})\xi_1 = (\idty \otimes P^{(2)}) \frac{1}{3}\paran{\ket{1,1,1} + \ket{2,2,1} + \ket{3,3,1}}. 
    \end{equation} To compute this, we need a different form of $P^{(2)}$. Notice that since we have the decomposition $\C^3\otimes \C^3 \cong V_2\oplus V_1 \oplus V_0$ where $V_0 = \C\ket{\phi}$ with  $\ket{\phi}=\frac{1}{\sqrt{3}} \paran{\ket{1,1} + \ket{2,2} + \ket{3,3}}$, $V_2$ consists of symmetric vectors, and $V_1$ consists of antisymmetric vectors, we realize that $P^{(2)}$ may be similarly expressed as $P^{(2)} = \frac{1}{2}\paran{\SWAP - 2\ket{\phi}\bra{\phi}+ \idty }$. We can then directly compute to see that $(\idty\otimes P^{(2)}) \xi_1 \neq 0$, which applying orthogonality is sufficient to show that at least one $V_1$ is orthogonal to $\ker H_{[2,3]}$.
    
    Now for $V_2$. From the right hand side of Equation (\ref{eq:2 site MPS intertwining rep}), we see that $V_1\subseteq \calG_2$ is the adjoint rep (recall Example \ref{ex:adjoint rep of su2}) passed through the injective intertwiner $\psi^{(2)}$. So, the highest weight vector is $\psi^{(2)}(\sigma^X + i\sigma^Y)$, the image of $2\sigma^+$. Since the highest weight vector of the standard rep $\C^3$ is $\ket{1}$, we may extract the highest weight vector $\xi_2:= \psi^{(2)}(\sigma^X + i\sigma^Y)\otimes \ket{1}$ of $V_2\subseteq \calG_{2} \otimes \C^3$ and check whether it is in the kernel of $P^{(2)}_{2,3}$.
    \begin{equation}
        (\idty\otimes P^{(2)}) \xi_2 = (\idty\otimes P^{(2)})\frac{2i}{3}\paran{ \ket{3,2,1} - \ket{2,3,1} + i\ket{1,3,1} - i\ket{3,1,1}}.
    \end{equation}  Thus, we can again directly compute to see that $(\idty\otimes P^{(2)}) \xi_2 \neq 0$, whence orthogonality of distinct irreps assures us that $V_2$ must be orthogonal to $\ker H_{[2,3]}$.

\textbf{The intersection property for arbitrary $\ell$ chains.}
    
    To wrap up this proof, we will extend the short chain intersection property (\ref{eq:short chain intersection property}) to a full intersection property. Split a chain of length $\ell$ into three subchains, $\ell = L+M+R$ where $L,R\geq 0 $ and $M\geq 2$. The intersection property we will show is 
    \begin{equation} \label{eq:intersection property}
        (\calG_{L+M} \otimes \calH_{[1,R]}) \cap (\calH_{[1,L]}\otimes \calG_{M+R}) = \calG_{L + M + R} . 
    \end{equation}
    
    It will be convenient to use multi-indices $\bi = (i_1,\dots, i_L)$,$\bj = (j_1,\dots,j_M)$, $\bk = (k_1,\dots,k_R)$, and to let $t_\bi$ denote the product $t_{i_L} \dots t_{i_1}$, etc. 
    Then, for $\phi\in (\calG_{L+M} \otimes \calH_{[1,R]}) \cap (\calH_{[1,L]} \otimes \calG_{M+R})$, we have $C_\bi, D_\bk \in M_2$ such that
    \begin{equation} \label{eq:expression for intersected MPS}
        \phi = \sum_{\bi} \ket{\bi} \otimes \psi^{(M+R)}(C_\bi) = \sum_{\bk} \psi^{(L + M)} (D_\bk) \otimes \ket{\bk} . 
    \end{equation} We can expand this using our earlier definition of matrix product states (\ref{def:MPS AKLT (full matrix form)}):
    \begin{equation} \begin{split}
        0 = \paran{\sum_{\bi} \ket{\bi}\otimes \paran{\sum_{\bj,\bk} \Tr [C_\bi t_{\bk}t_{\bj} ] \ket{\bj,\bk} }} - \paran{\sum_{\bk} \paran{\sum_{\bi,\bj} \Tr [D_\bk t_{\bj}t_{\bi} ] \ket{\bi,\bj}}  \otimes \ket{\bk}} . 
    \end{split} \end{equation} On the level of coefficients,
    \begin{equation}
        0 = \Tr[C_\bi t_\bk t_\bj] - \Tr[D_\bk t_\bj t_\bi] = \Tr[(C_{\bi} t_{\bk} - t_\bi D_{\bk})t_{\bj}].
    \end{equation} This means that for all $\bi,\bk$, $\psi^{(M)}(C_\bi t_\bk - t_\bi D_\bk) = 0$. By assumption, $M\geq 2$, and so $\psi^{(M)}$ is injective and hence
    \begin{equation}
        C_\bi t_\bk - t_\bi D_\bk = 0 , \quad \text{for all }\bi,\bk.
    \end{equation} Multiplying this relation on the left by $t_\bi^*$ and summing over $\bi$, we find
    \begin{equation}
        \paran{\sum_{\bi} t_\bi^* C_{\bi}} t_\bk = \sum_{\bi} (t_{\bi}^* t_{\bi}) D_\bk .
    \end{equation} The isometry property of $T$ means that $T^*T= \sum_{\bi} t_\bi^*t_\bi = \idty$, and therefore
    \begin{equation}
        D_\bk = B t_\bk, \quad \text{with } B = \sum_{\bi} t_{\bi}^* C_{\bi}.
    \end{equation} Inserting this expression for $D_\bk$ into (\ref{eq:expression for intersected MPS}) gives
    \begin{equation} \begin{split}
        \phi &= \sum_{\bk} \psi^{(L + M)}(D_\bk) \otimes \ket{\bk} \\
        &= \sum_{\bk} \paran{\sum_{\bi,\bj} \Tr[D_\bk t_{\bj}t_{\bi}]\ket{\bi,\bj} }\otimes \ket{\bk} \\
        &= \sum_{\bk} \paran{\sum_{\bi,\bj} \Tr[Bt_\bk t_{\bj}t_{\bi}]\ket{\bi,\bj} }\otimes \ket{\bk} \\
        &= \psi^{(L + M + R)}(B)
    \end{split}\end{equation} 
    which completes the proof of the intersection property (\ref{eq:intersection property}).
    
    We can finally finish the proof of the theorem by combining frustration-freeness al\'a Lemma \ref{lem:frustration freeness for nonnegative operators}, $\ker H_{[x,x+1]} = \calG_2$ from (\ref{eq:short chain intersection property}), and the intersection property (\ref{eq:intersection property}). Noting the conventions $\calH_\emptyset = \C$ and $[a,b] = \emptyset $ if $a>b$, 
    \begin{equation} \begin{split} \label{eq:intersection property AKLT}
        \ker H_{[1,\ell]} &= \bigcap_{x=1}^{\ell-1} \calH_{[1,x-1]} \otimes \ker H_{[x,x+1]} \otimes \calH_{[x+2, \ell]} \\
        &= \bigcap_{x=1}^{\ell-2} \calH_{[1,x-1]} \otimes (\calG_2 \otimes \C^3 \cap \C^3 \otimes \calG_2) \otimes \calH_{[x+3, \ell]} \\
        &= \bigcap_{x=1}^{\ell-2} \calH_{[1,x-1]} \otimes \calG_3 \otimes \calH_{[x+3, \ell]} \\
        &= \bigcap_{x=1}^{\ell-3} \calH_{[1,x-1]} \otimes \calG_4 \otimes \calH_{[x+3, \ell]} \\
        &\vdots \\
        &= \calG_\ell
    \end{split} \end{equation}
    
\end{proof}

\subsubsection{Connecting the VBS and MPS pictures}

As we said before, the VBS and MPS descriptions of the AKLT chain ground states are totally equivalent, and commonly the intersection property (\ref{eq:intersection property}) is ``visualized'' using the VBS picture. Let us say a bit more about the connection between these two pictures. We established in Equations (\ref{eq:intertwiner MPS}) and (\ref{eq:P_sym an intertwiner}) that both the isometry $T: V_{1/2}\to V_1\otimes V_{1/2}$ and the projector $P_{sym}: V_{1/2}\otimes V_{1/2}\to V_1$ are intertwiners of the corresponding $SU(2)$ representations, uniquely determined linear maps up to scalar multiplication by $c\in \C$ thanks to Schur's lemma. Recall that any intertwiner of $G$-representations $B\in \Hom_G(W,Y)$ by definition commutes with the action of $G$, so $gB = Bg$, and thus may be equally considered as an element of a one-dimensional invariant subspace $B\in W^*\otimes Y\cong \Hom_G(W,Y)$. In particular, using that any representation of $SU(2)$ is self-dual and so $W^*\cong W$, we may think of $T$ and $P_{sym}$ as elements in one-dimensional invariant subspaces of $V_{1/2}\otimes V_{1/2}\otimes V_1$. But Clebsch-Gordan tells us that $V_{1/2}\otimes V_{1/2}\otimes V_1 \cong V_{0} \oplus 2V_{1} \oplus V_{2}$. In particular, there is only one one-dimensional invariant subspace $V_0$, and so $P_{sym}$ and $T$ are exactly the same tensor, up to scalar multiplication by $c\in \C$.\footnote{While we have made heavy use of symmetry here, it is not essential to understand the correspondence between MPSs and one-dimensional PEPSs. The key is typically some form of ``tensor rearrangement'' as we are doing here. Tensor network diagrams often come in handy, making such rearrangements more visual.}

We can be a bit more explicit (and if one wanted to, we could even write explicit matrices). Let $\phi = \ket{\up\down} -\ket{\down\up}\in V_{1/2}\otimes V_{1/2}$ denote a singlet vector. Consider the map $\left[(P_{sym} \otimes \idty)((\cdot)\otimes \phi)\right]: V_{1/2}\to V_1\otimes V_{1/2}$. Observe that for all $g\in SU(2)$, 
\begin{equation}\begin{split}
    \left[ (P_{sym} \otimes \idty)((\cdot)\otimes \phi)\right] \Pi_g^{(1/2)} &= (P_{sym} \otimes \idty)((\Pi_g^{(1/2)}(\cdot))\otimes \phi) \\
    &= (P_{sym} \otimes \idty)((\Pi_g^{(1/2)}(\cdot))\otimes (\Pi_g^{(1/2)})^{\otimes 2} \phi) \\
    &= (\Pi_g^{(1)} \otimes \Pi_g^{(1/2)}) \left[ (P_{sym} \otimes \idty)((\cdot)\otimes \phi)\right].
\end{split}\end{equation} where in the second line, we used $(\Pi_g^{(1/2)})^{\otimes 2} \phi = \phi$ since $\phi$ is a singlet, and in the third line, we used the intertwining property (\ref{eq:P_sym an intertwiner}) of $P_{sym}$. In particular, the map $\left[(P_{sym} \otimes \idty)((\cdot)\otimes \phi)\right]$ obeys the same intertwining relation (\ref{eq:intertwiner MPS}) as $T$, and so by Schur's lemma must be a scalar multiple of $T$. 

In particular, starting from the very first MPS expression (\ref{def:MPS AKLT (isometry)}) of $\psi_{\alpha\beta}^{(\ell)}$, given $\alpha,\beta\in V_{1/2} = \C^2$, we see that 
\begin{equation}\begin{split}
    \psi_{\alpha\beta}^{(\ell)} &= (\idty_3^{\otimes \ell}\otimes \bra{\beta}) (\underbrace{\idty_3\otimes \dots \otimes \idty_3}_{\ell-1}\otimes T) \dots (\idty\otimes T) T \ket{\alpha} \\
    &= (\idty_3^{\otimes \ell}\otimes \bra{\beta}) (\underbrace{\idty_3\otimes \dots \otimes \idty_3}_{\ell-1}\otimes T) \dots (\idty\otimes T) (P_{sym}\otimes \idty) (\ket{\alpha}\otimes \phi) \\
    & \vdots \\
    &= \brac{ \underbrace{P_{sym}\otimes \dots \otimes P_{sym}}_\ell \otimes \bra{\beta} } \brac{\ket{\alpha} \otimes \underbrace{\phi \otimes \dots \otimes \phi}_\ell } \\
    &= \brac{\underbrace{P_{sym}\otimes \dots \otimes P_{sym}}_\ell } \brac{\ket{\alpha} \otimes \underbrace{\phi \otimes \dots \otimes \phi}_{\ell-1} \otimes \ket{\wt{\beta}} } \\
    &= VBS
\end{split}\end{equation} where $\wt{\beta}\in V_{1/2}$ satisfies $(\idty_2\otimes \bra{\beta} )\phi = \ket{\wt{\beta}}$ (or in matrices, if $\ket{\beta} = \beta_1 \ket{\up} + \beta_2 \ket{\down}$, then $\ket{\wt{\beta}} = \beta_2\ket{\up} - \beta_1 \ket{\down}$). This is exactly the description of the VBS space $W_{gs}$ we gave in Equation (\ref{def:VBS ground states}).

\section{The AKLT Ground State as a Finitely Correlated State (FCS)} \label{sec:The AKLT Ground State as a Finitely Correlated State (FCS)}

In the previous two sections, we have said little to nothing regarding ground state correlations and the thermodynamic limit. This is where the third picture of \textit{finitely correlated states} (FCSs), equivalent for the case of translation-invariant (or $k$-periodic) ground states to MPS and VBS, comes into play. Note that many wonderful introductions~\cite{cirac2021matrix,zeng2015quantum,pollmann2017symmetry} detailing MPS present the following material without ever mentioning the term FCS. 

Take a local observable $A_1\otimes \dots \otimes A_\ell \in \calA_{[1,\ell]}$, recalling that each onsite algebra is $\calA_{x}\cong M_3(\C)$. Let us compute the expectation of this observable in the MPS $\psi_{\alpha\beta}^{(\ell)}$ given by (\ref{def:MPS AKLT (full matrix form)}):
\begin{equation}\begin{split}
    \bra{\psi_{\alpha\beta}^{(\ell)}}& A_1\otimes \dots \otimes A_\ell \ket{\psi^{(\ell)}_{\alpha\beta}} \\
    &= \sum_{\substack{i_1,\dots,i_\ell\\j_1,\dots j_\ell}} \overline{\bra{\beta} t_{i_\ell} \dots t_{i_1} \ket{\alpha} } (A_1)_{i_1j_1}\dots (A_\ell)_{i_\ell j_\ell}  \bra{\beta} t_{j_\ell} \dots t_{j_1} \ket{\alpha} \\
    &= \sum_{\substack{i_1,\dots,i_\ell\\j_1,\dots j_\ell}}  (A_1)_{i_1j_1}\dots (A_\ell)_{i_\ell j_\ell}  \bra{\alpha} t_{i_1}^* \dots t_{i_\ell}^* \ket{\beta} \bra{\beta} t_{j_\ell} \dots t_{j_1} \ket{\alpha} \\
    &= \sum_{\substack{i_1,\dots,i_{\ell-1}\\j_1,\dots j_{\ell-1}}}  (A_1)_{i_1j_1}\dots (A_{\ell-1})_{i_{\ell-1}j_{\ell-1}}  \\
    &\qquad \times \bra{\alpha } t_{i_1}^* \dots t_{i_{\ell-1}}^* \brac{\sum_{i_\ell,j_\ell} (A_\ell)_{i_\ell j_\ell} t_{i_\ell}^* \ket{\beta}\bra{\beta} t_{j_\ell} } t_{j_{\ell-1}} \dots t_{j_1} \ket{\alpha}. 
\end{split}\end{equation}
Recognising that the bracketed piece of this expression looks like the Kraus representation of a completely positive map,\footnote{For more information on CP maps, see~\cite{nielsen2010quantum,wolf2012quantum}} we define for all $A\in \calA_x \cong M_3(\C)$ the map $\bbE_A : \calB\to \calB$ where $\calB:= M_2(\C)$ given by\footnote{If you aren't used to these sorts of maps, an exercise the author found helpful was to write this map more explicitly using block matrices and treating the tensor product as a Kronecker product.}
\begin{equation} \label{def:AKLT E map}
    \bbE_{A}(B) = \sum_{ij} A_{ij} t_i^* B t_j = T^*(A\otimes B) T ,
\end{equation} where as usual $A_{ij}$ refers to the $i,j^{th}$ entry of the $3\times 3$ matrix $A$.

This allows us to write the expectations of general local observables in a more compact form: 
\begin{equation}\begin{split}
    \bra{\psi_{\alpha\beta}^{(\ell)}}& A_1\otimes \dots \otimes A_\ell \ket{\psi^{(\ell)}_{\alpha\beta}} \\
    &= \sum_{\substack{i_1,\dots,i_{\ell-1}\\j_1,\dots j_{\ell-1}}} (A_1)_{i_1j_1} \dots (A_\ell)_{i_{\ell-1}j_{\ell-1}} \bra{\alpha} t_{i_1}^* \dots t_{i_{\ell-1}}^* [\bbE_{A_\ell}(\ket{\beta}\bra{\beta})] t_{j_{\ell-1}}\dots t_{j_1}\ket{\alpha} \\
    &= \bra{\alpha} \bbE_{A_1} \circ \dots \circ \bbE_{A_\ell} (\ket{\beta}\bra{\beta})\ket{\alpha}. ,
\end{split}\end{equation}

Let us pause for a second. Essentially, what we have done is realized that every expectation of a local observable (and by continuity quasi-local observable) in the MPS $\psi$ may be computed given the following data:
\begin{itemize}
    \item An algebra of on-site observables $\calA_x$ and a so-called bond algebra $\calB$. Here, $\calB = M_2(\C)$, so we say the \textit{bond dimension} of this MPS is 2.
    \item A completely positive map $\bbE:\calA_x\otimes \calB \to \calB$.\footnote{In some texts, this is called the ``double tensor'' or ``generalized transfer matrix''. }
    \item A positive element $\ket{\beta}\bra{\beta} \in \calB$ and a positive, normalized functional $\Tr\ket{\alpha}\bra{\alpha}(\cdot) \in \calB^*$ (a state on $\calB^*$ if you prefer).
\end{itemize} This is the data of a finitely correlated state~\cite{fannes1992finitely}, modulo some normalization and compatibility conditions that our current choices of $\ket{\beta}\bra{\beta}$ and $\Tr\ket{\alpha}\bra{\alpha}(\cdot)$ do not yet satisfy.

Returning to the task at hand, it is now straightforward to calculate expectations in the thermodynamic limiting states of our MPSs. 
Adding $L$ sites to the left and $R$ to the right of the interval $[1,\ell]$ gives the following expression for the expectation of $A = \idty^{\otimes L} \otimes A_1 \otimes \dots \otimes A_\ell \otimes \idty^{\otimes R} \in \calA_{[-L+1, \ell+R]}$ in the vector state $\psi_{\alpha\beta}^{(L+\ell+R)}$:

\begin{equation}\begin{split}
\bra{\psi_{\alpha\beta}^{(L+\ell+R)}} A \ket{\psi_{\alpha\beta}^{(L+\ell+R)}}
    &= \Tr \ket{\alpha}\bra{\alpha} \bbE_{\idty}^L \circ \bbE_{A_1} \circ \dots \circ \bbE_{A_n} \circ \bbE_{\idty}^R(\ket{\beta}\bra{\beta})  \\
    &= \Tr(\bbE_\idty^T)^L (\ket{\alpha}\bra{\alpha}) \circ \bbE_{A_1} \circ \dots \circ \bbE_{A_n} \circ \bbE_{\idty}^R(\ket{\beta}\bra{\beta}),
\end{split} \end{equation}
where $(\bbE_\idty^T)$ denotes the adjoint of $\bbE_\idty$ with respect to the Hilbert-Schmidt inner product $\inprod{B,C} = \Tr B^* C$ on $\calB$. The map $\bbE_\idty$ is called the \textit{transfer operator}, and its spectral properties control the long chain limits $\lim_{L\to \infty}, \lim_{R\to \infty}$. We thus proceed to diagonalizing $\bbE_{\idty}$ and $\bbE_{\idty}^T$. Our present job is made slightly easier by observing that this particular transfer operator is self-adjoint, $\bbE_{\idty} = \bbE_{\idty}^T$, since for all $B,C\in\calB$, we have 
\begin{equation}
    \inprod{B,\bbE_{\idty}(C)} = \sum_{i = 1}^3 \Tr B^* t_i^* C t_i = \sum_{i = 1}^3 \Tr (t_i^* B t_i)^*  C = \inprod{\bbE_{\idty}(B),C} ,
\end{equation} where for the final equality we have used that each rescaled Pauli is Hermitian $t_{i}^* = t_i$. Time to diagonalize.

\begin{lemma}
    The transfer operator $\bbE_{\idty}:\calB\to \calB$ admits the following diagonalization:
    \begin{equation}
        \bbE_{\idty}(\idty) = \idty,\qquad  \bbE_{\idty}(\sigma^i) = -\frac{1}{3} \sigma^i ,\quad  i=X,Y,Z.
    \end{equation} 
\end{lemma}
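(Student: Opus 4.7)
The plan is to exploit the fact that each rescaled Pauli matrix $t_i = \tfrac{1}{\sqrt{3}}\sigma^i$ is self-adjoint, so that $t_i^* = t_i$ and the transfer operator simplifies to
\begin{equation*}
  \bbE_\idty(B) \;=\; \sum_{i=1}^3 t_i^* B\, t_i \;=\; \frac{1}{3}\sum_{i=1}^3 \sigma^i B\, \sigma^i, \qquad B\in \calB=M_2(\C).
\end{equation*}
With this explicit form, the claimed eigenvalue equations reduce to bare-hands computations using only the Clifford (anticommutation) relations $\sigma^i \sigma^j + \sigma^j\sigma^i = 2\delta_{ij}\idty$ from (\ref{eq:pauli commutation}).

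First I would verify $\bbE_\idty(\idty)=\idty$ by applying $(\sigma^i)^2 = \idty$ three times and dividing by $3$. Next, for a fixed $j\in\{X,Y,Z\}$, I would split the sum $\sum_{i=1}^3 \sigma^i \sigma^j \sigma^i$ into the term $i=j$, which contributes $\sigma^j$ since $(\sigma^j)^3=\sigma^j$, and the two terms $i\neq j$, which contribute $-\sigma^j$ each by anticommuting $\sigma^i$ past $\sigma^j$. The sum equals $\sigma^j - \sigma^j - \sigma^j = -\sigma^j$, and dividing by $3$ gives $\bbE_\idty(\sigma^j) = -\tfrac{1}{3}\sigma^j$ as claimed. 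Since $\{\idty,\sigma^X,\sigma^Y,\sigma^Z\}$ is a basis of $\calB$, this exhibits the full diagonalization.

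A conceptually cleaner (but computationally equivalent) route would be to invoke representation theory. The intertwining property (\ref{eq:intertwiner MPS}) of $T$ together with the definition (\ref{def:AKLT E map}) implies that $\bbE_\idty$ is $SU(2)$-equivariant with respect to the adjoint action on $\calB = M_2(\C) \cong V_0 \oplus V_1$. By Schur's lemma \ref{lem:schur}, $\bbE_\idty$ must act as a scalar on each isotypic component: one scalar on $V_0=\C\idty$ and another on $V_1=\mathrm{span}\{\sigma^X,\sigma^Y,\sigma^Z\}$. These two scalars are then pinned down by the short calculations above (and one could get away with computing $\bbE_\idty$ on just $\idty$ and, say, $\sigma^Z$). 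No serious obstacle is anticipated; the main subtlety is simply to remember the Hermitianity of $t_i$ so that the map takes the symmetric form $\sum_i \sigma^i B\sigma^i$ rather than a more awkward expression.
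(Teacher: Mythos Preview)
Your proposal is correct and follows essentially the same approach as the paper: both use the Hermitianity $t_i^*=t_i$ to write $\bbE_\idty(B)=\tfrac{1}{3}\sum_i \sigma^i B\sigma^i$, then compute directly on the basis $\{\idty,\sigma^X,\sigma^Y,\sigma^Z\}$ using the Pauli anticommutation relations. Your additional Schur-lemma remark is a nice bonus not present in the paper's proof.
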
\begin{proof}
    This calculation can be fully performed just by recalling the Pauli anticommutation relations $\sigma^i\sigma^j + \sigma^j\sigma^i = 2\delta_{ij}\idty$ and that $\sigma_i=\sigma_i^*$. Observe:
    \begin{equation}
        \bbE_{\idty}(\idty) = \sum_{i=1}^3 t_i^* t_i = \frac{1}{3} (\idty +\idty+\idty) = \idty,
    \end{equation} and 
    \begin{equation}
        \bbE_{\idty}(\sigma^X) = \frac{1}{3}( \sigma^X \sigma^X \sigma^X + \sigma^Y \sigma^X \sigma^Y + \sigma^Z \sigma^X \sigma^Z) = \frac{1}{3}\sigma^X (1-1-1) = -\frac{1}{3} \sigma^X , 
    \end{equation} and similarly for $\sigma^Y,\sigma^Z$. Since $\{\idty, \sigma^X,\sigma^Y,\sigma^Z\}$ forms a basis for $\calB \cong M_2(\C)$, we are done.
\end{proof} 
This CP map $\bbE$ is \textit{primitive}, which means that it has 1 as a simple eigenvalue and all other eigenvalues have $\abs{\lambda}<1$. In particular, $\lim_{p\to\infty} \bbE_{\idty}^p = P_{\idty}$, the rank-1 orthogonal projection onto $\idty$,\footnote{If this calculation is unfamiliar, it may help to write $\bbE_{\idty}$ as a diagonal matrix in the Pauli basis.} defined by $P_{\idty}:\calB\to \calB$ as 
\begin{equation} \label{def:rank 1 orth proj onto idty}
    P_\idty(B) = \frac{1}{2} (\Tr B) \idty , \qquad B\in \calB . 
\end{equation} Primitivity is the key condition to get a single well-defined thermodynamically limiting state. But let us not rush through this. 

Since the (rescaled) Pauli basis of $\calB$ is orthonormal, we may write for any $B\in \calB$
\begin{equation}
    B = \frac{1}{2} (\Tr B) \idty + \frac{1}{2}\sum_{i=1}^3 (\Tr \sigma^i B) \sigma^i , 
\end{equation} and so from our diagonalization we can write
\begin{equation}
    \bbE_{\idty}(B) = \frac{1}{2} (\Tr B) \idty - \frac{1}{3}\paran{B- \frac{1}{2}(\Tr B) \idty }, 
\end{equation} so that 
\begin{equation}\label{eq:powers of AKLT transfer operator}
    \bbE_{\idty}^p (\ket{\beta}\bra{\beta}) = \frac{1}{2}\norm{\beta}^2 \idty + \paran{-\frac{1}{3}}^p \paran{\ket{\beta}\bra{\beta} - \frac{1}{2} \norm{\beta}^2\idty} . 
\end{equation} This implies that
\begin{equation}\begin{split} \label{def:AKLT omega}
    \lim_{L\to \infty,R\to \infty} &\inprod{\psi_{\alpha\beta}^{(L+\ell+R)},  A \psi_{\alpha\beta}^{(L+\ell+R)}} \\ &= \lim_{L\to \infty,R\to \infty} \Tr(\bbE_\idty^T)^L (\ket{\alpha}\bra{\alpha}) \circ \bbE_{A_1} \circ \dots \circ \bbE_{A_\ell} \circ \bbE_{\idty}^R(\ket{\beta}\bra{\beta}) \\
    &= \frac{\norm{\alpha}^2 \norm{\beta}^2}{2} \Tr (\frac{1}{2}\idty) \bbE_{A_1}\circ \dots \circ \bbE_{A_\ell}(\idty) \\
    &=: \omega( A_1\otimes \dots \otimes A_\ell ), 
\end{split}\end{equation} where $\omega$ is a translation invariant pure state on $\calA_{\Z}$ uniquely determined by the above expression for simple tensor observables. We say $\omega$ is the finitely correlated state defined by the CP map $\bbE:\calA_x\otimes \calB \to \calB$, the positive linear functional $\rho(\cdot) := \Tr \frac{1}{2}\idty (\cdot)\in \calB^*$, and the positive element $e := \idty\in \calB$. These satisfy the normalization condition $\rho(e) = 1$ and the compatibility conditions $\bbE_{\idty}(e) = e$ and $\rho(\bbE_\idty(B)) = \rho(B)$ for all $B\in\calB$. 

To calculate 2-point correlation functions in this state, we observe that Equation (\ref{eq:powers of AKLT transfer operator}) implies that powers of the transfer operator converge exponentially fast\footnote{Indeed, this rate of convergence is controlled by the second largest eigenvalue $\lambda$ of $\bbE_{\idty}$.} to the rank-1 orthogonal projection onto $\idty$:
\begin{equation}
    \norm{\bbE_{\idty}^p - P_\idty} \leq \frac{2}{3^p} . 
\end{equation} Then, setting $A_2=\dots = A_{\ell-1} = \idty$, we have
\begin{equation}\begin{split}
    \abs{\omega(A_1\otimes \idty \otimes \dots \otimes \idty \otimes A_\ell) - \omega(A_1)\omega(A_\ell)} &= \abs{\frac{1}{2} \bbE_{A_1}\circ (\bbE_\idty^{(\ell-2)} - P_{\idty})\circ \bbE_{A_\ell}} \\
    &\leq \norm{A_1}\norm{A_\ell} \frac{C}{3^\ell} . 
\end{split}\end{equation} This confirms that the ground state $\omega$ has exponential decay of correlations. The final thing for us to prove is that $\omega$ is the unique zero-energy ground state of the AKLT chain. 

\begin{theorem}
    The finitely correlated state $\omega$ defined by (\ref{def:AKLT omega}) is the unique state on $\calA_\Z$ such that $\omega(P_{x,x+1}^{(2)}) = 0$ for all $x\in \Z$.
\end{theorem}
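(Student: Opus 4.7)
The plan is to combine the parent Hamiltonian property from Theorem~\ref{thm:AKLT parent Hamiltonian} with the primitivity of the transfer operator $\bbE_\idty$ to pin down $\omega'$ uniquely. The setup is standard for finitely correlated state uniqueness: use positivity and the parent Hamiltonian property to force $\omega'$ to have ``MPS-shaped'' reduced density matrices at every finite volume, then use primitivity to wash out the boundary degrees of freedom in the thermodynamic limit.

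First I would let $\omega'$ be any state on $\calA_\Z$ with $\omega'(P_{x,x+1}^{(2)}) = 0$ for all $x \in \Z$. Since each $P^{(2)}_{x,x+1}\geq 0$, the Cauchy--Schwarz inequality $\abs{\omega'(AB)}^2 \leq \omega'(AA^*)\omega'(B^*B)$ applied with $A = (P^{(2)}_{x,x+1})^{1/2}$ and arbitrary $B\in\calA_{[a,b]}$ shows that the density matrix $\rho_{[a,b]}$ of the restriction of $\omega'$ to $\calA_{[a,b]}$ satisfies $P^{(2)}_{x,x+1}\rho_{[a,b]} = \rho_{[a,b]} P^{(2)}_{x,x+1} = 0$ for every $x\in [a,b-1]$. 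Consequently $\rho_{[a,b]}$ is supported on $\bigcap_{x}\ker P^{(2)}_{x,x+1} = \ker H_{[a,b]}$, which by Theorem~\ref{thm:AKLT parent Hamiltonian} coincides with $\calG_{b-a+1}$.

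Next I would pull this support condition back to the bond algebra $\calB = M_2(\C)$. Because $\psi^{(N)} : M_2(\C)\to \calG_N$ is a linear bijection for $N\geq 2$, any positive operator supported on $\calG_N$ is uniquely represented by a positive element $R\in M_2(\C)\otimes M_2(\C)$ via $\psi^{(N)}\otimes \overline{\psi^{(N)}}$. A direct generalization of the expectation calculation that led to (\ref{def:AKLT E map}) gives, for any simple tensor $A_1\otimes\dots\otimes A_N$,
\begin{equation*}
\inprod{\psi^{(N)}(B'), A_1\otimes\dots\otimes A_N\,\psi^{(N)}(B)} = \Tr\!\brac{(B')^* \bbE_{A_1}\circ\dots\circ\bbE_{A_N}(B)} .
\end{equation*}
Applying this to the interval $[-L+1,\ell+R]$ and contracting the factors corresponding to the identity on the outer $L$ and $R$ sites, one obtains
\begin{equation*}
\omega'(\idty^{\otimes L}\otimes A_1\otimes\dots\otimes A_\ell\otimes \idty^{\otimes R}) = \rho_L\!\paran{\bbE_{A_1}\circ\dots\circ\bbE_{A_\ell}(e_R)},
\end{equation*}
where $\rho_L\in\calB^*$ and $e_R\in\calB$ are positive data built from $\rho_{[-L+1,\ell+R]}$ by applying $L$ iterates of $\bbE_\idty^T$ on the left and $R$ iterates of $\bbE_\idty$ on the right, respectively.

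Finally I would take $L,R\to\infty$ and invoke primitivity of $\bbE_\idty$. The explicit diagonalization $\bbE_\idty(\idty)=\idty$, $\bbE_\idty(\sigma^i)=-\tfrac{1}{3}\sigma^i$ together with self-adjointness gives the exponential bound $\norm{\bbE_\idty^{p} - P_\idty}\leq 2\cdot 3^{-p}$, where $P_\idty(B) = \tfrac{1}{2}(\Tr B)\idty$ as in (\ref{def:rank 1 orth proj onto idty}) and (\ref{eq:powers of AKLT transfer operator}). Hence $e_R \to c_R\,\idty$ and $\rho_L\to c_L\cdot\tfrac{1}{2}\Tr(\cdot)$ exponentially, with the scalars $c_L, c_R$ determined by the normalization $\omega'(\idty)=1$. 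Substituting into the expression above and letting $L,R\to\infty$ reproduces exactly the defining formula (\ref{def:AKLT omega}) for $\omega$, so $\omega'$ agrees with $\omega$ on all local observables. Density of $\calA_{loc}$ in $\calA$ then gives $\omega' = \omega$.

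The main obstacle I foresee is making the bond-algebra rewriting in the second paragraph fully rigorous when $\rho_{[-L+1,\ell+R]}$ is a general mixed density matrix on $\calG_{L+\ell+R}$, rather than a pure MPS. The key ingredient is injectivity of $\psi^{(N)}$ (established in the proof of Theorem~\ref{thm:AKLT parent Hamiltonian}), which promotes the identification $\calG_N\cong M_2(\C)$ to an identification of the operator algebra on $\calG_N$ with $M_2(\C)\otimes M_2(\C)$; once that is in place, the MPS formula $\psi^{(N)}(B) = \sum_{\bi} \Tr(Bt_{\bi})\ket{\bi}$ makes the factorization through iterates of $\bbE_{A_i}$ automatic. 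This step is essentially the standard FCS uniqueness lemma of~\cite{fannes1992finitely}; everything else is spectral input from the primitivity of $\bbE_\idty$ already computed above.
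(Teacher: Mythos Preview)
Your proposal is correct and follows essentially the same strategy as the paper: use the parent Hamiltonian property (Theorem~\ref{thm:AKLT parent Hamiltonian}) to force the reduced density matrices $\rho_{[a,b]}$ of any zero-energy state to be supported on $\calG_{b-a+1}$, then use primitivity of $\bbE_\idty$ to wash out the boundary data as the interval grows. The paper's proof is in fact terser than yours---it simply notes $\operatorname{ran}\rho_{[a,b]}\subseteq\calG_{b-a+1}$ and then invokes the earlier computation (\ref{def:AKLT omega}) to conclude directly, without spelling out the bond-algebra factorization; the technical wrinkle you flag (handling mixed density matrices on $\calG_N$) is glossed over there and is resolved exactly as you suggest, via injectivity of $\psi^{(N)}$ and the uniform bound $\norm{\bbE_\idty^p-P_\idty}\leq 2\cdot 3^{-p}$.

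One small caution: your displayed factorization $\omega'(\cdots)=\rho_L(\bbE_{A_1}\circ\dots\circ\bbE_{A_\ell}(e_R))$ with separate $\rho_L\in\calB^*$ and $e_R\in\calB$ is not literally correct for a general mixed $\rho_{[-L+1,\ell+R]}$, since the bond data of a mixed state on $\calG_N$ lives in $M_2(\C)\otimes M_2(\C)$ and need not be a product. The correct statement is that the expectation is a bilinear form in this joint bond data sandwiching $(\bbE_\idty^T)^L$ and $\bbE_\idty^R$, and since those converge in norm to $P_\idty$ while the bond data stays uniformly bounded (trace-one, positive), the limit is $\omega$ regardless. You identified this as the main obstacle, and that diagnosis is accurate.
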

\begin{proof}
    Since local observables are dense in $\calA_\Z$, any state $\eta$ is uniquely determined by its restrictions to the subalgebras $\calA_{[a,b]}, a<b$.
    Let $\rho_{[a,b]}$ denote the density matrices of $\eta$ restricted to $\calA_{[a,b]}$. Notice that $\eta(P_{x,x+1}^{(2)}) = 0,$ for $x=a,\dots,b-1$. This tells us that $\text{ran}\, \rho_{[a,b]} \subseteq \calG_{b-a+1}$: to see why, recall that any density matrix $\rho$ is expressible as a convex combination of pure states $\ket{\psi_i}\bra{\psi_i}, \psi_i\in \calH$: i.e., $0\leq c_i\leq 1$ and 
    $\rho = \sum_{i} c_i \ket{\psi_i}\bra{\psi_i}$ with $\sum_i c_i = 1$.
    So every vector $\ket{\psi_i}$ in the range of our operator $\rho_{[a,b]}$ has the property that it has zero expectation for the operator $P^{(2)}$, which means it is in $\ker H_{[a,b]}$. By Theorem \ref{eq:AKLT parent Hamiltonian}), $\calG_{[a,b]} = \ker H_{[a,b]}$.

    We then have for all chains $t_1<b_1$ that
    \begin{align*}
        \eta(A_{t_1} &\otimes \dots \otimes A_{b_1})\\
        &= \lim_{\substack{a\to -\infty \\ b\to\infty}} \Tr \rho_{[a,b]} \idty_{a, t_1- 1} \otimes A_{t_1} \otimes \dots \otimes A_{b_1} \otimes \idty_{b_1 + 1, b} \\
        &= \omega(A_{t_1} \otimes \dots \otimes A_{b_1}),
    \end{align*} which is what we wanted to show. 
\end{proof}
\begin{remark}
    Warning! The dimension of the finite-chain ground state space $\calG_{\ell}$ is still $4$ for all $\ell\geq 2$. But there is only one thermodynamic limiting state (i.e. a single weak-$*$ accumulation point) $\omega$, hence ``uniqueness of the ground state''. This sort of subtlety is why we need to be careful with the thermodynamic limit. Indeed, there are also cases where one has a unique ground state for each finite lattice but the thermodynamic limit has multiple limiting states. This happens for instance in the dimerized models appearing at the B' point in the yellow phase of Figure (\ref{fig:bilin_biquad}): it was proven e.g. in ~\cite{bjornberg2021dimerization} that each finite chain ground state is unique, but there are at least two thermodynamic limiting states. The differentiating feature, visualized in figure 3 of this same paper, is that when the chain is of the form $[-\ell,\ell]$ and $\ell$ even, the site $x=0$ is more entangled with its $x=-1$ neighbor, while when $\ell$ odd, $x=0$ is more entangled with its $x=1$ neighbor (compare to the Majumdar-Ghosh ground states from earlier).
\end{remark}

We have now shown that the AKLT ground state $\omega$ possesses the first two out of the three promised properties of the Haldane phase: 
\begin{itemize}
    \item The ground state is unique.
    \item The ground state correlation function decays exponentially.
    \item There is a spectral gap above the ground state energy.
\end{itemize} The AKLT chain ground state indeed satisfies this final property, but we will not demonstrate this in this thesis. It is an extremely important result for several reasons, not the least of which being that it is the first descriptor in the term ``gapped ground state phase''. There is a wealth of proof strategies for showing spectral gaps, particularly for frustration-free models like the AKLT chain. We direct the interested reader to the lecture notes upon which this chapter is largely based ~\cite{nachtergaele2016quantum}, as well as the introduction ~\cite{young2023quantum} and the review ~\cite{nachtergaele2019quasi}.
\chapter{Matrix Product States}\label{ch:MPSs}

In the previous chapter we spent a great deal of time studying the ground state of the AKLT chain. Our primary interest in the AKLT chain thus far has been that it is an exactly solvable model in the Haldane phase. Exactly solvable models are useful for probing quantum phases in the same way that concrete examples are useful for probing mathematical theory: we may squeeze a great deal of information out of them, which in turn supplies fuel for conjectures, definitions, and proofs for the general case. The AKLT chain is the prototypical example of a nontrivial MPS. As it turns out, MPSs are wonderful quantum states for variational procedures: by varying the bond dimension $D$ of a MPS, one may construct quantum states with varying levels of cross-chain entanglement, allowing one to explore the Hilbert space of states on a chain while retaining a clear structure. A famous result of Hastings~\cite{hastings2007area} demonstrates that 1D gapped ground states may be well-approximated by matrix product states, and more precise estimates of the requisite bond dimension have since been calculated (e.g. Corollary 1 of ~\cite{huang2015computing}). MPSs would later pave the way to projected entangled pair states (PEPS) and other tensor network states, yielding a thriving subfield of research across quantum many-body physics and quantum computation. See the excellent review~\cite{cirac2021matrix} for more information on MPS, PEPS, and tensor network states. 

We will need only a few essentials from the theory of MPSs.
In what follows, we denote the bond dimension $D$, so the bond algebra is $\calB\cong M_D(\C)$, and the physical dimension $n$, so on-site Hilbert space at $x\in \Z$ is $\calH_x\cong \C^n$ and the on-site observable algebra is $\calA_x\cong M_n(\C)$. For the AKLT chain, $D=2$ and $n=3$. Let $T:\C^D\to \C^n\otimes \C^D$ be a tensor, and pick an orthonormal basis $\ket{i}, i=1,\dots , n$ of $\calH_x$ so we may write
\begin{equation}
    T = \sum_{i=1}^n \ket{i}\otimes t_i . 
\end{equation} Then the matrix product state $\psi_\ell : \calB\to (\C^n)^{\otimes \ell}$ defined by $T$ is given by
\begin{equation}
    \psi_\ell (B) = \sum_{i_1,\dots, i_\ell} \Tr (B t_{i_\ell} \dots t_{i_1} ) \ket{i_1,\dots , i_\ell}, \qquad B\in\calB . 
\end{equation}

For any matrix product state defined by a tensor $T:\C^D\to \C^n\otimes \C^D$, we have a naturally associated completely positive map $\bbE: \calA_x \otimes \calB \to \calB$, defined for $A\in \calA_x, B\in \calB$ by
\begin{equation}
    \bbE_A(B) := T^* (A\otimes B) T = \sum_{i,j=1}^n A_{ij} \, t_i^* B t_j.
\end{equation}

When we passed from MPS to FCS picture for the AKLT chain, we diagonalized the transfer operator $\bbE_\idty$, and its spectral properties controlled the thermodynamic limiting states of the MPS.

\begin{theorem}(Primitivity) ~\cite{wolf2012quantum,ogata2020classification} \label{thm:Primitivity}
For $T:\C^D\to \C^n\otimes \C^D$ and $\calB\cong M_D(\C)$, consider the transfer operator $\bbE_\idty:\calB \to \calB$, the completely positive map given by
\begin{equation}
    \bbE_\idty(B) = \sum_{i,j=1}^n t_i^* B t_j , \qquad B\in \calB . 
\end{equation} Assume the spectral radius of $\bbE_\idty$ is 1. Then the following properties are equivalent.
\begin{enumerate}
    \item There exists a unique faithful state $\rho:\calB\to \C$ and a strictly positive element $e\in \calB$ satisfying
    \begin{equation}
        \lim_{\ell\to \infty} \bbE_\idty^\ell (B) = \rho(B) e, \qquad B\in \calB . 
    \end{equation}
    \item The spectrum $\spec(\bbE_\idty)$ of $\bbE$ has 1 as a simple eigenvalue and all other eigenvalues $\lambda$ have $\abs{\lambda}<1$. There exists a faithful $\bbE_{\idty}$-invariant state $\rho$ and a strictly positive $\bbE_{\idty}$-invariant element $e\in \calB$.
    \item There exists an $\ell \in \N$ such that 
    \begin{equation}
        \calB = \text{span}\{ t_{i_1} t_{i_2}\dots t_{i_\ell} : 1\leq i_k\leq n \text{ for each } k=1,\dots,\ell \} . 
    \end{equation}
\end{enumerate}
\end{theorem}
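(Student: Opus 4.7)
The plan is to establish the equivalence by proving the chain (2) $\Leftrightarrow$ (1) first, then (2) $\Leftrightarrow$ (3), each by adapting the Perron--Frobenius theory for completely positive maps on finite-dimensional matrix algebras.

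For (2) $\Rightarrow$ (1), I would invoke the Jordan decomposition of $\bbE_\idty$ on the finite-dimensional space $\calB$. Since $1$ is a simple eigenvalue and every other eigenvalue $\lambda$ satisfies $|\lambda|<1$, the iterates split as $\bbE_\idty^\ell = P + N_\ell$, where $P$ is the spectral projector onto the $1$-eigenspace and $\|N_\ell\|\to 0$ exponentially at rate $\max_{\lambda\neq 1}|\lambda|$. The $1$-eigenspace is spanned by $e$ and the co-eigenspace (for $\bbE_\idty$ acting on $\calB^*$) by $\rho$, so after the normalization $\rho(e)=1$ we get $P(B)=\rho(B)\,e$. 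Faithfulness and strict positivity transfer directly from (2). The reverse direction (1) $\Rightarrow$ (2) is immediate: convergence of $\bbE_\idty^\ell$ to a rank-one operator $B\mapsto \rho(B)\,e$ forces the $1$-eigenspace to be one-dimensional, and any other eigenvalue on the unit circle would obstruct convergence, so the remaining spectrum lies strictly inside the disc.

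For (3) $\Rightarrow$ (2), I would first reduce to the unital setting. A faithful invariant state $\rho$ and strictly positive invariant element $e\in\calB$ can be produced by a Brouwer-type fixed point argument on the compact convex state space once one knows the spectral radius is $1$. The gauge transformation $B\mapsto e^{-1/2}\,\bbE_\idty(e^{1/2}Be^{1/2})\,e^{-1/2}$ produces a unital CP map $\wt{\bbE}$ with Kraus operators $\wt{t}_i = e^{1/2}t_i e^{-1/2}$; conjugation by the invertible $e^{1/2}$ preserves the spanning hypothesis (3). For a unital CP map with faithful invariant state, the fixed-point set is a $*$-subalgebra contained in the commutant of $\{\wt{t}_i\}$, by a Kadison--Schwarz argument. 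Spanning by length $\ell$ products forces this commutant to be $\C\idty$, so $1$ is simple. The same idea handles peripheral eigenvalues: a $\lambda$-eigenvector $U$ with $|\lambda|=1$ must be a partial isometry satisfying $\wt{t}_i\,U = \lambda U\,\wt{t}_i$; iterating gives $\wt{t}_{i_1}\dots \wt{t}_{i_\ell}\,U = \lambda^\ell\, U\,\wt{t}_{i_1}\dots \wt{t}_{i_\ell}$, so the spanning hypothesis forces $\lambda=1$ and $U\in\C\idty$.

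For (2) $\Rightarrow$ (3), I would argue by contraposition. If no $\ell$ achieves equality in (3), then the increasing sequence $\calB_\ell := \text{span}\{t_{i_1}\dots t_{i_\ell}\}$ stabilizes at some proper subspace $\calB_\infty\subsetneq \calB$ that is invariant under left and right multiplication by each $t_j$. After the gauge reduction above, this invariance produces a nonscalar element of the commutant of $\{\wt{t}_i\}$, and hence an extra fixed point of $\wt{\bbE}$, contradicting the simplicity of the $1$-eigenspace established in (2). The main obstacle I anticipate is the (3) $\Rightarrow$ (2) direction, specifically the control of the peripheral spectrum: producing the invariant state and element cleanly, executing the gauge transformation without losing positivity or the spanning condition, and leveraging Kadison--Schwarz on long products to exclude every unimodular eigenvalue other than $1$ is where the subtlety lies.
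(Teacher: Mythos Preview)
The paper does not prove this theorem; it is quoted as a known result from the cited references \texttt{wolf2012quantum} and \texttt{ogata2020classification}, so there is no ``paper's own proof'' to compare against. That said, your reconstruction follows the standard Perron--Frobenius route for completely positive maps and is largely sound, with one direction needing repair.

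For (1) $\Leftrightarrow$ (2) your Jordan/spectral-projector argument is exactly the standard one. For (3) $\Rightarrow$ (2), the structure is right but two points are loose. First, a Brouwer-type fixed point gives an invariant state and a positive fixed element, but not their \emph{faithfulness} and \emph{strict positivity}; those come from irreducibility, which you should extract from (3) beforehand (a nontrivial common invariant subspace $V\subsetneq\C^D$ for the $t_i$ would force every $t_I$ into the proper subspace $\{A:AV\subseteq V\}$, contradicting the spanning hypothesis). Second, the multiplicative-domain computation actually yields $\wt t_i\,U=\bar\lambda\,U\,\wt t_i$, not $\lambda$; this is harmless, since iterating and using that the length-$\ell$ products span forces $BU=\bar\lambda^{\ell}UB$ for all $B\in M_D(\C)$, hence $U$ is scalar and $\lambda=1$.

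The genuine gap is in (2) $\Rightarrow$ (3). The sequence $\calB_\ell=\mathrm{span}\{t_{i_1}\cdots t_{i_\ell}\}$ is \emph{not} increasing in general (take Kraus operators $E_{12},E_{21}$ on $M_2(\C)$: $\calB_\ell$ oscillates between $\mathrm{span}\{E_{12},E_{21}\}$ and $\mathrm{span}\{E_{11},E_{22}\}$), so it need not stabilise, and the bimodule/commutant contradiction does not get off the ground. Even if you replace $\calB_\ell$ by products of length $\le\ell$, you only conclude that the algebra generated by the $t_i$ is $M_D(\C)$, which is strictly weaker than (3). The clean argument is via the Choi matrix: the Choi matrix of $\bbE_\idty^\ell$ is $C_\ell=\sum_{|I|=\ell}|t_I\rangle\!\rangle\langle\!\langle t_I|$, which is full rank exactly when (3) holds for that $\ell$; under (1) one has $\bbE_\idty^\ell\to\rho(\cdot)\,e$, whose Choi matrix is $e\otimes\sigma^{T}$ (with $\sigma$ the density of $\rho$), and this is strictly positive since $e>0$ and $\sigma>0$. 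Hence $C_\ell$ is eventually invertible, giving (3).
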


This theorem essentially grants the condition which ensures that translation-invariant MPSs and FCS are the same, just as we saw with the AKLT chain.

\begin{definition} \label{def:primitive MPS}
    A transfer operator $\bbE_\idty$ satisfying the equivalent conditions in Theorem \ref{thm:Primitivity} is called \emph{primitive}. In this case, we similarly call the MPS associated to the tensor $T:\C^D\to \C^n\otimes \C^D$ \emph{primitive}.
    
    Given a primitive $T$, we may use the state $\rho:\calB\to \C$ and positive element $e\in \calB$ to construct a pure translation-invariant MPS (FCS) $\omega:\calA\to \C$ 
    \begin{equation}
        \omega( A_1 \otimes \dots \otimes A_\ell ) = \rho( \bbE_{A_1} \circ \dots \circ \bbE_{A_\ell} (e)).
    \end{equation} Purity of this state is ensured by Proposition 3.1 of~\cite{fannes1992finitely}.
\end{definition}

Note in particular that primitive MPSs correspond to pure FCSs.

The following theorem has various forms: we cite the review ~\cite{cirac2021matrix}, but we will need a small detour to this formulation's proof in ~\cite{cirac2017matrix}.

\begin{theorem} (Fundamental Theorem of MPS)~\cite{cirac2017matrix, cirac2021matrix} \label{thm:Fundamental Theorem of MPS} Suppose that two primitive MPSs defined by $T:\C^D\to \C^n\otimes \C^D$, $\wt{T}:\C^{\wt{D}}\to \C^n\otimes \C^{\wt{D}}$ give the same finitely correlated state $\omega_T = \omega_{\wt{T}}$. Then $D=\wt{D}$, and there is a phase $\lambda \in U(1)$ and a unitary matrix $\Pi\in M_{D}(\C)$ such that
\begin{equation}
    \wt{t}_i = \lambda \, \Pi \,  t_i \, \Pi^*, \qquad i=1,\dots,n ,
\end{equation} where we mean $T = (t_1,\dots, t_i, \dots, t_n)$ with each $t_i\in M_D(\C)$. 
    
\end{theorem}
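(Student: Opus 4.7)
My plan is to first use primitivity to reduce each MPS to a canonical gauge form, then use equality of the FCSs together with the span condition (part 3 of Theorem~\ref{thm:Primitivity}) to build an invertible intertwiner between the bond algebras, and finally identify that intertwiner with $\lambda \Pi (\cdot) \Pi^*$ via Skolem--Noether together with the canonical-form normalization.

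First, I would eliminate the similarity-transformation ambiguity in the definition of an MPS. If $S\in \calB$ is invertible, then replacing $t_i$ by $S^{-1} t_i S$ leaves all trace expressions $\Tr(B t_{i_\ell}\dots t_{i_1})$ unchanged up to a conjugation of $B$, and therefore produces the same FCS (with correspondingly conjugated $\rho$ and $e$). By Theorem~\ref{thm:Primitivity}, primitivity ensures a unique strictly positive fixed point $e$ of $\bbE_{\idty}$; choosing $S$ with $SS^{*}$ equal to an appropriate power of $e$ brings the MPS into a canonical form in which the fixed point is $\idty$ and $\rho$ is the normalized trace $D^{-1}\Tr(\cdot)$. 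The residual gauge freedom preserving this canonical form is exactly $t_i \mapsto \lambda\,\Pi t_i \Pi^{*}$ with $\Pi$ unitary and $\lambda\in U(1)$, so it suffices to prove that two canonical-form primitive tensors generating the same FCS must be related in this way.

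Second, equality $\omega_T = \omega_{\wt T}$ in canonical form yields, for every $\ell$ and every choice of indices,
\[
\tfrac{1}{D}\,\Tr\bigl(t_{i_1}^{*}\cdots t_{i_\ell}^{*}\, t_{j_\ell}\cdots t_{j_1}\bigr)
= \tfrac{1}{\wt D}\,\Tr\bigl(\wt t_{i_1}^{*}\cdots \wt t_{i_\ell}^{*}\, \wt t_{j_\ell}\cdots \wt t_{j_1}\bigr).
\]
Using the span property (item 3 of Theorem~\ref{thm:Primitivity}), pick $\ell_0$ large enough that $\{t_{i_{\ell_0}}\cdots t_{i_1}\}$ spans $\calB \cong M_D(\C)$ and $\{\wt t_{i_{\ell_0}}\cdots \wt t_{i_1}\}$ spans $\wt\calB \cong M_{\wt D}(\C)$. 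The displayed equality says that the Gram matrices of these two families (with respect to the Hilbert--Schmidt inner product, weighted by $D^{-1}$ and $\wt D^{-1}$) coincide, hence the two spanning families have the same rank. Comparing dimensions forces $D^{2} = \wt D^{2}$, i.e.\ $D = \wt D$, and the linear map $\Phi\colon \calB\to \wt\calB$ defined on the spanning family by $\Phi(t_{i_{\ell_0}}\cdots t_{i_1}) = \wt t_{i_{\ell_0}}\cdots \wt t_{i_1}$ (and extended linearly) is well-defined and a vector-space isomorphism; lengthening the string shows $\Phi$ preserves products, and the Gram-matrix identity (together with the Hermitian conjugate computation) shows $\Phi$ preserves adjoints up to an overall constant.

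Third, $\Phi$ is a (unital, $*$-preserving) algebra automorphism of $M_D(\C)$, so by the Skolem--Noether theorem there exists an invertible $\Pi\in M_D(\C)$, unique up to scalar, with $\Phi(A) = \Pi A \Pi^{-1}$ for every $A\in\calB$. Applying $\Phi$ to the individual generators gives $\wt t_i = \Pi t_i \Pi^{-1}$ on the span, and hence as an identity of matrices; the normalization freedom in $\Pi$ produces a scalar $\lambda$, which must satisfy $|\lambda|=1$ because both tensors are in canonical form (so that $\sum_i t_i t_i^{*} = \idty$ is preserved only when $\Pi$ is unitary and $\lambda\in U(1)$). Unwinding the initial canonical-form reduction, this gives the claimed relation for the original tensors. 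The main obstacle I anticipate is justifying the well-definedness of $\Phi$ rigorously: one needs to ensure that any linear relation satisfied by products of $t_i$'s at length $\ell_0$ is also satisfied by the corresponding $\wt t_i$ products, which is exactly where faithfulness of $\rho$ (guaranteed by primitivity) and the Gram-matrix equality combine to prevent spurious relations.
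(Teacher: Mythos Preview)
The paper does not prove this theorem itself; it is cited from~\cite{cirac2017matrix,cirac2021matrix}, and Remark~\ref{rem:continuity of fundamental theorem of MPS} only traces enough of the argument in~\cite{cirac2017matrix} to extract continuity. That argument is quite different from yours: one forms the \emph{mixed transfer operator} $\rho(\cdot)=\sum_i t_i^{*}(\cdot)\,\wt t_i$, shows (using primitivity of $\bbE_\idty$) that it has a unique peripheral eigenvalue $\lambda$ with $|\lambda|=1$ whose eigenvector $\Pi$ is unitary, and reads off $\wt t_i = \lambda\,\Pi\,t_i\,\Pi^{*}$ directly from the eigenvector equation. Your Gram-matrix/Skolem--Noether route is a genuinely different strategy.

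There are, however, two real gaps. First, the canonical-form reduction is overstated: a similarity $t_i\mapsto S^{-1}t_i S$ allows you to normalise the right fixed point to $e=\idty$, but the left fixed point then becomes $\rho(\cdot)=\Tr(\rho_0\,\cdot)$ for some full-rank density matrix $\rho_0$ which is a gauge invariant of the MPS and is \emph{not} proportional to $\idty$ in general. So the displayed identity should carry $\rho_0,\wt\rho_0$ rather than $D^{-1},\wt D^{-1}$, and the Gram matrices live in different weighted Hilbert--Schmidt inner products.

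Second, and more seriously, multiplicativity of $\Phi$ is asserted but not proved. What ``lengthening the string'' actually gives you is a family of linear bijections $\Phi_\ell:M_D(\C)\to M_{\wt D}(\C)$, one for each $\ell\ge\ell_0$, satisfying $\Phi_{\ell+1}(A\,t_i)=\Phi_\ell(A)\,\wt t_i$. To invoke Skolem--Noether you need a single algebra homomorphism, which is equivalent to showing that all the $\Phi_\ell$ coincide; but the Gram-matrix equality at each separate length does not by itself force $\Phi_{\ell_0}=\Phi_{\ell_0+1}$. This identification is essentially the content of the theorem, and is exactly what the mixed-transfer-operator approach furnishes by producing the intertwiner $\Pi$ explicitly as an eigenvector rather than trying to assemble it from string data. (The claim that $\Phi$ is $*$-preserving has the same problem: $t_I^{*}$ is not itself a length-$\ell_0$ word in the $t_i$, so its image under $\Phi$ is not determined by your definition.)
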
 We will actually need a bit more than this. In particular, we will be considering the case where $\wt{T}$ is given as the image of $T$ under the action of a Lie group $G$. Namely, let $G\subseteq \GL(\C^n)$ be a matrix Lie group (we will later take it to be the image of a representation). We would like a result that shows that the phase $\lambda(g)$ and unitary matrix $\Pi(g)$, thought of as functions of $g\in G$, may be taken to continuously depend on $g$. To do this, we must prove a slightly stronger version of Lemma A.2 from ~\cite{cirac2017matrix}, 
\begin{remark} \label{rem:continuity of fundamental theorem of MPS}
    Let $G\subseteq \GL(\C^n)$ a matrix Lie group, and let $\wt{t}_i = \sum_{j=1}^n g_{ij} t_j$. We may choose the phase $\lambda:\calN\to U(1)$ and unitary matrix $\Pi:\calN \to M_D(\C)$ to be continuous functions in an open neighborhood of the identity $\calN\subseteq G$.
\end{remark}
\begin{proof}
    The proof proceeds essentially identically to that of Lemma A.2 in ~\cite{cirac2017matrix}, we will just prove that we can extract this corollary from it. Let $g\in G$ and consider the family of linear maps $\rho_g:M_D(\C)\to M_D(\C)$\footnote{This $\rho_g$ is a special case of what is often called the ``mixed transfer operator''.} given by 
    \begin{equation}
        \rho_g (\cdot):= \sum_{i=1}^n t_i^* (\cdot) \wt{t}_i = \sum_{i=1}^n t_i^* (\cdot) \paran{\sum_{j=1}^n g_{ij} t_j}. 
    \end{equation} After tracing the argument to equation A.5, we are guaranteed the existence of a family of unitaries $\Pi_g\in M_D(\C)$ which are eigenvectors for $\rho_g$, i.e.
    \begin{equation}
        \rho_g(\Pi_g) = \lambda_g\Pi_g . 
    \end{equation} Notice that $\rho_\idty = \bbE_{\idty}$, the transfer operator associated to the MPS tensor $T$. In particular, $\rho_\idty$ has simple eigenvalue 1 and all other eigenvalues have modulus $\abs{c}<1$ by the primitivity assumption. Then, since $\rho_g$ is a continuous function of the matrix elements of $g$, its eigenvector $\Pi_{(\cdot)}:\calN \to M_D(\C)$ may be treated as a continuous function of $g$ on some open neighborhood of the identity $\calN$. Then, proceeding through the proof in~\cite{cirac2017matrix}, we are guaranteed that $\abs{\lambda_g} = 1$ and $\lambda_g$ satisfies equation (A.6):
    \begin{equation}
        \frac{1}{\lambda_g} t_i = \Pi_g \sum_{j=1}^n g_{ij} t_j \Pi_g^*. 
    \end{equation} We may then use that $T$ is an isometry and so $\sum_{i=1}^n t_i^* t_i = \idty$, giving
    \begin{equation}
    \frac{1}{\lambda_g} = \frac{1}{\lambda_g} \sum_{i=1}^n t_i^* t_i = \sum_{i,j=1}^n g_{ij} t_i^* \Pi_g  t_j \Pi_g^* . 
    \end{equation} In particular, the right hand side consists of smooth functions of $g$, so $\lambda_{(\cdot)}:\calN \to U(1)$ is a smooth function.
\end{proof}

\subsection{Periodic and Ergodic Decompositions}
We often work with mixed states by finding their pure state decomposition. Given a FCS $\omega$ generated by $(\bbE,\rho, e)$, it is natural to wonder whether it decomposes into pure FCSs. Thanks to the work~\cite{fannes1992finitely}, the answer is yes, but the translation invariance condition must be gently relaxed to allow for ``periodic'' FCSs. This situation will occur in Chapter \ref{ch:SO(n)_Haldane_chains}, where we will have a translation-invariant state $\omega$ which is the equal weight superposition of two 2-periodic states $\omega = \frac{1}{2}(\omega_+ + \omega_-)$ which are translates of one another.
Note that we may think of any p-periodic state $\eta$ as a translation-invariant state by ``blocking'' pairs of neighboring spins: i.e. combining $p$ local spins on sites $[x,x+1,\dots, x+p-1]\subseteq \Z$ into one new on-site Hilbert space $\C^{n^p}\cong (\C^n)^{\otimes p}$. If a $p$-periodic state $\eta$ is a FCS after this blocking procedure, we call $\eta$ a \textbf{$p$-periodic FCS}. 

We say that $\omega$ is an \textbf{ergodic} FCS if it is extremal in the convex set of translation invariant states. Proposition 3.1 of~\cite{fannes1992finitely} ensures that this equivalent to saying $\omega$ is generated by $(\bbE,\rho,e)$ such that $e=\idty$ is the only eigenvector of $\bbE_\idty$ of eigenvalue one. The normalization condition $\bbE_\idty(e) = e$ combined with complete positivity forces $\norm{\bbE_\idty}\leq 1$. When there are no other eigenvalues $\abs{\lambda}=1$, then $\bbE_\idty$ is primitive and so $\omega$ is pure. Consider the alternative case where there are other eigenvalues $\abs{\lambda} = 1$. One may show that these eigenvalues must be $p^{th}$ roots of unity for some $p\in \N$, i.e. $\lambda^p=1$, and in this case $\omega = \frac{1}{p}(\omega_1+ \omega_2 +\dots + \omega_p)$ where each $\omega_i$ is a distinct pure $p$-periodic FCS. 

\begin{proposition} (Section 3~\cite{fannes1992finitely}) \label{prop:ergodic and periodic FCS decompositions}
\begin{enumerate}
    \item Every FCS $\omega$ generated by $(\bbE,\rho,e)$ has a unique convex decomposition into a finite sum of ergodic FCSs.
    \item Every ergodic FCS $\omega$ may be uniquely written as the average of $p$ $p$-periodic pure FCSs, which are translates of each other.
\end{enumerate}
\end{proposition}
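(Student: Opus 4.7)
The plan is to analyze the peripheral spectrum (the eigenvalues of modulus one) of the transfer operator $\bbE_\idty$ acting on the finite-dimensional bond algebra $\calB \cong M_D(\C)$. Since $\bbE_\idty$ is a completely positive unital-like map (after absorbing the normalization $\bbE_\idty(e)=e$ into a gauge choice), its peripheral spectrum controls both the ergodic decomposition in part~(1) and the periodic decomposition in part~(2). Finite-dimensionality of $\calB$ ensures that $\bbE_\idty$ is a matrix, so its spectrum is finite, all Jordan blocks terminate, and Cesàro averages converge to spectral projections.

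For part~(1), I would first show that the fixed-point space $\calF := \{B\in\calB : \bbE_\idty(B) = B\}$ is a (nonunital) C$^*$-subalgebra, or more precisely a subalgebra up to the rescaling by $e$. The key structural input is a standard result on CP maps (see the references to~\cite{wolf2012quantum} already invoked in Theorem~\ref{thm:Primitivity}): the eigenspace at $1$ is closed under a modified product, and after a change of basis carried by $e$ it is isomorphic to a direct sum of full matrix algebras $\bigoplus_i M_{D_i}(\C)$. Let $\{p_i\}$ be the minimal central projections of this algebra and decompose $e = \sum_i e_i$ with $e_i := p_i e p_i$. Each summand $(\bbE_\idty|_{p_i \calB p_i}, \rho|_{p_i \calB p_i}, e_i)$ generates a FCS $\omega_i$, and one checks that $\omega = \sum_i \lambda_i \omega_i$ with $\lambda_i = \rho(e_i)$. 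Each $\omega_i$ has a one-dimensional fixed-point space by construction, hence is ergodic. Uniqueness of the decomposition reduces to uniqueness of the central projection decomposition of a finite-dimensional C$^*$-algebra, and to the standard fact that distinct ergodic translation-invariant states are mutually singular (so the decomposition of $\omega$ into ergodic components is unique).

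For part~(2), assume $\omega$ is ergodic, so $1$ is a simple eigenvalue of $\bbE_\idty$ with eigenvector $e$. The first substantive step is to show that the full peripheral spectrum $\sigma(\bbE_\idty) \cap U(1)$ is a finite subgroup of $U(1)$, hence a cyclic group $\{e^{2\pi i k/p} : k=0,\dots,p-1\}$ for some $p\geq 1$. This follows from an argument due to Evans and H\o{}egh-Krohn style: if $\lambda,\mu$ are peripheral eigenvalues with eigenvectors $B_\lambda, B_\mu$, then using ergodicity and complete positivity one shows the eigenvectors are unitaries (after normalization by $e$) and that $B_\lambda B_\mu$ is a peripheral eigenvector with eigenvalue $\lambda\mu$. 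Combined with finiteness of the spectrum, this forces the cyclic group structure.

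Once this is established, the second step is to block $p$ consecutive sites: consider $\bbE_\idty^p$ as a transfer operator for the $p$-blocked chain (whose on-site algebra is $\calA_x^{\otimes p}$). By construction $\bbE_\idty^p$ has $1$ as a $p$-fold degenerate eigenvalue and no other peripheral spectrum, so applying part~(1) to the blocked chain produces a decomposition $\omega = \frac{1}{p}(\omega_1 + \dots + \omega_p)$ into $p$ ergodic (at the blocked level) states, each of which is now primitive in the sense of Theorem~\ref{thm:Primitivity} and hence a pure FCS by Definition~\ref{def:primitive MPS}. The final step is to verify that translation by one site cyclically permutes the $\omega_i$: translation acts on the fixed-point eigenvectors of $\bbE_\idty^p$ through the peripheral eigenvectors of $\bbE_\idty$ at non-trivial $p$-th roots of unity, which multiply the minimal central projections by a nontrivial character and thus cyclically permute them. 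The main obstacle is the lemma that the peripheral spectrum forms a group and that the corresponding eigenvectors are unitaries; everything downstream is bookkeeping built on top of that structural fact and the finite-dimensional C$^*$-algebra classification.
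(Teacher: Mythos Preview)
The paper does not actually give a proof of this proposition: it is stated with a citation to Section~3 of \cite{fannes1992finitely} and left unproven, with only a brief informal sketch in the paragraph preceding it (the remarks that the peripheral spectrum consists of $p$-th roots of unity and that this yields $\omega = \frac{1}{p}(\omega_1 + \dots + \omega_p)$). So there is no detailed ``paper's own proof'' to compare against.

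That said, your sketch is essentially the argument of Fannes--Nachtergaele--Werner and matches the thesis's informal remarks. The structural ingredients you identify---that the fixed-point space of $\bbE_\idty$ carries a $C^*$-algebra structure whose central decomposition gives the ergodic components, that the peripheral spectrum of an ergodic transfer operator is a finite cyclic group of $p$-th roots of unity with unitary eigenvectors, and that blocking $p$ sites then reduces part~(2) to part~(1) with a primitive transfer operator---are exactly the right ones. One small caution: the Evans--H{\o}egh-Krohn style argument that peripheral eigenvectors are unitaries and that the peripheral spectrum is closed under multiplication uses in an essential way that the invariant state $\rho$ is faithful (or can be made so by passing to a minimal triple); you should make that reduction explicit, since without faithfulness the fixed-point set need not be an algebra. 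Otherwise the outline is sound and would, with care, reconstruct the cited result.
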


If the reader desires a treatment of $p$-periodic MPSs which avoids blocking (as blocking partially forgets local entanglement structure), we direct them to~\cite{de2017irreducible}.

\section{The Parent Property}\label{sec:The Parent Property}

We encountered a frustration-free Hamiltonian back when we considered the AKLT chain Hamiltonian. Recall Definition \ref{def:frustration free} that an interaction $\Phi:\calP_0(\Gamma) \to \calA_{loc}$ is frustration-free if for any finite volume $\Lambda\subseteq \Gamma$ the finite volume Hamiltonian $H_{\Lambda} = \sum_{X\subseteq \Lambda} \Phi(X)$ and each of the terms $\Phi(X)$ appearing in it have a common eigenvector belonging to their respective smallest eigenvalues. When each of these terms $\Phi(X)\geq 0$, Remark \ref{rem:frustration free} gave the equivalent characterization that the interaction is frustration-free if for any finite $\Lambda\subseteq \Gamma$ we have
\begin{equation}
    \bigcap_{X\subseteq \Lambda} \ker (\Phi(X)) \neq \{0\} . 
\end{equation}

The ground state spaces $\calG_{\Lambda}$ of a frustration-free Hamiltonian $H_\Lambda$ then satisfy the \textbf{intersection property}. We say a sequence of subspaces $\{\calG_\ell\}_{\ell \in\N} \subseteq \calH_{[x,x+\ell-1]}$ for $x\in \N$, satisfies the intersection property if there is an $m\in \N$ such that
\begin{equation}\label{eq:intersection property (general case)}
    \calG_\ell = \bigcap_{x=1}^{\ell-m} \calH_{[1,x-1]} \otimes \calG_m \otimes \calH_{[x+m,\ell]} ,
\end{equation} holds for all $\ell \geq m$, with the convention that $\calH_\empty = \C$ and $[a,b]=\empty$ if $a>b$.

In the argument culminating in Equation (\ref{eq:intersection property AKLT}), we proved that the MPS spaces of the AKLT chain 
\begin{equation} \label{def:MPS space (general case)}
    \calG_{\ell} = \text{span}\{ \psi_\ell(B): B\in \calB\}
\end{equation} enjoyed the intersection property with $m=2$. Indeed, this same proof goes through for arbitrary primitive MPSs for a sufficiently large $m$. In particular, we need $m$ to be large enough so that the map $\psi_m$ is injective--appropriately, the smallest such $m$ is called the \textbf{injectivity length} of the MPS. One can show that a primitive MPS always has a finite length. We also proved in Theorem \ref{thm:AKLT parent Hamiltonian} that these are the only ground states of the AKLT interaction. 

\begin{definition}\label{def:parent hamiltonian}
Let $\Phi$ be a frustration-free interaction which is nonnegative, so $\Phi(X)\geq 0 $ for all $X\in P_0(\Z)$, and defines a gapped Hamiltonian. Let $\calG_\ell$ be given by MPSs $\psi_\ell$ as in (\ref{def:MPS space (general case)}). If for any finite volume $[1,\ell] \subseteq \Gamma$ we have
\begin{equation}
    \ker H_{[1,\ell]} = \calG_\ell , 
\end{equation} then we say $H_{[1,\ell]}$ is a \emph{parent Hamiltonian} for the MPS space $\calG_\ell$.
\end{definition}

Parent Hamiltonians are not unique. But we may readily construct them for primitive MPSs. For notational convenience, we work on the half-infinite chain $\Gamma= \Z_{\geq 1}$, but this similarly works for $\Z$. Let $m$ be the injectivity length and let the chain length $\ell\geq m$. Let $h_{[1,m]}$ be the orthogonal projection onto $\calG_m^\perp$ in $(\C^n)^{m}$. Consider the Hamiltonian given by
\begin{equation}
    H_{[1,\ell]} = \sum_{x=1}^{\ell-m} h_{[x,x+m-1]}.
\end{equation} Clearly the MPS spaces $\calG_\ell$ are ground states of $H_{[1,\ell]}$. But we have even more. 

\begin{theorem}~\cite{fannes1992finitely} \label{thm:fannes parent Hamiltonian}
    Let the chain length $\ell\in \N$ and let $\psi_\ell:\calB\to (\C^n)^{\otimes \ell}$ a family of primitive MPSs. Then there is a parent Hamiltonian $H$ for these MPSs which has a unique gapped ground state $\omega$ in the thermodynamic limit.
\end{theorem}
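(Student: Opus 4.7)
The plan is to construct an explicit parent Hamiltonian from the primitive MPS tensor, verify the parent property via an intersection-property argument, and then separately address the spectral gap and thermodynamic uniqueness.

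By primitivity and condition (3) of Theorem~\ref{thm:Primitivity}, there is a finite injectivity length $m\in \N$ so that $\psi_m:\calB\to (\C^n)^{\otimes m}$ is injective. Let $h_{[x,x+m-1]}$ denote the orthogonal projection onto $\calG_m^\perp$ inside $(\C^n)^{\otimes m}$ and set $H_{[1,\ell]} := \sum_{x=1}^{\ell - m+1} h_{[x,x+m-1]}$ for $\ell \geq m$. This is manifestly a nearest-$m$-neighbor, translation-invariant, nonnegative interaction, and $\calG_\ell \subseteq \ker H_{[1,\ell]}$ essentially by construction since any MPS restricted to an $m$-site window lies in $\calG_m$.

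The first main step is the parent property $\ker H_{[1,\ell]} = \calG_\ell$. Since each $h_{[x,x+m-1]}\geq 0$, Lemma~\ref{lem:frustration freeness for nonnegative operators} reduces this to showing that
\[
\bigcap_{x=1}^{\ell-m+1} \calH_{[1,x-1]}\otimes \calG_m\otimes \calH_{[x+m,\ell]} = \calG_\ell.
\]
This is proven by an inductive argument that mirrors, almost verbatim, the AKLT computation in Theorem~\ref{thm:AKLT parent Hamiltonian}: given the ``two-window'' intersection identity
\[
(\calG_{L+m}\otimes \calH_{[1,R]}) \cap (\calH_{[1,L]}\otimes \calG_{m+R}) = \calG_{L+m+R},
\]
one obtains the full intersection property by iterating. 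The key step in proving this identity is expanding a vector of the intersection as an MPS written in two different ways, matching coefficients using injectivity of $\psi_m$, and then using the isometry relation $\sum_i t_i^* t_i = \idty$ (or more generally the $\bbE_\idty$-invariance of $e$) to absorb one of the boundary tensors into the other. I expect this step to be straightforward given that the AKLT proof generalizes cleanly.

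The second main step, and the main obstacle, is proving a uniform spectral gap $\gamma>0$ such that $H_{[1,\ell]} \geq \gamma(\idty - P_{\calG_\ell})$ independent of $\ell$. The standard route is the martingale method of Nachtergaele: one uses primitivity of $\bbE_\idty$, which forces all nonunit eigenvalues to have modulus strictly less than one, to obtain a quantitative estimate on the ``overlap'' $\|P_{\calG_{k}\otimes (\C^n)^m} P_{(\C^n)^m\otimes \calG_{k}} - P_{\calG_{k+m}}\|<1-\delta$ for some block size $k$. Feeding this into the martingale (or Knabe-type) inequality yields a gap lower bound uniform in $\ell$. This is the genuinely hard part; the quantitative overlap estimate is where primitivity does real work.

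Finally, uniqueness of the thermodynamic ground state follows from the parent property combined with primitivity, again following the blueprint of the AKLT argument in Section~\ref{sec:The AKLT Ground State as a Finitely Correlated State (FCS)}. Any weak-$*$ limit $\eta$ of finite-volume ground states has $\operatorname{ran}\rho_{[a,b]}\subseteq \calG_{b-a+1}$ by the parent property, so $\eta(A_1\otimes\dots\otimes A_n)$ is a trace against an MPS density matrix; taking $a\to -\infty,\,b\to \infty$ and invoking Theorem~\ref{thm:Primitivity}(1) collapses the boundary data onto $(\rho,e)$, giving the translation-invariant pure FCS
\[
\omega(A_1\otimes\dots\otimes A_n) = \rho(\bbE_{A_1}\circ\dots\circ \bbE_{A_n}(e)),
\]
which is therefore the unique thermodynamic limit, and is pure by Proposition~3.1 of~\cite{fannes1992finitely}.
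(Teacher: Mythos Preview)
The paper does not prove this theorem; it is stated as a cited result from~\cite{fannes1992finitely}, with only the construction of the candidate Hamiltonian sketched in the preceding paragraph. There is thus no paper proof to compare against. Your outline is the standard argument from that reference, and it is essentially the general-MPS version of what the paper \emph{does} work out in detail for the AKLT chain: Theorem~\ref{thm:AKLT parent Hamiltonian} for the intersection-property parent argument, Section~\ref{sec:The AKLT Ground State as a Finitely Correlated State (FCS)} for thermodynamic uniqueness via primitivity of $\bbE_\idty$, and the martingale method for the gap (which the paper does not reproduce either).

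One small technical point worth flagging: with $m$-site interactions, consecutive windows overlap in only $m-1$ sites, so the very first application of your two-window identity would require injectivity of $\psi_{m-1}$, which you do not have. The standard remedy is either to use $(m+1)$-site interactions, or to establish the single base case $(\calG_m \otimes \C^n) \cap (\C^n \otimes \calG_m) = \calG_{m+1}$ separately---exactly as the AKLT proof does for its short-chain step before invoking the general intersection property. Either fix is routine, so this is not a real obstacle to your plan.
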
 Again, it may well be that the MPSs have many finite chain ground states, but there is only one thermodynamic limiting states $\omega$. Notice that the support of the terms in the Hamiltonian grows with the injectivity length $m$. One might conjecture that any parent Hamiltonian for MPS space $\calG_\ell$ need grow in this way, but the family of MPSs we study in Chapter \ref{ch:SO(n)_Haldane_chains} provides a set of counterexamples. 
\chapter{Gapped Ground State and Symmetry Protected Topological (SPT) Phases }\label{ch:gapped_ground_states_phases}

Recall the phase diagram for the spin-1 bilinear-biquadratic models, Figure \ref{fig:bilin_biquad}. The bilinear-biquadratic models are defined by nearest-neighbor Hamiltonians which are invariant under an on-site $O(3)$ symmetry. Letting the local Hilbert spaces instead be $\calH = \C^n$, we can ask a natural question: what phase diagram do we obtain when we consider nearest-neighbor Hamiltonians with an $O(n)$ symmetry? This phase diagram, which we will present in Section~\ref{sec:phase diagram o(n) spin chains}, is the beating heart of this thesis: essentially every result presented is to better understand the behavior of the models in the red ``Haldane'' phase of this diagram and their relation to those in the yellow ``dimerized'' phase investigated in~\cite{bjornberg2021dimerization}. The AKLT chain is the prototypical example of a model in the Haldane phase for $n=3$, and the $SO(n)$ AKLT chains studied in Chapter \ref{ch:SO(n)_Haldane_chains} are a natural generalization to other $n$.

In this chapter, we delve into the mathematical theory of gapped ground state phases with the goal of precisely describing what we mean when we say the red Haldane phase and yellow dimerized phase are distinct. Physically, two interactions should be in the same phase if we can ``smoothly deform'' one to the other. In Section~\ref{sec:gapped ground state phases}, this will lead to our first definition of a gapped ground state phase as an equivalence class of gapped interactions, which will be followed by Theorem~\ref{thm:bachmann gapped} detailing some invariants of this equivalence relation. Corollary~\ref{cor:odd n are different gapped phases} reveals that this is already enough to distinguish the Haldane phase from the dimerized phase for odd $n$, from the observation that models within these phases have different numbers of ground states.

The even $n$ case is more interesting, as the models within these phases are believed to have the same number of ground states. This inspires us to define in Section~\ref{sec:SPT Phases} a more refined notion of phase which incorporates a symmetry group: a symmetry protected topological (SPT) phase will be an equivalence class of gapped interactions which are related by paths of gapped interactions which respect a symmetry $G$. By enforcing this symmetry condition, single gapped phases may fracture into multiple SPT phases. An important example of this phenomenon comes from the AKLT chain, which served to demonstrate that the Haldane phase is a distinct SPT phase from the trivial phase which contains interactions with a product ground state. Pollmann-Berg-Turner-Oshikawa~\cite{pollmann2012symmetry} demonstrated that when one enforces a particular on-site $G=\Z_2\times \Z_2$ symmetry, the AKLT chain must occupy a nontrivial SPT phase. To argue this, they constructed a certain invariant of SPT phases and showed that it attains different values for the trivial model and for the AKLT chain.

The insights of Pollman-Berg-Turner-Oshikawa were converted into a full physical classification of $G$-SPT phases for matrix product states by Chen-Gu-Liu-Wen~\cite{chen2013symmetry}, who discovered a family of $H^2(G,U(1))$ topological indices which serve as a complete invariant. This was later converted into a mathematical proof for all 1D models with unique gapped ground states for the special case of finite $G$ by Ogata~\cite{ogata2020classification}. 

We begin to upgrade Ogata's result by considering the case where $G$ is a compact Lie group and allowing for ground state spaces to have multiple ground states. The appropriate index again takes values in $H^2(G,U(1))$. In Section~\ref{sec:the H2 index for MPS}, we show a computation of this index for MPSs. In Section~\ref{sec:index for split states}, we provide the new Theorem~\ref{thm:split state index} which proves that the index is well-defined for 1D gapped ground states. The theorem requires strong continuity of a particular projective representation of $G$ on an infinite dimensional Hilbert space, so much of this section is dedicated to describing results which we use in the proof. In Section~\ref{sec:excess spin operator}, we re-contextualize results from~\cite{bachmann2014gapped}, pointing out that they find a highly explicit and physical implementation of the otherwise abstract projective representation appearing in the previous section for precisely the ground states appearing at the south pole and the $SO(n)$ AKLT points. Finally, in Section~\ref{sec:the dimerized phase is trivial}, we compute the value of this index for the south pole ground states for all $n$. 

\section{The Phase Diagram for \texorpdfstring{$O(n)$}{O(n)}-Invariant Nearest Neighbor Spin Chains} \label{sec:phase diagram o(n) spin chains}
    Take a 1D quantum spin chain, i.e. the lattice $\Z$, with on-site Hilbert space $\calH_x=\C^n$ for every site $x\in \Z$. We will work on finite chains $[a,b]\subseteq \Z$, but for ease of notation we will often consider the chain $[1,\ell]$. The algebra of observables on a finite chain is $\calA_{[1,\ell]} = \bigotimes_{x=1}^{\ell} \calA_x$, where each site's observable algebra is given by $\calA_x = M_n(\C)$, the $n\times n$ matrices equipped with the operator norm. The algebra of quasilocal observables is given by the operator norm closure $\calA = \overline{\calA_{loc}}$ of local observables $\calA_{loc} = \bigcup_{[a,b]\subseteq \Z} \calA_{[a,b]}$.
    
    We wish to consider the phase diagram for a family of nearest-neighbor Hamiltonians invariant under a local $O(n)$-symmetry. This phase diagram admits a straightforward parameterization in the following way. 
    Let $H_\ell=H_{\ell}^*\in \calA_{[1,\ell]}$ be the nearest-neighbor Hamiltonian
    \begin{equation} \label{def:model Hamiltonian}
		H_\ell = \sum_{x=1}^{\ell-1} h_{x,x+1}  ,
	\end{equation} where $h_{x,x+1}$ denotes a copy of the operator $h=h^*\in M_n(\C)\otimes M_n(\C)$ acting on nearest-neighbor pairs.
    Now, the local $O(n)$ symmetry we have in mind is the tensor power of the defining representation $O(n)\to GL(\calH_\ell)$ defined by $g \mapsto g^{\otimes \ell}$, which gives rise to a representation on observables $g\mapsto g^{\otimes\ell}(\cdot)(g^{-1})^{\otimes \ell}$. To check that $H_\ell$ is invariant under this symmetry is to check that $[h,g\otimes g] = 0$ for all $g\in O(n)$. One can verify that the tensor representation $g\otimes g$ of $O(n)$ admits the following decomposition into irreducible representations:\footnote{\dots except when $n=4$. In this case, $\Exterior^2(\C^4)$ splits into two distinct irreps $U_+,U_-$.}
    \begin{equation} \label{eq:irrep decomposition of two site O(n) rep}
        \C^n \otimes \C^n \cong M_2 \oplus \Exterior^2(\C^n) \oplus \C \ket{\xi} , 
    \end{equation} where $\Exterior^2(\C^n)$ denotes the antisymmetric space, and $M_2 \oplus \C \ket{\xi}$ is an orthogonal decomposition of the symmetric subspace. Letting $\{\ket{i} \st i=1,\dots, n\}$ be an orthonormal basis of $\C^n$, the maximally entangled vector $\ket{\xi}$ may be expressed as
    \begin{equation}
        \ket{\xi}:= \frac{1}{\sqrt{n}}\sum_{i=1}^n \ket{ii}, 
    \end{equation} and its orthogonal complement in the symmetric subspace $M_2$ has as an orthonormal basis
    \begin{equation}
        M_2 = \text{span} \paran{\{ \ket{ij}+\ket{ji}: 1\leq i<j\leq n\} \cup \{\ket{11}-\ket{ii}: 1<i\leq n\} } .
    \end{equation} Now, by applying Schur's lemma~\ref{lem:schur} to each irrep, we have $[h,g\otimes g] =0 $ if and only if $h$ is a linear combination of orthogonal projections onto these three irreps $P_{M_2}, Q, P_{\Exterior^2\C^n}$, where $Q:=\ket{\xi}\bra{\xi}$. It is convenient to perform a change of basis: letting $\SWAP$ be the operator defined by $\SWAP(v\otimes w) = w\otimes v$ for all $v,w\in \C^n$, we may write $\SWAP = P_{M_2} + Q - P_{\Exterior^2 \C^n}$ and $\idty= P_{M_2} + Q + P_{\Exterior^2 \C^n}$. We can thus write $h = a\SWAP + bQ + c\idty$ for $a,b,c\in \R$, since $h=h^*$. Since shifting by $\idty$ only shifts the spectrum of $H_\ell$ by a constant $c$ and so only shifts the ground state energy, it suffices to consider $h$ of the form
    \begin{equation} \label{def:interaction aSWAP+bQ}
        h = a\SWAP + bQ, \qquad a,b\in \R.
    \end{equation}
    This parameterization gives rise to the following phase diagram. For a discussion of this phase diagram, we refer the reader to~\cite{bjornberg2021dimerization}. In the $n=3$ case, up to a change of basis and shift in ground state energy, this parameterized set of nearest-neighbor interactions is precisely the same as the bilinear-biquadratic interaction $\cos(\theta)\vec{S}_x\cdot \vec{S}_{x+1} + \sin(\theta)(\vec{S}_x\cdot \vec{S}_{x+1})^2$ we encountered in Figure \ref{fig:bilin_biquad}.

    \begin{figure}
    \centering
    \includegraphics{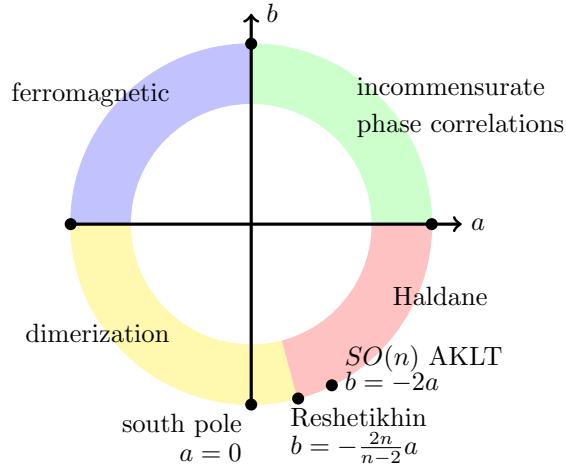}
    \captionof{figure}{The phase diagram for $SO(n)$-invariant interactions (\ref{def:interaction aSWAP+bQ}) (adapted from~\cite{bjornberg2021dimerization}). }
    \label{fig:SO(n) phase diagram}
    \end{figure}

    We will spend a great deal of time in the final Chapter~\ref{ch:SO(n)_Haldane_chains} studying the $SO(n)$ AKLT point $b=-2a$. These are precisely matrix product ground states of a frustration-free interaction, and they are well known to be gapped and robust to perturbation (for a review of proving stability of spectral gaps for frustration-free models, see~\cite{nachtergaele2022quasi}). In the $n=3$ case, this is the AKLT chain we studied in Chapter~\ref{ch:AKLT_chain}. When $n$ is odd, this interaction has a unique gapped ground state. When $n$ is even, it has a pair of translation-invariance symmetry breaking gapped ground states which are 2-translates of one another.

    The south pole point $a=0$ is known to possess for all $n$ a pair of gapped dimerized ground states which are 2-translates of another~\cite{nachtergaele2017direct,aizenman2020dimerization}. The $n=3$ case is exactly the well-studied $P^{(0)}$ chain~\cite{aizenman1994geometric}, where the interaction consists only of the orthogonal projection onto the singlet representation of $\su(2)\cong \so(3)$. More recently, it was proven in~\cite{bjornberg2021dimerization} that when $n\geq 3$, the models in an open region surrounding the south pole point similarly possess a pair of gapped dimerized ground states. Such models are describable as ``random loop models'', wherein one may obtain expectation values of certain physically relevant observables in the ground state by integrating over a space of loop configurations equipped with an appropriate measure. In Section~\ref{sec:the dimerized phase is trivial}, we will briefly recall some key features of this model so as to better understand its nature as a gapped ground state, and we conclude by computing its SPT index.

\section{Gapped Ground State Phases}\label{sec:gapped ground state phases}
We will now define gapped ground state phases following the approaches of~\cite{bachmann2014gapped,ogata2020classification} with the intent of studying the phase diagram Figure \ref{fig:SO(n) phase diagram}. Assume the setup from Section \ref{sec:infinite quantum spin chains setup}: $\Gamma$ is a lattice, $\calA$ is the algebra of quasi-local observables, $\Phi:\calP_0(\Gamma)\to \calA_{loc}$ is an interaction with bounded $F$-norm $\norm{\Phi}_F<\infty$ defining a local Hamiltonian on any finite $\Lambda\subseteq \Gamma$
\begin{equation}
    H_\Lambda = \sum_{X\subseteq \Lambda} \Phi(X) . 
\end{equation} We will often abuse notation and interchangeably refer to an interaction $\Phi$ as a Hamiltonian $H$. 

We require the following spectral gap condition on our Hamiltonians, following~\cite{bachmann2014gapped}. This condition, which asks for a uniform gap for finite volume Hamiltonians, is slightly stronger than our original definition requiring a spectral gap of the GNS Hamiltonian of a ground state, often called a ``bulk'' gap. It can be shown that the uniform finite volume gap implies a bulk gap (for any of its ground states), and due to the more recent work by Moon and Ogata~\cite{moon2020automorphic}, it is expected that requiring only the bulk gap condition for each of a finite number of ground states should produce highly analogous results to many of those presented in this Chapter.
\begin{definition}
    Let $\lambda_0(\Lambda) = \inf \spec(H_\Lambda)$. 
    The model is \textbf{gapped} if there exists a constant $\gamma>0$ and a family $0\leq \epsilon_\Lambda < \gamma$ such that $\lim_{\Gamma\to \Lambda} \epsilon_\Lambda = 0$ and 
    \begin{equation}
        \spec(H_\Lambda) \cap (\lambda_0(\Lambda) + \epsilon_\Lambda , \lambda_0(\Lambda) + \gamma) = \emptyset , \qquad \text{ for all } \Lambda. 
    \end{equation}
\end{definition} The models in open regions surrounding the $SO(n)$ AKLT point in the Haldane phase and the south pole point in the dimerized phase are known to satisfy this condition~\cite{nachtergaele1996spectral,bjornberg2021dimerization}.

We will define a phase as an equivalence class of interactions, where two interactions are in the same phase if there is a smooth gapped path $\Phi(s), s\in [0,1]$ of interactions connecting them. Whenever we choose to restrict to those Hamiltonians with unique gapped ground states we may equally well interpret this as an equivalence class of states; this is the approach of Ogata, whose work~\cite{ogata2020classification} over the past decade will play a central role in this chapter. With MPSs defined and the parent Hamiltonian construction, we now have a great deal of examples to probe phases in 1D where $\Gamma=\Z$.

We require that the gap $\gamma(s)$ be uniformly bounded below by a positive constant $0<\gamma<\gamma(s), s\in [0,1]$. Our notion of differentiability of $\Phi(s)$ here is made by defining $\partial \Phi(Z,s) = \abs{Z}\Phi'(Z,s)$ for all finite $Z\subseteq \Gamma$, where the prime is differentiation with respect to $s$, and requiring that $\norm{\partial \Phi}_F<\infty$ uniformly in $s$.

\begin{definition}~\cite{bachmann2014gapped} \label{def:gapped ground state phases}
Two gapped local Hamiltonians $H_0$ and $H_1$ are in the same gapped ground state phase if there exists a smooth family of gapped local Hamiltonians $[0,1]\ni s\mapsto H(s)$ such that $H_0 = H(0)$ and $H_1 = H(1)$. This forms an equivalence class, whence we write $H_0\sim H_1$.
\end{definition} 

Let $\calS(s)\subseteq \calA^*$ denote the set of ground states of $H(s)$ on the lattice $\Gamma$. As proven in~\cite{bachmann2012automorphic}, a consequence of this definition is the existence of a strongly continuous cocycle of automorphisms $\tau_{s,t}, 0\leq s,t\leq 1$ of the algebra of quasi-local observables $\calA$ which smoothly flows between the sets of ground states $\calS(0)$ and $\calS(1)$:
\begin{equation} \label{def:spectral flow autos}
    \calS(t) = \calS(s) \circ \tau_{s,t} . 
\end{equation} This justifies the name ``ground state phase''.
By definition the \textit{trivial phase} is the phase containing interactions which act only on-site, i.e. $\Phi(X)=0$ for any subset $X\subseteq \Gamma$ which has more than one element. Such interactions have product ground states, e.g. the ground state of the ``just transverse fields''\footnote{We can ignore the boundary term we wrote earlier, as it does not contribute to the thermodynamic limit.} Hamiltonian $H = \sum_{x=a}^b \sigma_x^X$ we encountered in Example (\ref{ex:just transverse fields}).

The physical idea undergirding these comes from the idea of adiabatic time evolution and the spectral flow. The existence of such automorphisms intuitively tells us that an initial state $\omega(0)\in \calS(0)$ and a final state $\omega(1):= \omega(0)\circ \tau_{0,1}\in \calS(1)$ have ``similar macro-scale entanglement properties'' in the thermodynamic limit. When an automorphism relating two states in this manner exists, we call the states \textit{automorphically equivalent}.

When we consider only those Hamiltonians in 1D with unique gapped ground states, there is only the trivial phase\footnote{This is a common physics definition of ``short range entangled''.}: physicists demonstrated~\cite{chen2013symmetry,zeng2015quantum} an argument that shows that all MPSs are in the trivial phase, and Ogata confirmed that there is only one phase in 1D using a closely related (but not identical) definition of gapped ground state phase~\cite{ogata2017class}. When symmetry enters the game however, the story gets more interesting.

\section{SPT Phases} \label{sec:SPT Phases}
Let $G$ be a group and suppose we have a group of automorphisms $\{\alpha_g: g\in G\}\subseteq \Aut(\calA_{\Gamma})$ representing this symmetry. We have in mind the on-site symmetry on spin chains $\Gamma=\Z$ from Section \ref{sec:symmetry}: namely, if $G$ a finite or compact group, then let $U:G\to\calU(\calH_x)$ be a unitary representation acting on site $x\in \Z$ and let the automorphism $\alpha:\calA_\Z \to\calA_\Z$ be defined on any finite volume\footnote{Such an automorphism is then uniquely defined on $\calA_\Z$ by the bounded linear transformation theorem.} $X\subseteq \Z$ by 
\begin{equation} \label{def:on-site symmetry}
    \alpha_g(A) = \paran{\bigotimes_{x\in X} U(g)} A \paran{\bigotimes_{x\in X} U(g)^*} , \qquad A\in \calA_{X} .
\end{equation} Compactness is not a strict prerequisite, but it is quite natural to assume since by Weyl's unitary trick, which we used to prove Proposition~\ref{prop:Weyl unitary trick}, every representation of a compact Lie group may be taken to be unitary. For the phase diagram of $O(n)$-invariant spin chains, we have in mind $G=O(n)$ and $G=SO(n)$, the connected component of $O(n)$.\footnote{Most attention in this chapter will be on the latter case, as we will often perform calculations at the Lie algebra level $\g=\so(n)$ and exponentiate, which has image only in the connected component of a Lie group containing the identity.}

We now ask for the families of Hamiltonians $H(s)$ to be invariant under these automorphisms, i.e the interaction has for all $g\in G$ that
\begin{equation}
    \alpha_g(\Phi(Z,s)) = \Phi(Z,s) 
\end{equation} for each finite volume $\Lambda \subseteq \Gamma$ and $s\in [0,1]$. In this case we write the equivalence relation $H_0\sim_G H_1$, and we call this set of equivalence classes of interactions \textit{symmetry protected topological (SPT) phases}.\footnote{There are several other common definitions of these phases, and the relation of these definitions is, to the author's knowledge, subtle and unclear mathematically.}  It was shown in~\cite{bachmann2014gapped} that the cocycle of automorphisms $\tau_{s,t}$ inherits the $G$-invariance of the interactions and so for all $g\in G$
\begin{equation}
    \tau_{s,t} \circ \alpha_g = \alpha_g\circ \tau_{s,t} \qquad \text{for all }s,t\in [0,1] . 
\end{equation} Indeed, this also means that the symmetric structure of the ground states $\calS(s)$ is preserved within a phase, which is the content of the following theorem.

\begin{theorem}~\cite{bachmann2014gapped} \label{thm:bachmann gapped}
Let $H_0$ and $H_1$ be in the same $G$-symmetry protected topological phase on a lattice $\Gamma$. Let $\overline{\calS}_0,\overline{\calS}_1$ be the vector spaces spanned by elements of $\calS_0,\calS_1$, respectively. Then
\begin{enumerate}[label=(\roman*)]
    \item $\dim(\overline{\calS}_0) = \dim(\overline{\calS}_1)$, in the sense that if one is finite-dimensional, the other is too and with the same dimension , 
    \item $\calS_0$ and $\calS_1$ both carry representations $\Theta_i: G\to \mathcal{L}(\overline{\calS_i})$, $i=0,1$.
    \item $\Theta_0$ and $\Theta_1$ are equivalent representations.
\end{enumerate}
\end{theorem}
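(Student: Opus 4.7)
The plan is to extract all three statements from the single fact that the cocycle of automorphisms $\tau_{s,t}$ smoothly flowing between ground state sets exists and intertwines the on-site symmetry. So before anything else, I would invoke the result cited just after Definition~\ref{def:gapped ground state phases}: the path $H(s)$ produces a strongly continuous cocycle $\tau_{s,t}\in\Aut(\calA)$ satisfying $\calS(t)=\calS(s)\circ\tau_{s,t}$; and, since each $\Phi(s)$ is $G$-invariant, the statement mentioned just after this about $\tau_{s,t}\circ \alpha_g=\alpha_g\circ\tau_{s,t}$ for all $g\in G$, $s,t\in[0,1]$.

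For (i), consider the pullback map $T:\overline{\calS}_1\to\overline{\calS}_0$ defined by $T(\omega)=\omega\circ\tau_{0,1}$ and extended linearly. Since $\tau_{0,1}$ is an automorphism of $\calA$, the assignment $\omega\mapsto\omega\circ\tau_{0,1}$ sends ground states of $H_1$ bijectively onto ground states of $H_0$ (the inverse being pullback by $\tau_{1,0}$). The only slightly delicate point is that $T$ extends to a well-defined linear isomorphism $\overline{\calS}_1\to\overline{\calS}_0$: this follows because precomposition with an automorphism is linear on $\calA^*$ and bijective on $\calS_1\to\calS_0$, so the induced map on the linear spans is a bijection carrying basis to basis, giving $\dim\overline{\calS}_0=\dim\overline{\calS}_1$.

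For (ii), recall from Section~\ref{sec:infinite quantum spin chains setup} that the space $\calS_i$ of ground states of a $G$-invariant interaction is invariant under $\omega\mapsto\omega\circ\alpha_g$. Define
\[
\Theta_i(g)\,\omega := \omega\circ\alpha_{g^{-1}},\qquad \omega\in\calS_i,
\] extended linearly to $\overline{\calS}_i$. The inverse is used so that $\Theta_i(gh)=\Theta_i(g)\Theta_i(h)$ rather than the opposite order; one checks straightforwardly that $\Theta_i$ is a group homomorphism into $\mathcal{L}(\overline{\calS}_i)$ and is well-defined and linear by the same extension argument as in (i).

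For (iii), the equivalence of $\Theta_0$ and $\Theta_1$ is implemented by the very same intertwiner $T$ from step (i); this is the step where the two pieces come together. Using that $\tau_{0,1}$ commutes with each $\alpha_g$, compute for $\omega\in\calS_1$:
\[
T\bigl(\Theta_1(g)\omega\bigr)=\omega\circ\alpha_{g^{-1}}\circ\tau_{0,1}=\omega\circ\tau_{0,1}\circ\alpha_{g^{-1}}=\Theta_0(g)\,T(\omega).
\] Extending linearly, $T\,\Theta_1(g)=\Theta_0(g)\,T$ on $\overline{\calS}_1$, and since $T$ is an isomorphism we conclude $\Theta_0\cong\Theta_1$ per Definition~\ref{def:equivalent reps}. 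The main potential obstacle is purely bookkeeping — verifying that linear extension to the spans preserves bijectivity and the group/intertwining identities when $\calS_i$ need not be linearly independent as a subset of $\calA^*$ — but this is handled by noting that $T$ is induced by an algebra automorphism and hence continuous/linear on all of $\calA^*$, so it restricts to a genuine linear isomorphism between the two spans.
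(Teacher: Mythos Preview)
Your argument is correct and follows the expected route: use the spectral-flow cocycle $\tau_{s,t}$ and its commutation with $\alpha_g$ (both stated in the paragraphs immediately preceding the theorem) to build the linear isomorphism $T=(\cdot)\circ\tau_{0,1}$ and verify it intertwines the pullback $G$-actions. The paper itself does not supply a proof of this theorem---it is quoted from~\cite{bachmann2014gapped}---so there is nothing in the text to compare against directly, but your proof is precisely the argument the cited ingredients were set up to support.
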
 
Using a result we will later prove, this theorem quickly tells us about the gapped ground state structure of the $O(n)$ spin chain phase diagram for odd $n$. This is in fact agnostic to the role of symmetry and is a consequence of earlier results on automorphic equivalence~\cite{bachmann2012automorphic}.
\begin{corollary} \label{cor:odd n are different gapped phases}
    When $n$ is odd, the yellow dimerized phase and red Haldane phase are distinct gapped ground state phases.
    Letting $G$ be e.g. the trivial symmetry, this follows because the south pole point in the dimerized phase has $\dim(\calS) \geq 2$ by~\cite{bjornberg2021dimerization}, while the $SO(n)$ AKLT point in the Haldane phase has $\dim(\calS) = 1$ by Corollary \ref{cor:parent prop means uniqueness}. 
\end{corollary}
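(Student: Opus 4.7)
The plan is to apply Theorem~\ref{thm:bachmann gapped} with $G$ chosen to be the trivial group, so that the $G$-SPT equivalence relation $\sim_G$ collapses to the ordinary gapped ground state equivalence relation $\sim$ from Definition~\ref{def:gapped ground state phases}. Under this choice, condition (i) of the theorem asserts that automorphically equivalent gapped interactions must have ground state spaces of the same dimension (with the convention that one is finite-dimensional iff the other is). This reduces the corollary to comparing $\dim(\overline{\calS})$ at the two representative points.

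Next I would invoke the two cited inputs. At the south pole point $a=0$, the dimerization result of Bj\"ornberg--Muhammad--Ueltschi~\cite{bjornberg2021dimerization} produces (at least) two distinct pure translation-breaking ground states $\omega_\pm$, yielding $\dim(\overline{\calS_{\mathrm{SP}}}) \geq 2$. At the $SO(n)$ AKLT point with $n$ odd, the parent Hamiltonian property established later in Corollary~\ref{cor:parent prop means uniqueness} shows that the finite-volume ground state space is the one-dimensional MPS space and the thermodynamic limit yields a unique state, so $\dim(\overline{\calS_{\mathrm{MPS}}}) = 1$. Since $1 \neq \dim(\overline{\calS_{\mathrm{SP}}})$, Theorem~\ref{thm:bachmann gapped}(i) forbids a smooth gapped interpolation between these two interactions, and hence they lie in distinct phases.

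There is essentially no obstacle here beyond a small bookkeeping subtlety: I need to make sure the version of Theorem~\ref{thm:bachmann gapped} we are citing genuinely applies with $G$ trivial and on the infinite lattice $\Gamma = \Z$. This is indeed the setting of~\cite{bachmann2014gapped}, whose results specialize to automorphic equivalence of ground state sets under the spectral flow cocycle $\tau_{s,t}$ from Equation~(\ref{def:spectral flow autos}); equivalent states must carry representations of the (trivial) symmetry group on vector spaces of matching dimension. I would close by remarking that this argument is insensitive to symmetry and therefore does not extend to even $n$, where both endpoints are expected to have $\dim(\overline{\calS}) = 2$; that case is precisely what motivates the finer $O(n)$-SPT invariants developed in the remainder of the chapter.
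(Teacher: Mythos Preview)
Your proposal is correct and matches the paper's own reasoning, which is embedded in the statement of the corollary: apply Theorem~\ref{thm:bachmann gapped}(i) with $G$ trivial (equivalently, the underlying automorphic-equivalence result of~\cite{bachmann2012automorphic}), and compare $\dim(\overline{\calS})\geq 2$ at the south pole with $\dim(\overline{\calS})=1$ at the odd-$n$ AKLT point. One minor imprecision: the finite-volume MPS space $\calG_\ell$ is not one-dimensional (it has $\dim\calB$), but what matters is exactly what you conclude---that the thermodynamic limit has a unique ground state by Corollary~\ref{cor:parent prop means uniqueness}.
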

The even $n$ case is more interesting: in this case, the south pole point has $\dim(\calS)\geq 2$ by~\cite{bjornberg2021dimerization} (it is believed to be that $\dim(\calS)=2$), and Corollary~\ref{cor:parent prop means uniqueness} reveals that the $SO(n)$ AKLT point has $\dim(\calS)=2$. It is not immediately clear that these points should occupy distinct $SO(n)$ SPT phases. We will be able to show this by constructing a finer invariant which incorporates the role of symmetry.

The 1D unique gapped ground state setting is rather well understood and will guide our search. We mentioned earlier that there is a great deal of evidence that when one ignores symmetry, there is only one gapped ground state phase for one dimensional models with unique ground states. A natural question arises: given a symmetry $G$, is there more than one SPT phase? In the seminal paper~\cite{pollmann2012symmetry}, Pollmann-Berg-Turner-Oshikawa demonstrated that when one enforces a $G = \Z_2\times \Z_2$ dihedral rotation on-site symmetry, there is a nontrivial SPT phase occupied by the AKLT chain.\footnote{They in fact showed that there are multiple symmetry groups which accomplish this.} In other words, there are at least two SPT phases for $G$-invariant interactions with unique ground states. They accomplished this by finding an invariant of the relation $\sim_G$ which they could separately compute on the trivial product state and on the AKLT state, and they found the two states had distinct associated invariants. 

This led to a full classification program: given $G$, classify the SPT phases which arise for $G$-symmetric interactions with unique ground states. The Pollmann-Berg-Turner-Oshikawa calculation suggested that finding invariants, often called topological invariants or topological indices, of these phases to differentiate them would be a fruitful approach. The classification then could be realized by demonstrating completeness of these invariants, i.e. that there is a one-to-one correspondence between SPT phases and values of an invariant. The work of Chen-Gu-Liu-Wen~\cite{chen2013symmetry} succeeded in classifying $G$-SPT phases for finite (and compact) $G$ for translation-invariant matrix product states. The topological indices labeling the SPT phases took values in the group cohomology of $G$ with values in $U(1)$, $H^2(G,U(1))$. Ogata would later rigorously extend their result for finite $G$ to handle all $G$-symmetric translation-invariant local Hamiltonians with a unique gapped ground state~\cite{ogata2020classification}. The proof is lengthy and requires three main steps:
\begin{enumerate}
    \item One needs a meaningful definition of the topological index associated to each state.
    \item The topological index must be constant along a phase, i.e. respect the equivalence relation $\sim_G$ dictated by paths of symmetric Hamiltonians $H(s)$.
    \item The topological invariant must assign a unique index to each phase. 
\end{enumerate} Our context is slightly different: we will allow for compact Lie group symmetries $G$; we will allow for the set of ground states $\calS$ to consist of more than a single ground state; and we will not pursue a completeness result. Again, our eventual goal in this thesis is to differentiate the Haldane and dimerized SPT phases for even $n$, which will not require completeness.

Analogously to Step 1, we need a definition of a topological index for a compact Lie group symmetry $G$. We will associate to each state $\omega_0\in\calS_0$ an index.
In Section \ref{sec:the H2 index for MPS}, we will describe how to compute this index for MPSs and compute it for the AKLT chain. In Section \ref{sec:index for split states}, we show that each gapped ground state may be assigned a well-defined index. Actually, we assign an index to each split state, a class of states which includes gapped ground states by a result of Matsui. This is a new theorem extending the finite $G$ case presented by Ogata to the compact $G$ case, and it will require some data on strongly continuous unitary representations.

Our analogue to step 2 almost immediately follows from the work of Ogata. Ogata shows~\cite{ogata2020classification} that for a split state $\omega_0$, the associated index is invariant under the cocycle of automorphisms $\tau_{s,t}$ (\ref{def:spectral flow autos}), and so $\omega_1 = \omega_0\circ \tau_{0,1}$ has the same index. This fundamentally rests upon the proof of Theorem \ref{thm:bachmann gapped} from~\cite{bachmann2014gapped}, which uses only the abstract group structure of $G$ and so holds just as well for compact $G$. Then, it is immediate from \ref{def:spectral flow autos} that the collection of indices associated to the ground states in $\calS(0)$ must be the same as the collection of indices associated to $\calS(1)$. In Section \ref{sec:the dimerized phase is trivial}, we wrap up by stating a theorem essentially proven in~\cite{bachmann2014gapped} which demonstrates that the yellow dimerized phase corresponds to the trivial phase. The red Haldane phase SPT index will have to wait until the final chapter of this thesis. 

\section{The \texorpdfstring{$H^{2}(G,U(1))$}{H2(G,U(1))} Index for MPS} \label{sec:the H2 index for MPS}
Let us recall how to extract this index, following Ogata~\cite{ogata2020classification} but now carefully tracking the continuity of our Lie group representation. The key ingredient is the fundamental theorem of MPS, Theorem \ref{thm:Fundamental Theorem of MPS}. Let $G$ be a compact Lie group and let $\alpha:G\to \Aut(\calA_Z)$ be the representation of $G$ defined by Equation (\ref{def:on-site symmetry}). Let $\omega_T$ be an $\alpha$-invariant MPS given by a primitive $T:\C^D\to \C^n\otimes \C^D$. Fix $g\in G$. Because $\omega_T$ is $\alpha_g$ invariant, we have that the map $\wt{T}(g)$ defined by
\begin{equation}
    \wt{t}_i(g) := \sum_{j=1}^n U(g)_{ji} t_j , \qquad i=1,\dots , n 
\end{equation} gives the same state
\begin{equation}
    \omega_{\wt{T}(g)} = \omega_T . 
\end{equation} Then, the Fundamental Theorem of MPS \ref{thm:Fundamental Theorem of MPS} guarantees the existence of a phase $\lambda(g)\in U(1)$ and a unitary matrix $\Pi(g)\in M_D(\C)$ such that
\begin{equation}
    \lambda(g) \Pi(g) t_i \Pi(g)^* = \wt{t}_i(g) = \sum_{j=1}^{n} U(g)_{ji} t_j . 
\end{equation} Now, let us consider $\wt{t}_i(gh)$. From above, we have
\begin{equation}
    \wt{t}_i(gh) = \lambda(gh) \Pi(gh) t_i \Pi(gh)^* . 
\end{equation} On the other hand, from the definition of $\wt{t}_j(gh)$ we can compute
\begin{equation}\begin{split}
    \wt{t}_i(gh) = \sum_{j=1}^n U(gh)_{ji} t_j &= \sum_{j,k = 1}^n U(g)_{jk} U(h)_{ki} t_j \\
    &= \sum_{k=1}^n U(h)_{ki} \wt{t}_k(g) \\
    &= \lambda(g)\Pi(g)\sum_{k=1}^n U(h)_{ki} t_k \Pi(g)^*  \\
    &= \lambda(g)\lambda(h)\Pi(g)\Pi(h) t_i \Pi(h)^* \Pi(g)^* . 
\end{split}\end{equation} Altogether we have
\begin{equation}
    \lambda(gh) \Pi(gh) t_i \Pi(gh)^* = \lambda(g)\lambda(h) \Pi(g)\Pi(h) t_i \Pi(h)^* \Pi(g)^*,\qquad \text{ for all } i=1,\dots,n . 
\end{equation} By primitivity of the MPS (\ref{thm:Primitivity}), we have that for all $X\in M_D(\C)$
\begin{equation} \label{eq:the bond u form projective representations}
    \frac{\lambda(gh)}{\lambda(g)\lambda(h)} \Pi(gh) X \Pi(gh)^* = \Pi(g)\Pi(h) X \Pi(h)^* \Pi(g)^*,\qquad \text{ for all } i=1,\dots,n . 
\end{equation}
Finally, we recall Remark \ref{rem:continuity of fundamental theorem of MPS}, which guarantees that in a neighborhood of the identity $\calN\subseteq G$, the maps $\lambda:\calN\to U(1)$ and $\Pi:\calN \to \calU(\C^D)$ are continuous. Equation (\ref{eq:the bond u form projective representations}) exactly means that we have a projective representation $\Pi:\calN\to P\calU(\C^D)$. But since the map $\Pi:G\to P\calU(\C^D)$ is a homomorphism between Lie groups, we may upgrade continuity in $\calN$ to continuity on $G$, thus obtaining a projective representation $\Pi:G\to P\calU(\C^D)$. Remark \ref{rem:equivalence of projective rep definitions} then guarantees a unique corresponding cohomology class $\sigma\in H^2(G,U(1))$.

It is important to note that while this approach grants a well-defined projective representation $\Pi:G\to P\calU(\C^D)$, we still need to do some computation to identify this representation as a familiar representation (or find some other means of computing its associated index). For the following two examples, recall Theorem \ref{thm:second group cohomology of semi-simple} that $H^2(SO(3),U(1))\cong \Z_2=\{1,\sigma\}$.

\begin{example}(A boring example)
Note that by Clebsch-Gordan \ref{eq:clebsch-gordan} we have that $V_1\otimes V_1\cong V_2\oplus V_1 \oplus V_0$ as representations of $\su(2)\cong \so(3)$. So there is a unique up-to-phase intertwining isometry which embeds $V_1\hookrightarrow V_1\otimes V_1$ given by $T:\C^3\to \C^3\otimes \C^3$ which satisfies
\begin{equation}
    (g\otimes g) T = T g , \qquad g\in SO(3) . 
\end{equation} Let $\omega_T$ be the MPS generated by this tensor. Let $U:G\to \calU(\C^3)$ be the defining representation $U(g) = g$, and $\Pi^{(1)} = U$. Then we may rewrite the above as $(U(g)\otimes \Pi^{(1)}(g))T=T\Pi^{(1)}(g)$. 
Then the above equation gives that for $i=1,2,3$ 
\begin{equation} \label{eq:boring example intertwiner}
    \sum_{i,j} U_{ji}(g) t_j = \Pi^{(1)}(g)^{-1} t_i \Pi^{(1)}(g) . 
\end{equation} Evidently $\Pi^{(1)}:SO(3)\to \calU(\C^3)$ is a true representation of $SO(3)$, and so the cohomology class associated to $\omega_T$ is $1\in H^2(SO(3),U(1))$.
\end{example}

\begin{figure}
    \centering
    \includegraphics[width=0.75\textwidth]{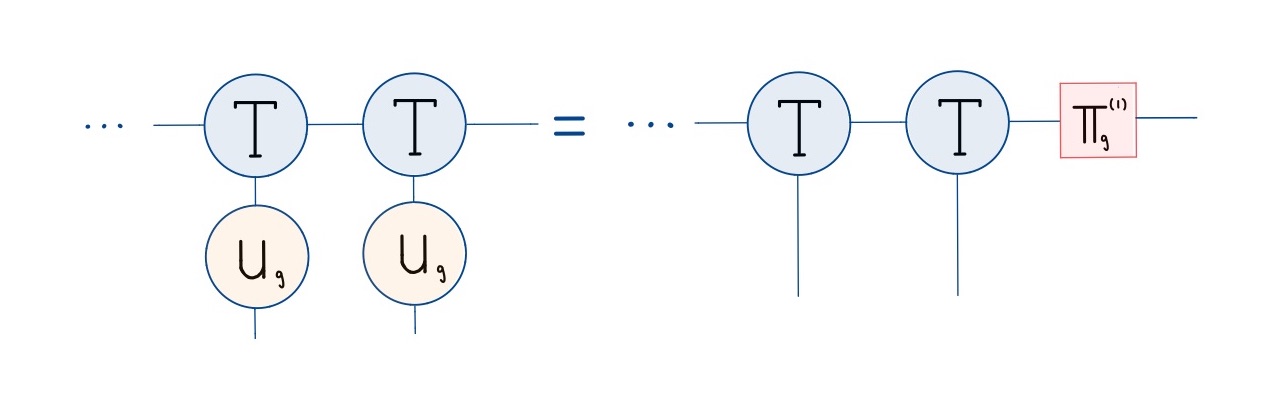}
    \caption{A boring MPS with a real representation $\Pi^{(1)}:SO(3)\to \calU(\C^3)$ and trivial cohomology index $1\in H^2(SO(3),U(1))$. A fruitful way to think of the index is to imagine the on-site symmetry acting on the MPS ground state $\omega_T$ restricted to the half-infinite chain $\Z_L = (-\infty,0]\cap \Z$. Using the intertwining relation (\ref{eq:boring example intertwiner}), we see that the symmetry action on physical legs of the pure state $\omega_T$ (the ``bulk'') is equivalent to a symmetry action on the bond leg (the ``boundary''). This is a description of excess spin acting on MPSs, which we will discuss in Section~\ref{sec:excess spin operator}.}
    \label{fig:excess_spin_real}
\end{figure} 

\begin{example}(AKLT Chain)

Let $U:SO(3)\to \calU(\C^3)$ act via the defining representation on $\C^3$, so $U(g)=g$. This is the same as the spin-1 representation $\Pi^{(1)}:SU(2)\to \calU(\C^3)$ and indeed descends to a representation of $SO(3)$ by the argument in Example \ref{ex:trivial projective rep, spin-1}. Let $\Pi^{(1/2)}:SO(3)\to P\calU(\C^2)$ denote the (nontrivial) projective spin-1/2 representation of $SO(3)$ from Example \ref{ex:proj rep has SU(2) central extension}. This is precisely the same as the intertwining relation (\ref{eq:intertwiner MPS}) which we used to define the AKLT chain, but after passing from $SU(2)$ to $SO(3)$:
\begin{equation}
    (U(g) \otimes \Pi^{(1/2)}(g)) T = T \,  \Pi^{(1/2)}(g), \qquad g\in SO(3) . 
\end{equation} 
Theorem \ref{thm:second group cohomology of semi-simple} tells us that $H^2(SO(3),U(1))\cong \Z_2\cong \{1,\sigma\}$, so since $\Pi^{(1/2)}$ is nontrivial projective, the associated cohomology class must be $\sigma$.
\end{example}

\begin{figure}
    \centering
    \includegraphics[width=0.7\textwidth]{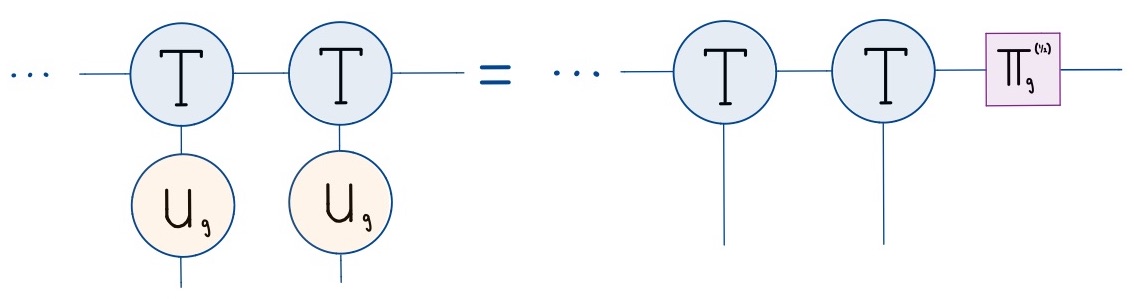}
    \caption{The AKLT chain's projective representation $\Pi^{(1/2)}:SO(3)\to \calU(\C^2)$ and nontrivial $\sigma\in H^2(SO(3),\Z_2)$}
    \label{fig:excess_spin_proj}
\end{figure}

\section{The \texorpdfstring{$H^{2}(G,U(1))$}{H2(G,U(1))} Index for Split States}\label{sec:index for split states}

\subsection{Background: Strongly Continuous Projective Representations}
Up until now we have happily worked only with projective representations on finite dimensional Hilbert spaces. We need a bit more for the following section. We will not re-do Chapter \ref{ch:proj-reps}, but instead give relevant definitions and point out that much of this story works equally well for separable Hilbert spaces when our representations are strongly continuous.\footnote{Continuity in operator norm is too strong, as it does not admit a variety of ``natural'' representations. Chief among them is the regular representation $(\Pi, L^2(G))$ where $(\Pi(g) f)(x) = f(g^{-1} x) $ for all $x\in G$, which houses every irrep of a compact Lie group $G$ in one package.} We direct the interested reader to Schottenloher~\cite{schottenloher2008mathematical,schottenloher2018unitary} and to~\cite{spiegel2022continuous}, noting that Spiegel's doctoral thesis~\cite{spiegel2023c} provided some particularly lucid information to the author.

Given a separable Hilbert space $\calH$, recall that the strong operator topology on $\calB(\calH)$ is the coarsest topology such that for any fixed $\psi\in \calH$, the evaluation map $T\mapsto T\psi$ is continuous in $T$. One may think of this as the ``topology of pointwise convergence'': the sequence $T_n\to T$ strongly if for all $\psi\in \calH$, $T_n\psi \to T\psi$.

\begin{proposition}~\cite{schottenloher2018unitary, simms1970topological} The unitary group $\calU(\calH)$ is a topological group with respect to the strong topology.
\end{proposition}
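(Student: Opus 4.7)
The plan is to verify the two axioms of a topological group: joint continuity of multiplication $(U,V)\mapsto UV$ and continuity of inversion $U\mapsto U^{-1}=U^{*}$, each with respect to the strong operator topology (SOT) on $\calU(\calH)$. The key point, and the reason this works for $\calU(\calH)$ while failing for $\calB(\calH)$ in general, is that every unitary has operator norm $1$. This uniform boundedness is what converts the (only separately SOT-continuous) multiplication on $\calB(\calH)$ into a jointly SOT-continuous operation on $\calU(\calH)$.

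For multiplication, my approach would be the standard add-and-subtract estimate. Given $\psi\in\calH$ and nets $U_\lambda\to U$, $V_\lambda\to V$ strongly in $\calU(\calH)$, I would write
\begin{equation*}
    \norm{U_\lambda V_\lambda \psi - UV\psi} \leq \norm{U_\lambda(V_\lambda \psi - V\psi)} + \norm{(U_\lambda - U) V\psi} \leq \norm{V_\lambda \psi - V\psi} + \norm{(U_\lambda - U) V\psi},
\end{equation*}
where the first inequality is the triangle inequality and the second uses $\norm{U_\lambda}=1$. The first term tends to zero by SOT-convergence $V_\lambda\to V$ applied at $\psi$, and the second tends to zero by SOT-convergence $U_\lambda\to U$ applied at the fixed vector $V\psi$. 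Since $\psi$ was arbitrary, $U_\lambda V_\lambda \to UV$ strongly, giving joint SOT-continuity of multiplication on $\calU(\calH)\times\calU(\calH)$.

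For inversion, the standard trick is to exploit the isometry identity $\norm{U^*\psi}=\norm{\psi}$ rather than attempt a direct norm estimate (adjoint is notoriously not SOT-continuous on all of $\calB(\calH)$). Concretely, for $U_\lambda\to U$ strongly in $\calU(\calH)$ and $\psi\in\calH$, I would expand
\begin{equation*}
    \norm{U_\lambda^*\psi - U^*\psi}^2 = \norm{U_\lambda^*\psi}^2 - 2\,\Real \inprod{U_\lambda^*\psi, U^*\psi} + \norm{U^*\psi}^2 = 2\norm{\psi}^2 - 2\,\Real \inprod{\psi, U_\lambda U^*\psi},
\end{equation*}
using $U_\lambda U_\lambda^* = \idty = UU^*$. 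Then by the multiplication continuity already established (applied to the constant net $U^*$ and the convergent net $U_\lambda$), we have $U_\lambda U^*\psi \to UU^*\psi = \psi$ strongly, hence the inner product converges to $\norm{\psi}^2$ and the right-hand side tends to $0$. Therefore $U_\lambda^*\to U^*$ strongly.

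The only real obstacle is the multiplication step, where one must be careful to use the uniform bound $\norm{U_\lambda}\le 1$; without it the estimate fails, which is why the result does not generalize to $\calB(\calH)$. Inversion then follows more or less automatically once multiplication is in hand and one exploits the isometric property of unitaries.
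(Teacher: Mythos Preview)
The paper does not give its own proof of this proposition; it merely cites it from Schottenloher and Simms. Your argument is correct and is precisely the standard one found in those references: the add-and-subtract estimate with the uniform bound $\norm{U_\lambda}=1$ for joint continuity of multiplication, and the isometry expansion $\norm{U_\lambda^*\psi - U^*\psi}^2 = 2\norm{\psi}^2 - 2\,\Real\inprod{\psi, U_\lambda U^*\psi}$ for continuity of inversion.
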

\begin{corollary}
    Let us identify $U(1) = \{e^{i\theta}\idty: \theta\in \R\}\subseteq \calU(\calH)$, which is a normal subgroup of $\calU(\calH)$. Then the projective unitary group is the quotient
    \begin{equation}
        P\calU(\calH) = \calU(\calH)/U(1) ,
    \end{equation} and has the structure of a topological group. In particular, $\calU(\calH)$ is a principal $U(1)$-bundle over $P\calU(\calH)$.
\end{corollary}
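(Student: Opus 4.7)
The plan is to dispatch the corollary in three steps: verify normality and form the algebraic quotient, promote the quotient topology to a topological group structure, and finally construct local sections realising the principal bundle.

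First, I would verify that $U(1) \subseteq \calU(\calH)$ is in fact central (and hence normal). For any $U \in \calU(\calH)$ and $e^{i\theta}\idty \in U(1)$, one has $U(e^{i\theta}\idty)U^{-1} = e^{i\theta}\idty$, so $U(1) \subseteq Z(\calU(\calH))$. The set-theoretic quotient $P\calU(\calH) := \calU(\calH)/U(1)$ is therefore a group, and I would equip it with the quotient topology so that the projection $p: \calU(\calH) \to P\calU(\calH)$ is by definition continuous.

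Next, to show $P\calU(\calH)$ is a topological group, the key observation is that $p$ is an \emph{open} surjection. Indeed, if $V \subseteq \calU(\calH)$ is open in the strong topology, then $p^{-1}(p(V)) = \bigcup_{\lambda \in U(1)} \lambda V$ is a union of open sets because left multiplication by a scalar $\lambda \idty$ is a homeomorphism of $\calU(\calH)$. Openness of $p$ implies $p \times p$ is a quotient map, after which continuity of multiplication and inversion on $P\calU(\calH)$ descends from the corresponding continuity on $\calU(\calH)$ via the universal property of quotient maps. This is a standard diagram chase using the preceding proposition.

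For the principal bundle structure, the nontrivial task is to produce continuous local sections of $p$. My approach would be the classical ``phase-fixing'' construction: pick a unit vector $\psi_0 \in \calH$ and define
\begin{equation*}
    W := \{ U \in \calU(\calH) \, : \, \inprod{\psi_0 , U \psi_0} \neq 0 \},
\end{equation*}
which is open in the strong topology since $U \mapsto \inprod{\psi_0, U\psi_0}$ is strongly continuous on bounded sets. On $W$ set
\begin{equation*}
    \varphi(U) := \frac{|\inprod{\psi_0, U\psi_0}|}{\inprod{\psi_0, U\psi_0}} \in U(1), \qquad s([U]) := \varphi(U) U.
\end{equation*}
Since $\varphi(\lambda U) = \bar{\lambda}\varphi(U)$ for $\lambda \in U(1)$, the map $s$ is well defined on $p(W)$, and it is a continuous section of $p$ over the open neighborhood $p(W)$ of $[\idty]$. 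Translating by elements of $P\calU(\calH)$ yields an open cover by local sections, and the standard recipe $p(W) \times U(1) \to p^{-1}(p(W))$, $([U],\lambda) \mapsto \lambda s([U])$, provides the local trivialization exhibiting $\calU(\calH)$ as a principal $U(1)$-bundle over $P\calU(\calH)$.

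The main obstacle I would anticipate is purely the strong-topology bookkeeping in this last step: norm continuity of the relevant maps would make the section construction immediate, but strong continuity requires one to be careful that $\varphi$ really is continuous on $W$ (which is fine, since evaluation against a fixed vector is strongly continuous and division by a nonvanishing continuous scalar function is continuous) and that the trivialization map is a strong-topology homeomorphism onto its image (which follows because strong multiplication by a scalar is continuous). No delicate infinite-dimensional input like Kuiper's theorem is needed here, only the already-cited fact that $\calU(\calH)$ is a topological group in the strong operator topology.
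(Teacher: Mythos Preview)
Your proof is correct and complete. The three-step plan---centrality of $U(1)$, openness of $p$ to obtain the topological group structure on the quotient, and the phase-fixing local section for the principal bundle---is the standard argument, and you have carried out the strong-topology bookkeeping carefully.

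The comparison with the paper is brief: the paper does not prove this corollary at all. It simply records it as a direct consequence of the preceding proposition (that $\calU(\calH)$ is a topological group in the strong topology, which is itself just cited from Schottenloher and Simms) and moves on. Your argument supplies exactly the details the paper omits, in particular the explicit local section construction, which the paper never writes down.
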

As a sidenote, in ~\cite{schottenloher2018unitary} it is shown that this topology is metrizable and $U(\calH)$ is contractible.

Lie homomorphisms are continuous group homomorphisms between Lie groups. Strongly continuous unitary representations may be thought of as topological group representations into the unitary group with the strong topology.

\begin{definition} (Strongly continuous unitary representation)
Let $G$ be a topological group and $\Pi:G\to \calU(\calH)$. We call $\Pi$ a \textbf{strongly continuous unitary representation} if $\Pi(g)\Pi(h) = \Pi(gh)$ for all $g,h\in G$ and if $\Pi$ is strongly continuous, i.e. for each fixed $\psi\in \calH$, the map $g\mapsto \Pi(g) \psi$ is a norm continuous function.     
\end{definition}

We would now like to develop some facts about the projective unitary group $P\calU(\calH)$. We endowed this group with the quotient topology from the strong topology on $\calU(\calH)$. However, one may instead quotient the Hilbert space $\calH$ first and then equip the ``operators'' on this space with the strong topology.

\begin{definition}
Let $\calH$ be a separable Hilbert space. The \textbf{projective Hilbert space} is the set
\begin{equation}
    P\calH = \{\C\psi: \psi\in \calH \setminus \{0\} \},
\end{equation} where $\C\psi = \{\lambda\psi: \lambda\in\C\}$ is the one-dimensional subspace of $\calH$ generated by $\psi$. We equip $P\calH$ with the quotient topology from the canonical projection
\[
    p:\calH \setminus \{0\} \to P\calH, \quad p(\psi) = \C\psi ,
\] where $\calH \setminus \{0\}$ is given the subspace topology inherited from $\calH$. 
\end{definition}

\begin{proposition}~\cite{simms1970topological,spiegel2022continuous} \label{prop:projective hilbert space and rank-1 projections}
      We may equip $P\calH$ with the metric 
      \[d(\C\psi,\C\varphi) = \norm{ \ket{\psi}\bra{\psi} - \ket{\varphi}\bra{\varphi}},
      \] which gives $P\calH$ the same topology as the quotient topology. In particular, the map $\C\psi\mapsto \ket{\psi}\bra{\psi}$ is an isometric embedding $P\calH\to \calB(\calH)$ with image rank-1 orthogonal projections.   
\end{proposition}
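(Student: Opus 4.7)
The plan is to verify three things: (i) the formula is a well-defined metric, (ii) the metric topology $\tau_d$ coincides with the quotient topology $\tau_q$, and (iii) the map $f:\C\psi \mapsto \ket{\psi}\bra{\psi}$ (taking $\psi$ to be a unit representative) is an isometric embedding whose image consists of the rank-$1$ orthogonal projections. Points (i) and (iii) are largely formal: independence of representative follows because replacing $\psi$ by $e^{i\theta}\psi$ leaves $\ket{\psi}\bra{\psi}$ unchanged; non-degeneracy follows because a rank-$1$ orthogonal projection is determined by its range; and symmetry, triangle inequality, and the isometry property are inherited from the operator norm.

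The substantive content is the equality $\tau_d = \tau_q$. I will show this by proving that the canonical projection $p:(\calH\setminus\{0\},\|\cdot\|)\to (P\calH,d)$ is a quotient map; by uniqueness of the quotient topology this forces $\tau_d=\tau_q$. Surjectivity is obvious and continuity is the ``easy'' direction: writing $P_\psi := \ket{\psi}\bra{\psi}/\norm{\psi}^2$, a rank-one decomposition $P_\psi - P_\varphi = P_\psi - P_\psi P_\varphi + P_\psi P_\varphi - P_\varphi$ (or more directly, expanding $\ket{\psi'}\bra{\psi'}-\ket{\varphi'}\bra{\varphi'}$ for unit vectors $\psi',\varphi'$ as a sum of two rank-one operators) together with normalization of $\psi$ yields a Lipschitz estimate
\begin{equation}
\norm{P_\psi - P_\varphi} \;\leq\; C\,\frac{\norm{\psi-\varphi}}{\min(\norm{\psi},\norm{\varphi})}
\end{equation}
on bounded subsets bounded away from $0$, which suffices for continuity of $p$ into $(P\calH,d)$, hence for $\tau_d \subseteq \tau_q$.

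The other inclusion $\tau_q \subseteq \tau_d$ is the main obstacle, and it requires the reverse estimate: given unit representatives, one can choose a phase so that the Hilbert-space distance is controlled by the metric $d$. The key identity is
\begin{equation}
\norm{\ket{\psi}\bra{\psi}-\ket{\varphi}\bra{\varphi}}^{2} \;=\; 1-\abs{\inprod{\psi,\varphi}}^{2}
\end{equation}
for unit vectors, obtained by diagonalizing the self-adjoint rank-$\leq 2$ operator on the span of $\psi,\varphi$. Choosing the phase $c = \overline{\inprod{\psi,\varphi}}/\abs{\inprod{\psi,\varphi}}$ (when the inner product is nonzero) gives $\norm{\psi - c\varphi}^{2} = 2(1-\abs{\inprod{\psi,\varphi}}) \leq 2\norm{P_\psi-P_\varphi}^{2}$, so
\begin{equation}
\inf_{\abs{c}=1}\norm{\psi - c\varphi} \;\leq\; \sqrt{2}\,\norm{P_\psi-P_\varphi}.
\end{equation}
Given a $\tau_d$-open $U$ with $p^{-1}(U)$ norm-open is not what we want directly; rather, given $\psi\in p^{-1}(U)$ with $U$ $d$-open, the above estimate produces a norm neighborhood of $\psi$ inside $p^{-1}(U)$, and conversely if $p^{-1}(U)$ is norm-open the Lipschitz estimate of the previous paragraph produces a $d$-neighborhood of $\C\psi$ inside $U$. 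Together these show $p$ is a quotient map with respect to $d$, completing the identification $\tau_d=\tau_q$. The final clause about the image being the rank-$1$ orthogonal projections is immediate, since every such projection is of the form $\ket{\psi}\bra{\psi}$ for some unit $\psi$.
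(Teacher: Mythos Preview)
The paper does not prove this proposition; it is stated with citations to \cite{simms1970topological,spiegel2022continuous} and no proof is given. Your argument supplies what the paper omits, and the ingredients are the standard ones: the identity $\norm{P_\psi-P_\varphi}^2 = 1-\abs{\inprod{\psi,\varphi}}^2$ for unit vectors, a Lipschitz bound $\norm{P_\psi-P_\varphi}\leq C\norm{\psi-\varphi}$ for continuity of $p$, and the phase-optimized reverse bound $\inf_{\abs{c}=1}\norm{\psi-c\varphi}\leq \sqrt{2}\,\norm{P_\psi-P_\varphi}$ for openness. All three are correct.

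There is one expositional slip in your final paragraph: you have swapped which estimate handles which direction. To show $p^{-1}(U)$ is norm-open when $U$ is $d$-open you need the Lipschitz (forward) bound, not ``the above estimate'' (the $\sqrt{2}$ reverse bound); and to show $U$ is $d$-open when $p^{-1}(U)$ is norm-open you need the reverse bound, not the Lipschitz bound. Concretely: if $p^{-1}(U)$ is norm-open and $\psi$ is a unit vector with $\C\psi\in U$, take $\epsilon>0$ with the norm ball $B(\psi,\epsilon)\subseteq p^{-1}(U)$; then any $\C\varphi$ with $d(\C\psi,\C\varphi)<\epsilon/\sqrt{2}$ has a unit representative $c\varphi\in B(\psi,\epsilon)\subseteq p^{-1}(U)$, hence $\C\varphi\in U$. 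This is a labeling issue, not a mathematical gap: both estimates are present and correct, and once matched to the right direction the quotient-map argument goes through.
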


Now, since any unitary $U\in \calU(\calH)$ is linear, it descends to a well-defined map $[U]:P\calH\to P\calH$. We may call this collection of maps $\calU(P\calH)$, and we may equip this collection with the strong operator topology, i.e. the coarsest topology such that for any fixed $\C\psi\in P\calH$, the evaluation map $[U]\mapsto [U]\C\psi = \C U\psi$ is continuous in $[U]$. The following theorem assures us that this procedure yields the same topological groups. 

\begin{theorem}~\cite{simms1970topological, spiegel2022continuous} \label{thm:equivalence of topological unitary groups}
    The topological group $\calU(P \calH)$ is the same as the quotient group $P\calU(\calH)$. 
\end{theorem}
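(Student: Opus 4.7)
The plan is to exhibit the natural map $\Phi:P\calU(\calH)\to\calU(P\calH)$, $[U]\mapsto([U]\colon\C\psi\mapsto\C U\psi)$, as a topological-group isomorphism. Well-definedness and the homomorphism property are immediate from centrality of the $U(1)$-subgroup, and surjectivity holds by the very definition of $\calU(P\calH)$ as given in the excerpt. For injectivity, suppose $[U]$ and $[V]$ induce the same map on $P\calH$; then for each nonzero $\psi$ there is a scalar $\lambda(\psi)\in U(1)$ with $U\psi=\lambda(\psi)V\psi$. Applied to two linearly independent vectors $\psi,\phi$ and to their sum $\psi+\phi$, linearity of $U$ and $V$ forces $\lambda(\psi)=\lambda(\phi)=\lambda(\psi+\phi)$, so $\lambda$ is globally constant and $[U]=[V]$.

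For continuity of $\Phi$, Proposition~\ref{prop:projective hilbert space and rank-1 projections} replaces the quotient topology on $P\calH$ by the operator-norm topology on rank-one projections via $\C\psi\mapsto\ket{\psi}\bra{\psi}$; under this embedding the action of $[U]$ becomes the conjugation $\ket{\psi}\bra{\psi}\mapsto U\ket{\psi}\bra{\psi}U^{*}$. If $U_\alpha\to U$ strongly in $\calU(\calH)$, then for each fixed unit vector $\psi$, $\norm{U_\alpha\psi-U\psi}\to 0$, and hence
\begin{equation*}
\norm{\ket{U_\alpha\psi}\bra{U_\alpha\psi}-\ket{U\psi}\bra{U\psi}}\longrightarrow 0,
\end{equation*}
which is exactly strong convergence of $\Phi([U_\alpha])$ to $\Phi([U])$ in $\calU(P\calH)$. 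The universal property of the quotient topology then upgrades this to continuity of $\Phi$.

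The main obstacle is continuity of $\Phi^{-1}$: from pointwise-up-to-a-phase convergence on $P\calH$ one must assemble a \emph{single} phase representative that converges strongly on $\calH$. Both topologies are metrizable, so we argue with sequences. Assume $[V_n]\to[U]$ strongly on $P\calH$ and fix a unit vector $\psi_0$ with $\inprod{\psi_0,U\psi_0}\neq 0$. Operator-norm convergence of the rank-one projections attached to $V_n\psi_0$ and $U\psi_0$ lets us choose phases $e^{i\theta_n}$ so that the representatives $W_n:=e^{i\theta_n}V_n$ satisfy $W_n\psi_0\to U\psi_0$ in norm. The claim is then that $W_n\to U$ strongly. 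For any unit vector $\phi$ with $\inprod{\psi_0,\phi}\neq 0$, the hypothesis produces auxiliary phases $e^{i\mu_n(\phi)}$ with $e^{i\mu_n(\phi)}W_n\phi\to U\phi$. Unitarity of $W_n$ preserves inner products with $\psi_0$, so combining $\inprod{W_n\psi_0,W_n\phi}=\inprod{\psi_0,\phi}$ with $W_n\psi_0\to U\psi_0$ and $\inprod{U\psi_0,U\phi}=\inprod{\psi_0,\phi}$ shows that every subsequential limit $e^{i\mu}$ of $e^{i\mu_n(\phi)}$ must satisfy $e^{-i\mu}\inprod{\psi_0,\phi}=\inprod{\psi_0,\phi}$, forcing $e^{i\mu}=1$. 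Compactness of $U(1)$ then promotes this to $e^{i\mu_n(\phi)}\to 1$, so $W_n\phi\to U\phi$ in norm. Vectors $\phi$ orthogonal to $\psi_0$ are handled by replacing $\phi$ with $\psi_0+\phi$ and using $W_n\phi=W_n(\psi_0+\phi)-W_n\psi_0$, and non-unit vectors by rescaling. This yields strong convergence $W_n\to U$, i.e.\ $[V_n]\to[U]$ in the quotient topology, completing the proof.
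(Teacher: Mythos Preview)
The paper does not give its own proof of this theorem; it simply states the result and cites \cite{simms1970topological,spiegel2022continuous}. So there is nothing to compare against at the level of strategy---your reconstruction is the standard route (bijectivity of the natural map is elementary; continuity in one direction uses Proposition~\ref{prop:projective hilbert space and rank-1 projections}; the nontrivial direction is lifting pointwise-up-to-phase convergence to a single strongly convergent representative) and is essentially what one finds in the cited references.

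Your argument is correct. Two small remarks: the hypothesis $\inprod{\psi_0,U\psi_0}\neq 0$ on the anchor vector is never actually used---what you need (and use) is $\inprod{\psi_0,\phi}\neq 0$ for the test vectors $\phi$, and any unit $\psi_0$ serves as an anchor once you have projection convergence. Second, your appeal to metrizability is legitimate but deserves a sentence of justification: the strong topology on $\calU(\calH)$ is metrizable for separable $\calH$, and the quotient by the compact subgroup $U(1)$ inherits a metric $\tilde d([V],[U])=\min_{\lambda\in U(1)}d(\lambda V,U)$ inducing the quotient topology, which is precisely what makes ``choose representatives $W_n$ converging strongly'' equivalent to convergence of $[V_n]$ in $P\calU(\calH)$.
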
 

\begin{theorem} (Every $*$-automorphism of $\calB(\calH)$ is inner)\label{thm:every auto is inner}
    Let $\calH$ a separable Hilbert space and let $\gamma\in \Aut(\calB(\calH))$ be a $*$-automorphism. Then there is a unitary $U\in \calU(\calH)$ which implements the automorphism $\gamma = \Ad(U)$, i.e.
    \begin{equation}
        \gamma(A) = \Ad(U)(A) = U A U^* ,\qquad \text{for all } A\in \calB(\calH). 
    \end{equation}
\end{theorem}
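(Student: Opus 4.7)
The plan is to exploit that $\calB(\calH)$ is concrete enough that we can read off the implementing unitary $U$ from how $\gamma$ moves rank-one projections around. The key structural fact is that $A \in \calB(\calH)$ is a rank-one orthogonal projection iff it is nonzero, self-adjoint, satisfies $A^2 = A$, and is minimal among nonzero projections (no proper nonzero subprojection). All of these properties are preserved by the $*$-automorphism $\gamma$, so $\gamma$ sends rank-one projections to rank-one projections. Picking a unit vector $\psi \in \calH$, this means there is a unit vector $\phi \in \calH$ (unique up to a phase, and I will fix one) with $\gamma(\ket{\psi}\bra{\psi}) = \ket{\phi}\bra{\phi}$.

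Next, I would construct $U$ directly. Every $\eta \in \calH$ can be written as $\eta = A\psi$ for some $A \in \calB(\calH)$ (take $A = \ket{\eta}\bra{\psi}$), so I attempt to define
\begin{equation}
    U(A\psi) := \gamma(A)\phi, \qquad A \in \calB(\calH).
\end{equation}
Well-definedness reduces to showing that $B\psi = 0 \Rightarrow \gamma(B)\phi = 0$: indeed, $B\psi = 0$ implies $B\ket{\psi}\bra{\psi} = 0$, hence $\gamma(B)\ket{\phi}\bra{\phi} = 0$, so $\gamma(B)\phi = 0$. For isometry, I would compute using the identity $P_\psi A^* B P_\psi = \inprod{\psi, A^*B\psi} P_\psi$:
\begin{equation}
    \inprod{\psi, A^*B\psi} P_\phi = \gamma(P_\psi A^* B P_\psi) = P_\phi \gamma(A^*B) P_\phi = \inprod{\phi, \gamma(A)^* \gamma(B)\phi} P_\phi,
\end{equation}
so $\inprod{A\psi, B\psi} = \inprod{\gamma(A)\phi, \gamma(B)\phi} = \inprod{U(A\psi), U(B\psi)}$, giving isometry. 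Surjectivity of $U$ follows from surjectivity of $\gamma$: any $\eta\in\calH$ can be written as $C\phi$ for some $C\in\calB(\calH)$, and $C = \gamma(A)$ for some $A$, so $\eta = U(A\psi)$.

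Finally, to check $\gamma = \Ad(U)$, I would compute on an arbitrary vector $A\psi$:
\begin{equation}
    \gamma(B)\, U(A\psi) = \gamma(B)\gamma(A)\phi = \gamma(BA)\phi = U(BA\psi) = U B (A\psi),
\end{equation}
so $\gamma(B) U = U B$, i.e.\ $\gamma(B) = U B U^*$ as desired. The step I expect to require the most care in a careful writeup is the well-definedness/isometry calculation, since it secretly uses that $\psi$ is a separating \emph{and} cyclic vector for $\calB(\calH)$ acting on $\calH$; everything else is a short computation. Separability of $\calH$ is not actually needed in the argument, but it matches the ambient hypothesis in the thesis. Uniqueness of $U$ up to a $U(1)$ phase is immediate from Schur-type reasoning applied to $\calB(\calH)$ having trivial commutant, which is why this construction meshes cleanly with the projective representation theme of the preceding section.
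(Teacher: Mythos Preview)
Your proof is correct and is essentially the standard argument for this classical result. Note, however, that the paper itself does not supply a proof of this theorem: it is stated without proof as a well-known fact, and only the subsequent corollary (the bijection $\Aut(\calB(\calH)) \leftrightarrow P\calU(\calH)$) receives an explicit argument. So there is nothing to compare your approach against.

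One small correction to your closing commentary: $\psi$ is \emph{not} a separating vector for $\calB(\calH)$ when $\dim\calH > 1$ (for instance, $\ket{\eta}\bra{\eta'}$ with $\eta' \perp \psi$ annihilates $\psi$ without being zero). Your well-definedness argument does not actually rely on separating; it uses only that $A\psi = 0 \Rightarrow A P_\psi = 0 \Rightarrow \gamma(A)P_\phi = 0 \Rightarrow \gamma(A)\phi = 0$, which is strictly weaker. Cyclicity of $\psi$ for $\calB(\calH)$ is what you genuinely use, and that does hold. The proof itself stands; only the parenthetical remark needs adjusting.
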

\begin{corollary}
    There is a one-to-one correspondence between $*$-automorphisms $\Aut(\calB(\calH))$ and elements of the projective unitary group $P\calU(\calH)$.
\end{corollary}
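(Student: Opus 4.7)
The plan is to show that the map $\Phi: \calU(\calH) \to \Aut(\calB(\calH))$ defined by $\Phi(U) = \Ad(U)$, where $\Ad(U)(A) = UAU^*$, descends to the promised bijection after quotienting by $U(1)$. First I would check that $\Phi$ is a well-defined group homomorphism: $\Ad(U)$ preserves the $*$-algebra structure of $\calB(\calH)$ because conjugation by a unitary preserves products and adjoints, and $\Ad(UV)(A) = UVAV^*U^* = \Ad(U)\Ad(V)(A)$, so $\Phi(UV) = \Phi(U)\Phi(V)$.

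Next I would establish the two key properties. Surjectivity of $\Phi$ is exactly the content of Theorem \ref{thm:every auto is inner}, which was stated immediately above: every $\gamma\in\Aut(\calB(\calH))$ arises as $\Ad(U)$ for some $U\in\calU(\calH)$. To compute the kernel, suppose $\Phi(U) = \ide$, so $UAU^* = A$ for every $A\in\calB(\calH)$, equivalently $UA = AU$ for all $A\in\calB(\calH)$. Thus $U$ lies in the center $Z(\calB(\calH))$. Since $\calB(\calH)$ is a Type I factor (one can argue directly via Schur-type reasoning: any operator commuting with every rank-one projection $\ket{\psi}\bra{\psi}$ must be a scalar), we have $Z(\calB(\calH)) = \C \idty$. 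Combined with the condition $U^*U = \idty$, this forces $U = \lambda \idty$ with $\abs{\lambda} = 1$, i.e. $U\in U(1)\idty$. Conversely, any such scalar unitary visibly lies in $\ker \Phi$, so $\ker \Phi = U(1)\idty$.

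The first isomorphism theorem for groups then yields an injective homomorphism $\overline{\Phi}: \calU(\calH)/U(1)\idty \to \Aut(\calB(\calH))$ which is surjective by the surjectivity of $\Phi$. By the definition $P\calU(\calH) = \calU(\calH)/U(1)$ from the short exact sequence (\ref{eq:projective unitary group SES}), this is precisely the desired bijection $P\calU(\calH) \to \Aut(\calB(\calH))$ sending the class $[U]$ to the inner automorphism $\Ad(U)$. There is no serious obstacle here: the statement is a packaged consequence of Theorem \ref{thm:every auto is inner} together with the fact that $\calB(\calH)$ has trivial center, and the only subtlety worth flagging is that the identification is a priori only as groups (or sets)—if one wanted to upgrade it to a topological statement matching the strong topologies carried by each side, one would invoke Theorem \ref{thm:equivalence of topological unitary groups} and continuity of conjugation, but the bare one-to-one correspondence claimed in the corollary requires nothing beyond the algebraic argument above.
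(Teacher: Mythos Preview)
Your proof is correct and follows essentially the same approach as the paper: surjectivity from Theorem~\ref{thm:every auto is inner} and injectivity from the triviality of the commutant $\calB(\calH)' = \C\idty$ via a Schur-type argument. The only cosmetic difference is that the paper phrases injectivity as ``two unitaries implementing the same automorphism differ by a scalar,'' whereas you package this as a kernel computation plus the first isomorphism theorem.
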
\begin{proof}(of Corollary) 
    Let $\gamma$ be implemented by two unitaries $U,V\in \calU(\calH)$. Then $\Ad(UV^{-1}) = \ide_{\calB(\calH)}$, which means
    $UV^{-1}\in \calB(\calH)'$, the commutant of $\calB(\calH)$. But by an analogue of Schur's lemma the commutant $\calB(\calH)' = \C\idty$, and so $UV^{-1}\in \C \idty$. This exactly means $[U]=[V]\in P\calU(\calH)$.  
\end{proof}

\begin{proposition}
Consider $\Aut(\calB(\calH))$ and $P\calU(\calH)$ with their strong operator topologies. Then the (bijective)
map $\Aut(\calB(\calH))\to P\calU(\calH)$ is continuous.
\end{proposition}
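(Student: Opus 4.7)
The plan is to reduce the statement to a concrete topological fact about strong convergence of rank-1 projections. First I would unpack what SOT-continuity means on both sides. By Theorem \ref{thm:equivalence of topological unitary groups} and Proposition \ref{prop:projective hilbert space and rank-1 projections}, I identify $P\calU(\calH)$ with $\calU(P\calH)$ and $P\calH$ with the set of rank-1 orthogonal projections in $\calB(\calH)$ equipped with the operator-norm metric. Under these identifications, the bijection $\Phi:\Aut(\calB(\calH))\to P\calU(\calH)$ acts on a rank-1 projection by $\Phi(\gamma)\cdot \ket{\psi}\bra{\psi} = U\ket{\psi}\bra{\psi}U^* = \gamma(\ket{\psi}\bra{\psi})$, where $\gamma = \Ad(U)$. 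So SOT-continuity of $\Phi$ is equivalent to the following: whenever $\gamma_n\to \gamma$ in the point-strong topology on $\Aut(\calB(\calH))$, we have $\gamma_n(\ket{\psi}\bra{\psi})\to \gamma(\ket{\psi}\bra{\psi})$ in operator norm for each unit vector $\psi\in\calH$.

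Next I would apply the hypothesis to the fixed observable $A=\ket{\psi}\bra{\psi}$ to get strong operator convergence $P_n:=\gamma_n(\ket{\psi}\bra{\psi})\to P:=\gamma(\ket{\psi}\bra{\psi})$ in $\calB(\calH)$. Since $*$-automorphisms preserve the identities $A^2=A=A^*$ and the property of being a minimal projection, each $P_n$ is a rank-1 projection and so is $P$. The continuity statement therefore reduces to a single lemma: a net (equivalently, a sequence, since $\calH$ is separable and we may restrict to the norm-bounded set of rank-1 projections) of rank-1 projections converging strongly to a rank-1 projection must converge in operator norm.

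This lemma is my key technical step, and I would prove it by a direct computation. Writing $P_n=\ket{\psi_n}\bra{\psi_n}$ and $P=\ket{\psi}\bra{\psi}$ with unit-vector representatives, strong convergence applied to the vector $\psi$ yields $\inprod{\psi_n,\psi}\psi_n\to \psi$; taking norms gives $|\inprod{\psi_n,\psi}|\to 1$. On the two-dimensional subspace $\text{span}\{\psi_n,\psi\}$ the operator $P_n-P$ is trace-zero and self-adjoint, so diagonalizing produces the closed-form identity
\[
\norm{P_n-P} \;=\; \sqrt{1-|\inprod{\psi_n,\psi}|^2},
\]
which tends to $0$. Combined with the previous paragraph this establishes continuity.

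The main obstacle, which this entire reduction is engineered to overcome, is that strong operator convergence is strictly weaker than norm convergence on $\calB(\calH)$, whereas the metric on $P\calH$ furnished by Proposition \ref{prop:projective hilbert space and rank-1 projections} is the restriction of the operator-norm metric on $\calB(\calH)$ to rank-1 projections. Everything hinges on the fact that this norm/strong gap collapses once we restrict attention to the one-dimensional-range setting, which is exactly where the image of $\Phi$ lives.
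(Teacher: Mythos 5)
Your proof is correct, and it follows the same reduction scheme as the paper: use Theorem~\ref{thm:equivalence of topological unitary groups} to replace $P\calU(\calH)$ by $\calU(P\calH)$, use Proposition~\ref{prop:projective hilbert space and rank-1 projections} to realize $P\calH$ inside $\calB(\calH)$ as rank-one projections with the operator-norm metric, and test continuity by evaluating automorphisms on $\ket{\psi}\bra{\psi}$.

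Where you diverge, usefully, is in how you handle the core technical point. The paper's proof asserts continuity of $\gamma\mapsto\gamma(\ket{\psi}\bra{\psi})$ with a brief remark about conjugation preserving norm and norm-continuity implying strong continuity; this sentence does not, on its face, bridge the gap that matters here, namely that the metric on $P\calH$ is the \emph{norm} metric on rank-one projections while the hypothesis on $\Aut(\calB(\calH))$ delivers only strong-operator convergence of $\gamma_n(\ket{\psi}\bra{\psi})$. You identify this gap explicitly and close it with the sharp identity
\[
\norm{P_n - P} = \sqrt{1 - |\inprod{\psi_n,\psi}|^2},
\]
valid for rank-one projections, which reduces the question to $|\inprod{\psi_n,\psi}|\to 1$, and that follows from strong convergence applied to the single vector $\psi$. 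In effect, you isolate the real content of the proposition as the collapse of the strong/norm distinction on the manifold of rank-one projections, and prove that lemma by a two-by-two eigenvalue computation. This is a more self-contained and more auditable version of the argument; it also generalizes immediately to nets, which you correctly flag as what the strong topology in principle requires (with the observation that on the norm-bounded set of rank-one projections over a separable $\calH$ one may work with sequences). One small thing to keep in mind: when writing $\psi_n$ and $\psi$ as unit-vector representatives, the formula and the argument should be understood as independent of the choice of phase for each representative, which they are, since only $|\inprod{\psi_n,\psi}|$ and $\ket{\psi_n}\bra{\psi_n}$ appear.
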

\begin{proof}
    We use Theorem~\ref{thm:equivalence of topological unitary groups} and instead show continuity $\Aut(\calB(\calH))\to \calU(P\calH)$. Fix an arbitrary $\C\psi\in P\calH$. We want to show that the map $\Ad(U)\mapsto \C U\psi$ is continuous. First, observe that the map $\Ad(U)\mapsto \Ad(U)(\ket{\psi}\bra{\psi})$ is continuous as a function $\Aut(\calB(\calH))\to \calB(\calH)$, since conjugation by a unitary preserves norm and every norm continuous function is strongly continuous. Now we write 
    \begin{equation}
        \Ad(U) \ket{\psi}\bra{\psi} = U\ket{\psi}\bra{\psi} U^* = \ket{U\psi}\bra{U{\psi}}. 
    \end{equation} Now, using Proposition \ref{prop:projective hilbert space and rank-1 projections}, the map $\ket{U\psi}\bra{U\psi}\mapsto \C U\psi$ is an isometry and thus continuous. 
\end{proof}
\begin{corollary} \label{cor:strong cont autos give strong cont proj unitaries}
    Suppose that $\gamma_{(\cdot)}:G\to \Aut(\calB(\calH))$ is a strongly continuous group representation. Then there is a strongly continuous group representation $G\to P\calU (\calH)$ which implements these automorphisms, i.e. a strongly continuous family of projective unitaries $[U_{(\cdot)}]:G\to P\calU(\calH)$ such that
    \begin{equation}
        \gamma_g(A) = \Ad(U_g)(A), \qquad A\in \calB(\calH).
    \end{equation}
\end{corollary}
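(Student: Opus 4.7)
The plan is to assemble this corollary directly from the three results that immediately precede it in the text: Theorem~\ref{thm:every auto is inner} (every $*$-automorphism of $\calB(\calH)$ is inner), its Corollary (the resulting bijection $\Aut(\calB(\calH)) \leftrightarrow P\calU(\calH)$), and the Proposition just before stating that this bijection is continuous when both sides carry the strong operator topology. The claim is essentially a formal consequence of composing continuous maps and checking a homomorphism property, so there will be no real analytic obstacle.

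First, I would define the candidate map $\Phi: G \to P\calU(\calH)$ by $\Phi(g) = [U_g]$, where $U_g \in \calU(\calH)$ is any unitary implementing $\gamma_g$ via $\gamma_g = \Ad(U_g)$; such a $U_g$ exists by Theorem~\ref{thm:every auto is inner}, and the class $[U_g] \in P\calU(\calH)$ is uniquely determined by $\gamma_g$ thanks to the one-to-one correspondence in the Corollary. Equivalently, $\Phi$ is the composition $G \xrightarrow{\gamma} \Aut(\calB(\calH)) \xrightarrow{\sim} P\calU(\calH)$, where the second arrow is the inverse of the bijection $\Ad$.

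Next, I would verify that $\Phi$ is a group homomorphism. Since $\gamma$ is a representation, $\gamma_{gh} = \gamma_g \circ \gamma_h = \Ad(U_g) \circ \Ad(U_h) = \Ad(U_g U_h)$, so by the uniqueness part of the bijection (Corollary to Theorem~\ref{thm:every auto is inner}), we must have $[U_{gh}] = [U_g U_h] = [U_g][U_h]$ in $P\calU(\calH)$, i.e.\ $\Phi(gh) = \Phi(g)\Phi(h)$.

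Finally, strong continuity of $\Phi$ follows because it is the composition of two strongly continuous maps: $\gamma : G \to \Aut(\calB(\calH))$ is strongly continuous by hypothesis, and the Proposition immediately preceding the statement proves that $\Aut(\calB(\calH)) \to P\calU(\calH)$, $\Ad(U) \mapsto [U]$, is continuous with both sides equipped with the strong operator topology. By construction $\gamma_g = \Ad(U_g)$ for each $g$, giving the required relation $\gamma_g(A) = \Ad(U_g)(A)$ for all $A \in \calB(\calH)$. The hardest step, if anything, was already absorbed into the preceding Proposition (where one passes from strong continuity of $\Ad(U)$ on $\calB(\calH)$ to strong continuity of $[U]$ on $P\calH$ via the isometric identification of $P\calH$ with rank-1 projections from Proposition~\ref{prop:projective hilbert space and rank-1 projections}); the present corollary is then just a clean repackaging.
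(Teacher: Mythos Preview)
Your proposal is correct and is exactly the argument the paper intends: the corollary is stated without a separate proof precisely because it is the formal composition of the hypothesis $\gamma:G\to\Aut(\calB(\calH))$ with the continuous bijection $\Aut(\calB(\calH))\to P\calU(\calH)$ established in the immediately preceding Proposition and Corollary. You have spelled out the homomorphism check and the continuity-of-composition step that the paper leaves implicit.
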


We will not re-work our discussions of projective representations from Chapter \ref{ch:proj-reps}. Indeed, since we have the following short exact sequence of topological groups
\begin{equation}
    1 \longrightarrow U(1) \longrightarrow \calU(\calH) \longrightarrow P\calU(\calH) \longrightarrow 1 , 
\end{equation} the proof of Proposition \ref{prop:proj rep yields central extension} goes through once we replace ``Lie group homomorphism'' with ``topological group homomorphism''. Bargmann's theorem works just well, as do the arguments culminating in Remark \ref{rem:equivalence of Borel group cohomology}. In particular, there is still a 1-1 correspondence between isomorphism classes of projective representations and $H^2(G,U(1))$, the Borel group cohomology. See Schottenloher for details~\cite{schottenloher2008mathematical}.

\subsection{The index for split states}
The earlier MPS calculation is quite suggestive, especially Figure \ref{fig:excess_spin_real} and Figure \ref{fig:excess_spin_proj}. Ogata mimics this calculation and finds a way to assign a $H^2(G,U(1))$ index to each unique gapped ground state $\omega$ (which may well not be an MPS). The key definition is that of the \textit{split property} with respect to the decomposition $\Z=\Z_L\cup \Z_R$, where the left side of the infinite chain is $\Z_L = (-\infty, 0]\cap \Z$ and the right side is $\Z_R = [1,\infty)\cap \Z$.

The following definition is highly nonstandard but suffices for our purposes. This is a special case of the split property for pure states on a quantum spin chain $\Z = \Z_L\cup \Z_R$, a consequence of e.g. Proposition 1.18~\cite{ogata2020classification}.

\begin{definition}\label{def:ogata split thm}
    For a pure state $\omega$ on $\calA_{\Z_L\cup \Z_R}$, we say $\omega$ satisfies the \textbf{split property} with respect to the cut $\Z_L\cup \Z_R = \Z$ iff there are irreducible representations $(\calH_L,\pi_L), (\calH_R,\pi_R)$ of $\calA_{\Z_L}, \calA_{\Z_R}$ and a unit vector $\Omega\in \calH_L\otimes \calH_R$ such that $(\calH_L\otimes \calH_R, \pi_L\otimes \pi_R, \Omega)$ is a GNS-triple of $\omega$.
\end{definition} Physically, we may think of the split property as describing when a state $\omega$ does not have ``too much'' entanglement across a cut of the lattice $\Z$ into $\Z_L$ and $\Z_R$. It is not too involved to see that primitive MPSs are split, but much more is true. 

\begin{theorem}~\cite{matsui2013boundedness} \label{thm:matsui gapped ground states are split}
    Let $\omega$ be a gapped ground state of a local Hamiltonian on $\Z$. Then $\omega$ is split with respect to $\Z=\Z_L\cup \Z_R$.\footnote{It is not known whether every split state is a gapped ground state of a local Hamiltonian, although this is believed to be true.}
\end{theorem}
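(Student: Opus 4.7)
The plan is to follow Matsui's original strategy, which routes the implication
\[\text{gapped ground state} \implies \text{split}\]
through two intermediate properties that are interesting in their own right: exponential decay of correlations and bounded entanglement entropy across the cut. Concretely, I would establish the chain
\[\text{spectral gap} \;\Longrightarrow\; \text{exponential clustering} \;\Longrightarrow\; \text{1D area law} \;\Longrightarrow\; \text{intermediate type I factor} \;\Longrightarrow\; \text{split property,}\]
each arrow of which is a known theorem in the literature.

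First, I would invoke the Hastings--Nachtergaele--Sims exponential clustering theorem (proved via the Lieb--Robinson bounds discussed in Section~\ref{sec:infinite quantum spin chains setup}). Since $\omega$ is a gapped ground state of a local Hamiltonian with uniform gap $\gamma > 0$, this produces constants $C, \xi > 0$ (with $\xi \sim 1/\gamma$) such that for any local $A \in \calA_X$, $B \in \calA_Y$ with $X \subseteq \Z_L$ and $Y \subseteq \Z_R$,
\begin{equation*}
    |\omega(AB) - \omega(A)\omega(B)| \leq C \,\|A\|\,\|B\|\, e^{-d(X,Y)/\xi}.
\end{equation*}
This gives us a quantitative statement that the state factorizes across the cut up to exponentially small errors.

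Second, I would apply Hastings' 1D area law to promote this clustering statement to a uniform entanglement bound. The statement I need is that, letting $\rho_n$ denote the density matrix of the restriction of $\omega$ to $\calA_{[-n,n]}$ and $\rho_n^L = \Tr_{[1,n]}\rho_n$ its marginal on the left half, the von Neumann entropies $S(\rho_n^L)$ are bounded by a constant $S_0$ independent of $n$. In 1D this follows from exponential clustering together with the finite-dimensional onsite Hilbert spaces. Matsui's own paper actually works with a slightly weaker and softer "indistinguishability in norm" condition for local observables deep in the two halves, but the bounded-entropy formulation is the cleanest input to the next step.

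Third---and this is the step I expect to be the main obstacle---I would translate bounded entanglement into the operator-algebraic split property. The strategy is to show that the pair of von Neumann algebras $\calM_L := \pi_\omega(\calA_{\Z_L})''$ and $\calM_R := \pi_\omega(\calA_{\Z_R})''$ in the GNS representation of $\omega$ admits an intermediate type I factor $\calN$ with $\calM_L \subseteq \calN \subseteq \calM_R'$. Bounded von Neumann entropy alone is not quite enough; one needs decay of the Schmidt coefficients fast enough to build a nuclear/compact embedding, and Matsui obtains this by upgrading the entropy bound (using the spectral gap more carefully, or by working with Rényi entropies of order $<1$) to show that the reduced density matrices have summable singular values in the appropriate sense. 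Given such an intermediate type I factor, a classical theorem of Takesaki provides a canonical Hilbert space isomorphism $\calH_\omega \cong \calH_L \otimes \calH_R$ in which $\pi_\omega$ decomposes as $\pi_L \otimes \pi_R$ with each factor irreducible, and the cyclic vector becomes a unit vector $\Omega \in \calH_L \otimes \calH_R$. This is exactly the GNS-triple form of the split property demanded by Definition~\ref{def:ogata split thm}.

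The hard part is packaging the entanglement bound into the existence of the intermediate type I factor: one must go beyond the qualitative entropy bound and control the entanglement spectrum finely enough to apply the nuclearity/split criterion (in the spirit of Doplicher--Longo). Everything else is an application of established 1D gapped-system machinery; it is this final operator-algebraic step where the bulk of the work in Matsui's paper lies, and it is where I would concentrate my energy.
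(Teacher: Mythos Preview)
The paper does not give a proof of this theorem at all: it is stated with the citation \cite{matsui2013boundedness} and then immediately used as a black box to feed into Theorem~\ref{thm:split state index}. So there is no ``paper's own proof'' to compare your proposal against; the thesis treats Matsui's result as an imported fact from the literature.

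That said, your sketch is a faithful outline of Matsui's actual argument, and you have correctly identified where the real work lies. The chain
\[
\text{gap} \Longrightarrow \text{exponential clustering} \Longrightarrow \text{1D area law} \Longrightarrow \text{type I factor / split}
\]
is exactly the route taken, and you are right that the first two arrows are standard (Hastings--Nachtergaele--Sims clustering, then Hastings' area law), while the third is the substance of Matsui's paper. One small refinement: Matsui does not need the full Hastings area law as an intermediate; he works directly with the uniform bound on the von Neumann entropy of the half-chain reduced states and shows, via a quasi-equivalence argument with an auxiliary product state, that this entropy bound forces $\pi_\omega(\calA_{\Z_R})''$ to be a type I factor. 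The Doplicher--Longo nuclearity criterion you mention is morally the right analogy, but Matsui's actual mechanism is more elementary and specific to the spin-chain setting. Your identification of ``packaging the entanglement bound into the intermediate type I factor'' as the crux is spot on.
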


We are ready to demonstrate how to extract a projective representation, and thus a $H^2(G,U(1))$ index, from a split state. This will supply the desired topological index for a $G$-invariant gapped ground state $\omega$ of a sufficiently local $G$-symmetric interaction $\Phi$. This was originally done by Matsui~\cite{matsui2001split}, with a few improvements by Ogata~\cite{ogata2020classification}. Ogata's result applies only to a finite group, but in what follows, we will be able to extend the result to $G$ a compact Lie group. The bulk of the argument traces the argument in~\cite{ogata2020classification}, with adjustments present to ensure the continuity of group representations at play. Let us begin by establishing a standard result.

\begin{proposition}\label{prop:strong continuity on-site symmetry}
    Let $\alpha:G \to \Aut(\calA_\Z)$ be the representation of a Lie group $G$ given by on-site unitary conjugation from (\ref{def:on-site symmetry}). Then $\alpha$ is strongly continuous. 
\end{proposition}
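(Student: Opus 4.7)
The plan is to verify continuity of $g\mapsto \alpha_g(A)$ (in the C*-norm) for each fixed $A\in \calA$, using a standard $\epsilon/3$ argument that first handles local observables and then extends to quasi-local observables by density, exploiting the fact that each $\alpha_g$ is a $*$-automorphism and hence an isometry.

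First I would verify strong continuity on $\calA_{loc}$. Since $U:G\to \calU(\calH_x)$ is a continuous homomorphism into a finite-dimensional unitary group, $g\mapsto U(g)$ is norm-continuous. For any finite $X\subseteq \Z$, the finite tensor product $V_X(g):=\bigotimes_{x\in X} U(g)$ is then norm-continuous in $g$ (the tensor product on a finite-dimensional algebra is jointly norm-continuous). For a fixed $A\in \calA_X$ we have $\alpha_g(A)=V_X(g)\, A\, V_X(g)^*$, and the map
\[
g\;\longmapsto\; V_X(g)\, A\, V_X(g)^*
\]
is continuous because multiplication and adjoint are norm-continuous on the finite-dimensional algebra $\calA_X$. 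Thus $g\mapsto \alpha_g(A)$ is norm-continuous for every $A\in \calA_{loc}$.

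Next I would pass to $\calA=\overline{\calA_{loc}}$ using the crucial fact that $\alpha_g$, being a unital $*$-automorphism of a C*-algebra, is an isometry: $\norm{\alpha_g(B)}=\norm{B}$ for all $B\in\calA$ and all $g\in G$. Fix $A\in\calA$ and $g_0\in G$, let $\epsilon>0$, and choose $A'\in\calA_{loc}$ with $\norm{A-A'}<\epsilon/3$. Then for any $g\in G$,
\[
\norm{\alpha_g(A)-\alpha_{g_0}(A)}\le \norm{\alpha_g(A-A')}+\norm{\alpha_g(A')-\alpha_{g_0}(A')}+\norm{\alpha_{g_0}(A'-A)}.
\]
The first and third terms are bounded by $\epsilon/3$ uniformly in $g$ by the isometry property, and by the previous step there is an open neighborhood $\calN\subseteq G$ of $g_0$ on which the middle term is less than $\epsilon/3$. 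Therefore $g\mapsto \alpha_g(A)$ is norm-continuous at $g_0$, and hence on all of $G$.

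There is essentially no obstacle: the argument reduces the problem to continuity of $U:G\to\calU(\calH_x)$ (automatic for a Lie group representation on a finite-dimensional space) and to the density of $\calA_{loc}$ in $\calA$, combined with the isometry property of $\alpha_g$. The only mild subtlety to flag is that one must verify ``strong continuity'' means pointwise norm-continuity of $g\mapsto \alpha_g(A)$ for $A\in\calA$ (with $\calA$ equipped with its C*-norm), which is the standard convention on $\Aut(\calA)$ used when invoking Corollary~\ref{cor:strong cont autos give strong cont proj unitaries} in the sequel.
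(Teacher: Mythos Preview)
Your proof is correct and follows essentially the same route as the paper: norm-continuity on local observables via the finite-dimensional unitary conjugation, then extension to $\calA$ by density using that each $\alpha_g$ is an isometry, assembled via a triangle-inequality ($\epsilon/3$) estimate. The only cosmetic difference is that the paper reduces to continuity at the identity $g_0=\idty$ using the group structure, whereas you work at an arbitrary $g_0$.
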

\begin{proof}
    It suffices to show strong continuity at $\idty\in G$, since $G$ a Lie group. Note that $\alpha_\idty = \text{id}$, the identity operator. Let $\wt{A}\in \calA_\Z$ and fix $\epsilon>0$. Then there exists a local observable $A\in \calA_\Z^{loc}$ with $\norm{\wt{A}-A}<\epsilon$. Note that since $\alpha_g$ acts by conjugation by a unitary on local observables $A\in \calA_\Z^{loc}$, it is norm continuous and we may find a neighborhood of the identity $\calN_\idty\subseteq G$ for which $\norm{\alpha_g(A)-A}<\epsilon$. We then estimate for all $g\in \calN_\idty$:
    \begin{align*}
        \norm{\alpha_g(\wt{A}) - \wt{A}} &\leq \norm{\alpha_g(\wt{A}) - \alpha_g(A)}+ \norm{\alpha_g(A)- A} + \norm{A-\wt{A}} \\
        &\leq 2\norm{\wt{A} - A} + \norm{\alpha_g(A)- A} \\
        &= 3\epsilon,
    \end{align*} where we have used that automorphisms have operator norm $\norm{\alpha_g} = 1$. 
\end{proof}

\begin{theorem} \label{thm:split state index}
    Let $\omega$ be a split state with respect to $\Z = \Z_L \cup \Z_R$. Let $G$ be a Lie group acting by on-site unitary conjugation $\alpha:G\to\Aut(\calA_\Z)$ as in (\ref{def:on-site symmetry}), and assume $\omega$ is $G$-invariant, meaning $\omega = \omega\circ \alpha_g$ for all $g\in G$. Then we may assign a well-defined index $h_\omega \in H^2(G,U(1))$ to $\omega$.
\end{theorem}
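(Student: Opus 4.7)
The plan is to mirror the MPS calculation from Section~\ref{sec:the H2 index for MPS} in an operator-algebraic setting, replacing the finite-dimensional bond space $\C^D$ with the (generally infinite-dimensional) left GNS factor $\calH_L$ supplied by the split property, and replacing the bond unitaries $\Pi(g)$ by unitaries $V_L(g)\in\calU(\calH_L)$ implementing a half-chain action. The only genuinely new work compared to Ogata's finite-$G$ proof is ensuring that the resulting map $g\mapsto[V_L(g)]$ is strongly continuous into $P\calU(\calH_L)$, so that Borel 2-cohomology and hence Remark~\ref{rem:equivalence of Borel group cohomology} can be applied.

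First I would exploit the split triple $(\calH_L\otimes\calH_R,\pi_L\otimes\pi_R,\Omega)$ from Definition~\ref{def:ogata split thm}. Since $\omega\circ\alpha_g=\omega$, Corollary~\ref{cor:GNS unitary implements symmetries} provides unique unitaries $U(g)\in\calU(\calH_L\otimes\calH_R)$ with $U(g)\Omega=\Omega$ and
\begin{equation}
(\pi_L\otimes\pi_R)(\alpha_g(A))=U(g)\,(\pi_L\otimes\pi_R)(A)\,U(g)^{*},\qquad A\in\calA_{\Z}.
\end{equation}
Proposition~\ref{prop:strong continuity on-site symmetry} gives strong continuity of $g\mapsto\alpha_g$ on $\calA_\Z$, and together with cyclicity of $\Omega$ and boundedness of the $U(g)$ this promotes $g\mapsto U(g)$ to a strongly continuous unitary representation of $G$ on $\calH_L\otimes\calH_R$.

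Next I would transfer the tensor factorization of the symmetry to the GNS side. Because $\alpha_g$ is on-site, $\alpha_g(A_L A_R)=\alpha_g^L(A_L)\,\alpha_g^R(A_R)$ for $A_L\in\calA_{\Z_L}$ and $A_R\in\calA_{\Z_R}$, so a direct computation shows $\Ad(U(g))$ sends $\pi_L(\calA_{\Z_L})\otimes\idty$ into itself. Normality of $\Ad(U(g))$ combined with irreducibility $\pi_L(\calA_{\Z_L})''=\calB(\calH_L)$ extends this to preservation of $\calB(\calH_L)\otimes\idty$, and thus defines a group homomorphism $\tilde\alpha^L:G\to\Aut(\calB(\calH_L))$; an analogous $\tilde\alpha^R$ lives on $\calB(\calH_R)$. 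By Theorem~\ref{thm:every auto is inner} each $\tilde\alpha^L_g$ is inner, so $\tilde\alpha^L_g=\Ad(V_L(g))$ for some $V_L(g)\in\calU(\calH_L)$ unique up to a phase in $U(1)$. Strong continuity of $g\mapsto U(g)$ together with the norm-continuity of $\Ad$-action on fixed operators forces $g\mapsto\tilde\alpha^L_g$ to be strongly continuous into $\Aut(\calB(\calH_L))$, so by Corollary~\ref{cor:strong cont autos give strong cont proj unitaries} we obtain a strongly continuous projective unitary representation
\begin{equation}
[V_L]:G\longrightarrow P\calU(\calH_L).
\end{equation}
Choosing a Borel section of $\calU(\calH_L)\to P\calU(\calH_L)$ along this path, the relation $V_L(g)V_L(h)=\sigma_L(g,h)\,V_L(gh)$ defines a Borel 2-cocycle $\sigma_L:G\times G\to U(1)$, and Remark~\ref{rem:equivalence of Borel group cohomology} packages the projective representation into a class $h_\omega:=[\sigma_L]\in H^2(G,U(1))$.

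Finally I would check independence from all choices made. A different Borel phase section $V_L(g)\mapsto\lambda(g)V_L(g)$ alters $\sigma_L$ precisely by the coboundary $\delta\lambda$, so $[\sigma_L]$ is unchanged. A different split GNS realization $(\calH_L'\otimes\calH_R',\pi_L'\otimes\pi_R',\Omega')$ is unitarily equivalent to the first by Theorem~\ref{thm:GNS construction}; irreducibility of $\pi_L$ and $\pi_L'$ forces the equivalence to respect the tensor decomposition up to inner automorphisms of each factor, which conjugate $V_L(g)$ by a $g$-independent unitary and hence preserve the cohomology class. The main obstacle — the one that makes this genuinely a compact-Lie-group statement rather than a repackaging of Ogata's finite-$G$ argument — is establishing the strong continuity of $[V_L]$ into the infinite-dimensional projective unitary group $P\calU(\calH_L)$, since for finite $G$ this is automatic but for a Lie group it is exactly what licenses the use of Borel group cohomology and its identification with $\pi_1(G_{ss})$ via Theorem~\ref{thm:second group cohomology of semi-simple}. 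That strong continuity is precisely what Corollary~\ref{cor:strong cont autos give strong cont proj unitaries} and the preceding discussion of $\Aut(\calB(\calH))\to\calU(P\calH)=P\calU(\calH)$ were built to supply.
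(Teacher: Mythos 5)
Your proposal is correct and follows essentially the same route as the paper: split GNS triple, implement the on-site symmetry by a vector-fixing unitary representation, restrict to one factor to get a strongly continuous family of automorphisms of $\calB(\calH_L)$ (the paper uses $\calH_R$), lift via Corollary~\ref{cor:strong cont autos give strong cont proj unitaries} to a strongly continuous projective representation, and extract the Borel class. Your intermediate step establishing strong continuity of $g\mapsto U(g)$ directly from cyclicity of $\Omega$ is a slightly more explicit version of a continuity assertion the paper makes in passing, but otherwise the argument is the same.
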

\begin{remark}
    Because we do not require translation invariance of $\omega$, it may be that $h_\omega$ depends on the location of the cut $\Z = \Z_L\cup \Z_R$. 
\end{remark}
\begin{proof}
     By Proposition \ref{def:ogata split thm}, the GNS triple of $\omega$ is of the form $(\calH_L\otimes \calH_R, \pi_L\otimes \pi_R, \Omega)$ with irreducible representations $(\calH_L,\pi_L),(\calH_R,\pi_R)$ of $\calA_L,\calA_R$ and a unit vector $\Omega\in \calH_L\otimes \calH_R$. In this representation we have that the double commutants are
\[
    ((\pi_L\otimes \pi_R)(\calA_R))'' = \C \idty_{\calH_L} \otimes \calB(\calH_R) , \qquad ((\pi_L\otimes \pi_R)(\calA_L))'' = \calB(\calH_L)\otimes \C \idty_{\calH_R} . 
\] Now, $\alpha$-invariance of $\omega$ gives by Corollary \ref{cor:GNS unitary implements symmetries} that there is a unitary representation $V:G\to \calU(\calH_L\otimes \calH_R)$ such that
\begin{equation}
    \Ad(V_g)\circ (\pi_L\otimes \pi_R) = \pi_L\circ \alpha_g^L \otimes \pi_R \circ \alpha_g^R, \qquad V_g\Omega = \Omega. 
\end{equation} We may restrict this to $\calA_R$ to get
\[
    \Ad(V_g) (\idty_{\calH_L}\otimes \pi_R(A)) = \idty_{\calH_L} \otimes \pi_R\circ \alpha^R_g(A), \qquad A\in \calA_R.
\] Now, since $\pi_R$ is irreducible, this restriction defines a $*$-automorphism $\gamma_g$ of the entire algebra $\calB(\calH_R)$ such that
\begin{equation}\begin{split}
    \Ad(V_g) (\idty_{\calH_L}\otimes x) = \idty_{\calH_L} \otimes \gamma_g(x), \qquad x\in \calB(\calH_R), \\
    \gamma_g\circ \pi_R(A) = \pi_R \circ \alpha_g^R(A) , \qquad A\in \calA_R.
\end{split}\end{equation} Since $\alpha$ is strongly continuous by Proposition \ref{prop:strong continuity on-site symmetry}, the second equation above guarantees that $\gamma:G\to \Aut(\calB(\calH_R))$ is a strongly continuous group representation. Corollary \ref{cor:strong cont autos give strong cont proj unitaries} then guarantees the existence of a strongly continuous projective representation $[u_R]:G\to P\calU(\calH_R)$ implementing $\gamma$, i.e.
\begin{equation} \label{eq:uR implements right chain reps}
    \Ad(u_R(g))(x) = \gamma_g(x), \qquad x\in \calB(\calH_R) . 
\end{equation} Let $h_R\in H^2(G,U(1))$ be the second group cohomology class associated to the projective representation $[u_R]:G\to P\calU(\calH_R)$. 

To see that $h_R$ is independent of the choice of GNS representation $(\calH_R,\pi_R)$, suppose that we started with $(\wt{\calH}_R,\wt{\pi}_R)$ and obtained the index $\wt{h}_R$ associated to the projective representation $[\wt{u}_R]:G\to P\calU(\wt{\calH}_R)$. Then the GNS construction, Theorem~\ref{thm:GNS construction}, guarantees the existence of a unitary map $V:\wt{\calH}_R\to\calH_R$ such that the irreducible representations $\pi_R,\wt{\pi}_R$ are unitarily equivalent:
\[
    \pi_R(A) = V \wt{\pi}_R(A) V^*, \qquad A\in \calA_R . 
\] In particular, $[u_R] = V[\wt{u}_R]V^*$, which means they define equivalent projective representations and so have identical associated indices $h_R=\wt{h}_R$. Thus, we have a well-defined index $h_\omega :=h_R$ for each split state.
\end{proof}

To see that these are invariants of SPT phases, one must show that $h_\omega$ is invariant under any cocycle of automorphisms $\tau_{s,t}$ induced by a path of $G$-symmetric gapped Hamiltonians $H(s)$. Ogata proves this invariance in Theorem 2.10~\cite{ogata2020classification} by using the techniques of~\cite{bachmann2014gapped}, specifically Theorem \ref{thm:bachmann gapped} in this thesis. This theorem requires only the abstract group structure of $G$ and makes no use of topological data of $G$, and indeed Ogata's proof inherits this feature: thus her proof applies equally well to the Lie group $G$ case.
Then, in our setting, since we have a family of ground states $\calS(s)$ related by $\calS(t) = \calS(s) \circ\tau_{s,t}$ from (\ref{def:spectral flow autos}), we simply assign each family $\calS(s)$ a collection of invariants 
\begin{equation}
    h_{\calS}(s) := \{h_\omega(s)\in H^2(G,U(1)): \omega(s)\in \calS(s)\}
\end{equation} and apply Ogata's result to each path of states $\omega(s)$ separately.
We thus arrive at the following theorem. 
\begin{theorem} \label{thm:SPT invariants of ground state space}
    Let $H_0$ and $H_1$ be in the same $G$-symmetry protected topological phase on $\Z$, where $G$ a Lie group acting via on-site unitary conjugation (\ref{def:on-site symmetry}), with ground state spaces $\calS_0,\calS_1$. Then the collection of second cohomology classes of these ground state spaces is equal, 
    \begin{equation}
        h_{\calS}(0) = h_{\calS}(1) , 
    \end{equation} and so $h_{\calS}$ defines an SPT phase invariant.
\end{theorem}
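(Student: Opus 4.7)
The plan is to reduce the statement to the invariance of the index of a single split state under the spectral flow, which is already essentially the content of Theorem 2.10 of Ogata~\cite{ogata2020classification} after verifying that her argument transfers to compact Lie group symmetries. The hypothesis $H_0\sim_G H_1$ supplies, via~\cite{bachmann2012automorphic}, a strongly continuous cocycle of automorphisms $\tau_{s,t}:\calA\to\calA$ with $\calS(t)=\calS(s)\circ\tau_{s,t}$, and because the interpolating family $H(s)$ is symmetric, $\tau_{s,t}\circ\alpha_g=\alpha_g\circ\tau_{s,t}$ for all $g\in G$. The map $\omega_0\mapsto \omega_0\circ\tau_{0,1}$ is therefore a $G$-equivariant bijection $\calS_0\to\calS_1$, so to prove $h_\calS(0)=h_\calS(1)$ as \emph{sets} of cohomology classes it suffices to show $h_{\omega_0}=h_{\omega_0\circ\tau_{0,1}}$ for each $\omega_0\in\calS_0$.

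First I would check that the hypotheses of Theorem~\ref{thm:split state index} hold at every $s$. The states $\omega(s)$ are $G$-invariant by the intertwining property above, and they are gapped ground states by assumption, so Matsui's Theorem~\ref{thm:matsui gapped ground states are split} provides the split property with respect to $\Z=\Z_L\cup\Z_R$. Hence $h_{\omega(s)}\in H^2(G,U(1))$ is well-defined for all $s\in[0,1]$. The next step is to trace through the construction of $h_\omega$ from the proof of Theorem~\ref{thm:split state index}: the GNS triples of $\omega_0$ and $\omega_0\circ\tau_{0,1}$ are related by the quasi-local automorphism $\tau_{0,1}$, which decomposes (up to inner automorphisms coming from a GNS implementer) compatibly with the cut $\Z_L\cup\Z_R$. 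This means the projective representations $[u_R(0)]$ and $[u_R(1)]$ of $G$ on the right GNS Hilbert space differ by an inner automorphism implemented by a unitary commuting with the symmetry action on $\calB(\calH_R)$, and so they determine the same class in $H^2(G,U(1))$.

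The substance of Ogata's argument at this point uses only the group structure of $G$ together with strong continuity of the symmetry action; our Proposition~\ref{prop:strong continuity on-site symmetry} and Corollary~\ref{cor:strong cont autos give strong cont proj unitaries} provide exactly these prerequisites for compact Lie $G$, so the proof goes through verbatim. The main obstacle I anticipate is the regularity of the implementing unitary in the GNS representation at intermediate $s$: one wants a Borel (or, better, measurable) family of projective implementers $[u_R(s)]$, because without some regularity it is not clear how to compare the class at $s=0$ to that at $s=1$ via a continuous path in $H^2(G,U(1))$. However, here one may use the finiteness of $H^2(G,U(1))$ guaranteed by Theorem~\ref{thm:second group cohomology of semi-simple} for any compact Lie group, which reduces the problem to showing that the cohomology class varies locally constantly in $s$, a statement that follows from the quasi-local and norm-continuous nature of the spectral flow together with the same Borel implementer construction used in the proof of Theorem~\ref{thm:split state index}.

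Concluding, applying $h_{\omega_0}=h_{\omega_0\circ\tau_{0,1}}$ to each $\omega_0\in\calS_0$ and using that $\omega\mapsto \omega\circ\tau_{0,1}$ is a bijection $\calS_0\to\calS_1$ yields the equality of collections $h_\calS(0)=h_\calS(1)$, establishing that $h_\calS$ is an SPT phase invariant.
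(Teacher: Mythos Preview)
Your proposal is correct and follows essentially the same route as the paper: reduce to the invariance of a single split-state index under the spectral flow, invoke Ogata's Theorem~2.10 (whose proof uses only the abstract group structure of $G$ and hence carries over to compact Lie groups), and then apply it state-by-state using the bijection $\calS_0\to\calS_1$, $\omega_0\mapsto\omega_0\circ\tau_{0,1}$.

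One remark: your paragraph about regularity of the implementer at intermediate $s$ and the appeal to finiteness of $H^2(G,U(1))$ to get local constancy is an unnecessary detour. Ogata's argument compares the indices at the two endpoints directly, using that the quasi-local automorphism $\tau_{0,1}$ factors (up to an inner piece) across the cut $\Z_L\cup\Z_R$; no tracking of $h_{\omega(s)}$ along the path is required, and the paper correspondingly makes no continuity-in-$s$ argument.
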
 

 For the phase diagram Figure~\ref{fig:SO(n) phase diagram}, we will take the symmetry to be $G=SO(n)$. In Section~\ref{sec:the dimerized phase is trivial}, we will compute this family of indices $h_\calS$ for the pair of ground states at the south pole point in the yellow dimerized region and find that it consists of two trivial indices. In Chapter~\ref{ch:SO(n)_Haldane_chains}, we will find that when $n$ even, the family of indices the pair of ground states $\omega_\pm$ at the $SO(n)$ AKLT point in the red Haldane phase consists of two nontrivial indices. Together, this proves the dimerized phase and Haldane phase are distinct SPT phases when $n$ even.

Before proceeding, we mention a couple of conjectures regarding these invariants.
Ogata proves~\cite{ogata2020classification} that when $G$ is a finite group $G$, this invariant is a complete invariant for unique gapped states $\omega$.\footnote{As we mentioned earlier, it is not known whether every split state is the unique gapped ground state of a local Hamiltonian, so completeness of the invariant is subtly open. It is however believed to still be true.} This spawns a natural conjecture.

\begin{conjecture}
    Let $G$ be a compact Lie group. Then $h_\Phi\in H^2(G,U(1))$ is a complete SPT phase invariant of the space of unique gapped ground state interactions, meaning given two unique gapped ground states $\omega_0,\omega_1$ of interactions $\Phi_0,\Phi_1$ with the same SPT invariant $h_{\omega_0}= h_{\omega_1}$, then the interactions are connected by a smooth uniformly gapped path of $G$-invariant interactions, i.e. $\Phi_0\sim_G \Phi_1$. 
\end{conjecture}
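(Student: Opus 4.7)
The plan is to follow Ogata's finite-$G$ completeness argument from~\cite{ogata2020classification} and carefully localize where compactness of $G$ introduces new difficulties. The crucial enabling fact is Theorem~\ref{thm:second group cohomology of semi-simple}, which guarantees that $H^2(G,U(1))$ is still a finite group for compact Lie $G$; thus the problem reduces to finitely many ``reference models'' and one can attempt, for each cohomology class $c\in H^2(G,U(1))$, to construct a concrete $G$-invariant parent Hamiltonian $\Phi_c$ whose unique gapped ground state $\omega_c$ realizes $c$. The overall strategy is a two-step reduction: first, show that $h_{\omega_0}=h_{\omega_1}$ implies $\omega_0$ and $\omega_1$ are $G$-equivariantly automorphically equivalent (relative equivalence of states); then upgrade to equivalence at the level of interactions via a quasi-adiabatic continuation argument.

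For the first step I would build, for each $c\in H^2(G,U(1))$ realized by a projective representation $\Pi_c:G\to P\calU(\C^{D_c})$ of minimal dimension, a $G$-invariant MPS tensor $T_c$ using the intertwining relation from Section~\ref{sec:the H2 index for MPS} and take its parent Hamiltonian (Theorem~\ref{thm:fannes parent Hamiltonian}). This gives a canonical reference state $\omega_c$. The reduction is then to show that any unique gapped $G$-invariant ground state $\omega_0$ with $h_{\omega_0}=c$ is equivariantly automorphically equivalent to $\omega_c$. I would use Matsui's split property (Theorem~\ref{thm:matsui gapped ground states are split}) to obtain the GNS data underlying $h_{\omega_0}$, and combine it with a Hastings-style MPS approximation to reduce the general gapped ground state to a sequence of approximating MPSs. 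The fundamental theorem of matrix product states (Theorem~\ref{thm:Fundamental Theorem of MPS}) together with Remark~\ref{rem:continuity of fundamental theorem of MPS} should then organize the $G$-action on the approximating bond spaces in a continuous fashion, matching the projective representations of $\omega_0$ and $\omega_c$ after passing to a common refinement of bond dimensions.

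For the second step I would build the equivariant interpolating path by following the scheme in~\cite{bachmann2014gapped,ogata2020classification}: an automorphism $\beta$ with $\omega_0 = \omega_c\circ\beta$ that commutes with $\alpha_g$ for all $g\in G$ can be converted into a smooth $G$-invariant path $\Phi(s)$ via quasi-adiabatic flow applied to the interpolating family $H(s)=(1-s)H_0+s\,\beta^{-1}H_c\beta$. Uniform gappedness would be established by combining stability results for frustration-free parent Hamiltonians with perturbation arguments to extend beyond the frustration-free case, and $G$-invariance would be automatic from $G$-equivariance of $\beta$ and of the quasi-adiabatic generator.

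The hardest part will almost certainly be the construction of the $G$-equivariant automorphism $\beta$ relating $\omega_0$ to the reference state $\omega_c$. Ogata's argument in the finite-$G$ case leverages the rigidity of projective representations of finite groups to patch together local unitaries into a $G$-equivariant quasi-local automorphism; the analogue for compact Lie $G$ requires continuous control, in a suitable operator topology, of the induced projective representation on the GNS Hilbert space. Although strong continuity is granted abstractly by Theorem~\ref{thm:split state index} and Corollary~\ref{cor:strong cont autos give strong cont proj unitaries}, keeping the patching procedure smooth and uniformly Lie-group equivariant as the MPS approximations are refined is delicate. I expect the principal new technical input beyond Ogata's framework to be a quantitative, uniform-in-$g$ version of Remark~\ref{rem:continuity of fundamental theorem of MPS} for the bond representation $\Pi(g)$, which is precisely what would allow the quasi-adiabatic flow to be made $G$-equivariant and thereby close the completeness argument.
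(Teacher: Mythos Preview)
The statement you are attempting to prove is a \emph{conjecture} in the paper, not a theorem: the paper explicitly presents it as an open problem motivated by Ogata's finite-$G$ completeness result and offers no proof. So there is no ``paper's own proof'' for comparison here.

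Your proposal is not a proof but a roadmap, and you candidly acknowledge this (``The hardest part will almost certainly be\ldots'', ``I expect the principal new technical input\ldots''). The two main gaps are genuine obstacles, not mere loose ends. First, the construction of a $G$-equivariant automorphism $\beta$ relating an arbitrary unique gapped $G$-invariant ground state $\omega_0$ to a reference MPS $\omega_c$ is the whole difficulty; Ogata's argument for finite $G$ relies on structural results (factorization of automorphisms, classification of split states up to quasi-equivalence) whose Lie-group-equivariant analogues are not established, and a ``Hastings-style MPS approximation'' does not obviously produce a \emph{single} $G$-equivariant automorphism in the limit. Second, your proposed interpolating family $H(s)=(1-s)H_0+s\,\beta^{-1}H_c\beta$ has no reason to be uniformly gapped: linear interpolation between two gapped Hamiltonians generically closes the gap, and the stability results you invoke apply only in small perturbative neighborhoods of frustration-free models, not globally along such a path. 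These are precisely the reasons the paper leaves the statement as a conjecture rather than a theorem.
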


In our working examples, we have considered the Lie group $SO(n)$, and Theorem \ref{thm:second group cohomology of semi-simple} assures us that $H^2(SO(n),U(1)) \cong \Z_2$. But notice that we could have just as well asked to take the dihedral $\Z_2\times \Z_2$ subgroup of $SO(n)$ given by $\pi$-rotations about two axes, or more concretely $\{\idty, e^{\pi L_{12}}, e^{\pi L_{13}}, e^{\pi L_{12}}e^{\pi L_{13}}\}$. Here, $H^2(\Z_2\times \Z_2, U(1)) \cong \Z_2$. The SPT phases when we enforce the symmetry $SO(n)$ or the symmetry $\Z_2\times \Z_2$ are exactly the same! This example, as well as Theorem \ref{thm:second group cohomology of semi-simple} guaranteeing that $H^2(G,U(1))$ is a finite group for compact $G$, leads us to pose the following conjecture. 

\begin{conjecture}
    Let $G$ be a compact Lie group. Then there exists a finite subgroup $H\leq G$ for which the set of $G$-SPT phases is exactly the same as the set of $H$-SPT phases. 
\end{conjecture}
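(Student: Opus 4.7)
The plan is to leverage the SPT invariant $h_\Phi \in H^2(G, U(1))$ from Theorem~\ref{thm:SPT invariants of ground state space} together with the completeness conjecture stated just above. Any $G$-invariant interaction is automatically $H$-invariant for $H \leq G$, and a $G$-invariant gapped symmetric path is automatically an $H$-invariant gapped symmetric path, so restriction defines a natural map
\[
r\colon \{G\text{-SPT phases}\} \longrightarrow \{H\text{-SPT phases}\},
\]
and the goal is to exhibit a finite subgroup $H \leq G$ for which $r$ is a bijection.

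For the construction of $H$, I would invoke Theorem~\ref{thm:second group cohomology of semi-simple}, which gives $H^2(G,U(1)) \cong \pi_1(G_{ss})$ for the canonical decomposition $G = G_{ab} \times G_{ss}$. Since the abelian factor contributes nothing to the Borel second cohomology, I would build $H$ entirely inside $G_{ss}$ and arrange that the restriction $H^2(G, U(1)) \to H^2(H, U(1))$ is injective. The prototype is $G_{ss} = SO(n)$, where the dihedral subgroup $H = \Z_2 \times \Z_2$ generated by two $\pi$-rotations already realizes a copy of $H^2(SO(n), U(1)) \cong \Z_2$ inside $H^2(H, U(1))$, with injectivity witnessed by the fact that the half-spin representation remains projectively nontrivial upon restriction (Remark~\ref{rem:spin reps are projective}). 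For general compact $G_{ss}$, I would attempt to build $H$ from the finite center $Z(\wt{G}_{ss})$ of the universal cover, which surjects onto $\pi_1(G_{ss})$, enriched by a finite collection of additional elements chosen so that each generating homotopy class of $\pi_1(G_{ss})$ is realized by a discrete ``broken loop'' inside $H$.

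Assuming injectivity of the restriction map and the previous completeness conjecture, injectivity of $r$ is immediate: equal $H$-invariants force equal $G$-invariants, and completeness of the $G$-invariant then yields $\Phi_0 \sim_G \Phi_1$. Surjectivity is the substantive step: given an $H$-SPT phase represented by some interaction $\Psi$, one must produce a $G$-invariant representative in the same $H$-SPT phase. A natural candidate is the Haar average
\[
\Phi(X) := \int_G \alpha_g(\Psi(X)) \, d\mu(g),
\]
which is automatically $G$-invariant and preserves the locality class of $\Psi$. One would then need to verify that $\Phi$ remains uniformly gapped and that its restriction to $H$ realizes the same SPT invariant as $\Psi$; the latter should follow from a compatible choice of GNS data, since the projective half-chain representation of $H$ attached to $\Phi$ inherits the cocycle class from that of $\Psi$.

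The hard part will be preserving the spectral gap under Haar symmetrization, since averaging $H$-invariant gapped Hamiltonians over a larger Lie group $G$ can in principle close the bulk gap. A likely remedy is to replace a direct average by a two-stage interpolation: first deform $\Psi$ adiabatically within its $H$-SPT class to a canonical fixed-point model — for instance, a $G$-invariant MPS realizing the desired cohomology class along the lines of Section~\ref{sec:the H2 index for MPS} — and only then perform the averaging. An alternative route that sidesteps the completeness conjecture altogether would be to construct, for each class in $H^2(G, U(1))$, an explicit $G$-invariant MPS parent Hamiltonian realizing it, and then prove that every $H$-SPT phase contains such a canonical $G$-invariant representative; this would establish the bijection directly at the level of equivalence classes of interactions.
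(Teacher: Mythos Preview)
The paper does not prove this statement: it is explicitly labeled a conjecture and left open. The surrounding text only offers the motivating example $G = SO(n)$, $H = \Z_2 \times \Z_2$ and the observation that $H^2(G,U(1))$ is finite for compact $G$; no argument toward a proof is given. So there is no proof of the paper's to compare your proposal against.

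As a proof attempt, your proposal has two structural gaps you yourself flag but do not close. First, the entire injectivity direction rests on the previous completeness conjecture, which is also open; at best you are reducing one conjecture to another. Even granting that, your construction of a finite $H \leq G$ with injective restriction $H^2(G,U(1)) \to H^2(H,U(1))$ is only a sketch for $SO(n)$ and is vague for general compact $G$ (and Theorem~\ref{thm:second group cohomology of semi-simple} as stated requires $G$ connected). Second, surjectivity is genuinely unresolved: Haar averaging an $H$-invariant gapped interaction over $G$ has no reason to preserve the uniform spectral gap, and your fallback of deforming to a canonical $G$-invariant MPS representative presupposes exactly the kind of classification result you are trying to establish. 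Neither route, as written, produces a $G$-invariant representative in an arbitrary $H$-SPT phase. So the proposal is a reasonable outline of what one would \emph{want} to do, but it does not constitute a proof, and the paper does not claim one either.
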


\section{The Excess Spin Operator and Half-Chain Rotations}\label{sec:excess spin operator}
From the proof of Theorem~\ref{thm:split state index}, the index arose by starting with a $G$-invariant split state $\omega$ and considering the action of half-chain rotations $\alpha_g^R$, which is just the action of the automorphism $\alpha:G\to \Aut(\calA)$ restricted to the right half-chain $\Z_R$.\footnote{Of course, this can all be similarly described for left half-chain rotations.} As a consequence of the split property, we saw that the GNS Hilbert space of $\omega$ may be expressed as $\calH_L\otimes \calH_R$, which allowed us to see that the half-chain rotations are unitarily implementable via a projective representation $u_R$ of $G$ on the GNS Hilbert space $\calH_R$ of the right half-chain. This is all well and good for defining the projective representation which grants us the index, but to compute this index explicitly for a given model one must find this projective representation more concretely. 

For several classes of models, including finitely correlated states and the random loop models describing the south pole point of the phase diagram Figure~\ref{fig:SO(n) phase diagram}, Bachmann and Nachtergaele proved that one may unitarily implement the automorphisms $\alpha_g^R$ using excess spin operators~\cite{bachmann2014gapped}. On finite chains $[-\ell,\ell]$, the right half-chain rotation operator refers to the on-site unitary symmetry restricted to the right half-chain, so the representatives $\bigotimes_{x=1}^\ell U_g$ of the group elements $g\in G$. Passing to the Lie algebra $\g$ representation, the excess spin operator is the infinitesimal generator of this symmetry given by $\sum_{x=1}^\ell L_j$, where $e^{itL_j} = U_g$ for some real $t$ (compare to Example \ref{ex:checking for symmetry}, where $L=S$ is exactly a spin matrix). By definition, the half-chain rotation automorphism $\alpha_g^R = \bigotimes_{x=1}^\ell \Ad(U_g)$ is an inner automorphism on $\calA_{[1,\ell]}$. For infinite chains $\Z$, one should not in general expect the half-chain rotation $\alpha^R$ to remain inner on the algebra of quasi-local observables $\calA$. But we proved already that when $\omega$ is a $G$-invariant split state, $\alpha^R$ is an inner automorphism on the GNS representation of $\omega$. The content of Theorem~\ref{thm:bachmann excess spin FCS} and Theorem~\ref{thm:excess spin random loop} is that the half-chain rotation operator, given as the exponential $U_g^R$ of the formal sum $L^R := \sum_{x=1}^\infty L_x$, converges in the GNS representation of $\omega$ and implements this automorphism, i.e. $\alpha_g^R = \Ad(U_g^R)$. For our context, this gives a strongly continuous representation of $Spin(n)$, which then passes to a projective representation of $SO(n)$: if this representation is trivial projective, then $\omega$ has a trivial SPT index, and if the representation is nontrivial projective, then $\omega$ is a nontrivial SPT. 

To state the first theorem, we slightly abuse notation and assume that there is an representative $L=L^*$ of the Lie algebra rep such that $U_g= \exp(ig L)$, where $g\in \R$. We will describe the (formal) operator $\sum_{x=1}^\infty L_x$ as a limit $\ell\to\infty$ of the local approximation
\begin{equation}
    L^R(\ell) = \sum_{x=1}^{\ell^2} f_\ell(x-1) L_x, 
\end{equation} where $f_\ell:\Z_{\geq 0} \to \R$ is given by 
\[
    f_\ell(m\ell + n) = 1-m/\ell ,\qquad \text{for $m,n\in [0,\ell-1]$, and } f_\ell(x) = 0 \qquad \text{ for $x\geq \ell^2$}. 
\] We shall then write 
\[
    U_g^R(\ell) = \exp(i g L^R(\ell)) . 
\]
\begin{theorem} (Theorem 3.1~\cite{bachmann2014gapped}) \label{thm:bachmann excess spin FCS}
    Let $\omega$ be a $G$-invariant pure $p$-periodic finitely correlated state and let $(\calH_L\otimes \calH_R,\pi_L\otimes \pi_R,\Omega_\omega)$ be its GNS representation. Then the strong limit
    \begin{equation}
        U^R_g = \slim_{\ell\to \infty} e^{ig\cdot \pi_R(L^R(\ell))}
    \end{equation} exists on $\calH_R$ for all $g\in G$ and defines a strongly continuous representation $U^R:G\to \calU(\calH_R)$. . 
\end{theorem}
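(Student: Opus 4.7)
The plan is to combine the finitely correlated state structure with primitivity of the transfer operator and the projective bond representation arising from $G$-invariance (as in Section~\ref{sec:the H2 index for MPS}) to produce strong convergence of the smoothed half-chain unitaries on a dense subset of $\calH_R$. First I would reduce to the translation-invariant case by blocking $p$ neighboring sites into one, so that Proposition~\ref{prop:ergodic and periodic FCS decompositions} and Theorem~\ref{thm:Primitivity} give a pure FCS generated by a primitive transfer operator $\bbE_\idty$ on a finite-dimensional bond algebra $\calB$. I would then exploit $G$-invariance together with the Fundamental Theorem of MPS (Theorem~\ref{thm:Fundamental Theorem of MPS} and Remark~\ref{rem:continuity of fundamental theorem of MPS}) to extract continuous families $\lambda(g)\in U(1)$ and $\Pi(g)\in\calU(\calB)$ satisfying the intertwining relation $\wt t_i(g) = \lambda(g)\,\Pi(g)\,t_i\,\Pi(g)^*$, which effectively converts a rotation of a physical site into conjugation by $\Pi(g)$ on the virtual bond.

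The cyclic subspace generated by vectors $\psi = \pi_R(A)\Omega_\omega$ with $A$ local is dense in $\calH_R$, so it suffices to prove Cauchy convergence of $U^R_g(\ell)\psi$ as $\ell\to\infty$. Using the FCS expression for matrix elements, one writes
\begin{equation}
    \langle \pi_R(B)\Omega_\omega,\, U^R_g(\ell)\,\pi_R(C)\Omega_\omega\rangle
\end{equation}
as a composition of transfer maps, twisted on site $x$ by the rotation $e^{i g f_\ell(x-1) L}$. Because the smoothing function $f_\ell$ is piecewise linear with slope of order $1/\ell$ on a region of width $\ell$ and is constant (or zero) elsewhere, each inserted twist differs from either the identity or the full conjugation by $\Ad(\Pi(g))$ by an amount of order $1/\ell$. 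In the regions where $f_\ell$ is constant, one may pull the twist onto the bond using the intertwining relation from Step~2; in the transition region, one estimates the deviation by a Taylor expansion of $e^{i g f_\ell(x-1) L}$ around its nearest constant value.

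The resulting difference $\|(U^R_g(\ell)-U^R_g(\ell'))\psi\|^2$ decomposes into a telescoping sum of inner products involving the compositions $\bbE_\idty^{\circ k}$ (and analogous twisted variants) applied to a fixed positive element of $\calB$. Primitivity provides the key estimate $\|\bbE_\idty^{\circ k} - P_\idty\| \leq C\mu^k$ for some $\mu<1$, where $P_\idty$ is the rank-one projection (\ref{def:rank 1 orth proj onto idty}). Combining the $O(1/\ell)$ per-site smallness with the exponential decay in the flat regions produces a summable bound and hence a Cauchy sequence. Since $U^R_g(\ell)$ are unitaries with norm $1$, convergence on the dense cyclic subspace extends to strong convergence on $\calH_R$, defining $U^R_g$. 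The group law $U^R_g U^R_h = U^R_{gh}$ follows by passing to the limit in the approximate identity $U^R_g(\ell)U^R_h(\ell)=U^R_{gh}(\ell)+O(1/\ell)$ (provable by a BCH estimate using the same smoothing bound), and strong continuity in $g$ follows because the limit $\langle \pi_R(B)\Omega_\omega, U^R_g \pi_R(C)\Omega_\omega\rangle$ is expressed through continuous functions of $\lambda(g)$ and $\Pi(g)$ on the finite-dimensional bond algebra, with the uniform bound $\|U^R_g\|=1$ extending continuity to all vectors.

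The main obstacle will be Step on the Cauchy estimate: one must carefully orchestrate the three ingredients — the $1/\ell$ smallness from the slope of $f_\ell$, the exponential decay from the spectral gap of $\bbE_\idty$, and the conversion of physical twists into bond conjugations via the projective representation $\Pi$ — into a single telescoping bound that is summable in $\ell$. The algebraic bookkeeping of which sites contribute a "flat" twist versus a "gradient" twist, and how those are absorbed into powers of the transfer operator, is where most of the technical work lives; everything downstream (group law and strong continuity) reduces to limit-passing arguments in the finite-dimensional bond algebra, which are comparatively routine.
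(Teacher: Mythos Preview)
The paper does not prove this theorem; it is quoted as Theorem~3.1 of Bachmann--Nachtergaele~\cite{bachmann2014gapped}, with only the remark that the original argument carries over to $p$-periodic finitely correlated states and that the correlation-decay hypothesis in the original is automatic here via the parent Hamiltonian gap and exponential clustering. There is therefore no in-paper proof to compare your attempt against.

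On the substance of your outline: the ingredients you identify---density of the cyclic subspace $\{\pi_R(A)\Omega_\omega : A\in\calA_R^{\text{loc}}\}$, the $O(1/\ell)$ slope of $f_\ell$, exponential decay from primitivity of $\bbE_\idty$, and a Cauchy estimate for $\|(U^R_g(\ell)-U^R_g(\ell'))\psi\|^2$ expressed through transfer-operator compositions---are the correct ones and match the structure of the argument in~\cite{bachmann2014gapped}. However, routing the convergence through the Fundamental Theorem of MPS and the bond projective representation $\Pi(g)$ is a detour: the existence of $U^R_g$ should precede, not rely on, extracting $\Pi(g)$. Notice that in the thesis itself the identification of the representation content via $\Pi$ (Theorem~\ref{thm:rep content excess spin MPS}) comes \emph{after} the present theorem, and the intertwining relation only handles the full rotation $U_g$, not the site-dependent $e^{igf_\ell(x-1)L}$, so in the transition region you are forced back to direct estimates anyway. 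The original proof works straight from the FCS expression for correlators plus clustering, never invoking $\Pi$.

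Your BCH step for the group law also needs more care than you indicate. One has $[L^R(\ell),(L')^R(\ell)]=\sum_x f_\ell(x-1)^2[L,L']_x$, which differs from $([L,L'])^R(\ell)$ by a term weighted by $f_\ell^2-f_\ell$. This weight is not $O(1/\ell)$ pointwise and is supported on all of $[0,\ell^2)$; that it still vanishes in the strong limit again hinges on the exponential clustering of $\omega$, and this is where the cited paper does nontrivial work rather than a routine limit passage.
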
 Note that every pure finitely correlated state is a gapped ground state of its parent Hamiltonian by Theorem~\ref{thm:fannes parent Hamiltonian}. The gap then implies by the Exponential Clustering Theorem~\cite{nachtergaele2006lieb} that there is exponential decay of spatial correlations in the ground state, so we omit this condition in our restatement of the theorem.
\begin{remark}
    The original statement of this theorem only concerns finitely correlated $\omega$. However, tracing through the arguments, one sees that it holds equally well for $p$-periodic finitely correlated states. In particular, it holds for the ground states of the $SO(n)$ AKLT point in the red Haldane phase of Figure~\ref{fig:SO(n) phase diagram}.
\end{remark}

The excess spin operator perspective adds some clarity to our MPS examples in Figure~\ref{fig:excess_spin_real} and Figure~\ref{fig:excess_spin_proj}. Recall that when $\omega$ is a pure finitely correlated state with minimal triple $(\bbE,\Tr\rho (\cdot), \idty)$, it is a primitive MPS. When $\omega$ is $G$-invariant, the Fundamental Theorem of MPS~\ref{thm:Fundamental Theorem of MPS} tells us that there is a projective unitary representation $\Pi:G\to \calU(\C^D)$ for which its defining isometry $T:\C^D\to \C^n\otimes \C^D$ enjoys the intertwining relation:
\begin{equation}
    (U_g\otimes \Pi_g) T = T \, \Pi_g \qquad g\in G. 
\end{equation} Of course, this is equivalent to the following intertwining relation enjoyed by the CP map $\bbE: M_n(\C)\otimes M_D(\C)\to M_D(\C)$ given by $\bbE_A(B) = T^*(A\otimes B)T$:
\begin{equation}
    \bbE_{U_g}(\Pi_g) = \Pi_g . 
\end{equation} So, $\Pi_g$ is an eigenvector of $\bbE_{U_g}$ of eigenvalue 1, and since $\bbE_\idty$ is primitive, it is clear that $\bbE_{U_g}$ is primitive for all $g$ in an open neighborhood of the identity $\calN_\idty \subseteq G$.

The representation content of $\Pi$ and the half-chain rotation representation are in fact the same.

\begin{theorem} (Theorem 4.3~\cite{bachmann2014gapped}) \label{thm:rep content excess spin MPS}
    Let $\omega$ be a pure finitely correlated state. The representation $U^R:G\to \calU(\calH_R)$ contains only the representation $\Pi:G\to \calU(\C^D)$, with infinite multiplicity. 
\end{theorem}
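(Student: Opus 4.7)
The plan is to give an explicit realization of the right half-chain GNS Hilbert space $\calH_R$ in terms of the bond data and then identify the strong limit $U_g^R$ of Theorem \ref{thm:bachmann excess spin FCS} inside this realization. The target identification is a tensor factorization $\calH_R \cong \C^D \otimes \calK_R$ with $\calK_R$ infinite-dimensional, such that $U_g^R = \Pi_g \otimes \idty_{\calK_R}$; this immediately yields the statement about representation content and infinite multiplicity.

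The first ingredient is an explicit construction of $(\calH_R,\pi_R)$ for a pure FCS. Following the standard FCS construction, realize $\calH_R$ as the completion of the vector space of finite-chain ``right half-chain MPS vectors'' $\psi^{(\ell)}(B)$, indexed by $B\in \calB = M_D(\C)$ and a choice of local observable decoration on sites $[1,\ell]$, equipped with the sesquilinear form induced by $\omega$. The stabilizing inclusion $\psi^{(\ell)}(B)\mapsto \psi^{(\ell+1)}(B)$ obtained by inserting a ``stationary bond'' is compatible with $\bbE$, and by primitivity of $\bbE_\idty$ (Theorem \ref{thm:Primitivity}) combined with faithfulness of $\rho$, this form is well defined and separating on the appropriate quotient. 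The free \emph{left} bond index is untouched by the construction and contributes a canonical $\C^D$ factor on which $\calA_R$ acts trivially, while the bulk data sit inside an infinite-dimensional multiplicity space $\calK_R$; this yields the factorization $\calH_R \cong \C^D \otimes \calK_R$. That $\calK_R$ is infinite-dimensional is automatic, since $\calK_R$ carries a faithful representation of the quasi-local algebra of the bulk of $\Z_R$.

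To compute the action of $U_g^R$, one uses the finite-volume intertwining relation (as in Equation (\ref{eq:AKLT SU(2) rep}) and Remark \ref{rem:continuity of fundamental theorem of MPS})
\[
U_g^{\otimes \ell}\,\psi^{(\ell)}(B) = \psi^{(\ell)}(\Pi_g\, B\, \Pi_g^{*}),
\]
valid for every $\ell$. Taking the strong operator limit guaranteed by Theorem \ref{thm:bachmann excess spin FCS}, the right-bond $\Pi_g^{*}$ is absorbed into the stationary right-environment because that environment is $\Pi$-invariant: indeed $\bbE_{U_g}(\Pi_g) = \Pi_g$, which by primitivity singles out $\Pi_g$ as the unique (up to scalar) fixed point of $\bbE_{U_g}$. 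Consequently the surviving action of $U_g^R$ is left multiplication by $\Pi_g$ on the $\C^D$ factor, and the identity on $\calK_R$, giving $U_g^R = \Pi_g \otimes \idty_{\calK_R}$ and hence the decomposition $U^R \cong \Pi \otimes \idty_{\calK_R}$ with infinite multiplicity.

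The main obstacle is the convergence argument in the previous paragraph: one must show that the smoothed cut-off $L^R(\ell) = \sum_x f_\ell(x-1) L_x$ whose exponential defines $U_g^R$ does in fact produce, in the strong limit, precisely the operator obtained by the formal ``conjugate-the-bond-by-$\Pi_g$'' picture, without any residual contribution from the far-right boundary. The specific triangular weighting $f_\ell$ is engineered so that the $\Pi_g^{*}$ accumulated at the right end of $[1,\ell^2]$ is averaged away in the strong-operator sense, and the key ingredient making this cancellation work is again primitivity of $\bbE_{U_g}$, which gives exponential decay of deviations from the fixed-point projection. The bookkeeping is delicate because the observables under consideration may have arbitrarily large support, so one needs a uniform estimate on matrix elements $\langle \pi_R(A)\Omega, (U_g^R(\ell)-\Pi_g\otimes \idty)\pi_R(B)\Omega\rangle$ along a dense subalgebra, which is the technical heart of the proof.
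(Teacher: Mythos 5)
Your approach is genuinely different from the paper's and contains a gap that I don't think can be repaired without essentially re-doing the argument.

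The fatal step is the claimed factorization ``$\calH_R \cong \C^D \otimes \calK_R$ \ldots on which $\calA_R$ acts trivially.'' This contradicts the very hypothesis you are working under. The split property (Definition \ref{def:ogata split thm}) supplies an \emph{irreducible} representation $(\calH_R,\pi_R)$ of $\calA_R$. If $\pi_R(A) = \idty_{\C^D}\otimes\wt{\pi}(A)$, then $M_D(\C)\otimes\idty_{\calK_R} \subseteq \pi_R(\calA_R)'$, contradicting $\pi_R(\calA_R)' = \C\idty$. Concretely: in the MPS picture the left bond index is contracted against the tensor at site $1\in\Z_R$, so an observable supported on site $1$ acts directly on that index; there is no ``free'' $\C^D$ factor that $\calA_R$ leaves alone. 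What \emph{is} true, and is the content of the theorem, is that $U^R$ alone (not $\pi_R$) factorizes as $\Pi\otimes\idty_{\calK_R}$ — but writing that factorization down as a starting point and then claiming $U_g^R = \Pi_g\otimes\idty_{\calK_R}$ follows from it is circular. You flag the remaining convergence issue yourself as ``the technical heart of the proof,'' which is an accurate self-assessment: the claim that the right boundary $\Pi_g^*$ is ``averaged away'' by the triangular weighting is the entire content of Theorem \ref{thm:bachmann excess spin FCS}, and the additional claim that the surviving limit is exactly $\Pi_g\otimes\idty_{\calK_R}$ in your (incorrect) factorization is asserted, not derived.

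The paper's proof avoids constructing $\calH_R$ explicitly. It uses that $\pi_L(A)\otimes\idty$ commutes with $U_g^R$ for $A\in\calA_L$, so the matrix elements $\inprod{\pi_L(A)\Omega_\omega, U_g^R\,\pi_L(B)\Omega_\omega}$ reduce to the transfer-operator expression $\Tr\paran{\rho\,\bbE_{A^*B}(\Pi_g)}$. Primitivity (Theorem \ref{thm:Primitivity}) implies $\text{span}\{\rho\circ\bbE_{A_{-\ell}}\circ\dots\circ\bbE_{A_0}\} = M_D(\C)$, so these matrix elements recover all entries of $\Pi_g$, determining the representation content of $U^R$ without ever needing the kind of tensor factorization you attempt. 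The multiplicity is infinite simply because $\calH_R$ is infinite dimensional while $\Pi$ is not. If you wish to salvage your route, you would need to replace the incorrect ``$\calA_R$ acts trivially on $\C^D$'' by a statement about a $D$-dimensional $U^R$-invariant subspace (e.g.\ the range of $\Tr_L\ket{\Omega_\omega}\bra{\Omega_\omega}$), and then argue that $\calA_R$-cyclicity propagates the $\Pi$-structure to all of $\calH_R$; at that point you are essentially reconstructing the paper's matrix-element computation in different clothing.
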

\begin{proof}
    Notice that for any $A,B\in \calA_L$, the representative $(\pi_L\otimes \pi_R)(A) = \pi_L(A) \otimes \idty$ necessarily commutes with the half-chain rotation operator $U_g^R$ for all $g\in G$. Then, one may extract the following relation (for details, see Equation 4.2 in~\cite{bachmann2014gapped}):
    \begin{equation} \label{eq:mimick eqn 4.2 in bachmann}
    \inprod{\pi_L(A)\Omega_\omega, U_g^R \pi_L(B) \Omega_\omega} = \inprod{\Omega_\omega, \pi_L(A^*B) U_g^R \Omega_\omega} = \Tr \paran{ \rho \, \bbE_{A^* B}(\Pi_g) }  .
    \end{equation} By the characterization of primitivity in Theorem~\ref{thm:Primitivity}, we have
    \begin{equation}
        \text{span} \{ \rho \circ \bbE_{A_{-\ell}} \circ \dots \circ \bbE_{A_0} : \ell\in \mathbb{N}, A_i\in M_n(\C) \} = M_D(\C) . 
    \end{equation} In particular, Equation (\ref{eq:mimick eqn 4.2 in bachmann}) gives a one-to-one correspondence between the matrix elements of the representatives $U_g^R$ and the matrix elements of $\Pi_g$. Thus, $\Pi_g$ determines the full set of representation content of $U_g^R$. Since $U^R:G\to \calU(\calH_R)$ is infinite dimensional, it must contain the finite dimensional $\Pi:G\to \calU(\C^D)$ with infinite degeneracy.
\end{proof}
This theorem justifies the tensor network picture from Figure~\ref{fig:excess_spin_real} and Figure~\ref{fig:excess_spin_proj}, in that one may cut an MPS according to $\Z = \Z_L\cup \Z_R$, perform a half-chain rotation, and extract the index from the projective representation induced on the bond indices.

\begin{remark} 
Adjusting this proof for $p$-periodic pure finitely correlated states is not particularly difficult by using Proposition 3.3 in~\cite{fannes1992finitely}, but some care must be taken: the representation content of $U^R$ will in this case depend upon the exact location of the cut $\Z = \Z_L\cup \Z_R$, since the states are not translation invariant. We omit the somewhat clumsy general $p$ statement of this theorem and instead later use the basic idea of this proof to extract the index of the MPS.
\end{remark}

\section{The Dimerized Phase is a Trivial SPT Phase}\label{sec:the dimerized phase is trivial}
In this section, we wish to compute the SPT indices $h_\calS$ of the pair of gapped ground states at the south pole point of Figure~\ref{fig:SO(n) phase diagram}. As we mentioned earlier, Bachmann and Nachtergaele proved existence of the excess spin operator for a class of random loop models with $SU(2)$ invariance~\cite{bachmann2014gapped}. We will apply nearly identical methods to demonstrate existence of the excess spin operator for the south pole ground states, and this expression will be enough to compute the index. To discuss the random loop model, we will use the notation and definitions from~\cite{nachtergaele1994quasi}, mentioning the pedagogical lecture notes~\cite{ueltschi2014graphical}. For the south pole point, many of the results for $\su(2)$ case essentially carry over unchanged.

Let us say a little about the random loop model which describes the south pole point. The $n=3$ prototype is the same as $P^{(0)}$ chain, where the interaction is given only by projection onto the singlet representation of $\su(2)\cong \so(3)$ inside the product of two spin-$s$ particles, $V_0\subseteq V_{s}\otimes V_{s}$ (see Example 2 in~\cite{nachtergaele1994quasi} for a discussion of this model). For general $n$, the interaction (\ref{def:interaction aSWAP+bQ}) consists only of $-Q$, the orthogonal projection onto the maximally entangled vector $\ket{\xi} = \frac{1}{\sqrt{n}} \sum_{i=1}^n \ket{ii}$:
\begin{equation} \label{eq:south pole interaction}
    H_{[a,b]} = \sum_{x=a}^{b-1} - Q_{x,x+1} + \idty
\end{equation} where we have added an inconsequential $\idty$ for application of a theorem momentarily.
Bj\"ornberg et. al.~\cite{bjornberg2021dimerization} studied an open region of the phase diagram centered at this point and proved that in this region, there are at least two gapped ground states which are 2-translates of one another. One may extract many of the following expressions from the special $a=0$ case of their theorems, particularly Theorem 2.1. Physically, one might expect dimerization because each interaction term $-Q_{x,x+1}$ rewards high entanglement between sites $x$ and $x+1$ with low energy, but monogamy of entanglement prevents site $x$ from being simultaneously highly entangled with $x-1$ and $x+1$. Thus, it is energetically favorable for the site $x$ to ``choose'' to entangle more strongly with its left neighbor or its right neighbor, leading to translation-invariance symmetry breaking.

Let $L = \ket{\alpha}\bra{\alpha'} - \ket{\alpha'}\bra{\alpha}, 1\leq \alpha<\alpha'\leq n$ be an element of $\so(n)$. Just as one measures spin-spin correlators in $SU(2)$ invariant models, we will now measure correlators of $L$ observables (indeed, spin-spin is a special case of this via $\su(2)\cong \so(3)$). The key observation is that expectation values of functions of $L$ in a state $\omega_\beta$ at inverse temperature $\beta$ may be expressed as an integral over a space of loop configurations $\nu$ equipped with an appropriate probability measure $d\mu_\beta$. 

Recall from~\cite{nachtergaele1994quasi} that we have the following Poisson integral formula for the exponential of $H=H_{[a,b]}$ as given by (\ref{eq:south pole interaction}):
\begin{equation} \label{eq:poisson integral formula}
    e^{-\beta H} = \int d\rho_\beta(\nu) K(\nu),
\end{equation} where
\begin{itemize}
    \item $\nu$ is a loop configuration on space-time $[a,b]\times [0,\beta]$ (see Figure~\ref{fig:random_loop}), where the time direction $[0,\beta]$ has periodic boundary conditions. It is equivalently specified by a collection of ``double-bars'' at coordinates $(e_i,t_i)$ in space-time. 
    \item $d\rho_\beta$ is the probability measure of a Poisson process of intensity 1 running over the time interval $[0,\beta]$. An instance of this process is a configuration $\nu$, where this determines the location and number of double-bars $B = \{\{x,x+1\}: a\leq x\leq b-1\}$.
    \item $K(\nu)$ is the time-ordered product of operators $Q$. In particular, if the time-ordered locations of the double-bars in a configuration $\nu$ are $\{(b_1,t_1), (b_2,t_2),\dots,(b_k,t_k)\}\subseteq B\times [0,\beta]$ with $t_1<t_2<\dots <t_k$, then $K(\nu) = Q_{b_1} Q_{b_2} \dots Q_{b_k}$. 
\end{itemize} Now, we wish to eventually understand the expectation of observables which are given as functions of $L$, $f(L)\in \calA$ in the equilibrium state $\omega_\beta$, and then take $\beta\to \infty$ to recover the ground state. So we will first write
\begin{equation}
    \Tr \paran{ f(L) e^{-\beta H} } = \int d\rho_\beta(\nu) \Tr \paran{f(L) K(\nu)} . 
\end{equation} We may now apply a useful observation, quite analogous to that of the $P^{(0)}$ chain, Example 2 in~\cite{nachtergaele1994quasi}. For a fixed configuration $\nu$, this trace is not terribly difficult to compute. One realizes\footnote{This may be seen explicitly by taking an orthonormal basis of the chain $\calH_{[a,b]}$, say $\{\ket{\sigma}: \sigma = (\sigma_a,\sigma_{a+1}, \dots, \sigma_b), 1\leq \sigma_x \leq n \}$, and computing the trace by inserting a resolution of the identity $\sum_{\sigma} \inprod{\sigma,K(\nu)\sigma} = \sum_{\sigma_1,\dots,\sigma_k} \inprod{\sigma_k, Q_{b_1} \sigma_1}\inprod{\sigma_1,Q_{b_2}\sigma_2} \dots \inprod{\sigma_{k-1}, Q_{b_k} \sigma_k}$ .} that the trace of $K(\nu)$ factorizes to be a product of traces over loops $\gamma\in\nu$, and so we have $\Tr K(\nu) = \prod_{\gamma_\in \nu} \Tr \idty = n^{l(\nu)}$ where $l(\nu)$ is the number of loops in a configuration $\nu$. Now, to handle $\Tr f(L) K(\nu)$, we describe a couple of graphical rules. We may think of the action of $L_x, x\in \Z$ as landing on the loop passing through $x$. We may of course ``slide'' $L_x$ up and down without issue. If $L_x$ encounters a double-bar, then we may use the following observation, similar to that of the $P^{(0)}$ chain in Example 2~\cite{nachtergaele1994quasi}: since $\ket{\xi} = \frac{1}{\sqrt{n}}\sum_{i=1}^n \ket{ii}$ is the trivial representation of $\so(n)$ and $Q = \ket{\xi}\bra{\xi}$ we have
\begin{equation} \label{eq:antisymmetry of projection}
    (L\otimes \idty + \idty\otimes L)\ket{\xi} = 0 , \text{ which gives } (L\otimes \idty)Q = -(\idty\otimes L) Q . 
\end{equation} From this observation, we may freely move the $L_x$ along a loop, imposing antisymmetry whenever we hit a double-bar. We may track the sign of $L_x$ as we move it by assigning the even and odd sublattices a sign, $+$ or $-$ (see Figure~\ref{fig:random_loop}). When two $L_x$'s hit one another, since $L = \ket{\alpha}\bra{\alpha'} - \ket{\alpha'}\bra{\alpha}$, we may quickly calculate powers of these matrices:
\begin{equation}
    L^k = \begin{cases}
        (-1)^{(k-1)/2} L & \text{if } k \text{ is odd} \\
        (-1)^{k/2} \paran{ \ket{\alpha}\bra{\alpha} + \ket{\alpha'}\bra{\alpha'}} & \text{if } k \text{ is even}.
    \end{cases}
\end{equation} If a loop has no copies of $L$, it has trace $\Tr \idty = n$. If any loop has an odd number of $L$'s, it is traceless, and if it has an even number of $L$'s, it has $\Tr L^k = 2 (-1)^{k/2}$. We may take the trace of a loop once all $L$'s have been collected into a single location using this graphical calculus, and then the trace $\Tr f(L) K(\nu)$ still factorizes into a product of traces of loops. This allows us to calculate any $\Tr f(L) K(\nu)$. For instance, if $f(L) = L_x L_y$, then 
\begin{equation}
    \Tr L_x L_y K(\nu) = 2\, n^{l(\nu)-1} (-1)^{\abs{x-y}+1}  I(x\sim y)
\end{equation} where $I(x\sim y)$ is the indicator function which returns 1 when sites $x$ and $y$ are on the same loop. In general, it is clear that given any loop configuration $\nu$, we have a well-defined linear functional $E_\nu$ on the algebra of local observables which computes
\begin{equation}
    E_\nu(f(L)) = \frac{1}{n^{l(\nu)}} \Tr\paran{ f(L) e^{-\beta H} } = \frac{1}{n^{l(\nu)}} \prod_{\gamma\in \nu} F_f(\gamma) , 
\end{equation} where $F_f$ is defined on a loop $\gamma$ by using the graphical calculus which allows us to move $L$'s followed by taking the trace. For particular choices of observable, one may derive closed form expressions for this quantity in terms of e.g. sums over consistent spin configurations--see for example Equation 2.5 in~\cite{bjornberg2021dimerization}.
Now, using the Poisson integral formula (\ref{eq:poisson integral formula}), we may compute expectations of the observables $f(L)$ in the equilibrium state $\omega_\beta$ by normalizing 
\begin{equation}
    \omega_\beta(f(L)) = \frac{\Tr \paran{(f(L)) e^{-\beta H}}}{\Tr e^{-\beta H}} = \int d\mu_{\beta} E_\nu(f(L)),
\end{equation} where $d\mu_{\beta} = Z_\beta^{-1} n^{l(\nu)} d\rho_\beta$ with $Z_\beta = \Tr e^{-\beta H}$ being the partition function. Notice that for example, the correlator
\begin{equation} \label{eq:correlator and probability that x is connected to y}
    \omega_\beta(L_x L_y) \propto (-1)^{\abs{x-y}+1} \text{Prob}_{\mu_\beta}(x\sim y),
\end{equation} namely the probability that $(x,0)$ and $(y,0)$ belong to the same loop. 
The ground state expectations $\omega$ may then be computed by taking the limit $\beta\to \infty$.

\begin{figure}
\centering \includegraphics[scale=0.7]{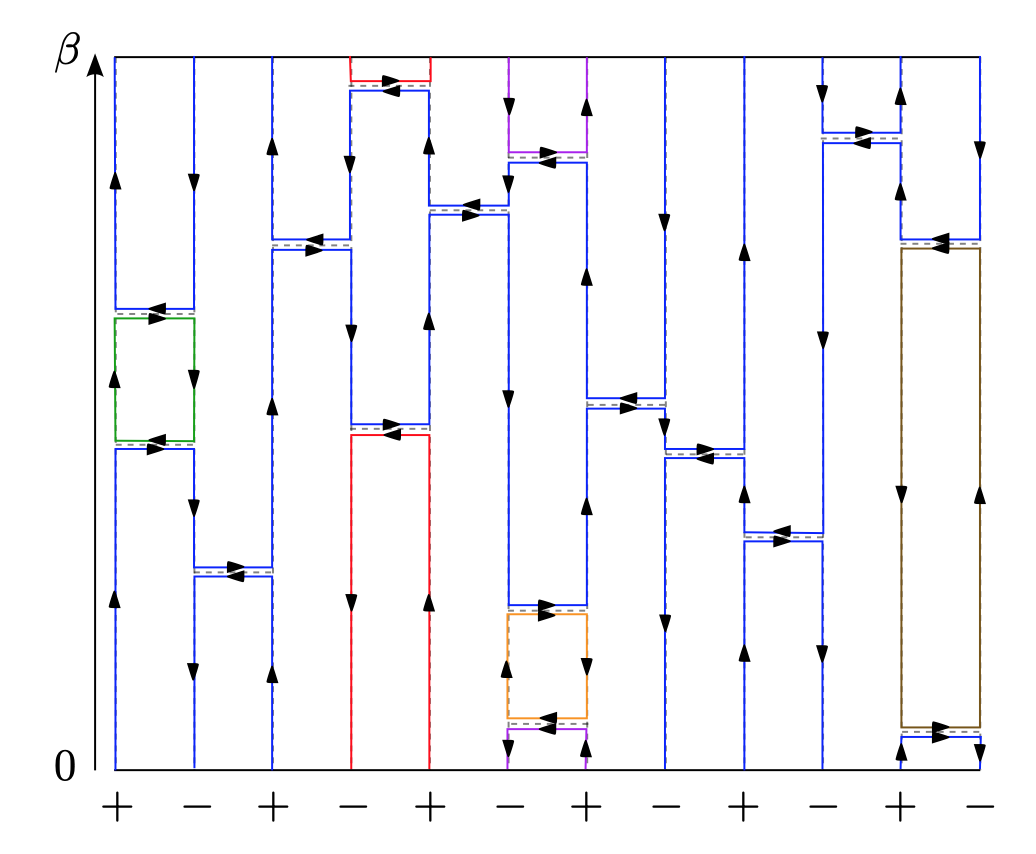}
\captionof{figure}{ (Figure from~\cite{bachmann2014gapped}) One particular loop configuration $\nu$ for the south pole model on the finite chain $[a,b]$. The distribution of ``double bars'' in the space-time $[a,b]\times [0,\beta]$ is given by the Poisson process $\rho_\beta$. At each double bar coordinate $(t_i,e_i)$, the projection $Q^{e_i}$ imposes antisymmetry by (\ref{eq:antisymmetry of projection}). If $f(L)$ acts on this configuration, this antisymmetry allows us to ``slide'' $k$ copies of $L$ on a given loop in a configuration to get a single copy of $\pm L^k$, whence we may compute the trace of $f(L) K(\nu)$ as a product of traces of loops.}
\label{fig:random_loop}
\end{figure}

Now, we wish to describe the excess spin $\sum_{x\geq 1} L_x$ which generates right half-chain rotations. The insight of Bachmann and Nachtergaele was that the exponential of this (formal) sum converges in the GNS representation of the ground state $\omega$. The key observation is that closed loops contribute no spin, which one may see by Equation (\ref{eq:antisymmetry of projection}). Thus, the only possible source of spin comes from loops which intersect both the left and right half-chain, which we may express as loops enclosing the point $(1/2,0)$. When $\omega$ is gapped, it has exponential decay of correlations, and so one can see by Equation (\ref{eq:correlator and probability that x is connected to y}) that the probability that $\nu$ is a configuration containing a loop connecting sites $x$ and $y$  vanishes exponentially in $\abs{x-y}$. We define the approximant $\hat{L}^R(\epsilon)\in \calA_{R}$ for $\epsilon>0$ by
    \begin{equation}
        \hat{L}^R(\epsilon) = \sum_{x\geq 1} e^{-\epsilon x} L_x . 
    \end{equation}

    One can carefully track the estimates in~\cite{bachmann2014gapped} to see that the following Lemma and Theorem may be readily generalized from the $\su(2)\cong \so(3)$ case to the $\so(n)$ case. Note that the correlator decay condition in the original statement of this theorem is automatically satisfied by both south pole ground states, since they are gapped~\cite{bjornberg2021dimerization} and thus have exponential decay of correlations~\cite{nachtergaele2006lieb,hastings2006spectral}.
    \begin{lemma} \label{lem:mimicking bachmann nachtergaele} (Lemma 3.3~\cite{bachmann2014gapped}) 
    Let $\omega$ be one of the south pole ground states. For any configuration $\nu$ of loops $\gamma$, let 
    \begin{equation}\label{eqn:excess spin acts on loops containing origin}
        \hat{L}_\nu^R := \sum_{x\geq 1 : N_{(1/2,0)}(\gamma(x))} L_x , 
    \end{equation} where $\gamma(x)$ denotes the loop to which $(x,0)$ belongs, and $N_{(1/2,0)}(\gamma)$ is the indicator function of $(1/2,0)\in \emph{int}(\gamma)$. then
    \begin{equation}
        \lim_{\eps\to 0} \int d\mu_\infty (\nu) E_\nu \paran{(\hat{L}^R(\eps) - \hat{L}_\nu^R)^2} = 0.
    \end{equation}
    \end{lemma}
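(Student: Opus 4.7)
The plan is to mimic the proof of Lemma 3.3 in~\cite{bachmann2014gapped}, transferring the argument from their $SU(2)$-invariant setting to our $SO(n)$-invariant one. The algebraic inputs transfer verbatim: $L \in \so(n)$ is traceless, the graphical rule $(L\otimes \idty + \idty\otimes L)Q = 0$ is precisely (\ref{eq:antisymmetry of projection}), and the two south pole ground states are gapped~\cite{bjornberg2021dimerization} and hence exhibit exponential decay of correlations. First I would expand
\[
\bigl(\hat{L}^R(\eps) - \hat{L}_\nu^R\bigr)^2 = \sum_{x, y \geq 1} c_\nu^\eps(x)\, c_\nu^\eps(y)\, L_x L_y, \qquad c_\nu^\eps(x) := e^{-\eps x} - \mathbf{1}[N_{(1/2, 0)}(\gamma(x))],
\]
apply $E_\nu$, and use that $E_\nu(L_x L_y)$ vanishes unless $x$ and $y$ lie on a common loop (by tracelessness of $L$ and the loopwise factorization of $E_\nu$), with uniform bound $|E_\nu(L_x L_y)| \leq C$. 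Because $x \sim_\nu y$ forces $\gamma(x) = \gamma(y)$, the surviving pairs split into exactly two cases according to whether the common loop encloses $(1/2, 0)$ or not.

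For pairs on a loop enclosing the cut, the coefficient $c_\nu^\eps(x) c_\nu^\eps(y) = (e^{-\eps x} - 1)(e^{-\eps y} - 1)$ tends to zero pointwise and is uniformly bounded by $1$. Such a loop must also visit some $y \leq 0$, so (\ref{eq:correlator and probability that x is connected to y}) combined with the exponential decay of correlations $|\omega(L_x L_y)| \leq C' e^{-c'|x-y|}$ yields $\mu_\infty\{\nu : x \in \gamma,\ \gamma \text{ encloses cut}\} \leq C e^{-c x}$. Fubini together with dominated convergence then sends the contribution of this case to zero as $\eps \to 0$.

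The main obstacle is the second case, pairs on a bulk loop $\gamma \subset \Z_R$: here $e^{-\eps(x+y)} \to 1$ so no pointwise smallness is available. The key observation is that $E_\nu(L_x L_y) = \tfrac{2}{n}\, \sigma_\gamma(x)\, \sigma_\gamma(y)$ whenever $x, y$ lie on a common loop $\gamma$, where $\sigma_\gamma(x) \in \{\pm 1\}$ is the sign obtained by sliding $L$ from a chosen basepoint on $\gamma$ to $x$ using (\ref{eq:antisymmetry of projection}); consequently each bulk loop contributes $\tfrac{2}{n}\bigl(\sum_{x \in \gamma} e^{-\eps x} \sigma_\gamma(x)\bigr)^2$. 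Closedness of $\gamma$ together with the antisymmetry relation forces $\sum_{x \in \gamma} \sigma_\gamma(x) = 0$, so the inner sum equals $\sum_{x \in \gamma} (e^{-\eps x} - 1) \sigma_\gamma(x) = O(\eps \cdot \mathrm{diam}(\gamma) \cdot |\gamma|)$, and summing this $O(\eps^2 \cdot \mathrm{diam}(\gamma)^2 \cdot |\gamma|^2)$ per-loop estimate against $\mu_\infty$ using exponential tails for loop diameters (again a consequence of the gap) yields a uniform bound that vanishes as $\eps \to 0$. Making the graphical $\sigma_\gamma$ calculus rigorous — particularly verifying $\sum_{x \in \gamma} \sigma_\gamma(x) = 0$ for closed bulk loops and justifying the interchange of summation over loops, integration over $\mu_\infty$, and the limit $\eps \to 0$ — will be the most delicate step. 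Once this cancellation is in place, the probabilistic estimates from the gap and the dominated convergence argument proceed in close parallel to Bachmann-Nachtergaele.
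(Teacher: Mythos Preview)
Your proposal is correct and takes the same approach as the paper, which gives no independent proof but simply remarks that ``one can carefully track the estimates in~\cite{bachmann2014gapped}'' once the $\so(n)$ inputs---tracelessness of $L$, the antisymmetry relation~(\ref{eq:antisymmetry of projection}), and exponential decay from the gap~\cite{bjornberg2021dimerization}---are in place. Your sketch in fact supplies more detail than the paper does; the delicate cancellation $\sum_{x\in\gamma}\sigma_\gamma(x)=0$ you flag is precisely the statement ``closed loops contribute no spin'' that the paper invokes just before the lemma.
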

    
    \begin{theorem} \label{thm:excess spin random loop} (Theorem 3.4~\cite{bachmann2014gapped}) Let $\omega$ be one of the south pole ground states, and let $(\calH_L\otimes \calH_R,\pi_L\otimes \pi_R,\Omega_\omega)$ denote its GNS representation. Then the strong limit
    \begin{equation}
        \hat{U}^R_g = \slim_{\epsilon\to 0} e^{i g \pi_R(\hat{L}^R(\epsilon))}
    \end{equation} exists and defines an infinite-dimensional, strongly continuous representation of $\hat{U}^R:Spin(n)\to \calU(\calH_R)$.
    \end{theorem}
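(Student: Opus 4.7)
The proof follows the strategy of Bachmann--Nachtergaele's Theorem~3.4 in~\cite{bachmann2014gapped} essentially unchanged, since every step uses only the random-loop representation of $\omega$, the $L^2$ estimate of Lemma~\ref{lem:mimicking bachmann nachtergaele}, and abstract GNS machinery. The crucial point is that the antisymmetry relation~(\ref{eq:antisymmetry of projection}) holds for any $L\in\so(n)$ (not merely $L\in\su(2)$), so the graphical loop calculus transcribes to the $\so(n)$ setting verbatim. The plan is to first establish convergence of $\pi_R(\hat L^R(\epsilon))$ on a dense domain of $\calH_R$, then upgrade to strong convergence of the unitary exponentials via Stone-type arguments, and finally assemble the resulting one-parameter subgroups into a representation of $Spin(n)$.

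First I would pass to a dense domain of convergence. From the random-loop formula $\omega(A^*A)=\int d\mu_\infty(\nu)\,E_\nu(A^*A)$ and Lemma~\ref{lem:mimicking bachmann nachtergaele},
\begin{equation*}
\|(\idty\otimes\pi_R(\hat L^R(\epsilon)-\hat L^R(\epsilon')))\Omega_\omega\|^2 = \int d\mu_\infty(\nu)\,E_\nu\!\bigl((\hat L^R(\epsilon)-\hat L^R(\epsilon'))^2\bigr) \xrightarrow{\epsilon,\epsilon'\to 0} 0.
\end{equation*}
To extend this Cauchy property to the dense subspace $(\idty\otimes\pi_R(\calA_R^{loc}))\Omega_\omega$, use the commutator identity $[\pi_R(\hat L^R(\epsilon)),\pi_R(A)]=\pi_R([\hat L^R(\epsilon),A])$ and the fact that for local $A$ the commutator $[\hat L^R(\epsilon),A]$ stabilizes to a finite sum supported in the support of $A$, with norm controlled uniformly in $\epsilon$. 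Descending to a dense domain $\calD\subseteq\calH_R$ then uses the split property (the Schmidt decomposition of $\Omega_\omega$ concretely identifies $\calH_R$ as the right Schmidt factor). This defines a symmetric operator $\pi_R(\hat L^R)$ on $\calD$ as the strong limit of $\pi_R(\hat L^R(\epsilon))$.

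Second, essential self-adjointness follows from $L^2$-integrability in the loop representation: $\hat L^R_\nu$ from~(\ref{eqn:excess spin acts on loops containing origin}) is supported on the (almost-surely finite) loop through the origin, so all even moments are uniformly bounded under $\mu_\infty$, providing a core of analytic vectors. Nelson's analytic vector theorem then yields essential self-adjointness and strong resolvent convergence $\pi_R(\hat L^R(\epsilon))\to\overline{\pi_R(\hat L^R)}$. The Trotter--Kato theorem thus delivers strong convergence $e^{ig\pi_R(\hat L^R(\epsilon))}\to \hat U_g^R$ for each $g\in\R$ and each fixed $L\in\so(n)$.

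Finally, to promote these one-parameter subgroups into a strongly continuous representation of $Spin(n)$, I would let $L$ range over a basis $\{L_j\}$ of $\so(n)$ and verify the $\so(n)$ commutation relations in the strong limit, using that $[\hat L^R_j(\epsilon),\hat L^R_k(\epsilon)]=\widehat{[L_j,L_k]}^R(2\epsilon)$ (since operators at distinct sites commute) and passing to the limit on $\calD$. Nelson's integrability theorem then assembles these generators into a strongly continuous unitary representation of the simply connected $Spin(n)$ on $\calH_R$, which is automatically infinite-dimensional because $\calH_R$ is. The main technical hurdle will be checking essential self-adjointness of $\pi_R(\hat L^R)$ \emph{together with} the existence of a common invariant analytic-vector core for all the $\pi_R(\hat L^R_j)$, so that Nelson's integrability theorem actually applies in the unbounded setting; these are the delicate points already handled by Bachmann--Nachtergaele in the $\su(2)$ case, and they transcribe with the obvious changes, since both the moment bounds and the commutator identity rely only on antisymmetry at double-bars and the Lie algebra structure, both of which are preserved in passing from $\so(3)$ to $\so(n)$.
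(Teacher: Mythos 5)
The paper does not itself prove this theorem: it is stated as a direct transcription of Theorem~3.4 of~\cite{bachmann2014gapped}, with the surrounding text asserting that ``one can carefully track the estimates'' to pass from $\so(3)$ to $\so(n)$. Your outline therefore goes beyond what the paper records, and the load-bearing observations are correct ones: the graphical calculus rests only on the antisymmetry relation~(\ref{eq:antisymmetry of projection}), which is a statement about $\ket{\xi}$ carrying the trivial representation and holds for every $L\in\so(n)$; the Cauchy property of $\pi_R(\hat L^R(\epsilon))\Omega_\omega$ follows from Lemma~\ref{lem:mimicking bachmann nachtergaele} by a triangle inequality against $\hat L^R_\nu$; and the identity $[\hat L^R_j(\epsilon),\hat L^R_k(\epsilon)]=\widehat{[L_j,L_k]}^R(2\epsilon)$ is correct because cross-site terms vanish.

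The step I would tighten is the essential self-adjointness and Nelson integrability argument. Asserting that ``all even moments are uniformly bounded'' under $\mu_\infty$ does not by itself furnish analytic vectors; you need factorial-type bounds of the form $\|\pi_R(\hat L^R)^k\psi\|\le C\,k!\,r^k$ on a dense invariant domain, and (for Nelson's theorem) a \emph{common} such domain for the whole family $\{\pi_R(\hat L^R_j)\}$. In the loop representation these bounds require controlling the distribution of the size of the loop through the origin, not just the two-point connectivity probability $\mathrm{Prob}_{\mu_\infty}(x\sim y)$, and the measure carries the $n$-dependent weight $n^{l(\nu)}$. These estimates do hold (they are what makes the cited theorem true), but your sketch treats them as given; they are exactly the content that the phrase ``carefully track the estimates'' in the paper is alluding to, and a complete proof would need to supply them rather than delegate them.

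Also note that the paper's Theorem~\ref{thm:bachmann excess spin FCS} handles the analogous statement for finitely correlated states with a different mollifier $f_\ell$ built from a plateau function rather than the exponential cutoff $e^{-\epsilon x}$; your argument should be phrased consistently with whichever regularization is being used, since the Cauchy estimate from Lemma~\ref{lem:mimicking bachmann nachtergaele} is stated specifically for the exponential cutoff.
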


    We now finally arrive at the main theorem for this section. 
    \begin{theorem} \label{thm:south pole ground state is trivial SPT}
        Let $\omega$ be a ground state of the south pole point. The representation $\hat{U}^R_g:Spin(n)\to \calU(\calH_R)$ descends to a genuine representation of $SO(n)$.
    \end{theorem}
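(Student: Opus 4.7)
The goal is to show that $\hat{U}^R$, which by Theorem \ref{thm:excess spin random loop} is a strongly continuous representation of $Spin(n)$, factors through the double cover $p:Spin(n)\to SO(n)$. Equivalently, we need $\hat{U}^R_{-\idty} = \idty$ for $-\idty \in \ker(p)$, or equivalently that the corresponding cohomology class in $H^2(SO(n),U(1))\cong \Z_2$ is trivial. My plan is to exploit the random loop structure to show that each loop contributes a copy of the \emph{defining} representation of $SO(n)$ (a true representation), not a half-integer spin (projective) representation.

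The first step would be to unpack Lemma \ref{lem:mimicking bachmann nachtergaele}, which replaces the generator $\hat{L}^R(\eps)$ by the configuration-dependent operator $\hat{L}_\nu^R = \sum_{x\geq 1 : N_{(1/2,0)}(\gamma(x))} L_x$, a sum restricted to sites on loops enclosing the cut point $(1/2,0)$. Using the graphical calculus from Equation (\ref{eq:antisymmetry of projection}), for each loop $\gamma$ crossing the cut the intersections $\gamma \cap (\Z\times\{0\})$ carry the singlet representation $V_0 \subseteq V^{\otimes |\gamma|}$ of $SO(n)$, and slicing this singlet at the boundary attaches a copy of $V=\C^n$ (the defining representation) to the right half-chain. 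Antisymmetry lets us collapse $\sum_{x\in \gamma\cap \Z_R} L_x$ (acting on the right half of the singlet) to a single $\pm L$ acting on this $\C^n$, so on the loop's Hilbert space the one-parameter subgroup $t\mapsto \exp(it L)$ is precisely the defining representation of $SO(n)$.

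The second step is to assemble these per-loop representations. For each $g \in Spin(n)$ and each loop configuration $\nu$, the operator $\exp(ig\,\hat{L}_\nu^R)$ acts as a tensor product of defining-representation matrices, one per loop crossing the cut. Since the defining representation satisfies $\exp_{SO(n)}(2\pi L_{12}) = \idty$, and $-\idty_{Spin(n)} = \exp_{Spin(n)}(2\pi L_{12})$, we get $\exp(i\,2\pi\,\hat{L}_\nu^R) = \idty$ on every loop's Hilbert space, and hence on the whole $\calH_R$ after integrating against $d\mu_\infty(\nu)$ (the limit in Lemma \ref{lem:mimicking bachmann nachtergaele} justifies passing $-\idty_{Spin(n)}$ under the strong limit defining $\hat{U}^R_{-\idty}$). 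Therefore $\hat{U}^R$ kills the kernel of $p$ and descends to a genuine representation of $SO(n)$.

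The main obstacle will be making the ``tensor product over loops'' rigorous within the abstract GNS space $\calH_R$: there are typically infinitely many loops crossing the cut in the $\beta\to\infty$ limit, and one must identify the representation $\pi_R$ compatibly with the loop decomposition of configurations. A cleaner alternative, which I would pursue in parallel, is to invoke the fact that the south pole ground states are pure $2$-periodic finitely correlated states (a consequence of the dimerization results of~\cite{nachtergaele2017direct,bjornberg2021dimerization}), identify their bond representation as the defining representation of $SO(n)$ on $\C^n$ via the Fundamental Theorem of MPS (Theorem \ref{thm:Fundamental Theorem of MPS} together with Remark \ref{rem:continuity of fundamental theorem of MPS}), and then apply a $2$-periodic extension of Theorem \ref{thm:rep content excess spin MPS} to conclude that $\hat{U}^R$ contains only the defining representation of $SO(n)$ (with infinite multiplicity) and so descends to $SO(n)$.
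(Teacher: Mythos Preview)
Your main approach is essentially the same as the paper's: reduce to the configuration-dependent generator $\hat{L}^R_\nu$ via Lemma~\ref{lem:mimicking bachmann nachtergaele}, observe that each loop crossing the cut contributes a copy of the defining representation $V=\C^n$, and conclude that $\hat{U}^R$ is built out of genuine $SO(n)$ representations. The paper's implementation computes all matrix elements $\inprod{\pi_\omega(A)\Omega_\omega,\hat{U}^R_g\pi_\omega(B)\Omega_\omega}$ as integrals over configurations, splits according to the number $k=\#N_{(1/2,0)}(\nu)$ of loops enclosing the cut, and observes that on each stratum $\hat{L}^R_\nu$ acts as the tensor representation $V^{\otimes k}$; this is equivalent to your check that $-\idty_{Spin(n)}$ acts trivially.

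Your stated obstacle, however, rests on a misconception: there are \emph{not} typically infinitely many loops crossing the cut. The paper's key input (noted in the proof of Theorem~\ref{thm:excess spin random loop}) is that because $\omega$ is gapped and hence has exponential decay of correlations, the set of configurations with infinitely many loops enclosing $(1/2,0)$ has $\mu_\infty$-measure zero. So for a.e.\ $\nu$ the number $k$ is finite, the action is that of the finite tensor power $V^{\otimes k}$, and the integral over $\nu$ yields a direct sum of genuine $SO(n)$ representations. No delicate infinite-tensor-product argument is needed.

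Your alternative MPS route is not what the paper does and would require extra input: the south pole ground states are not constructed as explicit MPS in the paper, and your identification of the bond representation as the defining representation on $\C^n$ is asserted without justification. The random loop argument is both shorter and self-contained here.
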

    \begin{proof}
        A consequence of the proof of Theorem \ref{thm:excess spin random loop} in~\cite{bachmann2014gapped} is that since $\omega$ is gapped and so has exponential decay of correlations, the subset of configurations $\nu$ with infinitely many loops containing $(1/2,0)$ is a measure zero subset. Let $\#N_{(1/2,0)}(\nu)$ denote the number of loops in $\nu$ containing $(1/2,0)$.
        Then, letting $A,B\in \calA_{loc}$ be local observables, we can compute matrix elements of this representation in the GNS representation, which for brevity we will here write $(\calH_\omega, \pi_\omega, \Omega_\omega)$: 
        \begin{align*} \inprod{\pi_\omega (A), \hat{U}_g^R \pi_\omega(B)} &= \omega(A^* \exp(
        ig \hat{L}^R_\nu) B) \\
        &= \int d\mu(\nu) E_\nu(A^* \exp(
        ig \hat{L}^R_\nu) B)  \\
        &= \sum_{k\geq 0} \int_{\{\nu: \#N_{(1/2,0)}(\nu) = k\}} d\mu (\nu) E_\nu (A^* \exp(
        ig \hat{L}^R_\nu) B) ,
        \end{align*} where in the third line we have used that $\#N(1/2,0)$ is finite with probability 1. Consider a fixed configuration $\nu$ with $k=\# N_{(1/2,0)}(\nu)$. Unraveling the Definition (\ref{eqn:excess spin acts on loops containing origin}), we see that if $V\cong \C^n$ denotes the defining representation of $\so(n)$, then $\hat{L}_\nu^R$ acts as the tensor representation $V^{\otimes k}$. In particular, since the defining representation of $\so(n)$ $V$ lifts to a genuine representation of $SO(n)$, any tensor power of the defining representation of $\so(n)$ similarly lifts to a genuine representation of $SO(n)$. This means that the matrix elements above exactly correspond to those of a sum of genuine representations, and so $U_g^R$ itself is a genuine representation.
    \end{proof}
    \begin{corollary}\label{cor:south pole ground state is trivial SPT}
        Let $G=SO(n)$. The pair of second cohomology classes $h_\calS$ of the south pole ground states from Theorem~\ref{thm:SPT invariants of ground state space} is a pair of trivial cohomology classes.
    \end{corollary}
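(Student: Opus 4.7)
The plan is to combine the construction of the SPT index from Theorem~\ref{thm:split state index} with the explicit unitary implementation of half-chain rotations afforded by Theorem~\ref{thm:excess spin random loop}. Let $\omega$ be either of the two south pole ground states $\omega_\pm$. First I would invoke Theorem~\ref{thm:matsui gapped ground states are split} to note that $\omega$, being a gapped ground state, is split with respect to $\Z = \Z_L \cup \Z_R$, so the machinery of Theorem~\ref{thm:split state index} applies. Writing the GNS triple as $(\calH_L \otimes \calH_R, \pi_L \otimes \pi_R, \Omega_\omega)$, the $SO(n)$-invariance of $\omega$ yields an $*$-automorphism $\gamma_g$ of $\calB(\calH_R)$ satisfying $\gamma_g \circ \pi_R = \pi_R \circ \alpha^R_g$, and this automorphism is implemented by a strongly continuous projective unitary representation $[u_R] : SO(n) \to P\calU(\calH_R)$ whose cohomology class is the SPT index $h_\omega \in H^2(SO(n),U(1))$.

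Next I would identify this projective representation with the one coming from the excess spin operator. By Theorem~\ref{thm:excess spin random loop}, the half-chain rotation can be implemented by the strongly continuous representation $\hat{U}^R: Spin(n) \to \calU(\calH_R)$ defined as a strong limit of exponentiated truncated excess spin operators, and by construction this representation satisfies
\begin{equation}
    \Ad(\hat{U}^R_{\tilde{g}}) \circ (\pi_L \otimes \pi_R)(A) = (\pi_L \otimes \pi_R)(\alpha^R_{p(\tilde{g})}(A)) \qquad \text{for all } A \in \calA, \; \tilde{g} \in Spin(n),
\end{equation}
where $p: Spin(n) \to SO(n)$ is the covering projection. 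In particular, restricting this relation to the right half-chain observables and using that $\pi_R$ is irreducible, $\hat{U}^R_{\tilde{g}}$ implements $\gamma_{p(\tilde{g})}$ on $\calB(\calH_R)$. By the uniqueness of the unitary implementation of an automorphism of $\calB(\calH_R)$ up to a phase (Theorem~\ref{thm:every auto is inner} and its corollary), the projective representation $[u_R]\circ p$ of $Spin(n)$ agrees with $[\hat{U}^R]$ up to a coboundary, so $[u_R]$ and the projective representation of $SO(n)$ induced by $\hat{U}^R$ represent the same class in $H^2(SO(n),U(1))$.

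Finally, I would apply Theorem~\ref{thm:south pole ground state is trivial SPT}, which asserts that $\hat{U}^R$ descends to a genuine representation of $SO(n)$. The existence of such a lift means exactly that the induced projective representation of $SO(n)$ splits, so by Lemma~\ref{lem:central extension splits for coboundaries} (equivalently by Remark~\ref{rem:equivalence of Borel group cohomology}) its associated class in $H^2(SO(n),U(1))$ is trivial. Therefore $h_\omega$ is the trivial class for each of $\omega_+$ and $\omega_-$, and the pair $h_\calS$ consists of two copies of the trivial class.

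The main obstacle I anticipate is the identification step in the second paragraph: while Theorem~\ref{thm:split state index} produces $[u_R]$ rather abstractly from the split structure and the GNS construction, Theorem~\ref{thm:excess spin random loop} produces $\hat{U}^R$ via an explicit strong-limit construction tailored to the random loop representation. Reconciling these two constructions requires carefully verifying that both implement the same $*$-automorphism $\gamma_g$ on $\calB(\calH_R)$ and then invoking the rigidity of unitary implementations modulo phases. The cleanest way to handle this is probably to observe that any two strongly continuous projective representations of $SO(n)$ on $\calH_R$ implementing the same strongly continuous action of $SO(n)$ by $*$-automorphisms on $\calB(\calH_R)$ must yield the same cohomology class, so the explicit description from the excess spin construction suffices to pin down $h_\omega$.
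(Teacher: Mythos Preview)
Your proposal is correct and follows essentially the same route as the paper, which states the result as an immediate corollary of Theorem~\ref{thm:south pole ground state is trivial SPT} without further argument. The identification step you flag as the main obstacle is precisely the content the paper develops in Section~\ref{sec:excess spin operator}: the excess spin operator $\hat{U}^R$ is presented there as a concrete implementation of the half-chain rotation automorphism $\gamma_g$ whose abstract implementer $[u_R]$ defines the index in Theorem~\ref{thm:split state index}, and uniqueness up to phase (Theorem~\ref{thm:every auto is inner} and its corollary) pins down the cohomology class exactly as you argue.
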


\chapter{\texorpdfstring{$SO(n)$}{SO(n)} AKLT Chains}\label{ch:SO(n)_Haldane_chains}

\section{The Frustration-Free \texorpdfstring{$SO(n)$}{SO(n)} AKLT point}
    Let us consider the $SO(n)$ AKLT point in the red phase of the phase diagram, Figure \ref{fig:SO(n) phase diagram}. When $n=3$, this phase is exactly the Haldane phase and the $SO(n)$ AKLT point corresponds to the AKLT chain.
    At this point, the nearest-neighbor interaction $h$ defined in (\ref{def:interaction aSWAP+bQ}) is $h=a(\SWAP - 2Q)$. We may set $a=1$ and add a copy of $\idty$, which does not affect our analysis of ground states except to give a more convenient $h\geq 0$:
    \begin{equation}\label{def:FF interaction h=1+SWAP-2Q}
        h = \idty + \SWAP - 2Q, \qquad H_{\ell} = \sum_{x=1}^{\ell-1} h_{x,x+1} . 
    \end{equation} We will see that this is a frustration-free point (recall Definition \ref{def:frustration free}) in the phase diagram by demonstrating that on finite chains $[1,\ell]$, the Hamiltonian $H_\ell = \sum_{x=1}^{\ell-1} h_{x,x+1}$ and the terms $h_{x,x+1}$ in it share a common eigenvector belonging to their respective ground state spaces. These ground states will be given by MPSs, hence the name. We can quickly compute the lowest energy eigenspace $\calG\subseteq \C^n\otimes \C^n$ of a single copy of $h$ using the irrep decomposition (\ref{eq:irrep decomposition of two site O(n) rep}) and $[\SWAP,Q]=0$: $h$ has eigenvalue $2$ on $M_2$ and eigenvalue $0$ on the ground state space $\calG$, which consists of antisymmetric vectors and the maximally entangled $\ket{\xi}$, 
    \begin{equation} \label{eq:two site ground states MPS}
        \calG = \Exterior^2(\C^n) \oplus \C \ket{\xi} .
    \end{equation}
    
    In Section \ref{sec:the ground states of the MPS point}, we revisit the construction in~\cite{tu2008class} of MPS ground states for the Hamiltonian $H_\ell$ and then use the equivalent finitely correlated states picture~\cite{fannes1992finitely} to see that in the limit of the infinite chain $\Z$, these MPS ground states correspond to exactly one pure ground state $\omega$ when $n$ is odd and a pair of pure 2-periodic ground states $\omega_\pm$ when $n$ is even. This distinction between even and odd behavior was first noted in~\cite{tu2008class}. The odd $n$ case analogizes the AKLT chain to higher dimensions and has been well studied, but the even $n$ case is interesting in its own right and demonstrates a variety of unusual behaviors which we investigate in Section \ref{sec:symmetry breaking from O(n) to SO(n)} and Section \ref{sec:Two Distinct Ground States on Long Chains}. 
    As a result of independent interest, we later prove in Section \ref{sec:parent property (SO(n) chains)} that these MPS ground states are in fact the only ground states for this interaction, making $H$ a parent Hamiltonian.
	
	In the next Section \ref{sec:the ground states of the MPS point}, we construct a class of matrix product ground states and their associated completely positive map $\bbE$. 
	These will recover local expectations of the translation-invariant thermodynamic limit ground state $\omega$.
	It will turn out that $\omega = \frac{1}{2}(\omega_+ + \omega_-)$, where $\omega_\pm$ are pure, 2-periodic ground states admitting matrix product ground states.

\section{\texorpdfstring{$SO(n)$}{SO(n)} AKLT Chain Ground States}\label{sec:the ground states of the MPS point}
	
	We now construct the matrix product ground states for the finite-chain Hamiltonian $H_\ell$, which was first demonstrated in the language of bond products and spin representations of $\so(n)$ in~\cite{tu2008class}.\footnote{The bond product construction presented in the original~\cite{tu2008class} has much in common with the VBS picture for the AKLT chain we presented in Section \ref{sec:The AKLT Ground State as a Valence Bond State (VBS)}.}
	We start with the bond algebra given by the rank $n$ Clifford Algebra $\calC_n$ generated by operators $\gamma_1,\dots,\gamma_n$ subject to the below anticommutation relations:
    \begin{equation}\label{def:gamma operators (final chap)}
        \gamma_i \gamma_j + \gamma_j\gamma_i = 2\delta_{ij}\idty, \qquad \gamma_i^* = \gamma_i, \qquad 1\leq i,j \leq n . 
    \end{equation} A brief review of Clifford algebras containing several pertinent definitions can be found in Section \ref{sec:Clifford algebras}. When $n$ is even, we have that as associative algebras, $\calC_n\cong M_{2^{n/2}}(\C)$, the $2^{n/2}\times 2^{n/2}$ complex matrices. In this case let the bond algebra be $\calB := \calC_n$. When $n$ is odd, $\calC_n\cong P_+\calC_n \oplus P_-\calC_n$ where $P_+\calC_n\cong P_-\calC_n \cong M_{2^{(n-1)/2}}(\C)$ as associative algebras. In this case we restrict to a subalgebra for our bond algebra $\calB := P_+\calC_n$, noting that the redundant $P_-\calC_n$ generates an equivalent MPS and so can be discarded.
	Our MPSs are then defined via maps $\psi: \calB \to (\C^n)^{\otimes \ell}$:
	\begin{equation} \label{def:MPSs of SO(n) chain}
		\psi(B) := \sum_{i_1,\dots,i_\ell} \Tr \paran{ B \gamma_{i_\ell} \dots \gamma_{i_1}} \ket{i_1, \dots, i_\ell}, \qquad B\in \calB ,
	\end{equation} where for any site $x$, $i_x = 1,\dots, n$ labels an orthonormal basis of $\C^n$.
    \begin{lemma} \label{lem:SO(n) MPS are ground states}
        The matrix product states are ground states of the Hamiltonian $H_{\ell}\geq 0$ defined by (\ref{def:FF interaction h=1+SWAP-2Q}), i.e.
        \begin{equation}
            \calG_{\ell}:= \{\psi(B): B\in \calB\} \subseteq \ker H_{\ell}.
        \end{equation}
    \end{lemma}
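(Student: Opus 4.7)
My plan is to use the fact that $H_\ell$ is a sum of nonnegative interaction terms $h_{x,x+1}\geq 0$. By Lemma~\ref{lem:frustration freeness for nonnegative operators}, it suffices to show that $h_{x,x+1}\psi(B)=0$ for every $x\in[1,\ell-1]$ and every $B\in\calB$. This is the same general strategy we used for the AKLT chain in Chapter~\ref{ch:AKLT_chain}, mimicking the proof structure in Example~\ref{ex:AKLT, revisited}.

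First I would dispose of the two-site case, showing directly that $\psi^{(2)}(B)=\sum_{i_1,i_2}\Tr(B\gamma_{i_2}\gamma_{i_1})\ket{i_1,i_2}$ lies in the two-site ground state space $\calG=\Exterior^2(\C^n)\oplus\C\ket{\xi}$ given in (\ref{eq:two site ground states MPS}). Concretely, I would verify that the coefficients of $\psi^{(2)}(B)$ in the orthonormal basis of the symmetric traceless complement $M_2$ all vanish. For a basis vector $\ket{ij}+\ket{ji}$ with $i<j$ the coefficient is
\begin{equation*}
    \Tr(B\gamma_j\gamma_i)+\Tr(B\gamma_i\gamma_j)=\Tr\bigl(B(\gamma_i\gamma_j+\gamma_j\gamma_i)\bigr)=0
\end{equation*}
by the anticommutation relation (\ref{def:gamma operators (final chap)}), and for a basis vector $\ket{11}-\ket{ii}$ the coefficient is $\Tr(B\gamma_1^2)-\Tr(B\gamma_i^2)=\Tr(B)-\Tr(B)=0$ using $\gamma_k^2=\idty$. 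Since $h$ annihilates $\calG$, this gives $h\,\psi^{(2)}(B)=0$ for every $B$.

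To promote this to arbitrary $\ell$, I would exploit cyclicity of the trace. Fix $x\in[1,\ell-1]$ and write the matrix element of $\psi(B)$ at a configuration $(i_1,\dots,i_\ell)$ as
\begin{equation*}
    \Tr(B\gamma_{i_\ell}\cdots\gamma_{i_1})=\Tr\bigl(B'(i_1,\dots,i_{x-1},i_{x+2},\dots,i_\ell)\,\gamma_{i_{x+1}}\gamma_{i_x}\bigr),
\end{equation*}
where $B'(\cdot):=\gamma_{i_{x-1}}\cdots\gamma_{i_1}\,B\,\gamma_{i_\ell}\cdots\gamma_{i_{x+2}}\in\calB$. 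Because $h_{x,x+1}$ acts trivially on the outer tensor factors, this exhibits $\psi(B)$ as a sum over the outer indices of product vectors whose middle factor is exactly the two-site MPS $\psi^{(2)}(B')$. The two-site computation then gives $h_{x,x+1}\psi(B)=0$ termwise, completing the proof.

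The only mild subtlety is checking that the intermediate elements $B'$ genuinely lie in our chosen bond algebra $\calB$; this is automatic when $n$ is even (where $\calB=\calC_n$), and for $n$ odd (where $\calB=P_+\calC_n$) it follows from the fact that conjugation by $\gamma_{i_{x-1}}\cdots\gamma_{i_1}$ and right-multiplication by $\gamma_{i_\ell}\cdots\gamma_{i_{x+2}}$ preserve the central projector $P_+$ in the relevant combinations. I do not expect any real obstacle here; the whole argument is a clean repackaging of the Clifford anticommutation relations together with cyclicity of trace.
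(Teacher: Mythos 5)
Your proof is correct, but it takes a different route from the paper's. The paper's proof applies $h_{x,x+1}=\idty+\SWAP-2Q$ \emph{directly} to $\psi(B)$ and observes that the three resulting sums cancel termwise by the Clifford relations: the $\idty$ and $\SWAP$ contributions produce $\Tr\paran{B\cdots(\gamma_{i_{x+1}}\gamma_{i_x}+\gamma_{i_x}\gamma_{i_{x+1}})\cdots}=2\delta_{i_x,i_{x+1}}\Tr\paran{B\cdots}$, which the $-2Q$ contribution exactly cancels using $\gamma_k^2=\idty$. Your argument instead factors through the two-site case: you first verify $\psi^{(2)}(B)\perp M_2$ directly on the basis of $M_2$, and then use cyclicity of trace to exhibit, for each fixed outer index string, the restriction of $\psi(B)$ to sites $x,x+1$ as $\psi^{(2)}(B')$. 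This is the same modular structure used in the AKLT proof of Chapter~\ref{ch:AKLT_chain} and is perfectly valid; it is a bit longer but makes the ``two-site kernel controls the full kernel'' mechanism more visible, which is pedagogically appealing since the same reduction underlies the parent property argument later. The ``mild subtlety'' you flag about whether $B'\in\calB$ is actually harmless for two independent reasons: first, your two-site computation ($\psi^{(2)}$ annihilating $M_2$) uses only the Clifford relations and holds for arbitrary $B'\in\calC_n$, so membership in the restricted bond algebra is not needed; second, when $n$ is odd $P_+$ is central in $\calC_n$, so $B'=\gamma_{i_{x-1}}\cdots\gamma_{i_1}B\gamma_{i_\ell}\cdots\gamma_{i_{x+2}}$ does indeed remain in $P_+\calC_n$ whenever $B$ does. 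The paper's direct calculation is shorter; your factorized version buys a cleaner conceptual picture at the cost of one extra step.
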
 
    \begin{proof}
        Take a site in the chain $x\in [1,\ell-1]$. Let us compute using the Clifford relations (\ref{def:gamma operators (final chap)}): for any $B\in \calB$, 
        \begin{align*}
            h_{x,x+1}\psi(B) &= \sum_{i_1,\dots,i_\ell} \Tr \paran{ B \gamma_{i_\ell} \dots \gamma_{i_{x+1}} \gamma_{i_x} \dots \gamma_{i_1}} \ket{i_1, \dots, i_\ell} \\
            & \qquad + \Tr \paran{ B \gamma_{i_\ell} \dots \gamma_{i_{x}} \gamma_{i_{x+1}} \dots \gamma_{i_1}} \ket{i_1, \dots, i_\ell} \\
            & \quad - 2\delta_{i_{x}, i_{x+1}} \Tr \paran{ B \gamma_{i_\ell} \dots \gamma_{i_{x+1}} \gamma_{i_x} \dots \gamma_{i_1}} \ket{i_1, \dots, i_\ell} \\
            &= 0 . 
        \end{align*} Then, $H_{\ell} \psi(B) = \sum_{x=1}^{\ell-1} h_{x,x+1} \psi(B) = 0$, which is what we wanted to show. 
    \end{proof}
 
	Associated with these MPSs is the completely positive (CP) map $\bbE: \calA_x \otimes \calB \to \calB$, where $\calA_x:= M_n(\C)$ denotes the physical on-site spin algebra given by $n\times n$ complex matrices, given by
	\begin{equation} \label{def:E}
		\bbE_A(B) := \bbE(A\otimes B) = T^*(A\otimes B)T = \frac{1}{n} \sum_{i,j = 1}^n A_{ij} \gamma_i B \gamma_j  , 
	\end{equation} where $\sqrt{n}T := \sum_{i=1}^n \ket{i} \otimes \gamma_i$ is a map such that When $n$ is even, $T:\C^{2^{n/2}}\to \C^n \otimes \C^{2^{n/2}}$, and When $n$ is odd, $T:\C^{2^{(n-1)/2}}\to \C^n \otimes \C^{2^{(n-1)/2}}$.
	Observe that $T^*T=\idty$, i.e. $T$ is an isometry. In the even $n$ case where $\calB = \calC_n$, the map $\bbE_A:\calB\to \calB$ is trivially a well defined map for all $A\in \calA_x$. In the odd $n$ case where $\calB=P_+\calC_n$, the map $\bbE_A$ is well-defined since $P_+$ is in the center of $\calC_n$ and so $\bbE_A(P_+ B P_+) = P_+ \bbE_A(B) P_+$.
 
    We turn our attention to the transfer operator $\bbE_\idty:\calB\to\calB$, whose spectral properties control the purity of the thermodynamic limiting states as per Theorem \ref{thm:Primitivity}.
	$\bbE_\idty$ is readily diagonalized once we choose a basis of $\calB$ expressed in products of our Clifford algebra generators $\{\gamma_i\}$--namely, the basis in (\ref{basis for gamma operators}) given in multi-index notation by $\{\gamma_I : \abs{I}\leq n\}$. 
    When $n$ is even, we have 
	\begin{equation} \begin{split} \label{eqn:diagonalization of E even n}
			\bbE_\idty(\idty) &= \idty ,\\ \bbE_\idty(\gamma_0) &= -\gamma_0  ,\\ 
			\bbE_\idty(\gamma_I) &= \lambda_{I} \gamma_I , \quad 1<\abs{I}<n, 
	\end{split} \end{equation} where 
	$\lambda_I = (-1)^{\abs{I}} \frac{(n-2\,\abs{I})}{n}$. 
	Note that this constant depends only on $\abs{I}$, the number of gamma operators, and crucially that $\abs{\lambda_I}<1$ whenever $\gamma_I$ is not $\idty$ or $\gamma_0$. 
    When $n$ is odd, a basis for $\calB = P_+\calC_n$ can be found by starting with the basis (\ref{basis for gamma operators}), projecting by $P_+(\cdot)P_+$, and removing linearly dependent $\gamma_I$. In any case we can quickly compute
    \begin{equation} \begin{split} \label{eq:diagonalization of E odd n}
			\bbE_\idty(P_+) &= P_+ ,\\
			\bbE_\idty(P_+ \gamma_I P_+) &= \lambda_{I} (P_+ \gamma_I P_+) , \quad 1<\abs{I}<n, 
	\end{split} \end{equation} with the same $\abs{\lambda_I} < 1$ for every basis element except $P_+$.

    Diagonalization of $\bbE_\idty$ in hand, we now pass to the finitely correlated state picture. 
	Any such state $\omega$ can be reconstructed using the CP map $\bbE:\calA_x\otimes \calB \to \calB$, combined with a positive element $e\in \calB$ and a positive linear functional $\rho \in \calB^*$ satisfying compatibility conditions $\bbE(\idty\otimes e) = e$ and $\rho(\bbE(\idty\otimes B)) = \rho(B)$ for all $B\in \calB$. 
	Here, choosing $e:=\idty\in \calB$ and $\rho :=\frac{1}{D}\Tr(\cdot)\in \calB^*$ where $D = \Tr(e)$
	does the trick. When $n$ is even, $D = 2^{n/2}$, and When $n$ is odd, $D = 2^{(n-1)/2}$.
    We can then compute the expectations in $\omega$ of local observables $A_1\otimes \dots \otimes A_\ell \in \calA_{[1,\ell]}$ by
	\begin{equation}\begin{split} \label{def:omega}
			\omega(A_1\otimes \dots \otimes A_\ell) &= \rho(\bbE_{A_1}\circ \dots \circ \bbE_{A_\ell}(e)) \\
			&= \frac{1}{D} \Tr \paran{\bbE_{A_1}\circ \dots \circ \bbE_{A_\ell}(\idty)}.
	\end{split}\end{equation} 
    Now, when $n$ is odd, the diagonalization (\ref{eq:diagonalization of E odd n}) reveals that $\bbE_\idty$ is primitive, meaning $\idty$ is the only eigenvector of eigenvalue $1$ and every other eigenvector has eigenvalue $\abs{\lambda}<1$, and so $\omega$ is a pure state (recall Definition \ref{def:primitive MPS}). 
    When $n$ is even, $\bbE_\idty$ has exactly two eigenvalues $\lambda = \pm 1$ with $\abs{\lambda}=1$, and all other eigenvalues have $\abs{\lambda}<1$. This state $\omega$ is not pure, but it is ergodic: Proposition \ref{prop:ergodic and periodic FCS decompositions} then reveals that the $-1$ eigenvalue indicates that $\omega$ decomposes into two pure 2-periodic states $\omega = \frac{1}{2}(\omega_1+\omega_2)$. 
    We will extract explicit descriptions of these states in the next section.

	\subsection{Constructing the states $\omega_\pm$ and the CP maps $\bbF^{(1)}, \bbF^{(2)}$}\label{sec:Constructing the pure states omega+- and the CP maps F1,F2}
	In this section, we restrict to the case where $n$ is even. In this case, our diagonalization of the transfer operator revealed that $\omega$ is the equal weight superposition of two 2-periodic pure states $\omega = \frac{1}{2}\paran{\omega_+ + \omega_-}$. We will recover $\omega_\pm$ explicitly and show that they are the pure $2$-periodic FCS we seek by demonstrating that their transfer operators are primitive: this is the content of Corollary \ref{cor:omega+- are pure}.
    We can recover this decomposition by expressing $\bbE_{(\cdot)}\circ \bbE_{(\cdot)}$ (\ref{def:E}) as a product of two CP maps $\bbF^{(1)}_{(\cdot)}, \bbF^{(2)}_{(\cdot)}$. 
	The ordering of these maps, or equivalently, an appropriate change in boundary conditions, will decide whether a state in question is $\omega_+$ or $\omega_-$.

	We begin by recalling the mutually commuting orthogonal projectors (see also (\ref{def:P+ and P-}))
    \begin{equation}
        P_\pm = \frac{1}{2}(\idty \pm \gamma_0) \in \calC_n . 
    \end{equation} We can use $P_+ + P_- = \idty\in \calC_n$ to decompose the right boundary condition from (\ref{def:omega}) into two pieces, yielding a decomposition $\omega = \frac{1}{2}(\omega_+ + \omega_-)$ where $\omega_\pm$ are defined by
		\begin{equation} \label{def:omega pos and neg, E maps}
				\omega_\pm(A_1\otimes\dots\otimes A_\ell) := \frac{2}{D}\Tr\paran{\bbE_{A_1}\circ \dots \circ \bbE_{A_\ell}(P_\pm)}.
		\end{equation}
			Now, observe that $\gamma_0\in \calC_n^{[ev]}$ and so both $P_\pm\in \calC_n^{[ev]}$. 
			Using the definition of the CP map $\bbE$ (\ref{def:E}), for any $A\in \calA_x$ we have the restriction $\bbE_A : \calC_n^{[ev]}\to \calC_n^{[ev]}$, since it is a linear combination of maps of the form $\gamma_i(\cdot)\gamma_j$.	
			Thus, we lose no information restricting our bond algebra $\calB$ from $\calC_n$ to $\calC_n^{[ev]}$.
			
			As discussed in Section \ref{secapp:the even subalgebra}, the even subalgebra decomposes as an associative algebra $\calC_n^{[ev]} = P_+\calC_n^{[ev]}\oplus P_-\calC_n^{[ev]}$. So, we may the projectors $P_\pm$ to block diagonalize any $B\in \calC_n^{[ev]}$:
			\[
				B = P_+ B P_+ + P_- B P_- , \qquad \text{ for all } B\in \calC_n^{[ev]}.
			\]
			Notice that since $\gamma_i\gamma_0 = -\gamma_0 \gamma_i$ for $i=1,\dots, n$, we get that $\gamma_i P_\pm = P_\mp \gamma_i$, and so the isometry $T$ defined in (\ref{def:E}) enjoys the relations
			\begin{equation} \label{eq:T swaps P+ and P-}
				(\idty\otimes P_\pm )T = T P_\mp .
			\end{equation}
			From here, we see that for any $A\in \calA_x$, $\bbE_A:\calC_n^{[ev]}\to \calC_n^{[ev]}$ must swap between the two algebras $P_+\calC_n^{[ev]}$ and $P_-\calC_n^{[ev]}$, since
	       \begin{equation*}\begin{split}
					\bbE_A (P_\pm B P_\pm) &= T^*(A\otimes P_\pm B P_\pm)T \\
					&= P_\mp T^* (A\otimes B) T P_\mp \\
					&=  P_{\mp} \bbE_A(B) P_{\mp} \, .
			\end{split}\end{equation*}
			In particular, $\bbE_A:P_\pm \calC_n^{[ev]} \to P_\mp \calC_n^{[ev]}$. 
			So, if we introduce an automorphism $\alpha: \calC_n^{[ev]}\to\calC_n^{[ev]}$ that maps $P_\mp \calC_n^{[ev]} \to P_\pm \calC_n^{[ev]}$, the map $\alpha\circ \bbE_A$ will respect the decomposition $\calC_n^{[ev]}=P_+\calC_n^{[ev]}\oplus P_-\calC_n^{[ev]}$.
			Defining $\alpha(B) := \gamma_1 B \gamma_1$ will accomplish this, since $\gamma_1 P_\pm = P_\mp \gamma_1$ (note Remark \ref{rem:choice of alpha is arbitrary}):
			\begin{equation*}\begin{split}
					\alpha \circ \bbE_A(P_\pm B P_\pm) &= \alpha (P_{\mp} \bbE_A(B) P_{\mp}) \\
					&= \gamma_1 (P_{\mp} \bbE_A(B) P_{\mp}) \gamma_1 \\
					\qquad &= P_\pm (\alpha \circ \bbE_A(B) ) P_\pm \; .
			\end{split}\end{equation*}
			In other words, the map $\alpha\circ \bbE_A$ respects the algebra decomposition $\calC_n^{[ev]}=P_+\calC_n^{[ev]}\oplus P_-\calC_n^{[ev]}$: 
			\begin{equation}
				\alpha\circ \bbE_A \Big|_{P_\pm \calC_n^{[ev]}}:P_\pm \calC_n^{[ev]} \to P_\pm \calC_n^{[ev]} . 
			\end{equation}
			We are almost ready to construct our pair of CP maps $\bbF^{(1)},\bbF^{(2)}$. But we first need a key identity allowing us to commute $\alpha$ through $\bbE_A$ at the cost of an automorphism on physical spins.
			\begin{proposition} \label{prop:commuting E with alpha identity}
				Define $\sigma:\calA_x\to \calA_x$ by
				\[
					\sigma(A) = RAR \, , \qquad R = \begin{pmatrix} -1 & & & \\ & 1 & & \\ & & \ddots & \\ & & & 1 \end{pmatrix} \, .
				\]
				Then we have for all $A\in\calA_x$,
				\[
				\alpha \circ \bbE_A = \bbE_{\sigma(A)} \circ \alpha ,
				\] where $\alpha: P_\mp \calC_n^{[ev]}\to P_\pm \calC_n^{[ev]}$ is defined by $\alpha(B) = \gamma_1 B \gamma_1$.
			\end{proposition}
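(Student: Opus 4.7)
The proof is a direct computation using the Clifford anticommutation relations. The plan is as follows.

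First I would expand both sides of the claimed identity using the definition of $\bbE$ from (\ref{def:E}). On the left,
\begin{equation*}
\alpha(\bbE_A(B)) = \gamma_1 \bbE_A(B) \gamma_1 = \frac{1}{n}\sum_{i,j=1}^n A_{ij}\, \gamma_1 \gamma_i B \gamma_j \gamma_1,
\end{equation*}
while on the right,
\begin{equation*}
\bbE_{\sigma(A)}(\alpha(B)) = \frac{1}{n}\sum_{i,j=1}^n (RAR)_{ij}\, \gamma_i \gamma_1 B \gamma_1 \gamma_j = \frac{1}{n}\sum_{i,j=1}^n R_{ii}A_{ij}R_{jj}\, \gamma_i \alpha(B) \gamma_j,
\end{equation*}
since $R$ is diagonal.

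Next I would use the Clifford relations (\ref{def:gamma operators (final chap)}) to move the outer $\gamma_1$'s on the left past $\gamma_i$ and $\gamma_j$. Defining the sign $\epsilon_i := +1$ if $i=1$ and $\epsilon_i := -1$ otherwise, one has $\gamma_1 \gamma_i = \epsilon_i \gamma_i \gamma_1$ (and similarly $\gamma_j \gamma_1 = \epsilon_j \gamma_1 \gamma_j$), so
\begin{equation*}
\gamma_1 \gamma_i B \gamma_j \gamma_1 = \epsilon_i \epsilon_j\, \gamma_i (\gamma_1 B \gamma_1)\gamma_j = \epsilon_i\epsilon_j\, \gamma_i \alpha(B)\gamma_j.
\end{equation*}
Hence the left-hand side becomes $\tfrac{1}{n}\sum_{i,j} \epsilon_i\epsilon_j A_{ij}\, \gamma_i \alpha(B)\gamma_j$.

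The proof then concludes by matching signs: by inspection of $R$, $R_{ii}R_{jj}$ equals $+1$ when $i,j$ are both $1$ or both $>1$, and equals $-1$ when exactly one of $i,j$ equals $1$ — which is precisely $\epsilon_i\epsilon_j$. Therefore the two sums agree term by term, proving the identity for every $B$ and every $A\in\calA_x$. There is no real obstacle here; the only minor bookkeeping step is verifying the sign match, which is immediate once one writes out the four cases.
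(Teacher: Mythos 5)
Your proof is correct and matches the paper's: both expand $\alpha\circ\bbE_A(B) = \tfrac{1}{n}\sum_{i,j} A_{ij}\gamma_1\gamma_i B\gamma_j\gamma_1$, anticommute the outer $\gamma_1$'s inward to pick up the sign $(-1)^{\delta_{1,i}+\delta_{1,j}}$ (your $\epsilon_i\epsilon_j$), and identify that sign with $R_{ii}R_{jj}$. No substantive difference.
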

			\begin{proof} Let $B\in P_+ \calC_n^{[ev]}$. Then unraveling the definition of $\bbE$ (\ref{def:E}) and using Clifford relations (\ref{def:gamma operators (final chap)}),
				\begin{equation*}\begin{split}
						\alpha \circ \bbE_A(B) &= \frac{1}{n} \sum_{i,j=1}^n A_{ij} \gamma_1 \gamma_i B \gamma_j \gamma_1 \\
						\qquad &= \frac{1}{n} \sum_{i,j=1}^n A_{ij} (-1)^{1-\delta_{1,i}} (-1)^{1-\delta_{1,j}} \gamma_i \gamma_1 B \gamma_1 \gamma_j \\
						&= \frac{1}{n} \sum_{i,j=1}^n (-1)^{\delta_{1,i}+\delta_{1,j}} A_{ij} \gamma_i \alpha(B) \gamma_j \\
						&= \bbE_{RAR} \circ \alpha (B).
				\end{split}\end{equation*} This calculation goes through identically if $B\in P_-\calC_n^{[ev]}$.
			\end{proof}
        \begin{remark}\label{rem:choice of alpha is arbitrary}
            The choice of $\alpha(\cdot) = \gamma_1 (\cdot) \gamma_1$ is one of convenience: choosing any $\gamma_i$ will work, thanks to $\gamma_i P_\pm = P_\mp \gamma_i$. In geometric algebra, these types of automorphisms are commonly interpreted as corresponding to reflections. In principle, one can study other automorphisms or antiautomorphisms swapping the two (isomorphic) subalgebras $P_+\calC_n$ and $P_-\calC_n$, like those arising as involutions or anti-involutions on $\calC_{n-1} \cong \calC_n^{[ev]}$. An interesting case is the transposition map $t:P_\pm \calC_n^{[ev]} \to P_\mp \calC_n^{[ev]}$ which sends $t(\gamma_1\gamma_2 \dots \gamma_n) = \gamma_n \dots \gamma_2 \gamma_1$. In this case, Proposition \ref{prop:commuting E with alpha identity} takes the amusing form $t \circ \bbE_A = \bbE_{A^T} \circ t$ where $A^T$ is the usual matrix transpose. Note that $t$ is not completely positive, and so this cannot be used as-is to extract a finitely correlated state decomposition, but it does hint towards an interesting charge conjugation symmetry and reflection parity breaking story in Proposition \ref{prop:charge conjugation symmetry breaking} and Proposition \ref{prop:reflection parity symmetry breaking}
        \end{remark}
			
		\subsubsection{Finitely correlated state picture for the states $\omega_\pm$}
		Now, let us put together a few observations to study $\omega_\pm$. For brevity, we demonstrate the case of even chain lengths $\ell$, but odd $\ell$ works similarly.
		We adopt the convention of fixing our finite chains to be of the form $[1,2,\dots, \ell]$, although one should keep in mind that to study the thermodynamic limit on $\Z$, we take both sides of the chain to $\pm \infty$.
		Using our definitions of $\omega_\pm$ (\ref{def:omega pos and neg, E maps}) and $\alpha^2 = \ide$, we have 
			\begin{equation}\begin{split} \label{def:omega pos and neg, F maps}
					\omega_+(A_1\otimes \dots \otimes A_\ell) &= \frac{2}{D} \Tr \paran{\bbE_{A_1}\circ \bbE_{A_2} \circ \bbE_{A_3}\circ \dots \circ \bbE_{A_{\ell-1}} \circ \bbE_{A_\ell}(P_+)} \\
					&= \frac{2}{D} \Tr \paran{\bbE_{A_1}\circ \alpha^2 \circ \bbE_{A_2} \circ \bbE_{A_3} \circ \alpha^2\circ \dots \circ \bbE_{A_{\ell-1}} \circ \alpha^2\circ \bbE_{A_\ell}(P_+)} \\
					&= \frac{2}{D} \Tr \paran{\alpha \circ \bbE_{\sigma(A_1)} \circ \alpha \circ \bbE_{A_2} \circ \alpha \circ \bbE_{\sigma(A_3)} \circ \dots \circ \alpha \circ \bbE_{\sigma(A_{\ell-1})} \circ \alpha\circ \bbE_{A_\ell}(P_+)} \\ 
					&=: \frac{2}{D} \Tr \paran{\bbF_{A_1}^{(1)} \circ \bbF_{A_2}^{(2)} \circ \bbF_{A_3}^{(1)} \circ \dots \circ \bbF_{A_{\ell-1}}^{(1)} \circ \bbF_{A_\ell}^{(2)}(P_+)} , 
     \end{split}\end{equation} where we have used Proposition \ref{prop:commuting E with alpha identity} on the third line. Similarly, since $\alpha(P_-) =\gamma_1 P_- \gamma_1 = P_+$, 
            \begin{align*}
					\omega_-(A_1\otimes \dots \otimes A_\ell) &= \frac{2}{D} \Tr \paran{\bbF_{A_1}^{(1)} \circ \bbF_{A_2}^{(2)} \circ \bbF_{A_3}^{(1)} \circ \dots \circ \bbF_{A_{\ell-1}}^{(1)} \circ \bbF_{A_\ell}^{(2)} \circ\alpha^2 \circ(P_-)} \\
					&=: \frac{2}{D} \Tr \paran{\bbF_{A_1}^{(2)} \circ \bbF_{A_2}^{(1)} \circ \bbF_{A_3}^{(2)} \circ \dots \circ \bbF_{A_{\ell-1}}^{(2)} \circ \bbF_{A_\ell}^{(1)} (P_+)},
			\end{align*} In the process we have defined CP maps which ``factorize'' $\bbE_{(\cdot)}\circ \bbE_{(\cdot)}$,
			\begin{equation}\begin{split} \label{def:F maps}
					\bbF^{(1)}&: \calA_x\otimes P_+\calC_n^{[ev]} \to P_+\calC_n^{[ev]}  , \qquad \bbF_A^{(1)} = \alpha \circ \bbE_{\sigma(A)} \Big|_{P_+\calC_n^{[ev]}} , \\
					\bbF^{(2)}&: \calA_x\otimes P_+\calC_n^{[ev]} \to P_+\calC_n^{[ev]} , \qquad \bbF_A^{(2)} = \alpha \circ \bbE_{A} \Big|_{P_+\calC_n^{[ev]}} ,
			\end{split}\end{equation}
            where $\sigma:\calA_x\to\calA_x$ is the involution defined in Proposition \ref{prop:commuting E with alpha identity}.
			In particular, as algebras we have $P_+\calC_n^{[ev]}\cong M_{2^{n/2-1}}(\C)$, so the bond dimension of these states $2^{n/2 - 1}$ is half the original bond dimension $2^{n/2}$ and they recover all expectations of local observables for the states $\omega_\pm$ defined in (\ref{def:omega pos and neg, E maps}). 
			
			Using the operators $\gamma_1\gamma_i$, which generate the even subalgebra $\calC_n^{[ev]}$ from (\ref{def:even Gamma operators}), we can quickly find the Kraus operators $\{F_i\}$ which encode the CP maps $\bbF^{(1)},\bbF^{(2)}$. 
			Define
			\begin{equation}
			F_i := \begin{cases} 
				\idty & i=1 \\
				\gamma_1\gamma_i & i=2,\dots,n
			\end{cases}
			\end{equation} This choice of operators ensures that the collection $\{F_i\}$ has the property $\frac{1}{n}\sum_{i=1}^n F_i^* F_i = \idty$, which in turn will guarantee that the transfer operators $\bbF_\idty^{(1)} =\bbF_\idty^{(2)}$ are unital CP maps. 
			Now, combining the definitions of $\bbE$ (\ref{def:E}) and $\bbF^{(1)},\bbF^{(2)}$ (\ref{def:F maps}) we see for any $B\in P_+\calC_n^{[ev]}\subseteq \calC_n^{[ev]}$, Proposition \ref{prop:commuting E with alpha identity} yields
			\begin{equation} \begin{split} \label{matrix form of F maps}
					\bbF_A^{(1)}(B) &= \frac{1}{n} \sum_{i,j=1}^{n} A_{ij} F_i^* B F_j \;, \qquad 
					\bbF_A^{(2)}(B)
					= \frac{1}{n} \sum_{i,j=1}^{n} A_{ij} F_i B F_j^* \;.
			\end{split}\end{equation}

			Let us write the associated isometries $T_1,T_2$ for the CP maps $\bbF^{(1)},\bbF^{(2)}: \calA_x \otimes P_+\calC_n^{[ev]} \to  P_+\calC_n^{[ev]}$ in terms of the earlier isometry $T = \frac{1}{\sqrt{n}}\sum_{i=1}^n \ket{i}\otimes \gamma_i$. $P_+\calC_n^{[ev]}$ is isomorphic to the algebra of $2^{n/2-1}\times 2^{n/2-1}$ complex matrices, so we can identify $\C^{2^{n/2 -1}}= \im(P_+)\subseteq \C^{2^{n/2}}$ to define maps $T_k: \C^{2^{n/2 -1}}\to \C^n \otimes \C^{2^{n/2 -1}}$, $k=1,2$, where
			\begin{alignat}{2} \label{def:T1 and T2}
				T_1 &= (\idty\otimes \gamma_1)T\Big|_{\im(P_+)} &&= \frac{1}{\sqrt{n}}\sum_{i=1}^n \ket{i} \otimes F_i \\
				T_2 &= T\gamma_1\Big|_{\im(P_+)} &&= \frac{1}{\sqrt{n}}\sum_{i=1}^n \ket{i} \otimes F_i^* ,
			\end{alignat} which are isometries $T_k^*T_k = \idty$. Note that they indeed map $T_k: \im(P_+)\to \C^n\otimes \im(P_+)$ thanks to the algebra decomposition $\calC_n^{[ev]} = P_+\calC_n^{[ev]}\oplus P_-\calC_n^{[ev]}$. This then gives an equivalent definition of $\bbF^{(k)}:\calA_x\otimes P_+\calC_n^{[ev]}\to P_+\calC_n^{[ev]}$ by writing
			\begin{equation}
				\bbF^{(1)}_A(B) = T_1^*(A\otimes B)T_1, \qquad 
				\bbF^{(2)}_A(B) = T_2^*(A\otimes B)T_2. \end{equation} From this expression it is clear that $\bbF_A^{(1)},\bbF_A^{(2)}$ are adjoints of each other with respect to the Hilbert-Schmidt inner product.
                In the special case of $A=\idty$, the transfer operators are in fact equal since $\bbF_\idty^{(1)} = \bbF_{\sigma(\idty)}^{(1)} = \bbF_\idty^{(2)}$ from (\ref{def:F maps}). So, by checking the primitivity condition of this shared transfer operator $\bbF_\idty$, we may determine whether the two 2-periodic FCSs $\omega_\pm$ are the two pure states given by Proposition \ref{prop:ergodic and periodic FCS decompositions}.

        \begin{proposition} \label{prop:F maps are primitive}
			The shared transfer operator $\bbF_\idty:P_+\calC_n^{[ev]}\to P_+\calC_n^{[ev]}$ of both states $\omega_\pm$ is a primitive CP map. In the case of $n=4$, these maps are rank-1 orthogonal projections onto $\idty\in P_+\calC_4$.
		\end{proposition}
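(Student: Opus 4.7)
My plan is to handle the two claims of the proposition separately: primitivity for all even $n$ via Theorem~\ref{thm:Primitivity}(3), and the stronger rank-one assertion for $n=4$ via the explicit diagonalization~(\ref{eqn:diagonalization of E even n}) of $\bbE_\idty$.

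For primitivity, recall from~(\ref{matrix form of F maps}) that $\bbF_\idty^{(1)}=\bbF_\idty^{(2)}$ is a unital CP map with Kraus operators $F_1=\idty$ and $F_i=\gamma_1\gamma_i$ for $i=2,\ldots,n$, each of which commutes with $P_+$ and so descends to an operator on the bond space. By condition~(3) of Theorem~\ref{thm:Primitivity}, primitivity reduces to showing that products $F_{i_1}\cdots F_{i_\ell}$ span the bond algebra $P_+\calC_n^{[ev]}$ for some $\ell$. But $\{F_i\}_{i=2}^{n}=\{\gamma_1\gamma_i\}_{i=2}^{n}$ is precisely the set of generators (up to rescaling) of the even subalgebra recorded in~(\ref{def:even Gamma operators}), so their products span all of $\calC_n^{[ev]}$; composing with the surjective algebra map $B\mapsto BP_+$ from $\calC_n^{[ev]}$ onto $P_+\calC_n^{[ev]}$ then yields the required spanning property.

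For the $n=4$ claim, I will compute the full spectrum of $\bbF_\idty$. Since $\sigma(\idty)=\idty$, Proposition~\ref{prop:commuting E with alpha identity} gives $\bbF_\idty=\alpha\circ\bbE_\idty|_{P_+\calC_n^{[ev]}}$, so I combine the known action~(\ref{eqn:diagonalization of E even n}) of $\bbE_\idty$ on the basis $\{\gamma_I\}$ with the action of $\alpha(B)=\gamma_1B\gamma_1$. A short anticommutation count gives $\alpha(\gamma_I)=(-1)^{[1\in I]}\gamma_I$ whenever $|I|$ is even, so $\bbF_\idty$ is diagonal on $\calC_n^{[ev]}$ with eigenvalue $\lambda_I(-1)^{[1\in I]}$ on each $\gamma_I$. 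For $n=4$ the formula $\lambda_I=(n-2|I|)/n$ vanishes at $|I|=2$, leaving only the eigenvectors with $|I|\in\{0,n\}$ surviving with nonzero eigenvalue. Passing to $P_+\calC_4^{[ev]}$ via the basis $\{P_+\gamma_I\}$, the identity $\gamma_0 P_+=P_+$ collapses $\idty$ and $\gamma_0$ into the single eigenvector $P_+$ with eigenvalue $1$, while the three independent $|I|=2$ equivalence classes contribute eigenvalue $0$. Thus $\bbF_\idty$ has rank one with image $\C\cdot P_+$, and $P_+$ is the identity of the subalgebra $P_+\calC_4^{[ev]}$. Orthogonality of the projection follows from the Hilbert--Schmidt self-adjointness observed below~(\ref{def:T1 and T2}): the maps $\bbF^{(1)}_A$ and $\bbF^{(2)}_A$ are mutual adjoints and coincide at $A=\idty$.

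The main bookkeeping obstacle is verifying that the identifications $P_+\gamma_I=\pm P_+\gamma_{I^c}$ induced by $\gamma_0\gamma_I=\pm\gamma_{I^c}$ yield a consistent eigenvalue on each equivalence class. This reduces to the identities $\lambda_{I^c}=-\lambda_I$ (since $|I^c|=n-|I|$) and $[1\in I^c]=1-[1\in I]$, whose product confirms $\lambda_{I^c}(-1)^{[1\in I^c]}=\lambda_I(-1)^{[1\in I]}$.
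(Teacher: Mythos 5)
Your proof is correct, but it takes a genuinely different route to primitivity. The paper invokes the spectral criterion (condition (2) of Theorem~\ref{thm:Primitivity}): using the diagonalization~(\ref{eqn:diagonalization of E even n}) of $\bbE_\idty$ and the fact that $\alpha$ acts as $\pm 1$ on each $\gamma_I$, it concludes that after projecting $\calC_n^{[ev]}\to P_+\calC_n^{[ev]}$ the only eigenvector of modulus $1$ is $P_+$, with simple eigenvalue $1$. You instead invoke the spanning criterion (condition (3)): since $F_1=\idty$ and $\{F_i=\gamma_1\gamma_i\}_{i\geq 2}$ generate $\calC_n^{[ev]}$, and $B\mapsto BP_+$ is a surjective algebra homomorphism $\calC_n^{[ev]}\to P_+\calC_n^{[ev]}$, products of the Kraus operators span the bond algebra. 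This is a purely algebraic argument requiring no eigenvalue computation (beyond knowing the spectral radius is $1$, which follows from unitality); the paper's spectral argument has the advantage of simultaneously producing the correlation length used elsewhere. Your $n=4$ computation is essentially the paper's, though your explicit formula $\bbF_\idty(\gamma_I)=\lambda_I(-1)^{[1\in I]}\gamma_I$ with the consistency check $\lambda_{I^c}(-1)^{[1\in I^c]}=\lambda_I(-1)^{[1\in I]}$ is tidier than the paper's presentation (which in passing writes the incorrect identity $P_+\gamma_0 P_+=0$; in fact $P_+\gamma_0 P_+=P_+$, and the right observation---which you make---is that $\idty$ and $\gamma_0$ both project to $P_+$). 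One small attribution slip: the identity $\bbF_\idty=\alpha\circ\bbE_\idty|_{P_+\calC_n^{[ev]}}$ follows directly from the definition~(\ref{def:F maps}) together with $\sigma(\idty)=\idty$, not from Proposition~\ref{prop:commuting E with alpha identity}.
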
 
		\begin{proof}
			By definition (\ref{def:F maps}), we have that $\bbF_\idty = \alpha \circ \bbE_\idty$, restricted to the domain $P_+\calC_n^{[ev]}$. Since $\alpha(\cdot) = \gamma_1(\cdot)\gamma_1$ an automorphism, the spectrum of this operator is the same as of $\bbE_\idty$.
			We computed earlier the diagonalization of $\bbE_\idty$ (\ref{eqn:diagonalization of E even n}). 
			Notice that every eigenvector of $\bbE_\idty$ except $\idty$ and $\gamma_0$ have eigenvalues $\abs{\lambda}<1$. 
			Then, when we restrict by projecting $P_+(\cdot)P_+: \calC_n \to P_+\calC_n^{[ev]}$, we have 
			\[
				P_+(\idty)P_+ = P_+ , \qquad P_+(\gamma_0)P_+ = P_+P_- = 0 .
			\] This means in particular that $\bbF_\idty = \alpha\circ \bbE_\idty\Big|_{P_+\calC_n^{[ev]}}$ only has one eigenvalue with $\abs{\lambda}=1$, namely $P_+$ with $\lambda=1$. 
			Thus this eigenvalue is simple. 
			
			To see the case of $n=4$, observe that the diagonalization on $\calC_4^{[ev]}$ explicitly reads 
			\begin{alignat*}{2}
				\bbE_\idty(\idty) &= \idty \\
				\bbE_\idty(\gamma_i\gamma_j) &= 0 \; ,  &&1\leq i < j \leq 4\\
				\bbE_\idty(\gamma_0) &= -\gamma_0 .
			\end{alignat*} So once we restrict via $P_+(\cdot)P_+$, the $\gamma_0$ dies and $\bbF_\idty(P_+) = \gamma_1 P_- \gamma_1 = P_+$, i.e. $\bbF_\idty$ is the orthogonal projection onto the identity $P_+\in P_+\calC_n^{[ev]}$.
		\end{proof}
        \begin{corollary} \label{cor:omega+- are pure}
            The ground states $\omega_\pm$ are pure 2-periodic FCS.
        \end{corollary}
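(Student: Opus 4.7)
The plan is to deduce the corollary from Proposition~\ref{prop:F maps are primitive} by reducing to the setting of primitive (translation-invariant) MPS and invoking Theorem~\ref{thm:Primitivity} together with Proposition~\ref{prop:ergodic and periodic FCS decompositions}. First I would observe that, by their very construction in (\ref{def:omega pos and neg, F maps}), $\omega_+$ and $\omega_-$ are invariant under the 2-shift (the CP maps strictly alternate between $\bbF^{(1)}$ and $\bbF^{(2)}$ as we move along the chain, and shifting by two sites preserves the alternation pattern), while a 1-shift swaps them. Hence they are 2-periodic states and $\omega_-$ is the translate of $\omega_+$.

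Next I would exhibit each $\omega_\pm$ as a translation-invariant FCS after the standard blocking procedure: group each pair $\{2k-1,2k\}$ into a single site with on-site algebra $\calA_x\otimes \calA_x \cong M_{n^2}(\C)$ and bond algebra $\calB' = P_+\calC_n^{[ev]}$. Under this blocking, the CP map generating $\omega_+$ becomes
\begin{equation}
	\bbG_{A\otimes B}(X) \;=\; \bbF^{(1)}_A\bigl(\bbF^{(2)}_B(X)\bigr), \qquad A\otimes B \in \calA_x\otimes \calA_x,
\end{equation}
with boundary pair $(e,\rho)=(P_+, \tfrac{2}{D}\Tr(P_+\,\cdot))$, and analogously (with roles of $\bbF^{(1)},\bbF^{(2)}$ swapped) for $\omega_-$. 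Its transfer operator is
\begin{equation}
	\bbG_{\idty\otimes \idty} \;=\; \bbF^{(1)}_\idty\circ \bbF^{(2)}_\idty \;=\; \bbF_\idty^{\,2},
\end{equation}
where in the last step I use the identity $\bbF^{(1)}_\idty=\bbF^{(2)}_\idty=\bbF_\idty$ noted just after (\ref{def:T1 and T2}).

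The key input is now Proposition~\ref{prop:F maps are primitive}: $\bbF_\idty$ has $1$ as a simple eigenvalue with eigenvector $P_+\in P_+\calC_n^{[ev]}$ and all other eigenvalues satisfy $|\lambda|<1$. Squaring, $\bbF_\idty^{\,2}$ inherits the same qualitative spectrum ($1$ simple, the rest strictly inside the unit disk), so the blocked transfer operator is primitive in the sense of Theorem~\ref{thm:Primitivity}. Moreover $\bbG_{\idty\otimes \idty}(P_+)=P_+$ and the functional $\tfrac{2}{D}\Tr(P_+\,\cdot)$ is $\bbG_{\idty\otimes \idty}$-invariant and faithful on $P_+\calC_n^{[ev]}$, so the compatibility conditions required to generate a translation-invariant FCS hold.

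By Proposition~3.1 of~\cite{fannes1992finitely} (invoked in our Definition~\ref{def:primitive MPS}), primitivity of the blocked transfer operator implies that the blocked state is a pure translation-invariant FCS. Unblocking, $\omega_\pm$ are pure 2-periodic FCS on the original chain. Finally, these two pure 2-periodic states, related by the 1-shift, must be the two constituents of the ergodic decomposition of $\omega$ guaranteed by part (ii) of Proposition~\ref{prop:ergodic and periodic FCS decompositions}, which is consistent with the decomposition $\omega=\tfrac12(\omega_++\omega_-)$ obtained from splitting $\idty=P_++P_-$ in (\ref{def:omega pos and neg, E maps}). No step here is really delicate; the only point that requires mild care is checking that the blocking is compatible with our choice of boundary element $P_+$ and state $\tfrac{2}{D}\Tr(P_+\,\cdot)$, which is immediate from $\bbF^{(1)},\bbF^{(2)}\colon P_+\calC_n^{[ev]}\to P_+\calC_n^{[ev]}$.
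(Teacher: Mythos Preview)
Your proof is correct and follows essentially the same approach as the paper. The paper leaves the corollary as immediate from Proposition~\ref{prop:F maps are primitive}, having already set up (just before that proposition) that primitivity of the shared transfer operator $\bbF_\idty$ is precisely what decides purity of the 2-periodic FCSs $\omega_\pm$; you have simply made explicit the blocking argument and the appeal to Definition~\ref{def:primitive MPS} that the paper leaves to the reader.
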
 While Proposition \ref{prop:ergodic and periodic FCS decompositions} assures us that $\omega$ decomposes into two distinct pure states, one may worry that the states we have constructed here are actually translation invariant and equal, $\omega_+ = \omega_-$. Such worries will be put to rest by Theorem \ref{thm:w+ and w- are distinct dimerized states for l large enough}: there, we will find an observable $A$ (actually, a family of observables) such that $\omega_+(A) \neq \omega_-(A)$.

        We also note a Corollary of independent interest when $n=4$.
		\begin{corollary}
			For $n=4$, the states $\omega_\pm$ exhibit 1-dependence, meaning any local observables separated by at least one site have zero correlation: i.e. for any $j\in [1,\ell]$,
			\[
			\omega_{\pm}(A_1\otimes \dots \otimes A_j \otimes \idty \otimes A_{j+2} \otimes \dots \otimes A_\ell) = 	\omega_{\pm}(A_1\otimes \dots \otimes A_j) \,  \omega_\pm(A_{j+2} \otimes \dots \otimes A_\ell) .
			\]
		\end{corollary}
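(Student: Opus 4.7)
The proof plan rests entirely on Proposition \ref{prop:F maps are primitive}: for $n=4$, the shared transfer operator $\bbF_\idty: P_+\calC_4^{[ev]}\to P_+\calC_4^{[ev]}$ is the rank-one orthogonal projection onto $\C P_+$. Since $P_+$ is self-adjoint with $\Tr(P_+)=D/2=2$ (where $D=2^{n/2}=4$), and since $B=P_+BP_+$ for any $B\in P_+\calC_4^{[ev]}$, this projection has the explicit form $\bbF_\idty(B) = \tfrac{1}{2}\Tr(B)\,P_+$. Two consequences will do all the work: (i) $\bbF_\idty$ is idempotent, so $\bbF_\idty^k = \bbF_\idty$ for every $k\geq 1$; and (ii) $\bbF^{(1)}_\idty = \bbF^{(2)}_\idty = \bbF_\idty$ directly from (\ref{def:F maps}) and $\sigma(\idty)=\idty$, so the identity insertion is insensitive to the alternating pattern at that site.

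The strategy is to plug these two facts into the FCS formula (\ref{def:omega pos and neg, F maps}) and perform a one-line rank-one computation. Writing $\bbF^{[k]}\in\{\bbF^{(1)},\bbF^{(2)}\}$ for the alternating CP map at site $k$ (determined by the parity of $k$ and by whether we compute $\omega_+$ or $\omega_-$), the identity insertion at site $j+1$ places $\bbF_\idty$ at that position of the composition. Pulling its rank-one structure through the trace yields
\begin{equation*}
\omega_+(A_1\otimes\dots\otimes A_j\otimes \idty\otimes A_{j+2}\otimes\dots\otimes A_\ell) = \brac{\tfrac{2}{D}\Tr\paran{\bbF^{[1]}_{A_1}\circ\dots\circ\bbF^{[j]}_{A_j}(P_+)}}\cdot\brac{\tfrac{1}{2}\Tr\paran{\bbF^{[j+2]}_{A_{j+2}}\circ\dots\circ\bbF^{[\ell]}_{A_\ell}(P_+)}}.
\end{equation*}
The first bracket is exactly $\omega_+(A_1\otimes\dots\otimes A_j)$ by definition.

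To identify the second bracket with $\omega_+(A_{j+2}\otimes\dots\otimes A_\ell)$, I would adopt the natural convention that the latter denotes the expectation of the observable which is identity on sites $1,\dots,j+1$ and $A_k$ at site $k\geq j+2$. Expanding via the same FCS formula, the leading block of $j+1$ identity maps collapses through $\bbF_\idty^{j+1}=\bbF_\idty$ to a single rank-one projection, and the normalization $\tfrac{2}{D}\Tr(P_+)=1$ delivers exactly the second bracket. The argument for $\omega_-$ is verbatim after swapping $\bbF^{(1)}\leftrightarrow\bbF^{(2)}$ throughout the alternating pattern; odd chain lengths $\ell$ require only cosmetic adjustments to the rightmost CP map type.

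The only real obstacle is bookkeeping the alternating $\bbF^{(1)},\bbF^{(2)}$ assignments consistently on both sides of the claimed equality. The saving grace is that identity insertions collapse under $\bbF_\idty^k=\bbF_\idty$, so the residual pattern in the right bracket is determined purely by the parity of the leftmost non-identity site $j+2$, which is the same on both sides. Consequently no parity mismatch arises, and the factorization holds identically for both signs $\omega_\pm$.
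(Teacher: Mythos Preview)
Your proposal is correct and takes essentially the same approach the paper intends: the corollary is stated without proof immediately after Proposition~\ref{prop:F maps are primitive}, and the factorization follows directly from the rank-one projection property $\bbF_\idty(B)=\tfrac{1}{2}\Tr(B)P_+$ exactly as you describe. Your bookkeeping of the alternating $\bbF^{(1)}/\bbF^{(2)}$ pattern and the normalization $\tfrac{2}{D}\Tr(P_+)=1$ is the natural way to fill in the details the paper leaves implicit.
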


        Finally, we reproduce a known result from~\cite{tu2008class}.
		\begin{corollary} \label{cor: correlation length diverges as n increases}
			For even $n>4$, the correlation length $\xi$ of $\omega_\pm$ is given by the eigenvalue $\lambda_2$ of eigenvectors $\gamma_i\gamma_j + \star (\gamma_i\gamma_j)$, $1\leq i<j\leq n$ from (\ref{eqn:diagonalization of E even n}):
			\[
				\xi = -\frac{1}{\log \lambda_2} = \frac{1}{\log \paran{\frac{n}{n-4}}}.
			\] This agrees with the value computed in~\cite{tu2008class}.
		\end{corollary}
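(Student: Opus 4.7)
The plan is to reduce the computation of $\xi$ to an explicit eigenvalue of the transfer operator $\bbF_\idty$ on $P_+\calC_n^{[ev]}$. By the standard theory of pure finitely correlated states developed in Section~\ref{sec:The AKLT Ground State as a Finitely Correlated State (FCS)}, the truncated two-point correlators of $\omega_\pm$ decay at spatial distance $d$ as $|\mu_2|^d$, where $\mu_2$ is the second largest eigenvalue of the transfer operator in modulus. Proposition~\ref{prop:F maps are primitive} already guarantees that $1$ is a simple eigenvalue of $\bbF_\idty$ and that $|\mu_2|<1$, so $\xi = -1/\log|\mu_2|$ is well defined. First, I will identify $\mu_2$ by relating $\bbF_\idty$ back to $\bbE_\idty$ and reading off eigenvalues from (\ref{eqn:diagonalization of E even n}).

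The first step is the observation that $\bbF_\idty^{2} = \bbE_\idty^{2}$ on $P_+\calC_n^{[ev]}$. Since $\sigma(\idty)=\idty$, Proposition~\ref{prop:commuting E with alpha identity} gives $\alpha\circ\bbE_\idty = \bbE_\idty\circ\alpha$, and therefore
\begin{equation*}
\bbF_\idty^{2} \;=\; (\alpha\circ \bbE_\idty)\circ(\alpha\circ \bbE_\idty) \;=\; \alpha^{2}\circ \bbE_\idty^{2} \;=\; \bbE_\idty^{2}.
\end{equation*}
Hence every eigenvalue $\mu$ of $\bbF_\idty$ satisfies $\mu^{2}\in\spec(\bbE_\idty^{2}|_{P_+\calC_n^{[ev]}})$, so $|\mu|$ equals $|\lambda|$ for some eigenvalue $\lambda$ of $\bbE_\idty$ on $P_+\calC_n^{[ev]}$.

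Next I will diagonalize $\bbE_\idty^{2}$ on $P_+\calC_n^{[ev]}$. A basis is supplied by $\{\gamma_I+\star\gamma_I : |I|\leq n/2,\; |I|\text{ even}\}$ from (\ref{eqn:basis gamma_I + star gamma_I}). Using $\star\gamma_I = \gamma_0\gamma_I$ from (\ref{eqn:hodge dual is gamma0 multiplication}) and reordering by the Clifford relations, $\star\gamma_I$ is a nonzero scalar multiple of $\gamma_{I^{c}}$. By (\ref{eqn:diagonalization of E even n}), $\bbE_\idty(\gamma_{I^{c}}) = \lambda_{I^{c}}\gamma_{I^{c}}$, and the sign identity $\lambda_{I^{c}} = -\lambda_I$ (a one line check from $\lambda_I = (-1)^{|I|}(n-2|I|)/n$ using that $n$ and $|I|$ are both even) gives
\begin{equation*}
\bbE_\idty(\gamma_I+\star\gamma_I) \;=\; \lambda_I(\gamma_I-\star\gamma_I), \qquad \bbE_\idty^{2}(\gamma_I+\star\gamma_I) \;=\; \lambda_I^{2}(\gamma_I+\star\gamma_I).
\end{equation*}
Consequently the spectrum of $\bbF_\idty$ on $P_+\calC_n^{[ev]}$ consists of $\pm|\lambda_I|$ for even $|I|\in[0,n/2]$.

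Finally, $|\lambda_I| = (n-2|I|)/n$ is strictly decreasing in $|I|$ on $[0,n/2]$. The value $1$ occurs only at $|I|=0$ (the eigenvector $2P_+$), and the largest remaining value is attained at $|I|=2$, giving $|\mu_2| = (n-4)/n$, which is nonzero precisely because $n>4$. The corresponding eigenvectors are exactly the $\gamma_i\gamma_j + \star(\gamma_i\gamma_j)$, $1\leq i<j\leq n$, named in the statement. Hence
\begin{equation*}
\xi \;=\; -\frac{1}{\log((n-4)/n)} \;=\; \frac{1}{\log(n/(n-4))}.
\end{equation*}
The only bookkeeping requiring care is the sign identity $\lambda_{I^{c}} = -\lambda_I$ that ensures $\gamma_I+\star\gamma_I$ is an eigenvector of $\bbE_\idty^{2}$ (and not merely of $\bbE_\idty$); every other ingredient is a direct reading of results already established in Section~\ref{sec:the ground states of the MPS point}.
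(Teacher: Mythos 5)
Your proof is correct and fills in a step the paper leaves implicit: the corollary is stated immediately after Proposition~\ref{prop:F maps are primitive} with no explicit derivation, and the paper's only eigenvalue argument (inside the proof of that proposition) is the remark that since $\alpha$ is an automorphism "the spectrum of $\bbF_\idty$ is the same as of $\bbE_\idty$," which is true only for the moduli. Your route makes the same point rigorously by observing that $\alpha$ commutes with $\bbE_\idty$ (Proposition~\ref{prop:commuting E with alpha identity} with $A = \idty$) and hence $\bbF_\idty^2 = \alpha^2 \circ \bbE_\idty^2 = \bbE_\idty^2$ on $P_+\calC_n^{[ev]}$, so eigenvalues of $\bbF_\idty$ agree in modulus with those of $\bbE_\idty$. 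The remaining ingredients — the sign identity $\lambda_{I^c} = -\lambda_I$ for $n,|I|$ both even, the fact that $|\lambda_I| = (n-2|I|)/n$ is strictly decreasing on $0 \leq |I| \leq n/2$, and that the restriction to $P_+\calC_n^{[ev]}$ kills $\gamma_0$ and so makes the $\lambda_0=1$ eigenvalue simple — are all checked correctly and match the diagonalization in (\ref{eqn:diagonalization of E even n}). Since each site of the chain contributes one copy of $\bbF_\idty$ to the transfer-operator product (because $\bbF_\idty^{(1)} = \bbF_\idty^{(2)} = \alpha\circ\bbE_\idty$), the correlation length in lattice sites is indeed $-1/\log|\mu_2(\bbF_\idty)| = -1/\log\bigl((n-4)/n\bigr)$, which is the stated answer.

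One small stylistic caution: you write "the spectrum of $\bbF_\idty$ on $P_+\calC_n^{[ev]}$ consists of $\pm|\lambda_I|$," but in fact for each even $|I|$ it is exactly one of $+|\lambda_I|$ or $-|\lambda_I|$ (the sign depending on the $(-1)^{s_I}$ from Lemma~\ref{lem:alpha swaps P_+ and P_-}), not both; this does not affect the correlation-length conclusion since only $|\mu_2|$ enters, but it is worth stating precisely.
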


        Note that the correlation length $\xi$ is an increasing function of $n$ with $\lim_{n\to\infty} \xi(n) = \infty$. Glancing at the phase diagram for $O(n)$-invariant spin chains with nearest-neighbor interactions presented in Figure \ref{fig:SO(n) phase diagram}, this may not be too surprising: in the large $n$ limit, this frustration-free model converges to the Reshitikhin point, an exactly solvable gapless model with power-law correlations \cite{reshetikin1983method}.

	
 \section{Symmetry Breaking from \texorpdfstring{$O(n)$}{O(n)} to \texorpdfstring{$SO(n)$}{SO(n)}} \label{sec:symmetry breaking from O(n) to SO(n)}
    We now turn to further studying the fascinating properties of the even $n$ ground states $\omega_\pm$. It is obvious from their definition (\ref{def:omega pos and neg, F maps}) that we can map between $\omega_+$ and $\omega_-$ by applying a single site translation $\tau$ to the left (or the right), $\omega_+\circ \tau = \omega_-$. This type of symmetry breaking from a translation-invariant Hamiltonian to 2-periodic ground states was first observed in~\cite{tu2008class}. But there is another symmetry breaking picture that was, to our knowledge, not previously known: as a surprising consequence of Proposition \ref{prop:commuting E with alpha identity}, we may also map between $\omega_+$ and $\omega_-$ by applying the on-site symmetry $\sigma$. 
	More concretely, for any on-site observables $A_i\in \calA_x$,
	\begin{equation}\begin{split}
			\omega_+(A_1\otimes \dots \otimes A_{\ell}) &= \omega_-(\sigma(A_1) \otimes \dots \otimes \sigma(A_{\ell})) \; .
	\end{split}\end{equation}
	Further, we will show in the next section that these states are invariant under the group $SO(n)$ and we will compute an explicit representation of this group.
	This in turn means that any
	local symmetry $\sigma_M(\cdot) = M(\cdot) M^{-1}$ with $M\in O(n)$ such that $\det{M} = -1$ will map between $\omega_+$ and $\omega_-$ by
	\begin{equation}\begin{split} \label{determinant -1 maps between omega+ and omega-}
			\omega_+( M A_1 M^{-1} \otimes \dots \otimes M A_\ell M^{-1} ) &= \omega_-( (R M) A_1 (RM)^{-1} \otimes \dots \otimes (RM) A_\ell (RM)^{-1} ) \\
			&= \omega_-( A_1 \otimes \dots \otimes A_\ell)
	\end{split}\end{equation} 
	where the second equality holds since $RM\in SO(n)$. 
	So, the $O(n)$ symmetry of our initial Hamiltonian (\ref{def:model Hamiltonian}) is broken into $SO(n)$ symmetry, and we can map between these two dimerized ground states $\omega_\pm$ by applying an on-site symmetry $\sigma_M$ which conjugates each on-site observable by an orthogonal matrix $M$ with $\det(M)=-1$. Recall that any reasonable measure of entanglement is invariant under local unitaries (we stated this for bipartite entanglement entropy in Equation (\ref{eq:entanglement entropy invariant under local unitaries})). This gives us the following theorem.

    \begin{theorem} \label{thm:omega_pm have identical entanglement}
        When $n$ is even, the ground states $\omega_\pm$ have identical entanglement structure.
    \end{theorem}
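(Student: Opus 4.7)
The proof plan is to directly leverage Equation~(\ref{determinant -1 maps between omega+ and omega-}), which already establishes that $\omega_-$ equals $\omega_+$ composed with an on-site symmetry $\sigma_M$ where $M \in O(n)$ with $\det(M) = -1$. The entire content of the theorem reduces to the observation that this relationship is implemented by an on-site (hence local) unitary conjugation at every site, and that any reasonable measure of entanglement is invariant under local unitaries, as noted in Equation~(\ref{eq:entanglement entropy invariant under local unitaries}) for bipartite entanglement entropy of pure states.

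More concretely, I would first clarify what ``identical entanglement structure'' means in this setting. For any finite volume $\Lambda \subseteq \Z$, let $\rho^\pm_\Lambda$ denote the reduced density matrices of $\omega_\pm$ restricted to $\calA_\Lambda$. The relation~(\ref{determinant -1 maps between omega+ and omega-}) then translates directly into the statement that
\begin{equation}
\rho^+_\Lambda = \Bigl(\bigotimes_{x \in \Lambda} M\Bigr) \rho^-_\Lambda \Bigl(\bigotimes_{x \in \Lambda} M^{-1}\Bigr),
\end{equation}
since expectation values of arbitrary product observables in $\calA_\Lambda$ determine the reduced density matrix uniquely, and the formula extends to all of $\calA_\Lambda$ by linearity. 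Because $M \in O(n)$ is unitary (orthogonal matrices are real unitaries), $\bigotimes_{x \in \Lambda} M$ is a local unitary with respect to any on-site partition of $\Lambda$.

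Next, for any bipartition $\Lambda = \Lambda_1 \sqcup \Lambda_2$, the reduced density matrices $\rho^\pm_{\Lambda_1} = \Tr_{\Lambda_2} \rho^\pm_\Lambda$ are related by $\rho^+_{\Lambda_1} = (\bigotimes_{x \in \Lambda_1} M) \rho^-_{\Lambda_1} (\bigotimes_{x \in \Lambda_1} M^{-1})$, since the partial trace commutes with the $\Lambda_2$-factor of the local unitary. Thus $\rho^+_{\Lambda_1}$ and $\rho^-_{\Lambda_1}$ are unitarily equivalent and share the same spectrum, so they have identical von Neumann entropy, R\'enyi entropies, Schmidt ranks, and any other spectral invariant that quantifies entanglement across the cut. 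In short, $\omega_+$ and $\omega_-$ agree on every entanglement quantifier built from reduced density matrix spectra.

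The ``obstacle,'' such as it is, is merely bookkeeping: stating a precise definition of ``identical entanglement structure'' that encompasses the family of invariants one wishes to compare (entanglement entropy, R\'enyi entropies, entanglement spectrum, etc.), and then observing that all of them depend only on the spectrum of reduced density matrices and therefore are invariant under the local unitary conjugation exhibited above. No spectral analysis of $\bbE$ or $\bbF^{(k)}$ is required, and no new representation-theoretic input beyond the $O(n) \to SO(n)$ symmetry breaking already established is needed.
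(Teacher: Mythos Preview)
Your proposal is correct and follows essentially the same approach as the paper: the paper's proof is the one-line observation that by Equation~(\ref{determinant -1 maps between omega+ and omega-}), $\omega_+$ and $\omega_-$ are related by an on-site symmetry, which cannot change entanglement. Your write-up simply unpacks this by passing to reduced density matrices and noting their spectra coincide, which is a perfectly fine elaboration of the same idea.
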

    \begin{proof}
        By (\ref{determinant -1 maps between omega+ and omega-}), $\omega_+$ and $\omega_-$ are related by an on-site symmetry, which cannot change entanglement.
    \end{proof}
	This is in stark contrast to the usual picture of dimerization. For instance, in Section \ref{sec:Warm-up: the Majumdar-Ghosh Chain Ground States}, we saw that the Majumdar-Ghosh Chain ground states $\wt{\omega}_\pm$ are pure vector states given by alternating products of singlet bonds between neighboring qubits $[i,i+1] = \ket{01}-\ket{10}$, i.e. $\wt{\omega}_+ = \dots \otimes [1,2]\otimes [3,4]\otimes [5,6] \otimes \dots$ while $\wt{\omega}_- = \dots \otimes [2,3]\otimes [4,5]\otimes [6,7] \otimes \dots$. In this case, one can distinguish the two states by measuring the entanglement entropy across a cut between sites $1$ and $2$. Following this theme, it was proven in~\cite{bjornberg2021dimerization} that in an open region surrounding the south pole point of Figure \ref{fig:SO(n) phase diagram}, one can similarly distinguish the pair of dimerized ground states by checking the entanglement entropy across sites $1$ and $2$. Theorem \ref{thm:omega_pm have identical entanglement} asserts that no entanglement measurement will distinguish $\omega_+$ from $\omega_-$.

    We conclude by saying a bit about the $O(n)$ symmetry breaking story.
    At the south pole point of the phase diagram in Figure~\ref{fig:SO(n) phase diagram}, it was noted in~\cite{bjornberg2021dimerization} that the $O(n)$ symmetry is unbroken and the pair of ground states $\wt{\omega}_+,\wt{\omega}_-$ are both separately invariant under the full orthogonal group $O(n)$. In other words, the vector space of ground states $\overline{\calS}_0\supseteq \C \wt{\omega}_+\oplus \C\wt{\omega}_-$ carries two 1D trivial representations of $O(n)$.\footnote{It is strongly believed (but not proven) that $\overline{\calS}_0$ is two-dimensional, hence the $\supseteq$ instead of equality.} But at the $SO(n)$ AKLT point, the $O(n)$ symmetry is broken to $SO(n)$, since the pair of ground states are exchanged by an operator of determinant $-1$. This means that $\C(\omega_+ + \omega_-)$ is the 1D trivial representation of $O(n)$, while $\C(\omega_+ - \omega_-)$ is the 1D sign representation of $O(n)$, which is not equivalent to the trivial representation. So, the vector space of ground states $\overline{\calS}_1 = \C(\omega_+ + \omega_-)\oplus\C(\omega_+ - \omega_-)$ carries a different representation of $O(n)$, and so by Theorem~\ref{thm:bachmann gapped}, we arrive at the following result.
    
    \begin{theorem}
        When $n$ is even, the south pole point and the $SO(n)$ AKLT point of the phase diagram in Figure~\ref{fig:SO(n) phase diagram} occupy distinct $O(n)$-symmetry protected topological phases.
    \end{theorem}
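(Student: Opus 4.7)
The plan is to apply Theorem~\ref{thm:bachmann gapped} with $G = O(n)$: if the south pole point and the $SO(n)$ AKLT point occupied the same $O(n)$-SPT phase, then their ground state vector spaces $\overline{\calS}_0$ and $\overline{\calS}_1$ would carry equivalent representations of $O(n)$. I would prove the theorem by exhibiting representations on each side that are manifestly inequivalent.

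First I would set up the two ground state spaces as $O(n)$-modules, using the action $\omega \mapsto \omega \circ \alpha_M^{-1}$ for $M \in O(n)$, where $\alpha_M$ denotes the on-site conjugation automorphism from~\eqref{def:on-site symmetry}. At the south pole point, I would invoke the result of~\cite{bjornberg2021dimerization} recalled earlier that each of the pair of ground states $\wt{\omega}_+, \wt{\omega}_-$ is separately invariant under the full group $O(n)$. Thus $\overline{\calS}_0 = \C \wt{\omega}_+ \oplus \C \wt{\omega}_-$ decomposes as a direct sum of two copies of the trivial representation $\mathbf{1}$ of $O(n)$. At the $SO(n)$ AKLT point, $\omega_\pm$ are each $SO(n)$-invariant, while any $M \in O(n)$ with $\det M = -1$ swaps them by~\eqref{determinant -1 maps between omega+ and omega-}. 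Taking the basis $\{\omega_+ + \omega_-, \, \omega_+ - \omega_-\}$ of $\overline{\calS}_1$, I would check that any such $M$ fixes $\omega_+ + \omega_-$ and sends $\omega_+ - \omega_-$ to its negative. Hence $\overline{\calS}_1 \cong \mathbf{1} \oplus \mathrm{sgn}$, where $\mathrm{sgn}$ is the one-dimensional sign representation of $O(n)$ determined by $M \mapsto \det M$.

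To finish, I would observe that $\mathbf{1}$ and $\mathrm{sgn}$ are inequivalent one-dimensional representations of $O(n)$ (since $O(n)$ contains elements of determinant $-1$), and therefore $\mathbf{1} \oplus \mathbf{1} \not\cong \mathbf{1} \oplus \mathrm{sgn}$ as $O(n)$-modules, for instance by comparing characters on any $M$ with $\det M = -1$. Theorem~\ref{thm:bachmann gapped}(iii) then forbids the two points from lying in the same $O(n)$-SPT phase.

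I expect the only real obstacle to be a bookkeeping subtlety: Theorem~\ref{thm:bachmann gapped} compares $\overline{\calS}_0$ and $\overline{\calS}_1$ on the nose, so I must be sure that both spaces are genuinely two-dimensional rather than merely containing the listed states. This is automatic on the AKLT side from Corollary~\ref{cor:reps of even ground state spaces} (or from the fact that $\omega_\pm$ are the two extremal FCS components of the unique ergodic decomposition), and on the south pole side from the matching result in~\cite{bjornberg2021dimerization} that $\overline{\calS}_0$ is spanned by exactly the two listed $O(n)$-invariant ground states; no additional pure ground states exist to spoil the decomposition. Once this dimension count is in hand, the representation-theoretic contradiction is immediate.
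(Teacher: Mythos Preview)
Your approach is exactly the paper's: compare the $O(n)$-representation carried by the ground state space at each point and invoke Theorem~\ref{thm:bachmann gapped}. The identification $\overline{\calS}_1 \cong \mathbf{1}\oplus\mathrm{sgn}$ at the AKLT point and $\overline{\calS}_0 \supseteq \mathbf{1}\oplus\mathbf{1}$ at the south pole is precisely what the paper writes in the paragraph preceding the theorem.

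There is, however, one genuine slip in your bookkeeping paragraph. You assert that \cite{bjornberg2021dimerization} proves $\overline{\calS}_0$ is exactly two-dimensional at the south pole; the paper explicitly flags (in a footnote) that this is only believed, not proven---what is known is $\dim\overline{\calS}_0\geq 2$. Your argument as written therefore rests on an unproven claim. The fix is easy and is implicit in the paper's phrasing: argue by dichotomy. If $\dim\overline{\calS}_0>2$, then part~(i) of Theorem~\ref{thm:bachmann gapped} already separates the phases since $\dim\overline{\calS}_1=2$. If $\dim\overline{\calS}_0=2$, then the containment $\overline{\calS}_0\supseteq\mathbf{1}\oplus\mathbf{1}$ forces equality, and part~(iii) finishes as you wrote. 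Also, on the AKLT side the correct citation for $\dim\overline{\calS}_1=2$ is the parent Hamiltonian result Corollary~\ref{cor:parent prop means uniqueness}, not Corollary~\ref{cor:reps of even ground state spaces}, which only describes finite-chain MPS spaces; and the ergodic decomposition of $\omega$ alone does not rule out further ground states of $H$.
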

    \subsection{Invariance of ground states under the spin representations $\Pi$ of $SO(n)$}
    Recall for even\footnote{The story is essentially unchanged for odd $n$, except that we replace $\C^{2^{n/2}}$ with $\C^{2^{(n-1)/2}}$.} $n$ the spin representations $\Pi:Spin(n)\to \calU(\C^{2^{n/2}})$ from Section \ref{sec:the spin representations Pi of SO(n)}, which pass to a projective representation $\Pi:SO(n)\to \calU(\C^{2^{n/2}})$. We are identifying $\calC_n\cong M_{2^{n/2}}(\C)$, whence $\calU(\C^{2^{n/2}})\subseteq \exp(\calC_n)$.
    In particular, the spin representations enjoyed relation (\ref{eq:representation of SO(n) on clifford algebra}), which we recall below for convenience:
    \begin{equation} \label{eq:representation of SO(n) on clifford algebra (SO(n) chapter)}
			\Pi(w) \gamma_i \Pi(w)^{-1} = \sum_{j} w_{ji} \gamma_j, \qquad w\in SO(n).
	\end{equation}

    This then means that the isometry $T = \frac{1}{\sqrt{n}}\sum_{i=1}^n \ket{i} \otimes \gamma_i$ defined in (\ref{def:E}) intertwines the defining representation $w$ of $SO(n)$ on physical spins $V = \C^n$ with the (projective) representation $\Pi$:
	\begin{equation} \label{eq:T an intertwiner}
			(w\otimes \Pi(w))T = T \,\Pi(w), \qquad w\in SO(n).
	\end{equation}
	\begin{proof} Equation (\ref{eq:representation of SO(n) on clifford algebra (SO(n) chapter)}) immediately implies that
    \begin{equation}
            (\idty \otimes \Pi(w)) \paran{\sum_{i=1}^n \ket{i} \otimes \gamma_i} \Pi^{-1}(w) = (w\otimes \idty) \sum_{i} \ket{i}\otimes \gamma_i ,
    \end{equation} which is equivalent to the identity we desire.
	\end{proof}
 
    Notice that in the case of $n=3$, this is precisely the intertwining relation (\ref{def:MPS AKLT (isometry)}) which defined the MPS tensor for the AKLT chain--here, the defining representation of $SO(3)$ is the spin-1 representation $V_1$ of $SU(2)$, and the spin representation of $SO(3)$ is the spin-1/2 representation $V_{1/2}$. 
    Similarly to this case, for odd $n$, this intertwining relation will essentially determine the symmetry of the resultant MPS and guarantee its invariance under $SO(n)$ (we will actually have full invariance under $O(n)$, since we will later show that this is the unique MPS ground state of an $O(n)$-invariant parent Hamiltonian). In the case of even $n$, a similar relation holds for the isometries $T_1,T_2$ given by (\ref{def:T1 and T2}), which in turn will tell us that the states $\omega_\pm$ are $SO(n)$ invariant.
	We emphasize that these states are \textit{not} invariant under $O(n)$, which will be apparent momentarily. 
    This can be straightforwardly seen by judicious insertion of $\gamma_1^2 = \idty$'s into (\ref{eq:T an intertwiner}), combined with the restriction from $\C^{2^{n/2}}$ to $\im(P_+) \cong \C^{2^{n/2 - 1}}$ we used to define $T_1,T_2$. Recall that $\Pi$ is a reducible representation and splits into irreps $\Pi = \Pi_+\oplus \Pi_-$ which correspond to $\C^{2^{n/2}} = \im(P_+)\oplus \im(P_-)$. After suppressing the explicit $w$ in our notation for readability $\Pi := \Pi(w)$, we have 
        \[
            (w\otimes \Pi) (\idty\otimes \gamma_1) (\idty\otimes \gamma_1) T = (\idty \otimes \gamma_1 )( \idty \otimes \gamma_1) T \Pi,
        \] which, after multiplying on the left by $(\idty \otimes \gamma_1)$ and restricting to the domain $\im(P_+)$ of $T_1$, yields $(w\otimes \alpha(\Pi_+))T_1 = T_1 \Pi_+$.
        A similar computation reveals that $(w\otimes \Pi_+) T_2 = T_2 \, \alpha(\Pi_+)$.
        So, we have the following relations connecting the representation $\Pi = \Pi_+\oplus \Pi_-$ to the isometries $T_1,T_2$:
        \begin{equation}\begin{split} \label{eq:isometries T1 and T2 kinda commute with Pi}
            (w\otimes \alpha(\Pi_+)) T_1 &= T_1 \Pi_+ \\
            (w\otimes \Pi_+) T_2 &= T_2 \, \alpha(\Pi_+) .
        \end{split}\end{equation}
        We will later see in Corollary~\ref{cor:alpha swaps spin reps} that as representations $\alpha(\Pi_+)\cong \Pi_-$.

        These relations allow us to prove that both states $\omega_\pm$ enjoy $SO(n)$ symmetry.
        \begin{theorem}\label{thm:Invariance of ground states under SO(n)}
		When $n$ is even, the states $\omega_\pm$ defined in (\ref{def:omega pos and neg, F maps}) are invariant under local $SO(n)$ symmetry, meaning if $w\in SO(n)$, then 
			\[
				\omega_{\pm}(wA_1w^{-1}\otimes \dots \otimes wA_\ell w^{-1}) = \omega_{\pm}(A_1\otimes \dots \otimes A_\ell).
			\]
        When $n$ is odd, the state $\omega$ is invariant under $O(n)$.
		\end{theorem}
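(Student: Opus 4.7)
The essential tool is the intertwining relation (\ref{eq:T an intertwiner}), rewritten as an adjoint identity on the CP maps. Taking the adjoint of $(w\otimes \Pi(w))T = T\Pi(w)$ and using unitarity of $w$ and $\Pi(w)$ yields $T^*(w\otimes \Pi(w)) = \Pi(w)T^*$. Combining both forms and inserting $\Pi(w)\Pi(w)^{-1}$ between factors gives the key identity
\begin{equation}
    \bbE_{wAw^{-1}} = \Ad(\Pi(w)) \circ \bbE_A \circ \Ad(\Pi(w))^{-1}, \qquad w\in SO(n), \; A\in \calA_x.
\end{equation}
For odd $n$, iterating this identity through the definition (\ref{def:omega}) of $\omega$ produces
\begin{equation*}
    \bbE_{wA_1w^{-1}} \circ \cdots \circ \bbE_{wA_\ell w^{-1}}(e) = \Ad(\Pi(w)) \brac{\bbE_{A_1}\circ \cdots \circ \bbE_{A_\ell}\paran{\Ad(\Pi(w))^{-1}(e)}}.
\end{equation*}
The boundary element $e=P_+$ is invariant under $\Ad(\Pi(w))$ because $\Pi(w)\in \calC_n^{[\mathrm{ev}]}$ and $\gamma_0 \in Z(\calC_n)$ for odd $n$, and cyclicity of trace then kills the remaining conjugation. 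This proves $SO(n)$-invariance of $\omega$.

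For the $O(n)$ statement in odd $n$, I would simply observe that $-\idty \in O(n)\setminus SO(n)$ when $n$ is odd, that it acts trivially by conjugation, and that any $g\in O(n)\setminus SO(n)$ factors as $g = (-\idty)(-g)$ with $-g\in SO(n)$. Thus the adjoint action of $g$ on observables coincides with that of an element of $SO(n)$, and $O(n)$-invariance reduces to the already-established $SO(n)$-invariance.

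For the even $n$ states $\omega_\pm$, the plan is analogous but with the subtler bookkeeping forced by the alternation of $\bbF^{(1)}$ and $\bbF^{(2)}$. The two intertwining relations of (\ref{eq:isometries T1 and T2 kinda commute with Pi}) for $T_1,T_2$ yield, by the same adjoint trick,
\begin{equation*}
    \bbF^{(1)}_{wAw^{-1}} = \Ad(\Pi_+(w)) \circ \bbF^{(1)}_A \circ \Ad(\alpha(\Pi_+(w)))^{-1}, \quad \bbF^{(2)}_{wAw^{-1}} = \Ad(\alpha(\Pi_+(w))) \circ \bbF^{(2)}_A \circ \Ad(\Pi_+(w))^{-1}.
\end{equation*}
Crucially, the factor appearing on the right of one identity is the \emph{inverse} of the factor appearing on the left of the other, so when the maps are composed alternately in the definition (\ref{def:omega pos and neg, F maps}) of $\omega_\pm$, all the interior $\Ad$'s telescope and only $\Ad(\Pi_+(w))$ on the outside survives. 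Since $\Pi_+(w) = P_+\Pi(w)$ lies in $P_+\calC_n^{[\mathrm{ev}]}$ and $P_+$ is the unit of this subalgebra, $\Ad(\Pi_+(w))(P_+)=P_+$, and cyclicity of trace finishes the argument for both $\omega_+$ and $\omega_-$. The same telescoping works for chains of either parity, since $\bbF^{(1)}$ and $\bbF^{(2)}$ always alternate regardless of whether the chain begins with $\bbF^{(1)}$ or $\bbF^{(2)}$.

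The main obstacle, as always in these ``moving $G$ through a tensor network'' arguments, is matching the bond-side representations so that the interior conjugations cancel. In the even case this is delicate: a single intertwining relation no longer suffices because $T_1$ and $T_2$ braid the halves $\Pi_+$ and $\alpha(\Pi_+)$ with each other, and one must verify that this braiding is precisely compatible with the alternation $\bbF^{(1)}\bbF^{(2)}\bbF^{(1)}\bbF^{(2)}\cdots$ forced by Proposition \ref{prop:commuting E with alpha identity}. Once this compatibility is checked, the rest of the proof is essentially bookkeeping, but getting all the $w^{-1}$'s, $\alpha$'s, and adjoints in the correct place is where the care is required.
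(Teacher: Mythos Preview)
Your proposal is correct and, for the even-$n$ case, is essentially the paper's argument: the paper also inserts $\Pi_+^*\Pi_+$ (and $\alpha(\Pi_+)^*\alpha(\Pi_+)$) between factors, applies the intertwining relations (\ref{eq:isometries T1 and T2 kinda commute with Pi}), and uses $\Pi_+ P_+ \Pi_+^* = P_+$ together with cyclicity of trace. Your phrasing via the identities $\bbF^{(k)}_{wAw^{-1}} = \Ad(\cdot)\circ \bbF^{(k)}_A \circ \Ad(\cdot)^{-1}$ and the telescoping observation is simply a cleaner packaging of the same computation. One small imprecision: for $\omega_-$ the surviving outer conjugation is $\Ad(\alpha(\Pi_+))$, not $\Ad(\Pi_+)$, but since $\alpha(\Pi_+)$ also fixes $P_+$ (the paper checks $\alpha(\Pi)P_+\alpha(\Pi)^* = P_+$ explicitly) the conclusion is unaffected.

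For the odd-$n$ case your route genuinely differs from the paper's. The paper dispatches $O(n)$-invariance by invoking uniqueness of the ground state of the $O(n)$-invariant Hamiltonian---a forward reference to the parent property proved later in Section~\ref{sec:parent property (SO(n) chains)}. You instead run the same intertwining argument through $\bbE$ to obtain $SO(n)$-invariance directly, and then observe that for odd $n$ the element $-\idty$ lies in $O(n)\setminus SO(n)$ and acts trivially by conjugation, so $O(n)$-invariance follows for free. Your argument is more elementary and self-contained (no appeal to uniqueness), while the paper's has the virtue of making explicit why even $n$ is different: there, $-\idty\in SO(n)$ and no such cheap upgrade exists, forcing the more careful $T_1,T_2$ bookkeeping that both you and the paper carry out.
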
\begin{proof}
        The odd $n$ case follows from being the unique finitely correlated ground state of a frustration-free Hamiltonian $H$ with $O(n)$ symmetry. The even $n$ case is the interesting one. 
		Let's apply the relations (\ref{eq:isometries T1 and T2 kinda commute with Pi}) to a short chain expressed using $\bbF^{(k)}_A(B) = T_k^*(A\otimes B)T_k$, where $k=1,2$. The proof for longer chains is a straightforward extension of this, modulo the inconsequential distinction between even and odd length chains. 
        Note that by unitarity, $\Pi(w)^* \Pi(w) = \idty$, and recall from Section \ref{sec:the spin representations Pi of SO(n)} that $\Pi(w)\in \calC_n^{[ev]}$ for all $w\in SO(n)$. For readability, we abuse notation to write $\Pi:= \Pi_+(w)$.
		\begin{align*}
			\omega_{+}(w^{-1}A_1w\otimes w^{-1}A_2w) &= \frac{2}{D} \Tr \; \paran{\bbF_{w^{-1}A_1w}^{(1)} \circ \bbF_{w^{-1}A_2w}^{(2)} (P_+)} \\
			 &= \frac{2}{D} \Tr \; T_1^* \paran{w^{-1}A_1w \otimes T_2^*(w^{-1} A w \otimes \Pi^* \Pi P_+ \Pi^* \Pi)T_2}T_1 \\
			&= \frac{2}{D} \Tr \; T_1^* \paran{w^{-1}A_1w \otimes \paran{ \alpha(\Pi)^*( T_2^*( A \otimes P_+ )T_2 } \alpha(\Pi)}T_1 \\
			&= \frac{2}{D} \Tr \; \Pi^* \paran{T_1^* \paran{A_1 \otimes T_2^*( A \otimes P_+ )T_2 }}T_1 \Pi \\	
			&= \frac{2}{D} \Tr \; \Pi^* \paran{\bbF_{A_1}^{(1)} \circ \bbF_{A_2}^{(2)} (P_+)}  \Pi \\
			&= \omega_+(A_1\otimes A_2), 
		\end{align*} 
		where third line follows from Equation (\ref{eq:isometries T1 and T2 kinda commute with Pi}) and from $\Pi P_+ \Pi^* = P_+ \Pi \Pi^* = P_+$, since $P_+$ is in the center of $\calC_n^{[ev]}$. Likewise, since again $\alpha(\Pi) P_+ \alpha(\Pi)^* = P_+$, we can perform the same computation for $\omega_-$ and see 
		\begin{align*}
			\omega_{-}(w^{-1}A_1w\otimes w^{-1}A_2w) &= \frac{2}{D} \Tr \; \alpha(\Pi)^* \paran{\bbF_{A_1}^{(2)} \circ \bbF_{A_2}^{(1)} (P_+)}  \alpha(\Pi) \\
			\qquad &= \omega_-(A_1\otimes A_2).
		\end{align*} 
		\end{proof}
        We have still not demonstrated that the states $\omega_\pm$ are indeed distinct. Indeed, in the next section we prove this, but strangely, even though they are 2-periodic, they are only distinguishable by observables supported on $\geq n/2$ sites.

	
 \section{Two Distinct Ground States on Long Chains} \label{sec:Two Distinct Ground States on Long Chains}
    In this section we finally prove in Theorem \ref{thm:w+ and w- are distinct dimerized states for l large enough} that the ground states $\omega_\pm$ are distinct. This is accomplished by using MPSs to recover reduced density matrices $\rho_\ell^\pm$ on chains of length $\ell$. Along the way, we discover yet another peculiar feature of these ground states, accompanying their unique $O(n)$-to-$SO(n)$ symmetry breaking from Section \ref{sec:symmetry breaking from O(n) to SO(n)}: Corollary \ref{cor:rho +- are identical for small l} reveals that while the support spaces of $\rho_\ell^{\pm}$ are orthogonal for chains of length $\ell \geq n$, the two reduced density matrices are identical for small length chains $\ell < n/2$. 
	This again contrasts with the usual picture of dimerization. The dimerized ground states of the Majumdar-Ghosh model in Section \ref{sec:Warm-up: the Majumdar-Ghosh Chain Ground States} may be distinguished by computing the expectation of the two-body operator $P_{i,i+1}$ given by the orthogonal projection onto the singlet $\ket{01}-\ket{10}$. 
    In an open region surrounding the south pole in Figure \ref{fig:SO(n) phase diagram}, it was proven in \cite{aizenman1994geometric} that dimerization can be detected with an observable supported on two neighboring sites.
	But here, no observable supported on small $\ell<n/2$ chains can detect dimerization, even though the two states are 2-periodic and differ only by a one site translation. The distinction may only be detected on $\ell=n/2$ sites.

    \subsection{MPSs corresponding to $\omega_\pm$ and a suggestive example} \label{sec:MPSs corresponding to omega_pm and a suggestive example}
	Recall the first definition of our MPSs (\ref{def:MPSs of SO(n) chain}). When $n$ is odd, $\calB = P_+\calC_n$, and we will later prove the maps $\psi:\calB\to (\C^n)^{\otimes \ell}$ are injective when $\ell\geq n$. This case requires no adjustment.
    When $n$ is even, the initial bond algebra was $\calB = \calC_n$, but we had to perform a blocking procedure on even $\ell$ chains to produce pure 2-periodic states. We then restricted to the subalgebra $\calC_n^{[ev]}$ to take positive boundary conditions for the blocked FCSs. The necessity of this can similarly be seen at the MPS level. 
    Lemma \ref{lem:traces of products of gamma operators} tells us that $\Tr(\gamma_I) =0$ whenever $\gamma_I$ is not a scalar multiple of $\idty$. As a consequence, we have that whenever $\abs{I}\text{mod } 2 \neq \ell \text{ mod } 2$, $\psi(\gamma_I) = 0$. So, the MPSs with the naive bond algebra $\calC_n$ are never truly injective; this however is easily resolved, simply by restricting to the vector spaces of even or odd elements $\calC_n^{[ev]},\calC_n^{[odd]}$ in $\calC_n$, according to the parity of the chain length $\ell$. For the following discussions, we take the case where $\ell$ is even, and so we restrict the domain of our MPSs to obtain $\psi: \calC_n^{[ev]}\to (\C^n)^{\otimes \ell}$. With this restriction, we will see that $\psi$ is injective for $\ell\geq n$. The corresponding results for odd $\ell$ are straightforward to extract and essentially the same, only requiring us to treat the domain of our MPSs as $\calC_n^{[odd]}\cong P_+\calC_n P_- \oplus P_- \calC_n P_+$.

    Let $n$ even. The MPSs $\psi_+$ and $\psi_-$ associated to $\omega_+$ and $\omega_-$ can be expressed in several ways. One is to define them as a restriction of $\psi$ to the subalgebras $P_+\calC_n^{[ev]} = P_+\calC_n P_+$ and $P_-\calC_n^{[ev]} = P_-\calC_n P_-$, respectively.\footnote{For odd $\ell$, the corresponding restriction is to the off-diagonal blocks $\calC_n^{[odd]} = P_+\calC_n P_- \oplus P_-\calC_n P_+$.}
    Equivalently, we may use that the automorphism $\alpha$ swaps matrix blocks\footnote{For odd $\ell$, we have that $\alpha(P_+\calC_n P_-) = P_-\calC_n P_+$.}
    \begin{equation}
        \alpha(P_+\calC_n P_+) = P_-\calC_n P_- . 
    \end{equation} Then, we may start with the pure states $\omega_\pm$ from Definition (\ref{def:omega pos and neg, F maps}) and use the CP maps $\bbF^{(1)},\bbF^{(2)}$ with their associated isometries $T_1,T_2$ defined by (\ref{def:T1 and T2}) to write $\psi_\pm: P_+\calC_n^{[ev]} \to (\C^n)^{\otimes \ell}$ in the following way, where for $\psi_-$ we think of $P_+\calC_n^{[ev]} = \alpha(P_-\calC_n^{[ev]})$: 
	\begin{equation}\begin{split} \label{def of matrix product states}
	\psi_+(B) &= \sum_{i_1,\dots,i_\ell}^{n} \Tr(B F_{i_\ell}^*F_{i_{\ell-1}}\dots F_{i_2}^*F_{i_1})\ket{i_1, i_2, \dots, i_{\ell-1},i_{\ell}}, \qquad B\in P_+\calC_n^{[ev]} \\
     &= \sum_{i_1,\dots,i_\ell}^{n} \Tr(B \gamma_{i_\ell} \gamma_{i_{\ell-1}} \dots \gamma_{i_2} \gamma_{i_1})\ket{i_1, i_2, \dots, i_{\ell-1},i_{\ell}} \\
	\psi_-(B) &= \sum_{i_1,\dots,i_\ell}^{n} \Tr(B F_{i_\ell}F_{i_{\ell-1}}^*\dots F_{i_2}F_{i_1}^*)\ket{i_1, i_2, \dots, i_{\ell-1},i_{\ell}}, \qquad B\in P_+\calC_n^{[ev]}\\
    &= \sum_{i_1,\dots,i_\ell}^{n} \Tr(\alpha(B)\gamma_{i_\ell} \gamma_{i_{\ell-1}} \dots \gamma_{i_2} \gamma_{i_1})\ket{i_1, i_2, \dots, i_{\ell-1},i_{\ell}}, 
	\end{split}\end{equation} where we have twice used that $F_i^*F_j = \gamma_i\gamma_j$ and $F_i F_j^* = \alpha(\gamma_i\gamma_j)$, along with cyclicity of trace. In any case, it is computationally convenient to note that if we use the above expressions and allow $\psi_\pm$ to have domain $\calC_n^{[ev]}$,\footnote{For odd $\ell$, we extend to the domain $\calC_n^{[odd]}$.} we obtain the following useful expression: 
    \begin{equation} \label{eqn:alpha swaps psi_+ with psi_-}
        \psi_-(B) = \psi_+(\alpha(B)) \quad \text{for all } B\in \calC_n^{[ev]}
    \end{equation}

    In what follows, we will make repeated use of the basis of $P_+\calC_n$ given by $\gamma_I + \star\gamma_I, \, \abs{I}\leq n/2$ from Section \ref{sec:Clifford algebras}.
    \begin{example}
    Let $n=4$ and chain length $\ell=2$. Let $D=\Tr \idty$. Choose $B = \gamma_I+\star \gamma_I$ with $\gamma_I := \gamma_1\gamma_2$ and compute using Lemma \ref{lem:traces of products of gamma operators}:
	\begin{align*}
		\frac{1}{D}\psi_+(\gamma_I + \star\gamma_I) &= \frac{1}{D}\sum_{i_1,i_2}^{n} \Tr((\gamma_1\gamma_2-\gamma_3\gamma_4) \gamma_{i_2}\gamma_{i_1} )\ket{i_1, i_2} \\
			&= \ket{12} - \ket{21} - \ket{34} + \ket{43} .
	\end{align*} Continuing this example, we can use Equation (\ref{eqn:alpha swaps psi_+ with psi_-}) to write
	\begin{align*}
		\frac{1}{D}\psi_-(\gamma_I + \star\gamma_I) &= \frac{1}{D}\sum_{i_1,i_2}^{n} \Tr(\alpha(\gamma_1\gamma_2-\gamma_3\gamma_4) \gamma_{i_2} \gamma_{i_1} )\ket{i_1, i_2} \\
        &= \frac{1}{D}\sum_{i_1,i_2}^{n} \Tr(-\gamma_1\gamma_2-\gamma_3\gamma_4) \gamma_{i_2} \gamma_{i_1} )\ket{i_1, i_2} \\
			&= - \ket{12} + \ket{21} - \ket{34} + \ket{43} .
	\end{align*} Here, it is clear that these two states are orthogonal. A key point to note is that if we had instead chosen $B= \idty + \star \idty = \idty + \gamma_0 = 2P_+$, then the states $\psi_+$ and $\psi_-$ are not orthogonal on $\ell=2$ sites: since $\psi_+(\gamma_0) = 0$ by Lemma \ref{lem:traces of products of gamma operators} and since $\alpha(\idty)=\idty$, we have
    \[
        \psi_+(\idty + \gamma_0) = \psi_+(\idty) + \psi_+(\gamma_0) = \psi_+(\idty) = \psi_- (\idty) = \psi_- (\idty + \gamma_0). 
    \] Indeed, the proof of lack of distinguishibility for small chains follows from a similar computation. If $\abs{I}<n/2$, then $\abs{\star I} > n/2$ and so by Lemma \ref{lem:traces of products of gamma operators}, the contribution of $\star \gamma_I$ is zero whenever $\ell < n/2$, making it impossible to distinguish $\ket{\psi_+}$ from $\ket{\psi_-}$. This is the content of Corollary \ref{cor:rho +- are identical for small l}.
    \end{example}
    
    Before we can prove the full Theorem \ref{thm:w+ and w- are distinct dimerized states for l large enough}, we need to better understand the $SO(n)$ representations on the bond algebra, which will tightly constrain the spectrum of the reduced density matrices $\rho_\ell^\pm$ and allow us to later prove the parent property Theorem \ref{thm:parent property (SO(n) chains)}.
    
    \subsection{$SO(n)$ representations on MPSs} \label{SO(n) representations on MPS}
    Let $V=\C^n$ denote the defining representation of $SO(n)$, as before. Our goal now is to understand the ground state space as an invariant subspace of the tensor power of the defining representation:
    \begin{equation}
        \calG_\ell := \{\psi(B): B\in \calB\} \subseteq V^{\otimes \ell}.
    \end{equation}
    Recall the intertwining relation (\ref{eq:T an intertwiner}).
    Working naively on the initial bond algebras $\calB = P_+\calC_n$ for odd $n$ and $\calB = \calC_n$ for even $n$, we have for all $B\in \calB$ and $w\in SO(n)$
	\begin{equation}\label{eqn:physical rep lifts to bond algebra rep}
		w^{\otimes \ell}\psi(B) = \psi\paran{\Pi(w)(B)\Pi(w)^{-1}}, 
	\end{equation} where we recall that $\Pi(\cdot)\Pi^{-1}$ is a true representation of $SO(n)$ on $\calC_n$ since the phases cancel. Refer to Section \ref{sec:The Adjoint of the Spin Representation Pi (-) Pi} for more information on this representation, which we will use heavily from here. Equation (\ref{eqn:physical rep lifts to bond algebra rep}) exactly means that the maps $\psi$ intertwine the representations $w^{\otimes \ell}$ acting on $\calG_\ell$ and $\Pi(\cdot)\Pi^{-1}$ acting on $\calC_n$.
    We will later prove an injectivity statement for the MPSs in Theorem \ref{thm:w+ and w- are distinct dimerized states for l large enough}: When $n$ is odd, the MPSs are injective for $\ell>n/2$, and When $n$ is even, the MPSs are injective for $\ell>n$ after the aforementioned restriction to the invariant subspace $\calC_n^{[ev]}$ for even $\ell$ (and $\calC_n^{[odd]}$ for odd $\ell$). Once the MPSs are injective, we have that $\calB \cong \calG_\ell$ as representations.
    Lemma \ref{lem:rep of Pi( )Pi^-1 on bond algebras} addresses exactly the decomposition of this representation into irreps. Thus we get Corollaries \ref{cor:reps of odd ground state spaces} and \ref{cor:reps of even ground state spaces}. The odd $n$ case is a bit simpler.
        
    \begin{corollary} \label{cor:reps of odd ground state spaces}
        Let $\calB= P_+\calC_n$ and $V=\C^n$ be the defining representation of $SO(n)$. If the maps $\psi:\calB \to (\C^n)^{\otimes \ell}$ are injective, then Lemma \ref{lem:rep of Pi( )Pi^-1 on bond algebras} tells us that as representations of $SO(n)$, the ground state space $\calG_\ell$ is given by
        \begin{align*}
            \calG_\ell := \{\psi(B) : B\in \calB\} &\cong \Exterior^0 V \oplus \Exterior^2 V \oplus \dots \oplus \Exterior^{n-1} V \\
            &\cong \Exterior^1 V \oplus \Exterior^3 V \oplus \dots \oplus \Exterior^{n} V,
        \end{align*} We may pick bases for these irreps by
        \begin{equation}
            \Exterior^k V = \text{span} \{ \psi(\gamma_I + \star \gamma_I): \abs{I} = k \} . 
        \end{equation}
        \end{corollary}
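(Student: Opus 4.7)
The proof is essentially a direct application of Lemma \ref{lem:rep of Pi( )Pi^-1 on bond algebras} once we promote $\psi$ from a mere linear map to an $SO(n)$-module isomorphism onto its image. The plan is to carry out three short steps in sequence: establish that $\psi$ is an intertwiner, upgrade injectivity to an isomorphism of representations, and then transport the decomposition from Lemma \ref{lem:rep of Pi( )Pi^-1 on bond algebras} through $\psi$.

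First I would observe that Equation (\ref{eqn:physical rep lifts to bond algebra rep}), applied on the restricted domain $\calB = P_+\calC_n$, says precisely that for all $B \in \calB$ and $w \in SO(n)$,
\begin{equation*}
    w^{\otimes \ell} \, \psi(B) = \psi\bigl(\Pi(w) B \Pi(w)^{-1}\bigr).
\end{equation*}
Since $\Pi(w)$ is an invertible element of $\calC_n^{[ev]}$ and $P_+$ lies in the center of $\calC_n$, the Adjoint action $\Pi(w)(\cdot)\Pi(w)^{-1}$ preserves $P_+\calC_n$, so the right-hand side is well-defined and $\psi$ is a genuine $SO(n)$-equivariant linear map from the representation $\Pi(\cdot)\Pi(\cdot)^{-1}$ on $\calB$ to the representation $w^{\otimes\ell}$ on $V^{\otimes \ell}$.

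Next, I would invoke the hypothesis that $\psi$ is injective. Then $\psi$ is a linear isomorphism onto its image $\calG_\ell := \psi(\calB)$, and combining this with equivariance yields an isomorphism of $SO(n)$-representations $\calG_\ell \cong \calB = P_+\calC_n$. Lemma \ref{lem:rep of Pi( )Pi^-1 on bond algebras} (odd $n$ case) then supplies the decomposition
\begin{equation*}
    P_+\calC_n \;\cong\; \Exterior^0 V \oplus \Exterior^2 V \oplus \dots \oplus \Exterior^{n-1} V,
\end{equation*}
together with the explicit bases $\Exterior^k V = \mathrm{span}\{\gamma_I + \star\gamma_I : |I| = k\}$. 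Applying $\psi$ to these bases preserves linear independence (by injectivity) and preserves the $SO(n)$-action (by equivariance), so $\psi(\gamma_I + \star\gamma_I)$ with $|I| = k$ spans a copy of $\Exterior^k V$ inside $\calG_\ell$, establishing the first isomorphism and the basis claim. The equivalent second decomposition in terms of odd $k$ follows immediately from Proposition \ref{prop:ext k and ext n-k are isomorphic}, which gives $\Exterior^k V \cong \Exterior^{n-k}V$ as $SO(n)$-representations.

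There is essentially no obstacle here: the heavy lifting was done in Lemma \ref{lem:rep of Pi( )Pi^-1 on bond algebras} (irrep content and bases) and in establishing the intertwining relation (\ref{eqn:physical rep lifts to bond algebra rep}). The only subtle point to double-check is that restricting to the subalgebra $P_+\calC_n$ is compatible with the $SO(n)$-action, which holds because $P_\pm$ are central in $\calC_n$ and the representatives $\Pi(w)$ belong to $\calC_n^{[ev]}$, hence commute with $\gamma_0$ and thus with $P_\pm$. The real work of this corollary lies in proving the injectivity hypothesis itself, which is deferred to Theorem \ref{thm:w+ and w- are distinct dimerized states for l large enough}.
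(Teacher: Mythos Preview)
Your proposal is correct and follows exactly the approach the paper takes: the paper presents this corollary as an immediate consequence of the intertwining relation (\ref{eqn:physical rep lifts to bond algebra rep}), the assumed injectivity of $\psi$, and the decomposition furnished by Lemma \ref{lem:rep of Pi( )Pi^-1 on bond algebras}, with the second decomposition coming from $\Exterior^k V \cong \Exterior^{n-k} V$. You have simply made explicit the details that the paper leaves to the reader, including the check that $P_+\calC_n$ is preserved by the Adjoint action (which for odd $n$ holds because $\gamma_0$ is central in $\calC_n$).
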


    Onto the even $n$ case. Lemma \ref{lem:rep of Pi( )Pi^-1 on bond algebras} tells us that the irrep decomposition of $\Pi(\cdot)\Pi^{-1}$ acting on $\calC_n = \calC_n^{[ev]}\oplus \calC_n^{[odd]}$ is  
        \begin{align*}
            \calC_n^{[ev]} &\cong \Exterior^0 V \oplus \Exterior^2 V \oplus \dots \oplus \Exterior^{n-2}V \oplus \Exterior^{n} V \\
            \calC_n^{[odd]} &\cong \Exterior^1 V \oplus \Exterior^3 V \oplus \dots \oplus \Exterior^{n-3}V \oplus \Exterior^{n-1} V ,
        \end{align*} noting that the middle exterior power splits as $\Exterior^{n/2}V\cong U_+\oplus U_-$ where $U_+\subseteq P_+\calC_n^{[ev]}$ and $U_-\subseteq P_-\calC_n^{[ev]}$ are distinct irreps of the same dimension. Again, per the blocking procedure, MPSs of odd length $\ell$ may be thought of as having domain $\calC_n^{[odd]}$, and MPSs of even length $\ell$ may be thought of as having domain $\calC_n^{[ev]}$. 

        \begin{lemma} \label{lem:alpha swaps P_+ and P_-} 
		Let $n$ even and let $\gamma_I\in \calC_n$. 
		Then the automorphism $\alpha(\cdot) = \gamma_1(\cdot)\gamma_1$ sends
			\[
				\alpha(\gamma_I + \star \gamma_I) = (-1)^{s_I}(\gamma_I - \star \gamma_I),
			\] where $s_I\in \N$ depends on $I$.
		\end{lemma}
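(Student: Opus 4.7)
The plan is to compute $\alpha(\gamma_I)$ and $\alpha(\star \gamma_I)$ separately on the basis elements $\gamma_I$, and then observe that they pick up opposite signs under $\alpha$, which is exactly what converts $\gamma_I + \star \gamma_I$ into $\pm(\gamma_I - \star \gamma_I)$.

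First I would recall from (\ref{eqn:hodge dual is gamma0 multiplication}) that $\star \gamma_I = \gamma_0 \gamma_I$, so it suffices to understand $\alpha(\gamma_I)$ and $\alpha(\gamma_0)$. Using the Clifford relations $\gamma_1^2 = \idty$ and $\gamma_1 \gamma_j = -\gamma_j \gamma_1$ for $j\neq 1$, one sees on generators that $\alpha(\gamma_1) = \gamma_1$ and $\alpha(\gamma_j) = -\gamma_j$ for $j \neq 1$. Iterating this through a product $\gamma_I = \gamma_{i_1}\cdots \gamma_{i_k}$ gives
\[
    \alpha(\gamma_I) = (-1)^{\abs{I} - \chi_I}\gamma_I,
\]
where $\chi_I = 1$ if $1 \in I$ and $\chi_I = 0$ otherwise. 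Applying the same count to $\gamma_0$, which is a scalar multiple of $\gamma_1\cdots\gamma_n$, and using that $n$ is even, I would get $\alpha(\gamma_0) = (-1)^{n-1}\gamma_0 = -\gamma_0$.

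Combining these, since $\alpha$ is an algebra automorphism,
\[
    \alpha(\star\gamma_I) \;=\; \alpha(\gamma_0)\alpha(\gamma_I) \;=\; -\gamma_0 \cdot (-1)^{\abs{I}-\chi_I}\gamma_I \;=\; (-1)^{\abs{I}-\chi_I + 1}\star\gamma_I.
\]
Setting $s_I := \abs{I} - \chi_I$, we therefore have
\[
    \alpha(\gamma_I + \star\gamma_I) \;=\; (-1)^{s_I}\gamma_I + (-1)^{s_I+1}\star\gamma_I \;=\; (-1)^{s_I}(\gamma_I - \star\gamma_I),
\]
which is the claim.

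There is no real obstacle here; the only thing to be careful about is the sign coming from $\alpha(\gamma_0)$. The parity of $n$ enters exactly through this step, and it is what forces the ``$-$'' sign between $\gamma_I$ and $\star\gamma_I$ on the right-hand side. Equivalently, $\alpha$ anticommutes with left-multiplication by $\gamma_0$ on $\calC_n$ (because $\alpha(\gamma_0) = -\gamma_0$), which is the conceptual reason $\alpha$ swaps the $+1$ and $-1$ eigenspaces of $L_{\gamma_0}$ described in (\ref{eqn:basis gamma_I + star gamma_I}), i.e.\ swaps the summands $P_+\calC_n^{[ev]}$ and $P_-\calC_n^{[ev]}$ as claimed earlier.
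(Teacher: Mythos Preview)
Your proof is correct and follows essentially the same route as the paper: both use $\star\gamma_I = \gamma_0\gamma_I$ and the fact that $\gamma_1$ anticommutes with $\gamma_0$ when $n$ is even to see that $\alpha(\gamma_I)$ and $\alpha(\star\gamma_I)$ pick up opposite signs. Your version is slightly more explicit in that you actually identify $s_I = \abs{I} - \chi_I$, whereas the paper simply writes $\gamma_1\gamma_I\gamma_1 = (-1)^{s_I}\gamma_I$ and leaves $s_I$ implicit.
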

	\begin{proof}
    We know that for any fixed product $\gamma_I$, we have $\gamma_1 \gamma_I \gamma_1 = (-1)^{s_I} \gamma_I$ where $s_I$ tracks the sign. Observe that
        \begin{align*}
            \alpha(\gamma_I + \star \gamma_I) = \gamma_1 \gamma_I \gamma_1 + \gamma_1 \star \gamma_I \gamma_1 
            = \gamma_1 \gamma_I \gamma_1 + \gamma_1 \gamma_0 \gamma_I \gamma_1 
            &= \gamma_1 \gamma_I \gamma_1 - \gamma_0 \gamma_1 \gamma_I \gamma_1 \\
            &= \gamma_1 \gamma_I \gamma_1 - \star(\gamma_1 \gamma_I \gamma_1) \\
            &= (-1)^{s_I}(\gamma_I - \star \gamma_I),
        \end{align*} where in the second line we have used that $\star \gamma_I = \gamma_0 \gamma_I$ by (\ref{eqn:hodge dual is gamma0 multiplication}) and in the third line we have used that $n$ is even and so $\gamma_0\gamma_1=-\gamma_1\gamma_0$.
		\end{proof}
        Recall that the basis elements $\gamma_I+\star\gamma_I$ are in the image of $P_+$, while the elements $\gamma_I-\star \gamma_I$ are in the image of $P_-$. Thus using the explicit bases provided by Lemma \ref{lem:rep of Pi( )Pi^-1 on bond algebras}, the above Lemma \ref{lem:alpha swaps P_+ and P_-} reveals that $\alpha$ is a automorphism mapping the $SO(n)$ representations in the following way:
        \begin{equation}
            \alpha(\Exterior^k V) = \Exterior^{n-k} V \cong \Exterior^k V, \quad \alpha(U_\pm) = U_\mp ,
        \end{equation} where we have recalled from Proposition \ref{prop:ext k and ext n-k are isomorphic} that as representations of $SO(n)$, $\Exterior^k V \cong \Exterior^{n-k} V$. In particular, since $\psi_+\circ \alpha = \psi_-$, we obtain the irrep decomposition corresponding to the states $\psi_\pm$ for sufficiently long ($\ell\geq n$) chains. Of particular importance are the irreps $U_\pm$, so we opt to only state the parities of $\ell$ for which these irreps are visible.

        \begin{corollary} \label{cor:reps of even ground state spaces}
        Let $n\bmod{4} = 0$. Let $\ell$ be even so that $\calB:= P_+\calC_n^{[ev]}$. If $\psi_\pm: \calB \to (\C^n)^{\otimes \ell}$ are injective, then Lemma \ref{lem:rep of Pi( )Pi^-1 on bond algebras} tells us that as representations of $SO(n)$, the ground state spaces $\calG_\ell^+ \oplus \calG_\ell^- = \calG_\ell$ are given by 
            \begin{align*}
                \calG_\ell^+ := \{\psi_+(B): B\in \calB\} &\cong \Exterior^0 V \oplus \Exterior^2 V \oplus \dots \oplus \Exterior^{n/2 - 2} V \oplus U_+ \\
                \calG_\ell^- := \{\psi_-(B): B\in \calB\} &\cong \Exterior^0 V \oplus \Exterior^2 V \oplus \dots \oplus \Exterior^{n/2 - 2} V \oplus U_- .             
            \end{align*}
        Likewise, let $n\bmod{4} = 2$, and let $\ell$ be odd so $\calB := P_+\calC_n^{[odd]} = P_+\calC_n P_-$. If $\psi_\pm:\calB\to (\C^n)^{\otimes \ell}$ are injective, then 
            \begin{align*}
                \calG_\ell^+ := \{\psi_+(B): B\in \calB\} &\cong \Exterior^1 V \oplus \Exterior^3 V \oplus \dots \oplus \Exterior^{n/2 - 2} \oplus U_+ \\
                \calG_\ell^- := \{\psi_-(B): B\in \calB\} &\cong \Exterior^1 V \oplus \Exterior^3 V \oplus \dots \oplus \Exterior^{n/2 - 2} \oplus U_- .          
            \end{align*}

        We may pick bases for these irreps by
        \begin{align*}
            \calG_\ell^+:& \quad \Exterior^k V = \text{span} \{ \psi(\gamma_I + \star \gamma_I): \abs{I} = k \} , \quad U_+ = \text{span} \{ \psi(\gamma_I + \star \gamma_I): \abs{I} = n/2 \} \\
            \calG_\ell^-:& \quad \Exterior^k V = \text{span} \{ \psi(\gamma_I - \star \gamma_I): \abs{I} = k \} , \quad U_- = \text{span} \{ \psi(\gamma_I - \star \gamma_I): \abs{I} = n/2 \}
        \end{align*}
        \end{corollary}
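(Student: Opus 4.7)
The plan is to derive this corollary by combining three ingredients already in hand: the intertwining relation (\ref{eqn:physical rep lifts to bond algebra rep}), which states that $\psi$ is an $SO(n)$-equivariant map from the bond algebra (with the action $\Pi(\cdot)\Pi^{-1}$) to $(\C^n)^{\otimes \ell}$; the decomposition of $\calC_n$ under $\Pi(\cdot)\Pi^{-1}$ given by Lemma \ref{lem:rep of Pi( )Pi^-1 on bond algebras} together with its explicit $\{\gamma_I \pm \star\gamma_I\}$ bases; and the relation $\psi_-(B) = \psi_+(\alpha(B))$ from (\ref{eqn:alpha swaps psi_+ with psi_-}) together with Lemma \ref{lem:alpha swaps P_+ and P_-}, which tells us how $\alpha$ swaps the $P_+$-basis with the $P_-$-basis up to sign.

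First I would treat the $\calG_\ell^+$ side in the $n \bmod 4 = 0$, even-$\ell$ case. By hypothesis $\psi_+$ is an injective $SO(n)$-equivariant map from $\calB = P_+\calC_n^{[ev]}$ into $(\C^n)^{\otimes \ell}$, so $\calG_\ell^+ \cong P_+\calC_n^{[ev]}$ as $SO(n)$-representations. The elements $\{\gamma_I + \star\gamma_I : \abs{I} \leq n/2,\ \abs{I} \text{ even}\}$ form a basis for $P_+\calC_n^{[ev]}$ by the diagonalization (\ref{eqn:basis gamma_I + star gamma_I}) restricted to $\calC_n^{[ev]}$, and Lemma \ref{lem:rep of Pi( )Pi^-1 on bond algebras} identifies the span with fixed $\abs{I} = k < n/2$ as $\Exterior^k V$ and the $\abs{I} = n/2$ piece as the half-spin irrep $U_+$. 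Pushing these bases through $\psi_+$ yields the stated description of $\calG_\ell^+$ with its explicit basis.

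Next I would handle $\calG_\ell^-$ by using $\psi_- = \psi_+ \circ \alpha$ from (\ref{eqn:alpha swaps psi_+ with psi_-}). Lemma \ref{lem:alpha swaps P_+ and P_-} shows that $\alpha$ sends each basis element $\gamma_I + \star\gamma_I$ to $\pm(\gamma_I - \star\gamma_I)$, so $\alpha$ is a bijection $P_+\calC_n^{[ev]} \to P_-\calC_n^{[ev]}$. Hence $\psi_-(P_+\calC_n^{[ev]}) = \psi_+(P_-\calC_n^{[ev]})$, and by the same Lemma \ref{lem:rep of Pi( )Pi^-1 on bond algebras} the basis $\{\gamma_I - \star\gamma_I\}$ decomposes $P_-\calC_n^{[ev]}$ into $\bigoplus_{k\text{ even},\, k<n/2} \Exterior^k V \oplus U_-$. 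Since $\alpha$ commutes with the $SO(n)$-action $\Pi(\cdot)\Pi^{-1}$ (because $\Pi(w) \in \calC_n^{[ev]}$ while $\gamma_1$ appears in $\alpha$; a one-line check using the relation $\gamma_1 \Pi(w) = \Pi(w)\gamma_1$ on $\calC_n^{[ev]}$), this identification respects the $SO(n)$-action and identifies $\calG_\ell^-$ as stated. The odd $\ell$, $n \bmod 4 = 2$ case is formally identical: replace $\calC_n^{[ev]}$ by $\calC_n^{[odd]}$ throughout, so the basis index $\abs{I}$ runs over odd values, and the same arguments apply verbatim.

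The one step to check carefully is the extension of $\psi_+$ from $P_+\calC_n^{[ev]}$ to $P_-\calC_n^{[ev]}$ and the injectivity there. This is harmless: the formula (\ref{def of matrix product states}) makes perfect sense for $B \in \calC_n^{[ev]}$, and the proof of injectivity of $\psi_\pm$ for $\ell \geq n$ (to be established in Theorem \ref{thm:w+ and w- are distinct dimerized states for l large enough}) will apply symmetrically to both blocks. Alternatively, one can sidestep extension altogether by working intrinsically on $P_-\calC_n^{[ev]}$: the distinguished highest-weight vectors $f_1 \cdots f_{n/2-1} \wt{f}_{n/2} \in U_-$ listed in Section \ref{secapp:comment for proof of parent property} provide concrete nonzero witnesses under $\psi_-$, and Schur's lemma together with the multiplicity-free appearance of each $\Exterior^k V$ in $\calC_n^{[ev]}$ forces the $\psi_-$-image of each summand to be an isomorphic copy. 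This is the only real obstacle; everything else is bookkeeping on parities of $\abs{I}$ and $\ell$.
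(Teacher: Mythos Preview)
Your overall approach is correct and essentially the same as the paper's, but there is one incorrect step. You claim that $\alpha$ commutes with the $SO(n)$-action $\Pi(\cdot)\Pi^{-1}$, justified by ``$\gamma_1 \Pi(w) = \Pi(w)\gamma_1$ on $\calC_n^{[ev]}$''. This is false: for example $\gamma_1$ anticommutes with $\gamma_1\gamma_2 \in \pi(\so(n)) \subseteq \calC_n^{[ev]}$. In fact, by equation~(\ref{eq:representation of SO(n) on clifford algebra (SO(n) chapter)}) one has $\Pi(w)^{-1}\gamma_1\Pi(w) = \sum_j w_{1j}\gamma_j$, so $\alpha$ intertwines the action of $w$ with that of $RwR^{-1}$ (with $R$ as in Proposition~\ref{prop:commuting E with alpha identity}), a genuinely twisted action.

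The fix is easy and you have already written down the ingredients. You do not need $\alpha$ to be $SO(n)$-equivariant at all. Once you have $\calG_\ell^- = \psi_-(P_+\calC_n^{[ev]}) = \psi(P_-\calC_n^{[ev]})$, the $SO(n)$-structure on $\calG_\ell^-$ follows directly from the intertwining relation~(\ref{eqn:physical rep lifts to bond algebra rep}) for the full map $\psi:\calC_n^{[ev]}\to(\C^n)^{\otimes\ell}$, restricted to $P_-\calC_n^{[ev]}$. That subspace is $SO(n)$-invariant because every $\Pi(w)\in\calC_n^{[ev]}$ commutes with $\gamma_0$ and hence with $P_-$, so $\Pi(w)(P_-\calC_n^{[ev]})\Pi(w)^{-1} = P_-\calC_n^{[ev]}$. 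Lemma~\ref{lem:rep of Pi( )Pi^-1 on bond algebras} then gives the decomposition of $P_-\calC_n^{[ev]}$ directly, and injectivity of $\psi$ on this block transports it to $\calG_\ell^-$. This is exactly how the paper proceeds: the only role of $\alpha$ is to identify \emph{which} block of $\calC_n^{[ev]}$ the space $\calG_\ell^-$ comes from, not to carry the representation structure across.
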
 
        It is worth observing that the only irreps distinguishing $\calG_\ell^+$ from $\calG_\ell^-$ are the $U_\pm$ irreps, which are contained in $\Exterior^{n/2}V$ and so are only supported in the MPS spaces when $\ell\geq n/2$. This is actually sufficient to demonstrate that $\omega_+ \neq \omega_-$, since $\psi_\pm(\gamma_I +\star \gamma_I)\neq 0$ when $\ell\geq n/2$ and Schur orthogonality of the distinct irreps $U_+$ and $U_-$. But with a little more work, we will obtain a stronger result that the representation content of $\calG_\ell$ can only suggest:\footnote{It may well be that two MPSs are identical as representations, but not be the same MPSs.} that $\calG_\ell^+$ and $\calG_\ell^-$ are completely identical for $\ell<n/2$.
        
        Before proceeding, we record one more corollary. It will not be used to prove anything, but it provides another perspective into the automorphism $\alpha$ which swaps $\omega_+$ with $\omega_-$.
        \begin{corollary} \label{cor:alpha swaps spin reps}
            The spin representations $\Pi\cong \Pi_+ \oplus \Pi_-$ of $Spin(n)$ are given by the action of invertible elements in $\calC_n^{[ev]}$ on $\C^{n/2}\cong \im(P_+)\oplus \im(P_-)$. In particular, we can write the representatives as $\Pi_\pm(w) = \gamma_I \pm \star \gamma_I$ with $\abs{I}$ even. Lemma \ref{lem:alpha swaps P_+ and P_-} then means that $\alpha$ swaps the two half-spin representations, i.e. 
            \[
                \alpha(\Pi_\pm(w)) = (-1)^{s_w}\Pi_\mp(w) \qquad  \text{for all } w\in Spin(n) ,
            \] where $s_w$ is an integer tracking the sign as before.
        \end{corollary}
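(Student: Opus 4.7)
The plan is to combine the structural description of the spin representation from Section~\ref{sec:the spin representations Pi of SO(n)} with the basis-level calculation in Lemma~\ref{lem:alpha swaps P_+ and P_-}. First I would recall that the spin representation is constructed by exponentiating $\pi:\so(n)\to \calC_n^{[ev]}$ given by $L_{ij}\mapsto \frac{1}{2}\gamma_i\gamma_j$; since $\calC_n^{[ev]}$ is closed under multiplication and exponentiation, every $\Pi(w)$ lives in $\calC_n^{[ev]}$ and acts as left multiplication on $\calC_n \cong M_{2^{n/2}}(\C)$. Since every such $\Pi(w)$ commutes with $\gamma_0\in Z(\calC_n^{[ev]})$, the splitting $\calC_n = P_+\calC_n \oplus P_-\calC_n$ from~(\ref{eqn:Cn splits into 4 blocks}) is $\Pi$-invariant, which is the statement $\Pi \cong \Pi_+ \oplus \Pi_-$.

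Next I would use the diagonalization in~(\ref{eqn:basis gamma_I + star gamma_I}), restricted to $\calC_n^{[ev]}$, which gives compatible bases $\{\gamma_I + \star\gamma_I : \abs{I}\leq n/2,\ \abs{I} \text{ even}\}$ for $P_+\calC_n^{[ev]}$ and $\{\gamma_I - \star\gamma_I : \abs{I}\leq n/2,\ \abs{I} \text{ even}\}$ for $P_-\calC_n^{[ev]}$. Writing $\Pi(w) = \sum_I c_I(w)\gamma_I$ with $\abs{I}$ even, the block decomposition $\Pi_\pm(w) = P_\pm\Pi(w) = \tfrac{1}{2}(\Pi(w) \pm \star\Pi(w))$ expands as $\sum_I \tfrac{1}{2}c_I(w)(\gamma_I \pm \star\gamma_I)$, which is the intended content of the abbreviated notation ``$\Pi_\pm(w) = \gamma_I \pm \star\gamma_I$, $\abs{I}$ even.'' This is essentially bookkeeping and presents no difficulty.

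The heart of the corollary is then a direct application of Lemma~\ref{lem:alpha swaps P_+ and P_-}: since $\alpha(\gamma_I + \star\gamma_I) = (-1)^{s_I}(\gamma_I - \star\gamma_I)$ term-by-term, linearity gives $\alpha(\Pi_+(w)) = \sum_I \tfrac{1}{2}c_I(w)(-1)^{s_I}(\gamma_I - \star\gamma_I) \in P_-\calC_n^{[ev]}$, and symmetrically for $\Pi_-(w)$. The only remaining task is to repackage the string of signs $\{(-1)^{s_I}\}_I$ appearing in this sum as the single sign $(-1)^{s_w}$ claimed in the statement. This is where the main obstacle lies: $s_I$ depends on whether $1\in I$ (by the explicit computation $\gamma_1\gamma_I\gamma_1 = (-1)^{s_I}\gamma_I$ carried out in the proof of Lemma~\ref{lem:alpha swaps P_+ and P_-}), so the sign cannot naively be factored out of a generic linear combination.

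To finish I would argue that the sign is in fact controlled at the level of generators: on the $\abs{I} = 2$ basis of $\pi(\so(n))$, the automorphism $\alpha$ acts by $\alpha(\gamma_i\gamma_j) = \epsilon_{ij}\gamma_i\gamma_j$ with $\epsilon_{ij}=+1$ if $1\notin\{i,j\}$ and $-1$ otherwise, so $\alpha$ restricts to an involution on $\pi(\so(n))$ whose signs are multiplicative under composition. Tracking these through the exponential map, any product of generators $w = g_1\cdots g_k$ acquires the cumulative sign $(-1)^{s_w} := \prod_\ell \epsilon(g_\ell)$, and the argument reduces to checking that $\Pi(w)$ is supported on a single sign-sector of basis elements. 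Equivalently, one can invoke $\alpha(\Pi(w)) = \gamma_1\Pi(w)\gamma_1$ together with the covariance relation~(\ref{eq:representation of SO(n) on clifford algebra (SO(n) chapter)}) to identify $s_w$ as the sign produced by the conjugation action of $\gamma_1$ on $w$ inside the $Pin(n)$-group. This is the delicate point; once it is in hand, combining it with the previous paragraph immediately yields $\alpha(\Pi_\pm(w)) = (-1)^{s_w}\Pi_\mp(w)$.
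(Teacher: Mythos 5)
You have correctly identified the crux of the problem: the sign $s_I$ in Lemma~\ref{lem:alpha swaps P_+ and P_-} is determined by whether $1\in I$, so it genuinely varies term by term in the expansion $\Pi_+(w)=\sum_I \tfrac{1}{2}c_I(w)(\gamma_I+\star\gamma_I)$. The issue is that your proposed resolution — that $\Pi(w)$ is supported on a single sign-sector of basis elements, or equivalently that conjugation by $\gamma_1$ in $Pin(n)$ produces a scalar — is not true. Take $n=4$ and $w=e^{\theta L_{12}}$; then $\Pi(w)=e^{\frac{\theta}{2}\gamma_1\gamma_2}=\cos\tfrac{\theta}{2}\,\idty+\sin\tfrac{\theta}{2}\,\gamma_1\gamma_2$, and $\alpha(\idty)=\idty$ while $\alpha(\gamma_1\gamma_2)=-\gamma_1\gamma_2$. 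For $\theta\in(0,\pi)$ no scalar $(-1)^{s_w}$ satisfies $\alpha(\Pi_\pm(w))=(-1)^{s_w}\Pi_\mp(w)$, and likewise $\gamma_1 w\gamma_1^{-1}$ is a genuinely different element of $Spin(n)$, not $\pm w$. So the chain ``signs are multiplicative under composition $\Rightarrow$ single cumulative sign $(-1)^{s_w}$'' breaks down, and the final paragraph of your argument cannot be repaired as stated.

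The statement should be read — as the surrounding prose ``$\alpha$ \emph{swaps} the two half-spin representations'' suggests, and as the later use in the text requires — as an equivalence of $Spin(n)$-representations, $\alpha\circ\Pi_\pm\cong\Pi_\mp$, not as a pointwise identity with a scalar. That version is provable by exactly the machinery you set up in the first two paragraphs: $\alpha$ is an algebra automorphism of $\calC_n^{[ev]}$, so $\alpha\circ\Pi_+$ is an irreducible representation of $Spin(n)$ on $\im(P_-)$; at the Lie algebra level $\alpha\circ\pi(L_{ij})=\epsilon_{ij}\,\pi(L_{ij})$ with your $\epsilon_{ij}$, so $\alpha$ flips the sign of $\pi(H_1)=\tfrac{1}{2}\gamma_1\gamma_2$ and fixes $\pi(H_j)$ for $j>1$. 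The highest weight $\tfrac{1}{2}(L_1+\dots+L_{n/2})$ of $\Pi_+$ is thereby sent to $\tfrac{1}{2}(-L_1+L_2+\dots+L_{n/2})$, which lies in the Weyl orbit of the dominant weight $\tfrac{1}{2}(L_1+\dots+L_{n/2-1}-L_{n/2})$, the highest weight of $\Pi_-$; by the Theorem of Highest Weight (Theorem~\ref{thm:highest weight reps}) this gives $\alpha\circ\Pi_+\cong\Pi_-$, and symmetrically $\alpha\circ\Pi_-\cong\Pi_+$. Your first two paragraphs are correct; replace the last one with this highest-weight computation and drop the pointwise-scalar claim.
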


    \subsection{Proving Distinctness}
    We can now proceed to the main theorem. Again, we state the version of the theorem for even $\ell$ to match the expressions of $\omega_\pm$ as FCSs, but it holds equally well for odd $\ell$ by restricting the domain of $\psi$ from $\calC_n$ to $\calC_n^{[odd]}$ (and so $\psi_\pm$ have domain $P_+\calC_n P_-$, where $\psi_- = \psi_+\circ \alpha$).
    
	\begin{theorem} \label{thm:w+ and w- are distinct dimerized states for l large enough}
			Let $\ell \geq n$. 
			Take the basis $\{\gamma_I + \star \gamma_I : \, \abs{I}\leq n/2\}$ of $\calB = P_+\calC_n^{[ev]}$ defined in (\ref{eqn:basis gamma_I + star gamma_I}), and define normalized matrix product vectors $\varphi_\pm: P_+\calC_n^{[ev]}\to (\C^n)^{\otimes\ell}$ using (\ref{def of matrix product states})
		\begin{equation*}\begin{split}
			\varphi_\pm^{(I)} &:= c_\pm(I) \psi_\pm(\gamma_I+\star\gamma_I),
		\end{split}\end{equation*} where $c_\pm(I)$ is a normalization constant ensuring $\norm{\varphi_\pm^{(I)}}=1$. 
	
		Let $\rho_\ell^\pm$ denote the reduced density matrices of $\omega_\pm$ restricted to the interval $[1,\dots,\ell]$. 
		Then there are eigenvalues $\mu_{\,\abs{I}}(\ell)\in (0,1]$, depending only on the length of the chain $\ell$ and the size of $\abs{I}$, such that
		\begin{equation*}
				\rho_\ell^\pm = \sum_{\abs{I}\leq n/2} \mu_{\,\abs{I}} \ket{\varphi_\pm^{(I)}}\bra{\varphi_\pm^{(I)}} .
		\end{equation*}
		In particular, the set of eigenvectors $\{\varphi_\pm^{(I)}\}$ is a mutually orthogonal set, so the support spaces of $\rho_\ell^+$ and $\rho_\ell^-$ are orthogonal and thus $\rho_\ell^+\neq \rho_\ell^-$. 
  
    The entanglement structure of these two reduced density matrices is identical since $\rho_\ell^+ = R^{\otimes \ell}\rho_\ell^- R^{\otimes \ell}$, where $R$ is given in Prop. \ref{prop:commuting E with alpha identity}.
		
		In the large chain limit $\ell\to\infty$, the eigenvalues converge to a single value $\mu_{\,\abs{I}}(\ell) \to 1/2^{n-2}$.
		\end{theorem}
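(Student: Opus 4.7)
The plan begins by exploiting the $SO(n)$-invariance of $\omega_\pm$ established in Theorem \ref{thm:Invariance of ground states under SO(n)}. Since $\omega_\pm$ is a pure 2-periodic finitely correlated state (Corollary \ref{cor:omega+- are pure}), standard FCS theory gives that the range of the reduced density matrix $\rho_\ell^\pm$ on any finite interval $[1,\ell]$ is contained in the MPS space $\calG_\ell^\pm$. Commutation with the tensor representation $w^{\otimes \ell}$ of $SO(n)$ then forces $\rho_\ell^\pm$ to respect the irrep decomposition of $\calG_\ell^\pm$ given in Corollary \ref{cor:reps of even ground state spaces}. Crucially, this decomposition is multiplicity-free---distinct irreducibles $\Exterior^k V$ for $k<n/2$ and $U_\pm$ for $k=n/2$---so Schur's lemma forces $\rho_\ell^\pm$ to act as a scalar $\mu_k$ on each irreducible component, with the scalar depending only on the irrep label $k=|I|$.

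I would then verify that the vectors $\{\psi_\pm(\gamma_I + \star\gamma_I) : |I|=k\}$ span a single copy of $\Exterior^k V$ (or $U_\pm$) and are mutually orthogonal within that copy. Orthogonality follows because distinct $I$ of the same size correspond to distinct weight vectors under the Cartan subalgebra of $\so(n)$ (see Section \ref{sec:exterior power reps}), and different weight spaces are orthogonal with respect to any $SO(n)$-invariant inner product. After normalizing to unit vectors $\varphi_\pm^{(I)}$, Schur's lemma produces the desired diagonal form $\rho_\ell^\pm = \sum_{|I|\leq n/2}\mu_{|I|}\ket{\varphi_\pm^{(I)}}\bra{\varphi_\pm^{(I)}}$, with eigenvalues labeled only by $|I|$. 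For the large-$\ell$ asymptotic I would invoke primitivity of the transfer operator $\bbF_\idty$ (Proposition \ref{prop:F maps are primitive}): as $\ell\to\infty$ all non-unit eigenvalues of $\bbF_\idty$ are exponentially suppressed, so $\rho_\ell^\pm$ converges to the normalized projection onto $\calG_\ell^\pm$. Since $\dim\calG_\ell^\pm = \dim P_+\calC_n^{[ev]} = 2^{n-2}$ once the MPS is injective, the limit eigenvalue is $1/2^{n-2}$.

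The entanglement equivalence $\rho_\ell^+ = R^{\otimes\ell}\rho_\ell^- R^{\otimes\ell}$ is immediate from the $O(n)$-to-$SO(n)$ symmetry breaking relation (\ref{determinant -1 maps between omega+ and omega-}) with $M=R$. The hardest step will be establishing orthogonality of the combined eigenvector set $\{\varphi_+^{(I)}\}\cup\{\varphi_-^{(I)}\}$, which yields the orthogonal support spaces. For $|I|\neq|J|$, orthogonality is inherited from Schur orthogonality of distinct irreducible components of $\calH_{[1,\ell]}$, where care must be taken for the Hodge-dual pairs $k\leftrightarrow n-k$ whose $\pm$-copies in $\calG_\ell^+$ and $\calG_\ell^-$ are isomorphic as abstract $SO(n)$-modules. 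For $|I|=|J|$, the computation reduces via (\ref{eqn:alpha swaps psi_+ with psi_-}) and Lemma \ref{lem:traces of products of gamma operators} to a combinatorial inner product weighing spin configurations that reduce to Clifford grade $|I|$ against those reducing to grade $n-|I|$; realizing the requisite cancellation is the main technical obstacle and is precisely what the $\ell\geq n$ injectivity hypothesis delivers, since only at this length do both Hodge-dual sectors of $\calC_n^{[ev]}$ become faithfully represented in the MPS.
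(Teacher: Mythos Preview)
Your overall strategy matches the paper's: $SO(n)$-invariance plus Schur's lemma for the eigenvalue structure, the $\alpha$-swap relation (\ref{eqn:alpha swaps psi_+ with psi_-}) and trace lemma for cross-sector orthogonality, primitivity of $\bbF_\idty$ for the asymptotic, and the on-site symmetry for $\rho_\ell^+ = R^{\otimes\ell}\rho_\ell^- R^{\otimes\ell}$. You have correctly identified the hardest step as the orthogonality $\langle\varphi_+^{(I)},\varphi_-^{(I)}\rangle=0$, and your diagnosis that $\ell\geq n$ is what forces the cancellation between the $\gamma_I$ and $\star\gamma_I$ contributions is exactly how the paper's computation runs (it becomes $\binom{n}{|I|}d^2 - \binom{n}{n-|I|}d^2 = 0$).

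One step fails as written: your argument for orthogonality of distinct $\varphi_\pm^{(I)}$ with $|I|=|J|$ claims that the $\gamma_I$ are weight vectors for the Cartan subalgebra of $\so(n)$. They are not---the weight basis is built from the $f_k = \gamma_{2k-1}+i\gamma_{2k}$ and $\wt{f}_k$ of Section \ref{sec:another take on lemma pi reps}, not from the $\gamma_I$ themselves. The paper instead obtains this orthogonality directly from Lemma \ref{lem:traces of products of gamma operators}: since the multi-indices $I,J,\star I,\star J$ are pairwise distinct when $I\neq J$ with $|I|,|J|\leq n/2$, every cross term $\Tr(\gamma_I\gamma_K)\overline{\Tr(\gamma_J\gamma_K)}$ vanishes. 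This same trace argument is also what handles the cross-sector case $\langle\varphi_+^{(I)},\varphi_-^{(J)}\rangle$ for $I\neq J$, which you correctly flagged as requiring care beyond Schur (since $\calG_\ell^+$ and $\calG_\ell^-$ contain isomorphic copies of $\Exterior^k V$ for $k<n/2$). Replace the weight-vector claim with the trace computation and your proof goes through.
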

	
		\begin{proof}
			The eigenvectors of our reduced density matrices are all recoverable as matrix product states.

            \vspace{5mm} 
			\noindent
   \textbf{Orthogonality of eigenvectors}
    
    We begin by noting that multi-index notation can handle products of $F_i$ and its adjoint, e.g. 
        \[
			F_{i_\ell}^*F_{i_{\ell-1}} \dots F_{i_2}^* F_{i_1} = \pm \gamma_K,
		\] where $\abs{K}\leq \min(\ell, n)$ since some $F_{i_k}$ may be $\idty$. Combining this observation with Lemma \ref{lem:traces of products of gamma operators} and its corollary, we may readily compute matrix coefficients of $\varphi_\pm$. In particular, using the Lemma and orthogonality of $\ket{i_k}$ it is straightforward to verify that if $I\neq J$ where $\abs{I},\abs{J}\leq n/2$, then 
	   \[
		  \inprod{\varphi(\gamma_I + \star\gamma_I), \varphi(\gamma_J + \star \gamma_J)} = 0,
	   \] and likewise for $\varphi_-^{(I)}$. Direct computation reveals that $\varphi_\pm^{(I)}$ is nonzero whenever $\ell \geq \abs{I}$, so since $\abs{I}\leq n/2$, choosing $\ell \geq n/2$ is sufficient to show that $\varphi^{(I)}_+$ and $\varphi^{(I)}_-$ are separately injective linear maps. 
    
        The interesting case then is orthogonality of $\varphi_+^{(I)}$ and $\varphi_-^{(I)}$. Recall observation (\ref{eqn:alpha swaps psi_+ with psi_-}), which states that $\varphi_-(B) = \varphi_+\circ \alpha(B)$ for all $B\in \calC_n^{[ev]}$, and Lemma \ref{lem:alpha swaps P_+ and P_-} to see that 
        \begin{equation} \begin{split}\label{relation of + and - matrix product states via alpha}
			\varphi_-(\gamma_I + \star \gamma_I) &= (-1)^{s_I} \varphi_-(\alpha(\gamma_I - \star \gamma_I) \\
				&= (-1)^{s_I} \varphi_+(\gamma_I - \star \gamma_I). 
		\end{split} \end{equation} We will demonstrate they are orthogonal on chains of length $\ell=n$, which then implies orthogonality for chains of longer length $\ell\geq n$.
        Using relation (\ref{relation of + and - matrix product states via alpha}) and that $\inprod{\varphi(\gamma_I),\varphi(\star\gamma_I)}=0$ since the multi-indices $I\neq \star I$, we may compute:
        \begin{align*}
				\langle\varphi_+ (\gamma_I + & \star\gamma_I),  \varphi_-(\gamma_I + \star \gamma_I)\rangle  \\
				&= (-1)^{s_I} \inprod{\varphi_+(\gamma_I + \star \gamma_I), \varphi_+(\gamma_I - \star \gamma_I)} \\
				&= (-1)^{s_I} \paran{\inprod{\varphi_+(\gamma_I), \varphi_+(\gamma_I)} - \inprod{\varphi_+(\star\gamma_I), \varphi_+(\star\gamma_I)} } \\
				&= (-1)^{s_I} \paran{\sum_{K} \abs{\Tr(\gamma_I\cdot \gamma_K)}^2 - \sum_{K'} \abs{\Tr((\star\gamma_I)\cdot \gamma_{K'})}^2 } \\
				&= (-1)^{s_I} \paran{\sum_{\gamma_K = \pm \gamma_I} \abs{\Tr(\gamma_I \cdot \gamma_I)}^2 - \sum_{\gamma_{K'} = \pm \star \gamma_I} \abs{\Tr(\star \gamma_I \cdot \star \gamma_I)}^2 } \\
			\end{align*} where in the final line we have used Lemma \ref{lem:traces of products of gamma operators} to see that many terms of this sum vanish. We may use the observation $\abs{\Tr(\gamma_I\cdot \gamma_I)} = \abs{\Tr(\star\gamma_I\cdot \star\gamma_I)} = \Tr(\idty) =: d$ to continue:
            \begin{align*}
                \langle\varphi_+ (\gamma_I +  \star\gamma_I),  \varphi_-(\gamma_I + \star \gamma_I)\rangle &= (-1)^{s_I} \paran{\sum_{\gamma_K = \pm \gamma_I} d^2 - \sum_{\gamma_{K'} = \pm \star \gamma_I} d^2 } \\
				&= (-1)^{s_I} \paran{\binom{n}{\abs{I}} d^2 - \binom{n}{n-\abs{I}} d^2} \\
				&= 0, 
            \end{align*} where the second line follows from the crucial assumption that $\ell \geq n$ (in fact, it suffices to assume that $\ell\geq n-\abs{I}$). If we instead had $\ell< n$, it will be that any $\gamma_I$ with $\ell < \abs{I}\leq n$ will not be cancelled in the second line, leaving a nonzero overlap.
			
			To see that $\rho_+ = R^{\otimes \ell}\rho_-  R^{\otimes \ell}$, we need only combine Proposition (\ref{prop:commuting E with alpha identity}) with the eigenvector relationship $\ket{\varphi_+(\gamma_I +\star \gamma_I)} = \ket{\varphi_-(\alpha(\gamma_I + \star \gamma_I))}$ to write
			\begin{align*}
				\ket{\varphi_+^{(I)}}\bra{\varphi_+^{(I)}} &= \ket{\varphi_-(\alpha(\gamma_I +\star \gamma_I))}\bra{\varphi_-(\alpha(\gamma_I +\star \gamma_I))} \\
				&= R^{\otimes \ell} \ket{\varphi_-^{(I)}}\bra{\varphi_-^{(I)}} R^{\otimes \ell}.
			\end{align*}

            \vspace{5mm} 
			\noindent \textbf{Eigenvalues $\mu$ for fixed $\ell$ depend only on $\abs{I}$}

            We now demonstrate that the eigenvalues $\mu_{\, \abs{I}}$ depend only on $\abs{I}$ for a fixed length. 
			Thanks to the relation $R^{\otimes \ell}\rho_+ R^{\otimes \ell} = \rho_-$, we can restrict our attention to $\rho_+$.
			We proved earlier in Theorem \ref{thm:Invariance of ground states under SO(n)} that the state $\omega_+$ is invariant under local $SO(n)$ symmetry, and so this means for any $w\in SO(n)$, 
			\[
				w^{\otimes \ell}\rho_+ (w^{-1})^{\otimes \ell} = \rho_+.
			\] Now, by the simultaneous block diagonalization Lemma \ref{lem:sim block diag}, any irreducible invariant subspace of the representation $V^{\otimes \ell}$ must be contained in single eigenspace of $\rho_+$. In particular, Corollary \ref{cor:reps of even ground state spaces} reveals that the vector space $\text{span}\{\varphi_+^{(I)}: \abs{I} = k\}\cong \Exterior^k V$ (or $\cong U_+$ in the case of $k=n/2$) is an irrep of $SO(n)$. Thus each vector $\varphi_+^{(I)}$ with $\abs{I}=k$ is an eigenvector of $\rho_+$ of the same eigenvalue $\mu_{\abs{I}}$. 

            \vspace{5mm} 
			\noindent \textbf{Convergence of $\mu_{\,\abs{I}}(\ell) \to \mu$}
			
			We earlier diagonalized the transfer operators $\bbF_\idty^{(1)},\bbF_\idty^{(2)}$ in the proof of Proposition \ref{prop:F maps are primitive}.
			Noting that $\bbF_\idty^{(1)},\bbF_\idty^{(2)}$ are adjoints of each other with respect to the Hilbert-Schmidt inner product and are primitive CP maps with eigenvector $\idty$ corresponding to eigenvalue $\lambda = 1$, it is immediate that $\idty$ is an eigenvector of the adjoint transfer operators $\bbF_\idty^{(2)},\bbF_\idty^{(1)}$.
			Using purity of the states $\omega_\pm$, we can then apply Proposition 5.9 in \cite{fannes1992finitely} to argue that every eigenvalue $\mu_{\,\abs{I}}(\ell)$ must converge to the same value $\mu$. 
			By injectivity of these matrix product states as maps $\varphi_\pm(\cdot): P_+\calC_n^{[ev]}\to (\C^n)^{\otimes \ell}$ and the density matrix normalization $\Tr(\rho_\pm) = 1$, we can compute $\mu$ by computing the dimension of $P_+\calC_n^{[ev]}$: thus, $\mu = \frac{1}{2^{n-2}}$.
		\end{proof}

		\begin{corollary}\label{cor:rho +- are identical for small l}
		Let $\ell<n/2$. Then the reduced density matrices $\rho_\ell^\pm$ are identical:
		\[
			\rho_\ell^+ = \rho_\ell^- .
		\]
		\end{corollary}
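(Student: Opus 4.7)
The plan is to compute the matrix elements of $\rho_\ell^\pm$ in the standard tensor basis directly from the finitely correlated state formulas and to show that their difference vanishes by a Clifford-algebra counting argument. Unpacking the FCS definition (\ref{def:omega pos and neg, E maps}) with the CP map $\bbE$ from (\ref{def:E}) applied to matrix-unit observables $E_{i,j}=\ket{i}\bra{j}$, one finds for each multi-index $(i_1,\dots,i_\ell,j_1,\dots,j_\ell)$ that
\begin{equation}
    \bra{i_1\cdots i_\ell}\rho_\ell^\pm\ket{j_1\cdots j_\ell} = \frac{2}{n^\ell D}\,\Tr\paran{\gamma_{i_1}\cdots\gamma_{i_\ell}\, P_\pm\, \gamma_{j_\ell}\cdots\gamma_{j_1}}.
\end{equation}
Subtracting these two expressions and using $P_+-P_-=\gamma_0$, the matrix elements of the difference collapse to
\begin{equation}
    \bra{i_1\cdots i_\ell}\rho_\ell^+ - \rho_\ell^-\ket{j_1\cdots j_\ell} = \frac{2}{n^\ell D}\,\Tr\paran{\gamma_{i_1}\cdots\gamma_{i_\ell}\, \gamma_0\, \gamma_{j_\ell}\cdots\gamma_{j_1}},
\end{equation}
so the task reduces to showing that this trace vanishes for every choice of indices when $\ell<n/2$.

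Next, I would recall from (\ref{def:gamma_0}) that $\gamma_0$ is a scalar multiple of the full product $\gamma_1\gamma_2\cdots\gamma_n$, so each of the $n$ generators appears exactly once inside $\gamma_0$. By the Corollary to Lemma \ref{lem:traces of products of gamma operators}, after using the Clifford relations to bring the expression into canonical $\pm\gamma_K$ form, the trace is zero unless $\gamma_K=\pm\idty$, which occurs exactly when every generator $\gamma_k$ appears an even number of times in the original product. Since $\gamma_0$ already contributes one copy of each $\gamma_k$, each index $k\in\{1,\dots,n\}$ would need to appear an \emph{odd} number of times in the remaining word $\gamma_{i_1}\cdots\gamma_{i_\ell}\gamma_{j_\ell}\cdots\gamma_{j_1}$. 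That word contains only $2\ell$ gamma factors, and requiring each of the $n$ indices to occur at least once forces $2\ell\geq n$; hence for $\ell<n/2$ this is impossible and the trace must vanish identically.

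Assembling these two observations, every matrix element of $\rho_\ell^+-\rho_\ell^-$ is zero for $\ell<n/2$, which yields the corollary. I do not expect any serious obstacle: the first step is a mechanical unrolling of the iterated CP map on matrix units, and the second is a pigeonhole-style parity count that leans directly on the Clifford trace lemma already established. The only bookkeeping I would keep careful watch on is the scalar prefactor in $\gamma_0=c\,\gamma_1\cdots\gamma_n$ with $c\in\{1,i\}$ depending on $n\bmod 4$, but since we only need to decide whether the trace is zero or nonzero, this prefactor never interferes with the argument.
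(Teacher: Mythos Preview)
Your argument is correct and takes a genuinely different route from the paper's. The paper works through the eigenvector decomposition of $\rho_\ell^\pm$ established in Theorem~\ref{thm:w+ and w- are distinct dimerized states for l large enough}: it shows that for $\ell<n/2$ the Hodge-dual contribution $\psi_\pm(\star\gamma_I)$ vanishes because $\abs{\star I}>n/2>\ell$, whence $\varphi_+^{(I)}=\pm\varphi_-^{(I)}$ via the relation $\psi_-=\psi_+\circ\alpha$, and the rank-one projectors in the spectral decomposition coincide. Your proof instead bypasses the eigenvector machinery entirely, computes the matrix elements of $\rho_\ell^+-\rho_\ell^-$ directly from the FCS formula, and reduces the question to a clean pigeonhole parity count in the Clifford algebra: the trace $\Tr(\gamma_{i_1}\cdots\gamma_{i_\ell}\gamma_0\gamma_{j_\ell}\cdots\gamma_{j_1})$ demands each of the $n$ generators appear an odd number of times among the $2\ell$ remaining factors, which is impossible when $2\ell<n$. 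This is more elementary and self-contained, and it makes the sharpness at $\ell=n/2$ transparent (at that length one can take each index exactly once). The paper's approach, on the other hand, integrates naturally with its representation-theoretic narrative and identifies the distinguishing irreps $U_\pm\subseteq\Exterior^{n/2}V$ as the source of the distinction once $\ell\geq n/2$.
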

		\begin{proof}
		We again compute coefficients in the matrix product states, as in the above proof. 
		We can again split coefficients like
		\[
			\Tr((\gamma_I+\star\gamma_I)\gamma_K) = \Tr(\gamma_I \gamma_K) + \Tr((\star\gamma_I) \gamma_K).
		\] But in this case, since $\abs{K}\leq \ell < n/2$ and since $\abs{I}\leq n/2$, $I$ and $K$ can not be the same index up to sign, so Lemma (\ref{lem:traces of products of gamma operators}) implies that $\Tr((\star\gamma_I) \gamma_K) = 0$. 
		This means that $\varphi_+(\star\gamma_I) = \varphi_-(\star\gamma_I) = 0$, so
		\begin{align*}
			\varphi_+^{(I)} &= \varphi_+(\gamma_I+\star\gamma_I) \\
			&= \varphi_+(\gamma_I) +0 \\
			&= (-1)^{s_I} \varphi_-(\gamma_I) \\
			&= (-1)^{s_I} \varphi_-(\gamma_I + \star\gamma_I) .
		\end{align*} where in the third line we have used relation (\ref{relation of + and - matrix product states via alpha}).
		So indeed, $\varphi_+^{(I)} = \pm \varphi_-^{(I)}$ and thus $\ket{\varphi_+^{(I)}}\bra{\varphi_+^{(I)}} = \ket{\varphi_-^{(I)}}\bra{\varphi_-^{(I)}}$, proving $\rho_\ell^+=\rho_\ell^-$.
		\end{proof}
        As a trivial consequence, for short chains $\ell<n/2$, we may combine $\rho_\ell^+ = \rho_\ell^-$ with $R^{\otimes \ell}\rho_\ell^+ R^{\otimes \ell} = \rho_\ell^-$ and $SO(n)$-invariance to see that $\rho_\ell^+$ is invariant under the entire $O(n)$ group. 
  
		\begin{remark}
		This result is sharp in the sense that when $\ell = n/2$, $\rho_\ell^+ \neq \rho_\ell^-$ and so dimerization is detectable. The key is that when $\ell=n/2$, the MPSs $\varphi_\pm^{(I)}$ are nonzero for $\abs{I} = n/2$, and the orthogonality result $\inprod{\varphi_+^{(I)}, \varphi_-^{(I)}} = 0$ goes through thanks to nonzero contributions from $\Tr((\star \gamma_I)\gamma_K)$ terms since $\abs{\star I} = n/2$. 
	    So as we raise $\ell$, eigenvectors common to $\rho_\ell^+$ and $\rho_\ell^-$ become distinct until the full support spaces of these matrices are orthogonal at $\ell = n$. 	
		\end{remark}
        \begin{remark}
            The orthogonality statement for these MPSs trivially implies that the injectivity length, i.e. the first chain length $\ell$ such that the map $\psi:\calC_n^{[ev]}\to (\C^n)^{\otimes \ell}$ are injective, is $\ell = n$. A similar computation reveals that for odd $n$, the maps $\psi:P_+\calC_n \to (\C^n)^{\otimes \ell}$ are injective when $\ell > n/2$, as a basis for $P_+\calC_n$ is given by $\gamma_I + \star \gamma_I$ with $\abs{I}<n/2$. 
        \end{remark}


    
\section{The \texorpdfstring{$SO(n)$}{SO(n)} Haldane Phase is a Nontrivial SPT Phase}\label{sec:two different SPT phases}
    Pick a cut of the lattice $\Z = \Z_L\cup \Z_R$, where $\Z_L = (-\infty,0]\cap \Z$ and $\Z_R = [1,\infty)\cap \Z$. Theorem~\ref{thm:split state index} guarantees that we may assign a well-defined index $h_\omega\in H^2(SO(n),U(1))$ to $\omega$ when $n$ is odd, and likewise a pair of indices $h_{\omega_\pm}$ for $\omega_\pm$ when $n$ is even.
    As discussed in Section~\ref{sec:excess spin operator}, we may use the excess spin operator construction to compute this index.\footnote{We could use the calculation in Section~\ref{sec:the H2 index for MPS} for odd $n$, but we need the result for 2-periodic MPSs as well.} In the odd $n$ case, $\omega$ is pure finitely correlated and in the even $n$ case, $\omega_\pm$ are pure 2-periodic finitely correlated--for brevity, we write $\omega$. The state $\omega$ has the GNS representation $(\calH_L\otimes \calH_R,\pi_L\otimes \pi_R,\Omega_\omega)$, and so Theorem~\ref{thm:bachmann excess spin FCS} guarantees that there is a strongly continuous representation $U^R:Spin(n)\to \calU(\calH_R)$. This then descends to a strongly continuous projective representation $U^R:SO(n)\to P\calU(\calH_R)$ to which we may associate a second group cohomology index $H^2(SO(n),U(1))$. Recall that $H^2(SO(n),U(1)) = \{1,\sigma\}$ from Theorem~\ref{thm:second group cohomology of semi-simple}.

    When $n$ is odd, $\omega$ is translation-invariant and in particular a pure finitely correlated state. Since the isometry $T$ defining $\omega$ enjoys the intertwining relation (\ref{eq:T an intertwiner}), we may then immediately apply Theorem~\ref{thm:rep content excess spin MPS} to see that the representation $U^R:SO(n)\to P\calU(\calH_R)$ consists only of (infinitely many copies of) the spin representation $\Pi:SO(n)\to P\calU(\C^{2^{(n-1)/2}})$ from Section \ref{sec:the spin representations Pi of SO(n)}, which is nontrivial projective by Remark~\ref{rem:spin reps are projective} and so corresponds to a nontrivial index $h_\omega = \sigma\in H^2(SO(n),U(1))$. 

    When $n$ is even, the story is the same, with the small subtlety that we must track the precise location of the cut. The index of either $\omega_+$ or $\omega_-$ may depend on the cut location, but of course the collection $\{h_{\omega_+},h_{\omega_-}\}$ does not. We may mimic the proof of Theorem~\ref{thm:rep content excess spin MPS}. Pick first $\omega:=\omega_+$. Recall the relations (\ref{eq:isometries T1 and T2 kinda commute with Pi}), which writing $U_g := g$ for all $g\in Spin(n)$ allows us to show
    \begin{equation}
        \bbF^{(1)}_{U_g}(\alpha(\Pi_+(g))) = \Pi_+(g), \quad \text{and}\quad  \bbF^{(2)}_{U_g}(\Pi_+(g)) = \alpha(\Pi_+(g)) . 
    \end{equation} In particular, $\Pi_+(g)$ is an eigenvector of eigenvalue 1 of the primitive map $\bbF^{(1)}_{U_g}\circ \bbF^{(2)}_{U_g}$ for all $g$ in a neighborhood of the identity $\calN\subseteq G$. Let $A,B\in \calA_L$, and for ease of notation let $\bbF_A$ be shorthand for the appropriately ordered products of $\bbF^{(1)}$ and $\bbF^{(2)}$. we may compute similarly to before:
    \begin{equation}\begin{split} \label{eq:matrix elements gns so(n) haldane}
        \inprod{\pi_L(A)\Omega_\omega, U_g^R \pi_L(B) \Omega_\omega} &= \inprod{\Omega_\omega, \pi_L(A^*B) U_g^R \Omega_\omega}\\
        &= \frac{2}{D}\Tr\paran{\rho\circ \bbF_{A^* B} \circ \bbF^{(1)}_{U_g} \circ \bbF^{(2)}_{U_g} \circ \bbF^{(1)}_{U_g} \circ \dots \circ \bbF^{(1)}_{U_g} \circ \bbF^{(2)}_{U_g}(\Pi_+(g)) }\\
        &= \frac{2}{D} \paran{ \rho \, \bbF_{A^* B} (\Pi_+(g) }  .
    \end{split}\end{equation}
    We may again apply the characterization of primitivity\footnote{If one desires a more careful discussion of primitivity for $p$-periodic maps, one may appreciate Proposition 3.3 in~\cite{fannes1992finitely}.} Theorem~\ref{thm:Primitivity} to see that
    \begin{equation}
        \text{span}\{\rho \circ \bbF^{(1)}_{A_{-\ell}} \circ \bbF^{(2)}_{A_{-\ell+1}} \dots \circ \bbF^{(2)}_{A_0}: \ell\in \N \text{ is odd}, A_i\in M_n(\C)\} = P_+\calC_n . 
    \end{equation}
    
    In particular, Equation (\ref{eq:matrix elements gns so(n) haldane}) gives a one-to-one correspondence between the matrix elements of the representatives $U_g^R$ and the matrix elements of $\Pi_+(g)$. Thus, $\Pi_+$ determines the full set of representation content of $U^R$. Since $U^R:G\to \calU(\calH_R)$ is infinite dimensional, it must contain the finite dimensional $\Pi_+:Spin(n)\to \calU(\C^{2^{n/2-1}})$ with infinite degeneracy.

    To adjust this proof for $\omega_-$, one simply observes that $\alpha(\Pi_+(g))$ is an eigenvector of eigenvalue 1 of the primitive map $\bbF^{(2)}_{U_g}\circ\bbF^{(1)}_{U_g}$ for all $g$ in a neighborhood of the identity $\calN\subseteq G$. Equation (\ref{eq:matrix elements gns so(n) haldane}) becomes

    \begin{equation}\begin{split}
    \inprod{\pi_L(A)\Omega_\omega, U_g^R \pi_L(B) \Omega_\omega}
    &= \frac{2}{D}\Tr\paran{\rho\circ \bbF_{A^* B} \circ \bbF^{(2)}_{U_g} \circ \bbF^{(1)}_{U_g} \circ \bbF^{(2)}_{U_g} \circ \dots \circ \bbF^{(2)}_{U_g} \circ \bbF^{(1)}_{U_g}(\alpha(\Pi_+(g))) }\\
        &= \frac{2}{D} \paran{ \rho \, \bbF_{A^* B} (\alpha(\Pi_+(g)) }  .
    \end{split}\end{equation} whence we similarly obtain that $U^R$ contains only the finite dimensional $\alpha(\Pi_+)$ with infinite multiplicity. Then, we wrap up by observing that the representations $\alpha(\Pi_+)\cong \Pi_-$ by Corollary~\ref{cor:alpha swaps spin reps}.

    In any case, the half-spin representations $\Pi_+$ and $\Pi_-$ both descend to nontrivial projective representations of $SO(n)$ as per Remark~\ref{rem:spin reps are projective}, and so have associated nontrivial cohomology indices $\sigma\in H^2(SO(n),U(1))$. Let us wrap up the calculations of this section in a single theorem.
    \begin{theorem} \label{thm:SOn haldane chains have nontrivial index}
        Let $G=SO(n)$ and recall the set of second cohomology classes $h_\calS$ from Theorem~\ref{thm:SPT invariants of ground state space}. 
        If $n$ is odd, $h_\calS = \{\sigma\}$ consists of a single nontrivial index. If $n$ is even, $h_\calS = \{\sigma, \sigma\}$ consists of a pair of nontrivial second cohomology classes.
    \end{theorem}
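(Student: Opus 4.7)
The plan is to extract the index from the strongly continuous projective representation $U^R \colon SO(n) \to P\calU(\calH_R)$ on the right half-chain of the GNS Hilbert space, and to identify its representation content with the spin (or half-spin) representations of $SO(n)$ constructed in Section~\ref{sec:the spin representations Pi of SO(n)}. Since those were already shown in Remark~\ref{rem:spin reps are projective} to descend to nontrivial projective representations of $SO(n)$, any projective representation built from copies of them will carry the nontrivial class $\sigma \in H^2(SO(n),U(1))$.

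First I would handle the odd $n$ case, which is essentially a direct application of Theorem~\ref{thm:bachmann excess spin FCS} and Theorem~\ref{thm:rep content excess spin MPS}. The unique ground state $\omega$ is a translation-invariant pure finitely correlated state whose defining isometry $T$ satisfies the intertwining relation $(w \otimes \Pi(w))T = T\,\Pi(w)$ from (\ref{eq:T an intertwiner}). The excess spin construction gives a strongly continuous $U^R \colon Spin(n) \to \calU(\calH_R)$, and the MPS representation-content lemma identifies its representation content with infinitely many copies of the (irreducible) spin representation $\Pi$. Descending to $SO(n)$ yields a projective representation with nontrivial index, so $h_\omega = \sigma$ and $h_\calS = \{\sigma\}$.

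The even $n$ case is the substantive one and is what I would spend most effort on. Here $\omega_\pm$ are pure 2-periodic FCSs governed by the two CP maps $\bbF^{(1)}, \bbF^{(2)}$. The key input is the pair of relations (\ref{eq:isometries T1 and T2 kinda commute with Pi}), which translate to $\bbF^{(1)}_{U_g}(\alpha(\Pi_+(g))) = \Pi_+(g)$ and $\bbF^{(2)}_{U_g}(\Pi_+(g)) = \alpha(\Pi_+(g))$, so that $\Pi_+(g)$ is a fixed point of the composite primitive map $\bbF^{(1)}_{U_g} \circ \bbF^{(2)}_{U_g}$ (for $g$ near the identity). I would then mimic the proof of Theorem~\ref{thm:rep content excess spin MPS}, writing matrix elements $\langle \pi_L(A)\Omega_\omega, U_g^R \pi_L(B)\Omega_\omega\rangle$ as a trace against $\rho \circ \bbF_{A^*B}(\Pi_+(g))$, and invoking primitivity (Theorem~\ref{thm:Primitivity}, with the $p$-periodic generalization) to conclude that the span of the leftward-swept linear functionals exhausts $P_+\calC_n$. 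This forces $U^R$ for $\omega_+$ to contain only copies of $\Pi_+$; an analogous computation with the CP maps in the opposite order, using that $\alpha(\Pi_+(g))$ is fixed by $\bbF^{(2)}_{U_g} \circ \bbF^{(1)}_{U_g}$, shows $U^R$ for $\omega_-$ contains only copies of $\alpha(\Pi_+) \cong \Pi_-$ (by Corollary~\ref{cor:alpha swaps spin reps}). Since both half-spin representations are nontrivial projective on $SO(n)$ (Remark~\ref{rem:spin reps are projective}), we obtain $h_{\omega_+} = h_{\omega_-} = \sigma$, so $h_\calS = \{\sigma,\sigma\}$.

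The main obstacle I anticipate is the careful bookkeeping in the 2-periodic case: making sure the composite $\bbF^{(1)} \circ \bbF^{(2)}$ (versus $\bbF^{(2)} \circ \bbF^{(1)}$) is matched to the correct half-chain cut so that the correct eigenvector $\Pi_+(g)$ versus $\alpha(\Pi_+(g))$ appears in the matrix-element trace, and generalizing the primitivity span argument of Theorem~\ref{thm:Primitivity} to the 2-periodic setting (essentially by blocking two sites, or invoking Proposition~3.3 of Fannes--Nachtergaele--Werner as cited in Section~\ref{sec:excess spin operator}). A minor but worthwhile check is that the projective representation one extracts is indeed equivalent to the abstract index produced by Theorem~\ref{thm:split state index}, which follows from uniqueness up to unitary equivalence of the GNS representation together with the fact that $U^R$ implements $\alpha_g^R$ on the right GNS factor.
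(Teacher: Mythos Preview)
Your proposal is correct and follows essentially the same approach as the paper: the odd $n$ case is handled by directly applying Theorem~\ref{thm:rep content excess spin MPS} to the intertwining relation~(\ref{eq:T an intertwiner}), and the even $n$ case is handled by using the relations~(\ref{eq:isometries T1 and T2 kinda commute with Pi}) to exhibit $\Pi_+(g)$ (resp.\ $\alpha(\Pi_+(g))$) as the fixed point of the composite $\bbF^{(1)}_{U_g}\circ\bbF^{(2)}_{U_g}$ (resp.\ $\bbF^{(2)}_{U_g}\circ\bbF^{(1)}_{U_g}$), then mimicking the matrix-element computation of Theorem~\ref{thm:rep content excess spin MPS} together with primitivity to identify the representation content of $U^R$ with $\Pi_\pm$. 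The obstacles you anticipate are exactly those the paper flags, and are dispatched in the same way.
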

    Thanks to Corollary~\ref{cor:south pole ground state is trivial SPT}, we finally have the central result of this thesis.
    \begin{corollary}\label{cor:south pole and MPS point are different SPT phases}
        The south pole point and the $SO(n)$ AKLT point of the phase diagram in Figure~\ref{fig:SO(n) phase diagram} occupy different $SO(n)$-symmetry protected topological phases.
    \end{corollary}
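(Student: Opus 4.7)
The proof plan is a direct synthesis of the three pillars built in the preceding sections, so almost all the labor has already been done. My approach will be simply to compare the SPT phase invariant $h_\calS$, which takes values in (multisets of elements of) $H^2(SO(n), U(1)) \cong \Z_2 = \{1, \sigma\}$, at each of the two candidate points and observe that they disagree.

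First I would invoke Theorem \ref{thm:SPT invariants of ground state space} in the contrapositive form: if two $SO(n)$-symmetric gapped interactions $\Phi_0$ and $\Phi_1$ satisfy $\Phi_0 \sim_{SO(n)} \Phi_1$, then the associated collections of second cohomology classes coincide, $h_{\calS_0} = h_{\calS_1}$. Thus it suffices to exhibit a mismatch of these collections for the south pole and the $SO(n)$ AKLT point. Corollary \ref{cor:south pole ground state is trivial SPT} supplies the south pole side: the pair of gapped ground states there are split, $SO(n)$-invariant, and the excess spin analysis of Theorem \ref{thm:excess spin random loop} together with Theorem \ref{thm:south pole ground state is trivial SPT} shows that the corresponding strongly continuous projective representation of $SO(n)$ on each right-chain GNS Hilbert space lifts to a genuine representation, so each associated class equals $1 \in H^2(SO(n), U(1))$.

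For the $SO(n)$ AKLT side I would cite Theorem \ref{thm:SOn haldane chains have nontrivial index}: whether $n$ is odd (the unique ground state $\omega$) or even (the pair $\omega_\pm$), the representation content of the half-chain rotation representation $U^R$ on the GNS Hilbert space $\calH_R$ was shown to consist of infinitely many copies of a spin representation $\Pi$, $\Pi_+$, or $\Pi_-$ of $Spin(n)$. By Remark \ref{rem:spin reps are projective} each of these descends to a genuinely projective representation of $SO(n)$, whose class in $H^2(SO(n), U(1))$ is the nontrivial element $\sigma$. Comparing: for odd $n$ the south pole carries $\{1, 1\}$ while the $SO(n)$ AKLT point carries $\{\sigma\}$; for even $n$ one has $\{1, 1\}$ versus $\{\sigma, \sigma\}$. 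In either case the multisets differ, so the two interactions cannot be $SO(n)$-symmetrically connected by a smooth gapped path.

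Since every nontrivial step — the well-definedness of the index (Theorem \ref{thm:split state index}), its invariance along $G$-symmetric gapped paths, the explicit excess spin computation at the south pole, and the identification of the representation content at the $SO(n)$ AKLT point — is already in place, there is no remaining obstacle; the proof is essentially a one-line appeal to contrapositive. The only minor bookkeeping point I would take care with is to make explicit that the indices are computed with respect to the same cut $\Z = \Z_L \cup \Z_R$ on both sides, so that the comparison of multisets is legitimate; this is automatic since the invariance in Theorem \ref{thm:SPT invariants of ground state space} is stated for the collection of indices over all ground states with a fixed cut.
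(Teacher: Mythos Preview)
Your proposal is correct and follows exactly the same route as the paper: the corollary is stated immediately after Theorem~\ref{thm:SOn haldane chains have nontrivial index} with the single sentence ``Thanks to Corollary~\ref{cor:south pole ground state is trivial SPT}, we finally have the central result,'' so the intended proof is precisely the comparison of the invariant $h_\calS$ at the two points via Theorem~\ref{thm:SPT invariants of ground state space}. Your only addition is making the multiset comparison explicit for both parities of $n$, which is harmless and indeed clarifying.
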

    
    
    \section{CPT symmetry and \texorpdfstring{$n\bmod{4}$}{n mod4} Behavior} \label{sec:CPT symmetry}
    We have now proven that $\omega_\pm$ are distinct ground states arising from the spontaneous symmetry breaking from $O(n)$ to $SO(n)$, and so we can now study how they react under other natural symmetries enjoyed by our Hamiltonian $H$. The Charge-Parity-Time Reversal (CPT) theorem in quantum field theory provides three symmetry operations of great interest~\cite{streater2000pct}. The CPT theorem suggests that the combined symmetries of charge conjugation, spatial reflection parity, and time reversal should be together unbroken, in that if one symmetry is broken, another symmetry must also be broken.\footnote{Of course, there is no \textit{a priori} reason to believe that the ground states of a spin system need obey the conditions of the CPT theorem.}
    We find the result of the CPT theorem to hold with an interesting $n\bmod{4}$ behavior. Just as in the AKLT chain~\cite{pollmann2010entanglement}, Proposition \ref{prop:time reversal invariance} verifies that for any even $n$, $\omega_\pm$ are invariant under time-reversal symmetry (this is immediate for odd $n$ by uniqueness of the ground state in the thermodynamic limit). Propositions \ref{prop:charge conjugation symmetry breaking} and \ref{prop:reflection parity symmetry breaking} together show that when $n\bmod{4} = 0$, the ground states $\omega_+$ and $\omega_-$ are invariant under charge conjugation and reflection parity symmetry, while when $n\bmod{4}=2$, both symmetries are broken by the pair of ground states $\omega_+$ and $\omega_-$.

    \subsection{Charge conjugation symmetry}
    We follow the mathematician's convention of denoting the complex conjugate of $z\in \C$ by $\overline{z}$. 
    \begin{proposition} \label{prop:charge conjugation symmetry breaking}
    Take observables $A_1\otimes \dots \otimes A_\ell \in \calA_{[1,\ell]}$.
    \begin{itemize}
        \item When $n\bmod{4} = 0$, charge conjugation symmetry is unbroken, i.e. 
        \begin{align*}
            \omega_+(\overline{A_1\otimes \dots \otimes A_\ell}) &= \overline{\omega_+(A_1\otimes \dots \otimes A_\ell)} \\ 
            \omega_-(\overline{A_1\otimes \dots \otimes A_\ell}) &= \overline{\omega_-(A_1\otimes \dots \otimes A_\ell)}.
        \end{align*}
        \item When $n\bmod{4} = 2$, charge conjugation symmetry is broken, i.e. \begin{align*}
            \omega_+(\overline{A_1\otimes \dots \otimes A_\ell}) &= \overline{\omega_-(A_1\otimes \dots \otimes A_\ell)} \\ 
            \omega_-(\overline{A_1\otimes \dots \otimes A_\ell}) &= \overline{\omega_+(A_1\otimes \dots \otimes A_\ell)}.
        \end{align*}
    \end{itemize}
    Equivalently, in terms of their reduced density matrices from Theorem \ref{thm:w+ and w- are distinct dimerized states for l large enough}
    \[
        \overline{\rho_\ell^\pm} = \begin{cases}
            \rho_\ell^\pm & n\bmod{4} = 0 \\
            \rho_\ell^\mp & n\bmod{4} = 2.
        \end{cases}
    \]
    \end{proposition}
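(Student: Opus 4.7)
The plan is to work from the MPS/finitely correlated state expression for $\omega_\pm$ and directly track what entrywise complex conjugation of observables does inside the Clifford trace. Expanding $\bbE_A(B) = \frac{1}{n}\sum_{i,j}A_{ij}\gamma_i B\gamma_j$ in the definition (\ref{def:omega pos and neg, E maps}) gives
\[
\omega_{\pm}(A_1\otimes \cdots \otimes A_\ell) = \frac{2}{Dn^\ell}\sum_{\mathbf{i},\mathbf{j}}\prod_{k=1}^{\ell}(A_k)_{i_k j_k}\,\Tr\!\big(\gamma_{i_1}\cdots\gamma_{i_\ell}\,P_{\pm}\,\gamma_{j_\ell}\cdots\gamma_{j_1}\big).
\]
Since $\overline{(A_k)_{i_k j_k}} = (\overline{A_k})_{i_k j_k}$ and entrywise conjugation commutes with matrix products and with $\Tr$, it suffices to show that
\[
\overline{\Tr\!\big(\gamma_{i_1}\cdots\gamma_{i_\ell}\,P_{+}\,\gamma_{j_\ell}\cdots\gamma_{j_1}\big)} = \Tr\!\big(\gamma_{i_1}\cdots\gamma_{i_\ell}\,P_{\eta}\,\gamma_{j_\ell}\cdots\gamma_{j_1}\big),
\]
where $\eta=+$ for $n\equiv 0\bmod 4$ and $\eta=-$ for $n\equiv 2\bmod 4$; the analogous identity for $\omega_-$ is obtained by replacing $P_+$ with $P_-$ throughout, and flips to $P_\mp$ in exactly the same cases.

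The key observation is that $\{\overline{\gamma_i}\}_{i=1}^n$ is a second family of self-adjoint operators satisfying the same Clifford relations (\ref{def:gamma operators (final chap)}). By the universal property of $\calC_n$, the assignment $\gamma_i\mapsto \overline{\gamma_i}$ extends to an algebra automorphism of $\calC_n\cong M_{2^{n/2}}(\C)$, which by Skolem--Noether is inner; a short centralizer argument (using that $V^*V$ must be central and hence scalar) lets one take it to be implemented by a unitary $V\in \calC_n$:
\[
\overline{\gamma_i} = V\gamma_i V^{-1},\qquad i=1,\dots,n.
\]
Next I would track the induced action on $\gamma_0$. From the definition (\ref{def:gamma_0}), $\gamma_0 = c\,\gamma_1\cdots\gamma_n$ with $c=1$ when $n\equiv 0\bmod 4$ and $c=i$ when $n\equiv 2\bmod 4$, so
\[
\overline{\gamma_0} = (\overline{c}/c)\,V\gamma_1\cdots\gamma_n V^{-1} = \eta\, V\gamma_0 V^{-1},
\]
with $\eta$ as above. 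Therefore $\overline{P_+} = \tfrac{1}{2}(\idty+\overline{\gamma_0}) = V P_{\eta}V^{-1}$.

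Inserting these identities into the conjugated trace, the telescoping $\overline{\gamma_{i_1}}\cdots\overline{\gamma_{i_\ell}} = V\gamma_{i_1}\cdots\gamma_{i_\ell}V^{-1}$ together with cyclicity of $\Tr$ eliminates $V$ outright,
\[
\Tr\!\big(V\gamma_{i_1}\cdots\gamma_{i_\ell}V^{-1}\cdot V P_\eta V^{-1}\cdot V\gamma_{j_\ell}\cdots\gamma_{j_1} V^{-1}\big) = \Tr\!\big(\gamma_{i_1}\cdots\gamma_{i_\ell}\,P_\eta\,\gamma_{j_\ell}\cdots\gamma_{j_1}\big),
\]
yielding $\overline{\omega_+(A)} = \omega_{\eta}(\overline{A})$ and, by the parallel computation, $\overline{\omega_-(A)} = \omega_{-\eta}(\overline{A})$. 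Taking conjugates of these two identities gives the stated formulas. The bullet on reduced density matrices is then an immediate dual, via $\overline{\Tr(\rho A)} = \Tr(\overline{\rho}\cdot\overline{A})$ for all $A$.

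The main obstacle is cleanly isolating how the $n\bmod 4$ dichotomy enters: abstractly it is entirely encoded in the phase $c\in\{1,i\}$ in (\ref{def:gamma_0}), but one must verify that no hidden choice of $V$ (which is defined only up to a central scalar) can absorb this sign. I expect to resolve this either along the abstract lines above or, if a fully concrete argument is preferable, by picking a Majorana-type realization with $\gamma_{2k-1}$ real symmetric and $\gamma_{2k}$ imaginary antisymmetric, for which $V$ can be written explicitly as $\gamma_2\gamma_4\cdots\gamma_n$ when $n\equiv 0\bmod 4$ and $\gamma_1\gamma_3\cdots\gamma_{n-1}$ when $n\equiv 2\bmod 4$, in both cases landing in $\calC_n^{[ev]}$ and realizing the sign $(-1)^{n/2}$ on $\gamma_0$ by direct Clifford computation.
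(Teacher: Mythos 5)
Your argument is correct and gives a genuinely different proof than the one in the paper. The paper works with the order-reversal anti-automorphism $t(\gamma_{i_1}\cdots\gamma_{i_k})=\gamma_{i_k}\cdots\gamma_{i_1}$, establishes $t\circ\bbE_A=\bbE_{A^T}\circ t$ together with $t\circ\alpha=\alpha\circ t$, and deduces $\omega_\pm(A)=\omega_{\pm}(A^T)$ (or $\omega_{\mp}(A^T)$) from $t(\gamma_0)=\pm\gamma_0$, then converts transpose to conjugation via $\omega(\bar{A}^T)=\omega(A^*)=\overline{\omega(A)}$. You instead realize entrywise conjugation $\gamma_i\mapsto\overline{\gamma_i}$ as an inner automorphism $\Ad(V)$ via Skolem--Noether and telescope the $V$'s in the trace, with the $n\bmod 4$ dichotomy entering purely through the phase $c\in\{1,i\}$ in $\gamma_0=c\gamma_1\cdots\gamma_n$. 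Your route bypasses the mixed transpose-versus-conjugation bookkeeping the paper needs at the end, at the price of invoking Skolem--Noether and the universal property of $\calC_n$. Both correctly localize the $n\bmod 4$ dependence in the sign $\gamma_0$ acquires.

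On your stated worry: the scalar ambiguity in $V$ is a non-issue. $V$ enters the argument only through conjugation $\Ad(V)$, and the identity $\overline{\gamma_0}=\eta\,V\gamma_0V^{-1}$ with $\eta=\overline{c}/c$ is invariant under $V\mapsto\lambda V$; the sign $\eta$ is therefore intrinsic and cannot be absorbed. The telescoping $\Tr(V\,X\,V^{-1})=\Tr(X)$ then kills $V$ regardless. Two small corrections to your final parenthetical: with $V=\gamma_1\gamma_3\cdots\gamma_{n-1}$ for $n\equiv 2\bmod 4$, $V$ has $n/2$ factors, which is odd, so it lies in $\calC_n^{[odd]}$, not $\calC_n^{[ev]}$; this is harmless since the cyclicity step does not depend on the parity of $V$. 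Also, the sign that ultimately controls whether $P_\pm$ swaps is $\eta=\overline{c}/c$, not the sign $(-1)^{n/2}$ that $\gamma_0$ picks up under $\Ad(V)$; in a Majorana realization these two conspire so that $\overline{\gamma_0}=\gamma_0$ literally holds in both cases, and the $P_+\leftrightarrow P_-$ swap for $n\equiv 2\bmod 4$ is instead produced by the stray $V^{-1}(\cdot)V$ acting on $P_+$ in the telescoped trace. Your abstract formula $\overline{P_+}=VP_\eta V^{-1}$ handles this correctly, so the proof stands.
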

    \begin{proof}
    We begin by recalling Remark \ref{rem:choice of alpha is arbitrary}, which says that if $t:\calC_n\to \calC_n$ is defined on products by $t(\gamma_1\gamma_2\dots \gamma_k) = \gamma_k \dots \gamma_2\gamma_1$, then
    \[
        t\circ \bbE_A = \bbE_{A^T} \circ t , \qquad \text{for all } A\in \calA_x,
    \] where $A^T$ is the usual matrix transpose. Then, observe that $t\circ \alpha = \alpha \circ t$, since for any $B\in \calC_n$,
    \[
        t\circ \alpha(B) = t(\gamma_1 B \gamma_1) = \gamma_1 t(B) \gamma_1 = \alpha \circ t(B),
    \] which then quickly means that for both $\bbF$ maps, 
    \[
        t\circ \bbF^{(1)}_A = \bbF^{(1)}_{A^T} \circ t, \text{ and } t\circ \bbF^{(2)}_A = \bbF^{(2)}_{A^T} \circ t.
    \]
    For convenience, work on even $\ell$, as the odd case is similar. Then, noting that $t^2 = \ide$ and letting $A_1\otimes \dots \otimes A_\ell \in \calA_{[1,\ell]}$, we have from Definition \ref{def:omega pos and neg, F maps} that
    \begin{align*}
    \omega_\pm(A_1\otimes \dots \otimes A_\ell) &= \frac{2}{D}\Tr( \bbF_{A_1}^{(1)}\circ \bbF_{A_2}^{(2)} \circ \dots \circ  \bbF_{A_{\ell-1}}^{(1)}\circ \bbF_{A_{\ell}}^{(2)}(P_\pm) ) \\
    &= \frac{2}{D}\Tr( \bbF_{A_1}^{(1)}\circ \bbF_{A_2}^{(2)} \circ \dots \circ  \bbF_{A_{\ell-1}}^{(1)}\circ \bbF_{A_{\ell}}^{(2)}\circ t^2 \circ (P_\pm) ) \\
    &= \frac{2}{D}\Tr( t \circ \bbF_{A_1^T}^{(1)}\circ \bbF_{A_2^T}^{(2)} \circ \dots \circ  \bbF_{A_{\ell-1}^T}^{(1)}\circ \bbF_{A_{\ell}^T}^{(2)}\circ t \circ (P_\pm) ) \\
    &= \frac{2}{D}\Tr( \bbF_{A_1^T}^{(1)}\circ \bbF_{A_2^T}^{(2)} \circ \dots \circ  \bbF_{A_{\ell-1}^T}^{(1)}\circ \bbF_{A_{\ell}^T}^{(2)}\circ t \circ (P_\pm) ),
    \end{align*} where the final equality holds by checking that $\Tr \, t(B) = \Tr B$ on the basis $\{\gamma_I:\abs{I}\leq n\}$ of $\calC_n$, since $t(\idty)= \idty$, and $t(\gamma_I)=\pm \gamma_I$. Now, note that when $n\bmod{4} =0$, $t(\gamma_0) = \gamma_0$ and thus $t(P_\pm) = P_\pm$, while when $n\bmod{4} =2$, $t(\gamma_0) = -\gamma_0$ and so $t(P_\pm) = P_\mp$. We then have
    \[
        \omega_\pm(A_1\otimes \dots \otimes A_\ell) = \begin{cases}
        \omega_\pm(A_1^T\otimes \dots \otimes A_\ell^T) 
        & n\bmod{4} = 0\\
        \omega_\mp(A_1^T\otimes \dots \otimes A_\ell^T) & n\bmod{4} = 2 , 
        \end{cases}
    \] which is equivalent to the claim since $\omega(\overline{A}^T) = \omega(A^*) = \overline{\omega(A)}$ for any $A\in \calA_{[1,\ell]}$. To see the statement for reduced density matrices, we write the case for $n\bmod{4}=0$ and $\omega_+$ and the others are analogous:
    \[
        \Tr \rho_\ell^+ A = \omega_+(A) = \omega_+(A^T) = \Tr\rho_\ell^+ A^T = \Tr A (\rho_\ell^+)^T = \Tr \overline{\rho_\ell^+} A.
    \]
    \end{proof}
    One could similarly prove this proposition by working on the reduced density matrices $\rho_\ell^\pm$ directly. The key is that for any simple product $\gamma_J$, the vector $\ket{\psi(\gamma_J)}$ from Definition \ref{def:MPSs of SO(n) chain} has real coefficients in the basis $\{\ket{i_1,\dots, i_\ell} \}$ of $(\C^n)^{\otimes\ell}$. Then, we can take an eigenvector $\ket{\varphi_+(\gamma_I + \star \gamma_I)}$ of $\rho_\ell^\pm$ and compute the complex conjugate of the orthogonal projector onto this eigenvector. Now, $\star \gamma_I = \gamma_0\gamma_I$ from Section \ref{secapp:hodge duals}. Then since $\gamma_0 = \gamma_1\dots \gamma_n$ when $n\bmod{4} = 0$ while $\gamma_0 = i\gamma_1\dots \gamma_n$ when $n\bmod{4} =2$, we can quickly compute 
    \[
        \overline{\ket{\varphi_+(\gamma_I + \star \gamma_I)}\bra{\varphi_+(\gamma_I + \star \gamma_I)}} = \begin{cases}
            \ket{\varphi_+(\gamma_I + \star \gamma_I)}\bra{\varphi_+(\gamma_I + \star \gamma_I)} & n\bmod{4} = 0 \\
            \ket{\varphi_+(\gamma_I - \star \gamma_I)}\bra{\varphi_+(\gamma_I - \star \gamma_I)} & n\bmod{4} = 2 ,
        \end{cases}
    \] which is equivalent to our result.

    This is compatible with the description of the ground state spaces $\calG_\ell^\pm$ as irreps of $\so(n)$ from Corollary \ref{cor:reps of even ground state spaces}.
    Here, by charge conjugation symmetry we mean symmetry under the map which sends a unitary representation $(\Pi,U)$ to its dual representation $(\overline{\Pi}, \overline{U})$. By unitarity, the dual representation is exactly the same as the conjugate representation. As a well known fact in Lie representation theory (see Theorem 3.E in~\cite{samelson1990notes}), one can express conjugation of finite dimensional irreps of a simple complex Lie algebra $\g$ as a special automorphism of its corresponding Dynkin diagram. In type $D_{n/2}$ ($\so(n)$ with $n$ even), this corresponds to a flip swapping the two dangling nodes. It turns out that when $n \bmod{4} = 0$, every irrep $(\pi,U)$ is self-dual. In particular, the two invariant subspaces $U_+,U_-$ of the representation spaces $\calG_\ell^+,\calG_\ell^-$ are self-dual, so $\overline{U_+} = U_+$ and $\overline{U_-} = U_-$. When $n\bmod{4} = 2$, not all irreps of $\so(n)$ are self-dual: in fact, $\overline{U_+} = U_-$.
    This means that as $\so(n)$ representations, 
    \[
        \overline{\calG_\ell^\pm} = \begin{cases}
            \calG_\ell^\pm & n \bmod{4} = 0 \\
            \calG_\ell^\mp & n \bmod{4} = 2
        \end{cases}
    \] While this is not enough to fully prove Proposition \ref{prop:charge conjugation symmetry breaking}, as distinct matrix product ground state spaces may be isomorphic as representations, it does strongly suggest this result.

    \subsubsection{$SU(4)$ quantum antiferromagnets with exact $C$-breaking ground states}
    As an aside, we briefly describe a connection to the work~\cite{affleck1991SU2n} and the closely related work~\cite{gozel2019novel}. In both cases the authors construct a pair of ground states which break $C$-symmetry for Hamiltonians which are $SU(2k)$-invariant. The case of $SU(4)$ coincides with the situation here, by virtue of the Lie algebra isomorphism $\su(4)\cong \so(6)$. Here, we think of each site $x\in \Z$ as supporting the antisymmetric square of the defining representation of $SU(4)$, the 6-dimensional self-conjugate representation $\Exterior^2 \C^4$. The Littlewood-Richardson rule informs us that we may diagrammatically use Young tableaux to decompose the tensor product of these two $SU(4)$ irreps~\cite{georgi1982lie}:
    \[
   \ydiagram{1,1} \otimes  \ydiagram{1,1} \cong \ydiagram{1,1,1,1} \oplus \ydiagram{2,1,1}\oplus \ydiagram{2,2}
    \]
    Realizing that $\Exterior^2 \C^4$ is isomorphic to a single copy of the defining representation $V=\C^6$ of $\so(6)$, the above equality is in fact the same as the two-site decomposition (\ref{eq:irrep decomposition of two site O(n) rep}):
    \[
    V\otimes V \cong \C \ket{\xi} \oplus \Exterior^2 V \oplus M_2.
    \]

    \subsection{Reflection Parity Symmetry}
    We now prove the promised statement for reflection parity.
    \begin{proposition}\label{prop:reflection parity symmetry breaking}
    Take observables $A_1\otimes \dots \otimes A_\ell \in \calA_{[1,\ell]}$. We will think of the axis of reflection as lying at $\ell/2$.
    \begin{itemize}
        \item When $n\bmod{4} = 0$, reflection parity symmetry is unbroken, i.e. 
        \begin{align*}
            \omega_+(A_1\otimes A_2\otimes  \dots \otimes A_{\ell-1} \otimes A_\ell) &= \omega_+(A_\ell\otimes  A_{\ell-1} \otimes \dots \otimes A_2 \otimes A_1) \\ 
            \omega_-(A_1\otimes A_2\otimes  \dots \otimes A_{\ell-1} \otimes A_\ell) &= \omega_-(A_\ell\otimes  A_{\ell-1} \otimes \dots \otimes A_2 \otimes A_1).
        \end{align*}
        \item When $n\bmod{4} = 2$, reflection parity symmetry is broken, i.e. 
        \begin{align*}
            \omega_+(A_1\otimes A_2\otimes  \dots \otimes A_{\ell-1} \otimes A_\ell) &= \omega_-(A_\ell\otimes  A_{\ell-1} \otimes \dots \otimes A_2 \otimes A_1)
        \end{align*}

        Equivalently, in terms of their reduced density matrices from Theorem \ref{thm:w+ and w- are distinct dimerized states for l large enough}, if $R_{\ell/2}$ is the linear map implementing reflection parity symmetry $R_{\ell/2} \ket{i_1,i_2,\dots, i_{\ell-1}, i_\ell} = \ket{i_\ell,i_{\ell-1},\dots, i_2, i_1}$, then
        \[
            R_{\ell/2} \, \rho_\ell^{\pm} \, R_{\ell/2} = \begin{cases}
                \rho_\ell^\pm & n\bmod{4} = 0 \\
                \rho_\ell^\mp & n\bmod{4} = 2 
            \end{cases}
        \]
    \end{itemize}
    \end{proposition}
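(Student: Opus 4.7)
The plan is to follow the same strategy that worked for Proposition \ref{prop:charge conjugation symmetry breaking}, using the anti-automorphism $t: \calC_n \to \calC_n$ defined on products by $t(\gamma_{i_1}\gamma_{i_2}\cdots \gamma_{i_k}) = \gamma_{i_k}\cdots \gamma_{i_2}\gamma_{i_1}$, but this time extracted at the level of MPS vectors rather than the CP-map presentation. The key observation is that the MPS expression (\ref{def of matrix product states}) reads $\psi_+(B)_{i_1,\dots,i_\ell} = \Tr(B\, \gamma_{i_\ell}\cdots \gamma_{i_1})$, and the reversed string is exactly $\gamma_{i_1}\cdots \gamma_{i_\ell} = t(\gamma_{i_\ell}\cdots \gamma_{i_1})$. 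Combining this with cyclicity and the identity $\Tr\paran{B\, t(C)} = \Tr\paran{t(B)\,C}$ (which holds because $t$ preserves trace on the basis $\{\gamma_I\}$) should give the core lemma
\[
    R_{\ell/2}\, \psi_\pm(B) \;=\; \psi_\pm(t(B)), \qquad B\in \calC_n^{[ev]},
\] where we have extended $\psi_\pm$ to all of $\calC_n^{[ev]}$ via (\ref{eqn:alpha swaps psi_+ with psi_-}).

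The next step is to evaluate $t$ on the eigenbasis $\{\gamma_I + \star\gamma_I\}$ of Theorem \ref{thm:w+ and w- are distinct dimerized states for l large enough}. Since $\star \gamma_I = \gamma_0 \gamma_I$ by (\ref{eqn:hodge dual is gamma0 multiplication}) and $t$ is an anti-automorphism, this reduces to computing $t(\gamma_I)$ and $t(\gamma_0)$ separately. The former contributes a sign $(-1)^{\abs{I}(\abs{I}-1)/2}$ that is common to both $\gamma_I$ and $\star\gamma_I$ (for $\abs{I}$ of the relevant parity), while the latter picks up the $n\bmod 4$ sensitivity already computed in the preceding proof: $t(\gamma_0)=\gamma_0$ when $n\bmod 4 = 0$ and $t(\gamma_0)=-\gamma_0$ when $n\bmod 4 =2$. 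Putting this together yields
\[
    t(\gamma_I + \star\gamma_I) \;=\; \pm \begin{cases} \gamma_I + \star\gamma_I, & n\bmod 4 = 0, \\ \gamma_I - \star\gamma_I, & n\bmod 4 = 2, \end{cases}
\] with an overall sign that is immaterial once we pass to the rank-one projectors $\ket{\varphi^{(I)}_\pm}\bra{\varphi^{(I)}_\pm}$.

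Applying the core lemma and Lemma \ref{lem:alpha swaps P_+ and P_-} together with the identity $\psi_- = \psi_+\circ\alpha$ then translates these two cases directly into statements about the eigenvectors: $R_{\ell/2}\varphi^{(I)}_+ \propto \varphi^{(I)}_+$ when $n\bmod 4 = 0$, and $R_{\ell/2}\varphi^{(I)}_+ \propto \varphi^{(I)}_-$ when $n\bmod 4 = 2$, with the proportionality constants being unit modulus because $R_{\ell/2}$ is unitary and both sides are normalized. Since the eigenvalues $\mu_{\abs{I}}$ depend only on $\abs{I}$, summing over the spectral decomposition of $\rho_\ell^\pm$ from Theorem \ref{thm:w+ and w- are distinct dimerized states for l large enough} gives the stated identities for the reduced density matrices; the expectation-value form then follows by taking $\Tr(\rho_\ell^\pm \, A_1\otimes\cdots\otimes A_\ell)$ and using $R_{\ell/2}(A_1\otimes \cdots\otimes A_\ell)R_{\ell/2} = A_\ell\otimes\cdots\otimes A_1$.

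The main obstacle I expect is bookkeeping: tracking the parity of $\abs{I}$ versus the parity of $\ell$ (so that the domain of $\psi_\pm$ is the correct subspace $\calC_n^{[ev]}$ or $\calC_n^{[odd]}$), verifying that $t$ really does satisfy $\Tr\circ t = \Tr$ together with $\Tr(Bt(C)) = \Tr(t(B)C)$, and ensuring that the various signs coming from $t(\gamma_I)$ cancel cleanly when reassembling $\gamma_I \pm \star \gamma_I$ so that the final proportionality constant is indeed a phase (so the projector identity is exact). None of these steps looks deep, but the sign analysis is delicate enough that writing the argument compactly while keeping it rigorous is the main hurdle.
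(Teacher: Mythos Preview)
Your proposal is correct and follows essentially the same route as the paper's proof: both act on the eigenvectors $\varphi_\pm^{(I)}$ of $\rho_\ell^\pm$, translate the reflection $R_{\ell/2}$ into the anti-automorphism $t$ via $\Tr\circ t = \Tr$, and then compute $t(\gamma_I+\star\gamma_I)$ using the $n\bmod 4$ dependence of $t(\gamma_0)$ already established in the charge-conjugation proof. Your packaging of the first step as the clean identity $R_{\ell/2}\,\psi_+(B)=\psi_+(t(B))$ is in fact slightly tidier than the paper's sign-tracking, but the argument is the same.
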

    \begin{proof}
        As usual, we phrase the statement for $\ell$ is even, noting that the result for odd $\ell$ requires more careful sign treatment than the other results in this paper. Recall the map $t(\gamma_1\gamma_2 \dots \gamma_k) = \gamma_k \dots \gamma_2\gamma_1$ from the proof of Proposition \ref{prop:charge conjugation symmetry breaking}. Working on eigenvectors $\ket{\varphi_+(\gamma_I +\star \gamma_I)}$, $\abs{I}\leq n/2$ of the density matrix $\rho_\ell^+$, we see that
        \begin{align*}
            R_{\ell/2} \ket{\varphi_+(\gamma_I+\star\gamma_I)} &= \sum_{i_1,\dots, i_\ell} \Tr\paran{(\gamma_I + \star\gamma_I)\gamma_{i_\ell}\gamma_{i_{\ell-1}} \dots \gamma_{i_2} \gamma_{i_1} } R_{\ell/2} \ket{i_1, i_2, \dots , i_{\ell-1},i_\ell } \\
            &= \sum_{i_1,\dots, i_\ell} \Tr\paran{(\gamma_I + \star\gamma_I) \gamma_{i_1}\gamma_{i_{2}} \dots \gamma_{i_{\ell-1}} \gamma_{i_\ell} } \ket{i_1, i_2, \dots , i_{\ell-1},i_{\ell} } \\
            (\Tr \, t(A) = \Tr A) \qquad &= \sum_{i_1,\dots, i_\ell} \Tr\paran{t(\gamma_{i_\ell}\gamma_{i_{\ell-1}} \dots \gamma_{i_2} \gamma_{i_1})t(\gamma_I + \star\gamma_I) } \ket{i_1, i_2, \dots , i_{\ell-1},i_{\ell} } \\
            &= (-1)^{c}\sum_{i_1,\dots, i_\ell} \Tr\paran{t(\gamma_I + \star\gamma_I) \gamma_{i_\ell}\gamma_{i_{\ell-1}} \dots \gamma_{i_2} \gamma_{i_1} } \ket{i_1, i_2, \dots , i_{\ell-1},i_{\ell} }
        \end{align*}
        where in the final line we have used that $t(\gamma_{i_\ell}\gamma_{i_{\ell-1}} \dots \gamma_{i_2} \gamma_{i_1}) = (-1)^c (\gamma_{i_\ell}\gamma_{i_{\ell-1}} \dots \gamma_{i_2} \gamma_{i_1})$ for some integer $c$, just by using Clifford relations to reorganize. Now, we have that since $\gamma_0\in Z(\calC_n^{[ev]})$ and since $t(\gamma_0)=\gamma_0$ for $n\bmod{4}=0$ and $t(\gamma_0)=-\gamma_0$ for $n\bmod{4}=2$,
        \[
            t(\gamma_I + \star \gamma_I) = t(\gamma_I) + t(\gamma_I)t(\gamma_0) = \begin{cases}
            (-1)^c(\gamma_I + \gamma_0\gamma_I) & n\bmod{4}=0 \\
            (-1)^c(\gamma_I - \gamma_0\gamma_I) & n\bmod{4}=2 .
            \end{cases}
        \] Putting this all together, we have
        \[
            R_{\ell/2}\, \ket{\varphi_+(\gamma_I+\star \gamma_I)}\bra{\varphi_+(\gamma_I+\star \gamma_I)} R_{\ell/2} = \begin{cases}
            \ket{\varphi_+(\gamma_I+\star \gamma_I)}\bra{\varphi_+(\gamma_I+\star \gamma_I)}  & n\bmod{4}=0 \\
            \ket{\varphi_-(\gamma_I+\star \gamma_I)}\bra{\varphi_-(\gamma_I+\star \gamma_I)}  & n\bmod{4}=2 
            \end{cases}
        \]and thus our result.
    \end{proof}

    \subsection{Time Reversal Symmetry}
    Finally, we address the case of time reversal. To define this symmetry, we follow e.g. Appendix of~\cite{ogata2019lieb}.We can treat our local Hilbert space $\C^{n}$ as a spin-$s$ irrep of $SU(2)$, writing $n=2s+1$. It will be convenient to relabel our basis $\ket{1},\dots ,\ket{n}$ as $\ket{-s} ,\ket{-s+1}, \dots , \ket{s-1}, \ket{s}$ so that we have representatives $S_1, S_2, S_3\in M_{2s+1}(\C)$ of $SU(2)$ satisfying 
    \[
        S_3\ket{\mu} = \mu \ket{\mu}, \qquad (S_1\pm iS_2)\ket{\mu} = \sqrt{s(s+1) - \mu(\mu\pm 1)} \ket{\mu\pm 1} .
    \] Recall that an antilinear unitary map on a complex vector space $\theta:W\to W$ is a map which has for all $w,u\in W$ and $z\in\C$,
    \[
        \theta(w+u) = \theta(w)+\theta(u), \quad \theta(zw) = \overline{z} w, \quad \theta^* \theta = \idty .
    \] Define $\theta$ to be the antilinear unitary on $\C^{2s+1}$ via
    \begin{equation}\label{def:time reversal theta}
        \theta\ket{\mu} = (-1)^{s-\mu}\ket{-\mu}.
    \end{equation} One verifies that it has the desired property of ``time-reversing spin'', i.e.
    \[
        \theta^* S_j \theta = -S_j ,\qquad j=1,2,3. 
    \] This then induces an antilinear unital $*$-automorphism $\Xi$ on local observables 
    \begin{equation}\label{def:time reversal symmetry}
        \Xi(A) = (\theta^*)^{\otimes \ell} A (\theta)^{\otimes \ell}, \qquad A\in \calA_{[1,\ell]}.
    \end{equation} which is a bounded operator and thus extends uniquely to the algebra of quasilocal observables $\calA$. Here, thanks to antilinearity, the right condition for invariance of a state $\omega$ is that
    \[
        \omega(\Xi(A) ) = \omega(A^*), \quad A\in \calA_{[1,\ell]}.
    \]
    We are now in a position to state the final piece of CPT symmetry. Just like the AKLT chain, the ground states $\omega_\pm$ are unbroken under time reversal symmetry.

    \begin{proposition} \label{prop:time reversal invariance} 
        Take observables $A_1\otimes \dots \otimes A_\ell\in\calA_{[1,\ell]}$. Then for all even $n$, time reversal symmetry $\Xi$ (\ref{def:time reversal symmetry}) is unbroken, i.e.
        \begin{align*}
            \omega_+(\Xi(A_1\otimes \dots \otimes A_\ell)) &= \omega_+(A_1^*\otimes \dots \otimes A_\ell^*)\\
            \omega_-(\Xi(A_1\otimes \dots \otimes A_\ell)) &= \omega_-(A_1^*\otimes \dots \otimes A_\ell^*)
        \end{align*}
    \end{proposition}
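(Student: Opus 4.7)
The plan is to reduce the claim to the $SO(n)$-invariance established in Theorem~\ref{thm:Invariance of ground states under SO(n)}. Write $\theta = W\circ K$, where $K$ denotes complex conjugation in the spin basis $\{\ket\mu\}$ and $W\colon\C^n\to\C^n$ is the real signed permutation $W\ket\mu = (-1)^{s-\mu}\ket{-\mu}$. The strategy is to show that the proposition's claim is equivalent to $W^{\otimes\ell}$ commuting with the reduced density matrices $\rho_\ell^\pm$ of $\omega_\pm$ on $[1,\ell]$, and that this commutation follows from $W\in SO(n)$.

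First I would expand $\omega_\pm(\Xi(A))=\Tr(\rho_\ell^\pm\,(\theta^*)^{\otimes\ell}A\,\theta^{\otimes\ell})$ and $\omega_\pm(A^*)=\Tr(\rho_\ell^\pm A^*)$ in matrix elements relative to the spin basis. The antilinear bookkeeping---using $\theta^*=\theta^{-1}$ and $\theta^2=-\idty$, the latter holding precisely because $n$ even forces $s$ to be half-integer---produces signs $\prod_i\epsilon_{\widetilde\alpha_i}\prod_j\epsilon_{\widetilde\beta_j}$ with $\epsilon_\mu=(-1)^{s-\mu}$ and $\widetilde\mu=-\mu$. These simplify via $\epsilon_{\widetilde\mu}=-\epsilon_\mu$ together with the global factor $(-1)^{2\ell}=1$ to give $\prod\epsilon_\alpha\prod\epsilon_\beta$, which is exactly the conjugation factor arising from $W^{\otimes\ell}\rho_\ell^\pm(W^{\otimes\ell})^{-1}$. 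Matching coefficients in $A$ then reduces the claim to the operator identity
\[
W^{\otimes\ell}\rho_\ell^\pm(W^{\otimes\ell})^{-1}=\rho_\ell^\pm .
\]

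Next I would verify $W\in SO(n)$. It is orthogonal by construction; its determinant factors as the sign of the involution $\mu\leftrightarrow-\mu$ (which is $(-1)^{n/2}$) times $\prod_\mu(-1)^{s-\mu}=(-1)^{ns}$, giving $\det W=(-1)^{n/2+ns}=(-1)^{n^2/2}$. For every even $n$ this equals $+1$, so $W\in SO(n)$. Applying Theorem~\ref{thm:Invariance of ground states under SO(n)} to $w=W$ yields the desired operator identity, completing the proof.

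The main obstacle is the antilinear bookkeeping in the reformulation step. The determinant computation is routine but the parity is delicate: had $\det W=-1$, the conclusion would instead have been a swap $\omega_+\leftrightarrow\omega_-$ (as actually happens for $C$ and $P$ when $n\bmod 4=2$). That time reversal always lands in $SO(n)$ for every even $n$---despite $C$ and $P$ sometimes landing outside---is precisely why this symmetry remains unbroken uniformly in $n$.
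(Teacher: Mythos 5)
Your approach mirrors the paper's exactly: factor $\theta = W\circ K$ into complex conjugation $K$ and a real signed permutation $W$, show $\det W=1$ so $W\in SO(n)$, and invoke Theorem~\ref{thm:Invariance of ground states under SO(n)}. The paper reaches $\det W=1$ by an inductive recursion $\theta(n+1) = \bigl(\begin{smallmatrix} & 1 \\ -\theta(n) & \end{smallmatrix}\bigr)$ rather than your direct computation $\det W = (-1)^{n/2}\cdot(-1)^{ns}=(-1)^{n^2/2}$; the two routes are interchangeable and both correct.

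However, the reduction in your second step has a gap. Writing $\theta = WK$ gives $\Xi(A) = (\theta^*)^{\otimes\ell}A\theta^{\otimes\ell} = W^{-\otimes\ell}\overline{A}\,W^{\otimes\ell}$, where the bar is entrywise conjugation in the spin basis. Hence
\[
\omega_\pm(\Xi(A)) = \Tr\paran{W^{\otimes\ell}\rho_\ell^\pm W^{-\otimes\ell}}\overline{A},
\qquad
\omega_\pm(A^*) = \Tr\rho_\ell^\pm\,\overline{A}^{\,T} = \Tr\overline{\rho_\ell^\pm}\,\overline{A},
\]
using Hermiticity of $\rho_\ell^\pm$ in the last step. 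Matching coefficients of $\overline{A}$ therefore reduces the claim to the operator identity $W^{\otimes\ell}\rho_\ell^\pm W^{-\otimes\ell}=\overline{\rho_\ell^\pm}$, not $W^{\otimes\ell}\rho_\ell^\pm W^{-\otimes\ell}=\rho_\ell^\pm$ as you state. The $SO(n)$-invariance supplied by Theorem~\ref{thm:Invariance of ground states under SO(n)} gives only $W^{\otimes\ell}\rho_\ell^\pm W^{-\otimes\ell}=\rho_\ell^\pm$, so an additional input is needed, namely $\overline{\rho_\ell^\pm}=\rho_\ell^\pm$. By Proposition~\ref{prop:charge conjugation symmetry breaking} this reality holds for $n\bmod 4 = 0$, but for $n\bmod 4 = 2$ the relation is $\overline{\rho_\ell^\pm}=\rho_\ell^\mp$, and the argument as you have written it does not close. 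The paper's own proof finesses the same point in the equality $\omega_+(A)-i\omega_+(B)=\overline{\omega_+(A+iB)}$, which silently assumes $\omega_+(A),\omega_+(B)\in\R$ for real $A,B$; this is precisely the assertion $\overline{\rho_\ell^+}=\rho_\ell^+$. So as written, both your argument and the paper's cleanly establish the proposition only when $n\bmod 4=0$; the $n\bmod 4 = 2$ case needs either a separate argument or a more careful account of how the complex conjugation interacts with the $O(n)$-to-$SO(n)$ symmetry breaking.
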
\begin{proof}
    
        Split any local observable in $\calA_{[1,\ell]}$ into $A+iB$ where $A,B$ real matrices in the spin basis $\{\ket{\mu}\}$ of $\C^{2s+1}$. Then by antilinearity,
        \[
            \Xi(A+iB) = \Xi(A)-i\Xi(B),
        \] and $\Xi$ is linear when restricted to real matrices. It is obvious that $\theta$ in Definition (\ref{def:time reversal theta}) has $\theta\in O(n)$ when we restrict to its real part, as its columns form an orthonormal basis. We will now inductively show that $\theta = \theta(n) \in SO(n)$,  where we have made explicit the dimension dependence $n=2s+1$. When $n=1$, we have $\theta(1) = \begin{pmatrix} 1\end{pmatrix}$ which trivially has $\det(\theta(1))=1$. When $n>1$, one can verify by hand the recursive equation
        \[
            \theta(n+1) = \begin{pmatrix}
                 & 1 \\
                 -\theta(n) & 
            \end{pmatrix},
        \] where it is then quick to see that $\det(\theta(n+1)) = \det(\theta(n))$ by cofactor expansion. In particular, we proved that both ground states $\omega_\pm$ are invariant under on-site $SO(n)$ transformations in Theorem \ref{thm:Invariance of ground states under SO(n)}, so 
        \[
        \omega_+(\Xi(A+iB)) = \omega_+(\Xi(A))-i\omega_+(\Xi(B)) = \omega_+(A) - i \omega_+(B) = \overline{\omega_+(A+iB)} = \omega_+((A+iB)^*),
        \] and likewise for $\omega_-$, which is what we wanted to show.
    \end{proof}

	\section{The Parent Property}\label{sec:parent property (SO(n) chains)}
    We will now prove for all $n\geq 2$ that the frustration-free Hamiltonian $H$ defined by the interaction $h=\idty + \SWAP - 2Q$ from (\ref{def:FF interaction h=1+SWAP-2Q}) is in fact a parent Hamiltonian for the family of MPSs $\psi_\ell$, extending the well known $n=3$ result of the AKLT chain Theorem \ref{thm:AKLT parent Hamiltonian}. As a reminder, $\SWAP(u\otimes v) = v\otimes u$ and $Q=\ket{\xi}\bra{\xi}$ where $\ket{\xi}= \frac{1}{\sqrt{n}}\sum_{i=1}^n\ket{ii}$. We will now make the chain length $\ell$ explicit in our notation, and in the even $n$ case we will not distinguish the states $\psi_+$ and $\psi_-$. To reiterate, our MPSs are maps $\psi_\ell:\calB \to (\C^n)^{\otimes \ell}$ defined by (\ref{def:MPSs of SO(n) chain}), where for odd $n$ we have $\calB = P_+\calC_n$ and for even $n$ we restrict to the subspaces $\calB = \calC_n^{[ev]}$ when $\ell$ is even and $\calB = \calC_n^{[odd]}$ when $\ell$ is odd (recall the discussion regarding this parity adjustment above Corollary \ref{cor:reps of even ground state spaces}). Theorem \ref{thm:w+ and w- are distinct dimerized states for l large enough} assures us that these MPSs are injective when $\ell\geq n$. The MPS spaces $\calG_\ell := \{\psi_\ell(B): B\in\calB\}$ are exactly as before.
    
    Recall Definition \ref{def:parent hamiltonian} of a parent Hamiltonian: given a family of MPSs $\psi_\ell$, a parent Hamiltonian $H_\ell = \sum_{x} h_x$ is defined by a frustration-free interaction supported on finitely many sites such that $\ker H_\ell = \calG_\ell$ for all $\ell$.
    In the special case where $h\geq 0$, frustration-freeness means that the kernel of $H$ is nonempty and enjoys the following intersection property:
    \begin{equation} \label{eq:intersection property (SO(n) chains)}
        \ker H_\ell = \ker \paran{\sum_x h_x} = \bigcap_{x} \ker{h_x} \neq \{0\} . 
    \end{equation}
    It was a quick computation to show in Lemma~\ref{lem:SO(n) MPS are ground states} that $\calG_{\ell}\subseteq \ker H_\ell$. But to see that this containment is saturated is less obvious. Our approach will heavily leverage the structure of these ground states as $SO(n)$ representations. We recall from Corollary \ref{cor:reps of odd ground state spaces} and Corollary \ref{cor:reps of even ground state spaces} that under the action of the tensor power of the defining representation $w^{\otimes \ell}$ of $SO(n)$, the MPS spaces $\calG_\ell$ admit the following decomposition into $SO(n)$ representations:
    \begin{equation}\label{eqn:decomp of irreps for G_l}
        \calG_\ell \cong \begin{cases} 
        \Exterior^0 V \oplus \Exterior^2 V \oplus \dots \oplus \Exterior^{n-2} V \oplus \Exterior^n V & \text{ if }\ell \text{ even } \\
        \Exterior^1 V \oplus \Exterior^3 V \oplus \dots \oplus \Exterior^{n-3} V \oplus \Exterior^{n-1} V & \text{ if }\ell \text{ odd }   ,
        \end{cases}
    \end{equation} where every representation here is irreducible, except $\Exterior^{n/2}V$ which splits into two distinct irreps of equal dimension $U_+\oplus U_-$. From here, a Pieri-type Lemma (\ref{lem:pieri rule for irreps of exterior x defining}) will allow us to understand the irreps arising on $\ell+1$ sites, and the intersection property (\ref{eq:intersection property (SO(n) chains)}) alongside liberal use of Schur's Lemma will allow us to prove that all representations except those arising from $\calG_{\ell+1}$ have nonzero energy. As a warm up, in Section \ref{sec:Preview the parent property for the AKLT chain} we revisit the proof of the parent property for the AKLT chain with a different approach, working more suggestively on the exterior powers of the defining representation of $V = \C^3$ of $\so(3)$. Since $\su(2)\cong \so(3)$, this are the same representations we encountered before, but this approach will more naturally lead to the general proof. We then demonstrate the proof for $n=4$ in Section~\ref{sec:the n=4 case proof of parent property} from which it is not terribly difficult to see the general $n$ case in Section~\ref{sec:parent property (SO(n) chains)}.

    \subsection{The Pieri-type lemma}
    The following lemma, upon which the entire proof centrally relies, is a variant of Pieri's formula, which in turn is a special case of a generalized version of the celebrated Littlewood-Richardson Rule. General formulae for decomposing tensor products of representations is a field unto itself\footnote{Let us mention a few alternative approaches. One could use the theory of crystal bases as in~\cite{nakashima1993crystal}, tableaux combinatorics as in~\cite{okada16pieri}, branching rules as in~\cite{king1975branching}, or Newell-Littlewood numbers as in~\cite{gao2021newell}.}, but it turns out this particular Lemma is much simpler to extract using elementary techniques~\cite{fulton1991representation}.\footnote{I would like to thank Eugene Gorsky for directing me to these results.} 
    
    Pieri's formula describes the decomposition of the tensor product of two $\sl(n,\C)$-irreps: the defining representation $V=\C^n$, and the \textit{Weyl module} (or \textit{Schur functor}) $\mathbb{S}_\lambda(V)$ for the partition $\lambda$ (Ch. 6 in~\cite{fulton1991representation}). Pieri's formula can be expressed diagrammatically using Young tableaux and ``adding and subtracting boxes'', as in~\cite{georgi1982lie}: for us, when $0<k<n$, the representation $\Exterior^k V$ is $\mathbb{S}_{(1^k)}(V)$, where the partition $\lambda = (1^k)$ corresponds to the Young diagram with a single column and $k$ rows. Pieri's formula then tells us that as $\sl(n,\C)$ representations,
    \begin{equation} \label{eq:sln pieri formula}
        \mathbb{S}_{(1^k)}(V) \otimes V \cong \mathbb{S}_{(1^{k+1})}(V) \oplus \mathbb{S}_{\mu}(V), \qquad \text{where }\mu = (2, \underbrace{1,1,\dots ,1}_{k-1 \text{ times }}).
    \end{equation} This decomposition may be similarly seen by considering the natural $\so(n)$ equivariant wedge map and considering its kernel and image. In our context, when chain length $\ell\geq k+1$, we may think of $\Exterior^k V\subseteq \calG_{\ell}$ as being $\text{span}\{ \psi_\ell(\gamma_{i_1}\dots \gamma_{i_k}): 1\leq  i_1< i_2 < \dots < i_k \leq n \}$, noting that $\Exterior^1 V \cong V$ here is directly implemented by $\psi_1(\gamma_i) = \ket{i}$. Consider the $\so(n)$-equivariant ``wedge map'' $\phi_\wedge :\Exterior^k V \otimes V \to \Exterior^{k+1} V$ given by
    \begin{equation}
        \phi_{\wedge} \paran{\psi_{\ell}(\gamma_{i_1}\dots \gamma_{i_k}) \otimes \psi_1(\gamma_j) }= \begin{cases}
            0 & \text{ if } j\in (i_1,\dots,i_k) \\
            \psi_{\ell+1}(\gamma_{i_1}\dots\gamma_{i_k}\gamma_j ) &\text{ if } j\not\in (i_1,\dots,i_k).
        \end{cases}
    \end{equation} The decomposition Equation (\ref{eq:sln pieri formula}) is exactly given by $\ker \phi_\wedge = \mathbb{S}_{\mu}(V)$ and $\im \phi_\wedge = \mathbb{S}_{(1^{k+1})}(V)$. Physically, we may think of states in the image $\im \phi_\wedge$ as those formed by antisymmetrically ``bonding'' a state $\psi_\ell(\gamma_K)$ with a particle on the $\ell+1$ site--indeed, they are in $\ker (\idty + \SWAP_{\ell,\ell+1})$.\footnote{Note that one cannot ``antisymmetrize'' forever, since $\Exterior^k V = 0$ if $k>n$. In other words, one cannot build purely antisymmetric ground states for chains of arbitrary length: there are no zero energy states for the interaction $\idty + \SWAP$ in the thermodynamic limit. To find interesting zero energy states, the $Q$ term really is crucial.}

    Treating $\so(n)$ as a subalgebra of $\sl(n)$, restricting to a representation of $\so(n)$ respects this direct sum decomposition. The first factor $\mathbb{S}_{(1^{k+1})}(V)=\Exterior^{k+1} V$ is irreducible when $k+1\neq n/2$ and otherwise splits into two different irreps $U_+\oplus U_- = \Exterior^{n/2} V$ of the same dimension. The second factor $\mathbb{S}_{\mu}(V)$ however is reducible as a representation of $\so(n)$, splitting into 
    \[
        \mathbb{S}_\mu(V) \cong \Exterior^{k-1} V \oplus M_k ,
    \] where these are irreducible except when $k\in [n/2-1,n/2+1]$. These can be found explicitly by a procedure known as Weyl's construction for orthogonal groups, which we will now mimic for our particular context. See Chapter 19.5 of~\cite{fulton1991representation} for more details and the general story.
    
    Let us use the maximally entangled state $\ket{\xi}$ to generate a family of contractions with an orthogonal bilinear form: let $1\leq p\leq k$ and write
    \begin{align*}
        c_{p}: \Exterior^k V\otimes V &\to \Exterior^{k-1}V \\
        \psi_\ell(\gamma_{i_1} \dots \gamma_{i_k})\otimes \psi_1(\gamma_m) &\mapsto \inprod{\xi \vert i_p,m} \psi_{\ell+1}(\gamma_{i_1} \dots \hat{\gamma_{i_p}} \dots \gamma_{i_k}),
    \end{align*} where $\hat{\gamma_{i_p}}$ denotes the omission of $\gamma_{i_p}$.\footnote{Note that since $i_1<i_2<\dots<i_k$, this is a more concise way of writing the usual contraction maps from Ch. 19 of~\cite{fulton1991representation}.} These contraction maps are indeed $\so(n)$-equivariant, since $R\otimes R\ket{\xi} = \ket{\xi}$ for any $R\in SO(n)$. Then, the representation $M_k\subseteq \ker \phi_\wedge$ is exactly
    \[
        M_k = \bigcap_{1\leq p \leq k} \ker (c_p) ,
    \] and the orthogonal complement of $M_k$ in $\ker \phi_\wedge$ can be seen to be isomorphic to $\Exterior^{k-1} V$. Physically, we may think of $\Exterior^{k-1} V$ as the family of states given by forming a maximally entangled bond between $\psi_\ell(\gamma_K)$ and a single particle on site $\ell+1$--these are in the kernel $\ker\idty-Q_{\ell,\ell+1}$--and we may think of $M_k$ as the family of states obtained by symmetric bonds orthogonal to the maximally entangled state $\ket{\xi}$--these are in the orthogonal complement $(\ker \idty -Q_{\ell,\ell+1})^\perp$. For example, in the case of $n=2$ and $\ell=1$, $M_1$ is spanned by the basis elements $\ket{i_1 m} + \ket{m i_1}, i_1 < m$ which are in $\ker c_1$, since $\inprod{\xi \vert i_1, m} = 0$. Then, $\Exterior^0 V = \C \ket{\xi}$, the maximally entangled bond across $\ell,\ell+1$, is exactly the orthogonal complement of $M_2$ in $\ker\phi_\wedge$.

    It can be shown be shown that these representations are irreducible for $\so(n)$ in all cases except when $k = n/2$ (in this case, one needs only a mild adjustment stemming from the split $\Exterior^{n/2} V\cong U_+\oplus U_-$). This is straightforward by the Weyl character formula, which allows us to compute the dimension of an irrep in terms of its highest weight:~\cite{fulton1991representation} in the language from Section~\ref{sec:highest weight reps}, the irreps $M_k$ correspond to the highest weight $2L_1 + L_2 + \dots + L_k$ when $k<n/2$ and the irreps $\Exterior^k V$ correspond to the highest weight $L_1+L_2+\dots + L_k$ when $k<n/2$, and when $k>n/2$ we have that $M_k\cong M_{n-k}$ and $\Exterior^k V\cong \Exterior^{n-k} V$ from Lemma~\ref{prop:ext k and ext n-k are isomorphic}\footnote{We only stated this result for $\Exterior^k V\cong \Exterior^{n-k}V$, but the result is the same for $M_k\cong M_{n-k}$. In all cases it is essential that these are representations of $\so(n)$, as this is not true for other Lie algebras, including $\sl(n)$.}
    
    \begin{lemma} (A Pieri-type rule)\cite{fulton1991representation, gao2021newell} \label{lem:pieri rule for irreps of exterior x defining}
    
    Let $V=\C^n$ be the defining representation of $\so(n)$. Then for all $0<k<n$, we have
    \[
        \Exterior^k V \otimes V \cong \Exterior^{k-1}V \oplus \Exterior^{k+1} V \oplus M_k . 
    \] 
    \begin{itemize}
        \item When $n$ is odd (Lie type $B_m$ where $n=2m+1$), all three representations on the right are irreducible. 
        \item When $n$ is even (Lie type $D_m$ where $n=2m$), all three representations are irreducible when $1<k<n/2-1$ or $n/2+1<k<n$. 
        \begin{itemize}
        \item If either $k + 1 = n/2$ or $k - 1 = n/2$, then $\Exterior^{n/2} V\cong U_+\oplus U_-$ with $U_+,U_-$ irreps of equal dimension as in Lemma \ref{lem:rep of Pi( )Pi^-1 on bond algebras} and $M_k$ is irreducible.
        \item If $k=n/2$, then $\Exterior^{k\pm 1} V$ is irreducible and $M_{n/2}\cong M_{n/2,+} \oplus M_{n/2,-}$ with $M_{n/2,\pm}$ irreps.
        \end{itemize}
    \end{itemize}

    When $k=0$ or $k=n$, then $\Exterior^0 V \cong \Exterior^n V \cong \textbf{1}$, the trivial representation, and so $\Exterior^0 V\otimes V\cong V\cong \Exterior^1 V$ and $\Exterior^n V \otimes V \cong V \cong \Exterior^{n-1} V$, which is irreducible.
    \end{lemma}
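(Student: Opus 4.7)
The plan is to construct the decomposition by producing three $\so(n)$-equivariant maps whose images (or kernels) realize the three promised summands, then invoke Weyl character formula / highest weight theory to verify irreducibility. First, I would invoke Pieri's formula for $\sl(n)$, already discussed above, which gives
\[
\Exterior^k V \otimes V \cong \mathbb{S}_{(1^{k+1})}(V) \oplus \mathbb{S}_\mu(V), \qquad \mu = (2,1^{k-1}),
\]
and observe that this decomposition is automatically $\so(n)$-equivariant since $\so(n)\subseteq \sl(n)$. The first summand is just $\Exterior^{k+1}V$; the content of the lemma is therefore that $\mathbb{S}_\mu(V)$ further decomposes into $\Exterior^{k-1}V \oplus M_k$ as $\so(n)$-representations, together with the various irreducibility statements.

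Next I would realize all three summands explicitly as images/kernels of equivariant maps. The wedge map $\phi_\wedge$ defined above has image $\Exterior^{k+1}V$. For the other two summands, I would use the invariant bilinear form encoded by $\ket{\xi}$: the $k$ contraction maps $c_p:\Exterior^k V\otimes V\to \Exterior^{k-1}V$ (one for each slot of the $\Exterior^k V$ factor) are each $\so(n)$-equivariant because $(R\otimes R)\ket{\xi}=\ket{\xi}$ for $R\in SO(n)$. One can sum them with appropriate signs to produce a single surjective $\so(n)$-equivariant map $\ker\phi_\wedge \twoheadrightarrow \Exterior^{k-1}V$ (the standard trace/contraction map on $\mathbb{S}_\mu(V)$), and define $M_k$ to be its kernel. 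Since $\Exterior^k V\otimes V$ carries the $\so(n)$-invariant Hermitian inner product, Weyl's unitary trick (Proposition \ref{prop:Weyl unitary trick}) guarantees complete reducibility, so $\ker\phi_\wedge$ splits orthogonally as $\Exterior^{k-1}V \oplus M_k$, yielding the desired direct sum.

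It remains to verify irreducibility of $M_k$ (and of $\Exterior^{k\pm 1}V$) and to handle the edge cases at $k \in \{n/2-1,n/2,n/2+1\}$. The irreducibility of $\Exterior^{k+1}V$ for $k+1\neq n/2$, and its splitting $U_+\oplus U_-$ at $k+1=n/2$, was already handled in Lemma \ref{lem:rep of Pi( )Pi^-1 on bond algebras} and Proposition \ref{prop:ext k and ext n-k are isomorphic}. For $M_k$, I would identify an explicit highest-weight vector in the basis from Section \ref{sec:highest weight reps}: a natural candidate is the image of $f_1\wedge f_2\wedge\dots\wedge f_k \otimes f_1$ after antisymmetrizing away the $\Exterior^{k+1}V$ component and subtracting off the piece hit by the contractions, which should yield highest weight $2L_1+L_2+\dots+L_k$ for $k<n/2$. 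Plugging this highest weight into the Weyl dimension formula and comparing with $\dim\mathbb{S}_\mu(V)-\dim\Exterior^{k-1}V = \binom{n}{k}\cdot n - \binom{n+1}{k+1} - \binom{n}{k-1}$ (or the appropriate arithmetic) should give equality, confirming irreducibility. At $k=n/2$ the highest weight vector I wrote down is no longer unique; instead one finds two independent highest weight vectors of weights $2L_1+L_2+\dots+L_{n/2-1}\pm L_{n/2}$, which forces the splitting $M_{n/2}\cong M_{n/2,+}\oplus M_{n/2,-}$.

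The main obstacle I anticipate is the bookkeeping at the middle of the Lie type $D_{n/2}$ diagram, namely disentangling $M_k$ from $U_\pm$ when $k$ is near $n/2$: the $\Exterior^{n/2}V$ factor of $\Exterior^{k-1}V$ or $\Exterior^{k+1}V$ itself splits into two pieces of the same dimension as the halves of $M_{n/2}$, and ensuring the right pieces end up in the right summands requires tracking which half-spin parity the highest weight vectors carry. Once one has pinned down the highest weights explicitly (using the $f_j,\wt f_j$ bases from Section \ref{sec:exterior power reps}), the Weyl character formula finishes the irreducibility verification case by case, and the extremes $k=0,n$ are immediate from $\Exterior^0 V\cong\Exterior^n V\cong \mathbf{1}$.
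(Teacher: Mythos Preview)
Your proposal is correct and follows essentially the same route as the paper's discussion preceding the lemma: both use the $\sl(n)$ Pieri formula to split off $\Exterior^{k+1}V$ via the wedge map $\phi_\wedge$, then use the contraction maps $c_p$ built from the invariant form $\ket{\xi}$ to further split $\mathbb{S}_\mu(V)$ into $\Exterior^{k-1}V\oplus M_k$, and finally verify irreducibility by identifying the highest weight $2L_1+L_2+\dots+L_k$ and appealing to the Weyl character/dimension formula. One small simplification: the highest weight vector for $M_k$ is simply $f_1\wedge\dots\wedge f_k\otimes f_1$ with no correction needed, since $f_1\wedge f_1=0$ kills the $\Exterior^{k+1}V$ component and the contractions $c_p$ already vanish because the invariant form pairs $f_i$ only with $\wt f_i$.
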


    \subsection{Preview: the parent property for the AKLT chain, $n=3$} \label{sec:Preview the parent property for the AKLT chain}
    When $n=3$, we have $\su(2)\cong \so(3)$. The ground state space for a single copy of the interaction $h_{1,2}$ is exactly given by (\ref{eq:two site ground states MPS}), which we may rewrite as
    \[
        \calG_2 \cong \Exterior^0 V \oplus \Exterior^2 V \cong \ker h_{1,2} . 
    \] To contact our earlier decomposition for the AKLT chain, $\Exterior^0 V$ is the $s=0$ irrep of $\su(2)$, and $\Exterior^2 V$ is the $s=1$ irrep of $\su(2)$. At this point, since $n=3$ is odd and $\ell=2 > n/2$, the MPS $\psi_2$ is injective. The intersection property (\ref{eq:intersection property (SO(n) chains)}) means that 
    \[
        \ker H_3 = \ker (h_{1,2} + h_{2,3}) = \ker h_{1,2} \cap \ker h_{2,3} \cong \paran{ (\Exterior^0 V \oplus \Exterior^2 V) \otimes V } \cap \ker h_{2,3}.
    \] 
    Clebsch-Gordan decomposition Proposition~\ref{prop:clebsch-gordan} may be viewed as a special case of the Pieri-type Lemma~\ref{lem:pieri rule for irreps of exterior x defining}. We see that $(\Exterior^0 V \oplus \Exterior^2 V) \otimes V \cong 2 \Exterior^1 V \oplus \Exterior^3 V \oplus M_2 $ where $M_2$ corresponds to a spin-2 representation. Note that Proposition \ref{prop:ext k and ext n-k are isomorphic} gives that $\Exterior^k V\cong \Exterior^{n-k} V$, here $\Exterior^0 V \cong \Exterior^3 V$ and $V\cong \Exterior^1 V \cong \Exterior^2 V$. Since $H$ commutes with the action of $\so(n)$, $H$ restricted to any irrep must have fixed eigenvalue by Schur's Lemma. One then computes directly on highest weight vectors in these representations to see that one spin-1 representation $\Exterior^1 V$ and the spin-2 representation $M_2$ have nonzero energy, so $\ker H_3 \subseteq \Exterior^1 V \oplus \Exterior^3 V$. But since $\calG_3 \cong \Exterior^1 V \oplus \Exterior^3 V$ and $\calG_3\subseteq \ker H_3$, this containment is an equality.

    \subsection{The proof of the parent property for $n=4$} \label{sec:the n=4 case proof of parent property}
    Let us trace the $n=4$ case of the proof of the parent property Theorem \ref{thm:parent property (SO(n) chains)}, leaving the details to the general proof. Here, $V=\C^4$ is the defining representation of $\so(4)$. It is not hard to show that $\calG_\ell = \ker h_{\ell}$ for $\ell< 4$, so let us begin at $\ell=4$, the injectivity length. Then the decomposition of the MPS space $\calG_\ell$ into $\so(4)$-representations (\ref{eqn:decomp of irreps for G_l}) reads as 
    \begin{align*}
        \calG_\ell \cong \Exterior^0 V \oplus \Exterior^2 V \oplus \Exterior^4 V\cong \Exterior^0 V \oplus U_+\oplus U_- \oplus \Exterior^4 V,
    \end{align*} In particular, using Section~\ref{secapp:comment for proof of parent property}, the corresponding highest weight vectors in the above decomposition are $\psi_4(\idty)$, $ \psi_4(f_1f_2)$, $ \psi_4(f_1\wt{f_2})$, and $ \psi_4(\star \idty)$. Then, using the intersection property, 
      \begin{equation}\begin{split} 
        \calG_{5} &\cong \paran{\Exterior^1 V \oplus \Exterior^3 V \oplus \Exterior^5 V} \\
        &\subseteq \ker H_{5} \\
        &= (\calG_{4}\otimes V)\cap \ker h_{4,5} \\
        &\cong \paran{(\Exterior^0 V\oplus \Exterior^2 V \oplus \Exterior^4 V)\otimes V }\cap \ker h_{4,5} \\
        &\cong \paran{2\paran{\Exterior^1 V \oplus \Exterior^3 V \oplus \Exterior^{5} V} \oplus M_2 }\cap \ker h_{4,5}.
    \end{split}\end{equation} where we have used the Pieri-type lemma~\ref{lem:pieri rule for irreps of exterior x defining} in the final line. 
    We need to show that this containment is saturated, by demonstrating that $M_2$ and a single copy of each odd $k$, $\Exterior^k V$ are orthogonal to $\ker h_{4,5}$ by irreducibility.

    The irrep decomposition of $M_2$ is $M_2\cong M_{2,+}\oplus M_{2,-}$ and we may find the highest weight vectors to be
    \begin{align*}
        v_{2,+} := \psi_4(f_1 f_2)\otimes \psi_1(f_1) \\
        v_{2,-} := \psi_4(f_1 \wt{f_2})\otimes \psi_1(f_1),
    \end{align*} which have respective weights $2L_1+L_2$ and $2L_1-L_2$. Applying the interaction term, we have $h_{4,5}\, v_{2,\pm}\neq 0$, and by irreducibility, $M_{2,\pm}$ are both orthogonal to $\ker{h_{4,5}}$.

    For $\Exterior^1 V$ and $\Exterior^3 V$, we take
    \begin{align*}
        w_1 &:= \psi_4(\idty)\otimes \psi_1(f_1) \\
        w_3 &:= \psi_4(\gamma_0)\otimes \psi_1(f_1).
    \end{align*} Grading considerations immediately give that $w_1$ and $w_3$ occupy different irreps from each other and from $M_{2,+},M_{2,-}$. One then computes that $h_{4,5} w_1\neq 0$ and $h_{4,5} w_3\neq 0$ and so $\ker h_{4,5}$ contains at most one copy of each $\Exterior^1 V$ and $\Exterior^3 V$.

    Now, let us move from $\ell=5$ to $\ell=6$. We again have by using the intersection property and the Pieri-type Lemma~\ref{lem:pieri rule for irreps of exterior x defining} that 
    \begin{equation}\begin{split} 
        \calG_{6} &\cong \paran{\Exterior^0 V \oplus \Exterior^2 V \oplus \Exterior^4 V} \\
        &\subseteq \ker H_{6} \\
        &= (\calG_{5}\otimes V)\cap \ker h_{5,6} \\
        &\cong \paran{(\Exterior^1 V\oplus \Exterior^3 V ) \otimes V }\cap \ker h_{5,6} \\
        &\cong \paran{\Exterior^0 V \oplus 2\paran{\Exterior^2 V} \oplus \Exterior^4 V \oplus (M_1\oplus M_3) }\cap \ker h_{5,6}.
    \end{split}\end{equation}
    For $M_1$ and $M_3$, we find the highest weight vectors
    \begin{align*}
        v_1 &:= \psi_5(\gamma_1)\otimes \psi_1(\gamma_1) \\
        v_3 &:= \psi_5(\star \gamma_3)\otimes \psi_1(\gamma_1) = -\psi_5(\gamma_1\gamma_2\gamma_4)\otimes \psi_1(\gamma_1),
    \end{align*} which both have highest weights $2L_1$. Applying the interaction term, we have $h_{5,6}v_{1}\neq 0$ and $h_{5,6}v_{3}\neq 0$, and by irreducibility, $M_{1}$ and $M_3$ are both orthogonal to $\ker{h_{5,6}}$.

    For $\Exterior^2 V \cong U_+\oplus U_-$, we start by defining
    \begin{align*}
        \wt{w_1} &:= \psi_5(f_1)\otimes \psi_1(f_2) \\
        \wt{w_3} &:= \psi_5(\star f_1)\otimes \psi_1(\wt{f}_2).
    \end{align*} Let us focus on $\wt{w_1}$. In this case, we need to perform Gram-Schmidt orthogonalization it has nontrivial overlap with $M_1$. By irreducibility $M_1$ is cyclically generated by the action of $[\pi,\cdot]$ on its highest weight vector $v_1$, so we see the only overlap occurs when $\pi(X) = \frac{1}{4} \wt{f_1}f_2$. In this case, this element of $M_1$ is
    \[
         \frac{1}{4}\paran{\psi_5([\wt{f_1}f_2,f_1])\otimes \psi_1(f_1) + \psi_5(f_1)\otimes \psi_1([\wt{f_1}f_2,f_1]) } = -(\psi_5(\gamma_2)\otimes \psi_1(\gamma_1) + \psi_5(\gamma_1)\otimes \psi_1(\gamma_2))
    \]and so Gram-Schmidt orthogonalizing $\wt{w}_1\in \Exterior^1 V\otimes V$ yields
    \begin{align*}
        w_1 &=  \psi_5(\gamma_2) \otimes \psi_1(\gamma_1) - \psi_5(\gamma_1) \otimes \psi_1(\gamma_2),
    \end{align*} which must be orthogonal to $M_1$, and by grading, to $M_3$. One then computes to see that $h_{5,6} w_1 \neq 0$. By a similar procedure, one acquires $w_3$ and then computes to see that $h_{5,6} w_3 \neq 0$. Grading/weight considerations force these to occupy different irreducibles in $\Exterior^2 V \subseteq \Exterior^1 V \otimes V$, and so $U_+$ and $U_-$ must both be orthogonal to $\ker h_{5,6}$.

    \subsection{The proof of the parent property for all $n$}
    \begin{theorem}\label{thm:parent property (SO(n) chains)}
        $H$ is a parent Hamiltonian for the MPS ground states, i.e. $\ker H_\ell = \calG_\ell$ for all chain lengths $\ell$.
    \end{theorem}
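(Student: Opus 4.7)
The proof strategy is induction on the chain length $\ell$, driven by the frustration-free intersection property (\ref{eq:intersection property (SO(n) chains)}) and enabled by the $\so(n)$-equivariance of the entire setup. The base case $\ell = 2$ is the direct calculation (\ref{eq:two site ground states MPS}) showing $\ker h = \Exterior^2 V \oplus \C\ket{\xi}$, and observing that this coincides with the image $\calG_2$ of the MPS map $\psi_2$, since $\psi_2(\idty) \propto \ket{\xi}$ and $\psi_2(\gamma_i\gamma_j)$ with $i<j$ spans the antisymmetric subspace.

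For the inductive step, assume $\ker H_\ell = \calG_\ell$. The intersection property and Lemma \ref{lem:SO(n) MPS are ground states} give the two-sided bound
\begin{equation}
\calG_{\ell+1} \subseteq \ker H_{\ell+1} = (\ker H_\ell \otimes V) \cap (V \otimes \ker H_\ell) \subseteq (\calG_\ell \otimes V) \cap \ker h_{\ell,\ell+1}.
\end{equation}
Using the decomposition (\ref{eqn:decomp of irreps for G_l}) of $\calG_\ell$ as an $\so(n)$-representation, together with the Pieri-type Lemma \ref{lem:pieri rule for irreps of exterior x defining}, I would decompose $\calG_\ell \otimes V$ explicitly into $\so(n)$-irreps. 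The result contains $\calG_{\ell+1}$ as the sum of $\Exterior^j V$ summands of parity opposite to that of $\ell$, plus ``extra'' irreps that must be pruned: the $M_k$'s produced by Pieri, and for each $\Exterior^j V \subseteq \calG_{\ell+1}$ an extra copy produced because $\calG_\ell$ already contains both $\Exterior^{j-1} V$ and $\Exterior^{j+1} V$.

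Since $h_{\ell,\ell+1}$ is $\so(n)$-equivariant, Schur's lemma guarantees that it acts as a scalar on each irreducible invariant subspace; to rule out an extra irrep from $\ker h_{\ell,\ell+1}$ it therefore suffices to exhibit a single highest weight vector on which $h_{\ell,\ell+1}$ acts nontrivially. The explicit highest weight vectors are supplied by Section \ref{secapp:comment for proof of parent property}: for the $M_k$ summands one takes vectors of the form $\psi_\ell(f_1\cdots f_k) \otimes \psi_1(f_1)$ (with analogous Hodge-star expressions for $k>n/2$); for the duplicate $\Exterior^j V$ summands one builds a candidate of the form $\psi_\ell(f_1 \cdots \widehat{f_i} \cdots f_j) \otimes \psi_1(f_i)$ and, if necessary, Gram--Schmidt orthogonalizes against the $M_{j-1}$ representative so as to land squarely in the desired $\Exterior^j V$ copy, exactly as illustrated in Section \ref{sec:the n=4 case proof of parent property} for $n=4$. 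A direct Clifford-algebra computation using (\ref{def:gamma operators (final chap)}) then verifies that $h_{\ell,\ell+1} = \idty + \SWAP - 2Q$ does not annihilate these vectors, forcing every extra irrep to meet $\ker h_{\ell,\ell+1}$ trivially and closing the induction.

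The main obstacle will be bookkeeping rather than conceptual novelty: one must separately handle the boundary cases $k \in \{0,n\}$ where Pieri degenerates; the even-$n$ middle cases $k \in \{n/2-1, n/2, n/2+1\}$ where either $\Exterior^{n/2} V$ splits as $U_+ \oplus U_-$ or $M_{n/2}$ splits as $M_{n/2,+} \oplus M_{n/2,-}$; and the parity of $\ell$, which determines whether we work on $\calC_n^{[ev]}$ or $\calC_n^{[odd]}$. In the middle cases, the two irreps $U_\pm$ are distinguished by the highest weight vectors $\psi_\ell(f_1\cdots f_{n/2-1} f_{n/2})$ versus $\psi_\ell(f_1\cdots f_{n/2-1}\wt{f}_{n/2})$, so the required computations can be carried out one half at a time, and the matching of $U_\pm$ with the $P_\pm$ subalgebras established in Section \ref{SO(n) representations on MPS} ensures consistency with the MPSs $\psi_\pm$. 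Once these cases are tabulated, the proof proceeds uniformly in $n$ and $\ell$.
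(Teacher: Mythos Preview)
Your proposal is correct and follows essentially the same approach as the paper: induction on $\ell$ via the intersection property, decomposition of $\calG_\ell \otimes V$ using the Pieri-type Lemma, and elimination of the spurious $M_k$ and duplicate $\Exterior^{j}V$ irreps by checking $h_{\ell,\ell+1}$ on the explicit highest weight vectors from Section~\ref{secapp:comment for proof of parent property}, with Gram--Schmidt against $M_k$ for the duplicates. The only cosmetic difference is that the paper launches the induction at the injectivity length $\ell = n$ (where the decomposition (\ref{eqn:decomp of irreps for G_l}) is fully populated) and treats short chains in a postscript, whereas you start at $\ell=2$; both work, but you should note that for $\ell < n$ only the summands $\Exterior^k V$ with $k \leq \ell$ actually appear in $\calG_\ell$, so the bookkeeping there is slightly different.
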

    \begin{corollary} \label{cor:parent prop means uniqueness}
        When $n$ is odd, $H$ has a unique gapped ground state $\omega$. When $n$ is even, $H$ has exactly two pure ground states $\omega_\pm$.
    \end{corollary}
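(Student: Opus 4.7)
The plan is to leverage the parent property established in Theorem \ref{thm:parent property (SO(n) chains)} together with the spectral analysis of the transfer operator $\bbE_\idty$ already performed in Section \ref{sec:the ground states of the MPS point}. The strategy parallels the uniqueness argument for the AKLT chain given at the end of Section \ref{sec:The AKLT Ground State as a Finitely Correlated State (FCS)}.

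First I would argue that any thermodynamic limiting ground state $\eta$ of the infinite chain $C^*$-dynamical system $(\calA,\tau_t)$ defined by $\Phi$ must, when restricted to any finite interval $[a,b]\subseteq \Z$, have density matrix $\rho_{[a,b]}$ whose range lies in $\calG_{b-a+1}$. This follows from frustration-freeness together with the parent property: any pure vector in the support of $\rho_{[a,b]}$ must have vanishing expectation of the non-negative operator $h_{x,x+1}$ for each $x\in[a,b-1]$, hence must lie in $\bigcap_{x} \ker h_{x,x+1} = \ker H_{[a,b]} = \calG_{b-a+1}$ by Theorem \ref{thm:parent property (SO(n) chains)}. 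Thus every thermodynamic limiting ground state is built entirely out of matrix product states.

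Next I would invoke the finitely correlated state description constructed in Section \ref{sec:the ground states of the MPS point}. In the odd $n$ case, the diagonalization (\ref{eq:diagonalization of E odd n}) shows the transfer operator $\bbE_\idty$ is primitive: the eigenvalue $1$ is simple and all other eigenvalues satisfy $\abs{\lambda}<1$. Theorem \ref{thm:Primitivity} and the uniqueness portion of the FCS formalism (Definition \ref{def:primitive MPS} and Proposition \ref{prop:ergodic and periodic FCS decompositions}) then force any translation-invariant FCS built from MPSs of the form $\psi_\ell(B)$ to coincide with the state $\omega$ defined by (\ref{def:omega}), which is pure. Combined with the previous paragraph, this gives uniqueness of $\omega$ as a ground state on $\calA$. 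Gappedness of $\omega$ is already established by \cite{nachtergaele1996spectral} and need not be re-derived.

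In the even $n$ case, the same reasoning applies, but now $\bbE_\idty$ has $\pm 1$ as simple eigenvalues with all others $\abs{\lambda}<1$, so the corresponding FCS $\omega$ is ergodic but not pure. Proposition \ref{prop:ergodic and periodic FCS decompositions} then guarantees the unique decomposition $\omega=\tfrac{1}{2}(\omega_++\omega_-)$ into two pure 2-periodic FCSs, which are precisely the states constructed in Section \ref{sec:Constructing the pure states omega+- and the CP maps F1,F2} and shown to be distinct in Theorem \ref{thm:w+ and w- are distinct dimerized states for l large enough}; that they are indeed pure was verified in Corollary \ref{cor:omega+- are pure} via primitivity of the blocked transfer operator $\bbF_\idty$ (Proposition \ref{prop:F maps are primitive}). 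The mild obstacle here is a bookkeeping one: one must check that the restriction to $\calC_n^{[ev]}$ (respectively $\calC_n^{[odd]}$) dictated by even/odd $\ell$ does not introduce spurious additional limit points, but this is handled by the argument that any ground state restricted to $[a,b]$ lies in $\calG_{b-a+1}$ together with the fact that these are precisely the bond-algebra subspaces in the parent property. With these two facts in hand the corollary follows immediately.
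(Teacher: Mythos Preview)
The paper does not provide an explicit proof of this corollary; it is stated immediately after Theorem \ref{thm:parent property (SO(n) chains)} and treated as following directly from the parent property combined with the transfer operator analysis of Section \ref{sec:the ground states of the MPS point} (primitivity for odd $n$, the $\pm 1$ eigenvalue structure and ergodic/periodic decomposition for even $n$). Your proposal correctly identifies and executes exactly this implied argument, mirroring the AKLT uniqueness proof at the end of Section \ref{sec:The AKLT Ground State as a Finitely Correlated State (FCS)}, so it is essentially what the paper has in mind.
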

    \begin{proof}
    We will demonstrate the proof for even $n$: as one may suspect from looking at Lemma \ref{lem:pieri rule for irreps of exterior x defining}, the case for odd $n$ is essentially an easier version of this proof, as it avoids the clumsy reducibility problems when $k=n/2$ and $k=n/2\pm 1$. The proof will proceed by induction on number of sites $\ell$. We will start at the injectivity length $\ell = n$ and discuss the straightforward adjustments for the shorter chains immediately after the proof. 

    Assume towards induction that $\ell\geq n$ and $\ker H_{\ell} = \calG_\ell$, defined in Eqn. (\ref{def:MPSs of SO(n) chain}).
    Now, by the intersection property for frustration-free interactions (\ref{eq:intersection property (SO(n) chains)}),
    \[
        \ker H_{\ell+1} = (\calG_\ell \otimes \C^n) \cap \ker h_{\ell,\ell+1} .
    \]
    Then, combining this with the decomposition of $\calG_{\ell+1}$ into $SO(n)$ irreps (\ref{eqn:decomp of irreps for G_l}), we have
    \begin{equation}\begin{split} \label{eqn:Parent property big containment}
        \calG_{\ell+1} &\cong \paran{\Exterior^1 V \oplus \Exterior^3 V \oplus \dots \oplus \Exterior^{n-3} V \oplus \Exterior^{n-1} V} \\
        &\subseteq \ker H_{\ell+1} \\
        &= (\calG_{\ell}\otimes V)\cap \ker h_{\ell,\ell+1} \\
        &\cong \paran{(\Exterior^0 V\oplus \Exterior^2 V \oplus \dots \oplus \Exterior^{n-2} V \oplus \Exterior^n V)\otimes V }\cap \ker h_{\ell,\ell+1} \\
        &\cong \paran{2\paran{\Exterior^1 V \oplus \Exterior^3 V \oplus \dots \oplus \Exterior^{n-3} V \oplus \Exterior^{n-1} V} \oplus \bigoplus_{\text{even } 0< k <n} M_k}\cap \ker h_{\ell,\ell+1},
    \end{split}\end{equation}
    where the final isomorphism is via the Pieri-type Lemma (\ref{lem:pieri rule for irreps of exterior x defining}).
    Our goal going forward is to show that the containment $\calG_{\ell+1} \subseteq \ker H_{\ell+1}$ is saturated by showing that the kernel $\ker h_{\ell,\ell+1}$ contains no $M_k$ representations and at most 1 copy of each $\Exterior^{k+1} V$ representation, which will then yield that $\calG_{\ell+1} = \ker H_{\ell+1}$ by the above. Irreducibility of each irrep $W$ is key to this argument, as each irrep must respect the eigendecomposition of $H_{\ell+1}$ since $H_{\ell+1}$ is $\so(n)$-invariant and so is either in $\ker h_{\ell,\ell+1}$ or orthogonal to it.
    
    \vspace{5mm} 
    \noindent \textbf{$\ker{h_{\ell,\ell+1}}$ contains no $M_k$ representations.}
    
    We start by assuming that $0<k<n/2$. Notice that each $M_k$ appears as $M_k \subseteq \Exterior^k V\otimes V$: in fact, since 
    \[
        2L_1 + L_2 +\dots + L_k = (L_1+ L_2 + \dots + L_k) + (L_1),
    \] its highest weight is the sum of the highest weights of $\Exterior^k V$ and $V$, and by a standard result in representation theory, its highest weight space is 1-dimensional and spanned by a single vector $v_k$ which is given by taking the tensor product of the highest weight vectors of $\Exterior^k V$ and $V$, respectively. Section \ref{secapp:comment for proof of parent property} details what these highest weight vectors are, and we are left with
    \begin{equation}\label{def:Mk highest weight vecs}
        v_k := \psi_\ell(f_1 f_2 \dots f_k) \otimes \psi_1(f_1) .
    \end{equation} Apply the interaction term $h_{\ell,\ell+1} = \idty + \SWAP - 2\ket{\xi}\bra{\xi}$ across the $\ell,\ell+1$ bond:
    \begin{align*}
        h_{\ell,\ell+1} \, v_k &= \sum_{i_1,\dots,i_\ell, i_{\ell+1}} \Tr(f_1 \gamma_{i_{\ell+1}}) \Tr\paran{ (f_1\dots f_k) \gamma_{i_{\ell}}\dots \gamma_{i_1}} \ket{i_1,\dots,i_\ell, i_{\ell+1}} \\
        & \qquad \qquad \;\; + \Tr(f_1 \gamma_{i_{\ell}}) \Tr\paran{ (f_1\dots f_k) \gamma_{i_{\ell+1}}\gamma_{i_{\ell-1}}\dots \gamma_{i_1}} \ket{i_1,\dots,i_\ell, i_{\ell+1}} \\
        & \quad  - \frac{2}{n}\delta_{i_{\ell+1},i_\ell}\sum_j \Tr(f_1 \gamma_{i_{\ell+1}}) \Tr\paran{ (f_1\dots f_k) \gamma_{i_{\ell}}\dots \gamma_{i_1}} \ket{i_1,\dots,j,j} .
    \end{align*} Now, examine the coefficients when $i_{\ell+1} = 1$ and $i_{\ell}= 3$. The third term corresponding to $\ket{\xi}\bra{\xi}$ is zero, and we have
    \begin{equation}\label{eqn:parent prop vk nonzero terms}
        \Tr(f_1 \gamma_1) \Tr\paran{(f_1 \dots f_k) \gamma_{3} \gamma_{i_{\ell-1}} \dots \gamma_{i_1}} + \Tr(f_1 \gamma_{3}) \Tr\paran{(f_1 \dots f_k) \gamma_{1} \gamma_{i_{\ell-1}} \dots \gamma_{i_1}}.
    \end{equation} Since $f_1 = \gamma_1 + i\gamma_2$, by Lemma \ref{lem:traces of products of gamma operators} we have $\Tr(f_1\gamma_3) = 0$ and so the second term is zero. Now, observe that
    \begin{equation} \label{eqn:parent prop f1f2...fk}
    f_1 f_2 \dots f_k = (\gamma_1\gamma_3 \gamma_5 \dots \gamma_{2k-1}) + (\text{other terms}) ,
    \end{equation} where the other terms $\Gamma$ have $\Tr (\gamma_1\gamma_3 \gamma_5 \dots \gamma_{2k-1} \Gamma) = 0$. So, picking the remaining $i_1, \dots, i_{\ell-1}$ to be such that $(\gamma_3 \gamma_{i_{\ell-1}} \dots \gamma_{i_1}) = \gamma_{2k-1} \dots \gamma_5 \gamma_3 \gamma_1$, which we can always do once $\ell\geq k$, we have that the first term of Equation (\ref{eqn:parent prop vk nonzero terms}) is equal to $\Tr(\idty)\Tr(\idty) = D^2$. Thus, this coefficient is nonzero, and so $h_{\ell,\ell+1}v_k \neq 0$. In particular, since there is only one copy of $M_k$ in the decomposition (\ref{eqn:Parent property big containment}), and since $M_k$ irreducible, Schur orthogonality relations assure us that it is orthogonal to $\ker h_{\ell,\ell+1}$ and thus to $\ker H_{\ell+1}$.

    When $k>n/2$, one can perform a similar calculation
    \[
        v_k := \psi_\ell(\star(f_1 \dots f_{n-k}))\otimes \psi_1(f_1),
    \] where after looking at Equation (\ref{eqn:parent prop f1f2...fk}) we see that
    \[
        \star(f_1f_2 \dots f_{n-k}) = \star(\gamma_1\gamma_3 \gamma_5 \dots \gamma_{2(n-k)-1}) + (\text{other terms)},
    \] where the other terms $\Gamma$ have $\Tr(\star(\gamma_1\gamma_3 \gamma_5 \dots \gamma_{2(n-k)-1})  \Gamma) = 0$. So, we examine the coefficients of $h_{\ell,\ell+1} v_k$ when $i_{\ell+1} = 1$ and $i_{\ell} = 2$. Then again the term corresponding to $\ket{\xi}\bra{\xi}$ vanishes, and similarly to Equation (\ref{eqn:parent prop vk nonzero terms}), we have
    
    \[
    \Tr(f_1 \gamma_1) \Tr\paran{\star(f_1 \dots f_{n-k}) \gamma_{3} \gamma_{i_{\ell-1}} \dots \gamma_{i_1}} + \Tr(f_1 \gamma_{3}) \Tr\paran{\star(f_1 \dots f_{n-k}) \gamma_{1} \gamma_{i_{\ell-1}} \dots \gamma_{i_1}}.
    \] The second term again dies since $\Tr(f_1 \gamma_3) = 0$, and picking $i_1,\dots, i_{\ell-1}$ such that 
    \[(\gamma_3 \gamma_{i_{\ell-1}} \dots \gamma_{i_1}) = \paran{\star(\gamma_1\gamma_3 \gamma_5\dots \gamma_{2(n-k)-1})}^{-1},
    \]which can be done as long as $\ell \geq (n-k)$, the first term becomes $D^2$, and so $h_{\ell,\ell+1} v_{k} \neq 0$, showing that $M_k$ is orthogonal to $\ker H_{\ell+1}$.

    When $k=n/2$, there is an inconsequential subtlety: here, the $M_{n/2}$ splits into two irreps $M_{n/2} \cong M_{n/2,+} \oplus M_{n/2,-}$. These arise as $M_{n/2,+} \subseteq U_+\otimes V$ and $M_{n/2,-} \subseteq U_-\otimes V$, and so just as before, they have highest weights $2L_1 + L_2 +\dots + L_{n/2-1} + L_{n/2}$ and $2L_1 + L_2 +\dots + L_{n/2-1} - L_{n/2}$ with corresponding highest weight vectors
    \begin{align*}
        v_{n/2,+} &:= \psi_\ell(f_1 \dots f_{n/2-1} f_{n/2} )\otimes \psi_1(f_1) \\
        v_{n/2,-} &:= \psi_\ell(f_1 \dots f_{n/2-1} \wt{f}_{n/2} )\otimes \psi_1(f_1).
    \end{align*} The above computation still works and $h_{\ell,\ell+1} v_{n/2,\pm} \neq 0$. Using irreducibility, we again have that $M_{n/2,\pm}$ is orthogonal to $\ker{H_{\ell+1}}$.

    \vspace{5mm}
    \noindent \textbf{$\ker{h_{\ell,\ell+1}}$ contains at most one copy of each exterior power $\Exterior^{(\cdot)} V$.}
    
    We again first demonstrate the case where $ k+1 <n/2$.
    In this case, we will show that for even $k$, one of the $\Exterior^{k+1} V$ must be orthogonal to $\ker h_{\ell,\ell+1}$.
    
    Due to the degeneracy of each $\Exterior^{k+1} V$ term for even $k$, there are many possible choices for vectors of highest weight, and indeed, it is difficult to find vectors which are orthogonal to $\calG_{\ell+1}$. However, all we really need to do though is find one vector $w_k$ in each pair $\Exterior^{k+1} V\oplus \Exterior^{k+1} V$ with $w_k\not\in \ker h_{\ell,\ell+1}$, which by will guarantee at least one of the irreps will be orthogonal to $\ker(h_{\ell,\ell+1})$. One way to accomplish this is to just find $w_k$'s that are orthogonal to every $M_{k'}$ subspace with nonzero energy.
    By irreducibility each $M_{k'}$ subspace is cyclically generated by the action of $[\pi,\cdot]$ on its highest weight vector $v_{k'}$ (\ref{def:Mk highest weight vecs}), i.e.
    \begin{equation} \label{eqn:Mk cyclic representation}
        M_{k'} =
        \{\psi_\ell([\pi(X),f_1\dots f_{k'}])\otimes \psi_1(f_1) + \psi_\ell(f_1\dots f_{k'})\otimes \psi_1([\pi(X),f_1]): X\in \so(n)\}.
    \end{equation} Just as before, the image of this Lie algebra representation is determined by generators $\gamma_J$, or equivalently by even products $\pi(X) = f_J\wt{f}_{\bar{J}}$ where $\abs{J}+\abs{\bar{J}}$ is even. Crucially, $[\pi,(\cdot)]$ cannot change the grading of a product of gamma operators and so every element in $M_{k'}$ is a linear combination of $\psi_{\ell}(f_I)\otimes \psi_{1}(f_i)$ with $\abs{I}=k'$. With this in mind, define $\wt{w_k} \in \Exterior^k V \otimes V$ by
    \[
        \wt{w_k} := \psi_\ell(f_1 \dots f_k)\otimes \psi_1(f_{k+1}) . 
    \] By looking at the grading this is orthogonal to $M_{k'}$ with $k\neq k'$. Indeed, this means that when $k=0$, $\wt{w_k} = \psi_\ell(\idty) \otimes \psi_1(f_1)$ is immediately orthogonal to all $M_{k'}$. But if $k=k'$, we need to perform Gram-Schmidt orthogonalization.
    We first observe the orthgonality relations, which quickly follow from recalling that $\psi_1(\gamma_i) = \ket{i}$ forms an orthonormal basis of $\C^n$: when $i\neq j$,
    \begin{equation} \label{eqn:parent prop orthogonality of fi basis}
    \inprod{\psi_1(f_i),\psi_1(f_j)} = 0, \quad \inprod{\psi_1(f_i),\psi_1(\wt{f}_j)} = 0, \text{ and } \inprod{\psi_1(f_i),\psi_1(\wt{f}_i)} = 0.
    \end{equation}
    It is also useful to note the anticommutation relations satisfied by the $f_i,\wt{f}_i$: again letting $i\neq j$,
    \begin{equation} \label{eqn:parent prop anticommutation relations f}
    \{f_i,f_j\} =0, \quad \{f_i,\wt{f}_j\} =0,\quad \{f_i,f_i\} =0, \text{ and } \{f_i,\wt{f}_i\} = 4\idty .
    \end{equation}
    Then, using the form of vectors $M_k$ furnished by Equation (\ref{eqn:Mk cyclic representation}), one sees that due to the orthogonality relations (\ref{eqn:parent prop orthogonality of fi basis}) $\wt{w_k}$ only has nonzero overlap with $M_{k}$ when $\pi(X)=\frac{1}{4} \wt{f}_1 f_{k+1}$ since $\frac{1}{4}[\wt{f}_1 f_{k+1},f_1] = f_{k+1}$ and so Gram-Schmidt leads us to define
    \begin{align*}
        w_k :&= \wt{w_k} - \frac{c}{4}(\psi_\ell([\wt{f}_1 f_{k+1}, f_1 \dots f_k])\otimes \psi_1(f_1) + \psi_\ell(f_1 \dots f_k)\otimes \psi_1([\wt{f}_1f_{k+1}, f_1])) \\
        &= c_1 \psi_\ell(f_1\dots f_k)\otimes \psi_1(f_{k+1}) + c_2 \psi_\ell(f_2\dots f_k f_{k+1})\otimes \psi_1(f_1), 
    \end{align*} where the equality follows by liberal use of anticommutation relations (\ref{eqn:parent prop anticommutation relations f}) to see
    \begin{align*}
    \frac{1}{4}[\wt{f}_1f_{k+1}, f_1f_2 \dots f_k] &= \frac{1}{4} (\wt{f}_1f_{k+1}f_1f_2 \dots f_k - f_1f_2 \dots f_k\wt{f}_1f_{k+1}) \\
    &= \frac{1}{4} (\wt{f}_1f_{k+1}f_1 - f_1\wt{f}_1f_{k+1}) f_2 \dots f_k \\
    &= \frac{1}{4} \{\wt{f}_1,f_1\} f_{k+1} f_2 \dots f_k \\
    &= (-1)^{k-1}f_2 \dots f_k f_{k+1}.
    \end{align*} In particular, it is clear by inspection that picking $c_1 = 1$ and $c_2 = (-1)^{k}$ will yield a nonzero $w_k$ orthogonal to $M_k$:
    \begin{equation}
        w_k = \psi_\ell(f_1\dots f_k)\otimes \psi_1(f_{k+1}) + (-1)^{k}\psi_\ell(f_2\dots f_k f_{k+1})\otimes \psi_1(f_1).
    \end{equation} So, $w_k$ is orthogonal to every $M_{k'}$ representation. Let us now compute the bond $\ell,\ell+1$ interaction acting on $w_k$:
    \begin{align*}
        h_{\ell,\ell+1} \, w_k &= \sum_{i_1,\dots,i_{\ell+1}} \Big(\Tr(f_{k+1} \gamma_{i_{\ell+1}})  \Tr\paran{ (f_1 \dots f_k) \gamma_{i_{\ell}}\dots \gamma_{i_1}}  \\
        & \qquad\qquad\qquad\qquad + (-1)^k\Tr(f_{1} \gamma_{i_{\ell+1}}) \Tr\paran{ (f_2 \dots f_{k+1})  \gamma_{i_{\ell}}\dots \gamma_{i_1}} \Big) \ket{i_1,\dots,i_\ell, i_{\ell+1}} \\
        & \qquad  +\Big( \Tr(f_{k+1} \gamma_{i_{\ell}}) \Tr\paran{ (f_1 \dots f_k) \gamma_{i_{\ell+1}}\gamma_{i_{\ell-1}}\dots \gamma_{i_1}} \\
        & \qquad\qquad\qquad\qquad + (-1)^k\Tr(f_{1} \gamma_{i_{\ell}}) \Tr\paran{ (f_2 \dots f_{k+1}) \gamma_{i_{\ell+1}}\gamma_{i_{\ell-1}} \dots \gamma_{i_1}} \Big) \ket{i_1,\dots,i_{\ell},i_{\ell+1}} \\
        & \quad - \frac{2}{n}\delta_{i_{\ell+1},i_\ell}\sum_j \Big( \Tr(f_{k+1} \gamma_{i_{\ell+1}}) \Tr\paran{ (f_1 \dots f_k) \gamma_{i_{\ell}}\dots \gamma_{i_1}} \\
        & \qquad\qquad\qquad\qquad + (-1)^k\Tr(f_{1} \gamma_{i_{\ell+1}}) \Tr\paran{ (f_2 \dots f_{k+1}) \gamma_{i_{\ell}}\dots \gamma_{i_1}} \Big)\ket{i_1,\dots,j,j} .
    \end{align*} We proceed similarly to before. Examine the coefficients with $i_{\ell+1} = 1$ and $i_{\ell} = 2$. The third term corresponding to $\ket{\xi}\bra{\xi}$ is again zero, and we have 
    \begin{equation} \begin{split}
    &\Tr(f_{k+1} \gamma_{1}) \Tr\paran{ (f_1 \dots f_k) \gamma_{2} \gamma_{i_{\ell-1}}\dots \gamma_{i_1}} + (-1)^k\Tr(f_{1} \gamma_{1}) \Tr\paran{ (f_2 \dots f_{k+1}) \gamma_{2}\gamma_{i_{\ell-1}}\dots \gamma_{i_1}} \\
    & + \Tr(f_{k+1} \gamma_{2}) \Tr\paran{ (f_1 \dots f_k) \gamma_{1}\gamma_{i_{\ell-1}}\dots \gamma_{i_1}} + (-1)^k\Tr(f_{1} \gamma_{2}) \Tr\paran{ (f_2 \dots f_{k+1}) \gamma_{1}\gamma_{i_{\ell-1}} \dots \gamma_{i_1}} 
    \end{split}\end{equation} Since $k>0$, $\Tr(f_{k+1}\gamma_1) = \Tr(f_{k+1}\gamma_2) = 0$, and so this becomes 
    \begin{equation}\begin{split} \label{parent prop coefficient for bad exterior power}
    &(-1)^k\Tr(f_{1} \gamma_{1}) \Tr\paran{ (f_2 \dots f_{k+1}) \gamma_{2}\gamma_{i_{\ell-1}}\dots \gamma_{i_1}} + (-1)^k\Tr(f_{1} \gamma_{2}) \Tr\paran{ (f_2 \dots f_{k+1}) \gamma_{1}\gamma_{i_{\ell-1}} \dots \gamma_{i_1}} \\
    &= (-1)^k \paran{ D \, \Tr\paran{ (f_2 \dots f_{k+1}) \gamma_{2}\gamma_{i_{\ell-1}}\dots \gamma_{i_1}} + iD \, \Tr\paran{ (f_2 \dots f_{k+1}) \gamma_{1}\gamma_{i_{\ell-1}} \dots \gamma_{i_1}}}
    \end{split} \end{equation}   
    Similarly to (\ref{eqn:parent prop f1f2...fk}), we have 
    \[
    f_2 \dots f_{k+1} = \gamma_3 \gamma_5 \dots \gamma_{2k+1} + (\text{other terms}),
    \] where the other terms $\Gamma$ have $\Tr(\gamma_3\gamma_5 \dots \gamma_{2k+1}\Gamma) = 0$. So, if we pick $i_1, \dots, i_{\ell-1}$ such that $\gamma_{i_{\ell-1}} \dots \gamma_{i_1} = \gamma_2(\gamma_3 \gamma_5 \dots \gamma_{2k+1})$, the coefficient (\ref{parent prop coefficient for bad exterior power}) becomes
    \[
        (-1)^k \paran{D^2 + 0} \neq 0,
    \] which means that $h_{\ell,\ell+1} w_k \neq 0$. 
    
    To wrap up the $k+1<n/2$, we realize that each $w_k\in \Exterior^k V\otimes V$ has different grading and since $[\pi,\cdot]$ respects grading, they must all occupy different irreducibles, and so $\ker h_{\ell,\ell+1}$ contains at most one copy of each $\Exterior^{k+1}V$ for even $k+1 < n/2$.

    To handle the case where $k-1>n/2$ is similar, and in this case we show that $\ker h_{\ell,\ell+1}$ contains at most one copy of each $\Exterior^{k-1} V$. One starts first with $\wt{w}_k \in \Exterior^k V \otimes V$ given by
    \[
        \wt{w}_k := \psi_{\ell}(\star(f_1 \dots f_{n-k}))\otimes \psi_{1}(f_{(n-k)+1}).
    \] One again observes that $\wt{w}_k$ is orthogonal to any $M_k'$ with $k\neq k'$, and performs Gram-Schmidt in the case where $k=k'$. The Gram-Schmidt procedure is essentially the same as the previous case, since any $\so(n)$ representative $\pi(X)$ is in $\calC_n^{[ev]}$ and thus commutes with $\gamma_0$, so by (\ref{eqn:hodge dual is gamma0 multiplication}) we have
    \[
        [\pi(X), \star(f_1 \dots f_{n-k})] = \star [\pi(X), f_1 \dots f_{n-k}].
    \] We eventually arrive at the orthogonalized $w_k$ with $h_{\ell,\ell+1} w_k \neq 0$, showing $\ker H_{\ell,\ell+1}$ contains at most one copy of $\Exterior^{k-1}V$.

    Finally, we must address the cases where $k+ 1 = n/2$ and $k-1 = n/2$. In these cases, $\Exterior^{n/2}V\cong U_+\oplus U_-$ and all $M_{k'}$ are irreducible. One then picks $\wt{w}_{+} \in \Exterior^{n/2-1} V \otimes V$ and $\wt{w}_{-} \in \Exterior^{n/2+1} V \otimes V$ to be 
    \begin{align*}
        \wt{w}_{+} & := \psi_{\ell}(f_1 \dots f_{n/2-1})\otimes  \psi_{1}(f_{n/2}) \\
        \wt{w}_{-} & := \psi_{\ell}(\star(f_1 \dots f_{n/2-1}))\otimes  \psi_{1}(\wt{f}_{n/2}),
    \end{align*} which, after orthogonalizing, we notice that $w_{+}$ and $w_{-}$ have respective weights $L_1 + \dots + L_{n/2 -1} + L_{n/2}$ and $L_1 + \dots + L_{n/2-1} - L_{n/2}$. The first weight belongs exclusively to the weight lattice of $U_+$ and the second exclusively to $U_-$. In both cases, $h_{\ell,\ell+1}w_{+}\neq 0 $ and $h_{\ell,\ell+1}w_{-}\neq 0 $, proving $\ker H_{\ell+1}$ contains at most one copy of $\Exterior^{n/2}V$.

    \vspace{5mm}
    \noindent\textbf{From $\ell+1$ to $\ell+2$}
    
    To complete the induction, the parity dependence on $\ell$ means we must explain how to move from $\ell+1$ to $\ell+2$. Let us mimick the series of equations \ref{eqn:Parent property big containment}. Using the intersection property and the Pieri-type Lemma \ref{lem:pieri rule for irreps of exterior x defining}, we have 
    \begin{equation}\begin{split}
        \calG_{\ell+2} &\cong \paran{\Exterior^0 V \oplus \Exterior^2 V \oplus \dots \oplus \Exterior^{n-2} V \oplus \Exterior^{n} V} \\
        &\subseteq \ker H_{\ell+2} \\
        &= (\calG_{\ell+1}\otimes V)\cap \ker h_{\ell+1,\ell+2} \\
        &\cong \paran{(\Exterior^1 V\oplus \Exterior^3 V \oplus \dots \oplus \Exterior^{n-3} V \oplus \Exterior^{n-1} V)\otimes V }\cap \ker h_{\ell+1,\ell+2} \\
        &\cong \paran{\Exterior^0 V \oplus 2\paran{\Exterior^2 V \oplus \dots \oplus \Exterior^{n-2} V} \oplus \Exterior^{n} V \oplus \bigoplus_{\text{odd } 0< k <n} M_k}\cap \ker h_{\ell+1,\ell+2}.
    \end{split}\end{equation}
    But here, except the interesting fact that $\Exterior^0 V$ and $\Exterior^n V$ are multiplicity free and appear in the ground state space, the proof is very much the same: show that $\ker h_{\ell+1,\ell+2}$ contains no $M_k$ representations and at most one copy of each $\Exterior^{k+1} V$ for odd $k<n/2-1$, at most one copy of each $\Exterior^{k-1} V$ for odd $k>n/2+1$, and at most one copy of $\Exterior^{n/2}V$, if it appears. Changing each mention of ``even $k$'' in the previous sections to ``odd $k$'', the arguments go through unscathed, and we arrive at 
    \[
        \calG_{\ell+2} = \ker H_{\ell+2}.
    \] By induction on chain length $\ell$, we have that
    \[
        \calG_{\ell} = \ker H_{\ell},
    \] and so $H_\ell$ is a parent Hamiltonian for the MPSs $\psi_\ell$. 
    \end{proof}
    To adjust this proof to handle cases where $\ell<n$, one can inductively build up the ground state space $\calG_\ell$ and similarly kill off $M_k$ irreps and single copies of $\Exterior^{k'}V$ irreps where $k'\leq n$ by showing they have nonzero energy across the bond $h_{\ell,\ell+1}$. The key point to accomplishing this in each case was the existence of inverses $\gamma_{I_{\ell}}:= \gamma_3 \gamma_{i_{\ell-1}} \dots \gamma_{i_1}$ for particular strings of gamma operators, as in Equation (\ref{eqn:parent prop f1f2...fk}): there, we needed the inverse of $\gamma_{2k-1} \dots \gamma_5\gamma_3\gamma_1$, which can be found explicitly as long as we have  $k \leq \abs{I_{\ell}}=\ell$. The same step was required after taking the Hodge dual $\star$, where we could find the requisite inverse as long as $n-k\leq \abs{I_\ell} = \ell$. In either case, we do not need the full injectivity length, just a sufficiently large chain $\ell$. Assuming that we are beyond the injectivity length, $\ell \geq n$, cleans up the argument and allows us to not treat each irrep separately, but the proof still goes through with only aesthetic modifications to handle this case. 
    
    This provides an interesting counterexample to a natural conjecture. One may suspect that MPSs with increasing correlation lengths or injectivity lengths may require parent Hamiltonians with increasingly large interaction lengths: indeed, as we saw in Section~\ref{sec:The Parent Property}, the standard construction of a parent Hamiltonian for a family of MPSs grants interaction terms $h$ whose support scales with the injectivity length of the MPS. But combining this result with Corollary \ref{cor: correlation length diverges as n increases}, we realize that as we increase local dimension $n$, the $\psi_\ell$ form a family of MPSs with arbitrarily large correlation length which all have parent Hamiltonians given by a nearest-neighbor interaction, i.e. that each term $h_{x,x+1}$ acts at most on two adjacent sites.
    \begin{corollary} \label{cor:parent prop counterexample cor length}
    There exists a family of MPSs $\psi_\ell$ with arbitrarily large correlation length and injectivity length whose parent Hamiltonians are given by nearest-neighbor interactions.
    \end{corollary}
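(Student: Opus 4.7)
The plan is to exhibit the $SO(n)$ AKLT MPSs $\psi_\ell$ constructed in Section~\ref{sec:the ground states of the MPS point} as the desired counterexample family, parameterized by the local Hilbert space dimension $n$. Concretely, I would fix the interaction $h_{x,x+1} = \idty + \SWAP_{x,x+1} - 2 Q_{x,x+1}$ from Equation~(\ref{def:FF interaction h=1+SWAP-2Q}) on a chain of spin-$s$ particles with $n=2s+1$, and observe that this interaction is nearest-neighbor for every choice of $n$, independent of the bond dimension of the resulting MPSs (which grows like $2^{\lfloor n/2 \rfloor}$).

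The first key step would be to invoke Theorem~\ref{thm:parent property (SO(n) chains)}, which identifies $H_\ell = \sum_{x=1}^{\ell-1} h_{x,x+1}$ as a parent Hamiltonian for the family of MPSs $\psi_\ell$ defined by~(\ref{def:MPSs of SO(n) chain}), in the sense of Definition~\ref{def:parent hamiltonian}. Crucially, the support of each interaction term is always two adjacent sites, regardless of $n$. The second step is to track how the correlation length and injectivity length scale with $n$: Corollary~\ref{cor: correlation length diverges as n increases} gives the explicit formula $\xi(n) = 1/\log(n/(n-4))$ for the correlation length (for even $n>4$), which satisfies $\xi(n) \to \infty$ as $n\to\infty$, and the remark following Theorem~\ref{thm:w+ and w- are distinct dimerized states for l large enough} identifies the injectivity length as $n$ (or $\lceil n/2\rceil$ depending on parity), which likewise diverges. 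Together these give, for any prescribed bound $L$, an integer $n$ large enough that both quantities exceed $L$ while the parent Hamiltonian remains nearest-neighbor.

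There is no real obstacle in this corollary beyond carefully packaging the inputs: all the heavy lifting was done in the proof of Theorem~\ref{thm:parent property (SO(n) chains)} and in the spectral analysis of the transfer operator $\bbE_\idty$ in~(\ref{eqn:diagonalization of E even n}). The only minor subtlety to flag is that in the even $n$ case the translation-invariant MPS $\omega$ decomposes into two $2$-periodic states $\omega_\pm$, so one should specify that the divergence of the correlation length is meant within each pure ergodic component (equivalently, one may apply a single-site blocking to restore translation invariance without changing the scaling of $\xi$ or of the injectivity length). With this clarification the corollary follows immediately as the conjunction of Theorem~\ref{thm:parent property (SO(n) chains)}, Corollary~\ref{cor: correlation length diverges as n increases}, and the injectivity length bound, completing the counterexample to the conjecture that parent Hamiltonian interaction range must scale with correlation or injectivity length.
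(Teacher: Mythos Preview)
Your proposal is correct and matches the paper's approach: the corollary is obtained by combining Theorem~\ref{thm:parent property (SO(n) chains)} (the nearest-neighbor interaction $h=\idty+\SWAP-2Q$ is a parent Hamiltonian for the $SO(n)$ AKLT MPSs for every $n$) with Corollary~\ref{cor: correlation length diverges as n increases} (the correlation length $\xi(n)\to\infty$) and the remark after Theorem~\ref{thm:w+ and w- are distinct dimerized states for l large enough} identifying the injectivity length as growing with $n$. The paper states exactly this packaging in the paragraph immediately preceding the corollary, so there is nothing to add.
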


\printbibliography
\end{document}